\newtheorem{theorem}{Theorem}[section]
\newtheorem{lemma}[theorem]{Lemma}
\newtheorem{proposition}[theorem]{Proposition}
\newtheorem{definition}[theorem]{Definition}
\newtheorem{example}[theorem]{Example}
\newtheorem{remark}[theorem]{Remark}
\newcommand{\bc}{\varepsilon_0} 
\newcommand{\data}{\varepsilon_0}
\newcommand{\gslash}{\slashed{g}}
\newcommand{\Tslash}{\slashed{T}}
\newcommand{\Gammaslash}{\slashed{\Gamma}}
\newcommand{\nablaslash}{\slashed{\nabla}}
\newcommand{\Dslash}{\slashed{\mathcal{D}}}
\newcommand{\epsslash}{\slashed{\epsilon}}
\newcommand{\divslash}{\slashed{\mathrm{div}}}
\newcommand{\curlslash}{\slashed{\mathrm{curl}}}
\newcommand{\eslash}{\slashed{e}}
\newcommand{\Rslash}{\slashed{R}}
\newcommand{\chibar}{\underline{\chi}}
\newcommand{\etabar}{\underline{\eta}}
\newcommand{\mubar}{\underline{\mu}}
\newcommand{\omegadag}{\omega^{\dagger}}
\newcommand{\omegabar}{\underline{\omega}}
\newcommand{\alphabar}{\underline{\alpha}}
\newcommand{\betabar}{\underline{\beta}}
\newcommand{\tr}{\mathrm{tr}}
\newcommand{\area}{\mathrm{Area}}
\newcommand{\Riccit}{\overset{(3)}{\Gamma}_p}
\newcommand{\Riccif}{\overset{(4)}{\Gamma}_p}
\newcommand{\Thetat}{\overset{(3)}{\Theta}}
\newcommand{\Thetaf}{\overset{(4)}{\Theta}}
\newcommand{\TM}{T\mathcal{M}}
\newcommand{\gbar}{\overline{g}}
\newcommand{\ver}{\mathrm{Ver}}
\newcommand{\hor}{\mathrm{Hor}}
\newcommand{\nablabar}{\overline{\nabla}}
\newcommand{\supp}{\mathrm{supp}}
\newcommand{\partrans}{\mathrm{Par}}
\newcommand{\rp}{h}
\begin{document}
\title[Stability of Minkowski Space for Massless Einstein--Vlasov]{The Global Nonlinear Stability of Minkowski Space for the Massless Einstein--Vlasov System}
\author{Martin Taylor}
\address{University of Cambridge, Department of Pure Mathematics and Mathematical Statistics, Wilberforce Road, Cambridge, CB3 0WB, United Kingdom}
\email{m.taylor@maths.cam.ac.uk}
\date{\today}

\begin{abstract}
	Minkowski space is shown to be globally stable as a solution to the Einstein--Vlasov system in the case when all particles have zero mass.  The proof proceeds by showing that the matter must be supported in the ``wave zone'', and then proving a small data semi-global existence result for the characteristic initial value problem for the massless Einstein--Vlasov system in this region.  This relies on weighted estimates for the solution which, for the Vlasov part, are obtained by introducing the Sasaki metric on the mass shell and estimating Jacobi fields with respect to this metric by geometric quantities on the spacetime.  The stability of Minkowski space result for the vacuum Einstein equations is then appealed to for the remaining regions.
\end{abstract}
\maketitle

\tableofcontents

\section{Introduction}
It is of wide interest to understand the global dynamics of isolated self-gravitating systems in general relativity.  Without symmetry assumptions, problems of this form present a great challenge even for systems arising from small data.  In the vacuum, where no matter is present, the global properties of small data solutions were first understood in the monumental work of Christodoulou--Klainerman \cite{ChKl}.  They show that Minkowski space is globally stable to small perturbations of initial data, i.\@e.\@ the maximal development of an asymptotically flat initial data set for the vacuum Einstein equations which is sufficiently close to that of Minkowski space is geodesically complete, possesses a complete future null infinity and asymptotically approaches Minkowski space in every direction (see also Lindblad--Rodnianski \cite{LiRo}, Bieri \cite{BiZi}).  

In the presence of matter, progress has been confined to models described by wave equations.\footnote{The analogue of the Christodoulou--Klainerman theorem has been shown by Zipser \cite{BiZi} to hold for electromagnetic matter described by the Maxwell equations.  See also Loizelet \cite{Lo}.  Speck \cite{Sp} shows a similar result for the Einstein equations coupled to a large family of nonlinear electromagnetic equations.  Lindblad--Rodnianski \cite{LiRo} in their treatment also consider matter described by a scalar field.  There has also been related work on the Einstein--Klein--Gordon system \cite{Wa}, \cite{LeMa}.  There are more global stability results for the Einstein equations with a positive cosmological constant, for example the works of Friedrich \cite{Fr}, Ringstr\"{o}m \cite{Ri} and Rodnianski--Speck \cite{RoSp}.  A more comprehensive list can be found in the introduction to the work of Had\u{z}i\'{c}--Speck \cite{HaSp}.}  Here collisionless matter, described by the Einstein--Vlasov system, is considered.  This is a model which has been widely studied in both the physics and mathematics communities; see the review paper of Andr\'{e}asson \cite{An} for a summary of mathematical work on the system.  New mathematical difficulties are present since the governing equations for the matter are now transport equations, though in the case considered here, where the particles have zero mass and hence travel through spacetime along null curves, the decay properties of the function describing the matter are compatible in a nice way with those of the spacetime metric.

The Einstein--Vlasov system takes the form
\begin{equation}
	 Ric_{\mu\nu} - \frac{1}{2}Rg_{\mu \nu} = T_{\mu \nu}, \qquad T_{\mu \nu}(x) = \int_{P_x} f p_{\mu}p_{\nu} , \label{eq:EV}
\end{equation}
\begin{equation}
	 X(f) = 0. \label{eq:vlas}
\end{equation}
The unknown is a Lorentzian manifold $(\mathcal{M},g)$ together with a \emph{particle density function} $f\colon P \to [0,\infty)$, defined on a subset $P\subset T\mathcal{M}$ of the tangent bundle of $\mathcal{M}$ called the \emph{mass shell}.  The function $f(x,p)$ describes the density of the matter at $x\in \mathcal{M}$ with velocity $p\in P_x \subset T_x \mathcal{M}$.  Here $(x^{\mu},p^{\mu})$ denote coordinates on the tangent bundle $T\mathcal{M}$ with $p^{\mu}$ conjugate to $x^{\mu}$, so that $(x,p)$ denotes the point $p^{\mu} \partial_{x^{\mu}} \vert_x$ in $T\mathcal{M}$.  The Ricci curvature and scalar curvature of $(\mathcal{M},g)$ are denoted $Ric, R$ respectively.  The integral in \eqref{eq:EV} is taken with respect to a natural volume form, defined later in Section \ref{subsec:massshell}.  The vector field $X \in \Gamma(TT\mathcal{M})$ is the \emph{geodesic spray}, i.\@e.\@ the generator of the geodesic flow, of $(\mathcal{M},g)$.  The Vlasov equation \eqref{eq:vlas} therefore says that, given $(x,p) \in T\mathcal{M}$, if $\gamma_{x,p}$ denotes the unique geodesic in $\mathcal{M}$ such that $\gamma_{x,p}(0) = x, \dot{\gamma}_{x,p}(0) = p$, then $f$ is constant along $(\gamma_{x,p}(s), \dot{\gamma}_{x,p}(s))$, i.\@e.\@ $f$ is preserved by the \emph{geodesic flow} of $(\mathcal{M},g)$.  Equation \eqref{eq:vlas} is therefore equivalent to
\begin{equation} \label{eq:vlasexp}
	f(x,p) = f(\exp_s(x,p)),
\end{equation}
for all $s \in \mathbb{R}$ such that the above expression is defined, where $\exp_s : \TM \to \TM$ is the exponential map defined by $\exp_s(x,p) = (\gamma_{x,p}(s), \dot{\gamma}_{x,p}(s))$.

In the case considered here, where the collisionless matter has zero mass, $f$ is supported on the mass shell 
\[
	P := \{ (x,p) \in T\mathcal{M} \mid p \text{ is null and future directed}\},
\]
a hypersurface in $T\mathcal{M}$.  In this case one sees, by taking the trace of \eqref{eq:EV}, that the scalar curvature $R$ must vanish for any solution of \eqref{eq:EV}--\eqref{eq:vlas} and the Einstein equations reduce to
\begin{equation} \label{eq:Einstein}
	Ric_{\mu\nu} = T_{\mu \nu}.
\end{equation}
The main result is the following.

\begin{theorem} \label{thm:main}
	Given a smooth asymptotically flat initial data set for the massless Einstein--Vlasov system suitably close to that of Minkowski Space such that the initial particle density function is compactly supported on the mass shell, the resulting maximal development is geodesically complete and possesses a complete future null infinity.  Moreover the support of the matter is confined to the region between two outgoing null hypersurfaces, and each of the Ricci coefficients, curvature components and components of the energy momentum tensor with respect to a double null frame decay towards null infinity with quantitative rates.
\end{theorem}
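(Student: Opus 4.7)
The plan is to exploit two basic facts: the Vlasov equation $X(f)=0$ transports $f$ along null geodesics, so compactly supported initial data remains supported in a narrow shell swept out by the null geodesic flow; and outside this shell the system reduces to the vacuum Einstein equations, to which the Christodoulou--Klainerman theorem applies. I would therefore identify two outgoing null hypersurfaces $\underline{\mathcal{N}}$ and $\mathcal{N}$ bounding the support of $f$, and decompose $\mathcal{M}$ into an inner vacuum region (interior to $\underline{\mathcal{N}}$), an outer vacuum region (exterior to $\mathcal{N}$), and a ``wave zone'' annulus between them in which all the matter lives. The two vacuum regions are handled by the vacuum stability theorem (in its characteristic version, to accommodate data posed on $\underline{\mathcal{N}}$ or $\mathcal{N}$ that emerge from the wave zone analysis); the novel content is a small-data semi-global existence result for the characteristic initial value problem for Einstein--Vlasov in the wave zone.

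For the wave zone I would set up a double null foliation $(u,\underline{u})$ adapted to $\underline{\mathcal{N}}$ and $\mathcal{N}$ and run a bootstrap on weighted $L^2$ and $L^\infty$ norms of the Ricci coefficients, null curvature components, $f$, and the components of $T_{\mu\nu}$. The structural observation that makes this possible is that the decay rates forced by the vacuum Einstein equations in a double null gauge are compatible, component by component, with the decay that contracting $p_\mu p_\nu$ against null $p$ forces on $T_{\mu\nu}$, \emph{provided} one can prove commensurate weighted bounds on derivatives of $f$ along the geodesic flow. To obtain these, I would equip the mass shell $P$ with the Sasaki metric induced by $g$, under which the geodesic spray $X$ is a geodesic vector field and $TP$ splits canonically into horizontal and vertical parts. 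Derivatives of $f$ transported by $X$ then satisfy Jacobi-type equations on $(P,g_{\mathrm{Sasaki}})$, and the corresponding Jacobi fields along $\gamma_{x,p}$ can be controlled by integrating spacetime curvature and connection coefficients along the null geodesic. In this way weighted estimates on the geometry feed directly into weighted estimates on $f$, and hence on $T_{\mu\nu}$ and its derivatives.

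The main obstacle will be closing the bootstrap at the right order of regularity and with sharp $r$-weights: Bianchi/energy estimates for the null curvature require precise decay of the matter source, while bounds on $T_{\mu\nu}$ depend, via the Sasaki--Jacobi machinery, on the very curvature components one is estimating. Each Ricci coefficient and null curvature component must be assigned a decay rate consistent with peeling, and the corresponding component of $T_{\mu\nu}$ must be shown to decay at least as fast; the contraction $p_\mu p_\nu$ with null $p$ provides the needed suppression of bad components in $T$, but only if the support of $f$ on $P$ remains suitably concentrated under the perturbed geodesic flow. Finally, matching the characteristic data on $\underline{\mathcal{N}}$ and $\mathcal{N}$ to the two vacuum stability results, and closing the global estimates without losing derivatives off the transported Vlasov field, will require a careful ordering of the hierarchy; this pairing of the double null estimates with the Sasaki metric geometry on the mass shell is the main technical burden.
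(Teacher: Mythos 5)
Your proposal follows essentially the same route as the paper: a decomposition into two vacuum regions (handled by the vacuum stability of Minkowski space) and a wave-zone annulus containing all the matter, with the semi-global characteristic problem in the wave zone closed by a double null bootstrap in which derivatives of $f$ are controlled via Jacobi fields for the Sasaki metric on the mass shell. The structural points you flag --- the compatibility of the $p$-weighted decay of the null components of $T_{\mu\nu}$ with the peeling hierarchy, and the need to keep the support of $f$ in $P_x$ concentrated under the perturbed geodesic flow --- are exactly the ones the paper exploits.
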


The proof of Theorem \ref{thm:main}, after appealing to the corresponding result for the vacuum Einstein equations, quickly reduces to a semi-global problem.  This reduction is outlined below and the semi-global problem treated here is stated in Theorem \ref{thm:main2}.

Theorem \ref{thm:main} extends a result of Dafermos \cite{Da} which establishes the above under the additional restricted assumption of spherical symmetry.  Note also the result of Rein--Rendall \cite{ReRe} which treats the \emph{massive case} in spherical symmetry, where all of the particles have mass $m>0$ (i.\@e.\@ $f$ is supported on the set of future pointing timelike vectors $p$ in $T\mathcal{M}$ such that $g(p,p) = -m^2$).  The main idea in \cite{Da} was to show, using a bootstrap argument, that, for sufficiently late times, the matter is supported away from the centre of spherical symmetry.  By Birkhoff's Theorem the centre is therefore locally isometric to Minkowski space at these late times and the extension principle of Dafermos--Rendall \cite{DaRe} (see also \cite{DaRe07}) then guarantees that the spacetime will be geodesically complete.  

In these broad terms, a similar strategy is adopted here.  The absence of good quantities satisfying monotonicity properties which are available in spherical symmetry, however, makes the process of controlling the support of the matter, and proving the semi-global existence result for the region where it is supported, considerably more involved.  The use of Birkhoff's Theorem and the Dafermos--Rendall extension principle also have to be replaced by the much deeper result of the stability of Minkowski space for the vacuum Einstein equations.  The use of the vacuum stability result, which is in fact appealed to in two separate places, is outlined below.

\subsection{The Uncoupled Problem} \label{subsec:uncoupled}
It is useful to first recall what happens in the uncoupled problem of the Vlasov equation on a fixed Minkowski background.  Let $v = \frac{1}{2}(t+r), u = \frac{1}{2}(t-r)$ denote standard null coordinates on Minkowski space $\mathbb{R}^{3+1}$ (these form a well defined coordinate system on the quotient manifold $\mathbb{R}^{3+1}/SO(3)$ away from the centre of spherical symmetry $\{r=0\}$) and suppose $f$ is a solution of the Vlasov equation \eqref{eq:vlas} with respect to this fixed background arising from initial data with compact support in space.  From the geometry of null geodesics in Minkowski space it is clear that the projection of the support of $f$ to the spacetime is related to the projection of the initial support of $f$ as depicted in the Penrose diagram in Figure \ref{fig:uncoupled}.

\begin{figure} 
  \centering
	\includegraphics[scale=0.25]{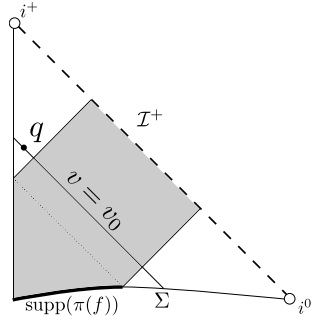}
	\caption{The projection of the support of $f$ in the uncoupled problem.} \label{fig:uncoupled}
\end{figure}

In particular, for sufficiently late advanced time $v_0$ the matter will be supported away from the centre $\{r=0\}$, and there exists a point $q \in \mathbb{R}^{3+1}/SO(3)$, lifting to a (round) 2-sphere $S \subset \mathbb{R}^{3+1}$, with $r(q) >0$ such that
\[
	\pi(\mathrm{supp}(f)) \cap \{ v \leq v_0\} \subset J^-(S),
\]
where $J^-(S)$ denotes the causal past of $S$ and $\pi \colon P \to \mathcal{M}$ denotes the natural projection.

\subsection{Initial Data and First Appeal to the Vacuum Result} \label{subsec:vacuumappeal}
Recall that initial data for the Einstein--Vlasov system \eqref{eq:EV}--\eqref{eq:vlas} consists of a 3-manifold $\Sigma$ with a Riemannian metric $g_0$, a symmetric $(0,2)$ tensor $K$ and an initial particle density function $f_0$ satisfying the constraint equations,
\begin{equation} \label{eq:constraints}
	\mathrm{div}_0 K_j - (d \tr_0 K)_j = T_{0j}, \qquad R_0 + (\tr_0 K)^2 - \vert K\vert^2_{g_0} = 2 T_{00},
\end{equation}
for $j=1,2,3$, where $\mathrm{div}_0, \tr_0,R_0$ denote the divergence, trace and scalar curvature of $g_0$ respectively, and $T_{00}, T_{0j}$ denote (what will become) the $00$ and $0j$ components of the energy momentum tensor.  See \cite{Ri} for a discussion of initial data for the Einstein--Vlasov system.  The topology of $\Sigma$ will here be assumed to be that of $\mathbb{R}^3$.  The issue of constructing solutions to the constraint equations \eqref{eq:constraints} will not be treated here.  A theorem of Choquet-Bruhat \cite{ChBr71} guarantees that, given such an initial data set, a solution to \eqref{eq:EV}--\eqref{eq:vlas} will exist locally in time.

The initial density function $f_0$ is assumed to have compact support.  It will moreover be assumed that $f_0$ and a finite number of its derivatives will be small initially.  The precise condition is given in Section \ref{section:ba}.  Note the assumption of compact support for $f_0$ is in both the spatial variable $x$, and in the momentum variable $p$.  As will become evident, the compact support in space is used in a crucial way.  The assumption of compact support in momentum is made for simplicity and can likely be weakened.\footnote{It is used in an important way in Section \ref{prop:emmain2} but this can in fact easily be avoided.}

Let $B \subset \Sigma$ be a simply connected compact set such that $\pi(\supp(f \vert_{P_{\Sigma}})) \subset B$, where $P_{\Sigma}$ denotes the mass shell over $\Sigma$.  By the domain of dependence property of the Einstein--Vlasov system the development of the complement of $B$ in $\Sigma$, $D^+(\Sigma \smallsetminus B)$, will solve the vacuum Einstein equations,
\begin{equation} \label{eq:Einsteinvacuum}
	Ric_{\mu \nu} = 0.
\end{equation}
The stability of Minkowski space theorem for the vacuum Einstein equations then guarantees the stability of this region.  See Klainerman--Nicol\`{o} \cite{KlNi} where exactly this situation is treated.  In particular, provided $g_0, K$ satisfy a smallness condition\footnote{The precise condition will not be discussed here.  See \cite{KlNi}.} in $\Sigma \smallsetminus B$ (i.\@e.\@ they are suitably close to the $g_0,K$ of Minkowski space), there exists a future complete, outgoing null hypersurface $\mathcal{N}$ in this region which can be foliated by a family of 2-spheres, $\{S_{u_0,v} \}$ parameterised by $v$, approaching the round 2-sphere as $v \to \infty$.  Moreover the Ricci coefficients and curvature components of the spacetime will decay to their corresponding Minkowski values and, by taking $g_0, K$ suitably small, certain weighted quantities involving them can be made arbitrarily small on $\mathcal{N}$.  It will be assumed that $g_0,K$ are sufficiently small so that the precise conditions stated in Theorem \ref{thm:main3} are satisfied on $\mathcal{N}$.  A second appeal to a form of the stability of Minkowski space result in the vacuum (which can be shown to also follow from the Christodoulou--Klainerman Theorem \cite{ChKl} using upcoming work) will be made in Section \ref{subsec:secondmain} below.

\subsection{Cauchy Stability} \label{subsec:Cauchystability}
By Cauchy stability for the Einstein--Vlasov system (see Choquet-Bruhat \cite{ChBr71} or Ringstr\"{o}m \cite{Ri}), Cauchy stability for the geodesic equations and the considerations of Section \ref{subsec:uncoupled}, provided the initial data on $\Sigma$ are taken sufficiently small, there exists a 2-sphere $S\subset \mathcal{M}$ and an incoming null hypersurface $\underline{\mathcal{N}}$ such that $S\subset \underline{\mathcal{N}}$, $\mathrm{Area}(S) >0$, $\pi(\text{supp}(f)) \cap S = \emptyset$, and
\[
	\pi(\text{supp}(f)) \cap J^-(\underline{\mathcal{N}}) \subset J^-(S).
\]
In other words, the existence of the point $q$ in the Penrose diagram of Figure \ref{fig:uncoupled} is stable.  It can moreover be assumed that the $\mathcal{N}$ above and $\underline{\mathcal{N}}$ intersect in one of the 2-spheres of the foliation of $\mathcal{N}$,
\[
	\mathcal{N} \cap \underline{\mathcal{N}} = S_{u_0, v_0},
\]
where $v_0$ can be chosen arbitrarily large.  The induced data on $\underline{\mathcal{N}}$ can be taken to be arbitrarily small, provided they are sufficiently small on $\Sigma$.

\subsection{A Second Version of the Main Theorem and Second Appeal to the Vacuum Result} \label{subsec:secondmain}
A more precise version of the main result can now be stated.  A final version, Theorem \ref{thm:main3}, is stated in Section \ref{section:ba}.

\begin{theorem} \label{thm:main2}
	Given characteristic initial data for the massless Einstein--Vlasov system \eqref{eq:EV}--\eqref{eq:vlas} on an outgoing null hypersurface $\mathcal{N}$ and an incoming null hypersurface $\underline{\mathcal{N}}$ as above\footnote{So that, in particular, the particle density function $f = 0$ on the mass shell over $\mathcal{N}$, and there exists a 2-sphere $S \subset \underline{\mathcal{N}}$ such that $\text{supp}(f)$ in the mass shell over $\underline{\mathcal{N}}$ is contained in the causal past, $J^-(S)$, of $S$.}, intersecting in a 2-sphere $S_{u_0,v_0}$ of the foliation of $\mathcal{N}$, then, if $v_0$ is sufficiently large and the characteristic initial data are sufficiently small\footnote{i.\@e.\@ certain weighted integrals of derivatives of metric components, Ricci coefficients and curvature components, along with pointwise bounds on certain derivatives of $f$ are small.  The precise smallness assumptions are given later in Section \ref{section:ba}.}, then there exists a unique spacetime $(\mathcal{M},g)$ endowed with a double null foliation $(u,v)$ solving the characteristic initial value problem for \eqref{eq:EV}--\eqref{eq:vlas} in the region $v_0\leq v < \infty$, $u_0 \leq u \leq u_f$, where $\mathcal{N} = \{u=u_0\}$,  $\underline{\mathcal{N}} = \{ v = v_0\}$, and $u_f$ can be chosen large so that $f=0$ on the mass shell over any point $x\in \mathcal{M}$ such that $u(x) \geq u_f -1$, i.\@e.\@ $\pi(\supp(f)) \subset J^-(\{u=u_f - 1\})$.  Moreover each of the Ricci coefficients, curvature components and components of the energy momentum tensor (with respect to a double null frame) decay towards null infinity with quantitative rates.
\end{theorem}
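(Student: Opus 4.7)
The plan is to run a continuity/bootstrap argument on the double null slab $\{u_0 \leq u \leq u_f,\ v_0 \leq v \leq v_*\}$ and then let $v_* \to \infty$. I would set up a double null foliation $(u,v)$ adapted to $\mathcal{N}=\{u=u_0\}$ and $\underline{\mathcal{N}}=\{v=v_0\}$, with the associated null frame, Ricci coefficients $\chi,\chibar,\eta,\etabar,\omega,\omegabar,\zeta$, curvature components $\alpha,\beta,\rho,\sigma,\betabar,\alphabar$, and the projected components of the Vlasov energy--momentum tensor $T$. The bootstrap set consists of those $v_*\in[v_0,\infty)$ on which: (i) weighted $L^\infty$ and $L^2$ norms of the Ricci and curvature quantities, with $r$-weights matching the expected peeling at null infinity, are bounded by, say, $2\data^{1/2}$; (ii) $\pi(\supp(f))\cap\{v_0\leq v\leq v_*\}\subset J^-(\{u=u_f-1\})$; (iii) $f$ and its natural geometric derivatives on the mass shell $P$ satisfy appropriate pointwise bounds. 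Local characteristic existence together with the hypotheses on $v_0$ make the set nonempty, openness is routine, and the real work is improving each inequality strictly.

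The first step is to improve the support statement (ii). I would equip the mass shell $P$ with the Sasaki metric $\gbar$, splitting $T(\TM)=\hor\oplus\ver$ so that the geodesic spray $X$ is the horizontal lift of $p$. Any $(x,p)\in\supp(f)$ is traced back by \eqref{eq:vlasexp} to $\underline{\mathcal{N}}\cap J^-(S)$; writing the null geodesic $\gamma_{x,p}$ in the double null coordinates and using the bootstrap bounds on the connection coefficients, the derivative $du(\dot\gamma)$ differs from its Minkowski value by a quantity integrable in $v$ with small integral. Hence $u(\gamma_{x,p}(s))$ stays at most $u(S) + o_{v_0\to\infty}(1)$, and taking $u_f$ safely larger than $u(S)+1$ closes (ii).

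With (ii) in hand, the second step is to close the geometric bootstrap (i). The support property confines $T_{\mu\nu}$ to the strip $\{u\leq u_f-1\}$, and the pointwise bounds on $f$ combined with the volume form on $P_x$ yield weighted pointwise bounds on the frame components of $T$. These enter as inhomogeneities in the null structure (transport) equations for the Ricci coefficients, which I would integrate in $\partial_u$ or $\partial_v$ from the respective initial hypersurfaces, and in the Bianchi pair equations, from which weighted $L^2$ energy estimates for the curvature follow by contracting with Bel--Robinson-type multipliers and integrating over the double null slab. The reduced Einstein equations \eqref{eq:Einstein} translate the $T$-bounds into bounds for $\nabla Ric$ sources. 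Because $T$ is small and compactly supported in $u$, its contributions are strictly subleading to the vacuum part, so the estimates of \cite{KlNi} can be adapted to close (i).

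The third and hardest step is to improve the Vlasov bounds (iii). The pointwise bound on $f$ itself is immediate from \eqref{eq:vlasexp}, but bounds on derivatives of $f$ demand understanding how coordinate or $\gbar$-orthonormal frame vectors evolve under the geodesic flow. Writing $(\partial_\alpha f)(x,p) = (\partrans_{-s}\partial_\alpha)(f)(\exp_s(x,p))$ reduces the problem to controlling $d\exp_s$, i.e.\ Jacobi fields along $\gamma_{x,p}$ with respect to $\gbar$; these satisfy a linear ODE whose coefficients are precisely the spacetime curvature and Ricci coefficients restricted to the geodesic, so the geometric bounds from the previous step yield Gr\"onwall-type estimates for $d\exp_s$ with quantitative rates. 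The main obstacle is arranging the weights so that the affine time spent by a null geodesic in $\supp(f)$ within the interior of the wave zone does not overwhelm the integrated curvature bounds; here the compact momentum support of $f$ and the strip containment from step two are essential, since they bound the affine length in terms of $\Delta u$ and $\Delta v$. Once derivatives of $f$ are improved, the bounds on $T_{\mu\nu}$ improve, closing the whole bootstrap. Sending $v_* \to \infty$ yields the semi-global solution, and the weighted decay bounds produce a complete null infinity together with the quantitative rates claimed.
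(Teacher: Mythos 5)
Your outline follows the paper's strategy closely in all its main components --- a bootstrap on the characteristic rectangle, confinement of $\pi(\supp(f))$ via the geodesic equations, transport estimates for the Ricci coefficients and Bianchi-pair energy estimates for the Weyl curvature, and Jacobi fields of the Sasaki metric to control $d\exp_s$ and hence derivatives of $f$ --- but there is a genuine gap in your third step, which is precisely where the paper locates the main new difficulty. You assert that the Jacobi fields ``satisfy a linear ODE whose coefficients are precisely the spacetime curvature and Ricci coefficients restricted to the geodesic, so the geometric bounds from the previous step yield Gr\"onwall-type estimates.'' This is not accurate: the curvature tensor $\hat{R}$ of the induced Sasaki metric on the mass shell contains first covariant derivatives $\nabla R$ of the spacetime curvature (Propositions \ref{prop:sasakicurvature} and \ref{prop:curvaturemassshell}), so the source $\hat{R}(X,J)X$ in the Jacobi equation is, naively, one derivative above the curvature. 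Since $\nabla\mathcal{T}$ appears as a source in the Bianchi equations, the energy momentum tensor components --- and hence $f$ --- must be estimated at one degree of differentiability \emph{above} the Weyl curvature, and a Gr\"onwall argument that consumes $\nabla R$ pointwise loses exactly the derivative you need, so the bootstrap cannot close at top order. The resolution is structural: every occurrence of $\nabla R$ in $\hat{R}(X,J)X$ is a derivative in the flow direction, $\nabla_{\dot{\gamma}}R$, so after the rewriting \eqref{eq:curvatureuseful} these terms can be integrated along $s\mapsto\exp_s(x,p)$ and traded for undifferentiated curvature plus derivatives of the Jacobi field components, recovering the loss. Your proposal contains no substitute for this observation and identifies only the decay/weight obstruction, which is the second, separate difficulty.

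A related, smaller omission: not every natural frame vector on $P$ admits a bounded Jacobi field. For instance $\hor_{(x,p)}(e_A)f$ grows like $r$, and the argument must work with the corrected vectors such as $V_{(A)}=\hor_{(x,p)}(e_A)+\frac{p^4}{r}\partial_{\overline{p}^A}$, exploit a cancellation between their horizontal and vertical parts, and eliminate $V_{(0)}=r\hor_{(x,p)}(e_4)$ by substituting the Vlasov equation itself; moreover your formula $(\partial_\alpha f)(x,p)=(\partrans_{-s}\partial_\alpha)(f)(\exp_s(x,p))$ is not correct as written, since the derivative is transported by $d\exp_s$ (a Jacobi field), not by parallel transport, although you do then pass to $d\exp_s$. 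These choices are not cosmetic: with the wrong basis the Jacobi field components are unbounded in $r$ and the scheme does not produce the $r^{-p}$ decay of $\mathcal{T}_p$ needed to close the curvature and Ricci-coefficient estimates.
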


This is depicted in Figure \ref{fig:coupled}.

\begin{figure}
  \centering
	\includegraphics[scale=0.25]{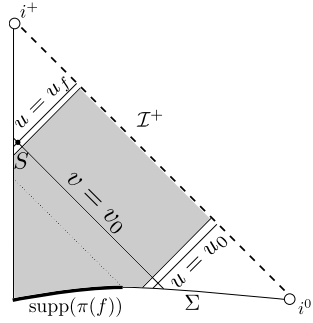}
	\caption{The matter is supported in the shaded region and hence the spacetime will solve the vacuum Einstein equations in the unshaded regions.} \label{fig:coupled}
\end{figure}

Theorem \ref{thm:main} follows from Theorem \ref{thm:main2} by the considerations of Section \ref{subsec:vacuumappeal}, Section \ref{subsec:Cauchystability}, and by another application of the vacuum stability of Minkowski space result with the induced data on a hyperboloid contained between the null hypersurfaces $\{u = u_f\}$ and $\{u = u_f-1\}$.  The problem of stability of Minkowski space for the vacuum Einstein equations \eqref{eq:Einsteinvacuum} with hyperboloidal initial data was treated by Friedrich \cite{Fr}, though his result requires the initial data to be \emph{asymptotically simple}.  This is, in general, inconsistent with the induced data arising from Theorem \ref{thm:main2}.\footnote{One could impose faster decay of the data on $\{u=u_0\}$ in Theorem \ref{thm:main2} and hope to propagate this decay so that the induced data on the hyperboloid is indeed sufficient for \cite{Fr} to apply directly.  We have chosen not to do so here in view of \cite{Ch00}, where Christodoulou shows that generic physically interesting spacetimes are never asymptotically simple.}  Whilst a proof of the hyperboloidal stability of Minkowski space problem with initial data compatible with Theorem \ref{thm:main2} can most likely be distilled from the work \cite{ChKl}, there is currently no precise statement to appeal to.  In future work it will be shown how one can alternatively appeal directly to \cite{ChKl} by extending the induced scattering data at null infinity and solving backwards, in the style of \cite{DHR}.

A precise formulation of Theorem \ref{thm:main}, including an explicit statement of the norms used in the first appeal to the vacuum result in Section \ref{subsec:vacuumappeal} and the Cauchy stability argument of Section \ref{subsec:Cauchystability}, will not be made here.  The assumptions made in Theorem \ref{thm:main3}, the final version of Theorem \ref{thm:main2}, will be given some justification at various places in the introduction however.  The remainder of the paper will concern Theorem \ref{thm:main2}, and in the remainder of the introduction its proof will be outlined.  The greatest new difficulty is in obtaining a priori control over derivatives of $f$.  The approach taken involves introducing the induced Sasaki metric on the mass shell $P$ and estimating certain Jacobi fields on $P$ in terms of geometric quantities on the spacetime $(\mathcal{M},g)$.  This approach is outlined in Section \ref{subsec:introf} below.

Note that the analogue of Theorem \ref{thm:main2} for the vacuum Einstein equations \eqref{eq:Einsteinvacuum} follows from a recent result of Li--Zhu \cite{LiZh}.

\subsection{The Bootstrap Argument} \label{subsec:introbootstrap}
The main step in the proof of Theorem \ref{thm:main2} is in obtaining global a priori estimates for all of the relevant quantities.  Once they have been established there is a standard procedure for obtaining global existence, which is outlined in Section \ref{section:lastslice}.  The remainder of the discussion is therefore focused on obtaining the estimates.

Moreover, using a bootstrap argument, it suffices to show that if the estimates already hold in a given \emph{bootstrap region} of the form $\{ u_0 \leq u \leq u' \} \cap \{ v_0 \leq v \leq v'\}$, depicted in Figure \ref{fig:br}, then they can be recovered in this region with better constants independently of $u',v'$.  This is extremely useful given the strongly coupled nature of the equations.

\begin{figure}
  \centering
	\includegraphics[scale=0.25]{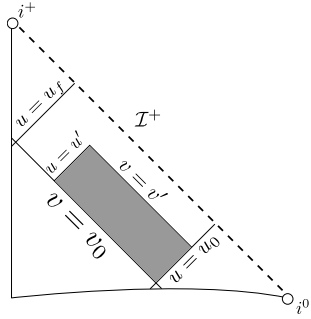}
	\caption{The bootstrap region.} \label{fig:br}
\end{figure}

The better constants in the bootstrap argument arise from either estimating the quantities by the initial data on $\{v=v_0\}$ and $\{ u = u_0\}$ or by $\frac{1}{v_0}$, and using the smallness of the initial data and the largeness of $v_0$.  Recall that, in the setting of Theorem \ref{thm:main}, both the largeness of $v_0$ and the smallness of the induced data on $\mathcal{N} = \{ u = u_0\}$, $\underline{\mathcal{N}} = \{ v = v_0\}$ arise by taking the asymptotically flat Cauchy data on $\Sigma$ to be suitably small.

\subsection{The Double Null Gauge}
The content of the Einstein equations is captured here through the structure equations and the null Bianchi equations associated to the double null foliation $(u,v)$.  The constant $u$ and constant $v$ hypersurfaces are outgoing and incoming null hypersurfaces respectively, and intersect in spacelike 2-spheres which are denoted $S_{u,v}$.  This choice of gauge is made due to its success in problems which require some form of the null condition\footnote{See Section \ref{subsec:pindex}.} to be satisfied.\footnote{It is well known that the Einstein equations in the harmonic gauge do not satisfy the classical null condition of \cite{Kl}.  Despite this fact, it has been shown by Lindblad--Rodnianski \cite{LiRo} that one can still prove stability of Minkowski space for the vacuum Einstein equations in this gauge.  One could therefore imagine adopting a similar strategy to approach the current problem in the harmonic gauge.  Note the recent work of Fajman--Joudioux--Smulevici \cite{FaJoSm} on the development of a vector field method for relativistic transport equations, which could play an important role in such an approach.}  See, for example, \cite{Ch1}, \cite{Ch}, \cite{KlNi}, \cite{Lu}, \cite{DHR}.

The foliation defines a double null frame (see Section \ref{subsec:coords}) in which one can decompose the Ricci coefficients, which satisfy so called \emph{null structure equations}, the Weyl (or conformal) curvature tensor, whose null decomposed components satisfy the \emph{null Bianchi equations}, and the energy momentum tensor (which, by the Einstein equations \eqref{eq:Einstein}, is equal to the Ricci curvature tensor).

It is the null structure and Bianchi equations which will be used, together with the Vlasov equation \eqref{eq:vlas}, to estimate the solution.  Following the notation of \cite{DHR}, \cite{Lu}, the null decomposed Ricci coefficients will be schematically denoted $\Gamma$.  Two examples are the outgoing shear $\hat{\chi}$, which is a $(0,2)$ tensor on the spheres $S_{u,v}$, and the renormalised outgoing expansion $\tr \chi - \frac{2}{r}$, which is a function on the spacetime, renormalised using the function $r$ so that the corresponding quantity in Minkowski space will vanish.

The null decomposed components of the Weyl curvature tensor will be schematically denoted $\psi$ and the null decomposed components of the energy momentum tensor will be schematically denoted $\mathcal{T}$.  This schematic notation, together with the $p$-index notation described in Section \ref{subsec:pindex} below, will be used to convey structural properties of the equations which are heavily exploited later.

\subsection{The Schematic Form of the Equations} \label{subsec:schematic}
The null structure equations for the Ricci coefficients $\Gamma$, which are stated in Section \ref{subsec:nsbianchi}, take the following schematic form,
\begin{equation} \label{eq:intronullstructure}
	\nablaslash_3 \Gamma = \frac{1}{r} \Gamma + \Gamma \cdot \Gamma + \psi + \mathcal{T}, \qquad \nablaslash_4 \Gamma = \frac{1}{r} \Gamma + \Gamma \cdot \Gamma + \psi + \mathcal{T}.
\end{equation}
Here $\nablaslash_3$ and $\nablaslash_4$ denote the projections of the covariant derivatives in the incoming and outgoing null directions respectively to the spheres $S_{u,v}$.  The $\frac{1}{r}\Gamma$ terms appear in the equations for the outgoing and incoming expansions $\tr \chi - \frac{2}{r}, \tr \chibar + \frac{2}{r}$, which are renormalised using the function $r$.  Each $\Gamma$ satisfies exactly one of the two form of equations \eqref{eq:intronullstructure} and hence are further decomposed as $\overset{(3)}{\Gamma}$ or $\overset{(4)}{\Gamma}$ depending on whether they satisfy an equation in the $\nablaslash_3$ or $\nablaslash_4$ direction respectively.  It should be noted that there are further null structure equations satisfied by the Ricci coefficients which take different forms to \eqref{eq:intronullstructure}, some of which will make an appearance later.

The Weyl curvature components $\psi$ can be further decomposed into \emph{Bianchi pairs}, defined in Section \ref{subsec:schemnot}, which are denoted $(\uppsi,\uppsi')$ (examples are $(\uppsi,\uppsi') = (\alpha,\beta)$ or $(\beta,(\rho,\sigma))$).  This notation is used to emphasise a special structure in the Bianchi equations, which take the form,
\begin{equation} \label{eq:introBianchi}
	\nablaslash_3 \uppsi = \Dslash_{\uppsi} \uppsi' + \frac{1}{r} \Gamma + \Gamma \cdot \psi + \Gamma \cdot \mathcal{T} + \nabla \mathcal{T}, \quad
	\nablaslash_4 \uppsi' = \Dslash_{\uppsi'} \uppsi + \frac{1}{r} \Gamma + \Gamma \cdot \psi + \Gamma \cdot \mathcal{T} + \nabla \mathcal{T}.
\end{equation}
Here $\Dslash$ denote certain angular derivative operators on the spheres of intersection of the double null foliation, and $\nabla \mathcal{T}$ schematically denote projected covariant derivatives of $\mathcal{T}$ in either the 3, 4 or angular directions.

The Ricci coefficients can be estimated using \emph{transport estimates} for the null structure equations \eqref{eq:intronullstructure} since derivatives of $\Gamma$ do not appear explicitly on the right hand sides of the equations.  The transport estimates are outlined below in Section \ref{subsec:introRicci} and carried out in detail in Section \ref{section:Ricci}.  Note that using such estimates does, however, come with a loss, namely the expected fact that angular derivatives of $\Gamma$ live at the same level of differentiability as curvature is not recovered.  This fact can be recovered through a well known elliptic procedure, which is outlined below in Section \ref{subsec:introelliptic} and treated in detail in Section \ref{section:Riccitop}.  One cannot do the same for the curvature components and the Bianchi equations \eqref{eq:introBianchi} due to the presence of the $\Dslash \psi$ terms on the right hand sides.  In order to obtain ``good'' estimates for the Bianchi equations one must exploit the special structure which, if $S$ denotes one of the spheres of intersection of the null foliation, takes the following form,
\[
	\int_S \Dslash_{\uppsi} \uppsi' \cdot \uppsi = - \int_S \uppsi ' \cdot \Dslash_{\uppsi '} \uppsi,
\]
i.\@e.\@ the adjoint of the operator $\Dslash_{\uppsi}$ is $-\Dslash_{\uppsi'}$.  Using this structure, if one contracts the $\nablaslash_3 \uppsi$ equation with $\uppsi$ and adds the $\nablaslash_4 \uppsi'$ equation contracted with $\uppsi'$, the terms involving the angular derivatives will cancel upon integration and an integration by parts yields \emph{energy estimates} for the Weyl curvature components.  It is through this procedure that the hyperbolicity of the Einstein equations manifests itself in the double null gauge.  These energy estimates form the content of Section \ref{section:curvature} and, again, are outlined below in Section \ref{subsec:introcurvature}.

We are therefore forced (at least at the highest order) to estimate the curvature components in $L^2$.  All of the estimates for the Ricci coefficients here will also be $L^2$ based.  In order to deal with the nonlinearities in the error terms of the equations, the same $L^2$ estimates are obtained for higher order derivatives of the quantities and Sobolev inequalities are used to obtain pointwise control over lower order terms.\footnote{One could actually be sharper and, with slightly more effort, close the estimates with fewer derivatives than are taken here.}  To do this, a set of differential operators $\mathfrak{D}$ is introduced which satisfy the \emph{commutation principle} of \cite{DHR}.  This says that the ``null condition'' satisfied by the equations (which is outlined below and crucial for the estimates) and the structure discussed above are preserved when the equations are commuted by $\mathfrak{D}$, i.\@e.\@ $\mathfrak{D} \Gamma$ and $\mathfrak{D} \psi$ satisfy similar equations to $\Gamma$ and $\psi$.  The set of operators $\mathfrak{D}$ is introduced in Section \ref{subsec:commutation}.

As they appear on the right hand side of the equations for $\psi,\Gamma$, the energy momentum tensor components $\mathcal{T}$ are also, at the highest order, estimated in $L^2$.  These estimates are obtained by first estimating $f$ using the Vlasov equation.  It is important that the components of the energy momentum tensor, and hence also $f$, are estimated at one degree of differentiability greater than the Weyl curvature components $\psi$.  The main difficulty in this work is in obtaining such estimates for the derivatives of $f$.  See Section \ref{subsec:introf} for an outline of the argument and Section \ref{section:emtensor} for the details.

\subsection{The $p$-index Notation and the Null Condition} \label{subsec:pindex}
The discussion in the previous section outlines how one can hope to close the estimates for $\Gamma$ and $\psi$ from the point of view of \emph{regularity}.  Since global estimates are required, it is also crucial that all of the error terms in the equations decay sufficiently fast in $v$ (or equivalently, since everything takes place in the ``wave zone'' where $r:= v-u+r_0$ is comparable to $v$, sufficiently fast in $r$) so that, when they appear in the estimates, they are globally integrable.  For quasilinear wave equations there is an algebraic condition on the nonlinearity, known as the \emph{null condition}, which guarantees this \cite{Kl}.  By analogy, we say the null structure and Bianchi equations ``satisfy the null condition'' to mean that, on the right hand sides of the equations, certain ``bad'' combinations of the terms do not appear.  There is an excellent discussion of this in the introduction of \cite{DHR}.  As they are highly relevant, the main points are recalled here.

Following \cite{DHR}, the correct hierarchy of asymptotics in $r$ for $\Gamma$, $\psi$ and $\mathcal{T}$ is first guessed.  This guess is encoded in the $p$-index notation.  Each $\Gamma,\psi,\mathcal{T}$ is labelled with a subscript $p$ to reflect the fact that $r^p\vert \Gamma_p \vert, r^p\vert \psi_p \vert, r^p\vert \mathcal{T}_p \vert$ are expected to be uniformly bounded.\footnote{It in fact may be the case that $r^p \vert \Gamma_p \vert$ etc.\@ can converge to 0 as $r\to \infty$.  The weights may therefore be weaker than the actual decay rates in the solutions which are finally constructed.}  Here $\vert \cdot \vert$ denotes the norm with respect to the induced metric on the 2-spheres $\gslash$.  The weighted $L^2$ quantities which will be shown to be uniformly bounded will imply, via Sobolev inequalities, that this will be the case at lower orders.

In Theorem \ref{thm:main3}, the precise formulation of Theorem \ref{thm:main2}, it is asymptotics consistent with the $p$-index notation which will be assumed to hold on the initial outgoing hypersurface $\mathcal{N} = \{ u=u_0\}$.  In the context of Theorem \ref{thm:main}, recall the use of the Klainerman--Nicol\`{o} \cite{KlNi} result in Section \ref{subsec:vacuumappeal}. The result of Klainerman--Nicol\`{o} guarantees that, provided the asymptotically flat Cauchy data on $\Sigma$ has sufficient decay, there indeed exists an outgoing null hypersurface in the development of the data on which asymptotics consistent with the $p$-index notation hold.

\subsection{Geometry of Null Geodesics and the Support of $f$} \label{subsec:introsuppf}
If the Ricci coefficients are assumed to have the asymptotics described in the previous section then it is straightforward to show that $u_f$ can be chosen to have the desired property that $f=0$ on the mass shell over any point $x\in \mathcal{M}$ with $u(x) \geq u_f - 1$.  In fact, it can also be seen that the size of the support of $f$ in $P_x$, the mass shell over the point $x\in \mathcal{M}$, will decay as $v(x) \to \infty$.  This decay is important as it is used to obtain the decay of the components of the energy momentum tensor.  The argument for obtaining the decay properties of $\supp(f)$ is outlined here and presented in detail in Section \ref{section:suppf}.

The decay of the size of the support of $f$ in $P_x$ can be seen by considering the decay of components of certain null geodesics.  Suppose first that $\gamma$ is a future directed null geodesic in Minkowski space emanating from a compact set in the hypersurface $\{t=0\}$ such that the initial tangent vector $\dot{\gamma}(0)$ is contained in a compact set in the mass shell over $\{t=0\}$.  One can show that, if
\[
	\dot{\gamma}(s) = p^4(s) e_4 + p^3(s) e_3 + p^A(s) e_A,
\]
where $e_1 = \partial_{\theta^1},e_2 = \partial_{\theta^2},e_3 = \partial_u,e_4 = \partial_v$ is the standard double null frame in Minkowski space, then the bounds,
\begin{equation} \label{eq:intromomentum}
	 p^4\leq C, \qquad r^2 p^3 \leq C p^4, \qquad r^2 \vert p^A \vert \leq C p^4, \text{ for } A=1,2,
\end{equation}
hold uniformly along $\gamma$ for some constant $C$.\footnote{One sees these are the correct asymptotics for $p^1(s),p^2(s)$ by using the three angular momentum Killing vector fields of Minkowski space $\Omega_1,\Omega_2,\Omega_3$ and the fact that, if $K$ is a Killing vector, $g(\dot{\gamma},K)$ is constant along $\gamma$, to see that the angular momentum of the geodesic,
\[
	r^2 \gslash_{AB} p^Ap^B = \sum_{i=1}^3 \left( g(\dot{\gamma},\Omega_i) \right)^2,
\]
is conserved along $\gamma$.  This fact together with the mass shell relation $4 p^3 p^4 = \gslash_{AB} p^A p^B$ and the fact that $p^4$ does not decay along a null geodesic (which can be seen in Minkowski space by looking at the geodesic equations and noting that $\dot{p}^4(s) \geq 0$) gives the required asymptotics for $p^3$.}

The bounds \eqref{eq:intromomentum} will be assumed to hold in $\supp(f)$ in the mass shell over the initial hypersurface $\{v = v_0\}$ in Theorem \ref{thm:main3}, the precise formulation of Theorem \ref{thm:main2}.  In the setting of Theorem \ref{thm:main}, the bounds \eqref{eq:intromomentum} can be taken to hold on the hypersurface $\underline{\mathcal{N}} = \{ v = v_0\}$ in view of the Cauchy stability argument of Section \ref{subsec:Cauchystability} and the fact that they hold globally in $\supp(f)$ for the uncoupled problem of the Vlasov equation on a fixed Minkowski background.

The idea is now to propagate the bounds \eqref{eq:intromomentum} from the initial hypersurface $\{v=v_0\}$ into the rest of the spacetime.  If $e_1,\ldots,e_4$ now denotes the double null frame of $(\mathcal{M},g)$ (defined in Section \ref{subsec:coords}), one then uses the geodesic equations,
\[
	\dot{p}^{\mu}(s) + p^{\alpha}(s)p^{\beta}(s) \Gamma_{\alpha \beta}^{\mu}(s) = 0,
\]
for a null geodesic $\gamma$ with $\dot{\gamma}(s) = p^{\mu}(s) e_{\mu}\vert_{\gamma(s)}$, a bootstrap argument and the pointwise bounds $r^p\vert \Gamma_p \vert \leq C$ to see that
\[
	 \dot{p}^4(s) = \mathcal{O}\left( \frac{p^4(0)^2}{r(s)^2} \right), \qquad \frac{d}{ds} \left( r(s)^2p^3(s) \right) = \mathcal{O}\left( \frac{p^4(0)^2}{r(s)^2} \right),
\]
\[
	\frac{d}{ds} \left( r(s)^2p^A(s) \right) = \mathcal{O}\left( \frac{p^4(0)^2}{r(s)^2} \right), \text{ for } A=1,2.
\]
The estimates \eqref{eq:intromomentum} follow by integrating along $\gamma$ since $\frac{dr}{ds} \sim p^4(0)$.

Finally, to show the retarded time $u_f$ can be chosen as desired, let $u(s)$ denote the $u$ coordinate of the geodesic $\gamma$ at time $s$.  Then
\[
	\vert \dot{u}(s) \vert \sim p^3(s) \sim \frac{p^4(0)}{r(s)^2},
\]
and hence $\vert u(s) - u(0)\vert \leq C$ for all $s \in [0,\infty)$, for some constant $C$.

\subsection{Global Energy Estimates for the Curvature Components} \label{subsec:introcurvature}
The global energy estimates for the Weyl curvature components can now be outlined.  They are carried out in detail in Section \ref{section:curvature}.  The Bianchi equations take the schematic form,
\begin{equation} \label{eq:introBianchi2}
	\nablaslash_3 \uppsi_p = \Dslash_{\uppsi_p} \uppsi_{p'}' + E_p, \qquad \nablaslash_4 \uppsi_{p'}' = \Dslash_{\uppsi_{p'}'} \uppsi_{p} + c\tr \chi \uppsi_{p'}' + E_{p' + \frac{3}{2}},
\end{equation}
where $c$ is a constant (which is different for the different $\uppsi_{p'}'$) and $E_p$ is an error which will decay, according to the $p$ notation, like $\frac{1}{r^p}$.  Similarly, $E_{p' + \frac{3}{2}}$ is an error which will decay like $\frac{1}{r^{p'+\frac{3}{2}}}$.  Recall from equation \eqref{eq:introBianchi} that the errors $E_p$ and $E_{p' + \frac{3}{2}}$ contain linear terms involving $\Gamma$, nonlinear terms of the form $\Gamma \cdot \psi$ and $\Gamma \cdot \mathcal{T}$, and projected covariant derivatives of components of the energy momentum tensor $\nabla \mathcal{T}$.  Using \eqref{eq:introBianchi2} to compute $Div \left( r^w \vert \uppsi_p \vert^2 e_3 \right)$, $Div \left( r^w \vert \uppsi_{p'}' \vert^2 e_4 \right)$, after summing a cancellation will occur in the terms involving angular derivatives, as discussed in Section \ref{subsec:schematic}, and they can be rewritten as a spherical divergence.  If the weight $w$ is chosen correctly, a cancellation\footnote{This cancellation is exploited for each Bianchi pair except $(\uppsi, \uppsi') = (\alpha,\beta)$, for which a slightly weaker weight is chosen.  See Remark \ref{rmk:peeling}.} also occurs in the $c\tr \chi \uppsi_{p'}'$ term (which, since $\tr \chi$ looks like $\frac{2}{r}$ to leading order, cannot be included in the error $E_{p'+ \frac{3}{2}}$) and one is then left with,
\begin{equation} \label{eq:introenid}
	\int_{\mathcal{B}} Div \left( r^w \vert \uppsi_p \vert^2 e_3 \right) + Div \left( r^w \vert \uppsi_{p'}' \vert^2 e_4 \right)
	=
	\int_{\mathcal{B}} r^w \left( \uppsi_p \cdot E_p + \uppsi_{p'}' \cdot E_{p'+ \frac{3}{2}} \right),
\end{equation}
where $Div$ denotes the spacetime divergence and $\mathcal{B}$ denotes a spacetime ``bulk'' region bounded to the past by the initial characteristic hypersurfaces, and to the future by constant $v$ and constant $u$ hypersurfaces.  See Figure \ref{fig:br}.  Note that this procedure will generate additional error terms but they can be treated similarly to those arising from the errors in \eqref{eq:introBianchi2} and hence are omitted here.  See Section \ref{section:curvature} for the details.

If the curvature fluxes are defined as,
\[
	 F^1_{v_0,v'}(u') = \sum_{\uppsi_p} \int_{\{u=u'\} \cap \{ v_0 \leq v \leq v'\}} r^{w(\uppsi_p, \uppsi_{p'}')} \vert \uppsi_p \vert^2,
\]
\[
	 F^2_{u_0,u'}(v') = \sum_{\uppsi_{p'}'} \int_{\{v=v'\} \cap \{ u_0 \leq u \leq u'\}} r^{w(\uppsi_p, \uppsi_{p'}')} \vert \uppsi_{p'}' \vert^2,
\]
then by the divergence theorem, when the above identity \eqref{eq:introenid} is summed over all Bianchi pairs $(\uppsi_p, \uppsi_{p'}')$, the left hand side becomes
\[
	 F^1_{v_0,v}(u) + F^2_{u_0,u}(v) - F^1_{v_0,v}(u_0) - F^2_{u_0,u}(v_0).
\]
Due to the relation between the weights $w(\uppsi_p, \uppsi_{p'}')$ and $p,p'$, and the bounds assumed for $\Gamma$ and $\mathcal{T}$ through the bootstrap argument, the right hand side of \eqref{eq:introenid} can be controlled by,
\[
	 \int_{u_0}^u F^1_{v_0,v}(u') du' + \frac{C}{v_0},
\]
for some constant $C$ (which, of course, arises from inserting the bootstrap assumptions).  It is this step where one sees the manifestation of the null condition in the Bianchi equations.  Dropping the $F^2_{u_0,u}(v)$ term on the left yields,
\[
	 F^1_{v_0,v}(u) \leq F^1_{v_0,v}(u_0) + F^2_{u_0,u}(v_0) + \int_{u_0}^u F^1_{v_0,v}(u') du' + \frac{C}{v_0},
\]
and hence, by the Gr\"onwall inequality, $F^1_{v_0,v}(u)$ can be controlled by initial data and the term $\frac{C}{v_0}$.  Returning to the inequality,
\[
	 F^1_{v_0,v}(u) + F^2_{u_0,u}(v) \leq F^1_{v_0,v}(u_0) + F^2_{u_0,u}(v_0) + \int_{u_0}^u F^1_{v_0,v}(u') du' + \frac{C}{v_0},
\]
and inserting the above bounds for $F^1_{v_0,v}(u)$, $F^2_{u_0,u}(v)$ can now also be similarly controlled.

\subsection{Global Transport Estimates for the Ricci Coefficients} \label{subsec:introRicci}
Turning now to the global estimates for the Ricci coefficients, which are treated in detail in Section \ref{section:Ricci}, in the $p$-index notation the null structure equations take the form,
\[
	 \nablaslash_3 \Riccit = E_p, \qquad \nablaslash_4 \Riccif = - \frac{p}{2} \tr \chi \Riccif + E_{p+2},
\]
where again $E_p$ is an error which decays, according to the $p$-index notation, like $\frac{1}{r^p}$ and $E_{p+2}$ decays like $\frac{1}{r^{p+2}}$.  Recall from equation \eqref{eq:intronullstructure} that $E_{p}$ and $E_{p+2}$ contain linear terms involving $\Gamma, \psi, \mathcal{T}$, and quadratic terms of the form $\Gamma \cdot \Gamma$.  The $\nablaslash_4 \Riccif$ equations can be rewritten as
\[
	 \nablaslash_4 \left( r^p \Riccif\right) = r^p E_{p+2}.
\]
To estimate the $\overset{(4)}{\Gamma}$ one then uses the identity, for a function $h$ on $\mathcal{M}$,
\[
	 \partial_v \int_{S_{u,v}} h d \mu_S = \int_{S_{u,v}} \nablaslash_4 h + \tr \chi h d \mu_S,
\]
where the $\tr \chi h$ term comes from the derivative of the volume form on $S_{u,v}$, with $h = r^{2p-2} \vert \Riccif \vert^2$.  The $r^{-2}$ factor serves to cancel the $\tr \chi$ term (which, recall, behaves like $\frac{2}{r}$ and so is not globally integrable in $v$).  Hence,
\begin{align*}
	 \partial_v \int_{S_{u,v}} r^{2p-2} \vert \Riccif \vert^2 d\mu_S 
	 &
	 =
	 \int_{S_{u,v}} r^{2p-2}\Riccif \cdot E_{p+2} d\mu_S
	 \\
	 &
	 =
	 \mathcal{O} \left( \frac{1}{r^2} \right),
\end{align*}
since the volume form is of order $r^2$.  Integrating in $v$ from the initial hypersurface $\{ v=v_0\}$ then gives,
\begin{equation} \label{eq:introRiccifbounds}
	 r^{2p-2} \int_{S_{u,v}} \vert \Riccif \vert^2 d\mu_S \leq C \left( r^{2p-2} \int_{S_{u,v}} \vert \Riccif \vert^2 d\mu_S \right) \bigg\vert_{v=v_0} + \frac{C}{v_0}.
\end{equation}
Note that the error $E_{p+2}$ is integrated over a $u =$ constant hypersurface.  These are exactly the regions on which the integrals of the Weyl curvature components were controlled in Section \ref{subsec:introcurvature}, and it is for this reason the curvature terms in the error $E_{p+2}$ can be controlled in \eqref{eq:introRiccifbounds}.

Since the volume form is of order $r^2$, the bound \eqref{eq:introRiccifbounds} is consistent with $\Riccif$ decaying like $\frac{1}{r^p}$ and, after repeating the above with appropriate derivatives of $\Riccif$, this pointwise decay can be obtained using Sobolev inequalities on the spheres.

It is not a coincidence that the $\frac{p}{2}$ coefficient of $\tr \chi \Riccif$ in the $\nablaslash_4 \Riccif$ equation is exactly that which is required to obtain $\frac{1}{r^p}$ decay for $\Riccif$.  In fact some of the $\Riccif$ will decay faster than this but the other null structure equations are required, along with elliptic estimates, to obtain this.  It is therefore the $\frac{p}{2}$ coefficient which determines the $p$ index given here to the $\overset{(4)}{\Gamma}$ as it restricts the decay which can be shown to hold using only the $\nablaslash_4 \overset{(4)}{\Gamma}$ equations.  Note the difference with \cite{DHR} where the authors are not constrained by this coefficient as they there integrate ``backwards'' from future null infinity.

Turning now to the equations in the 3 direction, the $\Riccit$ quantities are estimated using the identity,
\[
	 \partial_u \int_{S_{u,v}} h d\mu_S = \int_{S_{u,v}} \nablaslash_3 h + \tr \chibar h d\mu_S,
\]
with $h = r^{2p-2} \vert \Riccit \vert^2$.  It does not now matter that $\tr \chibar$ only decays like $\frac{1}{r}$ since the integration in $u$ will only be up to the finite value $u_f$.

Suppose first that $\Riccit$ satisfies
\begin{equation} \label{eq:introRiccit}
	 \nablaslash_3 \Riccit = E_{p+1} + E_p^0,
\end{equation}
where $E_{p+1}$ decays like $\frac{1}{r^{p+1}}$ and $E_p^0$ decays like $\frac{1}{r^p}$ but only contains Weyl curvature, energy momentum tensor and $\overset{(4)}{\Gamma}$ terms which have already been estimated (the energy momentum tensor estimates are outlined below as they present the greatest difficulty but in the logic of the proof are estimated first).
Then,
\[
	 \left\vert \partial_u \int_{S_{u,v}} r^{2p-2} \vert \Riccit \vert^2 d \mu_S \right\vert
	 \leq 
	 \int_{S_{u,v}} r^{2p-2} \left( \vert \Riccit \vert^2 + \vert E_{p+1}\vert^2 + \vert E_{p}^0\vert^2 \right) d \mu_S.
\]
Integrating from $u_0$ to $u$ and inserting the bootstrap assumptions and the previously obtained bounds for $E_p^0$, the Gr\"onwall inequality then gives,
\begin{align*}
	 \int_{S_{u,v}} r^{2p-2} \vert \Riccit \vert^2 d \mu_S 
	 &
	 \leq
	 \left( \int_{S_{u,v}} r^{2p-2} \vert \Riccit \vert^2 d \mu_S\right) \bigg\vert_{u=u_0} + \frac{C}{v_0} + C \int_{u_0}^{u_f} r^{2p} \vert E_{p}^0\vert^2 d u'
	 \\
	 &
	 \leq
	 C\left( \bc + \frac{1}{v_0} \right),
\end{align*}
where $\bc$ controls the size of the initial data.  Note that it was important that the only error terms which have not already been estimated are of the form $E_{p+1}$, and not $E_p$, in order to gain the $\frac{1}{v_0}$ smallness factor.  It turns out that there is a \emph{reductive} structure in the null structure equations so that, provided they are estimated in the correct order, each $\overset{(3)}{\Gamma}$ satisfies an equation of the form \eqref{eq:introRiccit} where $E_p^0$ now also contains $\overset{(3)}{\Gamma}$ terms which have been estimated previously.  Hence all of the $\overset{(3)}{\Gamma}$ can be estimated with smallness factors.

\subsection{Elliptic Estimates and Ricci Coefficients at the Top Order} \label{subsec:introelliptic}
The procedure in Section \ref{subsec:introRicci} is used to estimate the Ricci coefficients, along with their derivatives at all but the top order, in $L^2$ of the spheres of intersection of constant $u$ and constant $v$ hypersurfaces.  The derivatives of Ricci coefficients at the top order are estimated only in $L^2$ on null hypersurfaces.  These estimates are obtained using elliptic equations on the spheres for some of the Ricci coefficients, coupled to transport equations for certain auxilliary quantities.  This procedure is familiar from many other works (e.\@g.\@ \cite{ChKl}, \cite{Ch}) and forms the content of Section \ref{section:Riccitop}.  It should be noted that these estimates are only required here for estimating the components of the energy momentum tensor.  If one were to restrict the semi-global problem of Theorem \ref{thm:main2} to the case of the vacuum Einstein equations \eqref{eq:Einsteinvacuum} then the estimates for the Ricci coefficients and curvature components could be closed with a loss (i.\@e.\@ without knowing that angular derivatives of Ricci coefficients lie at the same degree of differentiability as the Weyl curvature components) as only the null structure equations of the form \eqref{eq:intronullstructure} would be used, and these elliptic estimates would not be required.  See Section \ref{subsec:schematic}.

\subsection{Global Estimates for the Energy Momentum Tensor Components}
At the zeroth order the estimates for the energy momentum tensor components follow directly from the bounds \eqref{eq:intromomentum}, which show that the size of the region $\supp(f\vert_{P_x}) \subset P_x$ on which the integral in \eqref{eq:EV} is taken is decaying as $r(x) \to \infty$, and the fact that $f$ is conserved along trajectories of the geodesic flow.  For example, using the volume form for $P_x$ defined in Section \ref{subsec:massshell}, if $\sup_{\{v=v_0\}} \vert f \vert \leq \bc$,
\[
	T_{33}(x) 
	= 
	4 T^{44}(x) 
	\leq 
	4\bc \int_0^C \int_{\vert p^1 \vert, \vert p^2 \vert \leq \frac{C}{r^2}} p^4 \sqrt{\det \gslash(x)} dp^1 dp^2 dp^4
	\leq
	\frac{C\bc}{r^2},
\]
since $\sqrt{\det \gslash} \leq C r^2$.  In fact, provided the derivatives of $f$ can be estimated, the estimates for the derivatives of $\mathcal{T}$ are obtained in exactly the same way.

\subsection{Global Estimates for Derivatives of $f$} \label{subsec:introf}
A fundamental new aspect of this work arises in obtaining estimates for the derivatives of $f$.  Recall from Section \ref{subsec:schematic} that, in order to close the bootstrap argument, it is crucial that the energy momentum tensor components $\mathcal{T}$, and hence $f$, are estimated at one degree of differentiability greater than the Weyl curvature components, i.\@e.\@ $k$ derivatives of $f$ must be estimated using only $k-1$ derivatives of $\psi$.  Written in components with respect to the frame\footnote{Recall $e_1,e_2,e_3,e_4$ is the double null frame for $\mathcal{M}$, defined in Section \ref{subsec:coords} using the $(u,v,\theta^1,\theta^2)$ coordinate system on $\mathcal{M}$.  Note the slight abuse of notation here as $e_1,e_2,e_3,e_4$ act only on functions on $\mathcal{M}$, whilst $f$ is a function on $P$.  As $e_1 = \partial_{\theta^1}$ in the $(u,v,\theta^1,\theta^2)$ coordinate system, $e_1 f$ is used to denote $\partial_{\theta^1}f$ in the $(u,v,\theta^1,\theta^2, p^1,p^2,p^4)$ coordinate system for $P$.  Similarly for $e_2f,e_3f,e_4f$.} $e_1,e_2,e_3,e_4, \partial_{p^1},\partial_{p^2},\partial_{p^4}$ for $P$, the Vlasov equation \eqref{eq:vlas} takes the form,
\[
	X(f) = p^{\mu} e_{\mu} (f) - p^{\nu} p^{\lambda} \Gamma_{\nu \lambda}^{\mu} \partial_{p^{\mu}} f = 0,
\]
where $\Gamma_{\nu \lambda}^{\mu}$ denote the Ricci coefficients of $\mathcal{M}$.  See \eqref{eq:Riccitab1}--\eqref{eq:Riccitab5} below.  One way to estimate derivatives of $f$ is to commute this equation with suitable vector fields and integrate along trajectories of the geodesic flow.  If $V$ denotes such a vector field, commuting will give,
\[
	X(Vf) = E,
\]
where $E$ is an error involving terms of the form $V(\Gamma_{\nu \lambda}^{\mu})$.  At first glance this seems promising as derivatives of the Ricci coefficients should live at the same level of differentiability as the Weyl curvature components $\psi$.  This is not the case for all of the $\Gamma_{\nu \lambda}^{\mu}$ however, for example if $V$ involves an angular derivative then $V(\Gamma_{4A}^B)$, for $A,B = 1,2$, will contain two angular derivatives of the vector field $b$.  See \eqref{eq:metric} below for the definition of $b$ and \eqref{eq:Riccitab3} for $\Gamma_{4A}^B$.  The vector field $b$ is estimated through an equation of the form,
\[
	\nablaslash_3 b = \Gamma + \Gamma \cdot b,
\]
and hence, commuting twice with angular derivatives and using the elliptic estimates described in Section \ref{subsec:introelliptic} will only give estimates for two angular derivatives of $b$ by first order derivatives of $\psi$ and $\mathcal{T}$.  The angular derivatives of the spherical Christoffel symbols $\Gammaslash$, see \eqref{eq:defGammaslash} below, which also appear when commuting the Vlasov equation give rise to similar issues.

Whilst it may still be the case that $E$ as a whole (rather than each of its individual terms) can be estimated just at the level of $\psi$, a different approach is taken here in order to see more directly that derivatives of $f$ can be estimated at the level of $\psi$.  This approach, which is treated in detail in Section \ref{section:emtensor}, is outlined now.

Consider again a vector $V \in T_{(x,p)} P$.  Recall the form of the Vlasov equation \eqref{eq:vlasexp}.  Using this expression for $f$ and the chain rule,
\[
	Vf (x,p) = df\vert_{(x,p)} V = df\vert_{\exp_s(x,p)} \cdot d \exp_s \vert_{(x,p)} V,
\]
for any $s$, and hence, if $J(s) := d \exp_s \vert_{(x,p)} V$,
\begin{equation} \label{eq:introVJ}
	Vf (x,p) = J(s) f (\exp_s(x,p)).
\end{equation}
If $s<0$ is taken so that $\pi (\exp_s(x,p)) \in \{ v = v_0\}$ then the expression \eqref{eq:introVJ} relates a derivative of $f$ at $(x,p)$ to a derivative of $f$ on the initial hypersurface.  It therefore remains to estimate the components of $J(s)$, with respect to a suitable frame for $P$, uniformly in $s$ and independently of the point $(x,p)$.

The metric $g$ on the spacetime $\mathcal{M}$ can be used to define a metric on the tangent bundle $\TM$, known as the Sasaki metric \cite{Sa}, which by restriction defines a metric $\hat{g}$ on the mass shell $P$.  See Section \ref{section:Sasaki} where this metric is introduced.  With respect to this metric trajectories of the geodesic flow $s\mapsto \exp_s(x,p)$ are geodesics in $P$ and, for any vector $V\in T_{(x,p)}P$, $J(s) := d \exp_s \vert_{(x,p)} V$ is a Jacobi field along this geodesic (see Section \ref{section:Sasaki}).  Therefore $J(s)$ satisfies the Jacobi equation,
\begin{equation} \label{eq:introjacobi}
	\hat{\nabla}_X \hat{\nabla}_X J = \hat{R}(X,J)X,
\end{equation}
where $\hat{\nabla}$ denotes the induced connection on $P$, and $\hat{R}$ denotes the curvature tensor of $(P,\hat{g})$.  Equation \eqref{eq:introjacobi} is used, as a transport equation along the trajectories of the geodesic flow, to estimate the components of $J$.  The curvature tensor $\hat{R}$ can be expressed in terms of (vertical and horizontal lifts of) the curvature tensor $R$ of $(\mathcal{M},g)$ along with its first order covariant derivatives $\nabla R$.  See equation \eqref{eq:curvaturenotuseful}.  At first glance the presence of $\nabla R$ again appears to be bad.  On closer inspection, however, the terms involving covariant derivatives of $R$ are always derivatives in the ``correct'' direction so that they can be recovered by the transport estimates, and the components of $J$, and hence $Vf$, can be estimated at the level of $\psi$.

The above observations of course only explain how one can hope to close the estimates for $\mathcal{T}$ from the point of view of regularity.  In order to obtain global estimates for the components of $J$ one has to use the crucial fact that, according to the $p$-index notation, the right hand side of the Jacobi equation $\hat{R}(X,J)X$, when written in terms of $\psi, \mathcal{T}, p^1, p^2, p^3,p^4$, decays sufficiently fast as to be twice globally integrable along $s \mapsto \exp_s(x,p)$.  This can be viewed as a null condition for the Jacobi equation and is brought to light through further schematic notation introduced in Section \ref{subsec:furtherschematic}.

The fact that the right hand side of \eqref{eq:introjacobi} has sufficient decay in $r$ is perhaps not surprising.  Consider for example the term
\begin{equation} \label{eq:horexample}
	\hor_{(\gamma,\dot{\gamma})} \left( R(\dot{\gamma}, J^h) \dot{\gamma} \right),
\end{equation}
in $\hat{R}(X,J)X$.  Here $\gamma$ is a geodesic in $\mathcal{M}$ such that $\exp_s(x,p) = (\gamma(s), \dot{\gamma}(s))$ and $J^h$ is a vector field along $\gamma$ on $\mathcal{M}$ such that, together with another vector field $J^v$ along $\gamma$,
\[
	J_{(\gamma,\dot{\gamma})} = \hor_{(\gamma,\dot{\gamma})} (J^h) + \ver_{(\gamma,\dot{\gamma})} (J^v),
\]
with $\hor_{(\gamma,\dot{\gamma})}$ and $\ver_{(\gamma,\dot{\gamma})}$ denoting horizontal and vertical lifts at $(\gamma,\dot{\gamma})$ (defined in Section \ref{section:Sasaki}).  The slowest decaying $\psi$ and $\mathcal{T}$ are those which contain the most $e_3$ vectors.  Whenever such $\psi$ and $\mathcal{T}$ arise in \eqref{eq:horexample} however, they will typically be accompanied by $p^3(s)$, the $e_3$ component of $\dot{\gamma}(s)$, which (recall from Section \ref{subsec:introsuppf}) has fast $\frac{1}{r(s)^2}$ decay.  Similarly the non-decaying $p^4(s)$, the $e_4$ component of $\dot{\gamma}(s)$, can only appear in \eqref{eq:horexample} accompanied by the $\psi$ and $\mathcal{T}$ which contain $e_4$ vectors and hence have fast decay in $r$.  In particular, potentially slowly decaying terms involving $p^4(s)$ multiplying the $\psi$ and $\mathcal{T}$ which contain no $e_4$ vectors do not arise in \eqref{eq:horexample}.

Finally, since $Jf$ now itself is also conserved along $s\mapsto \exp_s(x,p)$, second order derivatives of $f$ can be obtained by repeating the above.  If $J_1,J_2$ denote Jacobi fields corresponding to vectors $V_1,V_2$ at $(x,p)$ respectively, then,
\[
	V_2 V_1 f (x,p) = J_2(s) J_1(s) f (\exp_s(x,p)).
\]
In order to control $V_2 V_1 f (x,p)$ it is therefore necessary to estimate the $J_2$ derivatives of the components of $J_1$ along $s\mapsto \exp_s(x,p)$.  This is done by commuting the Jacobi equation \eqref{eq:introjacobi} and showing that the important structure described above is preserved.  The Jacobi fields which are used, and hence the vectors $V$ used to take derivatives of $f$, have to be carefully chosen.  They are defined in Section \ref{subsec:overview}.

Note that this procedure can be repeated to obtain higher order derivatives of $f$.  Whilst the pointwise bounds on $\psi$ at lower orders mean that lower order derivatives of $f$ can be estimated pointwise, at higher orders this procedure will generate terms involving higher order derivatives of $\psi$ and hence higher order derivatives of $\mathcal{T}$ must be estimated in $L^2$ on null hypersurfaces.  In fact, at the very top order, $\mathcal{T}$ is estimated in the spacetime $L^2$ norm.

\subsection{Outline of the Paper}
In the next section coordinates are defined on the spacetime to be constructed, and on the mass shell $P$.  The Ricci coefficients and curvature components are introduced along with their governing equations.  In Section \ref{section:equations} the schematic form of the quantities and equations are given.  Three derivative operators are then introduced which are shown to preserve the schematic form of the equations under commutation.  Some facts about the Sasaki metric are recalled in Section \ref{section:Sasaki} and are used to describe certain geometric properties of the mass shell.  A precise statement of Theorem \ref{thm:main2} is given in Section \ref{section:ba}, along with the statement of a bootstrap theorem.  The proof of the bootstrap theorem is given in the following sections.  The main estimates are obtained for the energy momentum tensor components, Weyl curvature components and lower order derivatives of Ricci coefficients in Sections \ref{section:emtensor}, \ref{section:curvature} and \ref{section:Ricci} respectively.  The estimates for the Ricci coefficients at the top order are obtained in Section \ref{section:Riccitop}.  The results of these sections rely on the Sobolev inequalities of Section \ref{section:Sobolev}, and the decay estimates for the size of $\supp(f\vert_{P_x}) \subset P_x$ as $x$ approaches null infinity from Section \ref{section:suppf}.  The fact that the retarded time $u_f$ can be chosen to have the desired property, stated in Theorem \ref{thm:main2}, is also established in Section \ref{section:suppf}.  Finally, the completion of the proof of Theorem \ref{thm:main2}, through a \emph{last slice argument}, is outlined in Section \ref{section:lastslice}.

\subsection{Acknowledgements}
I thank Mihalis Dafermos for introducing me to this problem and for his support and encouragement.  I am also grateful to Gustav Holzegel, Jonathan Luk and Igor Rodnianski for many helpful discussions, and to Cl\'{e}ment Mouhot, Jacques Smulevici and J\'{e}r\'{e}mie Szeftel for useful comments on an earlier version of the manuscript.  This work was supported by the UK Engineering and Physical Sciences Research Council (EPSRC) grant EP/H023348/1 for the University of Cambridge Centre for Doctoral Training, the Cambridge Centre for Analysis.

\section{Basic Setup} \label{section:setup}
Throughout this section consider a smooth spacetime $(\mathcal{M},g)$ where $\mathcal{M} = [u_0,u'] \times [v_0,v') \times S^2$, for some $u_0 < u' \leq u_f$, $v_0 < v' \leq \infty$, is a manifold with corners and $g$ is a smooth Lorentzian metric on $\mathcal{M}$ such that $(\mathcal{M},g)$, together with a continuous function $f:P\to [0,\infty)$, smooth on $P\smallsetminus Z$, where $Z$ denotes the zero section, satisfy the Einstein--Vlasov system \eqref{eq:EV}--\eqref{eq:vlas}.

\subsection{Coordinates and Frames} \label{subsec:coords}
A point in $\mathcal{M}$ will be denoted $(u,v,\theta^1,\theta^2)$.  It is implicitly understood that two coordinate charts are required on $S^2$.  The charts will be defined below using two coordinate charts on $S_{u_0,v_0} = \{u = u_0\} \cap \{ v = v_0\}$.  Assume $u$ and $v$ satisfy the Eikonal equation
\[
  g^{\mu \nu} \partial_{\mu} u \partial_{\nu} u = 0, 
  \qquad
  g^{\mu \nu} \partial_{\mu} v \partial_{\nu} v = 0.
\]
Define null vector fields
\[
  L^{\mu} := - 2 g^{\mu \nu}\partial_{\nu} u, \qquad \underline{L}^{\mu} := - 2 g^{\mu \nu}\partial_{\nu} v,
\]
and the function $\Omega$ by
\[
  2\Omega^{-2} = -g(L,\underline{L}).
\]
Let $(\theta^1,\theta^2)$ be a coordinate system in some open set $U_1$ on the initial sphere $S_{u_0,v_0}$.  These functions can be extended to define a coordinate system $(u,v,\theta^1,\theta^2)$ on an open subset of the spacetime as follows.  Define $\theta^1,\theta^2$ on $\{ u = u_0\}$ by solving
\[
  L(\theta^A) = 0, \qquad \text{ for } A = 1,2.
\]
Then extend to $u>u_0$ by solving
\[
  \underline{L}(\theta^A) = 0, \qquad \text{ for } A = 1,2.
\]
This defines coordinates $(u,v,\theta^1,\theta^2)$ on the region $D(U_1)$ defined to be the image of $U_1$ under the diffeomorphisms generated by $L$ on $\{u=u_0\}$, then by the diffeomorphisms generated by $\underline{L}$.  Coordinates can be defined on another open subset of the spacetime by considering coordinates in another region $U_2 \subset S_{u_0,v_0}$ and repeating the procedure.  These two coordinate charts will cover the entire region of the spacetime in question provided the charts $U_1,U_2$ cover $S_{u_0,v_0}$.  The choice of coordinates on $U_1,U_2$ is otherwise arbitrary.

The spheres of constant $u$ and $v$ will be denoted $S_{u,v}$ and the restriction of $g$ to these spheres will be denoted $\gslash$.  A vector field $V$ on $\mathcal{M}$ will be called an $S_{u,v}$ vector field if $V_x \in T_{x} S_{u(x),v(x)}$ for all $x \in \mathcal{M}$.  Similarly for $(r,0)$ tensors.  A one form $\xi$ is called an $S_{u,v}$ one form if $\xi(L) = \xi(\underline{L}) = 0$.  Similarly for $(0,s)$, and for general $(r,s)$ tensors.

In these coordinates the metric takes the form
\begin{equation} \label{eq:metric}
	g = -2 \Omega^2(du \otimes dv + dv \otimes du) + \gslash_{AB}( d \theta^A - b^A dv) \otimes (d\theta^B - b^B dv),
\end{equation}
where $b$ is some vector field tangent to the spheres $S_{u,v}$ which is zero on the initial hypersurfaces $\{u=u_0\},\{v=v_0\}$.  Note that, due to the remaining gauge freedom, $\Omega$ can be specified on $\{u=u_0\}$ and $\{v=v_0\}$.  Since, in Theorem \ref{thm:main3}, it is assumed that $\left\vert \frac{1}{\Omega^2} - 1 \right\vert$ and $r^3\vert \nablaslash_4 \log \Omega \vert$ are small on $\{u=u_0\}$, it is convenient to set $\Omega =1$ on $\{u=u_0\}$ so that they both vanish.

Integration of a function $\phi$ on $S_{u,v}$ is defined as
\[
	\int_{S_{u,v}} \phi d\mu_{S_{u,v}} = \sum_{i=1,2} \int_{\theta^1} \int_{\theta^2} \phi \tau_i \sqrt{\det \gslash} d\theta^2 d \theta^1,
\]
where $\tau_1,\tau_2$ is a partition of unity subordinate to $D_{U_1},D_{U_2}$ at $u,v$.

Define the double null frame
\begin{equation} \label{eq:nullframe}
  e_A = \partial_{\theta^A}, \text{ for } A = 1,2, \qquad e_3 = \frac{1}{\Omega^2} \partial_u, \qquad e_4 = \partial_v + b^A \partial_{\theta^A},
\end{equation}
and let $(p^{\mu}; \mu = 1,2,3,4)$, denote coordinates on each tangent space to $\mathcal{M}$ conjugate to this frame, so that the coordinates $(x^{\mu},p^{\mu})$ denote the point
\[
	p^{\mu}e_{\mu}\vert_x \in T_x\mathcal{M},
\]
where $x = (x^{\mu})$.  This then gives a frame, $\{e_{\mu}, \partial_{p^{\mu}} \mid \mu = 1,2,3,4 \}$, on $T\mathcal{M}$.  The Vlasov equation \eqref{eq:vlas} written with respect to this frame takes the form
\[
	p^{\mu} e_{\mu}(f) - \Gamma^{\mu}_{\nu \lambda} p^{\nu} p^{\lambda} \partial_{p^{\mu}} f = 0,
\]
where $\Gamma^{\mu}_{\nu \lambda}$ are the Ricci coefficients of $g$ with respect to the null frame \eqref{eq:nullframe}.  For $f$ as a function on the mass shell $P$, this reduces to,
\[
	p^{\mu} e_{\mu}(f) - \Gamma^{\hat{\mu}}_{\nu \lambda} p^{\nu} p^{\lambda} \partial_{\overline{p}^{\hat{\mu}}} f = 0,
\]
where $\hat{\mu}$ now runs over $1,2,4$, and $\overline{p}^1,\overline{p}^2,\overline{p}^4$ denote the restriction of the coordinates $p^1,p^2,p^4$ to $P$, and $\partial_{\overline{p}^{\hat{\mu}}}$ denote the partial derivatives with respect to this restricted coordinate system.  Using the mass shell relation \eqref{eq:massshell} below one can easily check,
\begin{equation} \label{eq:pbar}
	\partial_{\overline{p}^{1}} = \partial_{p^1} + \frac{\gslash_{1A}p^A}{2p^4} \partial_{p^3}, \qquad
	\partial_{\overline{p}^{2}} = \partial_{p^2} + \frac{\gslash_{2A}p^A}{2p^4} \partial_{p^3}, \qquad
	\partial_{\overline{p}^{4}} = \partial_{p^1} - \frac{p^3}{p^4} \partial_{p^3}.
\end{equation}

Note that Greek indices, $\mu,\nu,\lambda$, etc.\@ will always be used to sum over the values $1,2,3,4$, whilst capital Latin indices, $A,B,C$, etc.\@ will be used to denote sums over only the spherical directions $1,2$.  In Section \ref{section:emtensor} lower case latin indices $i,j,k$, etc.\@ will be used to denote summations over the values $1,\ldots,7$.

\begin{remark}
  A seemingly more natural null frame to use on $\mathcal{M}$ would be
  \begin{equation} \label{eq:natframe}
	 e_A = \partial_{\theta^A}, \text{ for } A = 1,2, \qquad e_3 = \frac{1}{\Omega} \partial_u, \qquad e_4 = \frac{1}{\Omega} \left( \partial_v + b^A \partial_{\theta^A} \right).
  \end{equation}
  Dafermos--Holzegel--Rodnianski \cite{DHR} use the same ``unnatural'' frame for regularity issues on the event horizon.  The reason for the choice here is slightly different and is related to the fact that $\omegabar$, defined below, is zero in this frame.
\end{remark}

\subsection{Null Geodesics and the Mass Shell} \label{subsec:massshell}
Recall that the mass shell $P \subset T\mathcal{M}$ is defined to be the set of future pointing null vectors.  Using the definition of the coordinates $p^{\mu}$ and the form of the metric given in the previous section one sees that, since all of the particles have zero mass, i.\@e.\@ since $f$ is supported on $P$, the relation
\begin{equation} \label{eq:massshell}
	-4p^3p^4 + \gslash_{AB}p^Ap^B = 0,
\end{equation}
is true in the support of $f$.  The identity \eqref{eq:massshell} is known as the \emph{mass shell relation}.

The mass shell $P$ is a 7 dimensional hypersurface in $T\mathcal{M}$ and can be parameterised by coordinates $(u,v,\theta^1,\theta^2,p^1,p^2,p^4)$, with $p^3$ defined by \eqref{eq:massshell}.

To make sense of the integral in the definition of the energy momentum tensor \eqref{eq:EV} one needs to define a suitable volume form on the mass shell, $P_x$, over each point $x\in \mathcal{M} \cap \{u\leq u_f\}$.  Since $P_x$ is a null hypersurface it is not immediately clear how to do this.  Given such an $x$, the metric on $\mathcal{M}$ defines a metric on $T_x \mathcal{M}$,
\[
	- 4dp^3 dp^4 + \gslash_{AB}(x) dp^Adp^B,
\]
which in turn defines a volume form on $T_x\mathcal{M}$,
\[
	2 \sqrt{\det \gslash} dp^3 \wedge dp^4 \wedge dp^1 \wedge dp^2.
\]
A canonical one-form normal to $P_x$ can be defined as the differential of the function $\Lambda_X : T_x \mathcal{M} \to \mathbb{R}$ which measures the \emph{length} of $X\in T_x \mathcal{M}$,
\[
	\Lambda_x(X) := g(X,X).
\]
Taking the normal $-\frac{1}{2} d \Lambda_x$ to $P_x$, the volume form (in the $(u,v,\theta^1,\theta^2,p^1,p^2,p^4)$ coordinate system) can be defined as
\[
	\frac{\sqrt{\det \gslash}}{p^4} dp^4 \wedge dp^1 \wedge dp^2.
\]
This is the unique volume form on $P_x$ \emph{compatible} with the normal $-\frac{1}{2} d \Lambda_x$ in the sense that
\[
	-\frac{1}{2} d \Lambda_x \wedge \left( \frac{\sqrt{\det \gslash}}{p^4} dp^4 \wedge dp^1 \wedge dp^2 \right) = 2 \sqrt{\det \gslash} dp^3 \wedge dp^4 \wedge dp^1 \wedge dp^2,
\]
and if $\xi$ is another 3-form on $P_x$ such that 
\[
	-\frac{1}{2} d \Lambda_x \wedge \xi = 2 \sqrt{\det \gslash} dp^3 \wedge dp^4 \wedge dp^1 \wedge dp^2,
\]
then
\[
	\xi = \frac{\sqrt{\det \gslash}}{p^4} dp^4 \wedge dp^1 \wedge dp^2 \qquad \text{on } P_x.
\]
See Section 5.6 of \cite{SaWu}.

The energy momentum tensor at $x\in \mathcal{M}$ therefore takes the form
\[
	T_{\mu \nu}(x) = \int_{-\infty}^{\infty} \int_{-\infty}^{\infty} \int_{0}^{\infty} f p_{\mu}p_{\nu} \frac{1}{p^4} \sqrt{\det \gslash} \mathrm{d} p^4 \mathrm{d} p^1 \mathrm{d} p^2.
\]

\subsection{Ricci Coefficients and Curvature Components} \label{subsec:alphabeta}
Define the Ricci coefficients
\begin{equation*}
  \begin{array}{ll}
	 \chi_{AB} = g(\nabla_{e_A} e_4, e_B), \qquad \qquad & \chibar_{AB} = g(\nabla_{e_A} e_3, e_B), \\
	 \eta_A = -\frac{1}{2} g(\nabla_{e_3} e_A, e_4), & \etabar_A = -\frac{1}{2} g(\nabla_{e_4} e_A, e_3), \\
	 \omega = \frac{1}{2} g(\nabla_{e_4} e_3, e_4).
  \end{array}
\end{equation*}
The null second fundamental forms $\chi, \chibar$ are decomposed into their trace and trace free parts
\begin{align*}
	\tr \chi = \gslash^{AB} \chi_{AB}, & \qquad \hat{\chi}_{AB} = \chi_{AB} - \frac{1}{2} \tr \chi \gslash_{AB}, \\
	\tr \chibar = \gslash^{AB} \chibar_{AB}, & \qquad \hat{\chibar}_{AB} = \chibar_{AB} - \frac{1}{2} \tr \chibar \gslash_{AB}.
\end{align*}
Note that due to the choice of frame, since $e_3$ is an affine geodesic vector field, $\omegabar := \frac{1}{2} g(\nabla_{e_3} e_4, e_3) = 0$.  Also note that in this frame $\zeta_A := \frac{1}{2} g(\nabla_{e_A} e_4, e_3) = - \etabar_A$.  The Christoffel symbols of $(S_{u,v},\gslash)$ with respect to the frame $e_1,e_2$ are denoted $\Gammaslash_{AB}^C$,
\begin{equation} \label{eq:defGammaslash}
	\nablaslash_{e_A} e_B = \Gammaslash_{AB}^C e_C.
\end{equation}

Define also the null Weyl curvature components
\begin{equation*}
  \begin{array}{ll}
	 \alpha_{AB} = W(e_A,e_4,e_B,e_4), \qquad \qquad & \alphabar_{AB} = W(e_A,e_4,e_B,e_4), \\
	 \beta_A = \frac{1}{2} W(e_A,e_4,e_3,e_4), & \betabar_A = \frac{1}{2} W(e_A,e_3,e_3,e_4), \\
	 \rho = \frac{1}{4} W(e_4,e_3,e_4,e_3), & \sigma = \frac{1}{4} {}^* W(e_4,e_3,e_4,e_3).
  \end{array}
\end{equation*}
Here\footnote{Recall that the scalar curvature $R$ vanishes for solutions of the massless Einstein--Vlasov system.}
\begin{equation} \label{eq:Weyl}
	W_{\alpha \beta \gamma \delta} = R_{\alpha \beta \gamma \delta} - \frac{1}{2}(g_{\alpha \gamma} Ric_{\beta \delta} + g_{\beta \delta} Ric_{\alpha \gamma} - g_{\beta \gamma} Ric_{\alpha \delta} - g_{\alpha \delta} Ric_{\beta \gamma}),
\end{equation}
is the Weyl, or conformal, curvature tensor of $(\mathcal{M},g)$ and ${}^* W$ denotes the hodge dual of $W$ (see \cite{ChKl}, \cite{Ch}).

Define the $S_{u,v}$ (0,2)-tensor $\slashed{T}$ to be the restriction of the energy momentum tensor defined in equation \eqref{eq:EV} to vector fields tangent to the spheres $S_{u,v}$:
\[
	\slashed{T}(Y,Z) := Y^{A}Z^B \int_{P_x} f p_{A}p_{B}, \qquad \text{for} \qquad Y = Y^{A}e_A, Z = Z^Ae_A.
\]
Similarly let $\slashed{T}_3, \slashed{T}_4$ denote the $S_{u,v}$ 1-forms defined by restricting the 1-forms $T(e_3, \cdot), T(e_4, \cdot)$ to vector fields tangent to the spheres $S_{u,v}$:
\[
	\slashed{T}_3 (Y) = Y^{A} \int_{P_x} f p_{3}p_{A}, \qquad \slashed{T}_4 (Y) = Y^{A} \int_{P_x} f p_{4}p_{A}.
\]
Finally, let $\slashed{T}_{33}, \slashed{T}_{44}, \slashed{T}_{34}$ denote the functions
\[
	\slashed{T}_{33} = \int_{P_x} f p_{3}p_{3}, \quad \slashed{T}_{44} = \int_{P_x} f p_{4}p_{4}, \quad \slashed{T}_{34} = \int_{P_x} f p_{3}p_{4}.
\]

\subsection{The Minkowski Values}
For the purpose of renormalising the null structure and Bianchi equations, define the following \emph{Minkowski values} of the metric quantities using the function $r:=v-u + r_0$, with $r_0 >0$ a constant chosen to make sure $r\geq \inf_{u} \area(S_{u,v_0})$,
\[
  \Omega_{\circ}^2 = 1, \qquad \gslash^{\circ} = r^2 \gamma, \qquad b_{\circ} = 0,
\]
where $\gamma$ is the round metric on the unit sphere.  Similarly, define
\[
 tr\chi_{\circ} = \frac{2}{r},\qquad tr\chibar_{\circ} = -\frac{2}{r},
\]
and let ${\Gammaslash^{\circ}}_{AB}^C$ denote the spherical Christoffel symbols of the metric $\gslash^{\circ} = r^2 \gamma$, so that,
\[
	\nablaslash^{\circ}_{e_A} e_B = {\Gammaslash^{\circ}}_{AB}^C e_C,
\]
where $\nablaslash^{\circ}$ is the Levi-Civita connection of $\gslash^{\circ}$.  These are the only non-identically vanishing Ricci coefficients in Minkowski space.  All curvature components vanish, as do all components of the energy momentum tensor.

Note that the function $r$ in general does not have the geometric interpretation as the area radius of the spheres $S_{u,v}$.  Note also that
\[
	\frac{1}{C} v \leq r \leq Cv,
\]
in the region $u_0 \leq u \leq u_f, v_0 \leq v < \infty$, for some constant $C > 0$.

\subsection{The Renormalised Null Structure and Bianchi Equations} \label{subsec:nsbianchi}
The Bianchi equations,
\[
	\nabla^{\mu}W_{\mu \nu \lambda \rho} = \frac{1}{2} \left( \nabla_{\lambda} T_{\nu \rho} - \nabla_{\delta} T_{\beta \gamma} \right),
\]
written out in full using the table of Ricci coefficients,
\begin{equation} \label{eq:Riccitab1}
	\nabla_{e_A} e_B = \Gammaslash^C_{AB} e_C + \frac{1}{2} \chi_{AB} e_3 + \frac{1}{2} \chibar_{AB} e_4,
\end{equation}
\begin{equation} \label{eq:Riccitab2}
	\nabla_{e_A} e_3 = {\chibar_A}^B e_B - \etabar_A e_3, \quad \nabla_{e_A} e_4 = {\chi_A}^B e_B + \etabar_A e_4,
\end{equation}
\begin{equation} \label{eq:Riccitab3}
	\nabla_{e_3} e_A = {\chibar_A}^B e_B + \eta_A e_3, \quad \nabla_{e_4}e_A = \left[ {\chi_A}^B - e_A(b^B) \right] e_B + \etabar_A e_4,
\end{equation}
\begin{equation} \label{eq:Riccitab4}
	\nabla_{e_3}e_4 = 2\eta^Ae_A, \quad \nabla_{e_4}e_3 = -\omega e_3 + 2\etabar^B e_B,
\end{equation}
\begin{equation} \label{eq:Riccitab5}
	\nabla_{e_3} e_3 = 0, \quad \nabla_{e_4} e_4 = \omega e_4,
\end{equation}
take the form\footnote{See \cite{Ch} for a detailed derivation in the vacuum case.  Recall that $\zeta = - \etabar$ and $\omegabar = 0$ in the frame used here, and that the scalar curvature $R$ vanishes for solutions of the massless Einstein--Vlasov system.}
\begin{align}
  \begin{split}
  \nablaslash_3 \alpha + \frac{1}{2} tr\chibar \alpha = 
  & \ 
  \nablaslash \hat{\otimes} \beta - 3(\hat{\chi} \rho + {}^*\hat{\chi} \sigma) + (4\eta - \etabar) \hat{\otimes} \beta
  +  \tilde{\nablaslash} \Tslash_4 - \nablaslash_4 \Tslash 
  \\
  & \
  - \frac{1}{2} \chi \Tslash_{34} - \frac{1}{2} \chibar \Tslash_{44} - \Tslash \times \chi + \etabar \otimes \Tslash_4
  \\
  & \
  -\frac{1}{4} \left( \nablaslash_3 \Tslash_{44} - 4 \eta \cdot \Tslash_4 - \nablaslash_4 \Tslash_{34} + 2 \eta \cdot \Tslash_4 + \omega \Tslash_{34} \right) \gslash
  \\
  & \
  + \frac{1}{2} \left( \epsslash \nablaslash \Tslash_4 - \epsslash \cdot (\chi \times \Tslash) - \epsslash \cdot (\etabar \otimes \Tslash_4) \right) \epsslash,
  \end{split} \label{eq:bianchifirst}
  \\
  \begin{split}
  \nablaslash_4 \beta + 2 tr\chi \beta = 
  & \
   \divslash \alpha + \omega \beta + (\eta^{\#} - 2\etabar^{\#})\cdot \alpha
   \\
   & \
   - \frac{1}{2} \left( \nablaslash \Tslash_{44} - \nablaslash_4 \Tslash_4 - 2\chi \cdot \Tslash_4 + \omega \Tslash_4 - \Tslash_{44} \eta \right),
   \end{split}
	\\
  \begin{split}
  \nablaslash_3 \beta + tr \chibar \beta =
  & \
  \nablaslash \rho + {}^*\nablaslash \sigma + 2 \hat{\chi}^{\#} \cdot \betabar + 3(\eta \rho + {}^*\eta \sigma)
  \\
  & \
  + \frac{1}{2}\left( \nablaslash \Tslash_{34} - \nablaslash_4 \Tslash_3 - \chibar \cdot \Tslash_4 - \chi \cdot \Tslash_3 - \Tslash_{34} \etabar - \omega \Tslash_3 + 2 \etabar \cdot \Tslash \right),
  \end{split}
  \\
  \begin{split}
  \nablaslash_4 \rho + \frac{3}{2} tr \chi \rho =
  & \
  \divslash \beta - \frac{1}{2} \hat{\chibar}^{\#} \cdot \alpha + (\etabar ,\beta)
  - \frac{1}{4}\left( \nablaslash_3 \Tslash_{44} - \nablaslash_4 \Tslash_{34} - 2 \eta \cdot \Tslash_4 + \omega \Tslash_{34} \right),
  \end{split}
  \\
  \begin{split}
  \nablaslash_4 \sigma + \frac{3}{2} tr\chi \sigma = 
  & - \curlslash \beta - \etabar \wedge \beta + \frac{1}{2} \hat{\chibar} \wedge \alpha
  \\
  &
  - \frac{1}{2} \left( \epsslash \cdot \nablaslash \Tslash_4 - \epsslash \cdot (\chi \times \Tslash) - \epsslash \cdot (\etabar \otimes \Tslash_4) \right),
  \end{split}
\end{align}
\begin{align}
  \begin{split}
  \nablaslash_3 \rho + \frac{3}{2} tr \chibar \rho =
  &
  - \divslash \betabar - \frac{1}{2} \hat{\chi}^{\#} \cdot \alphabar - ( 2\eta + \etabar, \betabar)
  \\
  &
  - \frac{1}{4} \left( \nablaslash_4 \Tslash_{33} - \nablaslash_3 \Tslash_{34} + 2\omega \Tslash_{33} - 4 \etabar \cdot \Tslash_3 + 2 \eta \cdot \Tslash_3 \right),
  \end{split}
  \\
  \begin{split}
  \nablaslash_3 \sigma + \frac{3}{2} tr\chibar \sigma =
  &
  - \curlslash \betabar - (2\eta + \etabar) \wedge \betabar - \frac{1}{2} \hat{\chi} \wedge \alphabar
  \\
  &
  + \frac{1}{2} \left( \epsslash \cdot \nablaslash \Tslash_3 - \epsslash \cdot ( \chibar \times \Tslash) + \epsslash \cdot(\etabar \otimes \Tslash_3) \right), 
  \end{split}
  \\
  \begin{split}
  \nablaslash_4 \betabar + tr \chi \betabar =
  &
  - \nablaslash \rho + {}^* \nablaslash \sigma - \omega \betabar + 2 \hat{\chibar}^{\#} \cdot \beta - 3(\etabar \rho - {}^*\etabar \sigma)
  \\
  &
  - \frac{1}{2}\left( \nablaslash \Tslash_{34} - \nablaslash_3 \Tslash_4 - \chibar \cdot \Tslash_4 - \chi \cdot \Tslash_3 + \Tslash_{34} \eta + 2 \eta \cdot \Tslash \right),
  \end{split}
  \\
  \begin{split}
  \nablaslash_3 \betabar + 2 tr \chibar \betabar = 
  &
  - \divslash \alphabar - (\eta^{\#} + 2\etabar^{\#}) \cdot \alphabar
  \\
  &
  + \frac{1}{2}\left( \nablaslash \Tslash_{33} - \nablaslash_3 \Tslash_3 - 2\chibar \cdot \Tslash_3 + 2 \Tslash_{33} \etabar + \Tslash_{33} \eta \right),
  \end{split}
  \\
  \begin{split}
  \nablaslash_4 \alphabar + \frac{1}{2} tr\chi \alphabar =
  &
  - \nablaslash \hat{\otimes} \betabar - 2 \omega \alphabar - 3(\hat{\chibar} \rho - {}^*\hat{\chibar} \sigma) - 5 \etabar \hat{\otimes} \betabar
  + \tilde{\nablaslash}\Tslash_3 - \nablaslash_3 \Tslash - \Tslash \times \chibar 
  \\
  &
  + \Tslash_3 \otimes \etabar - \frac{1}{2} \Tslash_{33} \chi - \frac{1}{2} \Tslash_{34} \chibar + \eta \otimes \Tslash_3 + \Tslash_3 \otimes \eta
  \\
  &
  - \frac{1}{4} \left( \nablaslash_4 \Tslash_{33} - \nablaslash_3 \Tslash_{34} + 2\omega \Tslash_{33} - 4\etabar \cdot \Tslash_3 + 2 \eta \cdot \Tslash_3 \right) \gslash
  \\
  &
  + \frac{1}{2} \left( \epsslash \cdot \nablaslash \Tslash_3 - \epsslash \cdot (\chibar \times \Tslash) + \epsslash \cdot ( \etabar \otimes \Tslash_3) \right) \epsslash .
  \end{split} \label{eq:bianchilast}
\end{align}
Here for an $S_{u,v}$ 1-form $\xi$, $\tilde{\nablaslash} \xi$ denotes the transpose of the derivative of $\xi$,
\[
	(\tilde{\nablaslash} \xi)_{AB} := (\nablaslash_B \xi)_A.
\]
The left Hodge-dual ${}^{*}$ is defined on $S_{u,v}$ one forms and $(0,2)$ $S_{u,v}$ tensors by
\[
	{}^{*} \xi_A = \epsslash_{AB} \gslash^{BC} \xi_C \quad \text{and} \quad {}^{*} \xi_{AB} = \epsslash_{AC} \gslash^{CD} \xi_{DB},
\]
respectively.  Here $\epsslash$ denotes the volume form associated with the metric $\gslash$ and, for a $(0,2)$ $S_{u,v}$ tensor $\xi$,
\[
	\epsslash \cdot \xi = \epsslash^{AB} \xi_{AB}.
\]
The symmetric traceless product of two $S_{u,v}$ one forms is defined by
\[
	(\xi \hat{\otimes} \xi ')_{AB} = \xi_A \xi_B' + \xi_B \xi_A' - \gslash_{AB} \left( \gslash^{CD} \xi_C \xi_D' \right),
\]
and the anti-symmetric products are defined by
\[
	\xi \wedge \xi' = \epsslash^{AB} \xi_A \xi_B' \quad \text{and} \quad \xi \wedge \xi' = \epsslash^{AB} \gslash^{CD} \xi_{AC} \xi_{BD}',
\]
for two $S_{u,v}$ one forms and $S_{u,v}$ $(0,2)$ tensors respectively.  Also,
\[
	(\xi \times \xi')_{AB} = \gslash^{CD} \xi_{AC} \xi'_{BD},
\]
for $S_{u,v}$ $(0,2)$ tensors $\xi,\xi'$.  The symmetric trace free derivative of an $S_{u,v}$ 1-form is defined as
\[
	(\nablaslash \hat{\otimes} \xi)_{AB} = \nablaslash_A \xi_B + \nablaslash_B \xi_A - (\divslash \xi) \gslash_{AB}.
\]
Finally define the $\gslash$ inner product of two $(0,n)$ $S_{u,v}$ tensors
\[
	(\xi,\xi') = \gslash^{A_1B_1} \cdots \gslash^{A_nB_n} \xi_{A_1,\ldots ,A_n} \xi_{B_1,\ldots ,B_n}',
\]
and the norm of a $(0,n)$ $S_{u,v}$ tensor
\[
	\vert \xi \vert^2 = \gslash^{A_1B_1} \cdots \gslash^{A_nB_n} \xi_{A_1,\ldots ,A_n} \xi_{B_1,\ldots ,B_n}.
\]
The notation $\vert \cdot \vert$ will also later be used when applied to components of $S_{u,v}$ tensors to denote the standard absolute value on $\mathbb{R}$.  See Section \ref{section:Sobolev}.  It will always be clear from the context which is meant, for example if $\xi$ is an $S_{u,v}$ 1-form then $\vert \xi \vert$ denotes the $\gslash$ norm as above, whilst $\vert \xi_A \vert$ denotes the absolute value of $\xi(e_A)$.

The null structure equations for the Ricci coefficients and the metric quantities in the 3 direction, suitably renormalised using the Minkowski values, take the form
\begin{align}
  \nablaslash_3 \omega = & \ 2 (\eta, \etabar) - \vert \etabar \vert^2 - \rho - \frac{1}{2} \Tslash_{34},
  \label{eq:nsfirst}
  \\
  \nablaslash_3 \etabar = & -\frac{1}{2} (tr\chibar - tr\chibar_{\circ})(\etabar - \eta) + \frac{1}{r}(\etabar - \eta) - \hat{\chibar} \cdot (\etabar - \eta) + \betabar - \frac{1}{2} \Tslash_3,
  \label{eq:nsetabar}
  \\
  \begin{split}
  \nablaslash_3(tr\chibar - tr\chibar_{\circ}) = & -\frac{1}{2}(tr\chibar - tr\chibar_{\circ})^2 + \frac{2}{r} (tr\chibar - tr\chibar_{\circ})
  - \frac{2}{r^2} \left( 1 - \frac{1}{\Omega^2} \right)
  \\
  &
  - \vert \hat{\chibar} \vert^2 - \Tslash_{33}, 
  \end{split}
  \\
  \nablaslash_3 \hat{\chibar} = & - (tr\chibar - tr\chibar_{\circ})\hat{\chibar} + \frac{2}{r} \hat{\chibar} - \alphabar, 
  \\
  \nablaslash_3 b = & \ 2(\eta - \etabar) + \hat{\chibar} \cdot b + \frac{1}{2} (\tr \chibar - \tr \chibar_{\circ}) b - \frac{1}{r}b, 
  \\
  \begin{split}
  \left( \nablaslash_3 (\gslash - \gslash^{\circ})\right)_{AB} = & \left( 1 - \frac{1}{\Omega^2}\right) \tr \chibar_{\circ} \gslash_{AB}^{\circ} + \left( \tr \chibar - \tr \chibar_{\circ} \right) \gslash_{AB}^{\circ} 
  \\
  & \ + 2\hat{\chibar}_{AB} - {\hat{\chibar}_A}^{\; \; C} \left( \gslash_{BC} - \gslash_{BC}^{\circ} \right) - {\hat{\chibar}_B}^{\; \; C} \left( \gslash_{AC} - \gslash_{AC}^{\circ} \right),
  \end{split}
\end{align}
and in the 4 direction
\begin{align}
  \nablaslash_4 \eta + \frac{1}{2} \tr \chi \eta = & \ \frac{1}{2} (\tr \chi - \tr \chi_{\circ}) \etabar + \frac{1}{r} \etabar - \hat{\chi} \cdot (\eta - \etabar) - \beta - \frac{1}{2} \Tslash_4,
  \\
  \begin{split}
  \nablaslash_4(tr\chi - tr\chi_{\circ}) + \tr \chi (tr\chi - tr\chi_{\circ}) = & \ \frac{1}{2}(tr\chi - tr\chi_{\circ})^2 + \omega (tr\chi - tr\chi_{\circ}) + \frac{2}{r} \omega 
  \\
  &
  - \vert \hat{\chi} \vert^2 - \Tslash_{44},
  \end{split}
  \\
  \nablaslash_4 \hat{\chi} + \tr \chi \hat{\chi} = & \ \omega\hat{\chi} - \alpha, 
  \\
  \nablaslash_4 \left( 1 - \frac{1}{\Omega^2} \right) = & \ \omega - \omega \left( 1 - \frac{1}{\Omega^2} \right).
  \label{eq:nslast}
\end{align}
Through most of the text, when referring to the \emph{null structure equations} it is the above equations which are meant.  The following null structure equations on the spheres will also be used in Section \ref{section:Riccitop},
\begin{align}
	K 
	&
	= \frac{1}{2} \hat{\chi} \cdot \hat{\chibar} - \frac{1}{4} \tr \chi \tr \chibar - \rho + \frac{1}{2} \Tslash_{34} \label{eq:Gauss}
	\\
	 \divslash \hat{\chi}
	 &
	 =
	 \frac{1}{2} \nablaslash \left( \tr \chi - \frac{2}{r} \right) - \frac{1}{2} \left( \tr \chi - \frac{2}{r} \right) \etabar + \frac{1}{r} \etabar + \hat{\chi} \cdot \etabar - \beta + \frac{1}{2} \Tslash_4, \label{eq:Codazziout}
	 \\
	 \divslash \hat{\chibar}
	 &
	 =
	 \frac{1}{2} \nablaslash \left( \tr \chibar + \frac{2}{r} \right) + \frac{1}{2} \left( \tr \chibar + \frac{2}{r} \right) \etabar - \frac{1}{r} \etabar - \hat{\chibar} \cdot \etabar - \betabar + \frac{1}{2} \Tslash_3, \label{eq:Codazziin}
\end{align}
\[
	 \curlslash \eta = \sigma - \frac{1}{2} \hat{\chi} \wedge \hat{\chibar}, \qquad \curlslash \etabar = \frac{1}{2} \hat{\chi} \wedge \hat{\chibar} - \sigma,
\]
where $K$ denotes the Gauss curvature of the spheres $(S_{u,v},\gslash)$.

The additional propagation equations for $\hat{\chi}, \hat{\chibar}$,
\begin{align}
	 \nablaslash_4 \hat{\chibar} + \frac{1}{2} \tr \chi \hat{\chibar}
	 &
	 =
	 \nablaslash \hat{\otimes} \etabar - \omega \hat{\chibar} - \frac{1}{2} \tr \chibar \hat{\chi} + \etabar \hat{\otimes} \etabar + \hat{\Tslash},
	 \label{eq:chibarhat4}
	 \\
	 \nablaslash_3 \hat{\chi} + \frac{1}{2} \tr \chibar \hat{\chi}
	 &
	 =
	 \nablaslash \hat{\otimes} \eta - \frac{1}{2} \tr \chi \hat{\chibar} + \eta \hat{\otimes} \eta + \hat{\Tslash},
	 \label{eq:chihat3}
\end{align}
will also be used in Section \ref{section:Riccitop} to derive propagation equations for the mass aspect function $\mu, \mubar$ defined later.  Here $\hat{\Tslash} = \Tslash - \Tslash_{34} \gslash$ is the trace free part of $\Tslash$.

The following first variational formulas for the induced metric on the spheres will also be used,
\begin{align}
	\mathcal{L}_{e_3} \gslash 
	&
	= 2 \chibar, \label{eq:firstvar3}
	\\
	\mathcal{L}_{e_4} \gslash 
	&
	= 2 \chi, \label{eq:firstvar4}
\end{align}
where $\mathcal{L}$ denotes the Lie derivative.

There are additional null structure equations but, since they will not be used here, are omitted.

\section{The Schematic Form of the Equations and Commutation} \label{section:equations}
In this section schematic notation is introduced for the Ricci coefficients, curvature components and components of the energy momentum tensor, which is used to isolate the structure in the equations that is important for the proof of Theorem \ref{thm:main2}.  A collection of differential operators is introduced and it is shown that this structure remains present after commuting the equations by any of the operators in the collection.  This section closely follows Section 3 of \cite{DHR} where this notation was introduced.

\subsection{Schematic Notation} \label{subsec:schemnot}
Consider the collection of Ricci coefficients\footnote{The quantities $\frac{1}{\Omega^2} - 1$, $b$ and $\gslash - \gslash^{\circ}$ are, of course, metric quantities.  They are however, in Section \ref{section:Ricci}, estimated systematically along with the Ricci coefficients.  Any discussion of the ``Ricci coefficients'' from now on will hence implicitly refer also to these metric quantities.} which are schematically denoted $\Gamma$,
\[
  \Gamma = \frac{1}{\Omega^2} - 1, b, \gslash - \gslash^{\circ}, \tr \chi - \tr \chi_{\circ}, \tr \chibar - \tr \chibar_{\circ}, \hat{\chi}, \hat{\chibar}, \eta, \etabar, \omega.
\]
Note that the $\Gamma$ are normalised so that each of the corresponding quantities in Minkowski space is equal to zero.  In the proof of the main result it will be shown that each $\Gamma$ converges to zero as $r\to \infty$ in the spacetimes considered.  Each $\Gamma$ will converge with a different rate in $r$ and so, to describe these rates, each $\Gamma$ is given an index, $p$, to encode the fact that, as will be shown in the proof of the main result, $r^p \vert \Gamma_p \vert$ will be uniformly bounded.  The $p$-indices are given as follows,
\begin{equation*}
  \begin{array}{ll}
	 \Gamma_0 = \frac{1}{\Omega^2} - 1, 
	 & 
	 \Gamma_1 = \hat{\chibar}, \gslash - \gslash^{\circ}, b , \eta, 
	 \\
	 \Gamma_2 = \etabar, \tr\chibar - \tr \chibar_{\circ}, \tr \chi - \tr \chi_{\circ}, \hat{\chi}, 
	 & 
	 \Gamma_3 = \omega,
  \end{array}
\end{equation*}
so that $\Gamma_1$ schematically denotes any of the quantities $\hat{\chibar}, \gslash - \gslash^{\circ}, b , \eta,$ etc.  It may be the case, for a particular $\Gamma_p$, that $\lim_{r\to \infty} r^p \vert \Gamma_p \vert$ is always zero in each of the spacetimes which are constructed here.  This means that some of the Ricci coefficients will decay with faster rates than those propagated in the proof of Theorem \ref{thm:main2}.  Some of these faster rates can be recovered a posteriori.

The notation $\overset{(3)}{\Gamma}$ will be used to schematically denote any $\Gamma$ for which the corresponding null structure equation of \eqref{eq:nsfirst}--\eqref{eq:nslast} it satisfies is in the $\nablaslash_3$ direction,
\[
	\overset{(3)}{\Gamma} = \hat{\chibar}, \gslash - \gslash^{\circ}, b, \etabar, \tr\chibar - \tr \chibar_{\circ}, \omega.
\]
Similarly, $\overset{(4)}{\Gamma}$ will schematically denote any $\Gamma$ for which the corresponding null structure equation of \eqref{eq:nsfirst}--\eqref{eq:nslast} is in the $\nablaslash_4$ direction,
\[
	\overset{(4)}{\Gamma} = \left( \frac{1}{\Omega^2} - 1 \right), \eta, \tr \chi - \tr\chi_{\circ}, \hat{\chi}.
\]
Finally, $\Riccit$ will schematically denote any $\Gamma_p$ which has also been denoted $\overset{(3)}{\Gamma}$.  So, for example, $\hat{\chibar}$ may be schematically denoted $\overset{(3)}{\Gamma}_1$.  Similarly, $\Riccif$ will schematically denote any $\Gamma_p$ which has also been denoted $\overset{(4)}{\Gamma}$.

Consider now the collection of Weyl curvature components, which are schematically denoted $\psi$,
\[
  \psi = \alpha, \beta, \rho, \sigma, \betabar, \alphabar.
\]
Each $\psi$ is similarly given a $p$-index,
\begin{equation} \label{eq:curvdef}
  \psi_1 = \alphabar, \qquad \psi_2 = \betabar, \qquad \psi_3 = \rho, \sigma, \qquad \psi_{\frac{7}{2}} = \beta, \qquad \psi_4 = \alpha.
\end{equation}
to encode the fact that, as again will be shown, $r^p\vert \psi_p\vert$ is uniformly bounded in each of the spacetimes which are constructed.

When deriving energy estimates for the Bianchi equations in Section \ref{section:curvature}, a special divergence structure present in the terms involving angular derivatives is exploited.  For example, the $\nablaslash_3 \alpha$ equation is contracted with $\alpha$ (multiplied by a suitable weight) and integrated by parts over spacetime.  The $\nablaslash_4 \beta$ equation is similarly contracted with $\beta$ and integrated by parts.  When the two resulting identities are summed, a cancellation occurs in the terms involving angular derivatives leaving only a spherical divergence which vanishes due to the integration on the spheres.  The $\nablaslash_3 \alpha$ equation is thus \emph{paired} with the $\nablaslash_4 \beta$ equation.  To highlight this structure, consider the ordered pairs,
\[
	(\alpha, \beta), (\beta, (\rho, \sigma)), ((\rho, \sigma), \betabar), (\betabar, \alphabar).
\]
Each of these ordered pairs will be schematically denoted $(\uppsi_p, \uppsi_{p'}')$, with the subscripts $p$ and $p'$ as in \eqref{eq:curvdef}, and referred to as a \emph{Bianchi pair}.

The components of the energy momentum tensor are schematically denoted $\mathcal{T}$,
\[
	\mathcal{T} = \Tslash, \Tslash_3, \Tslash_4, \Tslash_{33}, \Tslash_{44}, \Tslash_{34},
\]
and each $\mathcal{T}$ is similarly given a $p$-index,
\[
	\mathcal{T}_2 =  \Tslash_{33}, 
	\quad 
	\mathcal{T}_3 = \Tslash_{3},
	\quad 
	\mathcal{T}_4 = \Tslash, \Tslash_{34},
	\quad
	\mathcal{T}_5 = \Tslash_{4},
	\quad
	\mathcal{T}_6 = \Tslash_{44},
\]
to encode the fact that $r^p \vert \mathcal{T}_p \vert$ will be shown to be uniformly bounded.

Finally, for a given $p\in \mathbb{R}$, let $\rp_p$ denote any smooth function $\rp_p : \mathcal{M} \to \mathbb{R}$, depending only on $r$, which behaves like $\frac{1}{r^p}$ to infinite order, i.\@e.\@ any function such that, for any $k\in \mathbb{N}_0$, there is a constant $C_k$ such that $r^{k+p} \vert (\partial_v)^k \rp_p \vert \leq C_k$, where the derivative is taken in the $(u,v,\theta^1,\theta^2)$ coordinate system.  In addition, the tensor field $\rp_p \gslash^{\circ}$ may also be denoted $\rp_p$.  Note that $r^{k+p} \vert (\partial_v)^k (\rp_p\gslash^{\circ}_{AB}) \vert_{\gslash^{\circ}} \leq C_k$.  For example,
\[
	\tr \chi_{\circ} = \rp_1, \qquad \tr \chibar_{\circ} \gslash^{\circ} = \rp_1.
\]

\subsection{The Schematic Form of the Equations}
Using the notation of the previous section, the null structure and Bianchi equations can be rewritten in schematic form.  For example the null structure equation \eqref{eq:nsetabar} can be rewritten,
\[
  \nablaslash_3 \overset{(3)}{\Gamma}_2 = \Gamma_2\cdot(\Gamma_2 + \Gamma_1) + \rp_1(\Gamma_2 + \Gamma_1) + \Gamma_1 \cdot (\Gamma_2 + \Gamma_1) + \psi_2 + \mathcal{T}_2.
\]
Here and in the following, $\Gamma_{p_1} \cdot \Gamma_{p_2}$ denotes (a constant multiple of) an arbitrary contraction between a $\Gamma_{p_1}$ and a $\Gamma_{p_2}$.  In the estimates later, the Cauchy--Schwarz inequality $\vert \Gamma_{p_1} \cdot \Gamma_{p_2} \vert \leq C \vert \Gamma_{p_1}\vert \vert \Gamma_{p_2} \vert$ will always be used and so the precise form of the contraction will be irrelevant.  Similarly for $\rp_{p_1} \Gamma_{p_2}$.

Rewriting the equations in this way allows one to immediately read off the rate of decay in $r$ of the right hand side.  In the above example one sees that $\nablaslash_3 \overset{(3)}{\Gamma}_2$ is equal to a combination of terms whose overall decay is, according to the $p$-index notation, like $\frac{1}{r^2}$, consistent with the fact that applying $\nablaslash_3$ to a Ricci coefficient does not alter its $r$ decay (see Section \ref{subsec:commutation}).  Each of the null structure equations can be expressed in this way.
  
\begin{proposition}[cf.\@ Proposition 3.1 of \cite{DHR}]
  The null structure equations \eqref{eq:nsfirst}--\eqref{eq:nslast} can be written in the following schematic form
  \begin{IEEEeqnarray}{rccl}
	 \nablaslash_3 \Riccit & & & = E_3[\Riccit] \\
	 \nablaslash_4 \Riccif &\ + \ & \frac{p}{2} \tr \chi \ \Riccif \ & = E_4[\Riccif] \label{eq:firstRicci}
  \end{IEEEeqnarray}
  where\footnote{Stricly speaking, the terms listed in the error can be ``worse'' than the terms which actually appear in the equation in question.  For example, in $E_3[\Riccit]$, $\psi_p$ may actually refer to some $\psi_q$ where $q>p$.  In fact, in some of the null structure equations no curvature term actually appears.}
  \begin{align*}
	 E_3[\Riccit] = & \ \sum_{p_1+p_2 \geq p} \rp_{p_1} \cdot \Gamma_{p_2} + \sum_{p_1+p_2 \geq p} \Gamma_{p_1} \cdot \Gamma_{p_2} + \psi_p + \mathcal{T}_p \\
	 E_4[\Riccif] = & \ \sum_{p_1+p_2 \geq p + 2} \rp_{p_1} \cdot \Gamma_{p_2} + \sum_{p_1+p_2 \geq p + 2} \Gamma_{p_1} \cdot \Gamma_{p_2} + \psi_{p+2} + \mathcal{T}_{p + 2}.
  \end{align*}
\end{proposition}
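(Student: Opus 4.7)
The plan is a direct verification, equation by equation, of the list \eqref{eq:nsfirst}--\eqref{eq:nslast}. For each Ricci coefficient one reads off its $p$-index and its direction ($\nablaslash_3$ vs.\@ $\nablaslash_4$) from the definitions in Section \ref{subsec:schemnot}, then checks (i) that the linear principal term on the left matches the claimed template and (ii) that every remaining term on the right fits into one of the four categories in $E_3[\Riccit]$ or $E_4[\Riccif]$ with the correct weight.

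Point (i) is immediate once one notices that the $p$-indices were defined in Section \ref{subsec:schemnot} precisely so that the coefficient of $\tr\chi\,\Riccif$ in the original 4-direction equations equals $\frac{p}{2}$: the coefficient is $\frac{1}{2}$ for $\eta$ ($p=1$), equals $1$ for $\hat\chi$ and $\tr\chi - \tr\chi_\circ$ ($p=2$), and is absent for $1/\Omega^2 - 1$ ($p=0$). In the 3-direction equations no linear term is extracted on the left, so all terms on the right of the original equations feed directly into $E_3[\Riccit]$ without rearrangement.

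Point (ii) is systematic bookkeeping. Each term on a right-hand side is decomposed by writing $\tr\chi = \rp_1 + \Gamma_2$, $\tr\chibar = -\rp_1 + \Gamma_2$, and by recognising any bare $\frac{1}{r}$ factor as $\rp_1$ (and similarly $\frac{1}{r^2}$ as $\rp_2$). Each resulting product then takes exactly one of the shapes $\rp_{p_1}\cdot\Gamma_{p_2}$, $\Gamma_{p_1}\cdot\Gamma_{p_2}$, a single $\psi_q$, or a single $\mathcal{T}_q$, and one simply adds up $p$-indices to verify the weight inequality $p_1 + p_2 \geq p$ (for $E_3$) or $p_1 + p_2 \geq p+2$ (for $E_4$), together with $q \geq p$ or $q \geq p+2$ for any curvature or matter factor. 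The flexibility noted in the footnote accompanying the proposition---that a listed $\psi_p, \mathcal{T}_p$ may actually be some $\psi_q, \mathcal{T}_q$ with $q > p$---is used for instance when $\beta = \psi_{7/2}$ appears in the $\nablaslash_4\eta$ equation where only $\psi_3$ is required.

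The only case that requires any care is the 4-direction error, where after extracting $\frac{p}{2}\tr\chi\,\Riccif$ one must obtain a gain of two full powers of $r$. For instance in the $\nablaslash_4(\tr\chi - \tr\chi_\circ)$ equation ($p=2$) the right-hand side decomposes as $\Gamma_2\cdot\Gamma_2 + \Gamma_3\cdot\Gamma_2 + \rp_1\cdot\Gamma_3 + \Gamma_2\cdot\Gamma_2 + \mathcal{T}_6$, each of weight at least $4 = p+2$, confirming the claim; the $\nablaslash_3(\tr\chibar - \tr\chibar_\circ)$ equation contributes an $\rp_2\cdot\Gamma_0$ term from the $\frac{1}{r^2}(1 - 1/\Omega^2)$ summand, with weight $2 = p$, and so on. The remaining equations in \eqref{eq:nsfirst}--\eqref{eq:nslast} are checked in the same manner; no step involves more than adding a pair of $p$-indices, and the main obstacle is purely notational---ensuring the assignment of $p$-indices in Section \ref{subsec:schemnot} is the unique choice making both (i) and (ii) simultaneously consistent.
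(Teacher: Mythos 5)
Your proposal is correct and follows exactly the route the paper takes: the paper offers no formal proof beyond the worked example of rewriting the $\nablaslash_3\etabar$ equation schematically, the verification being precisely the term-by-term reading of $p$-indices and weights that you describe (including the observations that the $\frac{p}{2}\tr\chi$ coefficients are built into the index assignment, that $\beta=\psi_{7/2}$ overshoots the required $\psi_3$ in the $\nablaslash_4\eta$ equation, and that $\frac{1}{r^2}(1-\frac{1}{\Omega^2})=\rp_2\cdot\Gamma_0$ saturates the weight in the $\nablaslash_3(\tr\chibar-\tr\chibar_\circ)$ equation).
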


This proposition allows us to see that the right hand sides of the $\nablaslash_3 \Riccit$ equations behave like $\frac{1}{r^p}$, whilst the right hand sides of the $\nablaslash_4 \Riccif$ equations behave like $\frac{1}{r^{p+2}}$.  This structure will be heavily exploited and should be seen as a manifestation of the null condition present in the Einstein equations.

\begin{remark} \label{rem:renorm}
  The term $\frac{p}{2} \tr \chi \ \Riccif$ on the left hand side of equation \eqref{eq:firstRicci} is not contained in the error since $\tr \chi$ behaves like $\frac{1}{r}$ and so this term only behaves like $\frac{1}{r^{p+1}}$.  This would thus destroy the structure of the error.  It is not a problem that this term appears however since, when doing the estimates, the following renormalised form of the equation will always be used:
  \[
	 \nablaslash_4 \left( r^{p} \Riccif \right) = r^{p} E_4[\Riccif ].
  \]
  This can be derived by differentiating the left hand side using the product rule, substituting equation \eqref{eq:firstRicci} and using the fact that $(\tr \chi_{\circ} - \tr \chi) \Riccif$ can be absorbed into the error.
  
  It is not a coincidence that the coefficient of this term is always $\frac{p}{2}$, it is the value of this coefficient which decides the rate of decay to be propagated for each $\Riccif$.  This will be elaborated on further in Section \ref{section:Ricci}.  This was not the case in \cite{DHR}; they have more freedom since they are integrating backwards from null infinity, rather than towards null infinity, and so can propagate stronger decay rates for some of the $\Riccif$.  These stronger rates could be recovered here using the ideas in Section \ref{section:Riccitop}, however it is perhaps interesting to note that the estimates can be closed with these weaker rates.
\end{remark}

The Bianchi equations can also be rewritten in this way.
\begin{proposition}[cf.\@ Proposition 3.3 of \cite{DHR}] \label{prop:Bianchi}
  For each Bianchi pair $(\uppsi_p, \uppsi_{p'}')$, the Bianchi equations \eqref{eq:bianchifirst}--\eqref{eq:bianchilast} can be written in the following schematic form
  \begin{equation*}
  \begin{array}{lccl}
	 \nablaslash_3 \uppsi_p & & & = \Dslash \uppsi_{p'}' + E_3[\uppsi_p],\\
	 \nablaslash_4 \uppsi_{p'}' & + & \gamma[\uppsi_{p'}'] \ tr \chi \ \uppsi_{p'}' & = \Dslash \uppsi_p +  E_4[\uppsi_{p'}'],
  \end{array}
  \end{equation*}
  where $\Dslash$ denotes the angular operator appearing in equation for the particular curvature component under consideration\footnote{So, for example, $\Dslash \uppsi_{p'}' = \nablaslash \hat{\otimes} \beta$ in the $\nablaslash_3{\alpha}$ equation, and $\Dslash \uppsi_{p} = - \nablaslash \rho + {}^* \nablaslash \sigma $ in the $\nablaslash_4 \betabar$ equation, etc.} and $\gamma[\uppsi_{p'}'] = \frac{p'}{2}$ for $\uppsi_{p'}' \neq \beta$, $\gamma[\beta] = 2$.  The error terms take the form
  \begin{align*}
	 E_3[\uppsi_p] = & \ \rp_1 \uppsi_p + \sum_{p_1+p_2 \geq p} \Gamma_{p_1} \cdot \psi_{p_2} + \sum_{p_1 + p_2 \geq p} \rp_{p_1} \mathfrak{D} \mathcal{T}_{p_2} + \sum_{p_1 + p_2 \geq p} \Gamma_{p_1} \cdot \mathcal{T}_{p_2}, \\
	 E_4[\uppsi_{p'}'] = & \ \sum_{p_1+p_2 \geq p' + \frac{3}{2}} \Gamma_{p_1} \cdot \psi_{p_2} + \sum_{p_1 + p_2 \geq p + 2} \rp_{p_1} \mathfrak{D} \mathcal{T}_{p_2} + \sum_{p_1 + p_2 \geq p + 2} \Gamma_{p_1} \cdot \mathcal{T}_{p_2},
  \end{align*}
  where $\mathfrak{D}$ is used to denote certain derivative operators which are introduced in Section \ref{subsec:commutation}.
\end{proposition}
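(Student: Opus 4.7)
The plan is to verify this schematically by direct, equation-by-equation inspection of the ten Bianchi equations \eqref{eq:bianchifirst}--\eqref{eq:bianchilast}. For each Bianchi pair $(\uppsi_p, \uppsi_{p'}') \in \{(\alpha,\beta),\,(\beta,(\rho,\sigma)),\,((\rho,\sigma),\betabar),\,(\betabar,\alphabar)\}$, the principal angular derivative term $\Dslash \uppsi_{p'}'$ (respectively $\Dslash \uppsi_p$) is read off immediately from the corresponding $\nablaslash$, $\divslash$ or $\curlslash$ operator on the right hand side of the Bianchi equation in question, so the real content of the proposition is that every remaining term fits the schematic error format with the correct $p$-index sums.

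The key renormalisation step is the isolation of the linear $\tr\chi\,\uppsi_{p'}'$ term on the left of each $\nablaslash_4$ equation. Such contributions arise from two sources: the explicit $\tr\chi$ coefficient already present on the left of the raw equations (for instance $2\tr\chi\,\beta$ in the $\nablaslash_4\beta$ equation, $\frac{3}{2}\tr\chi\,\rho$ in the $\nablaslash_4\rho$ equation, $\frac{1}{2}\tr\chi\,\alphabar$ in the $\nablaslash_4\alphabar$ equation) and the $\tr\chi$ part of bilinear expressions of the form $\chi\cdot\uppsi_{p'}'$ obtained by splitting $\chi = \hat{\chi} + \frac{1}{2}\tr\chi\,\gslash$. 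Summing these should produce the coefficient $\frac{p'}{2}$ for each $\uppsi_{p'}' \neq \beta$ and the exceptional value $2$ for $\beta$; writing $\tr\chi = \tr\chi_\circ + (\tr\chi - \tr\chi_\circ) = \rp_1 + \Gamma_2$ then leaves a residual of the form $\Gamma_2 \cdot \psi_{p'}$, which satisfies $2 + p' \geq p' + \frac{3}{2}$ and is thus absorbed into $E_4[\uppsi_{p'}']$.

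The remaining work is the combinatorial check that each error term fits one of the four listed types with the correct $p$-index sum. Each summand on the right hand side is classified as a quadratic interaction $\Gamma_{p_1}\cdot\psi_{p_2}$, a quadratic interaction $\Gamma_{p_1}\cdot\mathcal{T}_{p_2}$, a weighted derivative $\rp_{p_1}\mathfrak{D}\mathcal{T}_{p_2}$ of an energy-momentum component, or a linear term $\rp_1 \uppsi_p$ that appears only in $E_3$ and arises from the renormalisation of $\tr\chibar$ against its Minkowski value. One then consults the $p$-indices assigned in Section \ref{subsec:schemnot} and tabulates each occurrence: for example $\hat{\chibar}\cdot\beta$ in the $\nablaslash_3\alpha$ equation contributes to $E_3[\alpha]$ with sum $1+\frac{7}{2} \geq 4 = p$; the derivative $\nablaslash_4\Tslash_4$ in the $\nablaslash_4\beta$ equation contributes to $E_4[\beta]$ with sum $0+5 \geq \frac{7}{2}+\frac{3}{2}$; and similarly for each of the finitely many summands across the ten equations.

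The main obstacle is not any individual calculation but the careful bookkeeping across the full list \eqref{eq:bianchifirst}--\eqref{eq:bianchilast}, together with the subtlety that listed energy-momentum derivative terms such as $\nablaslash_3\Tslash_{44}$, $\nablaslash_4\Tslash_{34}$, $\epsslash\cdot\nablaslash\Tslash_3$ etc.\@ must be expressible through the derivative collection $\mathfrak{D}$ introduced in Section \ref{subsec:commutation}; this imposes a constraint on the eventual definition of $\mathfrak{D}$ (in particular it must contain $\nablaslash_3$, $\nablaslash_4$ and angular derivative operators). A second mild subtlety is the exceptional coefficient $\gamma[\beta] = 2$ rather than the naive $\frac{p'}{2} = \frac{7}{4}$, which must be treated as an acknowledged deviation from the general pattern and which reflects that a stronger decay rate for $\beta$ than its $p$-index alone dictates is propagated by the $\nablaslash_4$ equation.
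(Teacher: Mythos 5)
Your overall strategy is the right one and is essentially what the paper does: Proposition \ref{prop:Bianchi} is not given a separate proof but is verified by direct inspection of the displayed equations \eqref{eq:bianchifirst}--\eqref{eq:bianchilast}, classifying each term against the $p$-indices of Section \ref{subsec:schemnot}. Two of your points can be simplified: the coefficients $\gamma[\uppsi_{p'}']\,\tr\chi$ already appear verbatim on the left-hand sides of the equations as written (e.g.\@ $2\tr\chi\,\beta$, $\tfrac32\tr\chi\,\rho$, $\tr\chi\,\betabar$, $\tfrac12\tr\chi\,\alphabar$), so no extraction from bilinear $\chi\cdot\uppsi'$ terms is needed; and the proposition keeps the full $\tr\chi$ (not $\tr\chi_\circ$) on the left, so the split $\tr\chi = \rp_1 + \Gamma_2$ is unnecessary for matching the stated form, though harmless.

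There is, however, a genuine slip in your worked example for the energy--momentum terms, and since the entire content of the proposition is this bookkeeping it matters. You classify $\nablaslash_4\Tslash_4$ in the $\nablaslash_4\beta$ equation as having ``sum $0+5$'' and compare it against the threshold $p'+\tfrac32$. Both steps are off. First, the operators $\mathfrak{D}$ are $\{\nablaslash_3,\, r\nablaslash_4,\, r\nablaslash\}$, so an unweighted derivative such as $\nablaslash_4\Tslash_4$ must be written as $\tfrac1r\,(r\nablaslash_4)\Tslash_4 = \rp_1\,\mathfrak{D}\mathcal{T}_5$, giving sum $1+5=6$, not $5$; the $r$-weights carried by $\mathfrak{D}$ are precisely what makes these terms fit, and forgetting them would make several terms (e.g.\@ $\nablaslash\Tslash_{44}$, $\tilde{\nablaslash}\Tslash_3$) appear to violate the claimed thresholds. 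Second, the threshold $p'+\tfrac32$ applies only to the $\Gamma_{p_1}\cdot\psi_{p_2}$ sum; the $\rp_{p_1}\mathfrak{D}\mathcal{T}_{p_2}$ and $\Gamma_{p_1}\cdot\mathcal{T}_{p_2}$ sums carry the stronger threshold ``$+2$'' (which, as a check of the $\nablaslash_4\rho$ equation against the term $\nablaslash_4\Tslash_{34}=\rp_1\mathfrak{D}\mathcal{T}_4$ shows, must be read as the index of the component being differentiated plus $2$, i.e.\@ $p'+2$ in the $\nablaslash_4$ equations). With the corrected accounting, $\nablaslash_4\Tslash_4$ contributes $1+5=6\geq \tfrac72+2$ and the term fits; but as you have written it, the check would fail, so you should redo the classification of all the $\mathcal{T}$-derivative terms with the $r$-weights of $\mathfrak{D}$ and the correct thresholds in place.
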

When applied to $\mathcal{T}_p$, the operators $\mathfrak{D}$ should not alter the rate of decay so again this schematic form allows one to easily read off the $r$ decay rates of the errors.  This structure of the errors will again be heavily exploited.  The first summation in $E_4[\uppsi_{p'}']$ can in fact actually always begin at $p'+2$ except for in $E_4[\beta]$ where the term $\eta^{\#} \cdot \alpha$ appears.  Also the terms,
\[
	 \sum_{p_1 + p_2 \geq p} \rp_{p_1} \mathfrak{D} \mathcal{T}_{p_2} + \sum_{p_1 + p_2 \geq p} \Gamma_{p_1} \cdot \mathcal{T}_{p_2},
\]
in $E_3[\uppsi_{p}]$ can be upgraded to,
\[
	 \sum_{p_1 + p_2 \geq p + \frac{1}{2}} \rp_{p_1} \mathfrak{D} \mathcal{T}_{p_2} + \sum_{p_1 + p_2 \geq p + \frac{1}{2}} \Gamma_{p_1} \cdot \mathcal{T}_{p_2},
\]
in $E_3[\alpha]$ and $E_3[\beta]$.  These points are important and will be returned to in Section \ref{section:curvature}.

\subsection{The Commuted Equations} \label{subsec:commutation}
As discussed in the introduction, the Ricci coefficients and curvature components will be estimated in $L^2$ using the null structure and Bianchi equations respectively\footnote{The former in $L^2$ on the spheres, the latter in $L^2$ on null hypersurfaces.}.  In order to deal with the nonlinearities some of the error terms are estimated in $L^{\infty}$ on the spheres.  These $L^{\infty}$ bounds are obtained from $L^2$ estimates for higher order derivatives via Sobolev inequalities.  These higher order $L^2$ estimates are obtained through commuting the null structure and Bianchi equations with suitable differential operators, showing that the structure of the equations are preserved, and then proceeding as for the zero-th order case.  It is shown in this section that the structure of the equations are preserved under commutation.

It is also necessary to obtain higher order estimates for components of the energy momentum tensor in order to close the estimates for the Bianchi and Null structure equations.  Rather than commuting the Vlasov equation, which leads to certain difficulties, these estimates are obtained by estimating components of certain Jacobi fields on the mass shell.  See Section \ref{section:emtensor}.

Define the set of differential operators $\{ \nablaslash_3, r\nablaslash_4, r\nablaslash\}$ acting on covariant $S_{u,v}$ tensors of any order\footnote{Note that $\nablaslash_3$ and $r\nablaslash_4$ preserve the rank of a tensor, whilst $r\nablaslash$ takes tensors of rank $(0,n)$ to tensors of rank $(0,n+1)$.}, and let $\mathfrak{D}$ denote an arbitrary element of this set.  These operators are introduced because of the \emph{Commutation Principle} of \cite{DHR}:
\begin{quotation}
  \emph{Commutation Principle: Applying any of the operators $\mathfrak{D}$ to any of the $\Gamma, \psi, \mathcal{T}$ should not alter its rate of decay.}
\end{quotation}
This will be shown to hold in $L^2$, though until then it serves as a useful guide to interpret the structure of the commuted equations.

If $\xi$ is an $S_{u,v}$ tensor field, $\mathfrak{D}^k \xi$ will be schematically used to denote any fixed $k$-tuple $\mathfrak{D}_k\mathfrak{D}_{k-1} \ldots \mathfrak{D}_1 \xi$ of operators applied to $\xi$, where each $\mathfrak{D}_i \in \{ \nablaslash_3, r\nablaslash_4, r\nablaslash\}$.

In order to derive expressions for the commuted Bianchi equations in this schematic notation, the following commutation lemma will be used.  Recall first the following lemma which relates projected covariant derivatives of a covariant $S_{u,v}$ tensor to derivatives of its components.
\begin{lemma} \label{lemma:componentderivatives}
	Let $\xi$ be a $(0,k)$ $S_{u,v}$ tensor.  Then,
	\begin{align*}
		(\nablaslash_3 \xi)_{A_1\ldots A_k} 
		=
		&
		e_3(\xi_{A_1\ldots A_k}) - \sum_{i=1}^k {\chibar_{A_i}}^B \xi_{A_1 \ldots A_{i-1} B A_{i+1} \ldots A_k},
		\\
		(\nablaslash_4 \xi)_{A_1\ldots A_k} 
		=
		&
		e_4(\xi_{A_1\ldots A_k}) - \sum_{i=1}^k \left( {\chi_{A_i}}^B - \nablaslash_A b^B + b^C \Gammaslash_{AC}^B \right) \xi_{A_1 \ldots A_{i-1} B A_{i+1} \ldots A_k},
	\end{align*}
	and
	\begin{align*}
		(\nablaslash_B \xi)_{A_1 \ldots A_k}
		=
		e_B(\xi_{A_1\ldots A_k}) - \sum_{i=1}^k \Gammaslash_{B A_i}^C \xi_{A_1 \ldots A_{i-1} C A_{i+1} \ldots A_k}.
	\end{align*}
\end{lemma}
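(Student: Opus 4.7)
The plan is a direct computation from the definitions: for an $S_{u,v}$ tensor $\xi$, the projected derivative $\nablaslash_X \xi$ in an ambient direction $X$ is, by definition, the restriction to tangential arguments of the spacetime covariant derivative $\nabla_X \xi$. Applying the Leibniz rule one has
\[
  (\nabla_X \xi)(e_{A_1},\ldots,e_{A_k}) = X(\xi_{A_1\ldots A_k}) - \sum_{i=1}^k \xi(e_{A_1},\ldots, \nabla_X e_{A_i}, \ldots, e_{A_k}).
\]
The key observation is that, since $\xi$ is an $S_{u,v}$ tensor, any part of $\nabla_X e_{A_i}$ lying in the span of $\{e_3, e_4\}$ contributes nothing when fed into $\xi$; only the tangential $e_B$ components survive. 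To derive each of the three formulas I therefore substitute the relevant line of the Ricci coefficient table \eqref{eq:Riccitab1}--\eqref{eq:Riccitab5} and discard the $e_3, e_4$ parts.

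For $X = e_3$, I use $\nabla_{e_3} e_A = {\chibar_A}^B e_B + \eta_A e_3$ from \eqref{eq:Riccitab3}; the $\eta_A e_3$ piece is annihilated by $\xi$, leaving the asserted expression. For $X = e_B$, the expansion $\nabla_{e_B} e_A = \Gammaslash_{BA}^C e_C + \tfrac{1}{2}\chi_{BA} e_3 + \tfrac{1}{2}\chibar_{BA} e_4$ from \eqref{eq:Riccitab1} loses its $e_3, e_4$ parts for the same reason, so only the $\Gammaslash_{BA}^C e_C$ term remains and the third formula follows immediately.

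The $\nablaslash_4$ case needs one small rewrite. From \eqref{eq:Riccitab3} one has $\nabla_{e_4} e_A = \bigl[{\chi_A}^B - e_A(b^B)\bigr] e_B + \etabar_A e_4$. After dropping the $e_4$ component, I convert the coordinate derivative of $b^B$ into a covariant derivative on the sphere using
\[
  e_A(b^B) = \partial_{\theta^A}(b^B) = \nablaslash_A b^B - \Gammaslash_{AC}^B b^C,
\]
which is just the definition of the Levi-Civita connection of $\gslash$ applied to the components of the $S_{u,v}$ vector field $b$. Substituting this back reproduces exactly the bracketed expression $\bigl({\chi_{A_i}}^B - \nablaslash_{A_i} b^B + b^C \Gammaslash_{A_iC}^B\bigr)$ appearing in the statement, and summing over $i$ yields the second formula.

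There is really no hard step here: the lemma is an unpacking of the definition of $\nablaslash$ combined with the fact that $\xi$ kills $e_3$ and $e_4$ inputs, and the only bookkeeping worth flagging is the $b$-correction in the $\nablaslash_4$ formula, which comes entirely from trading a $\theta^A$-coordinate derivative of $b^B$ for its angular covariant derivative.
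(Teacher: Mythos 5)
Your computation is correct and is the standard argument: expand $\nabla_X\xi$ by the Leibniz rule, substitute the relevant lines of the Ricci coefficient table \eqref{eq:Riccitab1}--\eqref{eq:Riccitab3}, discard the $e_3,e_4$ components using that $\xi$ is an $S_{u,v}$ tensor, and trade $e_A(b^B)$ for $\nablaslash_A b^B - \Gammaslash_{AC}^B b^C$ in the $\nablaslash_4$ case. The paper states this lemma without proof, and your argument is exactly the one it implicitly relies on.
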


The commutation lemma then takes the following form.

\begin{lemma}[cf.\@ Lemma 7.3.3 of \cite{ChKl} or Lemma 3.1 of \cite{DHR}] \label{lemma:commutation}
	If $\xi$ is a $(0,k)$ $S_{u,v}$ tensor then,
	\begin{multline*}
		[\nablaslash_4, \nablaslash_B] \xi_{A_1\ldots A_k}
		=
		- {\chi_B}^C \nablaslash_C \xi_{A_1\ldots A_k}
		\\
		+ \sum_{i=1}^k  \left( \chi_{A_iB} \etabar^C - \etabar_{A_i} {\chi_B}^C + {}^* \beta_B {\epsslash_{A_i}}^C + \frac{1}{2} {\gslash_B}^C \Tslash_{4A_i} - \frac{1}{2} \gslash_{A_i B} {\Tslash_4}^C \right) \xi_{A_1\ldots A_{i-1} C A_{i+1} \ldots A_k},
	\end{multline*}
	\begin{multline*}
		[\nablaslash_3, \nablaslash_B] \xi_{A_1\ldots A_k}
		=
		\left( \eta_B + \etabar_B \right) \nablaslash_3 \xi_{A_1\ldots A_k}  - {\chibar_B}^C \nablaslash_C \xi_{A_1\ldots A_k}
		\\
		+ \sum_{i=1}^k  \left( \chibar_{A_iB} \eta^C - \eta_{A_i} {\chibar_B}^C - {}^* \betabar_B {\epsslash_{A_i}}^C + \frac{1}{2} {\gslash_B}^C \Tslash_{3A_i} - \frac{1}{2} \gslash_{A_i B} {\Tslash_3}^C \right) \xi_{A_1\ldots A_{i-1} C A_{i+1} \ldots A_k},
	\end{multline*}
	\begin{multline*}
		[\nablaslash_3, \nablaslash_4] \xi_{A_1\ldots A_k}
		=
		2\left( \eta^C - \etabar^C \right) \nablaslash_C \xi_{A_1\ldots A_k} + \omega \nablaslash_3 \xi_{A_1 \ldots A_k}
		\\
		+ 2 \sum_{i=1}^k  \left( \etabar_{A_i} \eta^C - \eta_{A_i}\etabar^C - \sigma {\epsslash_{A_i}}^C \right) \xi_{A_1\ldots A_{i-1} C A_{i+1} \ldots A_k},
	\end{multline*}
	and
	\begin{equation*}
		[\nablaslash_B, \nablaslash_C] \xi_{A_1\ldots A_k}
		=
		K \sum_{i=1}^k  \gslash_{BA_i} \xi_{A_1\ldots A_{i-1} C A_{i+1} \ldots A_k} - \gslash_{CA_i} \xi_{A_1\ldots A_{i-1} B A_{i+1} \ldots A_k},
	\end{equation*}
	where $K$ is the Gauss curvature of $(S_{u,v},\gslash)$.
\end{lemma}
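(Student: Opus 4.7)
The plan is to derive all four identities from the universal commutation relation
\begin{equation*}
(\nabla_{e_{\mu}} \nabla_{e_{\nu}} \xi - \nabla_{e_{\nu}} \nabla_{e_{\mu}} \xi - \nabla_{[e_{\mu}, e_{\nu}]} \xi)_{A_1 \ldots A_k} = -\sum_{i=1}^k R_{A_i}{}^C{}_{\mu \nu}\, \xi_{A_1 \ldots C \ldots A_k}
\end{equation*}
on $(\mathcal{M}, g)$ applied to $\xi$ viewed as an ambient $(0,k)$ tensor, and then to project the resulting identity down to $S_{u,v}$. The difference between $\nabla$ and $\nablaslash$ is encoded in the Ricci coefficient table \eqref{eq:Riccitab1}--\eqref{eq:Riccitab5} and made explicit at the component level by Lemma \ref{lemma:componentderivatives}. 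The ambient Riemann components that appear on the right-hand side are then reduced to null-decomposed Weyl components plus trace corrections via \eqref{eq:Weyl}, using the Einstein equation $Ric_{\mu\nu} = T_{\mu\nu}$; this is the source of the $\Tslash$, $\Tslash_3$, $\Tslash_4$ terms in the statement.

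I would dispose of $[\nablaslash_B, \nablaslash_C]$ first. Since $(S_{u,v}, \gslash)$ is a $2$-dimensional Riemannian manifold, its intrinsic Riemann tensor is entirely determined by the Gauss curvature via the standard identity $R^{\gslash}_{BCDE} = K(\gslash_{BD}\gslash_{CE} - \gslash_{BE}\gslash_{CD})$, and the commutator formula for a $(0,k)$ tensor under this curvature immediately yields the stated identity. No appeal to the Gauss equation relating intrinsic to ambient curvature is required at this step.

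For the three mixed commutators I would treat $[\nablaslash_4, \nablaslash_B]$ as a representative example and proceed by brute expansion using Lemma \ref{lemma:componentderivatives}. Subtracting the two orderings produces three types of contributions: (i) a frame-commutator piece $[e_4, e_B]\xi_{A_1 \ldots A_k}$, which by \eqref{eq:Riccitab2}--\eqref{eq:Riccitab3} equals a combination of $-\chi_B{}^C e_C \xi$ with $\etabar$ and $b$ pieces acting on components; (ii) cross terms in which one frame vector differentiates the Christoffel-type correction of the other (e.g.\ $e_B$ acting on $\chi_{A_i}{}^C$ inside $\nablaslash_4 \xi$, and $e_4$ acting on $\Gammaslash_{B A_i}^C$ inside $\nablaslash_B \xi$); and (iii) purely quadratic Ricci-coefficient products. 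After the cancellations forced by the Ricci coefficient table, the residual of (ii) collects into exactly the ambient Riemann component $R_{A_i C B 4}$ summed over $i$ and $C$. Applying \eqref{eq:Weyl} together with $g_{A 4} = 0$ for tangential $A$ gives
\begin{equation*}
R_{A_i C B 4} = W_{A_i C B 4} + \tfrac{1}{2}\left(\gslash_{A_i B}\, \Tslash_{4 C} - \gslash_{C B}\, \Tslash_{4 A_i}\right),
\end{equation*}
which is the origin of the $\frac{1}{2}\bigl(\gslash_B{}^C \Tslash_{4 A_i} - \gslash_{A_i B} \Tslash_4{}^C\bigr)$ term; the trace-free property of $W$ then identifies the three-tangential-index component $W_{A_i C B 4}$ with an $\epsslash \cdot {}^*\beta$ contraction (up to symmetries absorbed into the quadratic products), yielding the ${}^*\beta_B \epsslash_{A_i}{}^C$ term. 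The analogous computations for $[\nablaslash_3, \nablaslash_B]$ and $[\nablaslash_3, \nablaslash_4]$ proceed identically; the asymmetry $\omegabar = 0 \neq \omega$ built into the frame accounts for the $\omega \nablaslash_3 \xi$ term in the latter, and the relevant residual Weyl components are now $\betabar$ and $\sigma$ respectively.

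The main obstacle is purely combinatorial bookkeeping: one must carefully track which of the connection-correction terms coming from $\nablaslash_4$ (namely the $\chi_{A_i}{}^B - \nablaslash_{A_i} b^B + b^C \Gammaslash_{A_i C}^B$ pieces of Lemma \ref{lemma:componentderivatives}) are cancelled against which corresponding terms in $\nablaslash_B$, and which instead recombine with angular derivatives of Ricci coefficients to form a genuine ambient Riemann component rather than a Ricci-coefficient quadratic. Once this separation has been made, the identification with the appropriate null Weyl component and the reduction of $Ric$ to $T$ via the Einstein equations are both immediate, and the coefficients in the stated identities emerge with the signs indicated.
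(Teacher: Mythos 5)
Your proposal is correct and follows essentially the same route as the paper's proof: both reduce the projected commutators to the ambient commutator $[\nabla_\mu,\nabla_\nu]$ acting through the Riemann tensor, convert the resulting curvature components to null Weyl components plus energy--momentum terms via \eqref{eq:Weyl} and $Ric_{\mu\nu}=T_{\mu\nu}$, and treat $[\nablaslash_B,\nablaslash_C]$ purely intrinsically through the Gauss curvature. The only difference is one of packaging: the paper organises the bookkeeping through intermediate identities expressing $\nablaslash_\mu\nablaslash_\nu\xi$ in terms of $\nabla_\mu\nabla_\nu\xi$ plus Ricci-coefficient corrections, rather than the fully component-level expansion via Lemma \ref{lemma:componentderivatives} that you propose.
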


\begin{proof}
	The proof of the first identity follows by writing
	\begin{align*}
		\nablaslash_4 \nablaslash_B \xi_{A_1\ldots A_k}
		=
		&
		\nabla_4\nabla_B \xi_{A_1\ldots A_k} + \etabar_B \nablaslash_4 \xi_{A_1\ldots A_k} - \sum_{i=1}^k \etabar_{A_i} {\chi_B}^C \xi_{A_1\ldots A_{i-1} C A_{i+1} \ldots A_k},
		\\
		\nablaslash_B \nablaslash_4 \xi_{A_1\ldots A_k}
		=
		&
		\nabla_B\nabla_4 \xi_{A_1\ldots A_k} + \etabar_B \nablaslash_4 \xi_{A_1\ldots A_k} 
		\\
		&
		+ {\chi_B}^C \nablaslash_C \xi_{A_1\ldots A_k}
		- \sum_{i=1}^k \etabar_{A_i} {\chi_B}^C \xi_{A_1\ldots A_{i-1} C A_{i+1} \ldots A_k},
	\end{align*}
	and using
	\begin{align*}
		[\nabla_4,\nabla_B]
		=
		&
		\sum_{i=1}^k {R_{4BA_i}}^C \xi_{A_1\ldots A_{i-1} C A_{i+1} \ldots A_k}
		\\
		=
		&
		\sum_{i=1}^k  \left( {}^* \beta_B {\epsslash_{A_i}}^C + \frac{1}{2} {\gslash_B}^C \Tslash_{4A_i} - \frac{1}{2} \gslash_{A_i B} {\Tslash_4}^C \right) \xi_{A_1\ldots A_{i-1} C A_{i+1} \ldots A_k},
	\end{align*}
	where the last line follows by using equation \eqref{eq:Weyl} to write,
	\[
		{R_{4BA_i}}^C = {W_{4BA_i}}^C + \frac{1}{2} \left( {\gslash_B}^C \Tslash_{4A_i} - \gslash_{A_i B}{\Tslash_4}^C \right).
	\]
	Similarly for the second one uses
	\begin{align*}
		\nablaslash_3 \nablaslash_B \xi_{A_1\ldots A_k}
		=
		&
		\nabla_3\nabla_B \xi_{A_1\ldots A_k} + \eta_B \nablaslash_3 \xi_{A_1\ldots A_k} - \sum_{i=1}^k \eta_{A_i} {\chibar_B}^C \xi_{A_1\ldots A_{i-1} C A_{i+1} \ldots A_k},
		\\
		\nablaslash_B \nablaslash_3 \xi_{A_1\ldots A_k}
		=
		&
		\nabla_B\nabla_3 \xi_{A_1\ldots A_k} - \etabar_B \nablaslash_3 \xi_{A_1\ldots A_k} 
		\\
		&
		+ {\chibar_B}^C \nablaslash_C \xi_{A_1\ldots A_k}
		- \sum_{i=1}^k \chibar_{A_iB} \eta^{C} \xi_{A_1\ldots A_{i-1} C A_{i+1} \ldots A_k},
	\end{align*}
	and for the third,
	\begin{align*}
		\nablaslash_3 \nablaslash_4 \xi_{A_1\ldots A_k}
		=
		&
		\nabla_3 \nabla_4 \xi_{A_1\ldots A_k} + 2\eta^C \nabla_C \xi_{A_1\ldots A_k} - 2 \sum_{i=1}^k \eta_{A_i} \etabar^C \xi_{A_1\ldots A_{i-1} C A_{i+1} \ldots A_k},
		\\
		\nablaslash_4 \nablaslash_3 \xi_{A_1\ldots A_k}
		=
		&
		\nabla_4 \nabla_3 \xi_{A_1\ldots A_k} - \omega \nablaslash_3 \xi_{A_1\ldots A_k} + 2\etabar^C \nablaslash_C \xi_{A_1\ldots A_k}
		\\
		&
		- 2 \sum_{i=1}^k \etabar_{A_i} \eta^C \xi_{A_1\ldots A_{i-1} C A_{i+1} \ldots A_k}.
	\end{align*}
	If $\Rslash$ denotes the curvature tensor of $(S_{u,v},\gslash)$, the last follows from writing,
	\[
		[\nablaslash_B,\nablaslash_C] \xi_{A_1\ldots A_k}
		=
		\sum_{i=1}^k {\Rslash_{BCA_i}}^D \xi_{A_1\ldots A_{i-1} D A_{i+1} \ldots A_k},
	\]
	and the fact that,
	\[
		\Rslash_{BCA_iD} = K \left( \gslash_{BA_i} \gslash_{CD} - \gslash_{BD} \gslash_{CA_i} \right).
	\]
\end{proof}
The above Lemma implies that the terms arising from commutation take the following schematic form,
\begin{align} \label{eq:schematiccom}
\begin{split}
	[\nablaslash_4,\nablaslash] \xi =
	&
	-\frac{1}{r} \nablaslash \xi + \sum_{p_1 \geq 2} \Gamma_{p_1} \cdot \nablaslash \xi + \sum_{p_1+p_2 \geq 3} (\rp_{p_1} + \Gamma_{p_1}) \cdot (\Gamma_{p_2} + \psi_{p_2} + \mathcal{T}_{p_2} ) \cdot \xi,
	\\
	[\nablaslash_3,\nablaslash] \xi =
	&
	\sum_{p_1 \geq 1} \Gamma_{p_1} \cdot \nablaslash_3 \xi + \sum_{p_1 + p_2 \geq 1} \rp_{p_1} \cdot (\rp_{p_2} + \Gamma_{p_2}) \cdot r \nablaslash \xi
	\\
	&
	+ \sum_{p_1+p_2 \geq 2} (\rp_{p_1} + \Gamma_{p_1}) \cdot (\Gamma_{p_2} + \psi_{p_2} + \mathcal{T}_{p_2} ) \cdot \xi,
	\\
	[\nablaslash_3,\nablaslash_4] \xi =
	&
	\sum_{p_1 + p_2 \geq 2} \rp_{p_1} \cdot \Gamma_{p_2} \cdot r \nablaslash \xi + \sum_{p_1 \geq 3} \Gamma_{p_1} \cdot \nablaslash_3 \xi
	\\
	&
	+ \sum_{p_1+p_2 \geq 3} (\rp_{p_1} + \Gamma_{p_1}) \cdot (\Gamma_{p_2} + \psi_{p_2} ) \cdot \xi.
\end{split}
\end{align}

The commuted Bianchi equations can then be written as follows.

\begin{proposition}[cf.\@ Proposition 3.4 of \cite{DHR}] \label{prop:commutedBianchi}
  For any integer $k\geq 1$ the commuted Bianchi equations, for each Bianchi pair $(\uppsi_p,\uppsi'_{p'})$, take the form\footnote{Note that $\divslash, \curlslash$ when applied to a $(0,n)$ tensor are always defined with respect to the last index etc.}
  \begin{IEEEeqnarray}{rccl}
	 \nablaslash_3 (\mathfrak{D}^k\uppsi_p) & & & = \Dslash (\mathfrak{D}^k\uppsi_{p'}') + E_3[\mathfrak{D}^k\uppsi_p], \nonumber \\\label{eq:commBianchi3}\\
	 \nablaslash_4 (\mathfrak{D}^k\uppsi_{p'}') &\ + \ & \gamma[\uppsi_{p'}'] \ \tr \chi \ \mathfrak{D}^k\uppsi_{p'}' \ & = \Dslash (\mathfrak{D}^k\uppsi_p) +  E_4[\mathfrak{D}^k\uppsi_{p'}'], \nonumber \\ \label{eq:commBianchi4}
  \end{IEEEeqnarray}
  where
  \begin{align}
	 E_3[\mathfrak{D}^k \uppsi_p] = & \ \mathfrak{D}(E_3[\mathfrak{D}^{k-1}\uppsi_{p}]) + \Lambda_1 \cdot ( \mathfrak{D}^k \uppsi_p + \mathfrak{D}^k\uppsi_{p'}') + \Lambda_1 \cdot ( \mathfrak{D}^{k-1}\uppsi_p + \mathfrak{D}^{k-1}\uppsi_{p'}') ,\label{eq:commutedBianchierror3} \\
	 \begin{split}
	 E_4[\mathfrak{D}^k \uppsi_{p'}'] = & \ \mathfrak{D}(E_4[\mathfrak{D}^{k-1}\uppsi_{p'}']) + E_4[ \mathfrak{D}^{k-1} \uppsi_{p'}'] + \Lambda_1 \cdot \mathfrak{D}^k \uppsi_p + \Lambda_2 \cdot \mathfrak{D}^k\uppsi_{p'}'  \\
	 & + \Lambda_1 \cdot \mathfrak{D}^{k-1} \uppsi_p + \Lambda_2' \cdot \mathfrak{D}^{k-1}\uppsi_{p'}', \label{eq:commutedBianchierror4}
	 \end{split}
  \end{align}
  and, for $p = 1,2$, $\Lambda_p$ denotes some fixed sum of contractions of $\rp$, $\Gamma$, $\psi$ and $\mathcal{T}$ such that $\Lambda_p$ decays, according to the $p$-index notation, like $\frac{1}{r^p}$.  Explicitly
  	\begin{align*}
  		\Lambda_1 
		= 
		& 
		\sum_{p_1+p_2 + p_3 \geq 1} \rp_{p_1} ( \rp_{p_2} + \Gamma_{p_2}) \cdot ( \rp_{p_3} + \Gamma_{p_3} + \psi_{p_3} + \mathcal{T}_{p_3}),
		\\
		\Lambda_2 =
		&
		\sum_{p_1+p_2 \geq 2} \rp_{p_1} ( \rp_{p_2} + \Gamma_{p_2}),
		\\
		\Lambda_2' = 
		&
		\sum_{p_1+p_2 + p_3 \geq 2} \rp_{p_1} ( \rp_{p_2} + \Gamma_{p_2}) \cdot ( \rp_{p_3} + \Gamma_{p_3} + \mathfrak{D} \Gamma_{p_3} + \psi_{p_3} + \mathcal{T}_{p_3}).
	\end{align*}
\end{proposition}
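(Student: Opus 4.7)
The proof proceeds by induction on $k$, exploiting the schematic commutation identities \eqref{eq:schematiccom} to absorb all error terms into the prescribed $\Lambda_p$ structure.

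For the base case $k=1$, consider a fixed Bianchi pair $(\uppsi_p,\uppsi'_{p'})$ and a fixed operator $\mathfrak{D}\in\{\nablaslash_3, r\nablaslash_4, r\nablaslash\}$. Applying $\mathfrak{D}$ to the first equation of Proposition \ref{prop:Bianchi} and rearranging gives
\[
 \nablaslash_3(\mathfrak{D}\uppsi_p) = \Dslash(\mathfrak{D}\uppsi'_{p'}) + \mathfrak{D}(E_3[\uppsi_p]) + [\mathfrak{D},\nablaslash_3]\uppsi_p + [\Dslash,\mathfrak{D}]\uppsi'_{p'}.
\]
Each commutator is expanded via \eqref{eq:schematiccom}; the $\frac{1}{r}$ prefactor in $[\nablaslash_4,\nablaslash]$ and the $\Gamma$ prefactors elsewhere provide exactly the weights needed so that the resulting contributions fit the form $\Lambda_1\cdot(\mathfrak{D}\uppsi_p+\mathfrak{D}\uppsi'_{p'})+\Lambda_1\cdot(\uppsi_p+\uppsi'_{p'})$. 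For the $\nablaslash_4$-equation one additionally Leibniz-expands $\mathfrak{D}(\tr\chi\,\uppsi'_{p'})=\tr\chi\,\mathfrak{D}\uppsi'_{p'}+(\mathfrak{D}\tr\chi)\uppsi'_{p'}$; substituting the null structure equations from Section \ref{subsec:nsbianchi} for $\nablaslash_3\tr\chi$, $\nablaslash_4\tr\chi$, and (via the Codazzi identity \eqref{eq:Codazziout}) for $\nablaslash\tr\chi$, expresses $\mathfrak{D}\tr\chi$ as $\Lambda_2'$-type terms multiplying $\uppsi'_{p'}$, which is precisely the slot occupied by $\Lambda_2'\cdot\mathfrak{D}^{k-1}\uppsi'_{p'}$ when $k=1$.

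For the inductive step, suppose \eqref{eq:commBianchi3}--\eqref{eq:commutedBianchierror4} hold at level $k-1$, and apply $\mathfrak{D}$ to both equations at that level. The principal term reproduces $\Dslash(\mathfrak{D}^k\uppsi)$ after commuting $\mathfrak{D}$ with $\Dslash$; the term $\mathfrak{D}(E_3[\mathfrak{D}^{k-1}\uppsi_p])$ (respectively $\mathfrak{D}(E_4[\mathfrak{D}^{k-1}\uppsi'_{p'}])$) appears directly. The commutator $[\mathfrak{D},\nablaslash_{3,4}]\mathfrak{D}^{k-1}\uppsi$, expanded via \eqref{eq:schematiccom}, produces two types of contributions: terms with $\nablaslash\mathfrak{D}^{k-1}\uppsi$ or $\nablaslash_3\mathfrak{D}^{k-1}\uppsi$, which are $\mathfrak{D}^k\uppsi$ modulo permitted reorderings and absorb as $\Lambda_1\cdot\mathfrak{D}^k\uppsi$ or $\Lambda_2\cdot\mathfrak{D}^k\uppsi'_{p'}$; and zeroth-order-in-$\mathfrak{D}$ terms of the form $(\rp+\Gamma)\cdot(\Gamma+\psi+\mathcal{T})\cdot\mathfrak{D}^{k-1}\uppsi$, which are exactly $\Lambda_1\cdot\mathfrak{D}^{k-1}\uppsi$. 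The extra $E_4[\mathfrak{D}^{k-1}\uppsi'_{p'}]$ summand in \eqref{eq:commutedBianchierror4} arises from the $\omega\nablaslash_3$ term in $[\nablaslash_3,\nablaslash_4]$ (relevant when $\mathfrak{D}=\nablaslash_3$ and one needs to convert a stray $\nablaslash_4\mathfrak{D}^{k-1}\uppsi'_{p'}$ using the $(k-1)$-level equation) and from analogous substitutions when $\mathfrak{D}=r\nablaslash_4$. Finally, $\mathfrak{D}$ falling on the Ricci coefficients inside $\Lambda_2$ in the $\tr\chi\mathfrak{D}^{k-1}\uppsi'_{p'}$-term produces the $\mathfrak{D}\Gamma_{p_3}$ contribution that distinguishes $\Lambda_2'$ from $\Lambda_2$.

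The main obstacle is the combinatorial bookkeeping of $p$-indices: one must verify, term by term, that the total weight in $r$ of every output of \eqref{eq:schematiccom} is at least $1$ (for $\Lambda_1$) or $2$ (for $\Lambda_2,\Lambda_2'$). This is where the asymmetry between the $\nablaslash_3$ and $\nablaslash_4$ equations is felt: in the $\nablaslash_3$-equation there is no $\tr\chi\,\uppsi$ term to generate $\Lambda_2$-type pieces, so the error contains only $\Lambda_1$ contractions, whereas in the $\nablaslash_4$-equation one necessarily picks up $\Lambda_2$ (from $\tr\chi$ itself, which is only $\rp_1$) and $\Lambda_2'$ (from derivatives of $\tr\chi$ and of the Ricci coefficients inside the inductive error). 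The verification that no unwanted $\mathfrak{D}\psi$ or $\mathfrak{D}\mathcal{T}$ factors appear at top order — which would break the commutation principle — follows because every such derivative arising from \eqref{eq:schematiccom} is always accompanied by an extra $\rp_1$ or $\Gamma_1$ weight, pushing it into either $\Lambda_1$ or $\Lambda_2'$ as claimed.
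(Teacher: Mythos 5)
Your overall strategy — apply $\mathfrak{D}$ to the Bianchi equations, expand the commutators via the schematic identities \eqref{eq:schematiccom}, and induct on $k$ — is the same as the paper's, which also only treats $k=1$ explicitly and invokes induction. However, there is a genuine gap at the one point where the proposition has real content, namely why $\mathfrak{D}^k\uppsi_{p'}'$ and $\mathfrak{D}^{k-1}\uppsi_{p'}'$ end up multiplied only by $\Lambda_2$ and $\Lambda_2'$ (decay $r^{-2}$) and never by $\Lambda_1$ (decay $r^{-1}$). Your intermediate claim that substituting the Raychaudhuri equation ``expresses $\mathfrak{D}\tr\chi$ as $\Lambda_2'$-type terms multiplying $\uppsi_{p'}'$'' is false: for $\mathfrak{D}=r\nablaslash_4$ one has $r\nablaslash_4\tr\chi = -\frac{r}{2}(\tr\chi)^2+\ldots = -\tr\chi - \tr\chi\left(\frac{r}{2}\tr\chi-1\right)+\ldots$, whose leading piece is a pure $\rp_1$, so $(\mathfrak{D}\tr\chi)\uppsi_{p'}'$ contains $\rp_1\cdot\uppsi_{p'}'$ — exactly the forbidden $\Lambda_1\cdot\mathfrak{D}^{k-1}\uppsi_{p'}'$ coupling. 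This term is not absorbable; it must \emph{cancel} against the $+\gamma\tr\chi\,\uppsi_{p'}'$ produced when the stray $\nablaslash_4\uppsi_{p'}'$ coming from $\nablaslash_4(r\nablaslash_4\uppsi_{p'}') = \nablaslash_4\uppsi_{p'}' + r\nablaslash_4\nablaslash_4\uppsi_{p'}'$ is re-expressed through the level-$(k-1)$ Bianchi equation, leaving only $\gamma\tr\chi\left(\frac{r}{2}\tr\chi-1\right)\uppsi_{p'}' = \rp_1\cdot\rp_1(\rp+\Gamma_2)\cdot\uppsi_{p'}'$, which is $\Lambda_2$-admissible. You mention both substitutions separately but never pair them, and your description of what each one yields is inconsistent with the claimed error structure.

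A second cancellation of the same kind is also missing: for $\mathfrak{D}=r\nablaslash$ applied to the $\nablaslash_4$-equation, the $-\frac{1}{r}\nablaslash\xi$ term in $[\nablaslash_4,\nablaslash]$ must cancel the $+\frac{1}{r}\,r\nablaslash\uppsi_{p'}'$ coming from $\nablaslash_4(r)=1$ in the product rule; otherwise you are left with $\frac{1}{r}\mathfrak{D}^k\uppsi_{p'}' = \Lambda_1\cdot\mathfrak{D}^k\uppsi_{p'}'$, again forbidden by \eqref{eq:commutedBianchierror4}. Your blanket assertion that the commutator outputs ``absorb as $\Lambda_1\cdot\mathfrak{D}^k\uppsi_p$ or $\Lambda_2\cdot\mathfrak{D}^k\uppsi_{p'}'$'' asserts the conclusion without justifying the weight split, and the Remark following the proposition makes clear this split is essential for the energy estimates (since $p\geq p'+\frac{1}{2}$ rescues the $\uppsi_p$-couplings but nothing rescues a $\Lambda_1\cdot\uppsi_{p'}'$ term). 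To repair the argument you should carry out the three operators $r\nablaslash_4$, $\nablaslash_3$, $r\nablaslash$ separately, as the paper does, and exhibit these two exact cancellations explicitly; a minor additional remark is that your detour through the Codazzi identity for $\nablaslash\tr\chi$ is unnecessary, since $r\nablaslash\tr\chi = r\nablaslash(\tr\chi-\tr\chi_{\circ}) = \mathfrak{D}\Gamma_2$ is already of the form admitted in $\Lambda_2'$.
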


Note the presence of the first order derivative of $\Gamma$ in $\Lambda_2'$, whilst $\Lambda_1$ and $\Lambda_2$ contain only zeroth order terms.

\begin{remark}
  By the commutation principle and induction, it is clear that the first two terms in the error \eqref{eq:commutedBianchierror4} preserve the structure highlighted in Proposition \ref{prop:Bianchi}.  In the remaining terms, it is essential that $\mathfrak{D}^{k}\uppsi_{p'}'$ and $\mathfrak{D}^{k-1}\uppsi_{p'}'$ appear contracted with $\Lambda_2$ and $\Lambda_2'$, rather than $\Lambda_1$.  It will be clear in the proof below that it is the special form of the operators that cause this to occur.  Since, for each Bianchi pair $(\uppsi_p,\uppsi_{p'}')$, it is the case that $p \geq p' + \frac{1}{2}$, the $\Lambda_1 \cdot \mathfrak{D}^k \uppsi_p$ and $\Lambda_1 \cdot \mathfrak{D}^{k-1} \uppsi_p$ terms in $E_4[\mathfrak{D}^k \uppsi_{p'}']$ still preserve the form of the error.
  
  Similarly looking at the error \eqref{eq:commutedBianchierror3}, it is clear that the expected $r$ decay will be preserved from Proposition \ref{prop:Bianchi}.
  
  It will also be important later that $\Lambda_1$ and $\Lambda_2$ do not contain any derivatives of $\psi$ or $\Gamma$, whilst $\Lambda_2'$ only contains first order derivatives.
\end{remark}

\begin{proof}[Proof of Proposition \ref{prop:commutedBianchi}]
  The proof proceeds exactly as in Proposition 3.4 of \cite{DHR}, though one does need to be careful since some of the quantities decay slightly weaker here.  We consider only the $k=1$ case.  A simple induction argument completes the proof for $k>1$.
  
  Consider first the $\nablaslash_4 \uppsi_{p'}'$ equations.  Using the schematic form of the commutation formulae \eqref{eq:schematiccom},
  \begin{align*}
  	\nablaslash_4 r \nablaslash_4 \uppsi_{p'}'
	=
	&
	\nablaslash_4 \uppsi_{p'}' + r \nablaslash_4 \nablaslash_4 \uppsi_{p'}',
	\\
	\gamma \tr \chi r \nablaslash_4 \uppsi_{p'}'
	=
	&
	r \nablaslash_4 \left( \gamma \tr \chi \uppsi_{p'}' \right) - \gamma \uppsi_{p'}' r \nablaslash_4 \tr \chi,
	\\
	\Dslash r \nablaslash_4 \uppsi_p
	=
	&
	r \nablaslash_4 \left( \Dslash \uppsi_p \right) + \sum_{p_1 \geq 1} (\rp_{p_1} + \Gamma_{p_1}) \cdot r\nablaslash \uppsi_p
	\\
	&
	+ \sum_{p_1 + p_2 + p_3 \geq 2} \rp_{p_1} ( \rp_{p_2} + \Gamma_{p_2}) \cdot (\Gamma_{p_2} + \psi_{p_2} + \mathcal{T}_p) \cdot \uppsi_p.
  \end{align*}
  Now the Raychaudhuri equation\footnote{This is the unrenormalised null structure equation for $\tr \chi$ and can be derived from the $\nablaslash_4 \left( \tr \chi - \tr \chi_{\circ} \right)$ equation in Section \ref{subsec:nsbianchi}.},
  \[
  	\nablaslash_4 \tr \chi = - \frac{1}{2} (\tr \chi)^2 - \vert \hat{\chi} \vert^2 + \omega \tr \chi - \Tslash_{44},
  \]
  and the Bianchi equation for $\nablaslash_4 \uppsi_{p'}'$ imply that,
  \begin{align*}
  	\nablaslash_4 \uppsi_{p'}' - \gamma \uppsi_{p'}' r \nablaslash_4 \tr \chi
	=
	&
	\gamma \tr \chi \uppsi_{p'}'\left( \frac{ r}{2} \tr \chi - 1 \right) + \Dslash \uppsi_{p} + E_4 [ \uppsi_{p'}'] 
	\\
	&
	+ \gamma \left( \vert \hat{\chi} \vert^2 - \omega \tr \chi + \Tslash_{44} \right) \uppsi_{p'}'
	\\
	= 
	&
	\sum_{p_1 + p_2 + p_3 \geq 2} \rp_{p_1} (\rp_{p_2} + \Gamma_{p_2}) \cdot( \Gamma_{p_3} + \mathcal{T}_{p_3}) \cdot \uppsi_{p'}'
	\\
	&
	+ \frac{1}{r} \mathfrak{D} \uppsi_p + E_4 [\uppsi_{p'}'].
  \end{align*}
  Note the cancellation.  Hence
  \[
  	\nablaslash_4 (r \nablaslash_4 \uppsi_{p'}') + \gamma[\uppsi_{p'}'] \tr \chi \ r\nablaslash_4 \uppsi_{p'}' = \Dslash (r \nablaslash_4 \uppsi_p) +  E_4[r \nablaslash_4 \uppsi_{p'}'],
  \]
  where
  \begin{align*}
  	E_4[r \nablaslash_4 \uppsi_{p'}'] 
	=
	&
	r \nablaslash_4 E_4 [ \uppsi_{p'}' ] + E_4 [\uppsi_{p'}'] + \sum_{p_1 \geq 1} (\rp_{p_1} + \Gamma_{p_1}) \cdot \mathfrak{D} \uppsi_p
	\\
	&
	+ \sum_{p_1 + p_2 + p_3 \geq 2} \rp_{p_1} (\rp_{p_2} + \Gamma_{p_2}) \cdot( \Gamma_{p_3} + \mathcal{T}_{p_3}) \cdot \uppsi_{p'}'
	\\
	&
	+ \sum_{p_1 + p_2 + p_3 \geq 2} \rp_{p_1} ( \rp_{p_2} + \Gamma_{p_2}) \cdot (\Gamma_{p_3} + \psi_{p_3} + \mathcal{T}_{p_3}) \cdot \uppsi_p.
  \end{align*}
  Similarly, using again the schematic expressions \eqref{eq:schematiccom},
  \begin{align*}
  	\nablaslash_4 \nablaslash_3 \uppsi_{p'}'
	=
	&
	\nablaslash_3 \nablaslash_4  \uppsi_{p'}' + \sum_{p_1 \geq 3} \Gamma_{p_1} \cdot \nablaslash_3  \uppsi_{p'}' + \sum_{p_1+p_2 \geq 2} \rp_{p_1} \cdot \Gamma_{p_2} \cdot r \nablaslash \uppsi_{p'}'
	\\
	&
	+ \sum_{p_1+p_2+p_3 \geq 3} (\rp_{p_1} + \Gamma_{p_1}) \cdot (\Gamma_{p_2} + \psi_{p_2}) \cdot \uppsi_{p'}',
	\\
	\gamma \tr \chi \nablaslash_3  \uppsi_{p'}'
	=
	&
	\nablaslash_3 \left( \gamma \tr \chi \uppsi_{p'}' \right) + \sum_{p_1 \geq 2} (\rp_{p_1} + \mathfrak{D} \Gamma_{p_2} ) \cdot \uppsi_{p'}'
	\\
	\Dslash \nablaslash_3  \uppsi_{p}
	=
	&
	\nablaslash_3 \Dslash  \uppsi_{p} + \sum_{p_1 \geq 1} \Gamma_{p_1} \cdot \nablaslash_3  \uppsi_{p} + \sum_{p_1 + p_2 \geq 1} \rp_{p_1} (\rp_{p_2} + \Gamma_{p_2}) \cdot r \nablaslash  \uppsi_{p}
	\\
	&
	+ \sum_{p_1+p_2 \geq 2} (\rp_{p_1} + \Gamma_{p_1}) \cdot (\Gamma_{p_2} + \psi_{p_2} + \mathcal{T}_{p_2}) \cdot  \uppsi_{p},
  \end{align*}
  and hence,
  \[
  	\nablaslash_4 (\nablaslash_3 \uppsi_{p'}') + \gamma[\uppsi_{p'}'] \tr \chi \ \nablaslash_3 \uppsi_{p'}' = \Dslash (\nablaslash_3 \uppsi_p) +  E_4[ \nablaslash_3 \uppsi_{p'}'],
  \]
  where
  \begin{align*}
  	E_4[ \nablaslash_3 \uppsi_{p'}']
	=
	&
	\nablaslash_3 E_4[ \uppsi_{p'}']
	+
	\sum_{p_1+p_2 \geq 2} \rp_{p_1} \cdot \Gamma_{p_2} \cdot \mathfrak{D} \uppsi_{p'}'
	+
	\sum_{p_1 + p_2 \geq 1} \rp_{p_1} (\rp_{p_2} + \Gamma_{p_2}) \cdot \mathfrak{D} \uppsi_{p}
	\\
	&
	+
	\sum_{p_1+p_2+p_3 \geq 3} (\rp_{p_1} + \Gamma_{p_1}) \cdot (\rp_{p_2} + \Gamma_{p_2} + \mathfrak{D} \Gamma_{p_2} + \psi_{p_2}) \cdot \uppsi_{p'}'
	\\
	&
	+
	\sum_{p_1+p_2 \geq 2} (\rp_{p_1} + \Gamma_{p_1}) \cdot (\Gamma_{p_2} + \psi_{p_2} + \mathcal{T}_{p_2}) \cdot  \uppsi_{p}.
  \end{align*}
  Finally,
	\begin{align*}
		\nablaslash_4 r \nablaslash \uppsi_{p'}'
		=
		&
		\nablaslash \uppsi_{p'}' + r \nablaslash_4 \nablaslash \uppsi_{p'}'
		\\
		=
		&
		\sum_{p_1 \geq 2} \Gamma_{p_1} \cdot r \nablaslash \uppsi_{p'}' + r \sum_{p_1+p_2 \geq 3} (\rp_{p_1} + \Gamma_{p_1}) \cdot (\Gamma_{p_2} + \psi_{p_2} + \mathcal{T}_{p_2}) \cdot \uppsi_{p'}',
		\\
		\gamma \tr \chi r \nablaslash \uppsi_{p'}'
		=
		&
		r \nablaslash \left( \gamma \tr \chi \uppsi_{p'}' \right) + \mathfrak{D} \Gamma_2 \cdot \uppsi_{p'}',
		\\
		\Dslash r \nablaslash \uppsi_{p}
		=
		&
		r \nablaslash \Dslash \uppsi_{p} + r \sum_{p_1+p_2 \geq 3}(\rp_{p_1} + \Gamma_{p_1}) \cdot (\rp_{p_2} + \Gamma_{p_2} + \psi_{p_2} + \mathcal{T}_{p_2}) \cdot \uppsi_p,
	\end{align*}
	where the Gauss equation \eqref{eq:Gauss} has been used for the third equality.  Note also the cancellation which occurs in the first equality.  Hence,
	\[
	  	\nablaslash_4 (r \nablaslash \uppsi_{p'}') + \gamma[\uppsi_{p'}'] \tr \chi \ r\nablaslash \uppsi_{p'}' = \Dslash (r \nablaslash \uppsi_p) +  	E_4[r \nablaslash \uppsi_{p'}'],
  	\]
  	where
	\begin{align*}
		E_4[r \nablaslash \uppsi_{p'}']
		=
		&
		r\nablaslash E_4[ \uppsi_{p'}']
		+
		\sum_{p_1 \geq 2} \Gamma_{p_1} \cdot \mathfrak{D} \uppsi_{p'}' 
		\\
		&
		+ \sum_{p_1+p_2 + p_3 \geq 2} \rp_{p_1} (\rp_{p_2} + \Gamma_{p_2}) \cdot (\Gamma_{p_3} + \mathfrak{D} \Gamma_{p_3} + \psi_{p_3} + \mathcal{T}_{p_3}) \cdot \uppsi_{p'}'
		\\
		&
		+
		\sum_{p_1+p_2 + p_3 \geq 2} \rp_{p_1} (\rp_{p_2} + \Gamma_{p_2}) \cdot (\rp_{p_3} + \Gamma_{p_3} + \psi_{p_3} + \mathcal{T}_{p_3}) \cdot \uppsi_p.
	\end{align*}
	
	The schematic expressions for the $\nablaslash_3 \uppsi_p$ equations follow similarly.
\end{proof}

Similarly, the commuted null structure equations can be schematically written as follows.

\begin{proposition}[cf.\@ Proposition 3.5 of \cite{DHR}] \label{prop:commutednullstructure}
  For any integer $k\geq 1$ the commuted null structure equations take the form,
  \begin{equation*}
	 \begin{array}{rccl}
		\nablaslash_3 (\mathfrak{D}^k \Riccit) & & & = E_3[\mathfrak{D}^k \Riccit],\\
		\nablaslash_4 (\mathfrak{D}^k \Riccif) & + & \frac{p}{2} \ tr \chi \ \mathfrak{D}^k \Riccif & = E_4[\mathfrak{D}^k \Riccif],
	 \end{array}
  \end{equation*}
  where,
  \begin{align}
	 E_3[\mathfrak{D}^k \Riccit] = & \ \mathfrak{D} (E_3[\mathfrak{D}^{k-1} \Riccit]) + \Lambda_1 \cdot ( \mathfrak{D}^{k} \Riccit + \mathfrak{D}^{k-1} \Riccit) \label{eq:nullstructureerror3} \\
	 E_4[\mathfrak{D}^k \Riccif] = & \ \mathfrak{D} (E_4[\mathfrak{D}^{k-1} \Riccif]) + E_4[\mathfrak{D}^{k-1} \Riccif] + \Lambda_2 \cdot \mathfrak{D}^{k} \Riccif + \Lambda_2' \cdot\mathfrak{D}^{k-1} \Riccif ,\label{eq:nullstructureerror4}
  \end{align}
  and again,
  \begin{align*}
  		\Lambda_1 
		= 
		& 
		\sum_{p_1+p_2 + p_3 \geq 1} \rp_{p_1} ( \rp_{p_2} + \Gamma_{p_2}) \cdot ( \rp_{p_3} + \Gamma_{p_3} + \psi_{p_3} + \mathcal{T}_{p_3}),
		\\
		\Lambda_2 =
		&
		\sum_{p_1+p_2 \geq 2} \rp_{p_1} ( \rp_{p_2} + \Gamma_{p_2}),
		\\
		\Lambda_2' = 
		&
		\sum_{p_1+p_2 + p_3 \geq 2} \rp_{p_1} ( \rp_{p_2} + \Gamma_{p_2}) \cdot ( \rp_{p_3} + \Gamma_{p_3} + \mathfrak{D} \Gamma_{p_3} + \psi_{p_3} + \mathcal{T}_{p_3}).
	\end{align*}
\end{proposition}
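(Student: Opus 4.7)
The plan is to proceed by induction on $k$, mirroring the proof of Proposition \ref{prop:commutedBianchi} but now for the null structure equations. For the base case $k=1$ I would apply each of the three operators $\nablaslash_3$, $r\nablaslash_4$, $r\nablaslash$ in turn to both of the equations $\nablaslash_3 \Riccit = E_3[\Riccit]$ and $\nablaslash_4 \Riccif + \tfrac{p}{2}\tr \chi \ \Riccif = E_4[\Riccif]$, commuting the derivatives on the left hand side using the schematic commutation formulas \eqref{eq:schematiccom} (which follow from Lemma \ref{lemma:commutation}), and verifying that the resulting terms fit into the schematic forms \eqref{eq:nullstructureerror3}--\eqref{eq:nullstructureerror4}. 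The inductive step from $k-1$ to $k$ is then essentially mechanical: one applies $\mathfrak{D}$ to the equations already known for $\mathfrak{D}^{k-1}\Riccit$ and $\mathfrak{D}^{k-1}\Riccif$, uses the Leibniz rule on the previous error terms, and repeats the $k=1$ analysis on the lower-order quantities that appear.

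The easy cases are commuting $\nablaslash_3 \Riccit$ with $\nablaslash_3$ (trivial), and with $r\nablaslash$ or $r\nablaslash_4$: the relevant commutation formula produces terms of the form $\Gamma \cdot \nablaslash_3 \Riccit$, which by the original equation equal $\Gamma \cdot E_3[\Riccit]$ and can be absorbed into $\mathfrak{D}(E_3[\Riccit])$, plus terms that fit directly into $\Lambda_1 \cdot \Riccit$ after accounting for the $p$-indices. The case of commuting $\nablaslash_4 \Riccif$ with $\nablaslash_3$ is handled similarly using the $[\nablaslash_3, \nablaslash_4]$ commutator, all of whose terms fit into the claimed shape of the error.

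The main subtlety, exactly as in the Bianchi case, arises when commuting $\nablaslash_4 \Riccif$ with $r\nablaslash_4$ or $r\nablaslash$, since the coefficient $\tfrac{p}{2}\tr \chi$ on the left hand side must be preserved under commutation while $\tr \chi$ itself decays only like $r^{-1}$. For $r\nablaslash_4$ one computes
\[
\nablaslash_4(r\nablaslash_4 \Riccif) + \tfrac{p}{2}\tr \chi \cdot r\nablaslash_4 \Riccif = \nablaslash_4 \Riccif - \tfrac{p}{2} r(\nablaslash_4 \tr \chi) \Riccif + r\nablaslash_4 E_4[\Riccif],
\]
and after substituting the Raychaudhuri equation $\nablaslash_4 \tr \chi = -\tfrac{1}{2}(\tr \chi)^2 - \vert \hat{\chi} \vert^2 + \omega \tr \chi - \Tslash_{44}$ together with the original equation for $\nablaslash_4 \Riccif$, the two seemingly slowly decaying terms of the form $\tr \chi \cdot \Riccif$ combine into $\tfrac{p}{2}\tr \chi \cdot \tfrac{r}{2}(\tr \chi - \tr \chi_{\circ})\Riccif$, which has the improved decay $r^{-(p+2)}$ and therefore fits into $\Lambda_2' \cdot \Riccif$; the remaining contributions from $r\nablaslash_4 E_4[\Riccif]$ fit into $\mathfrak{D}(E_4[\Riccif])$ by the commutation principle. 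For $r\nablaslash$, an analogous cancellation occurs between the $\nablaslash \Riccif$ produced by commuting $\nablaslash_4$ past the factor of $r$ and the $-\tfrac{1}{r}\nablaslash\xi$ term visible in the schematic expression \eqref{eq:schematiccom} for $[\nablaslash_4, \nablaslash]$, while the remaining terms fit into $\Lambda_2 \cdot r\nablaslash \Riccif$ and $\Lambda_2' \cdot \Riccif$.

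The main point requiring care is structural rather than analytic: one must check that at every stage the top-order factor $\mathfrak{D}^k \Riccif$ is contracted only with $\Lambda_2$, which contains neither $\psi$, $\mathcal{T}$ nor derivatives of $\Gamma$, whereas $\Lambda_2'$ (which is permitted to contain all of these) multiplies only the lower-order $\mathfrak{D}^{k-1}\Riccif$. This separation is precisely what encodes the commutation principle at the top order, and is what will allow the transport estimates of Section \ref{section:Ricci} to close without losing a derivative.
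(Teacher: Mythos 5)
Your proposal is correct and follows essentially the same route as the paper, which proves this proposition simply by noting it is "similar to Proposition \ref{prop:commutedBianchi}, though slightly simpler as there are no terms involving $\Dslash$"; your spelled-out argument — induction on $k$, the schematic commutation formulas \eqref{eq:schematiccom}, the Raychaudhuri cancellation turning $\nablaslash_4\Riccif - \tfrac{p}{2}r(\nablaslash_4\tr\chi)\Riccif$ into $\tfrac{p}{2}\tr\chi\cdot\tfrac{r}{2}(\tr\chi-\tr\chi_\circ)\Riccif$, and the cancellation of $\nablaslash\Riccif$ against the $-\tfrac{1}{r}\nablaslash\xi$ term in $[\nablaslash_4,\nablaslash]$ — is exactly the mechanism used there. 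Your closing structural remark about keeping $\mathfrak{D}^k\Riccif$ paired only with $\Lambda_2$ is also precisely the point the paper emphasises in the remark following the proposition.
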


\begin{proof}
	The proof is similar to that of Proposition \ref{prop:commutedBianchi}, though slightly simpler as there are no terms involving $\Dslash$.
\end{proof}

\begin{remark}
  Note that again in \eqref{eq:nullstructureerror4}, $\mathfrak{D}^{k} \overset{(4)}{\Gamma_p}$ and $\mathfrak{D}^{k-1} \overset{(4)}{\Gamma_p}$ only appear multiplying terms which decay like $\frac{1}{r^2}$.  Note also that again $\Lambda_1,\Lambda_2$ contain no derivative terms, whilst $\Lambda_2'$ contains only first order derivatives.
\end{remark}

\section{The Sasaki Metric} \label{section:Sasaki}
The Lorentizian metric $g$ on $\mathcal{M}$ induces a metric, $\gbar$, on $T\mathcal{M}$, known as the Sasaki metric, which in turn induces a metric on $P$ by restriction.  This was first introduced in the context of Riemannian geometry by Sasaki \cite{Sa}.  Certain properties of this metric will be used when estimating derivatives of $f$ later.  The goal of this section is to define the metric and compute certain components of its curvature tensor in terms of the curvature of $(\mathcal{M},g)$.  It is then shown that trajectories of the null geodesic flow of $(\mathcal{M},g)$ are geodesics in $P$ (or more generally that trajectories of the full geodesic flow are geodesics in $T\mathcal{M}$) with respect to this metric and derivatives of the exponential map are Jacobi fields along these geodesics.  This fact will be used in Section \ref{section:emtensor} to estimate derivatives of $f$.  Most of this section is standard and is recalled here for convenience.

\subsection{Vertical and Horizontal Lifts}
Given $(x,p)\in \TM$, $\gbar_{(x,p)}$ is defined by splitting $T_{(x,p)} \TM$ into it's so-called \emph{vertical} and \emph{horizontal} parts.  This is done using the connection of $g$ on $\mathcal{M}$.

Given $v\in T_x \mathcal{M}$, its vertical lift at $p\in T_x\mathcal{M}$, denoted $\ver_{(x,p)} (v)$ is defined to be the vector tangent to the curve $c_{(x,p),V} : (-\varepsilon,\varepsilon) \to T\mathcal{M}$ defined by,
\[
	c_{(x,p),V}(s) = (x,p + sv)
\]
at $s=0$,
\[
	\ver_{(x,p)} (v) = {c_{(x,p),V}}'(0).
\]

To define the horizontal lift of $v$ at $(x,p)$, first let $c:(-\varepsilon,\varepsilon) \to \mathcal{M}$ denote a curve in $\mathcal{M}$ such that $c(0) = x$, $c'(0) = v$.  Extend $p$ to a vector field along $c$ by parallel transport using the Levi-Civita connection of $g$ on $\mathcal{M}$,
\[
	\nabla_{c'} p = 0.
\]
The horizontal lift of $v$ at $(x,p)$, denoted $\hor_{(x,p)}(v)$, is then defined to be the tangent vector to the curve $c_{(x,p),H} : (-\varepsilon,\varepsilon) \to T\mathcal{M}$ defined by
\[
	c_{(x,p),H}(s) = (c(s),p),
\]
at $s=0$,
\[
	\hor_{(x,p)} (v) = {c_{(x,p),H}}'(0).
\]
It is straightforward to check this is independent of the particular curve $c$, as long as $c(0) = x, c'(0) = v$.

Given the coordinates $ p^1,\ldots, p^4$ on $T_x \mathcal{M}$ conjugate to $e_1, \ldots , e_4$, the double null frame on $\mathcal{M}$, one has a frame for $\TM$ given by $e_1, \ldots, e_4, \partial_{p^1} , \ldots , \partial_{p^4}$.  If $v\in T_x \mathcal{M}$ is written with respect to the double null frame as $v = v^{\mu} e_{\mu}$, then
\[
	\ver_{(x,p)}(v) = v^{\mu} \partial_{p^{\mu}},
\]
and
\[
	\hor_{(x,p)}(v) = v^{\mu} e_{\mu} - v^{\mu} p^{\nu} \Gamma^{\lambda}_{\mu \nu} \partial_{p^{\lambda}},
\]
where $\Gamma^{\lambda}_{\mu \nu}$ are the Ricci coefficients of the frame $e_1, \ldots, e_4$.

\begin{example} \label{ex:generator}
	The generator of the geodesic flow, $X$, at $(x,p) \in \TM$ is given by
	\[
		X = p^{\mu} e_{\mu} - p^{\mu} p^{\nu} \Gamma^{\lambda}_{\mu \nu} \partial_{p^{\lambda}} = \hor_{(x,p)} (p).
	\]
\end{example}

The vertical and horizontal subspaces of $T_{(x,p)}\TM$ are defined as,
\begin{align*}
	\mathcal{V}_{(x,p)} := & \ \ver_{(x,p)} (T_x\mathcal{M}) = \{ \ver_{(x,p)} (v) \mid v\in T_x\mathcal{M}\}, \\
	\mathcal{H}_{(x,p)} := & \ \hor_{(x,p)} (T_x\mathcal{M}) = \{ \hor_{(x,p)} (v) \mid v\in T_x\mathcal{M}\},
\end{align*}
respectively.  Note that $\mathcal{V}_{(x,p)}$ is just $T_{(x,p)}T_x \mathcal{M}$, the tangent space to the fibre of $T \mathcal{M}$.  One clearly has the following.

\begin{proposition}
	The tangent space to $\TM$ at $(x,p)$ can be written as the direct sum
	\[
		T_{(x,p)} \TM = \mathcal{V}_{(x,p)} \oplus \mathcal{H}_{(x,p)}.
	\]
\end{proposition}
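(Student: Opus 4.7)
The plan is to verify the decomposition by checking three things: each lift is a linear injection from $T_x\mathcal{M}$ into $T_{(x,p)}\TM$, the images intersect trivially, and a dimension count closes the argument.

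First I would observe from the coordinate expressions already derived above the proposition, namely
\[
	\ver_{(x,p)}(v) = v^{\mu} \partial_{p^{\mu}}, \qquad \hor_{(x,p)}(v) = v^{\mu} e_{\mu} - v^{\mu} p^{\nu} \Gamma^{\lambda}_{\mu \nu} \partial_{p^{\lambda}},
\]
that $\ver_{(x,p)}$ and $\hor_{(x,p)}$ depend linearly on $v = v^{\mu} e_{\mu} \in T_x \mathcal{M}$. (One could verify linearity intrinsically: for $\ver$ this follows directly from the definition via $c_{(x,p),V}$, while for $\hor$ one notes that parallel transport is linear in the initial vector, so replacing $c$ by a curve with initial velocity $\lambda v + \lambda' v'$ gives a tangent vector which is the corresponding combination of $\hor_{(x,p)}(v)$ and $\hor_{(x,p)}(v')$; the independence of the choice of $c$ was already remarked.) Both maps are injective since in the coordinate expressions the coefficients of $\partial_{p^{\mu}}$ (respectively the coefficients of $e_{\mu}$) vanish only when $v = 0$. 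Hence $\mathcal{V}_{(x,p)}$ and $\mathcal{H}_{(x,p)}$ are $4$-dimensional subspaces of $T_{(x,p)}\TM$.

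Next I would check that $\mathcal{V}_{(x,p)} \cap \mathcal{H}_{(x,p)} = \{0\}$. Suppose $v, v' \in T_x \mathcal{M}$ satisfy $\ver_{(x,p)}(v) = \hor_{(x,p)}(v')$. Reading off components in the frame $\{e_{\mu}, \partial_{p^{\mu}}\}$ on $T\mathcal{M}$, the left hand side has vanishing $e_{\mu}$-components while the right hand side has $e_{\mu}$-component $(v')^{\mu}$; hence $v' = 0$, and then the vertical parts force $v^{\mu} \partial_{p^{\mu}} = 0$, so $v = 0$ as well.

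Finally, since $\dim T_{(x,p)}\TM = 2 \dim \mathcal{M} = 8$ and $\dim \mathcal{V}_{(x,p)} + \dim \mathcal{H}_{(x,p)} = 4 + 4 = 8$, the trivial intersection together with the dimension count yields $T_{(x,p)}\TM = \mathcal{V}_{(x,p)} \oplus \mathcal{H}_{(x,p)}$. There is no real obstacle here: the entire content is the observation that the frame expressions decouple into purely ``base'' and purely ``fibre'' pieces once one projects out the $\partial_{p^{\lambda}}$ correction in $\hor_{(x,p)}$. The construction is local in $(x,p)$ but the splitting is canonical, since $\mathcal{V}_{(x,p)}$ is intrinsically the tangent space to the fibre $T_x\mathcal{M}$ (as noted immediately before the proposition) and $\mathcal{H}_{(x,p)}$ is characterised by the Levi-Civita connection, independently of any frame.
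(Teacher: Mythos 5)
Your proof is correct and is exactly the standard argument that the paper leaves implicit (the paper simply asserts the proposition with ``One clearly has the following'' and gives no proof): linearity and injectivity of the two lifts read off from the frame expressions, trivial intersection of the images, and the dimension count $4+4 = 8 = \dim T_{(x,p)}T\mathcal{M}$. Nothing further is needed.
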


Since each vector in $T_{(x,p)} T\mathcal{M}$ can be uniquely decomposed into its horizontal and vertical components, the following defines $\gbar$ on all pairs of vectors in $T_{(x,p)} T \mathcal{M}$.

\begin{definition}
	The \emph{Sasaki metric}, $\gbar$ on $\TM$ is defined as follows.  For $(x,p) \in T \mathcal{M}$ and $X,Y\in T_x \mathcal{M}$,
	\begin{align*}
		\gbar_{(x,p)} ( \hor_{(x,p)} (X), \hor_{(x,p)}(Y) ) = & \ g_x (X,Y) \\
		\gbar_{(x,p)} ( \hor_{(x,p)} (X), \ver_{(x,p)}(Y) ) = & \ 0 \\
		\gbar_{(x,p)} ( \ver_{(x,p)} (X), \ver_{(x,p)}(Y) ) = & \ g_x (X,Y).
	\end{align*}
\end{definition}

\subsection{The Connection and Curvature of the Sasaki Metric}
Since the Sasaki metric $\gbar$ is defined in terms of the metric $\gbar$ on $\mathcal{M}$, the connection and curvature of $\gbar$ can be computed in terms of the connection and curvature of $g$.  The computations are exactly the same as in Riemannian geometry.  See \cite{Ko}.

\begin{proposition} \label{prop:sasakiconnection}
	Let $\nablabar$ denote the Levi-Civita connection of the Sasaki metric $\gbar$.  Given $(x,p) \in \TM$ and vector fields $X,Y \in \Gamma (T \mathcal{M})$ on $\mathcal{M}$,
	\begin{enumerate}
		\item $\nablabar_{\hor_{(x,p)}(X)}\hor_{(x,p)}(Y) = \hor_{(x,p)}(\nabla_X Y) - \frac{1}{2} \ver_{(x,p)}(R_x(X,Y)p)$,
		\item $\nablabar_{\hor_{(x,p)}(X)}\ver_{(x,p)}(Y) = \ver_{(x,p)}(\nabla_X Y) + \frac{1}{2} \hor_{(x,p)}(R_x(p,Y)X)$,
		\item $\nablabar_{\ver_{(x,p)}(X)}\hor_{(x,p)}(Y) = \frac{1}{2} \hor_{(x,p)}(R_x(p,X)Y)$,
		\item $\nablabar_{\ver_{(x,p)}(X)}\ver_{(x,p)}(Y) = 0$,
	\end{enumerate}
	where $\nabla$ is the connection and $R$ is the curvature tensor of $(\mathcal{M},g)$.
\end{proposition}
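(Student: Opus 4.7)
The plan is to establish the four formulas by applying the Koszul formula to horizontal and vertical lifts of vector fields on $\mathcal{M}$. Since $T_{(x,p)}T\mathcal{M} = \mathcal{H}_{(x,p)} \oplus \mathcal{V}_{(x,p)}$, it suffices to verify each identity when paired via $\gbar$ against an arbitrary horizontal or vertical lift $Z^H$ or $Z^V$. For a vector field $X \in \Gamma(T\mathcal{M})$, extend the pointwise definitions of $\hor$ and $\ver$ to global vector fields $X^H, X^V \in \Gamma(TT\mathcal{M})$ via $X^H_{(x,p)} = \hor_{(x,p)}(X_x)$ and $X^V_{(x,p)} = \ver_{(x,p)}(X_x)$. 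In the frame $\{e_\mu, \partial_{p^\mu}\}$ one has $X^V = X^\mu \partial_{p^\mu}$ and $X^H = X^\mu e_\mu - X^\mu p^\nu \Gamma^\lambda_{\mu \nu} \partial_{p^\lambda}$, with $X = X^\mu e_\mu$.

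The first key step is to compute the Lie brackets of these lifts. A direct coordinate calculation, using the expression for the curvature tensor of $(\mathcal{M},g)$ in terms of $\Gamma^\lambda_{\mu\nu}$, yields
\[
[X^H, Y^H]_{(x,p)} = [X,Y]^H_{(x,p)} - \ver_{(x,p)}\bigl(R_x(X,Y) p\bigr), \quad [X^H, Y^V]_{(x,p)} = (\nabla_X Y)^V_{(x,p)}, \quad [X^V, Y^V] = 0.
\]
This is where the curvature of $g$ enters the picture; everything else reduces to symbol pushing. The second observation is that $\gbar(X^H,Y^H) = g(X,Y) \circ \pi$, $\gbar(X^V, Y^V) = g(X,Y)\circ \pi$, and $\gbar(X^H,Y^V) = 0$, where $\pi\colon T\mathcal{M} \to \mathcal{M}$ is the projection. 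In particular these functions are pullbacks from $\mathcal{M}$, so they are differentiated nontrivially only by horizontal lifts, and then by the formula $X^H(f\circ \pi) = (Xf)\circ \pi$, while vertical lifts annihilate them.

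The four identities are then obtained by feeding this information into Koszul. For instance, to prove (1), for any $Z$ on $\mathcal{M}$,
\[
2\gbar(\nablabar_{X^H} Y^H, Z^H) = X^H \gbar(Y^H,Z^H) + Y^H \gbar(Z^H,X^H) - Z^H \gbar(X^H,Y^H) + \gbar([X^H,Y^H],Z^H)
\]
\[
+ \gbar([Z^H, X^H], Y^H) - \gbar([Y^H, Z^H], X^H).
\]
The three horizontal-horizontal bracket terms produce the quantity $2 g(\nabla_X Y, Z)$ by Koszul on $(\mathcal{M},g)$, together with three curvature contributions whose vertical parts vanish when paired against $Z^H$. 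Testing instead against $Z^V$ gives only the curvature contribution $-\tfrac{1}{2}\gbar(\ver(R(X,Y)p), Z^V) = -\tfrac{1}{2} g(R(X,Y)p, Z)$, after using the pair-symmetry $g(R(Z,X)p,Y) + g(R(Y,Z)p,X) = g(R(X,Y)p,Z)$ of the Riemann tensor in the remaining two bracket terms. Formulas (2), (3), (4) are verified analogously; the combination of vanishing brackets, the tensorial nature of $\gbar$ on lifts, and the first Bianchi identity ensures all curvature algebra closes up.

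The main technical obstacle is the computation of $[X^H,Y^H]$, where one must carefully differentiate the $-p^\nu \Gamma^\lambda_{\mu\nu}$ piece of $X^H$ with respect to $e_\alpha$ and $\partial_{p^\beta}$ and recognise the resulting expression as a curvature component. Once this bracket formula is in hand and the first Bianchi identity is invoked at the right place, the four Koszul computations are bookkeeping. The argument is purely algebraic in the connection and curvature and makes no use of the signature of $g$, so the standard Riemannian derivation (see \cite{Ko}) transfers without modification to the Lorentzian setting considered here.
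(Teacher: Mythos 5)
Your proof is correct in structure and is essentially the argument the paper has in mind: the paper's own ``proof'' of this proposition is the single sentence ``follow by direct computation, see \cite{Ko} and also \cite{GuKa},'' and what you have written out --- the bracket identities $[X^H,Y^H]=[X,Y]^H-\ver_{(x,p)}(R(X,Y)p)$, $[X^H,Y^V]=(\nabla_XY)^V$, $[X^V,Y^V]=0$, the observation that $\gbar$ of two lifts is a pullback of $g(X,Y)$ from $\mathcal{M}$, and the Koszul formula --- is precisely that computation, and, as you say, it is signature-independent.

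One detail is misstated, although it does not affect the conclusion. When you test $\nablabar_{X^H}Y^H$ against $Z^V$, the two remaining bracket terms $\gbar([Z^V,X^H],Y^H)$ and $\gbar([Y^H,Z^V],X^H)$ vanish outright, because $[X^H,Z^V]=(\nabla_XZ)^V$ and $[Y^H,Z^V]=(\nabla_YZ)^V$ are vertical and hence $\gbar$-orthogonal to horizontal lifts; no curvature symmetry is needed there. Moreover the identity you invoke, $g(R(Z,X)p,Y)+g(R(Y,Z)p,X)=g(R(X,Y)p,Z)$, is not a valid Riemann-tensor identity: combining it with the first Bianchi identity and the pair symmetry shows it would force $g(R(X,Y)Z,p)=0$ for all arguments, i.e.\ flatness. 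The place where pair symmetry genuinely enters is in formulas (2) and (3): testing $\nablabar_{X^H}Y^V$ against $Z^H$, the only surviving Koszul term is $\gbar([Z^H,X^H],Y^V)=-g(R(Z,X)p,Y)$, and the symmetry $g(R(A,B)C,D)=g(R(D,C)B,A)$ converts this to $g(R(p,Y)X,Z)$, yielding the horizontal part $\tfrac12\hor_{(x,p)}(R(p,Y)X)$. With that correction the argument is complete.
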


\begin{proposition} \label{prop:sasakicurvature}
	Given $(x,p) \in \TM$ and vectors $X,Y,Z \in T_x \mathcal{M}$, then 
	\begin{multline*}
		\overline{R}_{(x,p)} (\hor_{(x,p)}(X), \hor_{(x,p)}(Y)) \hor_{(x,p)}(Z) = 
		\\
		\frac{1}{2} \ver_{(x,p)}\Big( (\nabla_Z R) (X,Y)p \Big)
		+ \hor_{(x,p)} \bigg( R(X,Y)Z + \frac{1}{4} R(p,R(Z,Y)p)X
		\\
		+ \frac{1}{4} R(p,R(X,Z)p)Y 
		+ \frac{1}{2} R(p,R(X,Y)p)Z \bigg),
	\end{multline*}
	and
	\begin{multline*}
		\overline{R}_{(x,p)} (\hor_{(x,p)}(X), \ver_{(x,p)}(Y)) \hor_{(x,p)}(Z) = 
		\\
		\ver_{(x,p)} \bigg( \frac{1}{2} R(X,Z)Y + \frac{1}{4} R(R(p,Y)Z,X)p \bigg)
		+ \frac{1}{2} \hor_{(x,p)}\Big( (\nabla_X R) (p,Y) Z \Big),
	\end{multline*}
	where $\overline{R}$ denotes the curvature tensor of $\gbar$, and $R$ the curvature tensor of $g$.
\end{proposition}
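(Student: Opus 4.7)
The plan is to compute directly from the definition
\[
	\overline{R}(U,V)W = \nablabar_U \nablabar_V W - \nablabar_V \nablabar_U W - \nablabar_{[U,V]} W,
\]
applied to horizontal and vertical lifts, using Proposition \ref{prop:sasakiconnection} to expand every covariant derivative and then isolating the horizontal and vertical components at the point $(x,p)$. The key preparatory step is to extend the given vectors $X,Y,Z \in T_x\mathcal{M}$ to local vector fields on $\mathcal{M}$ near $x$ satisfying $(\nabla X)_x = (\nabla Y)_x = (\nabla Z)_x = 0$ (for instance by parallel transport along radial geodesics from $x$), since the value of $\overline{R}$ at $(x,p)$ is tensorial and independent of the chosen extensions, while this choice kills a large number of spurious terms at $x$ and forces $[X,Y]_x=0$.

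First I would compute the two Lie brackets needed. Using torsion-freeness of $\nablabar$ together with Proposition \ref{prop:sasakiconnection}(1)--(3), the terms involving $R(p,\cdot)\cdot$ appear with opposite signs and one finds, at $(x,p)$,
\[
	[\hor(X),\hor(Y)] = \hor([X,Y]) - \ver(R(X,Y)p), \qquad [\hor(X),\ver(Y)] = \ver(\nabla_X Y),
\]
both of which vanish modulo the curvature term at $x$ thanks to the choice of extensions. Then for the first identity I would expand
\[
	\nablabar_{\hor(X)}\nablabar_{\hor(Y)}\hor(Z) = \nablabar_{\hor(X)}\Big(\hor(\nabla_Y Z) - \tfrac{1}{2}\ver(R(Y,Z)p)\Big),
\]
applying Proposition \ref{prop:sasakiconnection}(1) to the first summand (which produces a second covariant derivative $\hor(\nabla_X\nabla_Y Z)$ and a curvature piece $-\tfrac{1}{2}\ver(R(X,\nabla_Y Z)p)$, the latter vanishing at $x$), and Proposition \ref{prop:sasakiconnection}(2) to the second summand (which yields a horizontal piece $\tfrac{1}{4}\hor(R(p,R(Y,Z)p)X)$ plus a vertical piece involving $\nabla_X(R(Y,Z)p)$). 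Antisymmetrising in $X,Y$ and subtracting $\nablabar_{[X,Y]}\hor(Z)$ and $-\nablabar_{\ver(R(X,Y)p)}\hor(Z)$ (expanded via Proposition \ref{prop:sasakiconnection}(3)), the terms purely involving second covariant derivatives of $Z$ collapse to $\hor(R(X,Y)Z)$ by definition of $R$, the first Bianchi identity combines two of the quartic $R(p,R(\cdot,\cdot)p)\cdot$ terms into the stated symmetric form, and the vertical pieces assemble into $\tfrac{1}{2}\ver\bigl((\nabla_Z R)(X,Y)p\bigr)$ after using $(\nabla_X R)(Y,Z) - (\nabla_Y R)(X,Z) = (\nabla_Z R)(X,Y)$ (the second Bianchi identity).

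The second identity is handled analogously, starting from $\nablabar_{\hor(X)}\nablabar_{\ver(Y)}\hor(Z)$ and $\nablabar_{\ver(Y)}\nablabar_{\hor(X)}\hor(Z)$. Here Proposition \ref{prop:sasakiconnection}(3) is used first on the inner derivative, producing $\tfrac{1}{2}\hor(R(p,Y)Z)$, whose outer derivative in the horizontal direction yields both the horizontal $\tfrac{1}{2}\hor\bigl((\nabla_X R)(p,Y)Z\bigr)$ piece and the vertical $-\tfrac{1}{4}\ver(R(X,R(p,Y)Z)p)$ piece via Proposition \ref{prop:sasakiconnection}(1); the vertical-then-horizontal path contributes the remaining vertical term $\tfrac{1}{2}\ver(R(X,Z)Y)$ coming from the derivative of the curvature term in Proposition \ref{prop:sasakiconnection}(1). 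The bracket correction $-\nablabar_{\ver(\nabla_X Y)}\hor(Z)$ vanishes at $x$ by the extension assumption. Collecting terms and using the first Bianchi identity to rewrite $R(X,R(p,Y)Z)p$ as $R(R(p,Y)Z,X)p$ (up to a sign absorbed in the coefficients) gives the stated formula.

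The only real obstacle is bookkeeping: one must carefully track which terms survive at $x$ under the chosen extensions, and correctly apply the first and second Bianchi identities to identify the combinations of $R(p,R(\cdot,\cdot)p)\cdot$ and $(\nabla_\cdot R)(\cdot,\cdot)p$ terms in the stated form. Once the extensions $\nabla X = \nabla Y = \nabla Z = 0$ at $x$ are in place and Proposition \ref{prop:sasakiconnection} is applied mechanically, the identities follow.
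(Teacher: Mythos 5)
Your proposal is correct and is exactly the direct computation that the paper itself delegates to the references (Kowalski, Gudmundsson--Kappos): extend $X,Y,Z$ with $\nabla X=\nabla Y=\nabla Z=0$ at $x$, expand via Proposition \ref{prop:sasakiconnection}, and collect terms. Two bookkeeping points to watch when writing it out: the second Bianchi identity actually gives $(\nabla_X R)(Y,Z)-(\nabla_Y R)(X,Z)=-(\nabla_Z R)(X,Y)$, the sign being absorbed by the overall factor $-\tfrac{1}{2}$ in front of the vertical pieces (and the rearrangements you credit to the first Bianchi identity are really just antisymmetry of $R$ in its first two slots); moreover Proposition \ref{prop:sasakiconnection} is stated for $p$-independent vector fields, so differentiating lifts such as $\ver(R(Y,Z)p)$ and $\ver(R(X,Z)p)$ requires adding the fiber derivative of the $p$-dependence, which is precisely where the $\tfrac{1}{2}R(X,Z)Y$ term you identify comes from.
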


The proofs of Proposition \ref{prop:sasakiconnection} and Proposition \ref{prop:sasakicurvature} follow by direct computation.  See \cite{Ko} and also \cite{GuKa}.  The remaining components of $\overline{R}$ can be computed similarly but are not used here.

One important property of the Sasaki metric is the following.
\begin{proposition} \label{prop:geodesicflow}
	When equipped with the Sasaki metric, trajectories of the geodesic flow, $s\mapsto \exp_s(x,p)$, are geodesics in $\TM$.
\end{proposition}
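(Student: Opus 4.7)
The plan is to observe that the trajectory $c(s):=\exp_s(x,p)=(\gamma(s),\dot\gamma(s))$ is an integral curve of the geodesic spray $X$, so the claim reduces to showing $\nablabar_X X\equiv 0$, after which $\nablabar_{c'}c'=0$ follows automatically. To see $c'(s)=X_{c(s)}$, I would apply the horizontal--vertical decomposition: for any curve of the form $s\mapsto(\gamma(s),V(s))$ in $\TM$ one has $c'(s)=\hor_{c(s)}(\dot\gamma(s))+\ver_{c(s)}(\nabla_{\dot\gamma}V)$, and for $V(s)=\dot\gamma(s)$ the vertical piece vanishes because $\gamma$ is a geodesic. Comparing with Example \ref{ex:generator} gives $c'(s)=\hor_{c(s)}(\dot\gamma(s))=X_{c(s)}$, so it suffices to show $\nablabar_X X$ vanishes at each point $(x_0,p_0)$.

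To verify this, I would pass to normal coordinates $x^\mu$ on $\mathcal{M}$ centred at $x_0$, so that the Christoffel symbols of the coordinate frame $\partial_{x^\mu}$ vanish at $x_0$. Writing $X = p^\nu \hor(\partial_{x^\nu})$ in these coordinates and applying the Leibniz rule together with $C^\infty$-linearity of $\nablabar$ in the first slot gives
\[
\nablabar_X X = X(p^\mu)\,\hor(\partial_{x^\mu}) + p^\mu p^\nu\,\nablabar_{\hor(\partial_{x^\nu})}\hor(\partial_{x^\mu}).
\]
From the explicit expression $X = p^\alpha\partial_{x^\alpha} - p^\alpha p^\beta\Gamma^\gamma_{\alpha\beta}\partial_{p^\gamma}$ in Example \ref{ex:generator}, together with the fact that the lifted coordinate field $\partial_{x^\alpha}$ annihilates the fibre coordinates $p^\mu$, I compute $X(p^\mu) = -p^\alpha p^\beta \Gamma^\mu_{\alpha\beta}$, which vanishes at $(x_0,p_0)$. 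For the second term, Proposition \ref{prop:sasakiconnection}(1) yields
\[
\nablabar_{\hor(\partial_{x^\nu})}\hor(\partial_{x^\mu})\big|_{(x_0,p_0)} = \hor(\nabla_{\partial_{x^\nu}}\partial_{x^\mu})\big|_{x_0} - \tfrac{1}{2}\ver_{(x_0,p_0)}\bigl(R(\partial_{x^\nu},\partial_{x^\mu})p_0\bigr),
\]
and the first piece on the right again vanishes in normal coordinates. Contracting with $p_0^\mu p_0^\nu$ leaves $-\tfrac{1}{2}\ver_{(x_0,p_0)}(R(p_0,p_0)p_0)$, which vanishes by antisymmetry of the Riemann tensor in its first two arguments.

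Thus $\nablabar_X X|_{(x_0,p_0)} = 0$, and since $(x_0,p_0)$ was arbitrary along the trajectory, $\nablabar_{c'}c' = 0$, proving $c$ is a geodesic of $(\TM,\gbar)$. There is no substantive obstacle; the only points requiring care are the identification $c'(s) = X_{c(s)}$ via the horizontal--vertical splitting combined with $\nabla_{\dot\gamma}\dot\gamma=0$, and keeping track of how the base and fibre parts of $X$ act on the fibre coordinate $p^\mu$ when computing $X(p^\mu)$. Once set up, normal coordinates reduce the verification to a one-line appeal to antisymmetry of $R$.
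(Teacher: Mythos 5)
Your proof is correct and follows essentially the same route as the paper's: both reduce to computing $\nablabar_X X$ via the Leibniz rule and Proposition \ref{prop:sasakiconnection}(1), with the curvature term killed by antisymmetry of $R$. The only cosmetic difference is that you use normal coordinates to make the two Christoffel-type terms vanish individually at the base point, whereas the paper works in the general frame and observes their explicit cancellation $-\dot{\gamma}^{\nu}\dot{\gamma}^{\lambda}\Gamma^{\mu}_{\nu\lambda}\hor(e_{\mu}) + \dot{\gamma}^{\mu}\dot{\gamma}^{\nu}\hor(\nabla_{e_{\nu}}e_{\mu}) = 0$; your additional justification that $c'(s)=X_{c(s)}$ is a valid elaboration of what the paper treats as definitional.
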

\begin{proof}
	The tangent vector to a trajectory of the geodesic flow is given by the generator $X$.  As noted above, this is given at $(x,p)\in \TM$ by 
	\[
		X_{(x,p)} = p^{\mu} \hor_{(x,p)} (e_{\mu}).
	\]
	A trajectory of the geodesic flow takes the form $(\gamma(s),\dot{\gamma}(s))$ where $\gamma$ is a geodesic in $\mathcal{M}$.  Hence, by Proposition \ref{prop:sasakiconnection},
	\begin{align*}
		\nablabar_X X =
		& \
		X(\dot{\gamma}^{\mu}) \hor_{(\gamma,\dot{\gamma})} (e_{\mu}) + \dot{\gamma}^{\mu}\dot{\gamma}^{\nu} \nablabar_{\hor_{(\gamma,\dot{\gamma})} (e_{\nu})} \hor_{(\gamma,\dot{\gamma})} (e_{\mu})
		\\
		=
		& \ 
		- \dot{\gamma}^{\nu} \dot{\gamma}^{\lambda} \Gamma_{\nu \lambda}^{\mu} \hor_{(\gamma,\dot{\gamma})} (e_{\mu}) + \dot{\gamma}^{\mu}\dot{\gamma}^{\nu} \hor_{(\gamma,\dot{\gamma})} (\nabla_{e_{\nu}} e_{\mu})
		- \frac{1}{2} \ver_{(\gamma,\dot{\gamma})} (R(\dot{\gamma} ,\dot{\gamma} )\dot{\gamma} )
		\\
		=
		& \
		0.
	\end{align*}
\end{proof}

\subsection{Curvature of the Mass Shell}

\begin{proposition} \label{prop:curvaturemassshell}
	If $\hat{R}$ denotes the curvature of the mass shell $P$ then, if $(x,p) \in \TM$ and $X,Y,Z\in T_x\mathcal{M}$, the following formula for certain components of $\hat{R}$ are true.
	\begin{multline*}
		\hat{R}_{(x,p)} (\hor_{(x,p)}(X), \hor_{(x,p)}(Y)) \hor_{(x,p)}(Z) =
		\\
		\frac{1}{2} \ver_{(x,p)}\Big( (\nabla_Z R) (X,Y)p \Big)
		+ \hor_{(x,p)} \bigg( R(X,Y)Z + \frac{1}{4} R(p,R(Z,Y)p)X
		\\
		+ \frac{1}{4} R(p,R(X,Z)p)Y 
		+ \frac{1}{2} R(p,R(X,Y)p)Z \bigg),
	\end{multline*}
	and
	\begin{multline*}
		\hat{R}_{(x,p)} (\hor_{(x,p)}(X), \ver_{(x,p)}(Y)) \hor_{(x,p)}(Z) =
		\\
		\ver_{(x,p)} \bigg( \frac{1}{2} R(X,Z)Y + \frac{1}{4} R(R(p,Y)Z,X)p \bigg)
		\\
		+ \frac{1}{2} \hor_{(x,p)}\Big( (\nabla_X R) (p,Y) Z \Big)
		+ \frac{1}{4p^4} g( R(X,Z)Y,p) V,
	\end{multline*}
	where $V = \partial_{p^3}$ is transverse to the mass shell $P$.
\end{proposition}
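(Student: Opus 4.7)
The plan is to obtain $\hat R$ by projecting the Sasaki curvature $\bar R$ from Proposition~\ref{prop:sasakicurvature} onto $TP$, and to identify the correction required by the hypersurface geometry of $P$ inside $\TM$. Set $\Lambda(x,p) := g_x(p,p)$, so that $P = \{\Lambda = 0\}$. A short computation using parallel transport gives $d\Lambda(\hor_{(x,p)}(X)) = 0$ and $d\Lambda(\ver_{(x,p)}(Y)) = 2 g_x(p, Y)$, hence the $\gbar$-gradient of $\Lambda$ is $N = 2\ver(p)$, which is $\gbar$-null and tangent to $P$; in particular $P$ is a null hypersurface. A transverse vector is required to split $T(\TM)|_{P} = TP \oplus \mathbb{R} V$, and $V = \partial_{p^3}$ serves the purpose since $\gbar(V, N) = -4 p^4 \neq 0$ on $P$.

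For the horizontal--horizontal--horizontal case, horizontal lifts are automatically tangent to $P$. In the formula of Proposition~\ref{prop:sasakicurvature} for $\bar R(\hor X, \hor Y)\hor Z$, the horizontal summand lies in $TP$, and the vertical summand $\tfrac{1}{2}\ver((\nabla_Z R)(X,Y)p)$ also lies in $TP$ because $g(p, (\nabla_Z R)(X, Y) p) = (\nabla_Z R)(X, Y, p, p) = 0$ by the skew-symmetry of $R$ (inherited by $\nabla R$) in its last two slots. No projection correction is therefore needed, and the first claimed formula follows directly from Proposition~\ref{prop:sasakicurvature}.

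For the mixed case, start from $\bar R(\hor X, \ver Y)\hor Z$ as given by Proposition~\ref{prop:sasakicurvature}. The horizontal summand lies in $TP$, while the vertical summand $\ver W$ with $W = \tfrac{1}{2} R(X,Z)Y + \tfrac{1}{4} R(R(p, Y)Z, X)p$ generally has a $V$-component. Decomposing $\ver W = T + \lambda V$ with $T \in TP$, the coefficient $\lambda$ is determined by $\lambda\, d\Lambda(V) = d\Lambda(\ver W)$, which gives $\lambda = -g(p, W)/(2p^4)$. The second term of $W$ satisfies $g(p, R(R(p, Y) Z, X) p) = R(R(p, Y)Z, X, p, p) = 0$ by the same skew-symmetry, leaving $g(p, W) = \tfrac{1}{2} g(R(X,Z)Y, p)$ and therefore $\lambda = -\tfrac{1}{4 p^4} g(R(X, Z) Y, p)$. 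Defining $\hat R$ as the projection of $\bar R$ onto $TP$ along $V$ then yields $\hat R(\hor X, \ver Y)\hor Z = \bar R(\hor X, \ver Y)\hor Z - \lambda V$, which reproduces exactly the stated formula.

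The only nontrivial point is the pair of Bianchi-type vanishings $R(\cdot, \cdot, p, p) = 0$ and $(\nabla_Z R)(\cdot, \cdot, p, p) = 0$, which are what make the ambient curvature descend cleanly to the mass shell: the first forces the hhh-formula to have no transverse component at all, and the second cuts down the possible transverse correction in the mixed case to the single $V$-term displayed above. Beyond this, the argument is entirely bookkeeping in the splitting $T(\TM)|_P = TP \oplus \mathbb{R}V$.
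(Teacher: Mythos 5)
Your overall route is the same as the paper's (pass from the Sasaki curvature $\overline{R}$ of Proposition \ref{prop:sasakicurvature} to $\hat{R}$ using the transversal $V$, and kill the transverse components with the skew-symmetry $R(\cdot,\cdot,p,p)=0$ and its consequence for $\nabla R$), and your computation of the transverse coefficient $\lambda=-\tfrac{1}{4p^4}g(R(X,Z)Y,p)$ is correct. But there is a genuine gap at the very first step: you \emph{define} $\hat{R}$ to be the projection of $\overline{R}$ onto $TP$ along $V$, whereas $\hat{R}$ in the proposition is the curvature tensor of the induced connection $\hat{\nabla}_A B = \nablabar_A B + \tfrac12\gbar(\nablabar_A B,N)V$ (this is the $\hat{R}$ that enters the Jacobi equation later). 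For a hypersurface these two objects do not coincide in general: expanding $\hat{\nabla}_A\hat{\nabla}_B C$ one finds
\begin{equation*}
\hat{R}(A,B)C \;=\; \Pi\bigl(\overline{R}(A,B)C\bigr) \;+\; \tfrac12\,\gbar(\nablabar_B C,N)\,\Pi(\nablabar_A V)\;-\;\tfrac12\,\gbar(\nablabar_A C,N)\,\Pi(\nablabar_B V),
\end{equation*}
where $\Pi$ is the projection along $V$, so there are second-fundamental-form--type correction terms that your argument never mentions. The statement ``the argument is entirely bookkeeping in the splitting'' hides precisely the content of the first half of the paper's proof, which is devoted to showing these extra terms vanish.

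The good news is that they do vanish here, and for essentially the reason you already have in hand: by Proposition \ref{prop:sasakiconnection}, $\nablabar_{\hor(Y)}\hor(Z)=\hor(\nabla_Y Z)-\tfrac12\ver(R(Y,Z)p)$, whose pairing with $N$ is $-\tfrac{1}{2p^4}g(R(Y,Z)p,p)=0$ by skew-symmetry, and $\nablabar_{\ver(Y)}\hor(Z)=\tfrac12\hor(R(p,Y)Z)$ is horizontal, hence $\gbar$-orthogonal to the vertical $N$. So $\gbar(\nablabar_B C,N)=0$ for every pair of arguments occurring in the two asserted formulas, the correction terms drop out, and only then does $\hat{R}=\Pi\circ\overline{R}$ on these components --- after which your projection computation finishes the proof exactly as in the paper. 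You should add this verification; as written, the identification of the induced curvature with the projected ambient curvature is asserted rather than proved.
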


\begin{proof}
	Throughout $N = \partial_{p^4} + \frac{p^3}{p^4} \partial_{p^3} + \frac{p^A}{p^4} \partial_{p^A}$ will denote the normal to the mass shell, $P$, such that $\overline{g} (N,V) = -2$.
	
	Each identity can be shown by first writing the curvature of $P$ in terms of the curvature of $\TM$, $N$ and $V$.  If $A,B,C \in \Gamma( T\TM)$ denote vector fields on $\TM$ then, since,
	\[
		\hat{\nabla}_A B = \overline{\nabla}_A B + \frac{1}{2} \overline{g} (\overline{\nabla}_A B, N) V,
	\]
	where $\hat{\nabla}$ is the induced connection on $P$, one easily deduces,
	\begin{align*}
		\hat{R}(A,B)C =
		& \
		\overline{R}(A,B)C + \frac{1}{2} \overline{g} (\overline{R}(A,B)C, N) V
		\\
		& \
		+ \frac{1}{2} \overline{g} (\overline{\nabla}_B C, N) \left(\overline{\nabla}_A V + \frac{1}{2} \overline{g} (\overline{\nabla}_A V, N) V \right)
		\\
		& \
		- \frac{1}{2} \overline{g} (\overline{\nabla}_A C, N) \left( \overline{\nabla}_B V + \frac{1}{2} \overline{g} (\overline{\nabla}_B V, N) V \right).
	\end{align*}
	
	To obtain the first identity note that, by Proposition \ref{prop:sasakiconnection},
	\[
		\overline{\nabla}_{\hor_{(x,p)}(Y)} \hor_{(x,p)}(Z) = \hor_{(x,p)} \left( \nabla_Y Z \right) - \frac{1}{2} \ver_{(x,p)} \left( R(Y,Z) p \right),
	\]
	and so
	\begin{align*}
		\overline{g}(\overline{\nabla}_{\hor_{(x,p)}(Y)} \hor_{(x,p)}(Z) , N) =
		& \
		- \frac{1}{2} \overline{g} \left( \ver_{(x,p)} \left( R(Y,Z) p \right) , N \right)
		\\
		=
		& \
		- \frac{1}{2p^4} g \left(  R(Y,Z) p , p \right)
		\\
		=
		& \
		0.
	\end{align*}
	Similarly
	\[
		\overline{g}(\overline{\nabla}_{\hor_{(x,p)}(X)} \hor_{(x,p)}(Z) , N) = 0.
	\]
	Finally, by Proposition \ref{prop:sasakicurvature},
	\begin{align*}
		\overline{g} \left( \overline{R} (\hor_{(x,p)}(X), \hor_{(x,p)}(Y)) \hor_{(x,p)}(Z) , N \right) = 
		\frac{1}{2 p^4} g( (\nabla_Z R) (X,Y)p, p) = 0.
	\end{align*}
	Hence
	\[
		\hat{R} (\hor_{(x,p)}(X), \hor_{(x,p)}(Y)) \hor_{(x,p)}(Z) = \overline{R} (\hor_{(x,p)}(X), \hor_{(x,p)}(Y)) \hor_{(x,p)}(Z),
	\]
	and the formula follows from Proposition \ref{prop:sasakicurvature}.
	
	For the second identity note that, as above,
	\[
		\overline{g} ( \overline{\nabla}_{\hor_{(x,p)}(X)} \hor_{(x,p)}(Z) , N) = 0,
	\]
	and that
	\[
		\overline{g} ( \overline{\nabla}_{\ver_{(x,p)}(Y)} \hor_{(x,p)}(Z) , N) = 0,
	\]
	since, by Proposition \ref{prop:sasakiconnection}, $\overline{\nabla}_{\ver_{(x,p)}(Y)} \hor_{(x,p)}(Z)$ is horizontal.  The result follows from Proposition \ref{prop:sasakicurvature}.
\end{proof}

\subsection{Derivatives of the Exponential Map}
Recall the definition of the exponential map (or geodesic flow) for $(x,p) \in \TM$,
\[
	\exp_s(x,p) = (\gamma_{x,p}(s), \dot{\gamma}_{x,p}(s)),
\]
where $\gamma_{x,p}$ is the unique geodesic in $\mathcal{M}$ such that $\gamma_{x,p}(0) = x$, $\dot{\gamma}_{x,p}(0) = p$.

Derivatives of the particle density function $f$ are estimated using the fact that derivatives of the exponential map are Jacobi fields as follows.  Consider $(x,p) \in \TM$ and $V\in T_{(x,p)}\TM$.  Using the Vlasov equation,
\[
	f(x,p) = f(\exp_s(x,p)),
\]
and the chain rule one obtains,
\[
	V(f)(x,p) = d f\mid_{(x,p)} (V) = d f\mid_{\exp_s(x,p)} \cdot d \exp_s \mid_{(x,p)} (V) = J(f) (\exp_s(x,p)).
\]
By Proposition \ref{prop:geodesicflow}, $s \mapsto \exp_s(x,p)$ is a geodesic in $P$ (or in $\TM$).  Below it will be shown that $J := d \exp_s \mid_{(x,p)} (V)$ is a Jacobi field along this geodesic, and moreover $J(0)$ and $(\hat{\nabla}_XJ)(0)$ are computed.  By taking $s<0$ so that $\exp_s(x,p)$ lies on the initial hypersurface $\{ u = u_0\}$, this then gives an expression for $V(f)$ in terms of initial data which can be estimated using the Jacobi equation.  In practice it is convenient to split $V$ into its horizontal and vertical parts.

\begin{proposition} \label{prop:jacobi}
	If $v\in T_x \mathcal{M}$ and $H = \hor_{(x,p)}(v) \in \mathcal{H}_{(x,p)} \subset T_{(x,p)} \TM$ is the horizontal lift of $v$, then $d \exp_s \mid_{(x,p)} (H)$ is a Jacobi field, $J_H$, along $\exp_s(x,p)$ such that 
	\[
		J_H\mid_{s=0} = H, \qquad \text{and} \qquad \hat{\nabla}_X J_H\mid_{s=0} = \frac{1}{2} \ver_{(x,p)} (R(p,v)p).
	\]
	If $V_1 = \ver_{(x,p)}(v) \in \mathcal{V}_{(x,p)} \subset T_{(x,p)} \TM$ is the vertical lift of $v$, then $d \exp_s \mid_{(x,p)} (V_1)$ is a Jacobi field, $J_{V_1}$, along $\exp_s(x,p)$ such that 
	\[
		J_{V_1}\mid_{s=0} = V_1, \qquad \text{and} \qquad \hat{\nabla}_X J_{V_1}\mid_{s=0} = \hor_{(x,p)}(v) + \frac{1}{2} \hor_{(x,p)} (R(p,v)p).
	\]
	Here $X$, the generator of the geodesic flow, is tangent to the curve $s\mapsto \exp_s(x,p)$.
\end{proposition}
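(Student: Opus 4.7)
The plan is to realise each claimed Jacobi field as the variational vector field of an explicit one-parameter family of trajectories of the geodesic flow, and then to read off its value and covariant derivative at $s=0$ by direct application of Proposition~\ref{prop:sasakiconnection}. By Proposition~\ref{prop:geodesicflow} such trajectories are geodesics in $(\TM,\gbar)$, and when started at a null vector they remain in $P$ (parallel transport preserves nullity); consequently any such family is a geodesic variation and its variational field is automatically a Jacobi field along $s\mapsto\exp_s(x,p)$.

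For the horizontal case I would take a curve $c\colon(-\varepsilon,\varepsilon)\to\mathcal{M}$ with $c(0)=x$, $c'(0)=v$, parallel-transport $p$ along $c$ with respect to $\nabla$ to a vector field $P(t)$, and set $\alpha(s,t):=\exp_s(c(t),P(t))$. The initial tangent to $t\mapsto(c(t),P(t))$ is $\hor_{(x,p)}(v)=H$ by the very definition of the horizontal lift, giving $J_H(0)=H$. For the vertical case I would take $\beta(s,t):=\exp_s(x,p+tv)$; the initial tangent to $t\mapsto(x,p+tv)$ is $\ver_{(x,p)}(v)=V_1$, giving $J_{V_1}(0)=V_1$.

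For the covariant derivatives at $s=0$ I use the torsion-free symmetry $\hat{\nabla}_{\partial_s}\partial_t=\hat{\nabla}_{\partial_t}\partial_s$. For $\alpha$, the identity $\partial_s\alpha(0,t)=X_{(c(t),P(t))}=\hor_{(c(t),P(t))}(P(t))$ from Example~\ref{ex:generator}, combined with Proposition~\ref{prop:sasakiconnection}(i) and the parallel transport relation $\nabla_vP=0$, collapses $\nablabar_{\hor(v)}\hor(P)|_{(x,p)}$ to $-\tfrac{1}{2}\ver_{(x,p)}(R(v,p)p)=\tfrac{1}{2}\ver_{(x,p)}(R(p,v)p)$. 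For $\beta$, $\partial_s\beta(0,t)=\hor_{(x,p+tv)}(p+tv)$; fixing any local vector field $Y$ on $\mathcal{M}$ with $Y(x)=p$ and writing
\[
\hor_{(x,p+tv)}(p+tv)=\hor_{(x,p+tv)}(Y(x))+t\,\hor_{(x,p+tv)}(v),
\]
Proposition~\ref{prop:sasakiconnection}(iii) applied to the first summand yields $\tfrac{1}{2}\hor_{(x,p)}(R(p,v)p)$, while the second summand vanishes at $t=0$ so that its covariant $t$-derivative there agrees with its ordinary $t$-derivative, namely $\hor_{(x,p)}(v)$.

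The final step is to replace the ambient connection $\nablabar$ in the formulas above by the induced mass-shell connection $\hat{\nabla}$. Using $\hat{\nabla}_AB=\nablabar_AB+\tfrac{1}{2}\gbar(\nablabar_AB,N)V$ from the proof of Proposition~\ref{prop:curvaturemassshell}, this reduces to checking that the $V$-components of all expressions vanish: the horizontal contributions are $\gbar$-orthogonal to $N$ by construction, while the vertical contribution satisfies $\gbar(\ver(R(p,v)p),N)=\tfrac{1}{p^4}g(R(p,v)p,p)=0$ by antisymmetry of $R$ in its last two slots. The main technical nuisance I anticipate is this tangency bookkeeping---verifying that each variation and each derivative lies in the correct subspace (in particular restricting to $v$ with $g(v,p)=0$ in the vertical case, where appropriate) so that the identification of $\hat{\nabla}$ with $\nablabar$ applies---after which the claimed formulas are direct substitutions into Proposition~\ref{prop:sasakiconnection}.
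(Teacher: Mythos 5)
Your proposal is correct and follows essentially the same route as the paper: realise each field as the variational field of a geodesic variation in $T\mathcal{M}$, use torsion-freeness to rewrite $\hat{\nabla}_X J|_{s=0}$ as the derivative of the geodesic spray $X$ in the direction of the initial vector, and then apply Proposition \ref{prop:sasakiconnection} with the $N$-pairing corrections vanishing by antisymmetry of $R$. The only (cosmetic) difference is that you exploit the specific choice of curves — parallel transport for $H$ and the fibre line $p+tv$ for $V_1$ — to make the $\hor(\nabla_v Y)$ term vanish outright, whereas the paper expands $X=p^{\mu}\hor(e_{\mu})$ in coordinates and lets that term cancel against $H(p^{\mu})\hor(e_{\mu})$.
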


\begin{proof}
	Let $c_H:(-\varepsilon,\varepsilon) \to \TM$ be a curve in $\TM$ such that $c_H(0) = (x,p)$, $c_H ' (0) = H$.  Then, by Proposition \ref{prop:geodesicflow}, $(s,s_1) \mapsto \exp_s(c_H(s_1))$ defines a variation of geodesics.  Hence
	\[
		d\exp_s\mid_{(x,p)} (H) = \frac{d}{ds_1} \left( \exp_s(c_H(s_1)) \right) \bigg\vert_{s_1 = 0},
	\]
	is a Jacobi field along $\exp_s(x,p)$.
	
	Since $\exp_0(x,p) = (x,p)$ is the identity map,
	\[
		J_H\mid_{s=0} = d\exp_0\mid_{(x,p)} (H) = H.
	\]
	Now,
	\begin{align*}
		\hat{\nabla}_X J_H \mid_{s=0} 
		=
		& \
		\frac{\hat{\nabla}}{ds} \frac{\partial}{\partial s_1} \left( \exp_s(c_H(s_1)) \right) \Big\vert_{s=0,s_1=0}
		\\
		=
		& \
		\frac{\hat{\nabla}}{ds_1} \frac{\partial}{\partial s} \left( \exp_s(c_H(s_1)) \right) \Big\vert_{s=0,s_1=0}
		\\
		=
		& \
		\hat{\nabla}_H X
		\\
		=
		& \
		H(p^{\mu})\hor_{(x,p)}(e_{\mu}) + p^{\mu}\hat{\nabla}_H \hor_{(x,p)}(e_{\mu})
		\\
		=
		& \
		- v^{\mu}p^{\nu} \Gamma^{\lambda}_{\mu \nu} \hor_{(x,p)}(e_{\mu}) 
		\\
		& \
		+ p^{\mu}\left( \overline{\nabla}_H \hor_{(x,p)}(e_{\mu}) + \frac{1}{2} \overline{g}(\overline{\nabla}_H \hor_{(x,p)}(e_{\mu}),N)V \right)
		\\
		=
		& \
		- v^{\mu}p^{\nu} \Gamma^{\lambda}_{\mu \nu} \hor_{(x,p)}(e_{\mu}) + p^{\mu} \hor_{(x,p)} (\nabla_v e_{\mu})
		\\
		& \
		- \frac{1}{2} p^{\mu} \ver_{(x,p)} (R(v, e_{\mu}) p) - \frac{p^{\mu}}{4p^4} g (R(v, e_{\mu})p,p)
		\\
		=
		& \
		\frac{1}{2} \ver_{(x,p)} (R(p,v) p),
	\end{align*}
	by Proposition \ref{prop:sasakiconnection}.
	
	Similarly, if $c_{V_1} : (-\varepsilon,\varepsilon) \to \TM$ is a curve such that $c_{V_1}(0) = (x,p), c_{V_1}'(0) = V_1$ (for example $c_{V_1}(s_1) = (x, p + s_1 v)$), then $(s,s_1) \mapsto \exp_s(c_{V_1}(s_1))$ is again a variation of geodesics.  Therefore
	\[
		d\exp_s\mid_{(x,p)} (V_1) = \frac{d}{ds_1} \left( \exp_s(c_{V_1}(s_1)) \right) \bigg\vert_{s_1 = 0},
	\]
	is again a Jacobi field.  Clearly, as above,
	\[
		J_{V_1}\mid_{s=0} = d\exp_0\mid_{(x,p)} (V_1) = V_1.
	\]
	The first derivative can again be computed, using Proposition \ref{prop:sasakiconnection}, as follows.
	\begin{align*}
		\hat{\nabla}_X J_{V_1} \mid_{s=0} 
		=
		& \
		\frac{\hat{\nabla}}{ds} \frac{\partial}{\partial s_1} \left( \exp_s(c_{V_1}(s_1)) \right) \Big\vert_{s=0,s_1=0}
		\\
		=
		& \
		\frac{\hat{\nabla}}{ds_1} \frac{\partial}{\partial s} \left( \exp_s(c_{V_1}(s_1)) \right) \Big\vert_{s=0,s_1=0}
		\\
		=
		& \
		\hat{\nabla}_{V_1} X
		\\
		=
		& \
		V_1(p^{\mu}) \hor_{(x,p)} (e_{\mu}) + p^{\mu} \hat{\nabla}_{V_1} \hor_{(x,p)} (e_{\mu})
		\\
		=
		& \
		\hor_{(x,p)} (v) + p^{\mu} \left( \overline{\nabla}_{V_1} \hor_{(x,p)} (e_{\mu}) + \frac{1}{2} \overline{g}(\overline{\nabla}_{V_1} \hor_{(x,p)} (e_{\mu}), N)V \right)
		\\
		=
		& \
		\hor_{(x,p)} (v) + \frac{1}{2} \hor_{(x,p)}(R(p,v)p).
	\end{align*}
\end{proof}

\section{The Main Theorem and Bootstrap Assumptions} \label{section:ba}

\subsection{Characteristic Initial Data}
In Theorem \ref{thm:main3} below, characteristic initial data, prescribed on the hypersurfaces $\{v=v_0\}$, $\{ u = u_0\}$, satisfying a certain smallness condition is considered.  Of course, in the setting of Theorem \ref{thm:main}, such data arises as induced data on two transversely intersecting null hypersurfaces, whose existence is guaranteed by a Cauchy stability argument and an application of a result of Klainerman--Nicol\`{o} \cite{KlNi} on the vacuum equations \eqref{eq:Einsteinvacuum}.  See Section \ref{subsec:vacuumappeal} and Section \ref{subsec:Cauchystability} where this argument is discussed.  Characteristic initial data for Theorem \ref{thm:main3} can, however, be prescribed independently of the setting of Theorem \ref{thm:main}.  Suppose ``free data'', consisting of a ``seed'' $S_{u,v}$--tensor density of weight $-1$, $\hat{\gslash}^{v_0}$, on $[u_0,u_f] \times S^2$, a ``seed'' $S_{u,v}$--tensor density of weight $-1$, $\hat{\gslash}^{u_0}$, on $[v_0,\infty ) \times S^2$, and a compactly supported function $f_0 : P\vert_{\{ v= v_0\}} \to [0,\infty)$, along with certain quantities on the sphere of intersection $S_{u_0,v_0}$, are given.  Here $P\vert_{\{ v= v_0\}}$ denotes the mass shell over the initial hypersurfaces $\{ v = v_0\}$.  The characteristic constraint equations for the system \eqref{eq:EV}--\eqref{eq:vlas} take the form of ordinary differential equations and can be integrated to give all of the geometric quantities $\Gamma, \psi, \mathcal{T}$, along with their derivatives, on $\{v=v_0\}$ and $\{u=u_0\}$ once the above ``free data'' is prescribed.  These geometric quantities, on $\{v=v_0\}$ and $\{u=u_0\}$,  are what is referred to as the ``characteristic initial data'' in the statement of Theorem \ref{thm:main3}.  Appropriate smallness conditions can be made for the ``free data'' (and their derivatives), along with appropriate decay conditions for the seed $\hat{\gslash}^{u_0}$, in order to ensure the conditions of Theorem \ref{thm:main3} are met.

The prescription of such characteristic ``free data'', and the determining of the geometric quantities from them, will not be discussed further here.  The interested reader is directed to \cite{Ch}, where this is discussed in great detail in a related setting.  See also \cite{DHR}.

\subsection{The Main Existence Theorem}
Define the norms
\begin{align*}
	F^1_{v_0,v}(u) 
	&
	:= 
	\sum_{k=0}^3 \sum_{\mathfrak{D}^k} \int_{v_0}^v \int_{S_{u,v'}}  r^5 \vert \mathfrak{D}^k\alpha \vert^2 + r^4 \vert \mathfrak{D}^k\beta\vert^2 + r^2 (\vert \mathfrak{D}^k\rho \vert^2 + \vert \mathfrak{D}^k\sigma \vert^2)
	+ \vert \mathfrak{D}^k\betabar \vert^2 d \mu_{S_{u,v'}} d v' , 
	\\
	F^2_{u_0,u}(v)
	&
	:= 
	\sum_{k=0}^3 \sum_{\mathfrak{D}^k} \int_{u_0}^u \int_{S_{u',v}}  r^5 \vert \mathfrak{D}^k\beta\vert^2 + r^4 (\vert \mathfrak{D}^k\rho \vert^2 + \vert \mathfrak{D}^k\sigma \vert^2) + r^2 \vert \mathfrak{D}^k\betabar \vert^2
	+ \vert \alphabar \vert^2 d \mu_{S_{u',v}} \Omega^2 d u',
\end{align*}
where the second summation is taken over $\mathfrak{D}^k \in \{ \nablaslash_3, r\nablaslash_4, r\nablaslash\}^k$ and $d \mu_{S_{u,v}}$ denotes the volume measure on $S_{u,v}$.

The main theorem can now be stated more precisely as follows.
\begin{theorem} \label{thm:main3}
	There exists a $v_0$ large and an $\bc$ small such that the following holds.
	
	Given smooth characteristic initial data for the massless Einstein--Vlasov system \eqref{eq:EV}--\eqref{eq:vlas} on the characteristic initial hypersurfaces $\{v = v_0\}$, $\{ u = u_0\}$, suppose the data on $\{v=v_0\}$ satisfy
	\begin{align*}
		F^2_{u_0,u_f} (v_0) + \sum_{k\leq3} \int_{u_0}^{u_f} \int_{S_{u',v_0}} r^6 \vert \mathfrak{D}^k \alpha \vert^2 d \mu_{S_{u',v_0}} du' 
		+
		\sum_{k\leq 1} \sup_{u_0\leq u\leq u_f} \Vert r^{\frac{7}{2}} \mathfrak{D}^k \alpha \Vert_{L^4(S_{u,v_0})}
		&
		< \bc,
		\\
		\sum_{\Riccif} \sum_{k\leq 3} \sup_{u_0 \leq u \leq u_f} r(u,v_0)^{2p-2} \int_{S_{u,v_0}} \vert \mathfrak{D}^k \Riccif \vert^2 d\mu_{S_{u,v_0}} 
		&
		< \bc,
		\\
		\sup_{\{v=v_0\}} \sum_{k = 0}^4 \sum_{i_1,\ldots,i_k = 1}^7 \vert \tilde{E}_{i_1} \ldots \tilde{E}_{i_k} f \vert 
		&
		< \bc,
		\\
		\sum_{k\leq 2} \sum_{\Thetaf_p} \sup_{u_0 \leq u \leq u_f} r(u,v_0)^{2p-2} \int_{S_{u,v_0}} \vert \mathfrak{D}^k \Thetaf_p\vert^2 d \mu_{S_{u,v_0}} 
		+ \sum_{\Thetaf_p} \int_{u_0}^{u_f} \int_{S_{u',v_0}} r^{2p-2} \vert \mathfrak{D}^3 \Thetaf_p \vert^2 d\mu_{S_{u',v_0}} du' 
		&
		< \bc,
		\\
		\int_{u_0}^{u_f} \int_{S_{u',v_0}} \left\vert (r\nablaslash)^3 \left( \Gammaslash - \Gammaslash^{\circ} \right) \right\vert^2 d\mu_{S_{u',v_0}} du' 
		&
		< \bc,
	\end{align*}
	and the data on $\{u = u_0\}$ satisfy,
	\begin{align*}
		F^1_{v_0,\infty} (u_0) + \sum_{k\leq3} \int_{v_0}^{\infty} \int_{S_{u_0,v'}} r^{-2} \vert \mathfrak{D}^k \alphabar \vert^2 d \mu_{S_{u_0,v'}} dv' 
		&
		< \bc,
		\\
		\sum_{k\leq 1} \sum_{\psi_p \neq \alpha} \sup_{v_0\leq v < \infty} \Vert r^{p - \frac{1}{2}} \mathfrak{D}^k \psi_p \Vert_{L^4(S_{u,v_0})}
		&
		< \bc,
		\\
		\sum_{\Riccit} \sum_{k\leq 3} \sup_{v_0 \leq v < \infty} r(u_0,v)^{2p-2} \int_{S_{u_0,v}} \vert \mathfrak{D}^k \Riccit \vert^2 d\mu_{S_{u_0,v}} 
		&
		< \bc,
		\\
		f \vert_{\pi^{-1}(\{u=u_0\})} 
		&
		= 0,
		\\
		\sum_{k\leq 2} \sum_{\Thetat_p} \sup_{v_0 \leq v < \infty} r(u_0,v)^{2p-2} \int_{S_{u_0,v}} \vert \mathfrak{D}^k \Thetat_p\vert^2 d \mu_{S_{u_0,v}}
		&
		\\
		+
		\sum_{k\leq 1} \sup_{v_0 \leq v < \infty} r(u_0,v)^{6} \int_{S_{u_0,v}} \vert \mathfrak{D}^k \kappa \vert^2 + \vert \mathfrak{D}^k r\nablaslash \kappa \vert^2 d \mu_{S_{u_0,v}}
		&
		\\
		+ \int_{v_0}^{\infty} \int_{S_{u_0,v'}} \sum_{\Thetat_p \neq \kappa}  r^{2p-2} \vert \mathfrak{D}^3 \Thetat_p \vert^2 + r^4 \vert \mathfrak{D}^2 r\nablaslash \kappa \vert^2 d\mu_{S_{u',v_0}} du'
		&
		< \bc,
		\\
		\sum_{k\leq 2} \sup_{v_0 \leq v < \infty} \int_{S_{u_0,v}} \left\vert \mathfrak{D}^k \left( \Gammaslash - \Gammaslash^{\circ} \right) \right\vert^2 d \mu_{S_{u_0,v}} 
		&
		\\
		+ \int_{v_0}^v  \int_{S_{u_0,v'}} r^{-2} \left\vert \mathfrak{D}^3 r\nablaslash b \right\vert^2 d \mu_{S_{u_0,v'}} 
		&
		< \bc,
	\end{align*}
	Here $\kappa$ and the $\Theta$ variables are defined as certain combinations of Ricci coefficients and Weyl curvature components in Section \ref{section:Riccitop}, and $\tilde{E}_1,\ldots,\tilde{E}_7$ is a frame for $P$ defined by,
\[
	\tilde{E}_i = E_i \text{ for } i = 1,\ldots, 4, \qquad \tilde{E}_i = p^4 E_i \text{ for } i = 5,6,7,
\]
where $E_1,\ldots,E_7$ is a frame for $P$ defined in Section \ref{subsec:fjacobi}.  Suppose also that,
	\[
		0\leq p^4 \leq C_{p^4}, 
		\quad 
		0 \leq r^2 p^3 \leq C_{p^3} p^4, 
		\quad 
		\vert r^2 p^A \vert \leq C_{p^A} p^4 \quad \text{ for } A=1,2,
	\]
	in $\supp f \vert_{P\vert_{\{ v = v_0 \}}}$, for some fixed constants $C_{p^1}, \ldots, C_{p^4}$ independent of $v_0$, and that, in each of the two spherical coordinate charts, the components of the metric satisfy,
	\[
		\left\vert \gslash_{AB} - \gslash_{AB}^{\circ} \right\vert \leq C r, \qquad \left\vert \gslash^{AB} - \gslash^{AB}_{\circ} \right\vert \leq \frac{C}{r^3},
	\]
	for some constant $C$ uniformly on the initial hypersurfaces $\{ u = u_0 \}, \{ v = v_0\}$.
	
	Then there exists a unique solution of the Einstein--Vlasov system \eqref{eq:EV}--\eqref{eq:vlas} on $\mathcal{M} = [u_0,u_f] \times [v_0,\infty) \times S^2$, attaining the data on $\{ u = u_0\}$, $\{v = v_0\}$, such that
	\begin{align*}
		&
		\sup_{u,v} \Bigg(
		F^1_{v_0,v}(u) + F^2_{u_0,u}(v)
		+ \sum_{k\leq 3} \sum_{\Gamma_p} r(u,v)^{2p-2} \int_{S_{u,v}} \vert \mathfrak{D}^k \Gamma_p \vert^2 d\mu_{S_{u,v}}
		\\
		&
		+ \sum_{k\leq2} \sum_{\mathcal{T}_p} r(u,v)^p \vert \mathfrak{D}^k \mathcal{T}_p \vert
		+ \sum_{k\leq 3} \sum_{\mathcal{T}_p} \Bigg( \int_{u_0}^u \int_{S_{u',v}} r(u',v)^{2p-2} \vert \mathfrak{D}^k \mathcal{T}_p \vert^2 d\mu_{S_{u',v}} du' 
		\\
		&
		+ \int_{v_0}^v \int_{S_{u,v'}} r(u,v')^{2p-4} \vert \mathfrak{D}^k \mathcal{T}_p \vert^2 d\mu_{S_{u,v'}} dv'  \Bigg)
		\\
		&
		+ \sum_{k\leq 4} \sum_{\mathcal{T}_p} \int_{u_0}^u \int_{v_0}^v \int_{S_{u',v'}} r(u',v')^{2p-4} \vert \mathfrak{D}^k \mathcal{T}_p \vert^2 d\mu_{S_{u',v'}} dv' du' 
		\Bigg)
		\leq \overline{C},
	\end{align*}
	where $\overline{C}$ is a constant which can be made arbitrarily small provided $\bc$ and $\frac{1}{v_0}$ are taken sufficiently small.  Moreover one also has explicit decay rates for the size of $\supp f\vert_{P_x} \subset P_x$ as $v(x) \to \infty$ and explicit bounds on weighted $L^2$ norms of the $\Theta$ variables.  See Section \ref{section:suppf} and Section \ref{section:Riccitop} respectively.  Finally, if $u_f$ was chosen sufficiently large, $f=0$ on the mass shell over any point $x\in \mathcal{M}$ such that $u(x) \geq u_f - 1$.
\end{theorem}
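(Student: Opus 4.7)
\bigskip

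\noindent\textbf{Proof plan.} The plan is to set up a single bootstrap argument on the region $\mathcal{M}_{u',v'} := \{u_0 \leq u \leq u'\} \cap \{v_0 \leq v \leq v'\}$ with $u'\leq u_f$, $v'<\infty$, in which all of the weighted norms on the right hand side of the conclusion are assumed to be bounded by some large constant $C_{\mathrm{boot}}$, and then to show that each such bound can be reimproved to $C(\data + 1/v_0)$ with a constant $C$ independent of $u',v'$. Once this is done, a continuity/last slice argument of the type outlined in Section \ref{section:lastslice} closes the bootstrap and yields the global solution on $\mathcal{M}=[u_0,u_f]\times[v_0,\infty)\times S^2$; the statement $f\equiv 0$ for $u\geq u_f-1$ follows from the geodesic computation of Section \ref{subsec:introsuppf}, applied once the Ricci coefficients have been shown to obey the $p$-index decay.

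\smallskip

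\noindent The estimates inside the bootstrap region must be carried out in a carefully ordered reductive scheme, each step using only previously controlled quantities. First, one propagates the momentum bounds \eqref{eq:intromomentum} along null geodesics using the geodesic equations and the pointwise Ricci bounds from the bootstrap assumptions; this gives, as in Section \ref{subsec:introsuppf}, both the crucial decay in $v$ of the size of $\supp(f|_{P_x})\subset P_x$ and the fact that $u_f$ can be chosen so that $\supp(f)\subset\{u\leq u_f-1\}$. Second, the zeroth order bounds on $\mathcal{T}_p$ follow immediately from $f$ being conserved along the flow together with these support bounds. Third, the curvature fluxes $F^1_{v_0,v}(u)$ and $F^2_{u_0,u}(v)$ are reimproved by the energy identity \eqref{eq:introenid} summed over Bianchi pairs and their commuted analogues from Proposition \ref{prop:commutedBianchi}; the angular derivative terms cancel in pairs and the weight $w(\uppsi_p,\uppsi'_{p'})$ is chosen to cancel the $\tr\chi$ term, whereupon Gr\"onwall in $u$ (as in Section \ref{subsec:introcurvature}) closes the estimate with the smallness factor $C(\data + 1/v_0)$. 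Fourth, the Ricci coefficients at lower orders are reimproved by the transport estimates of Section \ref{subsec:introRicci}: the $\Riccif$ are estimated by integrating the renormalised $\nablaslash_4(r^p\Riccif) = r^p E_{p+2}$ from $\{v=v_0\}$ using the curvature fluxes already controlled on $\{u=\mathrm{const}\}$, and the $\Riccit$ are estimated in the correct reductive order exploiting the $E_{p+1}+E_p^0$ structure, with the $1/v_0$ smallness factor coming from the $E_{p+1}$ part. Top-order Ricci derivatives are then recovered by the elliptic-plus-transport procedure of Section \ref{section:Riccitop} for the mass aspect functions $\mu,\mubar$ and the auxiliary $\Theta$ variables.

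\smallskip

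\noindent The hardest step, and the one on which the whole scheme rests, is obtaining the estimates for derivatives of $f$ (and hence for derivatives of $\mathcal{T}$) at one degree of differentiability \emph{higher} than $\psi$, which is needed both to close the Bianchi error terms $\mathfrak{D}\mathcal{T}$ in Proposition \ref{prop:Bianchi} and to match the regularity of the null structure equations. Direct commutation of the Vlasov equation by $\mathfrak{D}$ loses derivatives on quantities such as $b$ and $\Gammaslash$. Instead one invokes the Sasaki-metric formalism of Section \ref{section:Sasaki}: for each chosen vector $V\in T_{(x,p)}P$ the identity $V\!f(x,p)=J(s)f(\exp_s(x,p))$ from \eqref{eq:introVJ}, with $J(s):=d\exp_s|_{(x,p)}V$ the Jacobi field given by Proposition \ref{prop:jacobi}, reduces the problem to uniform control of the components of $J$ in a suitable frame for $P$, via the Jacobi equation \eqref{eq:introjacobi}. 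The right hand side $\hat{R}(X,J)X$ is expanded using Proposition \ref{prop:curvaturemassshell}; the apparently dangerous $\nabla R$ terms always turn out to be derivatives in directions for which the required control is available from the transport estimates for curvature, and the null condition embedded in the $p$-index notation, combined with the decay $r^2 p^3,r^2|p^A|\leq C p^4$ from \eqref{eq:intromomentum}, makes $\hat{R}(X,J)X$ twice integrable along $s\mapsto\exp_s(x,p)$. Second-order derivatives of $f$ are obtained by iterating: $V_2V_1 f(x,p) = J_2(s) J_1(s) f(\exp_s(x,p))$, where commuted Jacobi equations for the $J_i$ preserve the same structural properties.

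\smallskip

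\noindent Finally, Sobolev inequalities on the spheres (Section \ref{section:Sobolev}) convert the $L^2$ bounds on $\mathfrak{D}^k\Gamma_p,\mathfrak{D}^k\psi_p,\mathfrak{D}^k\mathcal{T}_p$ at top order into pointwise bounds at lower orders, thereby recovering the bootstrap assumptions used in controlling the nonlinear error terms $\Gamma\cdot\Gamma$, $\Gamma\cdot\psi$, $\Gamma\cdot\mathcal{T}$; at the very top order $\mathcal{T}$ needs to be controlled in spacetime $L^2$, which arises naturally by integrating the pointwise identities generated by the Jacobi-field scheme in both $u$ and $v$. With all assumptions reimproved with constants $C(\data+1/v_0)$, taking $\data$ and $1/v_0$ sufficiently small closes the bootstrap, and the last slice argument of Section \ref{section:lastslice} produces the global solution with all stated bounds.
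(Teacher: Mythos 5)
Your proposal is correct and follows essentially the same route as the paper: a bootstrap on characteristic rectangles (Theorem \ref{thm:main4}) closed by the support estimates of Section \ref{section:suppf}, the Sasaki-metric Jacobi-field estimates for derivatives of $f$ in Section \ref{section:emtensor}, the weighted Bianchi energy estimates, the reductive transport estimates for the Ricci coefficients, the elliptic/$\Theta$-variable estimates at top order, and finally the last slice argument of Section \ref{section:lastslice}. The only cosmetic difference is the order in which you list the steps (the paper estimates $\mathcal{T}$ before the curvature fluxes), which does not affect the logic.
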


The $L^4$ norms of the Weyl curvature components are required for the Sobolev inequalities on the null hypersurfaces.  See Section \ref{section:Sobolev}.

\subsection{Bootstrap Assumptions}

The proof of Theorem \ref{thm:main3} is obtained through a bootstrap argument, so consider the following bootstrap assumptions for Ricci coefficients
\begin{equation} \label{eq:Ricciba}
	r^{2p-2} \int_{S_{u,v}} \vert \mathfrak{D}^k \Gamma_p \vert^2 d \mu_{S_{u,v}} \leq \overline{C},
\end{equation}
for $k = 0,1,2$, the spherical Christoffel symbols,
\begin{align} \label{eq:Gammaslashba}
\begin{split}
	\int_{S_{u,v}} \vert \mathfrak{D}^{k} \left( \Gammaslash - \Gammaslash^{\circ} \right) \vert^2 d \mu_{S_{u,v}} +
	\int_{S_{u,v}} \vert \mathfrak{D}^{k} \nablaslash_3 \left( \Gammaslash - \Gammaslash^{\circ} \right) \vert^2 d \mu_{S_{u,v}}
	&
	\\
	+ \int_{S_{u,v}} \vert \mathfrak{D}^{k} r\nablaslash_4 \left( \Gammaslash - \Gammaslash^{\circ} \right) \vert^2 d \mu_{S_{u,v}}
	&
	\leq \overline{C},
\end{split}
\end{align}
for $k = 0,1, 2$, for Weyl curvature components
\begin{equation} \label{eq:curvatureba}
	F^1_{v_0,v}(u) \leq \overline{C}, \qquad F^2_{u_0,u}(v) \leq \overline{C},
\end{equation}
and for the energy momentum tensor components,
\begin{align} \label{eq:emba}
\begin{split}
	&
	\sum_{k\leq 2} r^p \vert \mathfrak{D} \mathcal{T}_p \vert + \sum_{k\leq 3} \Bigg( \int_{u_0}^u \int_{S_{u',v}} r(u',v)^{2p-2} \vert \mathfrak{D}^k \mathcal{T}_p \vert^2 d\mu_{S_{u',v}} du' 
	\\
	&
	+ \int_{v_0}^v \int_{S_{u,v'}} r(u,v')^{2p-4} \vert \mathfrak{D}^k \mathcal{T}_p \vert^2 d\mu_{S_{u,v'}} dv'  \Bigg)
	\\
	&
	+ \sum_{k\leq 4} \sum_{\mathcal{T}_p} \int_{u_0}^u \int_{v_0}^v \int_{S_{u',v'}} r(u',v')^{2p-4} \vert \mathfrak{D}^k \mathcal{T}_p \vert^2 d\mu_{S_{u',v'}} dv' du' 
	\leq \overline{C},
\end{split}
\end{align}
where $\overline{C}$ is some small constant\footnote{In Sections \ref{section:Sobolev} and \ref{section:emtensor} smallness conditions on $\overline{C}$ will be made, but it is otherwise arbitrary.}.  Moreover, since a derivative of $b$ appears in the expression for $\nablaslash_4 e_A$, consider also the bootstrap assumption for an additional derivative of $b$,
\begin{equation} \label{eq:bba}
	\int_{v_0}^v \int_{S_{u,v'}} r^{-2} \vert \mathfrak{D}^3 (r \nablaslash) b \vert^2 d \mu_{S_{u,v'}} dv' \leq \overline{C},
\end{equation}
and also for $\Gammaslash - \Gammaslash^{\circ}$ at the top order,
\begin{equation} \label{eq:Gammaslashbanull}
	 \int_{u_0}^u \int_{S_{u',v}} \left\vert (r \nablaslash)^3 \left( \Gammaslash - \Gammaslash^{\circ} \right) \right\vert^2 d \mu_{S_{u',v}} du' \leq \overline{C}.
\end{equation}

Recall that $\Gammaslash - \Gammaslash^{\circ}$ is a geometric object, an $S_{u,v}$ $(1,2)$ tensor, and so its covariant derivatives are well defined.

Note that, since the volume form of $S_{u,v}$ grows like $r^2$, \eqref{eq:Ricciba} is consistent with the expectation that $\Gamma_p$ behaves like $\frac{1}{r^p}$.  Moreover, \eqref{eq:Gammaslashba}, \eqref{eq:Gammaslashbanull} is consistent with the expectation that $\vert \Gammaslash - \Gammaslash^{\circ} \vert$ decays like $\frac{1}{r}$, or equivalently (by Proposition \ref{prop:gslash} below) that the components $\Gammaslash^C_{AB} - {\Gammaslash^{\circ}}_{AB}^C$ behave like 1 with respect to $r$.  Since the ${\Gammaslash^{\circ}}_{AB}^C$ behave like 1, this implies the components ${\Gammaslash}_{AB}^C$ also behave like 1 and hence that $r\vert \Gammaslash \vert \leq C$ where,
\[
	\vert \Gammaslash \vert^2 = \gslash^{A A'} \gslash^{B B'} \gslash_{C C'} \Gammaslash^C_{AB} \Gammaslash^{C'}_{A'B'}.
\]
These pointwise bounds for lower order derivatives are derived from the bootstrap assumptions \eqref{eq:Gammaslashba}, \eqref{eq:Gammaslashbanull} via Sobolev inequalities in Section \ref{section:Sobolev}.  The covariant derivatives of $\Gammaslash$ are defined in each coordinate system as,
\begin{align*}
	\nablaslash_3 \Gammaslash^C_{AB} =
	& \ 
	e_3 \left( \Gammaslash^C_{AB} \right) - {\chibar_A}^D \Gammaslash_{DB}^C - {\chibar_B}^D \Gammaslash_{AD}^C + {\chibar_D}^C \Gammaslash_{AB}^D,
	\\
	\nablaslash_4 \Gammaslash^C_{AB} =
	& \ 
	e_4 \left( \Gammaslash^C_{AB} \right) - \left( {\chi_A}^D - \nablaslash_A b^D + b^E \Gammaslash_{AE}^D \right) \Gammaslash_{DB}^C
	\\
	& 
	- \left( {\chi_B}^D - \nablaslash_B b^D + b^E \Gammaslash_{BE}^D \right) \Gammaslash_{AD}^C 
	+ \left( {\chi_D}^C - \nablaslash_D b^C + b^E \Gammaslash_{DE}^C \right) \Gammaslash_{AB}^D,
	\\
	\nablaslash_D \Gammaslash^C_{AB} =
	& \ 
	e_D \left( \Gammaslash^C_{AB} \right) -\Gammaslash_{DA}^E \Gammaslash_{EB}^C - \Gammaslash_{DB}^E \Gammaslash_{AE}^C + \Gammaslash_{DE}^C \Gammaslash_{AB}^E,
\end{align*}

Finally, note also that \eqref{eq:bba} is consistent with $b = \Gamma_1$.  Since $b$ is only estimated on an outgoing null hypersurface at the top order though, the Sobolev Inequalities of the next section only allow us to conclude
\[
	 r^{\frac{1}{2}} \vert \mathfrak{D} r \nablaslash b \vert \leq C,
\]
unlike at lower orders where the Sobolev inequalities will give,
\[
	 r\vert b \vert, r \vert \mathfrak{D} b \vert \leq C.
\]
Here and throughout the remainder of the paper $C$ will denote a numerical constant which can change from line to line.

\subsection{The Bootstrap Theorem}

Theorem \ref{thm:main3} will follow from the following bootstrap theorem, Theorem \ref{thm:main4}, via a \emph{last slice} argument.

\begin{theorem} \label{thm:main4}
	There exist $\bc$, $\overline{C}$ small and $v_0$ large such that the following is true.  Given initial data satisfying the restrictions of Theorem \ref{thm:main3}, let $\mathcal{A}$  denote a characteristic rectangle of the form $\mathcal{A} = [u_0,u']\times [v_0, v'] \times S^2 \subset \mathcal{M}$, with $u_0 < u' \leq u_f$, $v_0 < v' <\infty$, such that a solution to the Einstein--Vlasov system \eqref{eq:EV}--\eqref{eq:vlas}, attaining the given data, exists in $\mathcal{A}$ and, for any $x \in \mathcal{A}$, the bootstrap assumptions \eqref{eq:Ricciba}--\eqref{eq:Gammaslashbanull} hold for $(u,v) = (u(x),v(x))$.
	
	If $x\in \mathcal{A}$, then the bounds \eqref{eq:Ricciba}--\eqref{eq:Gammaslashbanull} in fact hold at $x$ with the constant $\overline{C}$ replaced by $\frac{\overline{C}}{2}$.
\end{theorem}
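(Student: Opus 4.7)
The plan is to improve each family of bootstrap assumptions in turn, in a reductive order so that each estimate relies only on bounds already recovered. Smallness in the improved constants will be harvested from two sources: the initial data size $\bc$, accessed when integrating from $\{v=v_0\}$ or $\{u=u_0\}$, and $1/v_0$, accessed whenever an error term decays strictly faster in $r$ than the threshold dictated by the $p$-index notation. At every stage the bootstrap assumptions \eqref{eq:Ricciba}--\eqref{eq:Gammaslashbanull} are in force with constant $\overline{C}$, so all nonlinear error terms are controlled by some power of $\overline{C}$, and Sobolev inequalities on the $S_{u,v}$ spheres (Section \ref{section:Sobolev}) promote the $L^2$ bounds to pointwise control $r^p|\Gamma_p|\lesssim \overline{C}^{1/2}$ at lower orders.

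First I would fix the null geodesic geometry and the support of $f$. Using the pointwise $\Gamma$-bounds, a secondary bootstrap along each null geodesic in $\supp f$ propagates the estimates \eqref{eq:intromomentum} from $\{v=v_0\}$ via the geodesic equations, as in Section \ref{section:suppf}. This yields the decay of $\supp(f|_{P_x})\subset P_x$ as $v\to\infty$ and shows $|u(s)-u(0)|\leq C$, fixing $u_f$ with the advertised property. At zeroth order, $r^p|\mathcal{T}_p|\lesssim \bc$ then follows from the decay of $\supp(f|_{P_x})$, the conservation of $f$ along the geodesic flow (so $f$ is bounded pointwise by its initial size $\bc$), and the mass-shell volume form of Section \ref{subsec:massshell}.

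Next I would improve the higher-derivative energy-momentum bounds \eqref{eq:emba}, then the Weyl bounds \eqref{eq:curvatureba}, and finally the Ricci bounds. For $\mathcal{T}$ I follow the Sasaki--Jacobi scheme of Section \ref{subsec:introf}: for $V\in T_{(x,p)}P$, Proposition \ref{prop:jacobi} identifies $Vf(x,p)$ with $(Jf)(\exp_s(x,p))$ for a Jacobi field $J$, so that $V^k f$ is reduced to a $k$-fold derivative of $f|_{\{v=v_0\}}$ applied to vectors whose components are controlled by integrating the Jacobi equation $\hat{\nabla}_X\hat{\nabla}_X J=\hat{R}(X,J)X$ along the geodesic, with $\hat{R}$ supplied by Proposition \ref{prop:curvaturemassshell}. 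For the Weyl curvature, commuting the Bianchi equations by up to three operators from $\{\nablaslash_3, r\nablaslash_4, r\nablaslash\}$ through Proposition \ref{prop:commutedBianchi}, contracting each pair $(\uppsi_p, \uppsi_{p'}')$ against itself weighted by the $r^w$ that cancels the $\gamma\tr\chi$ contribution, and summing produces, after cancellation of the angular-derivative terms, the integrated identity \eqref{eq:introenid} yielding
\[
F^1_{v_0,v}(u)+F^2_{u_0,u}(v) \leq C\bc + \frac{C\overline{C}}{v_0} + \int_{u_0}^u F^1_{v_0,v}(u')\,du',
\]
so that Grönwall closes \eqref{eq:curvatureba}. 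Finally the Ricci bounds are recovered via Proposition \ref{prop:commutednullstructure}: the $\Riccif$ by integrating $\nablaslash_4(r^p\mathfrak{D}^k\Riccif)=r^p E_4[\mathfrak{D}^k\Riccif]$ from $\{v=v_0\}$, the $\Riccit$ by integrating $\nablaslash_3(\mathfrak{D}^k\Riccit)=E_3[\mathfrak{D}^k\Riccit]$ from $\{u=u_0\}$ in a reductive order that reduces each remaining error to the form $E_{p+1}+E^0_p$ (where $E^0_p$ only contains quantities estimated previously, so the new contribution carries the $1/v_0$ factor), and the top-order bounds \eqref{eq:bba}, \eqref{eq:Gammaslashbanull} by the elliptic--transport scheme of Section \ref{section:Riccitop} built on \eqref{eq:chibarhat4}, \eqref{eq:chihat3} and the Codazzi equations \eqref{eq:Codazziout}--\eqref{eq:Codazziin}.

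The hard part is the estimates for derivatives of $f$. Naive commutation of the Vlasov equation with $\nablaslash_3, r\nablaslash_4, r\nablaslash$ generates errors of the form $V(\Gamma^{\mu}_{\nu\lambda})$; some, notably two angular derivatives of $b$ and of the spherical Christoffel symbols $\Gammaslash^C_{AB}$, do not live at the level of $\psi$, so the estimates cannot close by this route at the required regularity. The Sasaki--Jacobi method replaces these commutator terms by components of $\hat{R}$, whose $\nabla R$ factors only ever appear in the direction $\nabla_Z R$ accessible to the Step 3 transport estimates; a schematic expansion in the $p$-index notation then has to verify that the $p^\mu(s)$ factors always pair with components of $\psi, \mathcal{T}$ carrying enough $r$-decay to render the right-hand side of the Jacobi equation twice integrable along $s\mapsto\exp_s(x,p)$. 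Checking that this structural null condition survives commutation of the Jacobi equation itself, needed to compute $V_2 V_1 f$ and higher-order iterates, is the technical heart of the proof; once it is in place, the remaining steps assemble in the reductive order described above and the bootstrap closes with constants arbitrarily small in $\bc + 1/v_0$, improving $\overline{C}$ to $\overline{C}/2$.
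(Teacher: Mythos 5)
Your proposal is correct and follows essentially the same route as the paper: the same ordering of estimates (support of $f$ and Sobolev inequalities first, then the energy momentum tensor via the Sasaki--Jacobi scheme, then the Bianchi-pair energy estimates with Gr\"onwall, then the reductive transport estimates for the Ricci coefficients and the elliptic--transport argument at top order), with smallness harvested from $\bc$ and $1/v_0$ exactly as in Sections \ref{section:Sobolev}--\ref{section:Riccitop}. The one step you flag as the technical heart --- verifying that the null structure of $\hat{R}(X,J)X$ survives commutation of the Jacobi equation --- is indeed where the bulk of the paper's work lies (Sections \ref{subsec:lowerjacobi}--\ref{subsec:higherjacobi}), but your identification of it and of the mechanism that makes it work is accurate.
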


Sections \ref{section:Sobolev}--\ref{section:Riccitop} are devoted to the proof of Theorem \ref{thm:main4}, which follows from Propositions \ref{prop:tmain}, \ref{prop:curvmain}, \ref{prop:Ricci4main}, \ref{prop:Ricci3main}, \ref{prop:Gammaslashmain}, \ref{prop:bmain}.  The proof of Theorem \ref{thm:main3}, using a last slice argument, is outlined in Section \ref{section:lastslice}.

\section{Sobolev Inequalities} \label{section:Sobolev}
The Sobolev inequalities shown in this section will allow one to obtain $L^{\infty}$ estimates on the spheres for quantities through the $L^2$ bootstrap estimates \eqref{eq:Ricciba}--\eqref{eq:Gammaslashbanull}.  They are shown to hold in the setting of Theorem \ref{thm:main4}, i.\@ e.\@ for $x \in \mathcal{A}$, and are derived from the \emph{isoperimetric inequality} for each sphere $S_{u,v}$: if $f$ is a function which is integrable on $S_{u,v}$ with integrable derivative, then $f$ is square integrable and
\begin{equation} \label{eq:iso}
	\int_{S_{u,v}} \left( f - \overline{f} \right)^2 d \mu_{S_{u,v}} \leq I(S_{u,v}) \left( \int_{S_{u,v}} \vert \nablaslash f \vert d \mu_{S_{u,v}} \right)^2,
\end{equation}
where
\[
	\overline{f} := \frac{1}{\area (S_{u,v})} \int_{S_{u,v}} f d \mu_{S_{u,v}},
\]
denotes the average of $f$ on $S_{u,v}$, and $I(S_{u,v})$ denotes the isoperimetric constant of $S_{u,v}$:
\[
	I(S_{u,v}) := \sup_{U} \frac{\min \{\area(U),\area(U^c)\}}{\left( \mathrm{Perimeter} (\partial U) \right)^2},
\]
where the supremum is taken over all domains $U \subset S_{u,v}$ with $C^1$ boundary $\partial U$ in $S_{u,v}$.

The following Sobolev inequalities are standard, see e.\@g.\@ Chapter 5.2 of \cite{Ch}.
\begin{lemma} \label{lem:sob1}
	Given a compact Riemannian manifold $(S,\gslash)$, let $\xi$ be a tensor field on $S$ which is square integrable with square integrable first covariant derivative.  Then $\xi \in L^4(S)$ and
	\[
		\frac{1}{(\area(S))^{1/4}} \left( \int_S \vert \xi \vert^4 d\mu_S \right)^{1/4} \leq C \sqrt{I'(S)} \left( \int_S \vert \nablaslash \xi \vert^2 + \frac{1}{\area (S)} \vert \xi\vert^2 d\mu_S \right)^{1/2},
	\]
	where $I'(S) := \max \{ I(S), 1\}$, and $C$ is a numerical constant independent of $(S,\gslash)$.
\end{lemma}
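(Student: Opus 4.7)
The plan is to apply the isoperimetric inequality \eqref{eq:iso} to the scalar function $f = |\xi|^2$, which, via Kato's inequality in the form $|\nablaslash |\xi|^2| \leq 2|\xi||\nablaslash \xi|$ (valid pointwise away from the zero set of $\xi$, which is all that is needed as the integrands are $L^1$), reduces a Sobolev-type estimate for the tensor $\xi$ to a first-derivative estimate for a scalar.

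Concretely, I first expand
\[
\int_S \left( |\xi|^2 - \overline{|\xi|^2} \right)^2 d\mu_S = \int_S |\xi|^4 d\mu_S - \frac{1}{\area(S)} \left( \int_S |\xi|^2 d\mu_S \right)^2,
\]
so that \eqref{eq:iso} applied to $|\xi|^2$ together with Kato and Cauchy--Schwarz yields
\[
\int_S |\xi|^4 d\mu_S \leq 4 I(S) \int_S |\xi|^2 d\mu_S \cdot \int_S |\nablaslash \xi|^2 d\mu_S + \frac{1}{\area(S)} \left( \int_S |\xi|^2 d\mu_S \right)^2.
\]
Next, writing $A := \int_S |\nablaslash\xi|^2 d\mu_S$ and $B := \frac{1}{\area(S)}\int_S |\xi|^2 d\mu_S$, the right-hand side becomes $\area(S)\bigl( 4 I(S) A B + B^2 \bigr)$, which I bound by $2\,\area(S)\, I'(S)\, (A+B)^2$ using $4AB \leq (A+B)^2$ and $I(S) \leq I'(S)$. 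Taking fourth roots and dividing by $\area(S)^{1/4}$ then produces the claimed inequality, with the constant $(I'(S))^{1/4}$ in place of $\sqrt{I'(S)}$; since $I'(S)\geq 1$, the stated bound follows.

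There is no real obstacle here: the argument is entirely pointwise/integral manipulation once Kato's inequality and the isoperimetric inequality \eqref{eq:iso} are in hand. The only delicate point worth remarking on is the use of Kato's inequality for a tensor $\xi$ on the sphere at points where $|\xi|$ vanishes; this is handled in the standard way by replacing $|\xi|$ with $(|\xi|^2 + \varepsilon)^{1/2}$, running the argument, and letting $\varepsilon \to 0$ by monotone/dominated convergence.
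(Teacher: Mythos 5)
Your argument is correct and is precisely the standard proof the paper points to (Chapter 5.2 of \cite{Ch}): apply the isoperimetric inequality \eqref{eq:iso} to the scalar $\vert\xi\vert^2$, use $\vert\nablaslash\vert\xi\vert^2\vert\leq 2\vert\xi\vert\vert\nablaslash\xi\vert$ and Cauchy--Schwarz, and absorb the resulting product into $(A+B)^2$. The only superfluous step is the $\varepsilon$-regularisation at the end: since you differentiate $\vert\xi\vert^2$ rather than $\vert\xi\vert$, the identity $\nablaslash\vert\xi\vert^2 = 2(\xi,\nablaslash\xi)$ holds everywhere and the pointwise bound is just Cauchy--Schwarz, with no issue at the zero set of $\xi$.
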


\begin{lemma} \label{lem:sob2}
	If $\xi$ is a tensor field on $S$ such that $\xi,\nablaslash \xi \in L^4(S)$, then
	\[
		\sup_S \vert \xi \vert \leq C \sqrt{I'(S)} (\area(S))^{1/4} \left( \int_S \vert \nablaslash \xi \vert^4 + \frac{1}{(\area (S))^{2}} \vert \xi\vert^4 d\mu_S \right)^{1/4},
	\]
	where again $C$ is independent of $(S,\gslash)$.
\end{lemma}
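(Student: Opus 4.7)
The plan is to deduce this from Lemma \ref{lem:sob1} by a Moser-type iteration, upgrading the two-dimensional Sobolev embedding $W^{1,2}(S)\hookrightarrow L^4(S)$ to the Morrey-type embedding $W^{1,4}(S)\hookrightarrow L^{\infty}(S)$.

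First, I would apply Lemma \ref{lem:sob1} to the scalar function $\vert\xi\vert^q$ for $q\geq 1$. Kato's inequality $\vert\nablaslash\vert\xi\vert^q\vert\leq q\vert\xi\vert^{q-1}\vert\nablaslash\xi\vert$ together with Cauchy--Schwarz
\[
	\int_S \vert\xi\vert^{2q-2}\vert\nablaslash\xi\vert^2\,d\mu_S \leq \Vert\nablaslash\xi\Vert_{L^4(S)}^2\,\Vert\xi\Vert_{L^{4q-4}(S)}^{2q-2}
\]
yields a self-improving recursion controlling $\Vert\xi\Vert_{L^{4q}(S)}$ by $\Vert\xi\Vert_{L^{2q}(S)}$, $\Vert\xi\Vert_{L^{4q-4}(S)}$ and $\Vert\nablaslash\xi\Vert_{L^4(S)}$, with multiplicative weights involving $\sqrt{I'(S)}$ and $\area(S)^{1/4}$.

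Second, I would iterate along a geometric sequence of exponents, for example $q_k=2^k$, take logarithms, and sum. Because the underlying dimension is two and $4>2$, the exponents gain genuinely at each step, so the resulting sum of logarithms converges to a finite numerical constant. Passing to the limit $k\to\infty$ produces the claimed $L^{\infty}$ estimate. The weights $\sqrt{I'(S)}$ and $\area(S)^{1/4}$ must collapse into a single factor on the right-hand side; this is forced by the scale invariance of the inequality under $\gslash\mapsto\lambda^2\gslash$ (with $\vert\xi\vert$ treated as a scalar), which singles out the stated dimensionally consistent form.

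The main obstacle is the careful bookkeeping of constants through the infinite iteration. This is handled cleanly by first normalising the metric so that $\area(S)=1$, carrying out the iteration in dimensionless form (so that only the product of a geometric series of factors needs to be estimated), and then rescaling back at the end. The resulting constant is numerical and independent of $(S,\gslash)$, as claimed. An alternative, essentially equivalent route would be to apply Lemma \ref{lem:sob1} directly to $\vert\xi\vert^2$ to pass from $L^4$ to $L^8$, then a finite Morrey-type argument exploiting the isoperimetric inequality \eqref{eq:iso} and the coarea formula on level sets of $\vert\xi\vert$ to close at $L^{\infty}$.
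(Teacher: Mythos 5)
The paper does not actually prove this lemma: both Sobolev inequalities are stated as standard with a pointer to Chapter 5.2 of \cite{Ch}, so your Moser iteration is a genuinely different route from the one the paper relies on. The usual proof behind that citation is a one-step argument directly from the isoperimetric inequality \eqref{eq:iso}: apply it to the superlevel sets $\{\vert\xi\vert>t\}$ to get $A(t)^{1/2}\leq\sqrt{I(S)}\,P(t)$ once $A(t)\leq\tfrac12\area(S)$, write $\sup_S\vert\xi\vert-t_0=\int_{t_0}^{\sup\vert\xi\vert}dt\leq\sqrt{I(S)}\int_{t_0}^{\sup\vert\xi\vert}P(t)A(t)^{-1/2}\,dt$, convert the last integral by the coarea formula into $\int_S\vert\nablaslash\vert\xi\vert\vert\,A(\vert\xi\vert)^{-1/2}d\mu_S$, and apply H\"older with exponents $4$ and $4/3$; the $L^{4/3}$ factor is computed exactly (it equals $3^{3/4}A(t_0)^{1/4}$), producing $(\area(S))^{1/4}\Vert\nablaslash\xi\Vert_{L^4}$, while $t_0\leq(2/\area(S))^{1/4}\Vert\xi\Vert_{L^4}$ by Chebyshev. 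That argument is shorter, needs no iteration, and yields the stated powers of $I'(S)$ and $\area(S)$ at once; your route buys generality (it is the $W^{1,p}\hookrightarrow L^\infty$, $p>n$, mechanism in any dimension) at the cost of the iteration bookkeeping.

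There is one step where your sketch, as written, does not close. With $q_k=2^k$ the recursion bounds $\Vert\xi\Vert_{L^{4q_k}}$ by $\Vert\xi\Vert_{L^{2q_k}}=\Vert\xi\Vert_{L^{4q_{k-1}}}$ (already controlled) \emph{and} by $\Vert\xi\Vert_{L^{4q_k-4}}$, which is not among the previously controlled norms: it lies strictly between $L^{2q_k}$ and $L^{4q_k}$. Bounding it crudely by $\Vert\xi\Vert_{L^{4q_k}}$ (legitimate after normalising $\area(S)=1$) leaves $\Vert\xi\Vert_{L^{4q_k}}^{q_k}\leq Cq_k\Vert\nablaslash\xi\Vert_{L^4}\Vert\xi\Vert_{L^{4q_k}}^{q_k-1}+\dots$, whose absorption only gives $\Vert\xi\Vert_{L^{4q_k}}\leq Cq_k\Vert\nablaslash\xi\Vert_{L^4}+\dots$ with $Cq_k\to\infty$. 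The fix is the interpolation $\Vert\xi\Vert_{L^{4q-4}}^{q-1}\leq\Vert\xi\Vert_{L^{2q}}\Vert\xi\Vert_{L^{4q}}^{q-2}$ followed by Young's inequality to absorb $\Vert\xi\Vert_{L^{4q}}^{q-2}$ into the left-hand side; after that the factors $(Cq_k^2)^{1/(2q_k)}$ multiply to a finite numerical constant and the limit $k\to\infty$ goes through. Note also that scale invariance under $\gslash\mapsto\lambda^2\gslash$ fixes the power of $\area(S)$ but says nothing about the dimensionless $I'(S)$; that the iteration reproduces $\sqrt{I'(S)}$ rather than a higher power comes from starting it at $q=2$ (so that $\sum_{k\geq1}q_k^{-1}=1$), not from scaling.
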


Under the assumption that the components of $\gslash$ satisfy,
\[
	\left\vert \gslash_{AB} - \gslash^{\circ}_{AB} \right\vert \leq C r,
\]
for some constant $C>0$.  It follows that 
\[
	\vert \area(S_{u,v}, \gslash) - 4 \pi r^2 \vert = \vert \area(S_{u,v}, \gslash) - \area(S_{u,v}, \gslash^{\circ}) \vert < C r
\]
and hence there exist constants $c,C >0$ such that
\begin{equation} \label{eq:arearadius}
	cr(u,v) \leq \sqrt{\area(S_{u,v})} \leq Cr(u,v).
\end{equation}
Using this fact, Lemma \ref{lem:sob1} and Lemma \ref{lem:sob2} can be combined to give
\begin{equation} \label{eq:sobmain}
	r \Vert \xi \Vert_{L^{\infty}(S_{u,v})} \leq C I'(S_{u,v}) (\Vert r\nablaslash r\nablaslash \xi \Vert_{L^2(S_{u,v})} + \Vert r\nablaslash \xi \Vert_{L^2(S_{u,v})} + \Vert \xi \Vert_{L^2(S_{u,v})} ).
\end{equation}

Thus, in order to gain global pointwise control over the Ricci coefficients, curvature components and energy momentum tensor components, it remains to gain control over the isoperimetric constants $I(S_{u,v})$.  We first show the above bounds on the components of $\gslash$ hold under the following bootstrap assumptions.  Let $\mathcal{A}' \subset \mathcal{A}$ be the set of points $x \in \mathcal{A}$ such that,
\begin{align}
	\Omega \leq C_0, \label{eq:auxba1}
	\\
	\left\vert \tr \chibar + \frac{2}{r} \right\vert \leq \frac{C_0}{r^2}, \label{eq:auxba2}
	\\
	\left\vert \hat{\chibar} \right\vert \leq \frac{C_0}{r}, \label{eq:auxba3}
\end{align}
for some constant $C_0$, for all points $y \in \mathcal{A}$ such that $u(y) \leq u(x)$, $v(y) \leq v(x)$.

\begin{proposition} \label{prop:gslash}
	If $x\in \mathcal{A}'$ then, in each of the two spherical charts defined in Section \ref{subsec:coords},
	\[
		\left\vert \gslash_{AB} - \gslash^{\circ}_{AB} \right\vert \leq Cr,
		\quad \text{and} \quad
		\left\vert \gslash^{AB} - \gslash_{\circ}^{AB} \right\vert \leq \frac{C}{r^3},
	\]
	at $x$.  In particular,
	\[
		\left\vert \gslash_{AB} \right\vert \leq Cr^2,
		\quad \text{and} \quad
		\left\vert \gslash^{AB} \right\vert \leq \frac{C}{r^2}.
	\]
\end{proposition}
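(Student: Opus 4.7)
The plan is to integrate the first variational formula \eqref{eq:firstvar3} along integral curves of $\partial_u$ at fixed $(v,\theta^1,\theta^2)$, starting from $\{u=u_0\}$ where the initial data bounds of Theorem~\ref{thm:main3} give the statement, and then close a continuity argument in $u$. Since $e_3 = \Omega^{-2}\partial_u$ has no angular component, \eqref{eq:firstvar3} reads in coordinates as $\partial_u \gslash_{AB} = 2\Omega^2 \chibar_{AB}$. Splitting $\chibar = \hat{\chibar} + \tfrac{1}{2}\tr\chibar\,\gslash$ and subtracting the identity $\partial_u \gslash^{\circ}_{AB} = \tr\chibar_{\circ}\gslash^{\circ}_{AB}$, which follows from $\partial_u r = -1$ and $\tr\chibar_{\circ} = -2/r$, I obtain a linear ODE for $h_{AB} := \gslash_{AB} - \gslash^{\circ}_{AB}$ of the form $\partial_u h_{AB} = 2\Omega^2 \hat{\chibar}_{AB} + \Omega^2\tr\chibar\,h_{AB} + (\Omega^2\tr\chibar - \tr\chibar_{\circ})\gslash^{\circ}_{AB}$.

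At fixed $(v,\theta^1,\theta^2)$ let $I \subset [u_0, u(x)]$ be the set of $u^*$ for which $|\gslash_{AB}| \leq 4 M_\gamma r^2$ and $|\gslash^{AB}| \leq 4 M_{\gamma^{-1}} r^{-2}$ hold on $[u_0,u^*]$, where $M_\gamma, M_{\gamma^{-1}}$ are fixed upper bounds for the entries of $\gamma, \gamma^{-1}$ in the chart. The initial data assumption places $u_0 \in I$ and $I$ is closed by continuity, so it suffices to improve these bounds strictly on $I$. Inside $I$, the defining hypotheses \eqref{eq:auxba1}--\eqref{eq:auxba3} of $\mathcal{A}'$ give $\Omega^2 \leq C_0^2$, $|\tr\chibar| \leq C/r$, and $|\Omega^2\tr\chibar - \tr\chibar_{\circ}| \leq |\Omega^2-1||\tr\chibar_{\circ}| + \Omega^2|\tr\chibar-\tr\chibar_{\circ}| \leq C/r$. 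The delicate step is converting $|\hat{\chibar}|_{\gslash} \leq C_0/r$ into a coordinate-component bound: viewing $\gslash, \hat{\chibar}$ as $2\times 2$ matrices, a routine matrix manipulation gives $|\hat{\chibar}_{AB}| \leq C \|\gslash\|_{\mathrm{op}} |\hat{\chibar}|_{\gslash}$, and the bootstrap yields $\|\gslash\|_{\mathrm{op}} \leq C r^2$, hence $|\hat{\chibar}_{AB}| \leq C r$.

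Substituting into the ODE, $|\partial_u h_{AB}| \leq C r + (C/r)|h_{AB}|$. Because $u$ ranges over the bounded interval $[u_0,u_f]$ and $r \geq v_0 - u_f + r_0$ is large for $v_0$ large, a Gronwall argument combined with the initial bound $|h_{AB}(u_0)| \leq C r(u_0,v) \leq C r(u,v)$ produces $|h_{AB}| \leq C r$ on $[u_0,u^*]$. The companion bound $|\gslash^{AB} - \gslash_{\circ}^{AB}| \leq C/r^3$ then follows from a Neumann series: writing $\gslash = r^2(\gamma + r^{-2}h)$, the correction $r^{-2}h$ is of order $1/r$ in operator norm for $v_0$ large, so $(\gamma + r^{-2}h)^{-1} = \gamma^{-1} + O(1/r)$. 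Both improved bounds are strictly inside the bootstrap thresholds once $v_0$ is sufficiently large, so $I$ is open and hence equals $[u_0,u(x)]$.

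The main obstacle I anticipate is the circular dependence between the coordinate bound on $\gslash$, which is what is to be proved, and the coordinate bound on $\hat{\chibar}_{AB}$, which is needed as input to the ODE but supplied to us only as a tensor norm. The bootstrap introduced in the second step is precisely what breaks this circularity, at the cost of needing $v_0$ large enough that the conclusion $|h_{AB}| \leq C r$ sits strictly inside the bootstrap window; apart from this, the argument is a direct Gronwall estimate along the $\partial_u$ curves.
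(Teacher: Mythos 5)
Your argument is correct, and at its core it is the same strategy as the paper's: integrate the first variation formula $\mathcal{L}_{e_3}\gslash = 2\chibar$ in $u$ over the bounded interval $[u_0,u_f]$, convert the tensor-norm hypotheses \eqref{eq:auxba1}--\eqref{eq:auxba3} into bounds on coordinate components, and close with Gr\"onwall. The difference is in how the circularity you correctly identify (needing $\vert\gslash_{AB}\vert \leq Cr^2$ to bound $\vert\hat{\chibar}_{AB}\vert$, which in turn feeds the ODE for $\gslash_{AB}$) is broken. You break it with an explicit continuity argument on the thresholds $\vert\gslash_{AB}\vert \leq 4M_\gamma r^2$, $\vert\gslash^{AB}\vert\leq 4M_{\gamma^{-1}}r^{-2}$, and recover the inverse metric by a Neumann series; this costs a largeness condition on $v_0$ to re-enter the bootstrap window, but makes the logic transparent. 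The paper instead avoids any auxiliary bootstrap: it first integrates the \emph{trace} of the first variation formula (which needs only the scalar bounds on $\Omega$ and $\tr\chibar+\tfrac{2}{r}$) to pin down $\det(\gslash/r^2)$ up to constants, then uses an eigenvalue/cofactor inequality to get $\sum_{A,B}\vert\chibar_{AB}\vert^2 \leq 2\vert\chibar\vert^2\sum_{A,B}\vert\gslash_{AB}\vert^2$, so that the Gr\"onwall inequality for $\sum_{A,B}\vert\gslash_{AB}(u)-\gslash_{AB}(u_0)\vert$ is \emph{linear in the unknown} with integrable coefficient $\vert\chibar\vert\sim 1/r$ and closes in one pass; the bound on $\gslash^{AB}$ then follows from the component bounds together with the determinant estimate rather than from a Neumann series. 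Both routes are sound; yours is slightly longer but arguably more self-explanatory, and your normalisation of the ODE against $\gslash^{\circ}_{AB}$ (rather than against the initial value $\gslash_{AB}(u_0)$ as in the paper) is a harmless variant. One cosmetic remark: with the frame \eqref{eq:nullframe}, $e_3=\Omega^{-2}\partial_u$, your coefficient $2\Omega^2\chibar_{AB}$ in $\partial_u\gslash_{AB}$ is the consistent one; the factor $2\Omega$ appearing in the paper's displayed computation is immaterial to the estimate since only boundedness of $\Omega$ is used.
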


\begin{proof}
	Recall the first variation formula \eqref{eq:firstvar3} which implies that,
	\begin{align*}
		\partial_u \left( r^{-2} \gslash_{AB} \right)
		=
		\frac{2}{r^{3}} \gslash_{AB} + \frac{2\Omega}{r^2} \chibar_{AB},
	\end{align*}
	and hence,
	\[
		\partial_u \left( \log \det \frac{\gslash}{r^2} \right) 
		= 
		r^2 \gslash^{AB} \partial_u \left( r^{-2} \gslash_{AB} \right)
		=
		2 \Omega \left( \tr \chibar + \frac{2}{r} \right) + \frac{4}{r} \left( 1 - \Omega \right).
	\]
	This gives,
	\[
		e^{- \int_{u_0}^u \left\vert 2 \Omega \left( \tr \chibar + \frac{2}{r} \right) + \frac{4}{r} \left( 1 - \Omega \right) \right\vert du'}
		\leq
		\frac{\det \frac{\gslash}{r^2}(u)}{\det \frac{\gslash}{r^2}(u_0)}
		\leq
		e^{\int_{u_0}^u \left\vert 2 \Omega \left( \tr \chibar + \frac{2}{r} \right) + \frac{4}{r} \left( 1 - \Omega \right) \right\vert du'},
	\]
	and hence, using the assumption that
	\[
		c \det \gamma \leq \det \frac{\gslash}{r^2}(u_0) \leq C \det \gamma,
	\]
	for some constants $C,c >0$, where $\gamma$ is the round metric, and the bootstrap assumptions \eqref{eq:auxba1}--\eqref{eq:auxba3},
	\[
		c \det \gamma \leq \det \frac{\gslash}{r^2}(u) \leq C \det \gamma.
	\]
	
	Let $\lambda$ and $\Lambda$ denote the eigenvalues of $\frac{\gslash_{AB}}{r^2}$ such that $0<\lambda \leq \Lambda$.  There exists $v = (v_1,v_2)$ such that $\max\{ \vert v_1 \vert, \vert v_2 \vert \}$ and,
	\[
		\frac{\gslash}{r^2} v = \Lambda v,
	\] 
	i.\@e.\@,
	\[
		\Lambda v_1 = \frac{\gslash_{11}}{r^2} v_1 + \frac{\gslash_{12}}{r^2} v_2 
		\quad \text{ and } \quad
		\Lambda v_2 = \frac{\gslash_{21}}{r^2} v_1 + \frac{\gslash_{22}}{r^2} v_2 .
	\]
	Hence,
	\[
		\Lambda \leq 2 \sum_{A,B} \frac{\vert \gslash_{AB} \vert}{r^2},
	\]
	and
	\[
		\left\vert r^2 \gslash^{AB} \right\vert 
		= 
		\left\vert \frac{r^{-2} \gslash_{A'B'}}{\det r^{-2} \gslash} \right\vert
		=
		\frac{ \left\vert r^{-2} \gslash_{A'B'} \right\vert }{\lambda \Lambda}
		\geq
		\frac{1}{\lambda}
		\geq
		\frac{1}{\Lambda},
	\]
	i.\@e.\@
	\[
		1 \leq \Lambda \left\vert r^2 \gslash^{AB} \right\vert.
	\]
	This implies that,
	\[
		\sum_{A,B} \vert \chibar_{AB}\vert^2
		\leq
		\sum_{A,B,C,D} \vert \chibar_{AB} \vert \vert \chibar_{CD} \vert
		\leq
		\Lambda^2 r^4 \gslash^{AC} \gslash^{BD} \chibar_{AB} \chibar_{CD}
		\leq
		2 \vert \chibar \vert^2 \sum_{A,B} \vert \gslash_{AB} \vert^2.
	\]
	Returning now to the first variational formula \eqref{eq:firstvar3},
	\[
		\left\vert \gslash_{AB}(u) - \gslash_{AB}(u_0) \right\vert
		\leq
		\int_{u_0}^u \left\vert \partial_u \gslash_{AB} \right\vert du'
		\leq
		\int_{u_0}^u 2 \left\vert \Omega \chibar_{AB} \right\vert du',
	\]
	summing over $A,B$ and using the above bounds for the components of $\chibar$ this gives,
	\begin{multline*}
		\sum_{A,B} \left\vert \gslash_{AB}(u) - \gslash_{AB}(u_0) \right\vert
		\leq
		\int_{u_0}^u 4 \Omega \vert \chibar \vert \sum_{A,B} \vert \gslash_{AB} \vert d u'
		\\
		\leq
		\int_{u_0}^u 4 \Omega \vert \chibar \vert \sum_{A,B} \vert \gslash_{AB}(u') - \gslash_{AB} (u_0) \vert d u'
		+
		\int_{u_0}^u 4 \Omega \vert \chibar \vert  d u' \sum_{A,B} \vert \gslash_{AB}(u_0) \vert.
	\end{multline*}
	Using again the bootstrap assumptions \eqref{eq:auxba1}--\eqref{eq:auxba3} and the fact that,
	\[
		\sum_{A,B} \left\vert \gslash_{AB}(u_0) \right\vert \leq C r(u_0)^2,
	\]
	the Gr\"{o}nwall inequality implies,
	\[
		\sum_{A,B} \left\vert \gslash_{AB}(u) - \gslash_{AB}(u_0) \right\vert \leq Cr.
	\]
	The first result then follows from the fact that,
	\[
		\sum_{A,B} \left\vert \gslash_{AB}(u_0) - \gslash_{AB}^{\circ} \right\vert \leq Cr.
	\]
	The result for $\gslash^{AB}$ follows from this and the bounds on $\det \gslash$ above.
\end{proof}

If $\xi$ is a $(0,k)$ $S_{u,v}$ tensor such that $\vert \xi \vert \leq \frac{C}{r^p}$, then this proposition implies that the components of $\xi$ satisfy
\[
	\vert \xi_{A_1\ldots A_k} \vert \leq \frac{C}{r^{p-k}}.
\]
This fact will be used in Section \ref{section:emtensor}, together with the bootstrap assumptions and Sobolev inequalities, to give bounds on the components of the Ricci coefficients, Weyl curvature components and energy momentum tensor components.

\begin{lemma}
	If $x \in \mathcal{A}'$ then, for $u = u(x)$, $v = v(x)$,
	\[
		I(S_{u,v}) \leq \frac{1}{\pi},
	\]
	so that the constant $I'(S_{u,v})$ in Lemma \ref{lem:sob1} and Lemma \ref{lem:sob2} is equal to $1$.
\end{lemma}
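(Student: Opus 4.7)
The plan is to compare the induced metric $\gslash$ on $S_{u,v}$ with the round reference metric $\gslash^\circ = r^2\gamma$ using Proposition \ref{prop:gslash}, and to transfer the classical isoperimetric inequality for the round sphere to $\gslash$. The key numerical input is that $(S^2,\gamma)$ has isoperimetric constant exactly $\tfrac{1}{2\pi}$ (attained in the limit by hemispheres), and since the isoperimetric ratio is invariant under constant rescalings of the metric it follows that $I(S_{u,v},\gslash^\circ) = \tfrac{1}{2\pi}$. The strict inequality $\tfrac{1}{2\pi} < \tfrac{1}{\pi}$ provides slack to absorb the perturbation $\gslash - \gslash^\circ$, provided $r$ is sufficiently large throughout $\mathcal{A}'$, which is ensured by taking $v_0$ large.

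First, I would promote the additive bound $\vert \gslash_{AB} - \gslash^\circ_{AB}\vert \leq Cr$ from Proposition \ref{prop:gslash} to a pointwise multiplicative comparison of quadratic forms,
\[
\left(1 - \tfrac{C}{r}\right)\gslash^\circ(X,X) \leq \gslash(X,X) \leq \left(1 + \tfrac{C}{r}\right)\gslash^\circ(X,X),
\]
valid for every tangent vector $X$ to $S_{u,v}$. This is immediate from the fact that $\gslash^\circ_{AB} = r^2 \gamma_{AB}$ has components of size $r^2$ while the perturbation has components of size $r$, and $\gamma_{AB}$ is uniformly bounded away from degeneracy in each spherical chart. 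Being an intrinsic statement, this comparison then holds globally on $S_{u,v}$.

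Second, I would apply this comparison to the length element and the area element. The length of any piecewise $C^1$ curve and the area of any domain $U \subset S_{u,v}$ with $C^1$ boundary satisfy
\[
\area_{\gslash}(U) = \bigl(1 + O(r^{-1})\bigr)\area_{\gslash^\circ}(U), \qquad \mathrm{Perimeter}_{\gslash}(\partial U) = \bigl(1 + O(r^{-1})\bigr)\mathrm{Perimeter}_{\gslash^\circ}(\partial U),
\]
with the $O(r^{-1})$ error uniform in $U$. Substituting into the defining ratio of $I$ and taking the supremum over $U$ yields
\[
I(S_{u,v},\gslash) \leq \tfrac{1}{2\pi}\bigl(1 + O(r^{-1})\bigr).
\]

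Finally, since $r \geq v_0 - u_f + r_0$ throughout the bootstrap region, taking $v_0$ sufficiently large forces the $O(r^{-1})$ correction to be small enough that the right-hand side is bounded by $\tfrac{1}{\pi}$, which is the desired conclusion. The only real point requiring care is ensuring the multiplicative metric comparison is uniform across the two spherical charts covering $S_{u,v}$ and uniform in the choice of $U$; both follow directly from the chart-independent constant supplied by Proposition \ref{prop:gslash}, so no genuine obstacle is anticipated.
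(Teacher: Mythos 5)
Your argument is correct. The paper's own proof is only a pointer to Chapter~5 of \cite{Ch}, where the isoperimetric constant is \emph{propagated} from the initial sphere: one uses the first variation formulas to show that the induced metrics $\gslash_{u,v}$ and $\gslash_{u_0,v}$ are comparable as quadratic forms under the null flow, deduces that areas and perimeters (hence the isoperimetric ratios) change only by factors close to $1$, and then invokes the assumed bound on $I(S_{u_0,v})$. You instead compare $\gslash$ directly to the exact round metric $\gslash^{\circ}=r^2\gamma$ via Proposition \ref{prop:gslash} and use the sharp spherical isoperimetric inequality $L^2 \geq A(4\pi - A)$ to identify $I(S^2,\gamma)=\tfrac{1}{2\pi}$, letting the factor-of-two gap to $\tfrac{1}{\pi}$ absorb the $O(r^{-1})$ perturbation. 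Both routes are sound, and yours is essentially a repackaging: Proposition \ref{prop:gslash} already contains the flow comparison, so you are simply shifting the reference sphere from the initial data sphere to the round one, at the cost of needing the sharp constant for the round sphere (which Christodoulou's version avoids by hypothesizing control of the initial isoperimetric constant). The one step worth making explicit is the passage from the additive component bound $\vert \gslash_{AB}-\gslash^{\circ}_{AB}\vert \leq Cr$ to the multiplicative comparison $(1-\tfrac{C'}{r})\gslash^{\circ} \leq \gslash \leq (1+\tfrac{C'}{r})\gslash^{\circ}$: this requires $\gamma_{AB}$ to be uniformly non-degenerate in the chart coordinates, which fails at the edge of a chart in general but holds on compact subsets of $U_1,U_2$ that still cover $S^2$, and that is all the argument needs.
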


\begin{proof}
The proof proceeds as in Chapter 5 of \cite{Ch}.

\end{proof}

Combining the equation \eqref{eq:sobmain} with the bootstrap assumptions \eqref{eq:Ricciba} then gives
\begin{equation} \label{eq:Riccisob}
	\sup_{u,v} \left(r^p \Vert \mathfrak{D}^k \Gamma_p \Vert_{L^{\infty}(S_{u,v})} \right) \leq C \overline{C},
\end{equation}
for $k = 0,1$.  In particular, by taking $\overline{C}$ to be sufficiently small, the bootstrap assumptions \eqref{eq:auxba1}--\eqref{eq:auxba3} can be recovered with better constants.  Hence $\mathcal{A}' \subset \mathcal{A}$ is open, closed, connected and non-empty, and therefore $\mathcal{A}' = \mathcal{A}$.

Note that, provided $\overline{C}$ is taken to be sufficiently small, this implies that
\[
	\frac{1}{2} \leq \Omega^2 \leq 2.
\]
This fact will be used throughout.

Finally, to obtain pointwise bounds for curvature on the spheres from bounds on $F^1_{v_0,v}(u), F^2_{u_0,u}(v)$, an additional Sobolev inequality is required.

\begin{lemma} \label{lem:sobflux}
	If $\xi$ is an $S_{u,v}$ tensor then, for any weight $q$,
	\[
		\sup_{v_0 \leq v' \leq v} \Vert r^q \xi \Vert_{L^4(S_{u,v'})} \leq C \Bigg( \Vert r^q \xi \Vert_{L^4 (S_{u,v_0})}
		+ \bigg( \int_{v_0}^v r^{2q-2} \int_{S_{u,v'}} \vert r \nablaslash_4 \xi \vert^2 + \vert r \nablaslash \xi \vert^2 + \vert \xi \vert^2 d \mu_{S_{u,v'}} dv' \bigg)^{1/2} \Bigg),
	\]
	and
	\[
		\sup_{u_0 \leq u' \leq u} \Vert r^q \xi \Vert_{L^4(S_{u',v})} \leq C \Bigg( \Vert r^q \xi \Vert_{L^4(S_{u_0,v})}
		+ \bigg( \int_{u_0}^u r^{2q-1} \int_{S_{u',v}} \vert \nablaslash_3 \xi \vert^2 + \vert r \nablaslash \xi \vert^2 + \vert \xi \vert^2 d \mu_{S_{u',v}} du' \bigg)^{1/2} \Bigg).
	\]
\end{lemma}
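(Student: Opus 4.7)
The plan is to combine the sphere Sobolev inequality (Lemma \ref{lem:sob1}) with a transport-type argument along the null direction, in the spirit of the Sobolev estimates on null hypersurfaces used in \cite{ChKl}, \cite{Ch}, \cite{DHR}. I treat the first inequality; the second is analogous, with $\partial_u \int h\, d \mu_{S} = \int [\Omega^2 \nablaslash_3 h + \Omega^2 \tr \chibar\, h] d\mu_{S}$ replacing the $v$-transport identity, and with the compactness of $[u_0, u_f]$ making the $r$-growth issues trivial.

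Set $F = r^q \xi$, fix $u$, and write $\phi(v'') := \int_{S_{u,v''}} |F|^4 d\mu_{S_{u,v''}}$, so that $\phi(v'')^{1/2} = \Vert r^q \xi\Vert_{L^4(S_{u,v''})}^2$. The first step is to compute $\partial_{v''} \phi$ using the identity $\partial_v \int h\, d\mu_S = \int [\nablaslash_4 h + \tr \chi\, h]\, d\mu_S$ with $h = |F|^4$. Expanding $\nablaslash_4 |F|^4 = 2|F|^2 \nablaslash_4 |F|^2$ via the first variational formula $\mathcal{L}_{e_4} \gslash = 2\chi$ (which yields $\nablaslash_4 \gslash^{-1} = -2\chi^{-1}$), and substituting $\nablaslash_4 F = q r^{-1} F + r^q \nablaslash_4 \xi$ (using $\nablaslash_4 r = 1$), the $q$-dependence is expected to cancel. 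Together with the bootstrap bounds $|\tr \chi - 2/r| \leq C/r^2$ and $|\hat \chi| \leq C\overline{C}/r^2$, this should produce a schematic expression of the form
\begin{equation*}
	\partial_{v''} \phi = \int_{S_{u,v''}} \bigl[ c_k r^{-1} |F|^4 + O(r^{-2}) |F|^4 + 4 |F|^2 F \cdot \nablaslash_4 F \bigr] d\mu_{S_{u,v''}},
\end{equation*}
where $c_k$ is an explicit constant depending on the tensor rank of $\xi$.

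Next, I would apply Cauchy--Schwarz to the cross term,
\begin{equation*}
	\int_{S_{u,v''}} |F|^3 |\nablaslash_4 F| d\mu_{S_{u,v''}}
	\leq
	\Vert F \Vert_{L^4(S_{u,v''})}^2
	\biggl( \int_{S_{u,v''}} |F|^2 |\nablaslash_4 F|^2 d\mu_{S_{u,v''}} \biggr)^{1/2},
\end{equation*}
and invoke Lemma \ref{lem:sob1} applied to $|F|^2$ as a scalar, together with its analogue for the product $|F| \cdot |\nablaslash_4 F|$, to control $\int |F|^2 |\nablaslash_4 F|^2$ by a combination of $\Vert \nablaslash F \Vert_{L^2(S)}^2$, $\Vert \nablaslash_4 F \Vert_{L^2(S)}^2$ and $r^{-2} \Vert F \Vert_{L^2(S)}^2$. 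Using the identities $\nablaslash F = r^q \nablaslash \xi$ and $|r \nablaslash_4 \xi| \leq C(|\nablaslash_4 F| + r^{-1} |F|)$, these terms match precisely the flux weights $r^{2q-2} |r \nablaslash \xi|^2$, $r^{2q-2} |r \nablaslash_4 \xi|^2$ and $r^{2q-2} |\xi|^2$. Dividing by $2\phi^{1/2} = 2 \Vert r^q \xi \Vert_{L^4(S_{u,v''})}^2$ converts this into a bound
\begin{equation*}
	\bigl| \partial_{v''} \Vert r^q \xi \Vert_{L^4(S_{u,v''})}^2 \bigr|
	\leq
	C r^{-1} \Vert r^q \xi \Vert_{L^4(S_{u,v''})}^2
	+ C \, \mathcal{F}(v'')^{1/2},
\end{equation*}
where $\mathcal{F}(v'')$ is the flux density at $v''$.

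Finally, I would integrate this differential inequality from $v_0$ to $v'$, apply Cauchy--Schwarz in $v''$ on $\int \mathcal{F}(v'')^{1/2} dv''$ (picking up a factor depending on $\int r^{-2} dv''$, which converges since $r \sim v$ and $v_0$ is large), and use Gr\"{o}nwall for the $r^{-1}$ coefficient. The expectation is that the favorable sign of $c_k$ for tensors of positive rank, or the smallness of $1/v_0$, renders the growth factor harmless and yields the stated estimate. The main obstacle is the careful bookkeeping of the cancellation in the $r^{-1} |F|^4$ coefficient and of the precise weights appearing in the Sobolev step, so that the flux terms are recovered with exactly the weights $r^{2q-2}$ as in the statement and no spurious polynomial $r$-growth survives.
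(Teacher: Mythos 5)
The paper does not actually prove this lemma --- it is stated and used as a standard fact of the type found in Chapter 5 of \cite{Ch} --- so I am judging your sketch against the standard argument. Your overall framework (differentiate $\int_{S_{u,v''}}\vert F\vert^4\,d\mu$ along $e_4$ via the transport identity and the first variation formula, then close with a sphere Sobolev inequality) is the right one, but the two steps that actually produce the stated flux do not work as written. First, your bound for the cross term is homogeneity-inconsistent: $\int_S\vert F\vert^2\vert\nablaslash_4 F\vert^2$ is quartic in $\xi$, so it cannot be controlled by ``a combination of $\Vert\nablaslash F\Vert_{L^2}^2$, $\Vert\nablaslash_4 F\Vert_{L^2}^2$ and $r^{-2}\Vert F\Vert_{L^2}^2$'', which is quadratic; and any legitimate estimate of $\Vert\,\vert F\vert\,\vert\nablaslash_4F\vert\,\Vert_{L^2(S)}$ by H\"older plus Lemma \ref{lem:sob1} forces either $\Vert\nablaslash_4 F\Vert_{L^4(S)}$ (costing $\nablaslash\nablaslash_4\xi$) or $\Vert F\Vert_{L^\infty(S)}$ (costing two angular derivatives), neither of which appears in the stated flux. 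The correct split is $\int_S\vert F\vert^3\vert\nablaslash_4F\vert\le\Vert F\Vert^3_{L^6(S)}\Vert\nablaslash_4F\Vert_{L^2(S)}$, followed by the interpolation $\Vert F\Vert^3_{L^6(S)}\le C\Vert F\Vert^2_{L^4(S)}\left(\Vert\nablaslash F\Vert_{L^2(S)}+r^{-1}\Vert F\Vert_{L^2(S)}\right)$ --- proved exactly like Lemma \ref{lem:sob1}, by applying the isoperimetric inequality \eqref{eq:iso} to $\vert F\vert^3$ --- which yields $\Vert F\Vert^2_{L^4(S)}$ multiplied by precisely the flux density $r^{2q-2}\int_S(\vert r\nablaslash_4\xi\vert^2+\vert r\nablaslash\xi\vert^2+\vert\xi\vert^2)$.

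Second, the $c\,r^{-1}\vert F\vert^4$ term cannot be dispatched by sign or by Gr\"onwall. Its coefficient is $(2q+1-2k)\,\tr\chi$ up to errors, where $k$ is the rank, so the sign depends on both $q$ and $k$ and is unfavorable for scalars and for large $q$ (the lemma is stated for arbitrary $q$ and is applied to scalars such as $\rho$, $\sigma$); and Gr\"onwall fails outright because $\int_{v_0}^{\infty}r^{-1}\,dv''=\infty$ regardless of how large $v_0$ is, so it produces polynomial growth in $v$ incompatible with the $\sup_{v'}$ on the left. The rescue is the same mechanism as above: Lemma \ref{lem:sob1} gives $r^{-1}\Vert F\Vert^4_{L^4(S)}\le C\Vert F\Vert^2_{L^4(S)}\,r^{-2}\left(\Vert r\nablaslash F\Vert_{L^2(S)}+\Vert F\Vert_{L^2(S)}\right)^2$, again $\Vert F\Vert^2_{L^4}$ times the flux density. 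With both terms so estimated one arrives at $\Vert F\Vert^4_{L^4(S_{u,v'})}\le\Vert F\Vert^4_{L^4(S_{u,v_0})}+C\sup_{v''}\Vert F\Vert^2_{L^4(S_{u,v''})}\cdot(\mathrm{flux})$, and the conclusion follows by Young's inequality and absorption of $\sup\Vert F\Vert^4_{L^4}$ into the left-hand side --- not by your final step of integrating $\mathcal{F}(v'')^{1/2}$ in $v''$ and applying Cauchy--Schwarz, which as set up either loses a factor $(v-v_0)^{1/2}$ or shifts the weight from $r^{2q-2}$ to $r^{2q}$.
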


Lemma \ref{lem:sobflux} together with Lemma \ref{lem:sob2}, equation \eqref{eq:arearadius} and the bound on the isoperimetric constant combine to give the inequalities, 
\begin{multline*}
	\sup_{v_0 \leq v' \leq v} \Vert r^{\frac{w+3}{2}} \xi \Vert_{L^{\infty}(S_{u,v'})}
	\leq
	C \Bigg(
	\Vert r^{\frac{w+2}{2}} \xi \Vert_{L^{4}(S_{u,v_0})} + \Vert r^{\frac{w+2}{2}} r\nablaslash \xi \Vert_{L^{4}(S_{u,v_0})}
	\\
	+ \left( \int_{v_0}^v r^{w} \int_{S_{u,v'}} \vert r \nablaslash_4 r \nablaslash \xi \vert^2 + \vert (r \nablaslash)^2 \xi \vert^2 + \vert r \nablaslash_4 \xi \vert^2 + \vert r \nablaslash \xi \vert^2 + \vert \xi \vert^2 d \mu_{S_{u,v'}} dv' \right)^{\frac{1}{2}} \Bigg),
\end{multline*}
and
\begin{multline*}
	\sup_{u_0 \leq u' \leq u} \Vert r^{\frac{w+2}{2}} \xi \Vert_{L^{\infty}(S_{u',v})}
	\leq
	C \Bigg(
	\Vert r^{\frac{w+1}{2}} \xi \Vert_{L^{4}(S_{u_0,v})} + \Vert r^{\frac{w+1}{2}} r\nablaslash \xi \Vert_{L^{4}(S_{u_0,v})}
	\\
	+ \left( \int_{u_0}^u r^{w} \int_{S_{u',v}} \vert \nablaslash_3 r \nablaslash \xi \vert^2 + \vert (r \nablaslash)^2 \xi \vert^2 + \vert \nablaslash_3 \xi \vert^2 + \vert r \nablaslash \xi \vert^2 + \vert \xi \vert^2 d \mu_{S_{u',v}} du' \right)^{\frac{1}{2}} \Bigg),
\end{multline*}
for any weight $w$.  The bootstrap assumptions \eqref{eq:curvatureba} then give the following pointwise bounds on curvature,
\begin{equation} \label{eq:curvsob}
	\sup_{u,v} \left(r^p \Vert \mathfrak{D}^k \psi_p \Vert_{L^{\infty}(S_{u,v})} \right) \leq C,
\end{equation}
for $k = 0,1$.

Finally, note also that, whilst \eqref{eq:Riccisob} give the pointwise bounds
\[
	r \vert \mathfrak{D}^k b \vert \leq C,
\]
for $k = 0,1$, the bootstrap assumption \eqref{eq:bba} together with Lemma \ref{lem:sobflux} give the additional pointwise bounds,
\[
	r^{\frac{1}{2}} \vert \mathfrak{D} r \nablaslash b \vert \leq C,
\]
and \eqref{eq:Gammaslashba}, \eqref{eq:Gammaslashbanull} give,
\[
	r \vert \mathfrak{D}^k \Gammaslash \vert \leq C,
\]
for $k=0,1$, as discussed at the end of Section \ref{section:ba}.

\section{Geometry of Null Geodesics and the Support of $f$} \label{section:suppf}
The decay of the components of the energy momentum tensor come from the decay of the size of the support of $f$ in $P_x$ as $r(x) \to \infty$.  The estimates on the size of the support of $f$ form the content of this section.  It will also be shown that, provided $u_f$ is chosen suitably large, the matter is supported to the past of the hypersurface $\{ u = u_f -1\}$.  Recall that the results of this section are shown in the setting of Theorem \ref{thm:main4}, so that they hold for points $x\in \mathcal{A}$.

Throughout this section $\gamma$ will denote a null geodesic emanating from $\{v = v_0\}$ (so that $v(\gamma(0)) = v_0$) such that $(\gamma(0),\dot{\gamma}(0)) \in \supp(f)$.  The tangent vector to $\gamma$ at time $s$ can be written with respect to the double null frame as
\[
	\dot{\gamma}(s) = p^A(s) e_A + p^3(s) e_3 + p^4(s) e_4.
\]
Note that in the next section notation will change slightly ($\gamma(0)$ there will be a point in $\{v>v_0\} \cap \pi(\supp(f))$ and the parameter $s$ will always be negative).

Recall that, by assumption, $(\gamma(0),\dot{\gamma}(0)) \in \supp(f)$ implies that,
\[
	0\leq p^4(0) \leq C_{p^4}, 
	\quad 
	0 \leq r(0)^2 p^3(0) \leq C_{p^3} p^4(0), 
\]
\[
	\vert r(0)^2 p^A(0) \vert \leq C_{p^A} p^4(0) \quad \text{ for } A=1,2,
\]
for some fixed constants $C_{p^1}, \ldots, C_{p^4}$ independent of $v_0$.

The main result of this section is the following.

\begin{proposition} \label{prop:suppfmain}
	For such a geodesic $\gamma$,
	\[
		\frac{1}{2} p^4(0) \leq p^4(s) \leq \frac{3}{2} p^4(0),
		\qquad
		0 \leq r(s)^2 p^3(s) \leq 2 C_{p^3} p^4(0),
	\]
	\[
		\vert r(s)^2 p^A(s) \vert \leq 2 C_{p^A} p^4(0) \quad \text{ for } A=1,2,
	\]
	for all $s\geq 0$ such that $\gamma(s) \in \mathcal{A}$, provided $v_0$ is taken suitably large.
\end{proposition}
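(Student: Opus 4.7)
The plan is to run a bootstrap argument along the geodesic $\gamma$. Let $\data$ denote the set of $s\ge 0$ with $\gamma(s)\in \mathcal{A}$ for which
\[
  p^4(s)\le 2p^4(0),\qquad r(s)^2 p^3(s)\le 4C_{p^3} p^4(0),\qquad |r(s)^2 p^A(s)|\le 4C_{p^A} p^4(0)
\]
hold. This set is non-empty, closed, and connected from $s=0$; the goal is to show that on $\data$ the stronger conclusions of the proposition (with constants $\tfrac{1}{2},\tfrac{3}{2},2$) in fact hold, which then shows $\data$ is also open and hence equals $[0,\infty)\cap\{s:\gamma(s)\in\mathcal{A}\}$.

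First I would write out the geodesic equations $\dot p^\mu = -p^\alpha p^\beta \Gamma^\mu_{\alpha\beta}$ in the null frame using the table \eqref{eq:Riccitab1}--\eqref{eq:Riccitab5} and symmetrising in $\alpha,\beta$. The key identities are
\begin{align*}
  \dot p^4 &= -\omega (p^4)^2 - 2p^4 p^A \etabar_A - \tfrac{1}{2} p^A p^B \chibar_{AB},\\
  \dot p^3 &= \omega p^3 p^4 + (\eta_A-\etabar_A)p^3 p^A - \tfrac{1}{2}p^A p^B \chi_{AB},
\end{align*}
and an analogous one for $\dot p^A$ involving $\Gammaslash$, $\chi$, $\chibar$, $\eta$, $\etabar$. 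The crucial algebraic step is to use the mass-shell relation \eqref{eq:massshell} to rewrite $\gslash_{AB} p^A p^B = 4 p^3 p^4$, so that
\[
 \tfrac{1}{2} p^A p^B \chi_{AB} = \tr\chi\, p^3 p^4 + \tfrac{1}{2} p^A p^B \hat\chi_{AB},
\]
and similarly for $\chibar$. This separates out the non-decaying part $\tr\chi_\circ = 2/r$ and $\tr\chibar_\circ=-2/r$, which is precisely the part cancelled by differentiating the weight $r^2$.

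Next I would combine $\dot r = p^4 - \Omega^{-2} p^3$ with the above to compute
\[
  \tfrac{d}{ds}(r^2 p^3) = -r(r\tr\chi-2) p^3 p^4 - 2r\Omega^{-2}(p^3)^2 - \tfrac{r^2}{2} p^A p^B \hat\chi_{AB} + r^2 p^3 p^A(\etabar_A-\eta_A) + r^2 \omega p^3 p^4,
\]
and an analogous expression for $\tfrac{d}{ds}(r^2 p^A)$ where, using the mass-shell relation once more, the $b$-derivative and $\Gammaslash$ terms are cancelled up to lower order by the weight derivative. Using the Sobolev-based pointwise bounds $r^p|\Gamma_p|\le C\bar C$ from \eqref{eq:Riccisob}, together with the bootstrap bounds on $p^3,p^4,p^A$ which give $|p|^2_\gslash = 4p^3 p^4 \le Cp^4(0)^2/r^2$, each term is controlled by $C\bar C p^4(0)^2/r^2$, yielding
\[
 |\dot p^4|\le \tfrac{C\bar C p^4(0)^2}{r^3},\qquad \bigl|\tfrac{d}{ds}(r^2 p^3)\bigr|,\;\bigl|\tfrac{d}{ds}(r^2 p^A)\bigr|\le \tfrac{C\bar C p^4(0)^2}{r^2}.
\]
Since $\tfrac{1}{2}p^4(0)\le \dot r\le 2 p^4(0)$ (from the bootstrap and smallness of $\bar C,1/v_0$), one may change variables $ds = dr/\dot r$ and integrate from $s=0$ to obtain $|p^4(s)-p^4(0)|\le C\bar C p^4(0)/v_0^2$ and $|r^2 p^{3,A}(s)-r(0)^2 p^{3,A}(0)|\le C\bar C p^4(0)/v_0$. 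Choosing $v_0$ sufficiently large and $\bar C$ sufficiently small, these improvements beat the bootstrap constants by a factor of $2$, closing the argument.

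The main obstacle is the term $-\tfrac{1}{2}p^A p^B \chi_{AB}$ (and its counterpart with $\chibar$): the coefficient $\chi$ decays only like $1/r$, so naive estimation would yield a non-integrable contribution in the geodesic equation for $p^3$. The resolution is to isolate the Minkowski part $\tr\chi_\circ\,\gslash_{AB}/2$, convert $\gslash_{AB}p^A p^B$ to $4p^3 p^4$ by the mass-shell relation, and observe that the resulting $-\tfrac{2}{r}p^3 p^4$ is precisely the term cancelled upon differentiating the weight $r^2$ in $\tfrac{d}{ds}(r^2 p^3)$. An analogous structural cancellation must be tracked in the equation for $\tfrac{d}{ds}(r^2 p^A)$, using in addition that $b$ vanishes on $\{u=u_0\}$ and hence inherits the same decay rate as $\eta$.
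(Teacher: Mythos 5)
Your proposal follows the same route as the paper: a bootstrap along $\gamma$, the geodesic equations decomposed so that the Minkowski parts $\tr\chi_\circ,\tr\chibar_\circ$ are isolated, the mass-shell relation $\gslash_{AB}p^Ap^B=4p^3p^4$ used to convert the $\tfrac{2}{r}$-terms into $\tfrac{4}{r}p^3p^4$, the resulting term cancelled against $\tfrac{d}{ds}(r^2)=2r\dot r$ in the weighted quantities, and integration via $ds=dr/\dot r$. The identities you write for $\dot p^4$, $\dot p^3$ and $\tfrac{d}{ds}(r^2p^3)$ agree with the paper's (up to an immaterial sign on the $(\eta-\etabar)$ term), and the resulting bounds and smallness factors are the ones the paper obtains.

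There is, however, one step that does not close as written. Your bootstrap set $\data$ contains only \emph{upper} bounds on $p^4$, $r^2p^3$, $|r^2p^A|$, yet you later invoke $\tfrac12 p^4(0)\le \dot r$ ``from the bootstrap''. Since $\dot r=p^4-\Omega^{-2}p^3$, this lower bound requires a lower bound on $p^4(s)$, which your bootstrap does not provide; and without $\dot r\gtrsim p^4(0)$ you cannot conclude that $r(s)$ grows, so the integrals $\int |\dot p^4|\,ds\lesssim \int r^{-3}\,dr$ etc.\ are not justified and the argument is circular. The paper avoids this by including $\tfrac14 p^4(0)\le p^4(s)\le 2p^4(0)$ among the bootstrap assumptions (and proving the quantitative bound $p^4/\dot r\le 2$ as a separate lemma under those assumptions); you should do the same. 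A second, more minor point: in the $r^2p^A$ equation the weight derivative cancels the $-\tfrac{2\dot r}{r}p^A$ term, not the $\Gammaslash$ and $\nablaslash b$ terms --- those are simply quadratic in the decaying components $p^B$ (or come with $|\nablaslash b|\lesssim r^{-2}$) and are estimated directly, so your description of that cancellation should be corrected, though it does not affect the outcome.
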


The proof of Proposition \ref{prop:suppfmain} is obtained through a bootstrap argument, so suppose $s_1 \in (0,\infty)$ is such that,
\begin{align}
	\frac{1}{4} p^4(0)
	&
	\leq
	p^4(s)
	\leq
	2 p^4(0)
	\label{eq:suppfba1}
	\\
	r(s)^2 p^3(s)
	&
	\leq
	2 C_{p^3} p^4(0)
	\label{eq:suppfba2}
	\\
	\vert r(s)^2 p^A(s) \vert
	&
	\leq
	2 C_{p^3} p^4(0)
	\quad \text{for } A=1,2,
	\label{eq:suppfba3}
\end{align}
for all $0\leq s\leq s_1$.  Clearly the set of all such $s_1$ is a non-empty, closed, connected subset of $(0,\infty)$.  The goal is to show it is also open, and hence equal to $(0,\infty)$, by improving the constants.

The following fact, proved assuming the above bootstrap assumptions hold, is used for integrating the geodesic equations in the proof of Proposition \ref{prop:suppfmain}.

\begin{lemma} \label{lem:suppf1}
	Along such a geodesic $\gamma$,
	\[
		\frac{p^4(s)}{\dot{r}(s)} \leq 2,
	\]
	where $r(s) = r(\gamma(s))$ and $\dot{r}(s) = \frac{dr}{ds}(s)$, provided $v_0$ is taken sufficiently large and $\gamma(s) \in \mathcal{A}$.
\end{lemma}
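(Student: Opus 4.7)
The plan is to compute $\dot{r}(s)$ directly in terms of $p^3(s), p^4(s)$ using the definitions of the null frame, and then use the bootstrap assumptions \eqref{eq:suppfba1}--\eqref{eq:suppfba2} together with the pointwise bound on $\Omega$ from Section \ref{section:Sobolev} to conclude.

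First I would compute the action of the frame vectors on the coordinate function $v$ and on $u$. From the definitions $e_4 = \partial_v + b^A \partial_{\theta^A}$, $e_3 = \Omega^{-2} \partial_u$, and $e_A = \partial_{\theta^A}$, one finds $e_4(v) = 1$, $e_3(v) = 0$, $e_A(v) = 0$, and $e_3(u) = \Omega^{-2}$, while $e_4(u) = e_A(u) = 0$. Applying $\dot{\gamma}(s) = p^{\mu}(s) e_{\mu}|_{\gamma(s)}$ therefore gives
\[
 \dot v(s) = p^4(s), \qquad \dot u(s) = \frac{p^3(s)}{\Omega^2(\gamma(s))}.
\]
Since $r = v - u + r_0$, this yields the basic identity
\[
 \dot r(s) = p^4(s) - \frac{p^3(s)}{\Omega^2(\gamma(s))}.
\]

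Next, rewriting the desired ratio,
\[
 \frac{p^4(s)}{\dot r(s)} = \frac{1}{1 - \dfrac{p^3(s)}{\Omega^2(\gamma(s)) p^4(s)}},
\]
so it suffices to show that $\frac{p^3(s)}{\Omega^2 p^4(s)} \leq \frac{1}{2}$. By the bootstrap assumptions \eqref{eq:suppfba1} and \eqref{eq:suppfba2} we have $p^3(s) \leq \frac{2C_{p^3} p^4(0)}{r(s)^2}$ and $p^4(s) \geq \frac{1}{4} p^4(0)$, while the pointwise bound $\frac{1}{2} \leq \Omega^2 \leq 2$ from the Sobolev inequalities in Section \ref{section:Sobolev} (combined with the bootstrap assumption \eqref{eq:Ricciba} on $\Gamma_0 = \Omega^{-2}-1$) gives $\Omega^{-2} \leq 2$. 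Combining,
\[
 \frac{p^3(s)}{\Omega^2(\gamma(s)) p^4(s)} \leq \frac{2C_{p^3} p^4(0)/r(s)^2}{\tfrac{1}{4} p^4(0) \cdot \tfrac{1}{2}} = \frac{16 C_{p^3}}{r(s)^2}.
\]

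Finally, since $\gamma$ emanates from $\{v = v_0\}$ and $\dot v = p^4 \geq 0$ (future-directedness), one has $v(s) \geq v_0$. Because $u(\gamma(s)) \leq u_f$ throughout $\mathcal{A}$, it follows that $r(s) = v(s) - u(s) + r_0 \geq v_0 - u_f + r_0$. Taking $v_0$ sufficiently large (depending only on the fixed constants $C_{p^3}, u_f, r_0$) ensures $\frac{16 C_{p^3}}{r(s)^2} \leq \frac{1}{2}$, which gives the claim. The only subtlety is in this last step: one needs that $v_0$ is large \emph{independently} of $s_1$, which is the case since the bound on $r(s)$ depends only on fixed quantities. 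No step poses a serious obstacle; this is essentially an algebraic consequence of the bootstrap assumptions and the largeness of $v_0$.
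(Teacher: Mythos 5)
Your proposal is correct and follows essentially the same route as the paper: compute $\dot r(s) = p^4(s) - p^3(s)/\Omega^2$ from $\dot v = p^4$ and $\dot u = p^3/\Omega^2$, then use the bootstrap bounds \eqref{eq:suppfba1}--\eqref{eq:suppfba2} and $\Omega^{-2}\leq 2$ to bound $p^3/(\Omega^2 p^4)$ by $16C_{p^3}/r(s)^2$, which is small for $v_0$ large. The only difference is that you spell out why $r(s)$ is uniformly large in $\mathcal{A}$, which the paper leaves implicit.
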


\begin{proof}
	Recall that 
	\[
		\left\vert \frac{1}{\Omega^2} - 1\right\vert \leq \frac{1}{2},
	\]
	provided $\overline{C}$ is sufficiently small, and so $\Omega^2 > \frac{1}{2}$.  Since $\dot{u}(s) = \frac{p^3(s)}{\Omega^2}$ and $\dot{v}(s) = p^4(s)$, this and the bootstrap assumptions \eqref{eq:suppfba1}, \eqref{eq:suppfba2} then imply that,
	\[
		\left\vert \frac{p^4(s)}{\dot{r}(s)} \right\vert 
		=
		\left\vert \frac{p^4(s)}{p^4(s) - \frac{p^3(s)}{\Omega^2}} \right\vert
		\leq
		\frac{1}{1 - \frac{2C_{p^3} p^4(0) \frac{1}{r(s)^2}}{\frac{1}{2}{\frac{1}{4} p^4(0)}}}
		=
		\frac{1}{1 - \frac{16 C_{p^3}}{r(s)^2}}
		\leq
		2,
	\]
	provided $v_0$, and hence $r(0)$ is taken sufficiently large.
\end{proof}

\begin{proof}[Proof of Proposition \ref{prop:suppfmain}]
	The geodesic equation for $p^4$, 
	\[
		\dot{p}^4(s) + \Gamma^4_{\mu \nu}(s) p^{\mu}(s) p^{\nu}(s) = 0,
	\] 
	written using the notation for the Ricci coefficients introduced in Section \ref{subsec:alphabeta} takes the form
	\begin{align*}
		\dot{p}^4(s) = 
		&
		\ \frac{1}{2r} \gslash_{AB} p^A(s)p^B(s) - \frac{1}{4} \left( \tr \chibar + \frac{2}{r} \right) \gslash_{AB} p^A(s)p^B(s)
		\\
		&
		- \frac{1}{2} \hat{\chibar}_{AB} p^A(s)p^B(s) - 2 \etabar_A p^A(s) p^4(s) - \omega (p^4(s))^2.
	\end{align*}
	Using the fact that the pointwise bounds for $\Gamma$ give,
	\[
		\left\vert \gslash_{AB} \right\vert \leq Cr^2,
		\quad
		\left\vert \tr \chibar + \frac{2}{r} \right\vert \leq \frac{C}{r^2},
		\quad
		\left\vert \hat{\chibar}_{AB} \right\vert \leq Cr,
		\quad
		\left\vert \etabar_A \right\vert \leq \frac{C}{r},
		\quad
		\vert \omega \vert \leq \frac{C}{r^3},
	\]
	the bootstrap assumptions \eqref{eq:suppfba1}--\eqref{eq:suppfba3} then imply that,
	\[
		\left\vert \dot{p}^4(s) \right\vert
		\leq
		\frac{C \left( p^4(0) \right)^2}{r(s)^3}.
	\]
	Hence, for any $s\in[0,s_1]$,
	\[
		\left\vert p^4(s) - p^4(0) \right\vert
		\leq
		\frac{C p^4(0)}{r(0)^2}
	\]
	by Lemma \ref{lem:suppf1}, and,
	\[
		\left( 1 - \frac{C}{r(0)^2} \right) p^4(0)
		\leq
		p^4(s)
		\leq
		\left( 1 + \frac{C}{r(0)^2} \right) p^4(0).
	\]
	Taking $v_0$, and hence $r(0)$, sufficiently large then gives,
	\[
		\frac{1}{2} p^4(0)
		\leq
		p^4(s)
		\leq
		\frac{3}{2} p^4(0),
	\]
	improving the bootstrap assumption \eqref{eq:suppfba1}.
	
	Consider now the geodesic equation for $p^3(s)$,
	\begin{align*}
		\dot{p}^3(s) =
		&
		- \frac{1}{2r} \gslash_{AB} p^A(s) p^B(s) - \frac{1}{2} \hat{\chi}_{AB} p^A(s) p^B(s) - \frac{1}{4} \left( \tr \chi - \frac{2}{r} \right) \gslash_{AB} p^A(s) p^B(s)
		\\
		&
		- \left( \eta_A - \etabar_A \right) p^A(s) p^3(s) + \omega p^3(s) p^4(s).
	\end{align*}
	Recalling that,
	\[
		\dot{r}(s) = p^4(s) - \frac{p^3(s)}{\Omega^2},
	\]
	this then gives,
	\begin{align*}
		\frac{d r(s)^2 p^3(s)}{ds} =
		&
		2r \left( p^4(s) - \frac{p^3(s)}{\Omega^2} \right) p^3(s) - \frac{r}{2} \gslash_{AB} p^A(s) p^B(s) - \frac{r^2}{2} \hat{\chi}_{AB} p^A(s) p^B(s)
		\\
		&
		- \frac{r^2}{4} \left( \tr \chi - \frac{2}{r} \right) \gslash_{AB} p^A(s) p^B(s) - r^2 \left( \eta_A - \etabar_A \right) p^A(s) p^3(s) 
		\\
		&
		+ r^2 \omega p^3(s) p^4(s)
		\\
		=
		&
		-2r \frac{p^3(s)}{\Omega^2} p^3(s)  - \frac{r^2}{2} \hat{\chi}_{AB} p^A(s) p^B(s) - \frac{r^2}{4} \left( \tr \chi - \frac{2}{r} \right) \gslash_{AB} p^A(s) p^B(s)
		\\
		&
		 - r^2 \left( \eta_A - \etabar_A \right) p^A(s) p^3(s) + r^2 \omega p^3(s) p^4(s),
	\end{align*}
	where the mass shell relation \eqref{eq:massshell} has been used to obtain the cancellation.  Inserting the pointwise bounds for the components of $\Gamma$ and the bootstrap assumptions \eqref{eq:suppfba1}--\eqref{eq:suppfba3}, this gives,
	\[
		\left\vert \frac{d r(s)^2 p^3(s)}{ds} \right\vert \leq \frac{C \left( p^4(0) \right)^2}{r(s)^2}.
	\]
	Again, integrating from $s=0$ gives,
	\[
		\left\vert r(s)^2 p^3(s) - r(0)^2 p^3(0) \right\vert \leq \frac{C p^4(0)}{r(0)},
	\]
	and using the assumption on $p^3(0)$,
	\[
		r(0)^2 p^3(0) - \frac{C p^4(0)}{r(0)} 
		\leq 
		r(s)^2 p^3(s) 
		\leq 
		C_{p_3} \left( 1 + \frac{C}{r(0)} \right) p^4(0).
	\]
	If $v_0$, and hence $r(0)$ is sufficiently large this gives,
	\[
		0 
		\leq 
		r(s)^2 p^3(s) 
		\leq 
		\frac{3}{2}C_{p_3} p^4(0),
	\]
	hence improving the bootstrap assumption \eqref{eq:suppfba2}.
	
	Finally, consider the geodesic equation for $p^A(s)$, for $A=1,2$,
	\begin{align*}
		\dot{p}^A = 
		& 
		- \frac{2\dot{r}}{r} p^A - \Gammaslash^A_{BC} p^B p^C + \left( 1 - \frac{1}{\Omega^2} \right) \frac{2}{r} p^A p^3 
		\\
		& 
		- \left( \tr \chibar + \frac{2}{r} \right) p^A p^3 - \left( \tr \chi - \frac{2}{r} \right) p^A p^4 - 2 {\hat{\chibar}_B}^Ap^Bp^3
		\\
		& 
		- 2 {\hat{\chi}_B}^Ap^Bp^4 + \left( \nablaslash_{e_B}b\right)^A p^B p^4 - b^C \Gammaslash^A_{BC} p^Bp^4 - 2 p^3p^4(\eta^A + \etabar^A),
	\end{align*}
	which similarly gives,
	\begin{align*}
		\frac{d r(s)^2 p^A(s)}{ds} = 
		& 
		- r^2 \Gammaslash^A_{BC} p^B p^C + \left( 1 - \frac{1}{\Omega^2} \right) 2r p^A p^3 - r^2 \left( \tr \chibar + \frac{2}{r} \right) p^A p^3 
		\\
		& 
		- r^2 \left( \tr \chi - \frac{2}{r} \right) p^A p^4 - 2 r^2 {\hat{\chibar}_B}^Ap^Bp^3 - 2r^2 {\hat{\chi}_B}^Ap^Bp^4
		\\
		& 
		 + r^2 \left( \nablaslash_{e_B}b\right)^A p^B p^4 - r^2 b^C \Gammaslash^A_{BC} p^Bp^4 - 2 r^2 p^3p^4(\eta^A + \etabar^A),
	\end{align*}
	and using the bootstrap assumptions \eqref{eq:suppfba1}--\eqref{eq:suppfba3},
	\[
		\left\vert \frac{d r(s)^2 p^A(s)}{ds} \right\vert \leq \frac{C \left( p^4(0) \right)^2}{r(s)^2}.
	\]
	Integrating and again taking $v_0$ large similarly gives,
	\[
		\left\vert r(s)^2 p^A(s) \right\vert \leq \frac{3}{2} C_{p_A} p^4(0),
	\]
	improving the bootstrap assumption \eqref{eq:suppfba3}.
	
	The set of all $s_1$ such that \eqref{eq:suppfba1}--\eqref{eq:suppfba3} hold for all $0 \leq s \leq s_1$ is therefore a non-empty, open, closed, connected subset of $(0,\infty)$, and hence equal to $(0,\infty)$.
\end{proof}

Finally we can show that $\pi(\text{supp}(f))$ is contained in $\{ u \leq u_f - 1\}$ for some $u_f$ large.

\begin{proposition} \label{prop:uf}
	For a geodesic $\gamma$ as above,
	\[
		u(s) \leq u_f - 1,
	\]
	for all $s \geq 0$ provided $u_f$ is chosen sufficiently large and $\gamma(s) \in \mathcal{A}$.
\end{proposition}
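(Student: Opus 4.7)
The plan is to leverage Proposition \ref{prop:suppfmain} directly, converting the bounds on $p^3(s)$ and $p^4(s)$ into a bound on the total variation of $u$ along $\gamma$ by reparametrising via $r$.

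First, I would use the mass shell relation together with Lemma \ref{lem:suppf1} to show that $\dot{r}(s) \geq \tfrac{1}{4}p^4(0)$ for $v_0$ sufficiently large. Indeed, by Proposition \ref{prop:suppfmain} we have $p^4(s) \geq \tfrac{1}{2} p^4(0)$ and $p^3(s) \leq 2C_{p^3} p^4(0)/r(s)^2$, so since $\Omega^2 \geq 1/2$,
\[
\dot{r}(s) = p^4(s) - \frac{p^3(s)}{\Omega^2} \geq \frac{1}{2} p^4(0) - \frac{4C_{p^3} p^4(0)}{r(s)^2} \geq \frac{1}{4} p^4(0),
\]
once $r(0) \geq c v_0$ is large enough. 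In particular $r$ is a strictly increasing function of $s$ along $\gamma$.

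Next, since $\dot{u}(s) = p^3(s)/\Omega^2$, the bounds from Proposition \ref{prop:suppfmain} give
\[
\left| \frac{du}{dr}(s) \right| = \frac{|\dot{u}(s)|}{\dot{r}(s)} \leq \frac{2 p^3(s)}{\tfrac{1}{4} p^4(0)} \leq \frac{C}{r(s)^2}.
\]
Integrating from $s=0$ to an arbitrary $s \geq 0$ with $\gamma(s) \in \mathcal{A}$ and changing variables to $r$,
\[
|u(s) - u(0)| \leq \int_{r(0)}^{r(s)} \frac{C}{r^2} \, dr \leq \frac{C}{r(0)} \leq \frac{C}{v_0},
\]
since $r(0) \geq v_0 - u(0) + r_0 \geq c v_0$ for any initial $u(0) \in [u_0,u_f]$ with $v_0$ large relative to $u_0, r_0$ (the constant $c$ here depends only on the structure of the initial sphere foliation, not on $u_f$).

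Finally, I would invoke the support hypothesis on the initial data: by assumption $\pi(\supp(f)) \cap \underline{\mathcal{N}} \subset J^-(S)$ for a fixed 2-sphere $S \subset \underline{\mathcal{N}}$, so there exists $u_* < \infty$, depending only on the characteristic data on $\{v=v_0\}$ and independent of $u_f$, such that $u(\gamma(0)) \leq u_*$ for every $\gamma$ as above. Combining,
\[
u(s) \leq u_* + \frac{C}{v_0},
\]
and choosing $u_f$ with $u_f - 1 \geq u_* + C/v_0$ yields the claim. The only potential subtlety is ensuring $v_0$ is taken large enough so that Lemma \ref{lem:suppf1} and Proposition \ref{prop:suppfmain} apply on all of $\gamma \cap \mathcal{A}$ simultaneously with $u_f$ large; since $u_*$ and the constant $C$ above are fixed independently of $u_f$, the smallness conditions on $1/v_0$ can be imposed once and for all, and $u_f$ is then chosen afterwards.
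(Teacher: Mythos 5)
Your proposal is correct and follows essentially the same route as the paper: both use $\dot{u} = p^3/\Omega^2$ together with the bounds of Proposition \ref{prop:suppfmain}, change variables from $s$ to $r$ via the lower bound on $\dot{r}$ from Lemma \ref{lem:suppf1}, and conclude $\vert u(s)-u(0)\vert \leq C/r(0)$ before choosing $u_f$ larger than the supremum of $u$ over the initial support plus this correction. The only (harmless) difference is that you make explicit the finiteness of $\sup u$ over $\pi(\supp f)\cap\{v=v_0\}$, which the paper leaves implicit in its final displayed condition on $u_f$.
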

\begin{proof}
	Recall that $\dot{u}(s) = \frac{p^3(s)}{\Omega^2}$.  Since $\Omega^2 \geq \frac{1}{2}$, Proposition \ref{prop:suppfmain} implies that
	\[
		\vert \dot{u}(s) \vert 
		\leq 
		4 C_{p^3} \frac{p^4(0)}{r(s)^2},
	\]
	and so
	\[
		u(s) 
		\leq 
		u(0) + 4 C_{p^3} \int_0^s \frac{p^4(0)}{r(s')^2} ds' 
		\leq 
		u(0) + 64 C_{p^3} \int_{r(0)}^{r(s)} \frac{1}{r^2} dr \leq u(0) + \frac{32 C_{p^3}}{r(0)}.
	\]
	The result then holds if
	\[
		u_f > \sup_{\{v=v_0\}} \left( u + \frac{32 C_{p^3}}{r} \right) + 1.
	\]
\end{proof}

\section{Estimates for the Energy Momentum Tensor} \label{section:emtensor}
Recall the notation from Section \ref{section:equations}, and the set $\mathcal{A}$ from Theorem \ref{thm:main4}.  The main result of this section is the following.
\begin{proposition} \label{prop:tmain}
	If $x \in \mathcal{A}$ then, for $u = u(x)$, $v = v(x)$, $0\leq k\leq 2$,
	\[
		\left( r^p \vert \mathfrak{D}^k \mathcal{T}_p \vert \right) (u,v) \leq C\bc,
	\]
  for $k \leq 3$,
	\begin{equation} \label{eq:em3in}
		\int_{u_0}^u \int_{S_{u',v}} r(u',v)^{2p-2} \vert \mathfrak{D}^k \mathcal{T}_p \vert^2 d\mu_{S_{u',v}} d u' \leq C \bc,
	\end{equation}
	and,
	\begin{equation} \label{eq:em3out}
		\int_{v_0}^v \int_{S_{u,v'}} r(u,v')^{2p-4} \vert \mathfrak{D}^k \mathcal{T}_p \vert^2 d\mu_{S_{u,v'}} d v' \leq C \bc,
	\end{equation}
	and for $k\leq 4$,
	\begin{equation} \label{eq:em4}
		\int_{u_0}^u \int_{v_0}^v \int_{S_{u',v'}} r(u',v')^{2p-4} \vert \mathfrak{D}^k \mathcal{T}_p \vert^2 d\mu_{S_{u',v'}} d v' d u' \leq C \bc,
	\end{equation}
	for some constant $C$.
\end{proposition}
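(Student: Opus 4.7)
The plan is to prove the pointwise bounds first, then deduce the integrated estimates. The zeroth order pointwise bound $r^p |\mathcal{T}_p(x)| \leq C\bc$ follows by writing
\[
T_{\mu\nu}(x) = \int_{-\infty}^{\infty} \int_{-\infty}^{\infty} \int_0^{\infty} f\, p_\mu p_\nu\, \frac{\sqrt{\det \gslash}}{p^4}\, dp^4\, dp^1\, dp^2,
\]
using the bound $|f| \leq C\bc$ (which follows from the conservation of $f$ along the geodesic flow together with the data assumption $\sup_{\{v = v_0\}} |f| \leq \bc$), the bound $\sqrt{\det \gslash} \leq C r^2$ from Proposition \ref{prop:gslash}, and the support estimates from Proposition \ref{prop:suppfmain}. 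The $r$-weights match the $p$-index because, with respect to the double null frame, $p_3 = -2p^4 = O(1)$, $p_4 = -2p^3 = O(r^{-2})$, $p_A = \gslash_{AB} p^B = O(1)$, the $(p^1,p^2)$ slice of $\supp(f|_{P_x})$ has measure $\lesssim r^{-4}$, and the $p^4$ integration is over a bounded set.

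For pointwise bounds on derivatives of $f$, I follow the Sasaki-geometric approach of Section \ref{section:Sasaki}. Given a vector $V \in T_{(x,p)} P$, Proposition \ref{prop:jacobi} produces a Jacobi field $J$ along the trajectory $s \mapsto \exp_s(x,p)$ of the geodesic flow with $J(0) = V$, and the Vlasov equation yields
\[
Vf(x,p) = J(s) f(\exp_s(x,p)).
\]
Choosing $s < 0$ so that $\pi(\exp_s(x,p)) \in \{v = v_0\}$ reduces the problem to controlling the components of $J$ uniformly along the trajectory, together with the data-side quantity $\tilde{E}_{i_1}\cdots \tilde{E}_{i_k}f$ on $\{v = v_0\}$. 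To estimate $J$, use the Jacobi equation $\hat{\nabla}_X \hat{\nabla}_X J = \hat{R}(X, J) X$ as a second-order transport equation, with $\hat{R}$ expressed via Proposition \ref{prop:curvaturemassshell} in terms of $R$ and $\nabla R$, and hence schematically in terms of $\psi$, $\mathcal{T}$ and their $\mathfrak{D}$-derivatives. The crucial observation, previewed in Section \ref{subsec:introf}, is that with respect to a carefully chosen frame for $P$ (the $E_1,\ldots,E_7$ of this section) and in view of the $p$-index decay of $\psi, \mathcal{T}$ together with the decay of $p^3, p^A$ along $\supp(f)$ from Proposition \ref{prop:suppfmain}, the components of $\hat{R}(X,J)X$ are twice globally integrable in $s$ along each geodesic trajectory. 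A Gronwall-type argument then yields uniform bounds on the components of $J$, and hence on $Vf(x,p)$.

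Higher-order derivatives are obtained by iteration: $V_2 V_1 f(x,p) = J_2(s) J_1(s) f(\exp_s(x,p))$, which reduces control of a second derivative of $f$ to control of $J_2$-derivatives of components of $J_1$, estimated by commuting the Jacobi equation and verifying that the decay structure is preserved. Pointwise bounds on $\mathfrak{D}^k \mathcal{T}_p$ for $k \leq 2$ follow by differentiating $\mathcal{T}$ under the integral sign (noting that $\mathfrak{D}$ applied to $\sqrt{\det \gslash}$ and to $p_\mu p_\nu$ produces factors controlled by the $\Gamma$-pointwise bounds \eqref{eq:Riccisob}) and repeating the zeroth-order argument. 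For the higher derivatives appearing in \eqref{eq:em3in}, \eqref{eq:em3out}, \eqref{eq:em4}, pointwise control is no longer available since the relevant $\mathfrak{D}^k f$ now involve derivatives of $\psi$ which are only bootstrapped in $L^2$; the Jacobi-field argument must be reworked in the appropriate weighted $L^2$ norm along null hypersurfaces (for $k=3$) or in spacetime (for $k=4$), with the fiber integration in $P_x$ converting the $L^2$ control of $f$-derivatives into the stated integrated bounds on $\mathcal{T}_p$.

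The main obstacle is the integrability of $\hat{R}(X,J)X$ along the geodesics of the geodesic spray: one must track which contractions pair slowly-decaying $\psi$ and $\mathcal{T}$ (those with many $e_3$ legs) with the fast-decaying momentum components $p^3, p^A$, while simultaneously verifying that the non-decaying $p^4$ appears only accompanying curvature and energy momentum components which enjoy strong $r$-decay. This ``null condition for the Jacobi equation'' is what makes it possible to estimate $f$-derivatives using one derivative less of $\psi$ than one might naively fear, and is the delicate structural input on which the entire proof of Proposition \ref{prop:tmain}, and hence the closure of the bootstrap, ultimately rests.
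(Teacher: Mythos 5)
Your overall strategy---zeroth order from the support estimates of Proposition \ref{prop:suppfmain} together with conservation of $f$, derivatives of $f$ via Jacobi fields of the Sasaki metric with the Jacobi equation used as a transport equation, higher orders by commuting that equation, and $L^2$ rather than pointwise control at orders three and four---is exactly the route the paper takes, and your identification of the ``null condition for the Jacobi equation'' as the structural heart of the matter is correct.

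There is, however, one genuine gap in the step where you say that for a vector $V \in T_{(x,p)}P$ the Gr\"onwall argument ``yields uniform bounds on the components of $J$''. For the derivatives one actually needs in order to control $\mathfrak{D}^k \mathcal{T}_p$, namely $\tilde{\mathfrak{D}}^k f$ with $\tilde{\mathfrak{D}} \in \{re_1, re_2, e_3, re_4\}$, this is false as stated: the Jacobi field generated by $\hor_{(x,p)}(e_A)$ has components growing like $r$ (indeed $\vert \hor_{(x,p)}(e_A) f\vert$ itself grows like $r(x)$), and the Jacobi fields generated by $r\hor_{(x,p)}(e_4)$ and by $p^4 \partial_{\overline{p}^4}$ separately also fail to satisfy the required bounds. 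The paper's resolution (Section \ref{subsec:overview}) is twofold: first, one works with the specific combinations $V_{(A)} = \hor_{(x,p)}(e_A) + \frac{p^4}{r}\partial_{\overline{p}^A}$ and $V_{(4)} = r\hor_{(x,p)}(e_4) + p^4\partial_{\overline{p}^4}$, in which cancellations occur---and even then the leading-order $r$-growth of components such as ${J_{(A)}}^B$ must be explicitly subtracted and tracked along the whole trajectory (Proposition \ref{prop:Jbounds}), since these leading terms reappear in the commuted Jacobi equation through $J(\dot{\gamma}^A)$ via the combination $\frac{1}{r}\bigl(J^{4+A} - \frac{\dot{\gamma}^4}{r}J^A\bigr)$ and would otherwise destroy the decay needed at higher orders; second, the remaining direction $V_{(0)} = r\hor_{(x,p)}(e_4)$ is not estimated by a Jacobi field at all, but is eliminated using the Vlasov equation $p^{\mu}\hor_{(x,p)}(e_{\mu})f = 0$ to rewrite $V_{(0)}f$ in terms of $V_{(1)}f,\ldots,V_{(6)}f$ with admissible coefficients. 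Without this choice of vector fields and the elimination of $V_{(0)}$, the reduction from $\mathfrak{D}^k\mathcal{T}_p$ to boundable Jacobi-field components does not go through, so this is a step your plan needs and does not supply.
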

Recall from Section \ref{section:ba} that $\bc$ describes the size of the data.

The main difficulty in the proof of Proposition \ref{prop:tmain}, and in fact the main new difficulty in this work, is estimating derivatives of $f$.  In Section \ref{subsec:testimates}, Proposition \ref{prop:tmain} is reduced to Proposition \ref{prop:emmain}, a statement about derivatives of $f$.  In particular, in Section \ref{subsec:testimates} it is seen how the zeroth order bounds, $r^p \vert \mathcal{T}_p \vert \leq C$, are obtained using the results of Section \ref{section:Sasaki}.  A collection of operators $\tilde{\mathfrak{D}}$, which act on functions $h : P \to \mathbb{R}$, akin to the collection $\mathfrak{D}$ introduced in Section \ref{subsec:commutation}, is defined and used in the formulation of Proposition \ref{prop:emmain}.  In Section \ref{subsec:furtherschematic} additional schematic notation is introduced.  This notation is used throughout the remainder of Section \ref{section:emtensor}.  In Section \ref{subsec:overview} seven more operators, $V_{(0)}, \ldots, V_{(6)}$, are introduced and Proposition \ref{prop:emmain} is further reduced to Proposition \ref{prop:emmain2}, which involves bounds on combinations of the six operators $V_{(1)},\ldots,V_{(6)}$ applied to $f$.  The main observation is that the Vlasov equation can be used to replace $V_{(0)} f := r \hor_{(x,p)} (e_4) f$ with a combination of operators from $V_{(1)}, \ldots, V_{(6)}$ (such that the coefficients have desirable weights) applied to $f$.  The operators $V_{(1)}, \ldots, V_{(6)}$, in Section \ref{subsec:jacobidef}, are then used to define corresponding Jacobi fields (with respect to the Sasaki metric, defined in Section \ref{section:Sasaki}) $J_{(1)}, \ldots, J_{(6)}$.  In Section \ref{subsec:fjacobi} two frames for the mass shell, $\{E_i\}$ and $\{F_i\}$, are defined.  In Sections \ref{subsec:lowerjacobi} and \ref{subsec:higherjacobi} bounds for the components, with respect to the frame $\{E_i\}$, of the Jacobi fields, along with their derivatives, are obtained.  It is transport estimates for the Jacobi equation which are used to obtain these bounds.  For such estimates it is convenient to use the parallel frame $\{F_i\}$, and it is therefore also important to control the change of frame matrix $\Xi$, also defined in Section \ref{subsec:fjacobi}.  In Section \ref{subsec:proofemmain2} it is shown how Proposition \ref{prop:emmain2} follows from the bounds on derivatives of the components of the Jacobi fields obtained in Sections \ref{subsec:lowerjacobi} and \ref{subsec:higherjacobi}, thus completing the proof of Proposition \ref{prop:tmain}.

For a function $h : P\to \mathbb{R}$, define $\mathcal{T}_p [h]$ by replacing $f$ with $h$ in the definition of $\mathcal{T}_p$.  So for example,
\[
	\Tslash_{44}[h] = \int_{P_x} h \ p_4 p_4 d \mu_{P_x},
	\qquad \text{and} \qquad
	\Tslash [h]_{AB} = \int_{P_x} h \ p_A p_B d \mu_{P_x}.
\]
In particular
\[
	\mathcal{T}_p[f] = \mathcal{T}_p.
\]
This notation will be used throughout this section.  Finally, it is assumed throughout this section that $x \in \mathcal{A}$.

\subsection{Estimates for $\mathcal{T}$ Assuming Estimates for $f$} \label{subsec:testimates}

Consider the set of operators $\{ e_3, re_4, re_1, re_2\}$.  The notation $\tilde{\mathfrak{D}}$ will be used to schematically denote an arbitrary operator in this set.  These operators act on functions $h : P \to \mathbb{R}$ on the mass shell where, for example,
\[
	e_3(h) = \frac{1}{\Omega^2} \partial_u h, \qquad e_4(h) = \partial_v h + b^A \partial_{\theta^A} h,
\]
in the coordinate system $(u,v,\theta^1,\theta^2, p^1,p^2, p^4)$ on $P$ (as usual it is assumed we are working in one of the two fixed spherical coordinate charts).

Given a collection of derivative operators from the set $\{ r\nablaslash, \nablaslash_3, r \nablaslash_4 \}$, say $\mathfrak{D}^k$, this will act on $(0,m)$ $S_{u,v}$ tensors and give a $(0,l+m)$ $S_{u,v}$ tensor, where $l \leq k$ is the number of times $r\nablaslash$ appears in $\mathfrak{D}^k$.  Let $\tilde{\mathfrak{D}}^k_{C_1,\ldots,C_l}$ denote the corresponding collection of derivative operators in $\tilde{\mathfrak{D}}$ where, in $\mathfrak{D}^k (e_{C_1} ,\ldots, e_{C_l})$, each $r\nablaslash_4$ is replaced by $re_4$, each $\nablaslash_3$ is replaced by $e_3$, and each $r\nablaslash_{C_i}$ is replaced by $re_{C_i}$.  So for example if $k=4$ and
\[
	\mathfrak{D}^4 = (r\nablaslash_4) (r\nablaslash) \nablaslash_3 (r\nablaslash),
\]
then
\[
	\tilde{\mathfrak{D}}^4_{C_1,C_2} = re_4(re_{C_1}(e_3(re_{C_2}( \cdot)))).
\]

Using the results of Section \ref{section:suppf}, the proof of the $k=0$ case of Proposition \ref{prop:tmain} can immediately be given.  First, however, Proposition \ref{prop:emmain}, a result about $\tilde{\mathfrak{D}}$ derivatives of $f$, is stated.  The full proof of Proposition \ref{prop:tmain}, assuming Proposition \ref{prop:emmain}, is then given after Proposition \ref{prop:emcommutation}, Proposition \ref{prop:emerrors} and Lemma \ref{lem:logp3}, which relate $\mathfrak{D}$ derivatives of $\mathcal{T}$ to $\tilde{\mathfrak{D}}$ derivatives of $f$.

\begin{proposition} \label{prop:emmain}
	If $x\in \mathcal{A}$ then, for $u = u(x)$, $v = v(x)$, $k = 1, 2$, 
	\[
		\left( r^p \left\vert \mathcal{T}_p \left[ \tilde{D}^k f \right] \right\vert \right) (u,v) \leq C \bc,
	\]
	for $1 \leq k \leq 3$,
	\[
		\int_{u_0}^u \int_{S_{u',v}} r(u',v)^{2p-2} \left\vert \mathcal{T}_p \left[ \tilde{D}^k f \right] \right\vert^2 d\mu_{S_{u',v}} d u' \leq C\bc,
	\]
	and for $1 \leq k\leq 4$,
	\[
		\int_{u_0}^u \int_{v_0}^v \int_{S_{u',v'}} r(u',v')^{2p-4} \left\vert \mathcal{T}_p \left[ \tilde{D}^k f \right] \right\vert^2 d\mu_{S_{u',v'}} d v' d u' \leq C \bc.
	\]
\end{proposition}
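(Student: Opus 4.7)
The plan is to reduce Proposition \ref{prop:emmain} to pointwise (for low $k$) and $L^2$ (for the momentum variables) control on $\tilde{\mathfrak{D}}^k f$ as a function on $P$, and then to exploit the results of Section \ref{section:suppf} on the decay of $\supp(f|_{P_x})$ in order to perform the momentum integrals defining $\mathcal{T}_p$. The coordinate derivatives $\tilde{\mathfrak{D}} \in \{e_3, re_4, re_1, re_2\}$ acting on functions on $P$ differ from horizontal lifts of the frame vectors by vertical correction terms of the form $p^\nu \Gamma^\lambda_{\mu\nu} \partial_{p^\lambda}$. Hence I would first expand $\tilde{\mathfrak{D}}^k f$ as a polynomial (with coefficients in $\Gamma$ and powers of $p^\mu$) in operators $V_{(i)}$ of the form $r\hor(e_\mu)$ and $\partial_{p^\mu}$ suitably rescaled by $p^4$; the Vlasov equation $X(f) = 0$ should then be used to trade $r\hor_{(x,p)}(e_4) f$ for a combination of the remaining $V_{(i)} f$ with coefficients of favorable $r$-weight, which is the critical mechanism making the weights line up.

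Next, for any $V \in T_{(x,p)} P$, the fact that $f$ is conserved along the geodesic flow together with the chain rule gives $Vf(x,p) = J(s) f(\exp_s(x,p))$ where $J$ is the Jacobi field along $s \mapsto \exp_s(x,p)$ with $J(0) = V$, in view of Proposition \ref{prop:jacobi} and Proposition \ref{prop:geodesicflow}. Choosing $s < 0$ so that $\pi(\exp_s(x,p)) \in \{v = v_0\}$ — which by Proposition \ref{prop:suppfmain} and Proposition \ref{prop:uf} is possible on $\supp f$ — reduces the estimate to control of the components of $J$ (and, for iterated derivatives, the components of a vector field along the geodesic obtained by applying a second Jacobi field $J_2$ to the first) with respect to a well-chosen frame $\{E_i\}$ on $P$, together with the already-assumed pointwise control of $\tilde{\mathfrak{D}}^{\leq 4} f$ on $\{v = v_0\}$ by $\bc$.

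To bound the components of $J$ uniformly in $s$, I would introduce a parallel frame $\{F_i\}$ along the geodesic (in which the Jacobi equation $\hat\nabla_X \hat\nabla_X J = \hat R(X,J)X$ becomes a transport system for the components) and a change-of-frame matrix $\Xi$ whose entries solve their own transport equations driven by $\nabla$ of the frame vectors. Using Proposition \ref{prop:curvaturemassshell} to express $\hat R(X, \cdot) X$ in terms of $R$ and $\nabla R$ — equivalently $\psi, \mathfrak{D}\psi, \mathcal{T}, \mathfrak{D}\mathcal{T}$ — contracted with the momentum $\dot\gamma = p^\mu e_\mu$ and with $J$, and then using the pointwise bounds \eqref{eq:Riccisob}, \eqref{eq:curvsob} together with the decay $r^2 p^3, r^2 p^A = \mathcal{O}(p^4(0))$ and $p^4 = \mathcal{O}(p^4(0))$ from Proposition \ref{prop:suppfmain}, I would show that the right hand side is twice globally integrable in $s$ (since $\frac{ds}{dr} \sim \frac{1}{p^4(0)}$ along the geodesic). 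This is the null condition for the Jacobi equation foreshadowed in Section \ref{subsec:introf}: the slowest-decaying $\psi_1, \mathcal{T}_2$ always appear contracted with at least one $p^3$ or $p^A$, and the non-decaying $p^4$ only with $\psi_4, \mathcal{T}_6$. Twofold integration in $s$ then gives uniform bounds on $J$, and hence pointwise bounds $|\tilde{\mathfrak{D}}^k f|(x,p) \leq C\bc$ for $k \leq 2$.

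Finally, once pointwise bounds on $\tilde{\mathfrak{D}}^k f$ are available, the bounds on $\mathcal{T}_p[\tilde{\mathfrak{D}}^k f]$ follow by integrating in $p^1, p^2, p^4$ against $\frac{\sqrt{\det\gslash}}{p^4}$: by Proposition \ref{prop:suppfmain} the support in $(p^1, p^2)$ has measure $\lesssim r^{-4}$, the $p^4$-range is uniformly bounded, and $\sqrt{\det\gslash} \sim r^2$, together with appropriate factors of $r^2$ from lowering indices, reproducing the advertised weight $r^{-p}$ for each $\mathcal{T}_p$. The $L^2$ bounds on constant-$v$ hypersurfaces and on the spacetime then follow by promoting the pointwise bounds on $\tilde{\mathfrak{D}}^k f$ for $k = 3, 4$ to $L^2$ bounds via the same Jacobi-field analysis applied to commuted Jacobi equations, combined with integration of the pointwise $r$-decay in $u'$ and $v'$. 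The main obstacle is the bookkeeping at the top order: showing that each iterated commutation of the Jacobi equation by another $V_{(j)}$ generates only $\nabla R$ terms whose covariant derivatives are in directions recoverable by transport along the geodesic flow, so that $k$ derivatives of $f$ are indeed controlled by $k-1$ derivatives of $\psi$ and $\mathcal{T}$, respecting the differentiability hierarchy required for closing the full bootstrap.
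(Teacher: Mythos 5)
Your strategy is the paper's strategy: the reduction of $\tilde{\mathfrak{D}}^k f$ to a preferred set of horizontal/vertical operators with the Vlasov equation used to eliminate $r\hor_{(x,p)}(e_4)f$, the representation $Vf(x,p) = J(-s_*)f$ via Jacobi fields of the Sasaki metric, the parallel frame and change-of-frame matrix $\Xi$, and the twice-integrability in $s$ of $\hat{R}(X,J)X$ coming from the pairing of the slowly decaying $\psi$, $\mathcal{T}$ with the decaying momenta $p^3, p^A$ are exactly the ingredients of Sections \ref{subsec:overview}--\ref{subsec:proofemmain2}. One small caution on the reduction step: the paper's basis vectors $V_{(A)} = \hor_{(x,p)}(e_A) + \frac{p^4}{r}\partial_{\overline{p}^A}$ are deliberately \emph{mixed} horizontal--vertical combinations, because $\vert \hor_{(x,p)}(e_A)f\vert$ alone grows like $r$ and only the combination has uniformly bounded Jacobi field components; a basis of pure horizontal lifts and pure rescaled vertical derivatives, as your phrasing suggests, would not close.

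The step that would fail as written is your treatment of $k=3,4$: there are no pointwise bounds on $\tilde{\mathfrak{D}}^k f$ at these orders to ``promote'' to $L^2$. The twice- and thrice-commuted Jacobi equations are sourced by $\mathfrak{D}^2\psi$ and $\mathfrak{D}^3\psi$, which the bootstrap controls only in $L^2$ on null hypersurfaces — the Sobolev inequalities \eqref{eq:curvsob} give pointwise control only up to $\mathfrak{D}^1\psi$ — and this is precisely why the conclusion of the proposition degrades to $L^2$ on incoming hypersurfaces at $k=3$ and to spacetime $L^2$ at $k=4$. The paper therefore estimates the higher derivatives of the Jacobi field components only \emph{after} integrating over the fibre $P_x$ against the energy momentum measure (the quantities $\mathcal{T}_p\bigl[J_{(i_3)}J_{(i_2)}({J_{(i_1)}}^j)\bigr]$ of Propositions \ref{prop:JJJbounds} and \ref{prop:JJJJbounds}), and the resulting error terms $H_{\mathcal{T}_p,2}$, $H_{\mathcal{T}_p,3}$ are fibre integrals of $L^2$ norms of $\mathfrak{D}^2\psi$, $\mathfrak{D}^3\psi$ taken \emph{along the backward geodesics}. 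Converting these into the controlled curvature fluxes $F^1$ requires the change of variables $(u',\theta,p)\mapsto(\hat{u},\hat{\theta},p)$ along the geodesic flow carried out at the end of Section \ref{subsec:proofemmain2}, whose Jacobian is bounded above and below using the already-established Jacobi field component estimates. This mechanism is not supplied by ``integration of the pointwise $r$-decay in $u'$ and $v'$'', and it is a genuinely necessary extra ingredient rather than bookkeeping.
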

In the above,
\[
	\left\vert \mathcal{T}_p [\tilde{\mathfrak{D}}^k f] \right\vert^2
	=
	\gslash^{C_1D_1} \ldots \gslash^{C_lD_l} \mathcal{T}_p [\tilde{\mathfrak{D}}^k_{C_1,\ldots,C_l} f] \cdot \mathcal{T}_p [\tilde{\mathfrak{D}}^k_{D_1,\ldots,D_l} f ].
\]

The proof of Proposition \ref{prop:emmain} is given in Sections \ref{subsec:overview}--\ref{subsec:proofemmain2}.

\begin{proposition} \label{prop:emcommutation}
	Given $h:P\to \mathbb{R}$,
	\begin{align*}
	\nablaslash_3 \mathcal{T}_p [h] =
		& \
		\mathcal{T}_p [e_3(h)] + (\Gamma_2 + \Gamma_1 + \rp_1) \cdot \mathcal{T}_p [h] + \mathcal{T}_p [e_3(\log p^3) h]
		\\
		r \nablaslash_4 \mathcal{T}_p [h] =
		& \
		\mathcal{T}_p [re_4(h)] + (r\Gamma_2 +  r \nablaslash b + r \Gammaslash \cdot \Gamma_1 + 1) \cdot \mathcal{T}_p [h] + \mathcal{T}_p [r e_4(\log p^3) h]
		\\
		r \nablaslash_C \mathcal{T}_p [h] =
		& \
		\mathcal{T}_p [re_C(h)] + \left( r\Gammaslash \cdot \mathcal{T}_p [h] \right) (e_C) + \mathcal{T}_p [r e_C(\log p^3) h],
	\end{align*}
	where the last line is true for $C = 1,2$.
\end{proposition}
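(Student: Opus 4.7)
The plan is to prove each identity by differentiating the definition
\[
  \mathcal{T}_p[h](x) \;=\; \int_{P_x} h \; p_\mu p_\nu \; \frac{\sqrt{\det \gslash}}{p^4}\, dp^4 \, dp^1\, dp^2
\]
under the integral sign. Since $P_x$ is parametrised by $(p^1,p^2,p^4)$ independently of $x$, the coordinate vector fields $e_3$, $e_4 = \partial_v + b^A\partial_{\theta^A}$, and $e_C$ pass inside the integral, and the calculation reduces to expressing each projected covariant derivative of $\mathcal{T}_p[h]$ as the coordinate derivative of the integrand plus frame-change corrections. For the first step, apply Lemma~\ref{lemma:componentderivatives} to $\mathcal{T}_p[h]$, which expresses $\nablaslash_3 \mathcal{T}_p[h]$, $r\nablaslash_4 \mathcal{T}_p[h]$, and $r\nablaslash_C \mathcal{T}_p[h]$ in terms of $e_3$, $re_4$, or $re_C$ applied to the components, minus contractions with $\chibar$, with $\chi$ (together with $\nablaslash b$ and $b\cdot\Gammaslash$), or with $\Gammaslash$, respectively. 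Using the schematic identifications $\chibar = \Gamma_1 + \Gamma_2 + \rp_1$, $\chi = \Gamma_2 + \rp_1$, and $b = \Gamma_1$, these frame-change terms already lie in the claimed error on the right-hand side of each identity.

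The second step is to distribute the coordinate derivative across the integrand $h \cdot p_\mu p_\nu \cdot (\sqrt{\det \gslash}/p^4)$. Differentiating $h$ yields the principal terms $\mathcal{T}_p[e_3 h]$, $\mathcal{T}_p[re_4(h)]$, and $\mathcal{T}_p[re_C(h)]$. Differentiating the volume factor gives a single $\tr \chibar$, $\tr \chi$, or $\Gammaslash^A_{CA}$ multiple of the original integrand by the first variational formulas~\eqref{eq:firstvar3}, \eqref{eq:firstvar4} and the standard identity $\partial_C \log \sqrt{\det \gslash} = \Gammaslash^A_{CA}$, and hence again contributes to the stated error. Differentiating $p_\mu p_\nu$ uses the frame relations $p_A = \gslash_{AB} p^B$, $p_3 = -2p^4$, $p_4 = -2p^3$ together with the mass-shell relation $4 p^3 p^4 = \gslash_{AB} p^A p^B$. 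The derivatives of $p_A$ produce further $\chibar$, $\chi$, or $\Gammaslash$ contractions, while the key observation is that $e_\mu (p_4) = p_4\, e_\mu(\log p^3)$, so that every $p_4$ factor in $p_\mu p_\nu$ generates precisely the term $\mathcal{T}_p[e_\mu(\log p^3) h]$ appearing in each identity.

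The main obstacle is the schematic bookkeeping: one must verify that every contraction produced assembles into the combinations claimed. In particular, the constant ``$1$'' in the $r\nablaslash_4$ coefficient arises from $r\tr\chi_\circ = 2$ absorbed via $\chi = \Gamma_2 + \rp_1$; the $b$-dependence in $e_4 = \partial_v + b^A \partial_{\theta^A}$, combined with the $\nablaslash_{A_i} b$ and $b^C \Gammaslash^B_{A_i C}$ terms from Lemma~\ref{lemma:componentderivatives} and with the $b$-contributions from $e_4(\gslash_{AB})$, must together reduce to $r\nablaslash b$ and $r\Gammaslash \cdot \Gamma_1$; and one must check that only $p_4$-derivatives, not $p_A$-derivatives, contribute to the scalar factor $\mathcal{T}_p[e_\mu(\log p^3)\, h]$, while the $p_A$-derivatives get swept into the Ricci-coefficient error. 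Once this bookkeeping is carried out, each identity follows from a direct computation with no further analytic input.
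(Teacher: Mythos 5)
Your proposal is correct and follows essentially the same route as the paper: differentiate under the integral sign, use Lemma \ref{lemma:componentderivatives} for the frame-change corrections, the first variational formulas for $\sqrt{\det\gslash}$, and the mass-shell relation to identify the $e_\mu(\log p^3)$ contribution (the paper simply works two representative components, $\Tslash_{34}$ and $\Tslash_4$, explicitly and declares the rest similar). The only cosmetic difference is that the paper raises indices and integrates $h\,p^3 p^A\sqrt{\det\gslash}$ directly rather than differentiating $p_A=\gslash_{AB}p^B$, which redistributes but does not change the error terms.
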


\begin{proof}
	This follows by directly computing the derivatives of each $\mathcal{T}_p$.  For example,
	\begin{align*}
		\nablaslash_4 \Tslash_{34}[h](x)
		=
		& \
		e_4 \left( \int_{P_x} h p_3 p_4 \frac{\sqrt{\det \gslash}}{p^4} dp^1 dp^2 dp^4 \right)
		\\
		=
		& \
		4 e_4 \left( \int_{P_x} h p^3 \sqrt{\det \gslash} dp^1 dp^2 dp^4 \right)
		\\
		=
		& \
		4 \int_{P_x} \left( e_4(h) + h \tr \chi \right) p^3 \sqrt{\det \gslash} dp^1 dp^2 dp^4
		\\
		& \
		+ 4  \int_{P_x} h e_4(p^3) \sqrt{\det \gslash} dp^1 dp^2 dp^4
		\\
		=
		& \
		\Tslash_{34}[e_4(h)] + \tr \chi \Tslash_{34}[h] + \Tslash_{34}[ e_4(\log p^3) h],
	\end{align*}
	and
	\begin{align*}
		\left( \nablaslash_4 \Tslash_4 [h] \right)^A
		=
		& \
		2 e_4 \left( \int_{P_x} h p^3 p^A \frac{\sqrt{\det \gslash}}{p^4} dp^1 dp^2 dp^4  \right) + \Gamma_{4B}^A {\Tslash_3}^B
		\\
		=
		& \
		\left( \Tslash_4 [e_4(h)] \right)^A + \tr \chi {\Tslash_4[h]}^A + \left( [{\chi_B}^A - \nablaslash_B b^A - \Gammaslash_{AC}^B \cdot b^C] \cdot \Tslash_4[h]^B \right)
		\\
		& \
		+ 2 \int_{P_x} h e_4(p^3) p^A \frac{\sqrt{\det \gslash}}{p^4} dp^1 dp^2 dp^4
		\\
		=
		& \
		\left( \Tslash_4 [e_4(h)] \right)^A + \tr \chi {\Tslash_4[h]}^A + \left( [\chi - \nablaslash b - \Gammaslash \cdot b] \cdot \Tslash_4[h] \right)^A
		\\
		& \
		+ \Tslash_4 [e_4(\log p^3)]^A.
	\end{align*}
	The other derivatives are similar.
\end{proof}

\begin{proposition} \label{prop:emerrors}
	For any $k \geq 1$,
	\[
		\mathfrak{D}^k \mathcal{T}_p [f] (e_{C_1}, \ldots, e_{C_l})
		=
		\mathcal{T}_p [\tilde{\mathfrak{D}}^k_{C_1,\ldots,C_l} f] + E\left[ \mathfrak{D}^k \mathcal{T}_p \right] (e_{C_1}, \ldots, e_{C_l})
	\]
	where
	\begin{align*}
		E\left[ \mathfrak{D}^k \mathcal{T}_p \right] (e_{C_1}, \ldots, e_{C_l})
		=
		&
		\left( \mathfrak{D} E\left[ \mathfrak{D}^{k-1} \mathcal{T}_p \right] \right) (e_{C_1}, \ldots, e_{C_l})
		+ \mathcal{T}_p \left[ \left( \tilde{\mathfrak{D}} \log p^3 \right) \tilde{\mathfrak{D}}^{k-1} f \right] (e_{C_1},\ldots,e_{C_l})
		\\
		&
		+ \left( \sum_{p_1+p_2\geq 0} \rp_{p_1} \cdot \Gamma_{p_2} + r\nablaslash b + r \Gammaslash \cdot (\Gamma_1 + 1) \right) \cdot \mathcal{T}_p [\tilde{\mathfrak{D}}^{k-1} f] (e_{C_1},\ldots,e_{C_l})
		,
	\end{align*}
	and
	\[
		E\left[ \mathcal{T}_p \right] = 0.
	\]
	Here $l\leq k$ is the number of times $r\nablaslash$ appears in $\mathfrak{D}^k$.
\end{proposition}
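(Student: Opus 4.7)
The plan is to establish this by induction on $k\geq 0$, with base case being the trivial identity $\mathcal{T}_p[f] = \mathcal{T}_p[f] + E[\mathcal{T}_p]$ corresponding to the convention $E[\mathcal{T}_p]=0$.

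For the inductive step, assume the formula holds for $k-1$, so that
\[
	\mathfrak{D}^{k-1} \mathcal{T}_p [f] (e_{C_1}, \ldots, e_{C_{l-1}})
	=
	\mathcal{T}_p [\tilde{\mathfrak{D}}^{k-1}_{C_1,\ldots,C_{l-1}} f] + E\left[ \mathfrak{D}^{k-1} \mathcal{T}_p \right] (e_{C_1}, \ldots, e_{C_{l-1}})
\]
(where the index count $l-1$ is adjusted according to the number of $r\nablaslash$ factors in $\mathfrak{D}^{k-1}$). Applying an arbitrary $\mathfrak{D}\in\{\nablaslash_3, r\nablaslash_4, r\nablaslash\}$ to both sides produces $\mathfrak{D}^k \mathcal{T}_p[f]$ on the left, while on the right the error term immediately generates the first piece $\mathfrak{D}(E[\mathfrak{D}^{k-1}\mathcal{T}_p])$ of the claimed expression. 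It remains to analyse $\mathfrak{D}(\mathcal{T}_p[\tilde{\mathfrak{D}}^{k-1}_{C_1,\ldots,C_{l-1}}f])$.

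For this, I would apply Proposition \ref{prop:emcommutation} with $h := \tilde{\mathfrak{D}}^{k-1}_{C_1,\ldots,C_{l-1}} f$. The main term produced is $\mathcal{T}_p[\tilde{\mathfrak{D}}\, h] = \mathcal{T}_p[\tilde{\mathfrak{D}}^k_{C_1,\ldots,C_l} f]$, which is the desired principal term at order $k$ (with the new index $C_l$ appearing only when $\mathfrak{D} = r\nablaslash$). The remaining terms generated by Proposition \ref{prop:emcommutation} are of three kinds: a term $\mathcal{T}_p[\tilde{\mathfrak{D}}(\log p^3)\,\tilde{\mathfrak{D}}^{k-1}f]$, and terms of the form $(\Gamma_2 + \Gamma_1 + \rp_1)\cdot \mathcal{T}_p[h]$ or $(r\Gamma_2 + r\nablaslash b + r\Gammaslash\cdot\Gamma_1 + 1)\cdot \mathcal{T}_p[h]$ or $r\Gammaslash\cdot\mathcal{T}_p[h]$, depending on whether $\mathfrak{D}$ is $\nablaslash_3$, $r\nablaslash_4$, or $r\nablaslash$. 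All of these fit into the claimed error form
\[
	\left( \sum_{p_1+p_2\geq 0} \rp_{p_1}\cdot\Gamma_{p_2} + r\nablaslash b + r\Gammaslash\cdot(\Gamma_1+1) \right) \cdot \mathcal{T}_p[\tilde{\mathfrak{D}}^{k-1} f],
\]
together with the $\log p^3$ contribution.

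The only subtlety, and the step requiring most care in the bookkeeping, is that when $\mathfrak{D} = r\nablaslash$ is applied to the already-tensorial object $\mathcal{T}_p[\tilde{\mathfrak{D}}^{k-1}_{C_1,\ldots,C_{l-1}}f]$ (with $l-1$ pre-existing sphere indices arising from previous $r\nablaslash$ factors in $\mathfrak{D}^{k-1}$), the definition of $\nablaslash$ on higher-rank $S_{u,v}$ tensors via Lemma \ref{lemma:componentderivatives} produces additional Christoffel symbol contractions, one for each pre-existing index. Each such contraction is precisely of the schematic form $r\Gammaslash\cdot \mathcal{T}_p[\tilde{\mathfrak{D}}^{k-1}f]$ and is therefore absorbed by the $r\Gammaslash\cdot(\Gamma_1 + 1)$ term in the error. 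Similarly, when $\mathfrak{D} = r\nablaslash_4$ acts on a tensorial $\mathcal{T}_p[\cdot]$, the $\chi$ contributions split via Proposition \ref{prop:emcommutation} into $\rp_1\cdot\Gamma_2$ and $\rp_1$ pieces plus the $r\nablaslash b$ and $r\Gammaslash\cdot\Gamma_1$ corrections coming from $\nabla_4 e_A$ in \eqref{eq:Riccitab3}; all of these are already present in the schematic form prescribed for $E[\mathfrak{D}^k \mathcal{T}_p]$. Once this matching is verified, the inductive step closes and the proposition follows.
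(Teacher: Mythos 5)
Your proposal is correct and is essentially identical to the paper's own argument: the paper proves this by exactly the same induction, writing $\mathfrak{D}^k \mathcal{T}_p = \mathfrak{D}\left( \mathcal{T}_p[\tilde{\mathfrak{D}}^{k-1} f] + E[\mathfrak{D}^{k-1}\mathcal{T}_p]\right)$ and invoking Proposition \ref{prop:emcommutation}. Your additional bookkeeping remarks about the Christoffel contractions arising when $r\nablaslash$ acts on the higher-rank tensor $\mathcal{T}_p[\tilde{\mathfrak{D}}^{k-1}_{C_1,\ldots,C_{l-1}}f]$ are a faithful (and slightly more explicit) account of what the paper leaves implicit.
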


\begin{proof}
	The proof follows by induction by writing
	\[
		\mathfrak{D}^k \mathcal{T}_p = \mathfrak{D} (\mathfrak{D}^{k-1} \mathcal{T}_p) = \mathfrak{D} \left( \mathcal{T}_p\left[ \tilde{\mathfrak{D}}^{k-1} f \right] + E\left[ \mathfrak{D}^{k-1} \mathcal{T}_p \right] \right)
	\]
	and using the previous proposition.
\end{proof}

\begin{lemma} \label{lem:logp3}
	 For $1\leq k\leq 4$,
	 \[
		  \vert \tilde{D}^k \log p^3 \vert
		  \leq 
		  C \sum_{k'\leq k-1} \left( r \vert \mathfrak{D}^{k'} \Gammaslash \vert + \vert \mathfrak{D}^{k'} r \nablaslash b \vert + \sum_{p_1 + p_2 \geq 0} \rp_{p_1} \left( \vert \mathfrak{D}^{k'} \Gamma_{p_2} \vert + \rp_{p_2} \right) \right),
	 \]
	 in $\supp(f)$.
\end{lemma}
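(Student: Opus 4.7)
The plan is to induct on $k$, starting from the mass shell relation \eqref{eq:massshell}, which gives
\[
\log p^3 = \log(\gslash_{AB} p^A p^B) - \log(4 p^4).
\]
The crucial observation is that, in the $(u,v,\theta^1,\theta^2,p^1,p^2,p^4)$ chart on $P$, each $\tilde{\mathfrak{D}} \in \{e_3, re_4, re_1, re_2\}$ reduces to a coordinate derivative in the base directions only (with coefficients like $\Omega^{-2}$, $1$, or $b^A$), and so $\tilde{\mathfrak{D}}(p^1) = \tilde{\mathfrak{D}}(p^2) = \tilde{\mathfrak{D}}(p^4) = 0$. Hence $\tilde{\mathfrak{D}}$ applied to $\log p^3$ only ever sees the $\gslash_{AB}$ factors.

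For the base case $k=1$, I would compute
\[
\tilde{\mathfrak{D}} \log p^3 = \frac{(\tilde{\mathfrak{D}} \gslash_{AB}) p^A p^B}{\gslash_{CD} p^C p^D},
\]
and evaluate $\tilde{\mathfrak{D}} \gslash_{AB}$ in each direction via Lemma~\ref{lemma:componentderivatives} together with $\nablaslash \gslash = 0$: this yields $e_3 \gslash_{AB} = 2 \chibar_{AB}$, $e_4 \gslash_{AB} = 2\chi_{AB} - \gslash_{CB} \nablaslash_A b^C - \gslash_{AC} \nablaslash_B b^C + \gslash \cdot (\Gammaslash \cdot b)$-terms, and $e_C \gslash_{AB} = \Gammaslash^D_{CA} \gslash_{DB} + \Gammaslash^D_{CB} \gslash_{AD}$. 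The Cauchy--Schwarz-type estimate $|\xi_{AB} p^A p^B| \leq |\xi|_{\gslash}\, \gslash_{CD} p^C p^D$, valid for any symmetric $(0,2)$ $S_{u,v}$-tensor $\xi$, cancels the denominator and reduces the bound to $|\tilde{\mathfrak{D}} \log p^3| \leq |\tilde{\mathfrak{D}}\gslash|_{\gslash}$. Decomposing $\chi, \chibar$ via their $p$-index form ($\chi = \Gamma_2 + \rp_1 \gslash^{\circ}$, etc.) and using Proposition~\ref{prop:gslash} to relate $\gslash$ to $\gslash^{\circ}$, one reads off the desired bound for $k=1$ with $k'=0$.

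For the inductive step, I would write $\tilde{\mathfrak{D}}^k \log p^3 = \tilde{\mathfrak{D}}^{k-1}\!\left( \tfrac{(\tilde{\mathfrak{D}}\gslash_{AB}) p^A p^B}{\gslash_{CD} p^C p^D} \right)$ and expand via the product/quotient rule (Faà di Bruno). Every factor of the form $(\tilde{\mathfrak{D}}^{j}\gslash_{AB}) p^A p^B / (\gslash_{CD} p^C p^D)$ that appears is controlled by the same Cauchy--Schwarz argument, reducing matters to bounding $|\tilde{\mathfrak{D}}^{j}\gslash|_{\gslash}$ for $j \leq k$. Each such iterated derivative is evaluated by iteratively applying Lemma~\ref{lemma:componentderivatives}, which converts the coordinate-derivative-of-components operator $\tilde{\mathfrak{D}}$ into the covariant derivative $\mathfrak{D}$ plus lower-order correction terms proportional to $\Gamma, \Gammaslash, \nablaslash b$; an elementary counting shows that $\tilde{\mathfrak{D}}^j \gslash_{AB}$ always has at most $j-1$ covariant derivatives landing on any single Ricci coefficient factor, which feeds into at most $k-1$ covariant derivatives after $k$ applications to $\log p^3$. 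Each $e_A b^C$ entering through an $e_4 \gslash$ term is rewritten as $\nablaslash_A b^C - \Gammaslash^C_{AD} b^D$, and the extra factor of $r$ from $\tilde{\mathfrak{D}} = r e_A$ or $r e_4$ correctly reweights these to $|\mathfrak{D}^{k'} r \nablaslash b|$ and $r |\mathfrak{D}^{k'} \Gammaslash|$.

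The main obstacles I anticipate are bookkeeping in nature: (i) organizing the Faà di Bruno expansion so that every summand fits the schematic RHS with $p_1 + p_2 \geq 0$ and $k' \leq k-1$; (ii) systematically tracking the correction terms produced by Lemma~\ref{lemma:componentderivatives} each time one converts a coordinate derivative of components into $\mathfrak{D}$, and verifying they are of the allowed $\rp_{p_1}(\mathfrak{D}^{k''}\Gamma_{p_2} + \rp_{p_2})$ form; and (iii) keeping a clean bound on $|\tilde{\mathfrak{D}}^{j} \gslash|_{\gslash}$ that does not accrue spurious powers of $r$, using Proposition~\ref{prop:gslash} and the bootstrap pointwise bounds on $\Gamma$. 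No analytic novelty is required past the Cauchy--Schwarz cancellation of $\gslash_{CD} p^C p^D$; the entire argument is algebraic/combinatorial modulo Lemma~\ref{lemma:componentderivatives} and the pointwise control of $\Gamma, \Gammaslash, r\nablaslash b$ established in Section~\ref{section:Sobolev}.
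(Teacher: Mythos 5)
Your proposal is correct and follows essentially the same route as the paper: differentiate the mass shell relation $p^3 = \gslash_{AB}p^Ap^B/(4p^4)$ using the first variation formulas for $\gslash$ (so that only the $\gslash_{AB}$-dependence contributes), cancel the denominator $\gslash_{CD}p^Cp^D$ by the Cauchy--Schwarz-type eigenvalue estimate, and iterate for $k\geq 2$. The paper's proof is terse at higher orders ("the higher order derivatives follow similarly"); your induction via Lemma~\ref{lemma:componentderivatives} and the derivative count $k'\leq k-1$ is exactly the intended elaboration.
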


\begin{proof}
	Recall that
	\[
		p^3 = \frac{\gslash_{AB} p^A p^B}{4p^4}.
	\]
	Using the first variation formula \eqref{eq:firstvar3},
	\[
		e_3(p^3)
		=
		\frac{\chibar_{AB} p^A p^B}{2 p^4},
	\]
	and hence,
	\[
		\vert e_3(\log p^3) \vert
		=
		\left\vert \frac{e_3(p^3)}{p^3} \right\vert
		=
		\left\vert \frac{2\hat{\chibar}_{AB} p^A p^B + \tr \chibar \gslash_{AB}p^Ap^B}{\gslash_{CD}p^Cp^D} \right\vert
		\leq
		\left( 2 \vert \hat{\chibar} \vert + \vert \tr \chi\vert \right),
	\]
	by the Cauchy--Schwarz inequality.  Similarly, using \eqref{eq:firstvar4},
	\[
		e_4(p^3) = \frac{ \left( \chi_{AB} + \gslash_{BC} e_A(b^C) + \gslash_{AC} e_B(b^C) \right)p^A p^B}{2 p^4},
	\]
	and hence,
	\[
		\vert r e_4(\log p^3) \vert \leq r \left( \vert \hat{\chi} \vert + \vert \tr \chi \vert + \vert \nablaslash b\vert + \vert b \vert \vert \Gammaslash \vert \right).
	\]
	Finally,
	\[
		e_C (p^3) = \frac{e_C(\gslash_{AB}) p^A p^B}{4p^4} = \frac{\left( \gslash_{DB} \Gammaslash_{CA}^D + \gslash_{AD} \Gammaslash_{CB}^D \right) p^A p^B}{4p^4},
	\]
	hence similarly,
	\[
		\vert r e_C(\log p^3) \vert
		\leq
		2 r \vert \Gammaslash \vert.
	\]
	The higher order derivatives follow similarly.
\end{proof}

\begin{proof}[Proof of Proposition \ref{prop:tmain}]
	For $k=0$ the result follows using the bounds on $p^1,p^2,p^3,p^4$ in $\supp(f)$,
	\[
		\vert p^1\vert, \vert p^2\vert, \vert p^3\vert \leq \frac{Cp^4}{r^2}, \qquad \vert p^4\vert \leq C,
	\]
	from Proposition \ref{prop:suppfmain}.
	
	Note that
	\[
		\sup_{P\vert_{\{v = v_0\}}} \vert f \vert \leq \bc,
	\]
  and hence, since $f$ is preserved by the geodesic flow,
  \[
		\sup_P \vert f \vert \leq \bc.
  \]
  This fact and the above bounds imply
  \[
		\left\vert \int_{P_x} f p^4 d\mu_{P_x} \right\vert 
		=
		\left\vert \int_{0}^C \int_{\vert p^1 \vert, \vert p^2 \vert 
		\leq 
		\frac{C}{r^2}} f p^4 \frac{\sqrt{\det \gslash(x)}}{p^4} dp^1 dp^2 dp^4 \right\vert \leq \frac{C\bc}{r(x)^2},
  \]
  as $\left\vert \sqrt{\det \gslash (x)} \right\vert \leq C r(x)^2$.  One then easily sees,
  \[
		\vert \Tslash_{33} \vert
		\leq
		4 \left\vert \int_{P_x} f p^4 p^4 d\mu_{P_x} \right\vert
		\leq
		\frac{C\bc}{r^2},
  \]
  \[
		\vert \Tslash_{34} \vert
		\leq
		4 \left\vert \int_{P_x} f p^3 p^4 d\mu_{P_x} \right\vert
		\leq
		\frac{C\bc}{r^4},
  \]
  and
  \[
		\vert \Tslash_{44} \vert
		\leq
		4 \left\vert \int_{P_x} f p^3 p^3 d\mu_{P_x} \right\vert
		\leq
		\frac{C}{r^4} \left\vert \int_{P_x} f p^4 p^4 d\mu_{P_x} \right\vert
		\leq
		\frac{C\bc}{r^6}.
  \]
  Moreover,
  \[
		\vert \Tslash_{3A} \vert
		\leq
		2 \left\vert \gslash_{AA'} \int_{P_x} f p^{A'} p^4 d\mu_{P_x} \right\vert
		\leq
		C \left\vert \int_{P_x} f p^4 p^4 d\mu_{P_x} \right\vert
		\leq
		\frac{C\bc}{r^2},
  \]
  \[
		\vert \Tslash_{4A} \vert
		\leq
		2 \left\vert \gslash_{AA'} \int_{P_x} f p^{A'} p^3 d\mu_{P_x} \right\vert
		\leq
		\frac{C}{r^2} \left\vert \int_{P_x} f p^4 p^4 d\mu_{P_x} \right\vert
		\leq
		\frac{C\bc}{r^4},
  \]
  and
  \[
		\vert \Tslash_{AB} \vert
		\leq
		\left\vert \gslash_{AA'} \gslash_{BB'} \int_{P_x} f p^{A'} p^{B'} d\mu_{P_x} \right\vert
		\leq
		C \left\vert \int_{P_x} f p^4 p^4 d\mu_{P_x} \right\vert
		\leq
		\frac{C\bc}{r^2},
  \]
  so that,
  \[
		\vert \Tslash_3 \vert 
		=
		\sqrt{\left\vert \gslash^{AB} \Tslash_{3A} \Tslash_{3B} \right\vert} 
		\leq
		C \sqrt{\frac{1}{r^2} \frac{1}{r^2} \frac{1}{r^2}}
		=
		\frac{C\bc}{r^3},
  \]
  \[
		\vert \Tslash_4 \vert 
		=
		\sqrt{\left\vert \gslash^{AB} \Tslash_{4A} \Tslash_{4B} \right\vert} 
		\leq
		\frac{C\bc}{r^5},
  \]
  and
  \[
		\vert \Tslash \vert 
		=
		\sqrt{\left\vert \gslash^{AC} \gslash^{BD} \Tslash_{AB} \Tslash_{CD} \right\vert} 
		\leq
		\frac{C\bc}{r^4}.
  \]
  
  For first order derivatives of $\mathcal{T}_p$, Proposition \ref{prop:emerrors} and the pointwise bounds on $\Gamma, \Gammaslash, r \nablaslash b$ imply that,
  \[
		\vert \mathfrak{D} \mathcal{T}_p \vert
		\leq
		C \left( \left\vert \mathcal{T}_p \left[ \tilde{\mathfrak{D}} f \right] \right\vert + \vert \mathcal{T}_p \vert + \left\vert \mathcal{T}_p \left[ \tilde{\mathfrak{D}} \log p^3 \cdot f \right] \right\vert \right),
  \]
  and hence Proposition \ref{prop:emmain} and Lemma \ref{lem:logp3} imply,
  \[
		\vert \mathfrak{D} \mathcal{T}_p \vert
		\leq 
		\frac{C\bc}{r^p}.
  \]
  Similarly for the second order derivatives, the pointwise estimates on $\Gamma, \Gammaslash, r \nablaslash b, \mathfrak{D}\Gamma, \mathfrak{D} \Gammaslash, \mathfrak{D} r \nablaslash b$ imply that,
  \[
		\vert \mathfrak{D}^2 \mathcal{T}_p \vert
		\leq
		C \left( \sum_{k'\leq 2} \left\vert \mathcal{T}_p \left[ \tilde{\mathfrak{D}}^{k'} f \right] \right\vert 
		+ \vert \mathfrak{D} \mathcal{T}_p \vert
		+ \left\vert \mathfrak{D} \mathcal{T}_p \left[ \tilde{\mathfrak{D}} \log p^3 \cdot f \right] \right\vert
		+ \left\vert \mathcal{T}_p \left[ \tilde{\mathfrak{D}} \log p^3 \cdot \tilde{\mathfrak{D}} f \right] \right\vert \right),
  \]
  and hence, by Proposition \ref{prop:emcommutation},
  \begin{align*}
  		\vert \mathfrak{D}^2 \mathcal{T}_p \vert
		\leq
		&
		C \Bigg( \sum_{k'\leq 2} \left\vert \mathcal{T}_p \left[ \tilde{\mathfrak{D}}^{k'} f \right] \right\vert 
		+ \vert \mathfrak{D} \mathcal{T}_p \vert
		+ \left\vert \mathcal{T}_p \left[ \tilde{\mathfrak{D}}^2 \log p^3 \cdot f \right] \right\vert
		 + \left\vert \mathcal{T}_p \left[ \tilde{\mathfrak{D}} \log p^3 \cdot f \right] \right\vert
		\\
		&
		+ \left\vert \mathcal{T}_p \left[ \tilde{\mathfrak{D}} \log p^3 \cdot \tilde{\mathfrak{D}} f \right] \right\vert
		+ \left\vert \mathcal{T}_p \left[ \tilde{\mathfrak{D}} \log p^3 \cdot \tilde{\mathfrak{D}} \log p^3 \cdot \tilde{\mathfrak{D}} f \right] \right\vert
		\Bigg).
  \end{align*}
  Proposition \ref{prop:emmain}, Lemma \ref{lem:logp3} and the above bounds for $\vert \mathfrak{D} \mathcal{T}_p \vert$ therefore give,
  \[
  		\vert \mathfrak{D}^2 \mathcal{T}_p \vert \leq \frac{C\bc}{r^p}.
  \]
  
  For the third and fourth order derivatives, Proposition \ref{prop:emmain}, Lemma \ref{lem:logp3}, Proposition \ref{prop:emcommutation}, the pointwise estimates for $\Gamma, \Gammaslash, r \nablaslash b, \mathfrak{D}\Gamma, \mathfrak{D} \Gammaslash, \mathfrak{D} r \nablaslash b$ and the pointwise bounds on $\mathcal{T}, \mathfrak{D} \mathcal{T}, \mathfrak{D}^2 \mathcal{T}$ obtained above similarly give
  \begin{align*}
  	\vert \mathfrak{D}^3 \mathcal{T} \vert
	\leq
	C 
	\sum_{k'\leq 3} \left\vert \mathcal{T}_p \left[ \tilde{\mathfrak{D}}^{k'} f \right] \right\vert
	+ \frac{C \bc}{r^p}
	\Bigg(
	\sum_{p_1 + p_2 \geq 0} \vert \rp_{p_1} \mathfrak{D}^2 \Gamma_{p_2} \vert
	+ \vert \mathfrak{D}^2 r \nablaslash b\vert
	+ \vert r \mathfrak{D}^2 \Gammaslash \vert
	+ 1
	\Bigg),
  \end{align*}
  and
  \begin{align*}
  	\vert \mathfrak{D}^4 \mathcal{T} \vert
	\leq
	&
	C \left(
	\sum_{k'\leq 4} \left\vert \mathcal{T}_p \left[ \tilde{\mathfrak{D}}^{k'} f \right] \right\vert
	+
	\vert \mathfrak{D}^3 \mathcal{T} \vert \right)
	\\
	&
	+ \frac{C \bc}{r^p} \sum_{k' = 2,3}
	\Bigg(
	\sum_{p_1 + p_2 \geq 0} \vert \rp_{p_1} \mathfrak{D}^{k'} \Gamma_{p_2} \vert
	+ \vert \mathfrak{D}^{k'} r \nablaslash b\vert
	+ \vert r \mathfrak{D}^{k'} \Gammaslash \vert
	+ 1
	\Bigg).
  \end{align*}
  The estimates \eqref{eq:em3in} and \eqref{eq:em4} now follow using Proposition \ref{prop:emmain} and the bootstrap assumptions for derivatives of $\Gamma, r\nablaslash b$ and $\Gammaslash$.
  
  To obtain \eqref{eq:em3out} first compute,
  \begin{align*}
  	&
	\partial_u \left( \int_{v_0}^v \int_{S_{u,v'}} r(u,v')^{2p-4} \vert \mathfrak{D}^3 \mathcal{T}_p \vert^2 d\mu_{S_{u,v'}} d v' \right)
	\\
	&
	\qquad
	\leq
	C  \int_{v_0}^v \int_{S_{u,v'}} \nablaslash_3 \left( r(u,v')^{2p-4} \vert \mathfrak{D}^3 \mathcal{T}_p \vert^2 \right) + \tr \chibar r(u,v')^{2p-4} \vert \mathfrak{D}^3 \mathcal{T}_p \vert^2 d\mu_{S_{u,v'}} d v'
	\\
	&
	\qquad
	\leq
	C \int_{v_0}^v \int_{S_{u,v'}} r(u,v')^{2p-4} \left( \vert \mathfrak{D}^3 \mathcal{T}_p \vert^2 + \vert \mathfrak{D}^4 \mathcal{T}_p \vert^2 \right) d\mu_{S_{u,v'}} d v'.
  \end{align*}
  The result then follows by integrating from $u_0$ to $u$ and using \eqref{eq:em4}.
\end{proof}

\subsection{Schematic Notation} \label{subsec:furtherschematic}
To deal with some of the expressions which arise in the remainder of this section it is convenient to introduce further schematic notation.  Like the previous schematic notation introduced in Section \ref{subsec:schemnot}, this notation will make it easy to read off the overall $r$ decay of complicated expressions.

Throughout most of this section we will consider a point $(x,p) \in P \cap \supp(f)$ and the trajectory of the geodesic flow $s \mapsto \exp_s(x,p)$ through this point.  The trajectory will be followed backwards to the initial hypersurface, so $s$ will be negative.  Note that
\[
	\exp_s(x,p) = (\gamma(s),\dot{\gamma}(s)),
\]
where $\gamma$ is the unique geodesic in $\mathcal{M}$ such that $\gamma(0) = x$, $\dot{\gamma}(0) = p$.  The expressions $\exp_s(x,p)$ and $(\gamma(s), \dot{\gamma}(s))$ will be used interchangeably.  Also $\dot{\gamma}^{\mu}(s)$ and $p^{\mu}(s)$ will both be used to denote the $\mu$ component of $\dot{\gamma}(s)$ with respect to the frame $e_1,e_2,e_3,e_4$.  Note the slight change in notation from Section \ref{section:suppf} where $\gamma(0)$ lay on the initial hypersurface $\{v=v_0\}$ and $s$ was positive.

Recall from Section \ref{section:suppf} that, for such a geodesic, $\dot{\gamma}^4(s)$ will remain bounded in $s$ (and in fact will be comparable to $\dot{\gamma}^4(0)$), whilst $\dot{\gamma}^1(s), \dot{\gamma}^2(s), \dot{\gamma}^3(s)$ will all decay like $\frac{1}{r(s)^2}$.  The notation $\dot{\upgamma}_0$ will be used to schematically denote $\dot{\gamma}^4(s)$ and $\dot{\upgamma}_2$ to schematically denote any of $\dot{\gamma}^1(s), \dot{\gamma}^2(s), \dot{\gamma}^3(s)$,
\[
	\dot{\upgamma}_0 = \dot{\gamma}^4 , \qquad \dot{\upgamma}_2 = \dot{\gamma}^1, \dot{\gamma}^2, \dot{\gamma}^3,
\]
so that,
\[
	r^p \vert \dot{\upgamma}_p \vert \leq C p^4(0).
\]

Certain vector fields, $K$, along $\gamma$ will be considered later.  If $K^1,\ldots,K^4$ denote the components of $K$ with respect to the frame\footnote{Not the frame $e_1,\ldots,e_4$.  The reason for this will be explained later.} $\frac{1}{r} e_1,\frac{1}{r} e_2, e_3, e_4$, so that,
\[
	K = K^1 \frac{1}{r} e_1 + K^2 \frac{1}{r} e_2 + K^3 e_3 + K^4 e_4,
\]
then it will be shown that, for all such $K$ considered, $K^3$ will always be bounded along $\gamma$, and $K^1,K^2,K^4$ can grow at most like $r(s)$.  Therefore $\mathcal{K}_0$ will be used to schematically denote $K^3$ and $\mathcal{K}_{-1}$ will schematically denote $K^1,K^2,K^4$,
\[
	\mathcal{K}_0 = K^3, \qquad \mathcal{K}_{-1} = K^1,K^2,K^4,
\]
so it is always true that,
\[
	r^p \vert \mathcal{K}_{p} \vert \leq C.
\]

Finally, let $\eslash$, $\eslash^{-1}$ schematically denote the following quantities,
\[
	\eslash = e_1,e_2, \qquad \eslash^{-1} = e_1^{\flat}, e_2^{\flat},
\]
where $e_A^{\flat}$ denotes the $S_{u,v}$ one-form $\gslash(e_A, \cdot)$, for $A=1,2$.  This notation will be used for schematic expressions involving the components of Weyl curvature components, Ricci coefficients and energy momentum tensor components.  If $\xi$ is a $(0,k)$ $S_{u,v}$ tensor, write,
\[
	\xi \eslash^k,
\]
to schematically denote any of the components,
\[
	\xi_{A_1\ldots A_k} = \xi(e_{A_1},\ldots,e_{A_k}).
\]
Note that, if $r^p \vert \xi \vert_{\gslash} \leq C$, then the bounds, $\vert \gslash^{AB} \vert \leq \frac{C}{r^2}$ imply that,
\[
	\vert \xi(e_{A_1},\ldots, e_{A_k}) \vert \leq C r^{k-p},
\]
where $\vert \cdot \vert$ here denotes the usual absolute value on $\mathbb{R}$.  For example, the sum of the components $\alpha_{AB} + \Tslash_{AB}$ decays like $\frac{1}{r^2}$ and is schematically written,
\[
	\alpha_{AB} + \Tslash_{AB} = \psi_4 \eslash^2 + \mathcal{T}_4 \eslash^2 = \sum_{p_1+p_2 \geq 2} (\psi_{p_1} + \mathcal{T}_{p_1}) \eslash^{-p_2},
\]
where in the summation $p_1 = 4, p_2 = - 2$ for each of the two terms.  Looking at the summation on the right hand side, it is straightforward to read off that each term should decay like $\frac{1}{r^2}$.

Similarly, if $\xi$ is a $(k,0)$ $S_{u,v}$ tensor, write,
\[
	\xi e^{-k},
\]
to denote any of the components
\[
	\xi^{A_1 \ldots A_k} = \xi(e_{A_1}^{\flat},\ldots,e_{A_k}^{\flat}).
\]
Since $\vert \gslash_{AB} \vert \leq C r^2$, if $r^p \vert \xi \vert_{\gslash} \leq C$ then,
\[
	\vert \xi^{A_1 \ldots A_k} \vert \leq C r^{-p-k}.
\]
For example, allowing $\psi$ to also denote $\psi^{\sharp}$, $\mathcal{T}$ to also denote $\mathcal{T}^{\sharp}$ etc.\@,
\[
	\alpha^{AB} + \Tslash^{AB} = \psi_4 \eslash^{-2} + \mathcal{T}_4 \eslash^{-2} = \sum_{p_1 + p_2 \geq 6} ( \psi_{p_1} + \mathcal{T}_{p_1}) \eslash^{-p_2},
\]
where in the summation now $p_1 = 4, p_2 = 2$ for each of the two terms.  Again, the subscript of the summation on the right hand side allows us to immediately read off that the components $\alpha^{AB}, \Tslash^{AB}$ decay like $\frac{1}{r^6}$.

Finally, if $\xi$ is a $(k_1,k_2)$ $S_{u,v}$ tensor, write,
\[
	\xi \eslash^{k_2 - k_1},
\]
to schematically denote any of the components,
\[
	{\xi^{A_1\ldots A_{k_1}}}_{B_1\ldots B_{k_2}}.
\]
For example,
\[
	{\alpha_A}^B + {\Tslash_A}^B = \sum_{p_1+p_2 \geq 4} (\psi_{p_1} + \mathcal{T}_{p_2}) \eslash^{-p_2},
\]
and it can immediately be read off from the subscript of the summation that the components ${\alpha_A}^B, {\Tslash_A}^B$ decay like $\frac{1}{r^4}$.  Note that, in this notation, it is clearly not necessarily the case that no $e_1,e_2, e_1^{\flat}, e_2^{\flat}$ appear in the expression
\[
	\xi \eslash^0.
\]

\subsection{Vector Fields on the Mass Shell} \label{subsec:overview}
Consider the vectors $V_{(1)},\ldots,V_{(6)} \in T_{(x,p)}P$ defined by,
\[
	 V_{(A)} = \hor_{(x,p)} (e_A) + \frac{p^4}{r} \partial_{\overline{p}^A}, \quad V_{(3)} = \hor_{(x,p)} (e_3),
\]
\[
	 V_{(4)} = r\hor_{(x,p)} (e_4) + p^4\partial_{\overline{p}^4}, \quad V_{(4+A)} = \frac{p^4}{r^2}\partial_{\overline{p}^A},
\]
for $A = 1,2$.  The proof of Proposition \ref{prop:emmain} reduces to the following.
\begin{proposition} \label{prop:emmain2}
	At any point $x \in \mathcal{A}$, if $u = u(x)$, $v = v(x)$ then, for $i_1,i_2,i_3,i_4 = 1,\ldots,6$,
	\[
		\left( r^p \left\vert \mathcal{T}_p \left[ V_{(i_1)} f \right] \right\vert \right) (u,v) \leq C \bc,
		\qquad
		\left( r^p \left\vert \mathcal{T}_p \left[ V_{(i_2)} V_{(i_1)} f \right] \right\vert \right) (u,v) \leq C \bc,
	\]
	\[
		\int_{u_0}^u \int_{S_{u',v}} r(u',v)^{2p-2} \left\vert \mathcal{T}_p \left[ V_{(i_3)} V_{(i_2)} V_{(i_1)} f \right] \right\vert^2 d\mu_{S_{u',v}} d u' \leq C\bc,
	\]
	and
	\[
		\int_{u_0}^u \int_{v_0}^v \int_{S_{u',v'}} r(u',v')^{2p-4} \left\vert \mathcal{T}_p \left[ V_{(i_4)} V_{(i_3)} V_{(i_2)} V_{(i_1)} f \right] \right\vert^2 d\mu_{S_{u',v'}} d v' d u' \leq C \bc.
	\]
\end{proposition}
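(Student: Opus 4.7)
The plan is to translate each quantity $\mathcal{T}_p[V_{(i_k)} \cdots V_{(i_1)} f]$ into an expression that can be estimated on the initial hypersurface $\{v=v_0\}$ by propagating along trajectories of the geodesic flow. Fix $(x,p) \in P \cap \supp(f)$ and let $\gamma(s)$ denote the null geodesic through $(x,p)$, followed backwards until the parameter $s_0 < 0$ at which $v(\gamma(s_0)) = v_0$. For each $V_{(i)} \in T_{(x,p)} P$ define the Jacobi field $J_{(i)}(s) := d\exp_s\vert_{(x,p)} V_{(i)}$ along $s \mapsto \exp_s(x,p)$; by Proposition \ref{prop:jacobi} and Proposition \ref{prop:geodesicflow}, $J_{(i)}$ satisfies the Jacobi equation $\hat{\nabla}_X \hat{\nabla}_X J_{(i)} = \hat{R}(X, J_{(i)}) X$ with explicit initial data. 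The Vlasov equation \eqref{eq:vlasexp} and chain rule then give $V_{(i)} f(x,p) = J_{(i)}(s_0) f(\exp_{s_0}(x,p))$, and iterating the same identity produces
\[
V_{(i_k)} \cdots V_{(i_1)} f(x,p) = J_{(i_k)}(s_0) \cdots J_{(i_1)}(s_0) f(\exp_{s_0}(x,p)),
\]
where the right-hand side is a sum of coefficients depending on the components of $J_{(i_1)}, \ldots, J_{(i_k)}$ (and their derivatives along $\gamma$) multiplied by $\tilde{\mathfrak{D}}$-derivatives of $f$ evaluated on $\{v=v_0\}$.

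The next step is to estimate the components of each $J_{(i)}(s)$ (and iterated derivatives $J_{(i_j)} \cdots J_{(i_2)}$ of those components) uniformly in $s$ along $\gamma$. To do this I would work in two frames on $P$: a geometric frame $\{E_1, \ldots, E_7\}$ with weights chosen so that the schematic bound $|J_{(i)}^j| \lesssim \mathcal{K}_p$ encodes the expected $r$-behaviour, and a parallel-transported frame $\{F_1, \ldots, F_7\}$ in which the Jacobi equation reduces to a system of ordinary second-order ODEs $\ddot{J}^j(s) = -\hat{R}^j{}_{kl m}(X, J, X)$-type terms. The change-of-frame matrix $\Xi$ between them must be controlled using the connection formulas from Proposition \ref{prop:sasakiconnection}, and this control reduces to bounds on $\Gamma$, $\Gammaslash$ and derivatives of $b$ along $\gamma$, which follow from the bootstrap assumptions and Sobolev embeddings of Section \ref{section:Sobolev}. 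Using the formulas of Proposition \ref{prop:curvaturemassshell} for $\hat{R}$ together with the schematic notation of Section \ref{subsec:furtherschematic}, the right-hand side of the Jacobi equation expands into terms of the schematic form $\rp_{p_1} (\psi_{p_2} + \mathcal{T}_{p_2} + \nabla \psi_{p_2} + \nabla \mathcal{T}_{p_2}) \dot{\upgamma}_{p_3} \dot{\upgamma}_{p_4} \mathcal{K}_{p_5}$; the key structural claim (the ``null condition'' for the Jacobi equation) is that the combined weight $p_1 + p_2 + p_3 + p_4 + p_5 \geq 2$, so that the forcing is twice integrable along $\gamma$ since $|ds| \sim dr / p^4(0)$.

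With the Jacobi-field bounds in hand, the zeroth-order versions of Proposition \ref{prop:emmain2} then follow from the momentum-space estimates of Section \ref{section:suppf}: the integrand is bounded pointwise by $\bc \cdot \mathcal{K}_p \cdot p_{\mu} p_{\nu}$ in $\supp(f|_{P_x})$, and using $|p^4| \leq C$, $r^2|p^3|,r^2|p^A| \leq C p^4(0)$ together with $|\sqrt{\det \gslash}| \leq C r^2$ gives the required $r^{-p}$ decay of $\mathcal{T}_p[V f]$, exactly as in the proof of the zeroth-order case of Proposition \ref{prop:tmain}. For the iterated derivatives at second, third and fourth order, I would commute the Jacobi equation by $J_{(i_j)}$'s: each commutation produces $\hat{\nabla} \hat{R}$ terms which, by Proposition \ref{prop:curvaturemassshell}, involve first derivatives of $R$ and hence $\mathfrak{D}\psi$ and $\mathfrak{D}\mathcal{T}$. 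Here the crucial observation is that these derivatives only arise in the ``good'' directions covered by the transport estimates of the bootstrap, so after taking the spherical $L^2$ (resp.\ null $L^2$, resp.\ spacetime $L^2$) norm and integrating along $\gamma$, one closes the estimates at the respective orders. The main obstacle will be exactly this structural verification --- checking that the specific choice of the six vectors $V_{(1)}, \ldots, V_{(6)}$ produces Jacobi fields whose $\hat{R}(X,\cdot)X$ forcing, when commuted up to three times, avoids every potentially non-integrable combination $p^4(s) \cdot \psi_p$ or $p^4(s) \cdot \mathcal{T}_p$ with $p \leq 1$, and that any appearance of $\mathfrak{D} \Gamma$, $\mathfrak{D} \Gammaslash$ or $\mathfrak{D} r\nablaslash b$ at the top order can be absorbed via the weighted $L^2$ bootstrap assumptions \eqref{eq:Gammaslashba}--\eqref{eq:Gammaslashbanull}.
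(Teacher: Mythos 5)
Your overall strategy is the same as the paper's: reduce $V_{(i_k)}\cdots V_{(i_1)}f$ to Jacobi fields $J_{(i)}(s)=d\exp_s\vert_{(x,p)}V_{(i)}$ along the backward geodesic flow, work in both the weighted frame $\{E_i\}$ and a parallel frame $\{F_i\}$ with the change-of-frame matrix $\Xi$ controlled via Proposition \ref{prop:sasakiconnection}, exploit the null structure of $\hat{R}(X,J)X$ so that the forcing is twice integrable in $s$, and commute the Jacobi equation to reach higher orders. That part matches the paper.

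There is, however, a genuine gap at third and fourth order, which is precisely where the proof of this proposition does its real work. At those orders the commuted Jacobi equation forces terms involving $\mathfrak{D}^2\psi$, $\mathfrak{D}^3\psi$, $\mathfrak{D}^k\Gamma$, $\mathfrak{D}^k\Gammaslash$, $\mathfrak{D}^k r\nablaslash b$ which are \emph{not} bounded pointwise, so the component estimates for $J_{(i_3)}J_{(i_2)}({J_{(i_1)}}^j)$ only hold after integrating in momentum, and they produce error functionals of the form
\[
\mathcal{T}_p\left[\left(\int_{v_0}^{v(x)} r(s_{\tilde v})^{2q-2}\,\vert \mathfrak{D}^2\psi_q(s_{\tilde v})\vert^2\, d\tilde v\right)^{1/2}\right],
\]
i.e.\ integrals of squared curvature \emph{along the backward geodesics emanating from $x$}. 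Saying one then ``takes the null $L^2$ norm and integrates along $\gamma$'' does not close the estimate: after squaring, applying Cauchy--Schwarz in momentum, and integrating over $S_{u',v}$ and in $u'$, you are left with a seven-dimensional integral over $(u',\theta,p)$ of curvature evaluated at the flowed-out points $\exp_{s_{\tilde v}}(x,p)$, and you must convert this into the spacetime integral $\int_{u_0}^{u} F^1_{v_0,v}(\hat u)\,d\hat u$ controlled by the bootstrap. The paper does this by the explicit change of variables $(u',\theta^1,\theta^2)\mapsto(\hat u,\hat\theta^1,\hat\theta^2)$ along the geodesic flow, whose Jacobian is shown to be uniformly bounded above and below \emph{using the Jacobi field estimates themselves} (the matrix entries $\partial\hat\theta^B/\partial\theta^A$, $\partial\hat u/\partial u$ are combinations of the ${J_{(i)}}^j$). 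Without this step the higher-order bounds do not follow. Relatedly, your proposal does not explain why the norms degrade with order ($L^\infty$ at $k\le 2$, $L^2$ on incoming cones at $k=3$, spacetime $L^2$ at $k=4$): this is forced by the $\mathfrak{D}^2\alpha$ and $\mathfrak{D}^3\alpha$ terms, whose fluxes are only available on outgoing cones and must be transported (for $k=3$) or integrated in $v$ as well (for $k=4$), and this bookkeeping needs to be done explicitly.
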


The vectors $V_{(1)},\ldots,V_{(6)}$, together with $V_{(0)}$ given by,
\[
	V_{(0)} = r\hor_{(x,p)} (e_4),
\]
form a basis for $T_{(x,p)}P$.  They are preferred to the operators $\tilde{\mathfrak{D}}$ introduced in Section \ref{subsec:testimates} as, in view of Proposition \ref{prop:jacobi}, it is much more natural to work with vectors divided into their horizontal and vertical parts.  It will be shown below that $\vert V_{(i)} f \vert$ is uniformly bounded for $i = 0, \ldots, 6$.  

\begin{remark}
	It is not the case that $\vert \hor_{(x,p)}(e_A) f \vert$ is uniformly bounded, for $A = 1,2$.  In fact, $\vert \hor_{(x,p)}(e_A) f \vert$ grows at the rate $r(x)$ as $r(x) \to \infty$.  It is for this reason the term $\frac{p^4}{r} \partial_{\overline{p}^A}$ also appears in $V_{(A)}$.  A cancellation will later be seen to occur in these two terms, so that $\vert V_{(A)} f \vert$ is uniformly bounded.
	
	In Section \ref{subsec:jacobidef}, the vector fields $V_{(1)}, \ldots, V_{(6)}$ are used to define corresponding Jacobi fields, $J_{(1)},\ldots, J_{(6)}$, along $\exp_s(x,p)$.  The boundedness of $\vert V_{(i)} f \vert$ will follow from bounds on the components of $J_{(i)}$ for $i = 1,\ldots, 6$.  Whilst it is true that $\vert V_{(0)} f \vert$ is uniformly bounded, the appropriate bounds for the components of the Jacobi field corresponding to $V_{(0)}$ do not hold.  This is the reason $V_{(0)}$ derivatives of $f$ are treated separately in the proof of Proposition \ref{prop:emmain} below, and do not appear in Proposition \ref{prop:emmain2}.  See also the discussion in Remark \ref{rem:horef}.
	
	Similarly, the components of the Jacobi field corresponding to $ p^4 \partial_{\overline{p}^4}$ do not satisfy the appropriate bounds.  The $r \hor_{(x,p)}(e_4)$ term in $V_{(4)}$ appears for this reason.  The bound on $\vert p^4 \partial_{\overline{p}^4} f\vert$ can easily be recovered from the bound on $\vert V_{(4)} f\vert$ using the below observation that the Vlasov equation can be used to re-express $r \hor_{(x,p)} (e_4) f$ in terms of other derivatives of $f$.
\end{remark}

Below is a sketch of how Proposition \ref{prop:emmain} follows from Proposition \ref{prop:emmain2}.  

\begin{proof}[Proof of Proposition \ref{prop:emmain}]
Recall the point $(x,p)\in \supp (f)$ is fixed.  The goal is to estimate $\tilde{\mathfrak{D}}^k_{C_1,\ldots,C_l}f$ where $\tilde{\mathfrak{D}} \in \{ re_1, re_2, e_3,re_4\}$, $k\leq4$, and $l\leq k$ is the number of times $re_1$ or $re_2$ appears in $\tilde{\mathfrak{D}}^k$.  Since $V_{(0)}, \ldots, V_{(6)}$ span $T_{(x,p)}P$, clearly $\tilde{\mathfrak{D}}^k_{C_1,\ldots,C_l}f$ can be written as a combination of terms of the form $V_{(i_1)} \ldots V_{(i_{k'})} f$, where $k'\leq k$ and $i_1,\ldots,i_{k'} = 0,\ldots,6$.  It remains to check that the $V_{(0)}$ can be eliminated and then that the coefficients in the resulting expressions behave well.  This is done in the following steps.
\begin{enumerate}
  \item 	
  First rewrite $\tilde{\mathfrak{D}}^k_{C_1,\ldots,C_l}f$ as
  \[
	 \tilde{\mathfrak{D}}^k_{C_1,\ldots,C_l}f = r^l \hat{\mathfrak{D}}^k_{C_1,\ldots,C_l}f + \sum_{k'=1}^l C_{k'} r^{p_{k'}} \hat{\mathfrak{D}}^{k-k'}_{C_1,\ldots,C_l}f,
  \]
where $\hat{\mathfrak{D}} \in \{ e_1,e_2,e_3,re_4\}$, the $C_{k'}$ are constants and $p_{k'}$ are powers such that $p_{k'} \leq l$.  The terms in the sum are lower order and so, by induction, can be viewed as having already been estimated ``at the previous step'', so they are ignored from now on.  The $r^l$ factor in the first term will vanish when the norm is taken with the metric $\gslash$ and so is also ignored.

  \item 	Rewrite each $\hat{\mathfrak{D}}$ in terms of the vectors $V_{(0)},\ldots, V_{(6)}$ defined above,
  \begin{align*}
  	e_A
	=
	&
	V_{(A)} + r^2 \left( \frac{p^B}{p^4} \Gammaslash_{AB}^C + \frac{p^3}{p^4} {\chibar_A}^C + {\chi_A}^C \right) V_{(4+C)}
	+ \left( \frac{p^B}{2p^4} \chibar_{AB} + \etabar_A \right) V_{(4)}
	\\
	&
	- \left( \frac{p^B}{2p^4} \chibar_{AB} + \etabar_A \right) V_{(0)},
	\\
	e_3
	=
	&
	V_{(3)} + r^2 \left( \frac{p^A}{p^4} {\chibar_A}^B + 2 \eta^B \right) V_{(4+A)},
	\\
	re_4
	=
	&
	\left( 1 - \frac{rp^A}{p^4} \etabar_A - r \omega \right) V_{(0)} + r^3 \left( \frac{p^A}{p^4} {\chi_A}^B - \frac{p^A}{p^4} e_A(b^B) + \frac{2p^3}{p^4} \etabar^B \right) V_{(4+B)}
	\\
	&
	+ r \left( \frac{p^A}{p^4} \etabar_A + \omega \right) V_{(4)}.
  \end{align*}
  \item 	In the resulting expression bring all of the coefficients of the vectors $V_{(0)},\ldots,V_{(6)}$ out to the front to get
  \[
	 \hat{\mathfrak{D}}^k_{C_1,\ldots,C_l}f = \sum_{1\leq k' \leq k} \sum_{i_1,\ldots, i_{k'}} d_{i_1 \ldots i_{k'}} V_{(i_1)} \ldots V_{(i_{k'})} f,
  \]
  where the $d_{i_1 \ldots i_{k'}}$ are combinations of $\rp_p$ terms and derivatives of components of $\Gamma, \Gammaslash$ and $b$.  Clearly $d_{i_1 \ldots i_{k'}}$ involves at most $k-k'$ derivatives of the components of $\Gamma$ and $\Gammaslash$, and $k-k' + 1$ derivatives of components of $b$.  Moreover, using the bootstrap assumptions from Section \ref{section:ba} they are bounded with respect to $r$.  Hence $\vert d_{i_1 \ldots i_{k'}} \vert \leq C$ for $k' = 3,4$, $\int_{S_{u,v}} r^{-2} \vert d_{i_1 \ldots i_{k'}} \vert^2 d \mu_{S_{u,v}} \leq C$ for $k'=2$ and $\int_{v_0}^v \int_{S_{u,v'}} r^{-4} \vert d_{i_1 \ldots i_{k'}} \vert^2 d \mu_{S_{u,v'}} dv' \leq C$ for $k' = 1$.
  \item
	 For each $V_{(i_1)} \ldots V_{(i_{k'})} f$ in the above expression containing at least one $V_{(0)}$, commute to bring the innermost to the inside.  Relabelling if necessary, this gives
	 \begin{align*}
		V_{(i_1)} \ldots V_{(i_{k'})} f = 
		&
		\
		d_{i_1 \ldots i_{k'-1} 0} V_{(i_1)} \ldots V_{(i_{k'-1})} V_{(0)} f 
		\\
		&
		+
		\sum_{1\leq k'' \leq k' -1} \sum_{j_1,\ldots, j_{k''}} d_{j_1 \ldots j_{k''}} V_{(j_1)} \ldots V_{(j_{k''})} f,
	 \end{align*}
	 for some (new) $d_{j_1 \ldots j_{k''}}$ as above.
  \item
	 Use the Vlasov equation $p^{\mu} \hor_{(x,p)} (e_{\mu}) f = 0$ (see Example \ref{ex:generator}) to rewrite
	 \[
		V_{(0)} f = - \frac{rp^3}{p^4} \hor_{(x,p)} (e_3) f - \frac{rp^A}{p^4} \hor_{(x,p)} (e_A) f.
	 \]
	 Rewrite this right hand side in terms of $V_{(1)},\ldots,V_{(6)}$ and repeat Step (3) to bring the coefficients to the outside of the expression.
  \item
	 Repeat steps (4) and (5) to eliminate all of the $V_{(0)}$ terms to leave
	 \[
		\hat{\mathfrak{D}}^k_{C_1,\ldots,C_l}f = \sum_{1\leq k' \leq k} \sum_{i_1,\ldots, i_{k'}=1}^6 d_{i_1 \ldots i_{k'}} V_{(i_1)} \ldots V_{(i_{k'})} f,
	 \]
	 where the $d_{i_1 \ldots i_{k'}}$ have the correct form as above.
\end{enumerate}
The result now clearly follows from Proposition \ref{prop:emmain2}.

\end{proof}

\begin{remark} \label{rem:horef}
	The vector $V_{(0)} = r \hor_{(x,p)}(e_4)$ should be compared to the vector
	\[
		S := v \hor_{(x,p)} (e_4) + u \hor_{(x,p)} (e_3).
	\] 
	In the context of Theorem \ref{thm:main3}, where $u_0 \leq u \leq u_f$ in $\supp(f)$, the $u \hor_{(x,p)} (e_3)$ term is not the dominant one in $S$.  Recall also that $v$ here is comparable to $r$.
	
	Given any vector $V \in T_{(x,p)} P$, in the sections to follow it is shown that there exists a vector field $J$ on $P$ such that $Jf$ satisfies the Vlasov equation and $J$ coincides with $V$ at the point $(x,p) \in P$.  For $V_{(1)}, \ldots, V_{(6)}$ it will be shown that the corresponding $J$ are all of size 1 (independent of the point $x$) at the initial hypersurface $\{ v = v_0 \}$.  Whilst this is not the case for the vector field $J$ corresponding to $S$, a form of the following observation was used in the proof of Proposition \ref{prop:emmain}.  The vector field $J$ corresponding to $S$ has a large component, but this component points in the $X$ direction and hence vanishes when applied to $f$.  A manifestation of this fact is brought to light through the fact that $[X_M, S] = X_M$, where $X_M$ denotes the generator of the null Minkowski geodesic flow and, with a slight abuse of notation, $S$ now denotes the vector field $S = v \hor_{(x,p)} (e_4) + u \hor_{(x,p)} (e_3)$ on the mass shell over Minkowski space.
	
	Note that this observation is not specific to the massless case, i.\@e.\@ the identity $[X_M, S] = X_M$ is still true if $X_M$ now denotes the Minkowski geodesic flow restricted to the hypersurface $P_m = \{ (x,p) \in \TM \mid p \text{ future directed, } g_{\text{Mink}} (p,p) = -m^2\}$ for $m>0$.  A form of this observation is used in the work \cite{FaJoSm}.
\end{remark}

\subsection{The Jacobi Fields} \label{subsec:jacobidef}
Define vector fields $J_{(1)}, \ldots ,J_{(6)}$ along the trajectory of the geodesic flow $s \mapsto \exp_s(x,p)$ by 
\[
	J_{(i)}(s) = d \exp_{s} |_{(x,p)} V_{(i)},
\]
for $i = 1,\ldots ,6$.

Since
\[
	f(x,p) = f(\exp_s (x,p)),
\]
by the chain rule,
\[
	V_{(i)}f(x,p) = df|_{(x,p)} V_{(i)} = df|_{\exp_s(x,p)} \cdot d \exp_{s} |_{(x,p)} V_{(i)} = J_{(i)}(s) f.
\]
Taking $s<0$ so that $\exp_s(x,p)$ lies on the mass shell over the initial hypersurface $\{ v = v_0\}$, this relates $V_{(i)}f(x,p)$ to intial data.  By Proposition \ref{prop:jacobi}, $J_{(i)}$ is in fact a Jacobi field and hence $J_{(i)}(s)$ can be controlled using the Jacobi equation.

Note that so far the Jacobi fields are only defined along the trajectory $s \mapsto \exp_s(x,p)$.  Since higher order derivatives of $f$ will be taken it is necessary to define them in a neighbourhood of the geodesic $s \mapsto \exp_s(x,p)$ in $P$.  They are in general defined differently depending on what the higher order derivatives to be taken are.  When considering the quantity
\[
	V_{(i_k)} \ldots V_{(i_1)} f,
\]
for $2 \leq k \leq 4$ the Jacobi fields are extended so that
\[
	J_{(i_k)} \ldots J_{(i_1)} \vert_{s=0} = V_{(i_k)} \ldots V_{(i_1)},
\]
as follows.

If $k=2$ then define a curve $c_1 : (-\epsilon, \epsilon) \to P$, for some small $\epsilon > 0$, such that 
\[
	c_1(0) = (x,p), \qquad c_1'(s_1) = V_{(i_2)},
\]
i.\@e.\@ $c_1$ is the integral curve of $V_{(i_2)}$ through $(x,p)$.  Set $J_{(i_1)} = V_{(i_1)}$ along $c_1$ and let $J_{(i_1)}(s,s_1) = d \exp_s |_{c_1(s_1)} V_{(i_1)}$.  Now the expression $J_{(i_2)} J_{(i_1)} f$ is defined along $\exp_s(x,p)$ and has the desired property that $J_{(i_2)} J_{(i_1)}|_{s=0} = V_{(i_2)} V_{(i_1)}$.

Similarly, if $k=3$ define a variation of curves $c_2: (-\epsilon, \epsilon)^2 \to P$ so that
\[
	c_2(0,0) = (x,p), \qquad \frac{\partial c_2}{\partial s_1}(s_1,0) = V_{(i_3)}, \qquad \frac{\partial c_2}{\partial s_2}(s_1,s_2) = V_{(i_2)}.
\]
So the curve $s_1 \mapsto c_2(s_1,0)$ is the integral curve of $V_{(i_3)}$ through $(x,p)$, and, for fixed $s_1$, the curve $s_2 \mapsto c_2(s_1,s_2)$ is the integral curve of $V_{(i_2)}$ through $c_2(s_1,0)$.  Set
\[
	J_{(i_1)}(s,s_1,s_2) = d \exp_s |_{c_2(s_1,s_2)} V_{(i_1)},
\]
for $s_1,s_2 \in (-\epsilon, \epsilon), s<0$, and,
\[
	J_{(i_2)}(s,s_1,0) = d \exp_s |_{c_2(s_1,0)} V_{(i_2)},
\]
for $s_1 \in (-\epsilon, \epsilon), s<0$.  Now the expression $J_{(i_3)} J_{(i_2)} J_{(i_1)} f$ is defined along $\exp_s(x,p)$ and moreover,
\[
	J_{(i_3)} J_{(i_2)} J_{(i_1)} |_{s=0} = V_{(i_3)} V_{(i_2)} V_{(i_1)}.
\]

Finally, if $k=4$, similarly define $c_3: (-\epsilon, \epsilon)^3 \to P$ so that
\[
	c_3(0,0,0) = (x,p), \qquad \frac{\partial c_3}{\partial s_1}(s_1,0,0) = V_{(i_4)},
\]
\[
	\frac{\partial c_3}{\partial s_2}(s_1,s_2,0) = V_{(i_3)} \qquad \frac{\partial c_3}{\partial s_3}(s_1,s_2,s_3) = V_{(i_2)},
\]
and similarly set
\begin{align*}
	J_{(i_1)}(s,s_1,s_2,s_3) 
	&
	= d \exp_s |_{c_3(s_1,s_2,s_3)} V_{(i_1)},
	\\
	J_{(i_2)}(s,s_1,s_2,0) 
	&
	= d \exp_s |_{c_2(s_1,s_2,0)} V_{(i_2)},
	\\
	J_{(i_3)}(s,s_1,0,0) 
	&
	= d \exp_s |_{c_2(s_1,0,0)} V_{(i_3)},
\end{align*}
for $s_1,s_2,s_3 \in (-\varepsilon,\varepsilon), s<0$.

\subsection{Two Frames for $P$ and Components of Jacobi Fields} \label{subsec:fjacobi}
Let $s_* \geq 0$ be the time such that $\pi(\exp_{-s_*}(x,p)) \in \{v = v_0\}$, where $\pi:P\to \mathcal{M}$ is the natural projection.  The definition of the Jacobi fields of Section \ref{subsec:jacobidef} imply that, for $1 \leq k\leq 4$,
\[
	V_{(i_k)} \ldots V_{(i_1)} f (x,p) = J_{(i_k)} \ldots J_{(i_1)} f|_{s=-s_*},
\]
and so Proposition \ref{prop:emmain2} will follow from appropriate estimates for $J_{(i_k)} \ldots J_{(i_1)} f |_{s=-s_*}$.

Recall from Section \ref{section:setup} that $\overline{p}^1,\overline{p}^2, \overline{p}^4$ denote the restrictions of $p^1,p^2,p^3$ to $P$ and $\partial_{\overline{p}^1}, \partial_{\overline{p}^2}, \partial_{\overline{p}^4}$ denote the corresponding partial derivatives with respect to the $(u,v,\theta^1,\theta^2,\overline{p}^1,\overline{p}^2, \overline{p}^4)$ coordinate system for $P$.  For every $(x,p)\in P$ define the frame $E_1,\ldots,E_7$ of horizontal and vertical vectors for $P$ by
\[
	E_1
	= \hor_{(x,p)} \left( \frac{1}{r} e_1 \right),
	\quad
	E_2
	= \hor_{(x,p)} \left( \frac{1}{r} e_2 \right),
	\quad
	E_3
	= \hor_{(x,p)} \left( e_3 \right),
\]
\[
	E_4
	= \hor_{(x,p)} \left( e_4 \right),
	\quad
	E_5
	= \frac{1}{r(x,p)} \partial_{\overline{p}^1},
	\quad
	E_6
	= \frac{1}{r(x,p)} \partial_{\overline{p}^2},
	\quad
	E_7
	= \partial_{\overline{p}^4}.
\]
Recall the expressions \eqref{eq:pbar}, which imply,
\[
	E_5
	= \frac{1}{r(x,p)} \ver_{(x,p)} \left( e_1 + \frac{\gslash_{1A}p^A}{2p^4} e_3 \right),
	\quad
	E_6
	= \frac{1}{r(x,p)} \ver_{(x,p)} \left( e_2 + \frac{\gslash_{2A}p^A}{2p^4} e_3 \right),
\]
\[
	E_7
	= \ver_{(x,p)} \left( e_4 - \frac{p^3}{p^4} e_3 \right).
\]
The vectors $\frac{1}{r} e_A$, for $A = 1,2$, are used rather than the vectors $e_A$ which grow like $r$.  For $J\in \{ J_{(1)},\ldots,J_{(6)}\}$ let $J^j$ denote the components of $J$ with respect to this frame.  So
\[
	J = J^j E_j.
\]
Define also the frame $F_1,\ldots,F_7$ for $P$ along $s\mapsto \exp_s(x,p)$ by
\[
	F_i
	= \partrans_{(\gamma,\dot{\gamma})} \left( E_i \right)
\]
for $i=1,\ldots,7$.  Here $\gamma (s) = \pi (\exp_s(x,p))$ denotes the geodesic in $\mathcal{M}$, so that $\exp_s(x,p) = (\gamma(s),\dot{\gamma}(s))$, and $\partrans_{(\gamma,\dot{\gamma})}$ denotes parallel transport along $(\gamma(s),\dot{\gamma}(s))$.

Let $\Xi$ denote the change of basis matrix from $\{ F_i \}$ to $\{ E_i\}$, so that
\begin{equation} \label{eq:Xidef}
	E_i = {\Xi_i}^j F_j.
\end{equation}
Note that, at $s=0$,
\[
	 {\Xi_i}^j \vert_{s=0} = {\delta_i}^j.
\]

\begin{remark}
	In the following, when tensor fields on $P$ are written in components, these will always be components with respect to the frame $\{E_i\}$.  So if $J$ is a Jacobi field then $J^i$ denote the components such that
  \[
	 J = J^i E_i.
  \]
  When writing components with respect to the parallelly transported frame $\{F_i\}$, the matrix $\Xi$ will always be used.  So,
  \[
	 J = J^i {\Xi_i}^j F_j.
  \]
  Latin indices $i,j$ will always run from $1,\ldots, 7$.
\end{remark}

It will be necessary in the following sections to estimate the components of $\Xi$ and $\Xi^{-1}$, and certain derivatives, along $(\gamma(s),\dot{\gamma}(s))$.  It is therefore necessary to derive equations satisfied by the components of $\Xi$ and $\Xi^{-1}$.

\begin{proposition}
	The components of the matrices $\Xi$ and $\Xi^{-1}$ satisfy the equations,
	\begin{equation} \label{eq:Xi}
	 	\frac{d {\Xi_i}^j}{ds} (s) = \left( \hat{\nabla}_X E_i \right)^k {\Xi_k}^j(s),
	\end{equation}
	and
	\begin{equation} \label{eq:Xiinverse}
		 \frac{d {{\Xi^{-1}}_i}^j}{ds}(s) = - {{\Xi^{-1}}_i}^k \left( \hat{\nabla}_X E_k \right)^j (s),
	\end{equation}
	respectively, for $i,j = 1,\ldots,7$.  Here $\left( \hat{\nabla}_X E_i \right)^k$ denote the components of $\hat{\nabla}_X E_i$ with respect to $E_1,\ldots,E_7$.
\end{proposition}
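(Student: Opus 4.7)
The plan is to derive both equations as direct consequences of the parallel-transport definition of $\{F_i\}$ and linear algebra. Since $F_j = \partrans_{(\gamma,\dot\gamma)}(E_j\vert_{s=0})$ is parallelly transported along the trajectory $s\mapsto \exp_s(x,p)$ with respect to the induced connection $\hat\nabla$ on $P$, by definition
\[
  \hat\nabla_X F_j = 0, \qquad j=1,\ldots,7,
\]
where $X$ is the generator of the geodesic flow (Example \ref{ex:generator}), which by Proposition \ref{prop:geodesicflow} is tangent to the curve and coincides with $\frac{d}{ds}$ when acting on functions along it. The existence and uniqueness of $\{F_i\}$ and the invertibility of $\Xi$ along the trajectory are standard facts about parallel transport.

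First I would establish \eqref{eq:Xi}. Expanding $E_i = {\Xi_i}^j F_j$ and applying $\hat\nabla_X$, the Leibniz rule together with $\hat\nabla_X F_j=0$ gives
\[
  \hat\nabla_X E_i = X({\Xi_i}^j) F_j + {\Xi_i}^j \hat\nabla_X F_j = \frac{d{\Xi_i}^j}{ds} F_j.
\]
On the other hand, writing $\hat\nabla_X E_i$ in the frame $\{E_k\}$ and then converting back to $\{F_j\}$ via \eqref{eq:Xidef} yields
\[
  \hat\nabla_X E_i = (\hat\nabla_X E_i)^k E_k = (\hat\nabla_X E_i)^k {\Xi_k}^j F_j.
\]
Since $\{F_j\}$ is a basis at each point, equating coefficients proves \eqref{eq:Xi}.

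For \eqref{eq:Xiinverse} I would differentiate the identity ${\Xi_i}^k {{\Xi^{-1}}_k}^j = \delta_i^j$ along the trajectory, obtaining
\[
  \frac{d{\Xi_i}^k}{ds} {{\Xi^{-1}}_k}^j + {\Xi_i}^k \frac{d{{\Xi^{-1}}_k}^j}{ds} = 0,
\]
then multiply on the left by ${{\Xi^{-1}}_l}^i$. Using \eqref{eq:Xi} to replace $\frac{d{\Xi_i}^k}{ds}$ and contracting the resulting pair of inverse/matrix factors to the identity produces precisely \eqref{eq:Xiinverse}. There is no real obstacle here — the content is purely the defining property of parallel transport combined with the matrix identity $\frac{d}{ds}A^{-1} = -A^{-1}(\frac{d}{ds}A)A^{-1}$; the only thing to be careful about is the index placement when passing between the two frames and their duals.
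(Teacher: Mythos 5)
Your proof is correct and follows essentially the same route as the paper: the first identity is obtained exactly as in the text by expanding $E_i = {\Xi_i}^j F_j$, using $\hat{\nabla}_X F_j = 0$, and comparing coefficients in the $\{F_j\}$ basis. For the second identity the paper instead expands $0 = \hat{\nabla}_X F_i$ directly in the $\{E_j\}$ frame, whereas you differentiate ${\Xi_i}^k{{\Xi^{-1}}_k}^j = \delta_i^j$ and invoke the first identity; this is an equally valid, purely algebraic variant of the same computation.
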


\begin{proof}
Using the fact that,
\[
	 \hat{\nabla}_X F_j = 0,
\]
for all $j$, equation \eqref{eq:Xidef} gives,
\[
	 \hat{\nabla}_X E_i = X\left( {\Xi_i}^j \right) F_j.
\]
This can be written as the system of equations \eqref{eq:Xi}.

Similarly, writing $F_j = {{\Xi^{-1}}_i}^j E_j$ gives,
\[
	 0 = \hat{\nabla}_X F_i = \left[ X({{\Xi^{-1}}_i}^j) + {{\Xi^{-1}}_i}^k \left( \hat{\nabla}_X E_k \right)^j \right] E_j.
\]
This yields the system \eqref{eq:Xiinverse}.
\end{proof}

Now,
\[
	J_{(i_2)} J_{(i_1)} f = J_{(i_2)} \left( {J_{(i_1)}}^{j_1} \right) E_{j_1} f + {J_{(i_2)}}^{j_2} {J_{(i_1)}}^{j_1} E_{j_2} E_{j_1} f,
\]
so estimates for $J_{(i_2)} J_{(i_1)} f \vert_{s=-s_*}$ follow from estimates for the components ${J_{(i_2)}}^{j_2} {J_{(i_1)}}^{j_1}$ and derivatives $J_{(i_2)} \left( {J_{(i_1)}}^{j_1} \right)$ at $s= -s_*$ since $E_{j_1} f \vert_{s=-s_*}$ and $E_{j_2}E_{j_1} f \vert_{s=-s_*}$ are assumed to be bounded pointwise by assumption.  See Theorem \ref{thm:main3}.  Higher order derivatives can similarly be expressed in terms of derivatives of components of Jacobi fields.  This is discussed further in Section \ref{subsec:proofemmain2}.  It is hence sufficient to just estimate the derivatives of components,
\[
	J_{(i_k)} \ldots J_{(i_{2})} \left( {J_{(i_1)}}^j  \right),
\]
for $k = 1,\ldots, 4, j = 1,\ldots,7$.  These estimates are obtained in the next two subsections in Propositions \ref{prop:Jbounds}, \ref{prop:JJbounds}, \ref{prop:JJJbounds}, \ref{prop:JJJJbounds}.  In Section \ref{subsec:proofemmain2} they are then used to prove Proposition \ref{prop:emmain2}.

\subsection{Pointwise Estimates for Components of Jacobi Fields at Lower Orders} \label{subsec:lowerjacobi}
For the fixed point $(x,p) \in P \cap \supp(f)$, recall that $s_* = s_*(x,p)$ denotes the parameter time $s$ such that $\pi (\exp_{-s_*} (x,p)) \in \{ v = v_0 \}$.  The goal of this section is to show that the components of the Jacobi fields $J_{(1)}, \ldots, J_{(6)}$, with respect to the frame $E_1 ,\ldots , E_7$, are bounded, independently of $(x,p)$, at the parameter time $s= -s_*$, and then similarly for the first order derivatives $J_{(i_2)} ( {J_{(i_1)}}^j)$, for $i_1,i_2 = 1,\ldots, 6$, $j = 1,\ldots, 7$.  Second and third order derivatives of the Jacobi fields are estimated in Section \ref{subsec:higherjacobi}.

The estimates for the components of $J = J_{(1)},\ldots, J_{(6)}$ are obtained using the Jacobi equation
\[
	 \hat{\nabla}_X \hat{\nabla}_X J = \hat{R}(X,J) X,
\]
which in components takes the form,
\begin{equation} \label{eq:jacobicomponents}
	\frac{d^2 J^k {\Xi_k}^j}{ds^2} = \left( \hat{R}(X,J)X \right)^k {\Xi_k}^j.
\end{equation}
Recall that $\hat{R}$ denotes the curvature tensor of the induced Sasaki metic on $P$ and, by Proposition \ref{prop:curvaturemassshell}, $\hat{R}(X,J)X$ can be expressed in terms of the curvature tensor $R$ of $( \mathcal{M},g)$ as follows,
\begin{align} \label{eq:curvaturenotuseful}
\begin{split}
	\hat{R}(X,J)X =
	& \
	\hor_{(\gamma,\dot{\gamma})} \left( R(\dot{\gamma},J^h) \dot{\gamma} + \frac{3}{4} R(\dot{\gamma},R(\dot{\gamma},J^h) \dot{\gamma}) \dot{\gamma} + \frac{1}{2} (\nabla_{\dot{\gamma}} R)(\dot{\gamma},J^v) \dot{\gamma} \right)
	\\
	& \
	+ \frac{1}{2} {}^T\ver_{(\gamma,\dot{\gamma})} \left( (\nabla_{\dot{\gamma}} R)(\dot{\gamma},J^h) \dot{\gamma} + \frac{1}{2} R(\dot{\gamma},R(\dot{\gamma},J^v) \dot{\gamma}) \dot{\gamma}  \right).
\end{split}
\end{align}
Here $J^h$ and $J^v$ are defined by
\[
	J^h = J^A \frac{1}{r} e_A + J^3 e_3 + J^4e_4, \qquad J^v = J^{4+A}\frac{1}{r} e_A + \left( \frac{J^{4+A} \gslash_{AB} p^B}{2rp^4} - \frac{J^7 p^3}{p^4} \right) e_3 + J^7 e_4,
\]
so that
\[
	J\vert_{(\gamma,\dot{\gamma})} = \hor_{(\gamma,\dot{\gamma})}(J^h\vert_{(\gamma,\dot{\gamma})}) + {}^T\ver_{(\gamma,\dot{\gamma})}(J^v\vert_{(\gamma,\dot{\gamma})}).
\]
For a vector $Y \in T_{\gamma} \mathcal{M}$, ${}^T \ver_{(\gamma,\dot{\gamma})}(Y)$ denotes the projection of $\ver_{(\gamma,\dot{\gamma})}(Y)$ to $P$.  So, for each $(y,q) \in P$ on the trajectory of the geodesic flow through $(x,p)$, 
\[
	 J^h \vert_{(y,q)}, J^v \vert_{(y,q)} \in T_y \mathcal{M}.
\]
It is tempting to view $J^h,J^v$ as vector fields on $\mathcal{M}$, though this is not strictly the case as the value of $J^h\vert_{(y,q)},J^v \vert_{(y,q)}$ depends not only on $y$ but also on $q$.  Some care therefore needs to be taken here.

Given that derivatives of components of the energy momentum tensor appear in the Bianchi equations as error terms, it is important to estimate derivatives of $\mathcal{T}$, and hence the components of $J_{(1)}, \ldots, J_{(6)}$, at one degree of differentiability greater than the Weyl curvature components $\psi$.  The presence of the derivatives of $R$ in the expression \eqref{eq:curvaturenotuseful} therefore seem, at first glance, to be bad.  On closer inspection however, one sees that such terms are always horizontal or vertical lifts of derivatives of $R$ in the $\dot{\gamma}$ direction.  Since the components of $J_{(1)}, \ldots, J_{(6)}$ are estimated by integrating the Jacobi equation \eqref{eq:jacobicomponents} twice in $s$, the fact that the potentially problematic terms, when integrated in $s$, in fact lie at the same level of differentiability as $R$ can be taken advantage of.  In other words, the derivatives of $R$ appearing on the right hand side of the Jacobi equation \eqref{eq:jacobicomponents} always point in exactly the correct direction so that transport estimates for the Jacobi equation \eqref{eq:jacobicomponents} recover this loss.  In order to exploit this fact, it is convenient to rewrite the expression \eqref{eq:curvaturenotuseful}, using Proposition \ref{prop:sasakiconnection}, as,
\begin{align} \label{eq:curvatureuseful}
\begin{split}
	 \hat{R}(X,J)X = 
	 & \
	 \hor_{(\gamma,\dot{\gamma})} \bigg[ R(\dot{\gamma},J^h) \dot{\gamma} + \frac{1}{2} R(\dot{\gamma},R(\dot{\gamma},J^h) \dot{\gamma}) \dot{\gamma} - \frac{1}{2} X((J^v)^{\mu}) R(\dot{\gamma},e_{\mu}) \dot{\gamma}
	 \\
	 & \
	 - \frac{1}{2}(J^v)^{\mu} R(\dot{\gamma},\nabla_{\dot{\gamma}} e_{\mu}) \dot{\gamma} - \frac{1}{2} X(\dot{\gamma}^{\alpha}) \left( R(e_{\alpha},J^v) \dot{\gamma} + R(\dot{\gamma},J^v)e_{\alpha} \right) 
	 \\
	 & \
	 - \frac{1}{2} \dot{\gamma}^{\alpha} \left( R(\nabla_{\dot{\gamma}} e_{\alpha},J^v) \dot{\gamma} + R(\dot{\gamma},J^v) \nabla_{\dot{\gamma}} e_{\alpha} \right) \bigg]
	 \\
	 & \
	 + \frac{1}{2} {}^T\ver_{(\gamma,\dot{\gamma})} \bigg[ R(\dot{\gamma}, R(\dot{\gamma},J^v)\dot{\gamma})\dot{\gamma} - X((J^h)^{\mu}) R(\dot{\gamma},e_{\mu})\dot{\gamma}
	 \\
	 & \
	 -X(\dot{\gamma}^{\alpha}) \left( R(e_{\alpha},J^h )\dot{\gamma} + R(\dot{\gamma},J^h)e_{\alpha} \right)
	 \\
	 & \
	 - \dot{\gamma}^{\alpha} \left( R( \nabla_{\dot{\gamma}} e_{\alpha},J^h )\dot{\gamma} + R(\dot{\gamma},J^h) \nabla_{\dot{\gamma}} e_{\alpha} \right) - (J^h)^{\mu} R(\dot{\gamma}, \nabla_{\dot{\gamma}} e_{\mu}) \dot{\gamma} \bigg]
	 \\
	 &
	 + \frac{1}{2} \hat{\nabla}_X \hor_{(\gamma,\dot{\gamma})} \left( R(\dot{\gamma},J^v) \dot{\gamma} \right) + \frac{1}{2} \hat{\nabla}_X {}^T\ver_{(\gamma,\dot{\gamma})} \left( R(\dot{\gamma},J^h) \dot{\gamma} \right).
\end{split}
\end{align}

The above observations explain how \eqref{eq:jacobicomponents} can be used to give good estimates for $J_{(1)}, \ldots, J_{(6)}$ from the point of view of regularity.  In order to obtain global estimates however, it is also important to see that \eqref{eq:curvatureuseful} has the correct behaviour in $r$ so as to be twice globally integrable.  This can be seen by rewriting \eqref{eq:curvatureuseful} in terms of $\psi$, $\mathcal{T}$ and using the bootstrap assumptions \eqref{eq:curvatureba}, \eqref{eq:emba}, along with the the asymptotics for $p^1,p^2, p^3, p^4$ obtained in Section \ref{section:suppf}, being sure to allow certain components of $J$ to grow like $r$.  Consider, for example, just the first term $\hor_{(\gamma,\dot{\gamma})} \left( R(\dot{\gamma},J^h) \dot{\gamma} \right)$ in \eqref{eq:curvatureuseful}.  Recall the identity,
\[
	R_{\alpha \beta \gamma \delta} = W_{\alpha \beta \gamma \delta} + \frac{1}{2}(g_{\alpha \gamma}T_{\beta \delta} + g_{\beta \delta}T_{\alpha \gamma} - g_{\beta \gamma} T_{\alpha \delta} - g_{\alpha \delta} T_{\beta \gamma}).
\]
For a vector field $K$ along $\gamma$ in $\mathcal{M}$, let $K^{\mu}$ denote the components of $K$ with respect to $\frac{1}{r}e_1,\frac{1}{r}e_2,e_3,e_4$,
\[
	K = \frac{1}{r} K^A e_A + K^3e_3 + K^4e_4.
\]
Using the form of the metric in the double null frame,
\begin{align*}
	R(\dot{\gamma},K)\dot{\gamma}
	=
	&
	g^{\mu\nu} R(\dot{\gamma},K,\dot{\gamma}, e_{\mu})e_{\nu}
	\\
	=
	&
	-\frac{1}{2} R(\dot{\gamma},K,\dot{\gamma}, e_{3})e_{4} - \frac{1}{2} R(\dot{\gamma},K,\dot{\gamma}, e_{4})e_{3} + \gslash^{AB} R(\dot{\gamma}, K,\dot{\gamma}, e_A)e_B.
\end{align*}

Hence,
\begin{align} \label{eq:curvaturelong}
\begin{split}
	&
	R(\dot{\gamma}, K)\dot{\gamma} =
	\\
	&
	\qquad
	-\frac{1}{2} \bigg\{ 
	K^4\Big[ \dot{\gamma}^3 \dot{\gamma}^3( 4\rho + 2 \Tslash_{34}) + 2 \dot{\gamma}^3 \dot{\gamma}^A ( \beta_A + \Tslash_{4A})
	+ \dot{\gamma}^A \dot{\gamma}^B (\alpha_{AB} + \frac{1}{2} \gslash_{AB} \Tslash_{44}) \Big]
	\\
	&
	\qquad
	+ K^3 \Big[ -2 \dot{\gamma}^3 \dot{\gamma}^4 (2 \rho + \Tslash_{34}) + \dot{\gamma}^3 \dot{\gamma}^A (2 \betabar_A - \Tslash_{3A}) - \dot{\gamma}^4 \dot{\gamma}^A (2 \beta_A + \Tslash_{4A})
	\\
	&
	\qquad
	- \dot{\gamma}^A \dot{\gamma}^B (\rho \gslash_{AB} + \sigma \epsslash_{AB} - \frac{1}{2} \gslash_{AB} \Tslash_{34} + \Tslash_{AB}) \Big]
	\\
	&
	\qquad
	+ \frac{1}{r} K^C \Big[ - \dot{\gamma}^3 \dot{\gamma}^4 (2\beta_C + \Tslash_{4C}) + \dot{\gamma}^3 \dot{\gamma}^3 (-2 \betabar_C + \Tslash_{3C}) 
	\\
	&
	\qquad
	+ \dot{\gamma}^3 \dot{\gamma}^A( \rho\gslash_{AC} + 3 \sigma \epsslash_{AC} + \Tslash_{AC} - \frac{1}{2} \gslash_{AC} \Tslash_{34} ) - \dot{\gamma}^4 \dot{\gamma}^A(\alpha_{AC} + \frac{1}{2} \gslash_{AC} \Tslash_{44})
	\\
	&
	\qquad
	+ \dot{\gamma}^A \dot{\gamma}^B ( - \gslash_{AB} \beta_C + \gslash_{AC} \beta_B + \frac{1}{2} \gslash_{AB} \Tslash_{4C} - \frac{1}{2} \gslash_{AC} \Tslash_{4B}) \Big]
	\bigg\} e_3
	\\
	&
	\qquad
	-\frac{1}{2} \bigg\{ 
	K^4\Big[ -2 \dot{\gamma}^3 \dot{\gamma}^4 ( 2\rho + \Tslash_{34}) + \dot{\gamma}^3 \dot{\gamma}^A ( 2\betabar_A - \Tslash_{3A}) - \dot{\gamma}^4 \dot{\gamma}^A(2 \beta_A + \Tslash_{4A})
	\\
	&
	\qquad
	- \dot{\gamma}^A \dot{\gamma}^B(\rho \gslash_{AB} + \sigma \epsslash_{AB} - \frac{1}{2} \gslash_{AB} \Tslash_{34} + \Tslash_{AB}) \Big]
	\\
	&
	\qquad
	+ K^3 \Big[ \dot{\gamma}^4 \dot{\gamma}^4 (4 \rho + 2 \Tslash_{34}) - 2 \dot{\gamma}^4 \dot{\gamma}^A(2\betabar_A - \Tslash_{3A}) 
	+ \dot{\gamma}^A \dot{\gamma}^B( \alphabar_{AB} + \frac{1}{2}\gslash_{AB} \Tslash_{33})
	\Big]
	\\
	&
	\qquad
	+ \frac{1}{r} K^C \Big[
	\dot{\gamma}^3 \dot{\gamma}^4 (2 \betabar_C - \Tslash_{3C}) + \dot{\gamma}^4 \dot{\gamma}^4 (2 \beta_C + \Tslash_{4C})
	\\
	&
	\qquad
	- \dot{\gamma}^3 \dot{\gamma}^A( \alphabar_{AC} + \frac{1}{2} \gslash_{AC} \Tslash_{33}) + \dot{\gamma}^4 \dot{\gamma}^A (\rho \gslash_{AC} + 3\sigma \epsslash_{AC} + \Tslash_{AC} - \frac{1}{2} \gslash_{AC} \Tslash_{34})
	\\
	&
	\qquad
	+ \dot{\gamma}^A \dot{\gamma}^B( \gslash_{AB} \betabar_C - \gslash_{AC} \betabar_B + \frac{1}{2} \gslash_{AB} \Tslash_{3C} - \frac{1}{2} \gslash_{AC} \Tslash_{3B})
	\Big]
	\bigg\} e_4
	\\
	&
	\qquad
	+ \bigg\{
	K^4 \Big[
	- \dot{\gamma}^3 \dot{\gamma}^4(2 \beta^D + {\Tslash_4}^D) + \dot{\gamma}^3 \dot{\gamma}^3(-2\betabar^D + {\Tslash_3}^D) 
	\\
	&
	\qquad
	+ \dot{\gamma}^3 \dot{\gamma}^A( \rho \delta_A^D + 3 \sigma {\epsslash_A}^D + {\Tslash_A}^D - \frac{1}{2} \delta_A^D \Tslash_{34})
	- \dot{\gamma}^4 \dot{\gamma}^A ({\alpha_A}^D + \frac{1}{2} \delta_A^D \Tslash_{44})
	\\
	&
	\qquad
	+ \dot{\gamma}^A \dot{\gamma}^B ( - \gslash_{AB}\beta^D + \delta_A^D \beta_B + \frac{1}{2} \gslash_{AB} {\Tslash_4}^D - \frac{1}{2} \delta_A^D \Tslash_{4B})
	\Big]
	\\
	&
	\qquad
	+ K^3 \Big[
	\dot{\gamma}^3 \dot{\gamma}^4 (2 \betabar^D - {\Tslash_3}^D) + \dot{\gamma}^4 \dot{\gamma}^4(2\beta^D + {\Tslash_4}^D) - \dot{\gamma}^3 \dot{\gamma}^A({\alphabar_A}^D + \frac{1}{2} \delta_A^D \Tslash_{33})
	\\
	&
	\qquad
	+ \dot{\gamma}^4 \dot{\gamma}^A ( \rho \delta_A^D + 3 \sigma {\epsslash_A}^D + {\Tslash_A}^D - \frac{1}{2} \delta_A^D \Tslash_{34})
	\\
	&
	\qquad
	+ \dot{\gamma}^A \dot{\gamma}^B ( \gslash_{AB} \betabar^D - \delta_A^D \betabar_B + \frac{1}{2} {\Tslash_3}^D - \frac{1}{2} \delta_A^D \Tslash_{3B})
	\Big]
	\\
	&
	\qquad
	+ \frac{1}{r} K^C \Big[
	\dot{\gamma}^4 \dot{\gamma}^4( {\alpha_C}^D + \frac{1}{2} \delta_C^D \Tslash_{44}) + \dot{\gamma}^3 \dot{\gamma}^3( {\alphabar_C}^D+ \frac{1}{2} \delta_C^D \Tslash_{33})
	\\
	&
	\qquad
	+ \dot{\gamma}^3 \dot{\gamma}^4 ( -2\rho \delta_C^D - 2{T_C}^D + \delta_C^D \Tslash_{34})
	+ \dot{\gamma}^A \dot{\gamma}^B \big( -\rho(\gslash_{AB} \delta_C^D - \delta_A^D \gslash_{BC})
	\\
	&
	\qquad
	+ \dot{\gamma}^4 \dot{\gamma}^A( \gslash_{AC} \beta^D + \delta_A^D \beta_C - 2\delta_C^D \beta_A + \delta_C^D \Tslash_{4A} - \frac{1}{2} \gslash_{AC} {\Tslash_4}^D - \frac{1}{2} \delta_A^D \Tslash_{4C})
	\\
	&
	\qquad
	+ \dot{\gamma}^3 \dot{\gamma}^A( - \gslash_{AC} \betabar^D - \delta_A^D \betabar_C + 2\delta_C^D \betabar_A + \delta_C^D \Tslash_{3A} - \frac{1}{2} \gslash_{AC} {\Tslash_3}^D - \frac{1}{2} \delta_A^D \Tslash_{3C})
	\\
	&
	\qquad
	+ \frac{1}{2}(\gslash_{AB} {\Tslash_C}^D + \delta_A^D \Tslash_{AB} - \gslash_{BC} {\Tslash_A}^D - \delta_A^D \Tslash_{BC})  \big)
	\Big] \bigg\} e_D.
	\end{split}
\end{align}

This can schematically be written as
\begin{equation} \label{eq:schematiccurvature}
	R(\dot{\gamma}, K)\dot{\gamma} = \sum_{p_1 + \ldots + p_6 \geq \frac{5}{2}} \rp_{p_1} \mathcal{K}_{p_2} \dot{\upgamma}_{p_3} \dot{\upgamma}_{p_4}(\gslash + 1) (\psi_{p_5} + \mathcal{T}_{p_5}) \eslash^{-p_6}\left( e_3 + e_4 + \frac{1}{r} \eslash \right).
\end{equation}
The summation can actually always begin at 3, except for terms involving $\beta$.  This fact is important when estimating higher order derivatives of Jacobi fields in Section \ref{subsec:higherjacobi} and will be returned to then.

Recall, from Section \ref{subsec:furtherschematic}, that $\mathcal{K}_{-1}$ is used to schematically denote $K^1$, $K^2$, or $K^4$.  It is important to denote $K^1,K^2, K^4$ as such since the $E_1$, $E_2$ and $E_4$ components of some of $J_{(1)}, \ldots, J_{(6)}$ will be allowed to grow at rate $r$.

Clearly, in order to use the Jacobi equation \eqref{eq:jacobicomponents} to estimate the components of $J_{(1)}, \ldots, J_{(6)}$, several additional points need to first be addressed.  First, it is obviously important to understand how the matrix $\Xi$, along with its inverse $\Xi^{-1}$, behaves along $(\gamma, \dot{\gamma})$.  Moreover, since the terms in $\left( \hat{R}(X,J)X \right)^i$ involving derivatives of $R$ have to be integrated by parts, it is also important to understand how the derivative of $\Xi$, $\frac{d \Xi}{ds}$, behaves along $(\gamma, \dot{\gamma})$.  An understanding of the behaviour of these matrices is obtained in Proposition \ref{prop:matrixA} below using the equations \eqref{eq:Xi}, \eqref{eq:Xiinverse}.  Since the components of $\hat{\nabla}_X E_i$ appear in equations \eqref{eq:Xi}, \eqref{eq:Xiinverse}, they are written in schematic notation in Proposition \ref{prop:framederivatives}.

Secondly, it is necessary to understand the initial conditions,
\[
	\left( J^i {\Xi_i}^j \right) (0) = J^j(0),
	\qquad
	\left( \frac{ d J^i {\Xi_i}^j}{ds} \right) (0) = \left( \hat{\nabla}_X J \right) (0)
\]
for the Jacobi equation \eqref{eq:jacobicomponents}.  These initial conditions are computed in Proposition \ref{prop:zeroic}.  The reader is encouraged on first reading to first set $\hat{R}(X,J)X$ equal to zero (i.\@e.\@ to consider the Jacobi fields on a fixed Minkowski background) in order to first understand the argument in this simpler setting.  The Jacobi equation \eqref{eq:jacobicomponents} can, in this case, be explicitly integrated and explicit expressions for the $F_1,\ldots,F_7$ components of the Jacobi fields, $(J^i {\Xi_i}^j) (s)$, can be obtained.  It is clear that, even in this simplified setting, in order to obtain the appropriate boundedness statements, certain cancellations must occur in certain terms arising from $J^j(0)$ and certain terms arising from $(\hat{\nabla}_X J)^j (0)$ for some of the Jacobi fields.  Lemma \ref{lem:sstar} below is used to exploit these cancellations in the general setting.

Finally, it is convenient to write some remaining quantities appearing in the expression \eqref{eq:curvatureuseful} in schematic notation.  This is done in Proposition \ref{prop:geodesicframederivative}.

The zeroth order estimates for the components of $J_{(1)},\ldots,J_{(6)}$ are then obtained in Proposition \ref{prop:Jbounds}, with Lemma \ref{lem:generaljacobi} being used to make the presentation more systematic.

To obtain estimates for first order derivatives of the components of $J_{(i_1)}$, for $i_1 = 1,\ldots,6$, the Jacobi equation \eqref{eq:jacobicomponents} is commuted with $J_{(i_2)}$, for $i_2 = 1,\ldots,6$.  The fact that $J_{(i_2)}$ is a Jacobi field along $s \mapsto \exp_s (x,p) = ( \gamma(s) , \dot{\gamma}(s))$, a curve in $P$ whose tangent vector is $X$, guarantees that $[X,J_{(i_2)}] = 0$, i.\@e.\@ $J_{(i_2)}$ commutes with $\frac{d}{ds}$.  It is now crucial to ensure that the schematic form of the error terms is preserved on applying $J_{(i_2)}$, e.\@g.\@ $J_{(i_2)} \left( \left( R(X,J)X \right)^j \right)$ must have the same, globally twice integrable, behaviour in $r$ as $\left( R(X,J)X \right)^j$, for $j = 1,\ldots,7$.  Moreover, in obtaining the zeroth order estimates, it was important that the bound
\[
	\left\vert \left( R(X,J_{(i_1)})X \right)^j \right\vert
	\leq
	C (p^4(s))^2,
\]
was true in order that the right hand side of the Jacobi equation \eqref{eq:jacobicomponents} could be twice integrated in $s$.  It is therefore also important to also ensure that,
\[
	\left\vert J_{(i_2)} \left( \left( R(X,J_{(i_1)})X \right)^j \right) \right\vert
	\leq
	C (p^4(s))^2.
\]
Note that this property is completely independent of the behaviour in $r$.  Proposition \ref{prop:schematicpreserved} is motivated by showing these properties of the error terms are preserved.  In order for this to be so, it quickly becomes apparent that, at the zeroth order, it is necessary to show that $\vert {J_{(i_1)}}^j \vert \leq C p^4$ for $j = 5,6,7$.  Also, on inspection of Proposition \ref{prop:schematicpreserved}, one sees that $J(\dot{\gamma}^A)$ contains terms of the form,
\[
	\frac{1}{r} \left( J^{4+A} - \frac{\dot{\gamma}^4}{r} J^A \right),
\]
for $A = 1,2$.  The presence of such terms means that, in order to see that $J_{(i_2)}(\dot{\gamma}^A)$ has the correct $\frac{1}{r^2}$ behaviour, it is necessary to ensure that, for each $J = J_{(1)} ,\ldots, J_{(6)}$, either $J^{4+A}$ is not merely bounded at $s = -s_*$, but behaves like $\frac{1}{r}$ along $(\gamma, \dot{\gamma})$, or that an appropriate cancellation occurs between the $J^{4+A}$ and $J^A$ terms.  It is hence necessary, at the zeroth order, to not just show boundedness of the components at $s= -s_*$, but to understand their behaviour along $(\gamma, \dot{\gamma})$ in more detail.  To gain this understanding it also becomes necessary to understand properties of the change of frame matrices, $\Xi$ and $\Xi^{-1}$, in more detail.  See Proposition \ref{prop:matrixA} and Proposition \ref{prop:Jbounds}.  In order to further commute the Jacobi equation \eqref{eq:jacobicomponents}, to estimate second and third order derivatives of the components of the Jacobi fields, it is also necessary to understand more detailed properties of first order derivatives of the components of the Jacobi fields.

As preliminaries to the estimates for the first order derivatives of the components of the Jacobi fields, which are treated in Proposition \ref{prop:Jbounds}, relevant properties of first order derivatives of $\Xi$ and $\Xi^{-1}$ are understood in Proposition \ref{prop:JmatrixA}, along with relevant properties of first order derivatives of the initial conditions for the Jacobi equation in Proposition \ref{prop:Jic}.

The following Lemma, recall, will be used to exploit certain cancellations in terms arising from the initial conditions for the Jacobi equation \eqref{eq:jacobicomponents}.

\begin{lemma} \label{lem:sstar}
	For $s\in [-s_*,0]$,
	\[
		\left\vert r(0) + p^4(0)s - r(s) \right\vert \leq \frac{C}{r(s)}.
	\]
	where $C$ is a constant independent of $(x,p)$.
\end{lemma}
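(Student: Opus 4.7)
The plan is to write $r(0) + p^4(0) s - r(s)$ as an integral along the geodesic and control each term using the pointwise asymptotics for $p^3, p^4$ already established in Proposition \ref{prop:suppfmain}. Since $\dot{r}(s') = p^4(s') - p^3(s')/\Omega^2(s')$, integration from $s$ to $0$ gives
\[
	r(0) + p^4(0) s - r(s)
	= \int_s^0 \big( p^4(0) - p^4(s') \big) \, ds' + \int_s^0 \frac{p^3(s')}{\Omega^2(s')} \, ds'.
\]
For the first integrand, the geodesic equation for $p^4$ together with the pointwise Ricci bounds used to prove Proposition \ref{prop:suppfmain} yields $|\dot{p}^4(s'')| \leq C (p^4(0))^2 / r(s'')^3$. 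Integrating this bound from $s'$ to $0$ and performing the change of variables $r = r(s'')$ (valid because $\dot r(s'') \geq p^4(0)/4$ along $\gamma$, as Lemma \ref{lem:suppf1} applies uniformly in the support of $f$) produces
\[
	|p^4(0) - p^4(s')| \leq \frac{C p^4(0)}{r(s')^2}.
\]
For the second integrand, the mass-shell asymptotic $r(s')^2 p^3(s') \leq C p^4(0)$ from Proposition \ref{prop:suppfmain}, combined with $\Omega^{-2} \leq 2$, gives directly $|p^3(s')/\Omega^2(s')| \leq C p^4(0) / r(s')^2$.

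Once both integrands are bounded by $C p^4(0)/r(s')^2$, I would estimate
\[
	\int_s^0 \frac{ds'}{r(s')^2}
	\leq \frac{4}{p^4(0)} \int_{r(s)}^{r(0)} \frac{dr}{r^2}
	\leq \frac{4}{p^4(0) \, r(s)},
\]
again via the change of variables $r = r(s')$ using $\dot{r} \geq p^4(0)/4$. The two factors of $p^4(0)$ cancel, and the universal bound $p^4(0) \leq C_{p^4}$ in $\supp f$ ensures the resulting constant is independent of $(x,p)$. The main, and in fact only, point to be careful about is justifying $\dot r \geq p^4(0)/4$ along the entire backward segment $s' \in [s,0]$; this follows from $\dot r(s') = p^4(s') - p^3(s')/\Omega^2(s') \geq \tfrac{1}{2} p^4(0) - C p^4(0)/r(s')^2$ together with the fact that $r$ is bounded below by a large constant (depending on $v_0$) throughout $\mathcal{A}$, so the $r(s')^{-2}$ correction is harmless once $v_0$ is taken large enough, exactly as in the proof of Lemma \ref{lem:suppf1}.
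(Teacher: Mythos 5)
Your argument is correct and is essentially the paper's own proof: both integrate $\dot{r} = p^4 - p^3/\Omega^2$ along the geodesic, bound $\vert p^4(s') - p^4(0)\vert \leq C p^4/r(s')^2$ via the geodesic equation for $p^4$, control $p^3/\Omega^2$ by Proposition \ref{prop:suppfmain}, and integrate $r(s')^{-2}$ in $s'$ using $\dot{r} \sim p^4(0)$ (the content of Lemma \ref{lem:suppf1}). The only blemish is a sign in your displayed identity --- the right-hand side should be its negative --- which is immaterial once absolute values are taken.
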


\begin{proof}
	Note that
	\[
		\dot{r} (s) = X(r)(s) = p^4(s) - \frac{1}{\Omega^2} p^3(s),
	\]
	and so
	\begin{align*}
		r(0)
		=
		& \
		r(s) + \int^0_{-s_*} \dot{r} (s) ds
		\\
		=
		& \
		r(s) + \int^0_{-s_*} p^4 (s) ds - \int^0_{-s_*} \frac{1}{\Omega^2} p^3(s) ds.
	\end{align*}
	Recall that $\vert \frac{1}{\Omega^2} p^3(s)\vert \leq \frac{C}{r^2}p^4(s)$, and the geodesic equation for $p^4$,
	\begin{align*}
		\dot{p}^4(s) = 
		&
		\ \frac{1}{2r} \gslash_{AB} p^A(s)p^B(s) - \frac{1}{4} \left( \tr \chibar + \frac{2}{r} \right) \gslash_{AB} p^A(s)p^B(s)
		\\
		&
		- \frac{1}{2} \hat{\chibar}_{AB} p^A(s)p^B(s) - 2 \etabar_A p^A(s) p^4(s) - \omega (p^4(s))^2,
	\end{align*}
	which implies that,
	\[
		\vert p^4(s) - p^4(0) \vert \leq \int_s^0 \vert \dot{p}^4 (s') \vert ds' \leq \int_s^0 \frac{C (p^4(s'))^2}{r(s')^3} ds' \leq \frac{Cp^4(s)}{r(s)^2}.
	\]
	Hence,
	\[
		\vert r(0) + sp^4(0) - r(s) \vert = \left\vert \int_s^0 p^4(s') - p^4(0) - \frac{1}{\Omega^2} p^3(s') ds' \right\vert \leq \frac{C}{r(s)}.
	\]
\end{proof}

Note that Lemma \ref{lem:sstar} in particular implies that
\[
	\left\vert r(0) - p^4(0)s_* \right\vert \leq C,
\]
and also,
\[
	\left\vert \frac{sp^4(0)}{r(0)} \right\vert \leq C,
\]
for $s\in [-s_*,0]$.

In the following two propositions, terms arising in the equations \eqref{eq:Xi}, \eqref{eq:Xiinverse} for $\Xi$ and $\Xi^{-1}$, and in the expression \eqref{eq:curvatureuseful} for $\hat{R}(X,J)X$ are respectively written in schematic form.

\begin{proposition} \label{prop:framederivatives}
  In schematic notation, if $i,j = 1,\ldots,7$, then
  \begin{multline*}
		  \left(\hat{\nabla}_X E_i\right)^j =
		  \sum_{p_1+ \ldots +p_4\geq 2} \dot{\upgamma}_{p_1} \rp_{p_2} (\gslash + 1) \bigg[ \rp_{p_3} + \Gamma_{p_3} + \rp_{p_3} \left( r \Gammaslash + r\nablaslash b + r \Gammaslash \cdot b \right)
		  \\
		  + \sum_{q_1+q_2\geq p_3 + \frac{1}{2}} \dot{\upgamma}_{q_1} ( \psi_{q_2} + \mathcal{T}_{q_2}) \bigg] \eslash^{-p_4},
  \end{multline*}
  and moreover, if $i=1,\ldots, 4, j = 5,6,7$ or vice versa,
  \begin{equation*}
		  \left(\hat{\nabla}_X E_i\right)^j =
		  \sum_{p_1+ \ldots +p_4\geq 2} \dot{\upgamma}_{p_1} \rp_{p_2} (\gslash + 1) \bigg[ \sum_{q_1+q_2\geq p_3 + \frac{1}{2}} \dot{\upgamma}_{q_1} ( \psi_{q_2} + \mathcal{T}_{q_2}) \bigg] \eslash^{-p_4},
  \end{equation*}
  Also,
  \begin{equation*}
		  \left(\hat{\nabla}_X E_A\right)^{4+B} =
		  \sum_{p_1+ \ldots +p_4\geq 3} \dot{\upgamma}_{p_1} \rp_{p_2} (\gslash + 1) \bigg[ \sum_{q_1+q_2\geq p_3 + \frac{1}{2}} \dot{\upgamma}_{q_1} ( \psi_{q_2} + \mathcal{T}_{q_2}) \bigg] \eslash^{-p_4},
  \end{equation*}
  for $A,B = 1,2$.
\end{proposition}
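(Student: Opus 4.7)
\emph{The plan is to} prove Proposition \ref{prop:framederivatives} by a direct case‑by‑case computation, using Proposition \ref{prop:sasakiconnection} to expand $\overline\nabla_X E_i$, the identity $\hat\nabla_A B = \overline\nabla_A B + \tfrac12\overline g(\overline\nabla_A B, N) V$ from the proof of Proposition \ref{prop:curvaturemassshell} to project onto $T P$, and finally the frame conversions $\partial_{\overline p^A} = r E_{4+A}$, $\partial_{\overline p^4} = E_7$, $e_A = r E_A$, $e_{3,4} = E_{3,4}$ (together with \eqref{eq:pbar}) to read off components in $\{E_j\}$ and translate everything into the schematic notation of Section \ref{subsec:furtherschematic}.

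\emph{Main steps.} First, write $X = \dot\gamma^\mu \hor_{(\gamma,\dot\gamma)}(e_\mu)$ via Example \ref{ex:generator}, and treat the horizontal and vertical members of $\{E_i\}$ separately. For $i \in \{1,2,3,4\}$, with $\tilde e_A = \tfrac{1}{r} e_A$ and $\tilde e_{3,4} = e_{3,4}$, item (1) of Proposition \ref{prop:sasakiconnection} yields
\[
\overline\nabla_X E_i = \dot\gamma^\mu \hor(\nabla_{e_\mu}\tilde e_i) - \tfrac12\, \ver\!\left(\dot\gamma^\mu R(e_\mu, \tilde e_i)\dot\gamma\right),
\]
and I would expand $\nabla_{e_\mu}\tilde e_i$ via the Ricci tables \eqref{eq:Riccitab1}--\eqref{eq:Riccitab5}, Leibniz applied to the $\tfrac{1}{r}$ factor (generating $\rp_p$ contributions), and the definition \eqref{eq:defGammaslash} (generating $\Gammaslash$ and $b$‑derivative contributions in the $e_4$‑direction); the curvature term is handled directly by \eqref{eq:schematiccurvature} with $K = \tilde e_i$. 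For $i \in \{5,6,7\}$, $E_i$ is (a projection of) a vertical lift of a $T\mathcal M$‑valued function $Y_i$ depending on $p$, and items (2)--(3) of Proposition \ref{prop:sasakiconnection} give
\[
\overline\nabla_X E_i = \dot\gamma^\mu \ver(\nabla_{e_\mu} Y_i) + \tfrac12\hor\!\left(\dot\gamma^\mu R(p,Y_i)e_\mu\right),
\]
where the $\nabla_{e_\mu} Y_i$ contributes $\rp_p, \Gamma_p, \Gammaslash, \nablaslash b$ terms arising from derivatives of $\tfrac{1}{r}, \tfrac{\gslash_{AB}p^B}{2p^4}, \tfrac{p^3}{p^4}$ (via the first variation formulas \eqref{eq:firstvar3}--\eqref{eq:firstvar4}), and the horizontal piece is again handled by \eqref{eq:schematiccurvature}. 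Finally, I would pass to $\hat\nabla_X E_i$ by using \eqref{eq:pbar} to rewrite $\partial_{p^{\hat\mu}}$ in terms of $\partial_{\overline p^{\hat\mu}}$ and $\partial_{p^3}$, discarding the $\partial_{p^3}$ (i.e.\@\ $V$) coefficient.

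\emph{The two refinements.} The decisive structural point is that $\hor(\nabla_{e_\mu}\tilde e_i)$ lies entirely in the span of $E_1,\ldots,E_4$, while $\ver(\nabla_{e_\mu}Y_i)$ lies entirely in the span of $E_5, E_6, E_7$; hence in the mixed cases of the second identity the $\nabla$‑contribution is absent and only the Sasaki curvature term survives, giving exactly the $\sum \dot\upgamma_{q_1}(\psi_{q_2}+\mathcal T_{q_2})$ form with no Ricci pieces. For the third identity, the improvement from $p_1+\ldots+p_4\geq 2$ to $\geq 3$ in $(\hat\nabla_X E_A)^{4+B}$ comes from the factor $\tilde e_A = \tfrac{1}{r}e_A$: writing $R(\dot\gamma,\tilde e_A)\dot\gamma = \tfrac{1}{r}R(\dot\gamma, e_A)\dot\gamma$ and using \eqref{eq:curvaturelong} with $K = e_A$ (so $K^3 = K^4 = 0$ and only the $\tfrac{1}{r}K^C$‑branch contributes), then re‑expressing the $e_B$‑component of the projected vertical lift through $\partial_{\overline p^B} = r E_{4+B}$, the accounting $\tfrac{1}{r}\cdot\tfrac{1}{r}\cdot r$ leaves an overall extra $\tfrac{1}{r}$ compared to the generic case, which is precisely the promised improvement.

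\emph{Main obstacle.} The argument is not conceptually difficult but is computationally dense: the main work is the careful bookkeeping of the $r$‑weights arising from the renormalisations $\tfrac{1}{r}$ in the definitions of $E_A, E_{4+A}$ and $\tilde e_A$, from $e_\mu(r), e_\mu(p^3), e_\mu(\gslash_{AB}p^B)$, and from the conversions between $\partial_{p^{\hat\mu}}$ and $\partial_{\overline p^{\hat\mu}}$. The most subtle point is verifying the $\frac{1}{r}$‑improvement in the third identity; one must notice that with $K = \tilde e_A$ only the $\tfrac{1}{r}K^C$ terms of \eqref{eq:curvaturelong} survive, and that the $r$ produced by the change to the $E_{4+B}$ basis does not cancel this extra smallness. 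Once the schematic translations of \eqref{eq:schematiccurvature} and the Ricci tables are laid out, the remainder of the proof is a routine verification that every term fits into the schematic sums in the statement with indices summing to at least the advertised value.
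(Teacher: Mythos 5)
Your proposal is correct and follows essentially the same route as the paper: compute $\overline{\nabla}_X E_i$ via Proposition \ref{prop:sasakiconnection} and the Ricci tables \eqref{eq:Riccitab1}--\eqref{eq:Riccitab5}, project to $P$, handle the curvature contributions with the explicit expression \eqref{eq:curvaturelong}, and read off the schematic form by inspection. You also correctly identify the structural sources of the two refinements — that the mixed components $i\in\{1,\ldots,4\},\,j\in\{5,6,7\}$ (and vice versa) receive only Sasaki-curvature contributions, and that the extra power of $r$ in $(\hat{\nabla}_X E_A)^{4+B}$ comes from the $\tfrac{1}{r}$ normalisation of $E_A$ interacting with the angular branch of \eqref{eq:curvaturelong} — which is exactly the inspection the paper performs.
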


Note that the second summation in each line guarantees that terms involving Weyl curvature components and energy momentum tensor components decay slightly better than the others.  This fact is important and will be returned to in Proposition \ref{prop:JJmatrixA}.  Note also that, if $r\nablaslash b$ is just regarded as $\mathfrak{D} \Gamma_1$, the terms involving $r\nablaslash b$ above also decay slightly better.  This extra decay is important at higher orders because of the weaker bounds we have for $\mathfrak{D}^3 r \nablaslash b$.

\begin{proof}
	 Using Proposition \ref{prop:sasakiconnection} and the table of Ricci coefficients \eqref{eq:Riccitab1}--\eqref{eq:Riccitab5}, one derives,
	 \begin{align*}
		\hat{\nabla}_X E_A
		=
		&
		\left[\dot{\gamma}^C \Gammaslash^B_{AC} + \dot{\gamma}^3 {\hat{\chibar}_A}^B + \dot{\gamma}^4 {\hat{\chi}_A}^B - \dot{\gamma}^4 e_A(b^B) \right] E_B 
		\\
		&
		+ \left[ \frac{\dot{\gamma}^3}{r} \left( \frac{1}{\Omega^2} - 1\right) + \frac{\dot{\gamma}^3}{2} \left( \tr \chibar + \frac{2}{r} \right) + \frac{\dot{\gamma}^4}{2} \left( \tr \chi - \frac{2}{r} \right) \right] E_A
		\\
		&
		+ \left[ \frac{\dot{\gamma}^B}{2r} \hat{\chi}_{AB} + \frac{\gslash_{AB}\dot{\gamma}^B}{4r} \left(\tr \chi - \frac{2}{r} \right) + \frac{\gslash_{AB}\dot{\gamma}^B}{2r^2} + \frac{\dot{\gamma}^3}{r} \eta_A \right] E_3
		\\
		&
		+ \left[ \frac{\dot{\gamma}^B}{2r} \hat{\chibar}_{AB} + \frac{\gslash_{AB}\dot{\gamma}^B}{4r} \left(\tr \chibar + \frac{2}{r} \right) - \frac{\gslash_{AB}\dot{\gamma}^B}{2r^2} + \frac{\dot{\gamma}^3}{r} \etabar_A \right] E_4
		\\
		&
		- \frac{1}{2r} {}^T \ver_{(\gamma,\dot{\gamma})} \left( R(\dot{\gamma}, e_A)\dot{\gamma} \right),
	\end{align*}
	
	\begin{align*}
		\hat{\nabla}_X E_3
		=
		&
		\left[r \dot{\gamma}^A {\hat{\chibar}_A}^B + \dot{\gamma}^B \left( \frac{r}{2}\left( \tr \chibar + \frac{2}{r} \right) - 1  \right) + 2 \dot{\gamma}^4 r\etabar^B \right] E_B 
		- \left[ \dot{\gamma}^A \etabar_A + \dot{\gamma}^4\omega \right] E_3
		\\
		&
		- \frac{1}{2} {}^T \ver_{(\gamma,\dot{\gamma})} \left( R(\dot{\gamma}, e_3)\dot{\gamma} \right),
	\end{align*}
	
	\begin{align*}
		\hat{\nabla}_X E_4
		=
		&
		\left[r \dot{\gamma}^A {\hat{\chi}_A}^B + \dot{\gamma}^B \left( \frac{r}{2}\left( \tr \chi - \frac{2}{r} \right) + 1  \right) + 2r \dot{\gamma}^3 \eta^B \right] E_B 
		+ \left[ \dot{\gamma}^A \etabar_A + \dot{\gamma}^4\omega \right] E_4
		\\
		&
		- \frac{1}{2} {}^T \ver_{(\gamma,\dot{\gamma})} \left( R(\dot{\gamma}, e_4)\dot{\gamma} \right),
	\end{align*}
	
	\begin{align*}
		\hat{\nabla}_X E_{4 + A}
		=
		&
		\frac{1}{2r} \hor_{(\gamma,\dot{\gamma})} \left( R(\dot{\gamma}, e_A) \dot{\gamma} \right) + \frac{\gslash_{AB} \dot{\gamma}^B}{4r\dot{\gamma}^4} \hor_{(\gamma,\dot{\gamma})} \left( R(\dot{\gamma}, e_3) \dot{\gamma} \right)
		\\
		&
		+ \left[\dot{\gamma}^C \Gammaslash^B_{AC} + \dot{\gamma}^3 {\hat{\chibar}_A}^B + \dot{\gamma}^4 {\hat{\chi}_A}^B - \dot{\gamma}^4 e_A(b^B) \right] E_{4+B} 
		\\
		&
		+ \left[ \frac{\dot{\gamma}^3}{r} \left( \frac{1}{\Omega^2} - 1\right) + \frac{\dot{\gamma}^3}{2} \left( \tr \chibar + \frac{2}{r} \right) + \frac{\dot{\gamma}^4}{2} \left( \tr \chi - \frac{2}{r} \right) \right] E_{4+A}
		\\
		&
		+ \left[ \frac{\dot{\gamma}^B}{2r} \hat{\chibar}_{AB} + \frac{\gslash_{AB}\dot{\gamma}^B}{4r} \left(\tr \chibar + \frac{2}{r} \right) - \frac{\gslash_{AB}\dot{\gamma}^B}{2r^2} + \frac{\dot{\gamma}^3}{r} \etabar_A \right] E_7,
		\\
	\end{align*}
	
	\begin{align*}
		\hat{\nabla}_X E_7
		=
		&
		\frac{1}{2} \hor_{(\gamma,\dot{\gamma})} \left( R(\dot{\gamma}, e_4) \dot{\gamma} \right) - \frac{\dot{\gamma}^3}{2\dot{\gamma}^4} \hor_{(\gamma,\dot{\gamma})} \left( R(\dot{\gamma}, e_3) \dot{\gamma} \right)
		\\
		&
		+ \left[r \dot{\gamma}^A {\hat{\chi}_A}^B + \dot{\gamma}^B \left( \frac{r}{2}\left( \tr \chi - \frac{2}{r} \right) + 1  \right) + 2r \dot{\gamma}^3 \eta^B \right] E_{4+B} 
		\\
		&
		+ \left[ \dot{\gamma}^A \etabar_A + \dot{\gamma}^4\omega \right] E_7,
	\end{align*}
	where, for a vector $Y$ on $\mathcal{M}$, ${}^T \ver_{(\gamma,\dot{\gamma})}(Y)$ denotes the projection of $\ver_{(\gamma,\dot{\gamma})}(Y)$ to $P$.  The terms involving the curvature $R$ of $\mathcal{M}$ can be found explicitly in terms of $\psi,\mathcal{T}$ by setting $K = e_A, e_3$ or $e_4$ in the expression \eqref{eq:curvaturelong}.  For example, in the above expression for $\hat{\nabla}_X E_{4+A}$,
	\begin{align*}
		& \hor_{(\gamma,\dot{\gamma})} \left( R(\dot{\gamma}, e_A) \dot{\gamma} \right)
		=
		\\
		&
		\qquad \qquad 
		r \Big[
		\dot{\gamma}^4 \dot{\gamma}^4( {\alpha_A}^D + \frac{1}{2} \delta_A^D \Tslash_{44}) + \dot{\gamma}^3 \dot{\gamma}^3( {\alphabar_A}^D+ \frac{1}{2} \delta_A^D \Tslash_{33})
		\\
		&
		\qquad \qquad
		+ \dot{\gamma}^3 \dot{\gamma}^4 ( -2\rho \delta_A^D - 2{T_A}^D + \delta_A^D \Tslash_{34})
		+ \dot{\gamma}^C \dot{\gamma}^B \big( -\rho(\gslash_{BC} \delta_A^D - \delta_C^D \gslash_{AB})
		\\
		&
		\qquad \qquad
		+ \dot{\gamma}^4 \dot{\gamma}^C( \gslash_{AC} \beta^D + \delta_C^D \beta_A - 2\delta_A^D \beta_C + \delta_A^D \Tslash_{4C} - \frac{1}{2} \gslash_{AC} {\Tslash_4}^D - \frac{1}{2} \delta_C^D \Tslash_{4A})
		\\
		&
		\qquad \qquad
		+ \dot{\gamma}^3 \dot{\gamma}^C( - \gslash_{AC} \betabar^D - \delta_C^D \betabar_A + 2\delta_A^D \betabar_C + \delta_A^D \Tslash_{3C} - \frac{1}{2} \gslash_{AC} {\Tslash_3}^D - \frac{1}{2} \delta_C^D \Tslash_{3A})
		\\
		&
		\qquad \qquad
		+ \frac{1}{2}(\gslash_{BC} {\Tslash_A}^D + \delta_C^D \Tslash_{BC} - \gslash_{AB} {\Tslash_C}^D - \delta_C^D \Tslash_{AB})  \big)
		\Big] E_{D}
		\\
		&
		\qquad \qquad
		- \frac{1}{2} \Big[ - \dot{\gamma}^3 \dot{\gamma}^4 (2\beta_C + \Tslash_{4C}) + \dot{\gamma}^3 \dot{\gamma}^3 (-2 \betabar_C + \Tslash_{3C}) 
		\\
		&
		\qquad \qquad
		+ \dot{\gamma}^3 \dot{\gamma}^C( \rho\gslash_{AC} - 3 \sigma \epsslash_{AC} + \Tslash_{AC} - \frac{1}{2} \gslash_{AC} \Tslash_{34} ) - \dot{\gamma}^4 \dot{\gamma}^C(\alpha_{AC} + \frac{1}{2} \gslash_{AC} \Tslash_{44})
		\\
		&
		\qquad \qquad
		+ \dot{\gamma}^B \dot{\gamma}^C ( - \gslash_{BC} \beta_A + \gslash_{AC} \beta_B + \frac{1}{2} \gslash_{BC} \Tslash_{4A} - \frac{1}{2} \gslash_{AC} \Tslash_{4B}) 
		\Big] E_3
		\\
		&
		\qquad \qquad
		- \frac{1}{2} \Big[
		\dot{\gamma}^3 \dot{\gamma}^4 (2 \betabar_A - \Tslash_{3A}) + \dot{\gamma}^4 \dot{\gamma}^4 (2 \beta_A + \Tslash_{4A})
		\\
		&
		\qquad \qquad
		- \dot{\gamma}^3 \dot{\gamma}^C( \alphabar_{AC} + \frac{1}{2} \gslash_{AC} \Tslash_{33}) + \dot{\gamma}^4 \dot{\gamma}^C (\rho \gslash_{AC} - 3\sigma \epsslash_{AC} + \Tslash_{AC} - \frac{1}{2} \gslash_{AC} \Tslash_{34})
		\\
		&
		\qquad \qquad
		+ \dot{\gamma}^B \dot{\gamma}^C( \gslash_{BC} \betabar_A - \gslash_{AC} \betabar_B + \frac{1}{2} \gslash_{BC} \Tslash_{3A} - \frac{1}{2} \gslash_{AC} \Tslash_{3B})
		\Big] E_4.
	\end{align*}
	The result follows by inspecting each of the terms and writing in schematic notation.
\end{proof}

Using the bootstrap assumptions for the pointwise bounds on $\Gammaslash, \nablaslash b, \Gamma, \psi, \mathcal{T}$ and the fact that $r^p\vert \dot{\upgamma}_p \vert \leq p^4$, Proposition \ref{prop:framederivatives} in particular gives,
\[
	 \left\vert \left( \hat{\nabla}_X E_i \right)^j (s) \right\vert \leq \frac{C p^4}{r(s)^2},
\]
for $i,j=1,\ldots,7$,
\[
	 \left\vert \left( \hat{\nabla}_X E_i \right)^j (s) \right\vert \leq \frac{C (p^4)^2}{r(s)^2},
\]
for all $i=1,\ldots,4, j=5,6,7$ or vice versa, and
\[
	 \left\vert \left( \hat{\nabla}_X E_A \right)^{4+B} (s) \right\vert \leq \frac{C (p^4)^2}{r(s)^3},
\]
for $A,B = 1,2$.  These facts are crucial for showing the schematic form of the error term in the Jacobi equation (see Proposition \ref{prop:schematicpreserved} below) is preserved after taking derivatives.  Recall, by Proposition \ref{prop:suppfmain},
\[
	 cp^4(s) \leq p^4(0) \leq Cp^4(s),
\]
for all $s\in [-s_*(x,p),0]$, for some constants $c,C$ which are independent of the point $(x,p) \in P\cap \supp (f)$.  So $p^4$ in the above expressions can either be taken to be evaluated at time $s$ or time 0.

\begin{proposition} \label{prop:geodesicframederivative}
	In schematic notation,
	\begin{multline*}
		\nabla_{\dot{\gamma}} \left( \frac{1}{r} \eslash \right), \nabla_{\dot{\gamma}} e_3, \nabla_{\dot{\gamma}} e_4
		=
		\\
		\sum_{p_1 + \ldots + p_4 \geq 1} \rp_{p_1} \dot{\upgamma}_{p_2} ( \gslash + 1 ) \left( \rp_{p_3} + \Gamma_{p_3} + \rp_{p_3} \left( r \Gammaslash + r \nablaslash b + r \Gammaslash \cdot b\right) \right) \eslash^{-p_4} \left( e_3 + e_4 + \frac{1}{r} \eslash \right).
	\end{multline*}
\end{proposition}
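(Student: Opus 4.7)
The plan is to prove Proposition \ref{prop:geodesicframederivative} by explicit computation, substituting the table of Ricci coefficients \eqref{eq:Riccitab1}--\eqref{eq:Riccitab5} into $\nabla_{\dot\gamma} e_3$, $\nabla_{\dot\gamma} e_4$, and $\nabla_{\dot\gamma}(\tfrac{1}{r} e_A)$, and then regrouping each term into the schematic form dictated by the right-hand side of the claimed identity. Since $\dot\gamma = p^\mu e_\mu$ and $\dot\gamma$ acts on a function $h$ via $\dot\gamma(h) = p^\mu e_\mu (h)$, we begin by writing
\[
 \nabla_{\dot\gamma} e_3 = p^A \nabla_{e_A} e_3 + p^4 \nabla_{e_4} e_3, \qquad \nabla_{\dot\gamma} e_4 = p^A \nabla_{e_A} e_4 + p^3 \nabla_{e_3} e_4 + p^4 \nabla_{e_4} e_4,
\]
using $\nabla_{e_3} e_3 = 0$, and apply \eqref{eq:Riccitab2}--\eqref{eq:Riccitab5} directly. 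For the rescaled spherical frame,
\[
 \nabla_{\dot\gamma} \Big(\tfrac{1}{r} e_A\Big) = \tfrac{1}{r} \nabla_{\dot\gamma} e_A - \tfrac{\dot r}{r^2} e_A, \qquad \dot r = p^4 - \tfrac{p^3}{\Omega^2},
\]
with $\nabla_{\dot\gamma} e_A$ expanded using \eqref{eq:Riccitab1}, \eqref{eq:Riccitab3}. The factor $\tfrac{1}{\Omega^2} = 1 + \Gamma_0$ contributes only schematically admissible terms, and the prefactor $\tfrac{\dot r}{r^2}$ yields $\rp_2 \dot\upgamma_0 + \rp_2 \dot\upgamma_2 (1 + \Gamma_0)$ multiplying $e_A$, which fits the template.

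The second step is to decompose each Ricci coefficient appearing in the expansion into its Minkowski value plus its renormalised counterpart. Explicitly, $\chibar_{AB} = \tfrac{1}{2}\tr\chibar_\circ \gslash_{AB} + \tfrac{1}{2}(\tr\chibar - \tr\chibar_\circ)\gslash_{AB} + \hat\chibar_{AB}$, hence $\chibar_{AB} = (\rp_1 + \Gamma_2)\gslash_{AB} + \Gamma_1 \eslash^2$, and similarly $\chi_{AB} = (\rp_1 + \Gamma_2)\gslash_{AB} + \Gamma_2\eslash^2$. The remaining Ricci coefficients $\eta, \etabar, \omega$ are already of type $\Gamma_p$. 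The term $e_A(b^B) = \nablaslash_A b^B + \Gammaslash_{AC}^B b^C$ appearing in $\nabla_{e_4}e_A$ contributes $r\nablaslash b$ and $r\Gammaslash \cdot b$ with the correct $\tfrac{1}{r}$ weights, while raising indices on $\chibar_A{}^B$ or $\chi_A{}^B$ produces one inverse metric $\gslash^{BC}$ accounted for by the $\eslash^{-p_4}$ slot. Raising and lowering with $\gslash$ (and inserting factors of $r$ to track $p$-indices of $\dot\upgamma$) is purely bookkeeping and is absorbed into the $(\gslash + 1)\eslash^{-p_4}$ factor of the schematic expression.

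The third step is to verify, term by term, that the exponents of $r$ add up to at least $1$. For a representative term such as $p^A \chibar_A{}^B e_B$ occurring in $\nabla_{\dot\gamma} e_3$, one has $p^A = \dot\upgamma_2$ and $\chibar_A{}^B$ splits as above, so the contribution to $\nabla_{\dot\gamma} e_3$ takes the form $\dot\upgamma_2 \cdot (\rp_1 + \Gamma_2 + \Gamma_1) \cdot e_B$, matching the template with $p_1 + p_2 + p_3 \geq 1$; the other terms $p^A \etabar_A e_3$, $p^4 \omega e_3$, $p^4 \etabar^B e_B$ in $\nabla_{\dot\gamma}e_3$ and the analogous ones in $\nabla_{\dot\gamma}e_4$ and $\nabla_{\dot\gamma}(\tfrac{1}{r}e_A)$ are handled identically. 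The only mild subtlety, and the one requiring a small check rather than a pure tabulation, is that the $b$-derivative terms $\nablaslash b$ and $\Gammaslash \cdot b$ from $\nabla_{e_4} e_A$ enter $\nabla_{\dot\gamma}(\tfrac{1}{r}e_A)$ with an overall $\tfrac{p^4}{r}$ prefactor; writing this as $\rp_1 \dot\upgamma_0 \cdot \rp_1 (r\nablaslash b + r\Gammaslash \cdot b) \cdot \tfrac{1}{r}\eslash$ confirms that it fits the displayed form with $p_1 + p_3 = 2$. With each term verified, the schematic identity of Proposition \ref{prop:geodesicframederivative} follows by inspection.
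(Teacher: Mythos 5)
Your proposal is correct and follows essentially the same route as the paper: the paper's proof simply writes out $\nabla_{\dot{\gamma}}\bigl(\tfrac{1}{r}e_A\bigr)$, $\nabla_{\dot{\gamma}}e_3$, $\nabla_{\dot{\gamma}}e_4$ explicitly from the table \eqref{eq:Riccitab1}--\eqref{eq:Riccitab5}, splits $\chi,\chibar$ into $\hat{\chi},\hat{\chibar}$ and renormalised traces, and reads off the weights exactly as you describe. The only (harmless) quibble is in your final bookkeeping of the $e_A(b^B)$ term, where the declared template weight is most naturally $p_3=1$ with $p_1=0$ rather than $p_1+p_3=2$; either count satisfies the required $\geq 1$.
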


\begin{proof}
	Using the table of Ricci coefficients \eqref{eq:Riccitab1}--\eqref{eq:Riccitab5} one can compute,
	\begin{align*}
		\nabla_{\dot{\gamma}} \left( \frac{1}{r} e_A \right)
		=
		&
		\left[\dot{\gamma}^C \Gammaslash^B_{AC} + \dot{\gamma}^3 {\hat{\chibar}_A}^B + \dot{\gamma}^4 {\hat{\chi}_A}^B - \dot{\gamma}^4 e_A(b^B) \right] \frac{1}{r} e_B 
		\\
		&
		+ \left[ \frac{\dot{\gamma}^3}{r} \left( \frac{1}{\Omega^2} - 1\right) + \frac{\dot{\gamma}^3}{2} \left( \tr \chibar + \frac{2}{r} \right) + \frac{\dot{\gamma}^4}{2} \left( \tr \chi - \frac{2}{r} \right) \right] \frac{1}{r} e_A
		\\
		&
		+ \left[ \frac{\dot{\gamma}^B}{2r} \hat{\chi}_{AB} + \frac{\gslash_{AB}\dot{\gamma}^B}{4r} \left(\tr \chi - \frac{2}{r} \right) + \frac{\gslash_{AB}\dot{\gamma}^B}{2r^2} + \frac{\dot{\gamma}^3}{r} \eta_A \right] e_3
		\\
		&
		+ \left[ \frac{\dot{\gamma}^B}{2r} \hat{\chibar}_{AB} + \frac{\gslash_{AB}\dot{\gamma}^B}{4r} \left(\tr \chibar + \frac{2}{r} \right) - \frac{\gslash_{AB}\dot{\gamma}^B}{2r^2} + \frac{\dot{\gamma}^3}{r} \etabar_A \right] e_4,
	\end{align*}
	
	\begin{align*}
		\nabla_{\dot{\gamma}} e_3
		=
		&
		\left[r \dot{\gamma}^A {\hat{\chibar}_A}^B + \dot{\gamma}^B \left( \frac{r}{2}\left( \tr \chibar + \frac{2}{r} \right) - 1  \right) + 2 \dot{\gamma}^4 r\etabar^B \right] \frac{1}{r} e_B 
		- \left[ \dot{\gamma}^A \etabar_A + \dot{\gamma}^4\omega \right] e_3,
	\end{align*}
	
	\begin{align*}
		\nabla_{\dot{\gamma}} e_4
		=
		&
		\left[r \dot{\gamma}^A {\hat{\chi}_A}^B + \dot{\gamma}^B \left( \frac{r}{2}\left( \tr \chi - \frac{2}{r} \right) + 1  \right) + 2r \dot{\gamma}^3 \eta^B \right] \frac{1}{r} e_B 
		+ \left[ \dot{\gamma}^A \etabar_A + \dot{\gamma}^4\omega \right] e_4.
	\end{align*}
\end{proof}

In the next proposition, estimates for the components of the matrices $\Xi$, $\Xi^{-1}$ are obtained.

\begin{proposition} \label{prop:matrixA}
	If $v_0$ is sufficiently large then the matrix $\Xi$ satisfies,
	\[
		\left\vert {\Xi_i}^j(s) - {\delta_i}^j \right\vert \leq \frac{C}{r(s)}, \qquad \left\vert \frac{d {\Xi_i}^j}{ds}(s) \right\vert \leq \frac{Cp^4}{r(s)^2},
	\]
	for all $i,j = 1,\ldots,7$.  Moreover, if $i=1,\ldots,4,j=5,6,7$ or vice versa,
	\[
		\left\vert {\Xi_i}^j(s) \right\vert \leq \frac{Cp^4}{r(s)}, \qquad \left\vert \frac{d {\Xi_i}^j}{ds}(s) \right\vert \leq \frac{C(p^4)^2}{r(s)^2}
	\]
	and
	\[
		\left\vert {\Xi_A}^{4+B}(s) \right\vert \leq \frac{Cp^4}{r(s)^2}, \qquad \left\vert \frac{d {\Xi_A}^{4+B}}{ds} (s) \right\vert \leq \frac{C(p^4)^2}{r(s)^3},
	\]
	for $A,B = 1,2$.  Similarly, for $\Xi^{-1}$,
	\[
		\left\vert {{\Xi^{-1}}_i}^j(s) - {\delta_i}^j \right\vert \leq \frac{C}{r(s)}, \qquad \left\vert \frac{d {{\Xi^{-1}}_i}^j}{ds}(s) \right\vert \leq \frac{Cp^4}{r(s)^2},
	\]
	for all $i,j = 1,\ldots,7$.  Moreover, if $i=1,\ldots,4,j=5,6,7$ or vice versa,
	\[
		\left\vert {{\Xi^{-1}}_i}^j(s) \right\vert \leq \frac{Cp^4}{r(s)},  \qquad \left\vert \frac{d {{\Xi^{-1}}_i}^j}{ds}(s) \right\vert \leq \frac{C(p^4)^2}{r(s)^2}
	\]
	and
	\[
		\left\vert {{\Xi^{-1}}_A}^{4+B}(s) \right\vert \leq \frac{Cp^4}{r(s)^2}, \qquad \left\vert \frac{d {{\Xi^{-1}}_A}^{4+B}}{ds} (s) \right\vert \leq \frac{C(p^4)^2}{r(s)^3},
	\]
	for $A,B = 1,2$.  Here $C$ is a constant independent of $(x,p)$.
\end{proposition}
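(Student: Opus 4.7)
The plan is to view \eqref{eq:Xi} and \eqref{eq:Xiinverse} as linear ODE systems along $s\mapsto \exp_s(x,p)$ and to integrate them from $s = 0$, where ${\Xi_i}^j = {{\Xi^{-1}}_i}^j = {\delta_i}^j$, down to $s \in [-s_*, 0]$. The schematic bounds of Proposition \ref{prop:framederivatives}, combined with the pointwise bounds on $\Gamma, \Gammaslash, r \nablaslash b, \psi, \mathcal{T}$ from the bootstrap assumptions and the bounds $r^p |\dot{\upgamma}_p| \leq C p^4$ of Proposition \ref{prop:suppfmain}, yield
\[
	\left| (\hat{\nabla}_X E_i)^j \right| \leq \frac{C p^4}{r^2}, \qquad \left| (\hat{\nabla}_X E_i)^j \right| \leq \frac{C (p^4)^2}{r^2} \quad (i \leftrightarrow j \text{ mixed}), \qquad \left| (\hat{\nabla}_X E_A)^{4+B} \right| \leq \frac{C (p^4)^2}{r^3}.
\]

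The central integration lemma is the following consequence of $\dot{r}(s) = p^4(s) - p^3(s)/\Omega^2$, the asymptotics $p^3 = \mathcal{O}(p^4/r^2)$ of Proposition \ref{prop:suppfmain}, and Lemma \ref{lem:sstar}: for any $q \geq 2$ and $s \in [-s_*, 0]$,
\[
	\int_s^0 \frac{p^4(s')}{r(s')^q} \, ds' \leq \frac{C}{r(s)^{q-1}},
\]
provided $v_0$ is taken sufficiently large so that $\dot{r}$ is comparable to $p^4$. This turns each factor of $p^4/r^2$ from Proposition \ref{prop:framederivatives} into a factor of $1/r$ after integration, which is exactly the gain needed.

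With this in hand I would run a nested bootstrap argument on $\mathcal{A}$. Writing $\Xi = I + \tilde{\Xi}$ and integrating \eqref{eq:Xi} gives
\[
	\tilde{\Xi}_i{}^j(s) = -\int_s^0 (\hat{\nabla}_X E_i)^j(s') \, ds' - \int_s^0 (\hat{\nabla}_X E_i)^k(s') \, \tilde{\Xi}_k{}^j(s') \, ds'.
\]
Under the bootstrap assumption $|\tilde{\Xi}_i{}^j| \leq 2 C_0 / r$, the first integral contributes at most $C/r(s)$ by the lemma above, and the second contributes at most $C/r(s)^2$, closing the generic estimate $|\Xi_i{}^j - \delta_i{}^j| \leq C/r(s)$. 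For the mixed estimate, with $i \in \{1,\ldots,4\}, j \in \{5,6,7\}$ (or vice versa), split the sum over $k$ in the ODE into ``same type'' and ``opposite type'' pieces: the same-type piece contributes $(\hat{\nabla}_X E_i)^k$ of size $C p^4 / r^2$ multiplying the mixed bootstrap bound $C p^4/r$, integrating to $C(p^4)^2/r^2$; the opposite-type piece contributes $(\hat{\nabla}_X E_i)^k$ of mixed size $C (p^4)^2/r^2$ multiplying the generic size $1$, integrating to $Cp^4/r$. This closes the mixed bound. The $\Xi_A{}^{4+B}$ case uses the additional decay $|(\hat{\nabla}_X E_A)^{4+B}| \leq C(p^4)^2/r^3$ from Proposition \ref{prop:framederivatives} in the inhomogeneous term, which integrates to $Cp^4/r^2$, while the remaining contributions are controlled using the mixed and generic bounds just proved for $\Xi$. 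The estimates for the derivatives $d\Xi_i{}^j/ds$ then follow immediately from \eqref{eq:Xi} by substituting the bounds on $\Xi$ and on $(\hat{\nabla}_X E_i)^j$.

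The identical procedure applied to \eqref{eq:Xiinverse}, with the obvious sign change and the roles of the two indices of $\hat{\nabla}_X E$ reversed, produces the bounds on $\Xi^{-1}$ and its derivative. The main obstacle is purely combinatorial: one must verify that at each of the three scales (generic, mixed, and $\Xi_A{}^{4+B}$) the feedback between the various ${\Xi_i}^j$ components through the off-diagonal parts of $(\hat{\nabla}_X E_i)^k$ does not destroy the sought estimate. The bookkeeping works precisely because every exchange between the horizontal block and the vertical block in $\hat{\nabla}_X E$ costs one power of $p^4$ but saves one power of $r$ via the extra decay recorded in Proposition \ref{prop:framederivatives}, and these two effects balance each other after the integral $\int_s^0 p^4/r^q \, ds' \leq C/r^{q-1}$ is applied.
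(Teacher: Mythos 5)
Your proposal is correct and follows essentially the same route as the paper: a bootstrap (continuity) argument on the integral form of \eqref{eq:Xi} and \eqref{eq:Xiinverse}, fed by the three tiers of bounds on $(\hat{\nabla}_X E_i)^j$ from Proposition \ref{prop:framederivatives} and closed via the change of variables $ds' \mapsto dr$ using $\dot{r}\sim p^4$, with the feedback terms carrying an extra factor of $1/r$ that is absorbed by taking $v_0$ large. Your explicit same-type/opposite-type bookkeeping for the mixed components is exactly what the paper does implicitly when it invokes the second bootstrap assumption, so there is nothing to add.
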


\begin{proof}
	The proof proceeds by a bootstrap argument.  Assume, for some $s\in [-s_*,0]$, that
	\begin{equation} \label{eq:matrixaba1}
		\left\vert {\Xi_i}^j(s') - {\delta_i}^j \right\vert \leq \frac{C_1}{r(s')},
	\end{equation}
	for $i,j = 1,\ldots,7$ and that
	\begin{equation} \label{eq:matrixaba2}
		\left\vert {\Xi_i}^j(s') \right\vert \leq \frac{C_1p^4}{r(s')},
	\end{equation}
	for $i=1,\ldots,4,j=5,6,7$ or vice versa, for all $s' \in [s,0]$, where $C_1 > 4$ is a constant which will be chosen later.  These inequalities are clearly true for $s=0$.  For any $i,j=1,\ldots,7$, equation \eqref{eq:Xi} and Proposition \ref{prop:framederivatives} imply that,
	\[
		  \left\vert \frac{d {\Xi_i}^j (s')}{ds} \right\vert \leq \frac{Cp^4(s')}{r(s')^2} + \frac{CC_1p^4(s')}{r(s')^3},
	\]
	for all $s'\in [s,0]$.  Hence,
	\begin{align*}
		\left\vert {\Xi_i}^j(s) - {\delta_i}^j \right\vert
		&
		\leq
		\int_s^0 \left\vert \frac{d {\Xi_i}^j (s')}{ds} \right\vert ds'
		\\
		&
		\leq 
		\int_s^0 \frac{Cp^4(s')}{r(s')^2} + \frac{CC_1p^4(s')}{r(s')^3} ds'
		\\
		&
		\leq 
		\int_{r(s)}^{r(0)} \frac{C}{r^2} + \frac{CC_1}{r^3} dr
		\\
		&
		\leq 
		\frac{C}{r(s)} + \frac{CC_1}{r(s)^2}
		\\
		&
		\leq 
		\frac{C}{r(s)} \left( 1 + \frac{C_1}{v_0} \right).
	\end{align*}
	Choose $C_1>4$ large so that $C_1 > 4C$, and $v_0$ large so that $1 + \frac{C_1}{v_0} < 2$, i.\@e.\@ $v_0 > C_1$.  Then,
	\[
		  \left\vert {\Xi_i}^j(s) - {\delta_i}^j \right\vert \leq \frac{C_1}{4} \frac{2}{r(s)} = \frac{C_1}{2}\frac{1}{r(s)}.
	\]
	The set of $s\in [-s_*,0]$ such that the bootstrap assumptions \eqref{eq:matrixaba1} hold is therefore non-empty, open, closed and connected, and hence equal to $[-s_*,0]$.
	
	Suppose now $i=1,\ldots,4$, $j = 5,6,7$ or vice versa.  Equation \eqref{eq:Xi} and Proposition \ref{prop:framederivatives} then imply,
	\[
		  \left\vert \frac{d {\Xi_i}^j (s')}{ds} \right\vert \leq \frac{C \left(p^4(s')\right)^2}{r(s')^2} + \frac{CC_1\left(p^4(s')\right)^2}{r(s')^3},
	\]
	using now the second bootstrap assumptions \eqref{eq:matrixaba2}.  Proceeding as before, this implies that
	\[
		  \left\vert {\Xi_i}^j(s) \right\vert \leq \frac{C_1}{2}\frac{p^4(s)}{r(s)},
	\]
	if $C_1,v_0$ are sufficiently large, where we use the fact that $cp^4(0) \leq p^4(s) \leq Cp^4(0)$.  Hence \eqref{eq:matrixaba2} also holds for all $s\in [-s_*,0]$.
	
	Returning now to equation \eqref{eq:Xi} and setting $i=A, j = 4+B$, for $A,B = 1,2$, the final part of Proposition \ref{prop:framederivatives} gives
	\[
		  \left\vert \frac{d {\Xi_A}^{4+B}}{ds} (s) \right\vert \leq \frac{C(p^4(s))^2}{r(s)^3},
	\]
	for all $s \in [-s_*,0]$.  Integrating then gives the final part of the proposition for $\Xi$.  The result for $\Xi^{-1}$ follows identically, using equation \eqref{eq:Xiinverse}.
\end{proof}

In the next proposition the initial conditions for the Jacobi equation \eqref{eq:jacobicomponents} are computed.

\begin{proposition} \label{prop:zeroic}
	The Jacobi fields $J_{(1)},\ldots,J_{(6)}$, along with their first order derivatives in the $X$ direction, take the following initial values.
	\begin{align*}
		J_{(A)}(0)
		=
		& \
		rE_A + p^4 E_{4+A},
		\\
		\hat{\nabla}_X J_{(A)} (0) 
		=
		& \
		p^4 E_A + \frac{1}{2r} \gslash_{AB} p^B E_3
		\\
		&
		+ \frac{1}{2r} \hor_{(x,p)}\left( R\left(p, p^4 e_A + \frac{1}{2} \gslash_{AB}p^B e_3 \right) p\right)
		+ \frac{1}{2} {}^T \ver_{(x,p)} ( R(p,e_A)p),
	\end{align*}
	for $A = 1,2$,
	\begin{align*}
		J_{(3)}(0)
		= 
		E_3,
		\qquad
		\hat{\nabla}_X J_{(3)} (0) 
		=
		\frac{1}{2} {}^T \ver_{(x,p)} ( R(p,e_3)p),
	\end{align*}
	
	\begin{align*}
		J_{(4)}(0) 
		= 
		& \
		rE_4 + p^4 E_7,
		\\
		\hat{\nabla}_X J_{(4)} (0) 
		=
		& \
		-p^3 E_3 + p^4 E_4
		+ \frac{1}{2} \hor_{(x,p)} \left( R(p,p^4e_4 - p^3 e_3)p \right)
		+ \frac{r}{2} {}^T \ver_{(x,p)} ( R(p,e_4)p),
	\end{align*}
	
	\begin{align*}
		J_{(4 + A)}(0)
		=
		& \
		\frac{p^4}{r} E_{4+A},
		\\
		\hat{\nabla}_X J_{(4 + A)} (0)
		=
		& \
		\frac{p^4}{r} E_A + \frac{1}{2r^2} \gslash_{AB} p^B E_3
		+ \frac{1}{2r^2} \hor_{(x,p)}\left( R\left(p, p^4 e_A + \frac{1}{2} \gslash_{AB}p^B e_3 \right) p\right),
	\end{align*}
	for $A = 1,2$, and
	\begin{align*}
		J_{(7)}(0)
		=
		& \
		\frac{p^4}{r} E_7,
		\\
		\hat{\nabla}_X J_{(7)} (0)
		=
		& \
		-\frac{p^3}{r} E_3 + \frac{p^4}{r} E_4
		+ \frac{1}{2r} \hor_{(x,p)} \left( R(p,p^4e_4 - p^3 e_3)p \right).
	\end{align*}
	The expressions involving the curvature tensor $R$ of $(\mathcal{M},g)$ can be written explicitly in terms of $\psi$ and $\mathcal{T}$ using the expression \eqref{eq:curvaturelong}.
\end{proposition}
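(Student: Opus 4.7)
My plan is to derive each initial-value identity by decomposing $V_{(i)} \in T_{(x,p)}P$ into horizontal and vertical pieces and then invoking Proposition \ref{prop:jacobi} linearly. Since $\exp_0$ is the identity on $P$, the equalities $J_{(i)}(0) = V_{(i)}$ are immediate; the content of the proposition is the computation of $\hat{\nabla}_X J_{(i)}(0)$ for each $i$.

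The first step is to write each $V_{(i)}$ in the form $\hor_{(x,p)}(v_i) + \ver_{(x,p)}(w_i)$ with $v_i,w_i \in T_x \mathcal{M}$. Using the relations \eqref{eq:pbar} one has $\partial_{\overline{p}^A} = \ver_{(x,p)}(e_A + \tfrac{\gslash_{AB}p^B}{2p^4}e_3)$ and $\partial_{\overline{p}^4} = \ver_{(x,p)}(e_4 - \tfrac{p^3}{p^4}e_3)$, and a short computation using the mass-shell relation \eqref{eq:massshell} shows that the arguments on the right-hand sides are $g$-orthogonal to $p$, so these vertical lifts indeed lie in $T_{(x,p)}P$. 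Thus, for example, $V_{(A)} = \hor_{(x,p)}(e_A) + \ver_{(x,p)}(\tfrac{p^4}{r}e_A + \tfrac{\gslash_{AB}p^B}{2r}e_3)$, and the remaining $V_{(i)}$ admit completely analogous decompositions (with $V_{(4)}$ yielding the vertical piece $\ver_{(x,p)}(p^4 e_4 - p^3 e_3)$, etc.).

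The second step is to apply Proposition \ref{prop:jacobi} to the horizontal and vertical pieces separately and sum, using the linearity of $d\exp_s|_{(x,p)}$. This produces
\[
  \hat{\nabla}_X J_{(i)}(0)
  =
  \tfrac{1}{2}\ver_{(x,p)}\bigl(R(p,v_i)p\bigr) + \hor_{(x,p)}(w_i) + \tfrac{1}{2}\hor_{(x,p)}\bigl(R(p,w_i)p\bigr).
\]
Each summand is automatically tangent to $P$: the horizontal contributions obviously are, while $\ver_{(x,p)}(R(p,v)p)$ is tangent because $g(p,R(p,v)p)=0$ by the antisymmetry of $R$ in its last two slots, so $\ver_{(x,p)}$ and ${}^T\ver_{(x,p)}$ agree on these vectors. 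Finally, one rewrites everything in the frame $\{E_j\}$ via $\hor_{(x,p)}(\tfrac{1}{r}e_A) = E_A$, $\hor_{(x,p)}(e_3) = E_3$, $\hor_{(x,p)}(e_4) = E_4$, $\partial_{\overline{p}^A} = rE_{4+A}$, $\partial_{\overline{p}^4} = E_7$, whereupon each of the displayed formulas in the proposition falls out directly after collecting terms.

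There is no fundamental obstacle in the argument; it is essentially a bookkeeping exercise in the Sasaki calculus. The only mild subtlety is the passage between the full vertical lift to $T\mathcal{M}$ (the setting in which Proposition \ref{prop:jacobi} is stated) and the projected vertical lift ${}^T\ver_{(x,p)}$ to $P$, which, as noted above, is automatic on the mass shell. The curvature terms $\hor_{(x,p)}(R(p,v)p)$ and ${}^T\ver_{(x,p)}(R(p,e_\mu)p)$ may be left in this compact form; their explicit expansion in terms of $\psi$ and $\mathcal{T}$ is obtained by specialising \eqref{eq:curvaturelong} to $K = e_A, e_3, e_4$, which is required in the subsequent propositions but not for the proof itself.
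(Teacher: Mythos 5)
Your argument is correct and is exactly the paper's route: the paper's proof of this proposition is the single line that it ``follows directly from Proposition \ref{prop:jacobi}'', and your write-up simply makes explicit the horizontal/vertical decomposition of each $V_{(i)}$, the linearity of $d\exp_s|_{(x,p)}$, and the observation that $\ver_{(x,p)}(R(p,v)p)$ is automatically tangent to $P$ since $g(R(p,v)p,p)=0$. No gaps.
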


\begin{proof}
	The proof follows directly from Proposition \ref{prop:jacobi}.
\end{proof}

The components of the Jacobi fields $J_{(1)}, \ldots, J_{(6)}$ can now be estimated along $(\gamma(s), \dot{\gamma}(s))$.  Recall that it is important, in order to show the schematic form of the Jacobi equation is preserved after commuting with Jacobi fields, to identify the leading order terms of some of the components.  The leading order term of ${J_{(3)}}^3$ is also identified in order to carry out a change of variables in the proof of Proposition \ref{prop:emmain2}.  See Section \ref{subsec:proofemmain2}.

The following lemma will be used.

\begin{lemma} \label{lem:generaljacobi}
	If $J^1(s),\ldots,J^7(s)$ are functions along $\exp_s(x,p)$ for $s\in[-s_*,0]$ then
	\begin{align*} \label{eq:generaljacobi}
	\begin{split}
		\left\vert J^j(s) - \left( J^j(0) + s \frac{dJ^k {\Xi_k}^j}{ds}(0) \right) \right\vert
		\leq
		&
		\frac{C}{r(s)} \sum_{i=1}^7 
		\left\vert J^i(0) + s \frac{d J^k {\Xi_k}^i}{ds} (0)  \right\vert
		\\
		&
		+
		C \sum_{i=1}^7 
		\int_s^0 \left\vert \frac{d J^k {\Xi_k}^i}{ds} (s') - \frac{d J^k {\Xi_k}^i}{ds} (0)  \right\vert ds',
	\end{split}
	\end{align*}
	for $j = 1,\ldots,4$,
	\begin{align*}
	\begin{split}
		&
		\left\vert J^{4+A}(s) - \left( J^{4+A}(0) + s \frac{dJ^k {\Xi_k}^{4+A}}{ds}(0) \right) \right\vert
		\\
		&
		\qquad \qquad
		\leq
		\frac{C p^4}{r(s)^2} \sum_{B=1}^2 
		\left\vert J^B(0) + s \frac{d J^k {\Xi_k}^B}{ds} (0)  \right\vert
		+
		\frac{C p^4}{r(s)} \sum_{i=3}^4 
		\left\vert J^i(0) + s \frac{d J^k {\Xi_k}^i}{ds} (0)  \right\vert
		\\
		&
		\qquad \qquad \quad
		+
		\frac{C}{r(s)} \sum_{i=5}^7 
		\left\vert J^i(0) + s \frac{d J^k {\Xi_k}^i}{ds} (0)  \right\vert
		\\
		&
		\qquad \qquad \quad
		+
		C \int_s^0 \left\vert \frac{d J^k {\Xi_k}^{4+A}}{ds} (s') - \frac{d J^k {\Xi_k}^{4+A}}{ds} (0)  \right\vert ds'
		\\
		&
		\qquad \qquad \quad
		+ 
		\frac{C}{r(s)} \sum_{i=1}^7 
		\int_s^0 \left\vert \frac{d J^k {\Xi_k}^i}{ds} (s') - \frac{d J^k {\Xi_k}^i}{ds} (0)  \right\vert ds',
	\end{split}
	\end{align*}
	for $A=1,2$, and
	\begin{align*}
	\begin{split}
		\left\vert J^7(s) \right\vert
		\leq
		&
		\frac{C p^4}{r(s)} \sum_{i=1}^4 
		\left\vert J^i(0) + s \frac{d J^k {\Xi_k}^i}{ds} (0)  \right\vert
		\\
		&
		+
		C \sum_{i=5}^7 
		\left\vert J^i(0) + s \frac{d J^k {\Xi_k}^i}{ds} (0)  \right\vert
		\\
		&
		+ 
		C \sum_{i=1}^7 
		\int_s^0 \left\vert \frac{d J^k {\Xi_k}^i}{ds} (s') - \frac{d J^k {\Xi_k}^i}{ds} (0)  \right\vert ds'.
	\end{split}
	\end{align*}
\end{lemma}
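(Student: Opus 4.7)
The plan is to reduce to the parallelly transported frame $F_1, \ldots, F_7$ introduced in Section \ref{subsec:fjacobi}, in which the second derivative of a vector field along $\exp_s(x,p)$ reduces to the second derivative of its components, and then to transfer the resulting estimate back to the $E_1, \ldots, E_7$ frame using the change of basis matrix $\Xi$ and its inverse.

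First I would introduce the $F$-frame components $\tilde J^j(s) := J^k(s)\,\Xi_k{}^j(s)$, so that the vector $J^i E_i$ equals $\tilde J^j F_j$. Since $\Xi_k{}^j(0) = \delta_k^j$, one has $\tilde J^j(0) = J^j(0)$. The fundamental theorem of calculus applied to $\frac{d\tilde J^j}{ds}$ then gives
\[
\tilde J^j(s) \;=\; J^j(0) + s\,\frac{dJ^k\Xi_k{}^j}{ds}(0) \;-\; \int_s^0 \left(\frac{dJ^k\Xi_k{}^j}{ds}(s') - \frac{dJ^k\Xi_k{}^j}{ds}(0)\right) ds'.
\]
Denoting the first two terms by $T^j(s)$ and the integral remainder by $R^j(s)$, so $\tilde J^j(s) = T^j(s) + R^j(s)$, we have that $|R^j(s)|$ is precisely the integral that appears on the right-hand side of each of the three stated inequalities.

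Next I would pass back to the $E$-frame via $J^j(s) = \tilde J^k(s)\,(\Xi^{-1})_k{}^j(s)$, which follows from $F_l = (\Xi^{-1})_l{}^k E_k$, and decompose $(\Xi^{-1})_k{}^j(s) = \delta_k^j + \bigl((\Xi^{-1})_k{}^j(s) - \delta_k^j\bigr)$ to obtain
\[
J^j(s) - T^j(s) \;=\; R^j(s) \;+\; \bigl(T^k(s) + R^k(s)\bigr)\bigl((\Xi^{-1})_k{}^j(s) - \delta_k^j\bigr).
\]
The three cases of the lemma then correspond to applying the three tiers of bounds on $\Xi^{-1}$ furnished by Proposition \ref{prop:matrixA}: the universal bound $|(\Xi^{-1})_i{}^j - \delta_i^j| \leq C/r(s)$; the sharper horizontal--vertical bound $|(\Xi^{-1})_i{}^j| \leq Cp^4/r(s)$ when one index is in $\{1,\ldots,4\}$ and the other in $\{5,6,7\}$; and the sharpest bound $|(\Xi^{-1})_A{}^{4+B}| \leq Cp^4/r(s)^2$ for $A,B = 1,2$.

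For $j = 1,\ldots,4$ one invokes only the universal bound, producing the $(C/r(s))\sum_i|T^i|$ term and absorbing both the direct $R^j$ and the off-diagonal $R^k$ contributions into the single $C\sum_i|R^i|$ term. For $j = 4+A$ one splits the $k$-sum into the three tiers $k = 1,2$, $k = 3,4$, and $k = 5,6,7$, invoking the sharpest available bound on each tier; the direct remainder $R^{4+A}$ appears without a $1/r$ factor, producing the isolated integral term in the second inequality. Finally, for $J^7(s)$ one simply expands $J^7 = \tilde J^k (\Xi^{-1})_k{}^7$ directly, without subtracting $T^7$, using the horizontal--vertical bound for $k = 1,\ldots,4$ and the near-identity bound for $k = 5,6,7$. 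There is no genuine analytic obstacle; the proof is essentially bookkeeping once the $F$-frame components $\tilde J^j$ have been introduced, and the care required lies only in matching each tier of the $\Xi^{-1}$ estimates of Proposition \ref{prop:matrixA} to the corresponding term in the stated inequalities.
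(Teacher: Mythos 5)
Your proposal is correct and follows essentially the same route as the paper: the paper's proof consists precisely of the fundamental theorem of calculus applied to the parallelly transported components $J^k(s)\,{\Xi_k}^j(s)$, followed by transferring back to the $E$-frame using the tiered estimates for ${{\Xi^{-1}}_i}^j$ from Proposition \ref{prop:matrixA}. Your bookkeeping of the three tiers (the universal $C/r$ bound, the horizontal--vertical $Cp^4/r$ bound, and the $Cp^4/r^2$ bound for ${{\Xi^{-1}}_A}^{4+B}$), together with the absorption of the off-diagonal remainder integrals using the boundedness of $p^4$ in $\supp(f)$, matches the intended argument.
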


\begin{proof}
The proof follows by using the fundamental theorem of calculus to write,
\[
	J^k(s) {\Xi_k}^j (s)
	=
	J^j (0) + s \frac{d J^k {\Xi_k}^i}{ds} (0)
	- \int^0_s \frac{d J^k {\Xi_k}^i}{ds} (s') - \frac{d J^k {\Xi_k}^i}{ds} (0) ds',
\]
and using the estimates for the components of $\Xi^{-1}$ from Proposition \ref{prop:matrixA}.
\end{proof}

\begin{proposition} \label{prop:Jbounds}
	If $\overline{C}$ is sufficiently small, $s\in [-s_*,0]$, then
	\[
		\vert {J_{(A)}}^B(s) - {\delta_A}^B r(s) \vert \leq C, \qquad \vert {J_{(4)}}^4(s) - r(s) \vert \leq C,
	\]
	for $A,B = 1,2$,
	\[
		\vert {J_{(3)}}^3(s) - 1 \vert \leq \frac{C}{r(s)},
	\]
	and
	\[
		\vert {J_{(i)}}^j(s) \vert \leq C,
	\]
	for $i=1,\ldots,6$, $j=1,\ldots,4$ otherwise,
	\[
		\vert {J_{(A)}}^{4+B}(s) - {\delta_A}^B p^4(s) \vert \leq \frac{Cp^4(s)}{r(s)}, \qquad \vert {J_{(i)}}^{4+A}(s) \vert \leq \frac{Cp^4(s)}{r(s)},
	\]
	for all $A,B = 1,2$, $i \neq A$, and
	\[
		\vert {J_{(i)}}^7 (s) \vert \leq Cp^4(s),
	\]
	for all $i=1,\ldots,7$.  Here $C = C(\overline{C})$ is independent of the point $(x,p)$ and of $s$.
\end{proposition}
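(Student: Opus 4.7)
\medskip

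The plan is a bootstrap argument on the components of each $J = J_{(i)}$, integrating the Jacobi equation twice in $s$ along $s \mapsto \exp_s(x,p)$ from $s=0$ backwards to $s=-s_*$. Working in the parallelly transported frame $\{F_i\}$, the Jacobi equation in component form reads
\[
  \frac{d^2 (J^k {\Xi_k}^j)}{ds^2}(s) = \left(\hat{R}(X,J)X\right)^k {\Xi_k}^j(s),
\]
so that, integrating twice from $0$ to $s<0$,
\[
  (J^k {\Xi_k}^j)(s) = J^j(0) + s (\hat{\nabla}_X J)^j(0) - \int_s^0 \int_{s''}^0 \left(\hat{R}(X,J)X\right)^k {\Xi_k}^j(s') \, ds' \, ds''.
\]
I then invert $\Xi$ using Proposition~\ref{prop:matrixA} (this is packaged as Lemma~\ref{lem:generaljacobi}) to pass from bounds on $J^k {\Xi_k}^j$ to bounds on the components $J^j$ themselves, noting crucially that the off-diagonal entries ${{\Xi^{-1}}_i}^j$ carry the weights $p^4/r$ or $p^4/r^2$, which is precisely what is needed so that a bound of size $p^4$ on vertical components does not pollute horizontal components with size $p^4$.

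The bootstrap hypothesis is exactly the conclusion of the proposition with $C$ replaced by a large constant $2C$. Under this hypothesis, one inserts the bound on the components of $J$ into the schematic expression \eqref{eq:schematiccurvature} for $R(\dot\gamma,K)\dot\gamma$ (and into the corresponding expressions for the covariant-derivative terms in \eqref{eq:curvatureuseful}, using Proposition~\ref{prop:geodesicframederivative} to express $\nabla_{\dot\gamma} e_\mu$ schematically). The allowed growth $\mathcal{K}_{-1} = J^1, J^2, J^4$ must be matched against $\dot\upgamma_2$ decay in $\dot\gamma^{1,2,3}$; the non-decaying $\dot\gamma^4$ factors force accompanying $\psi_p$ or $\mathcal{T}_p$ with large $p$. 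The weight counting in \eqref{eq:schematiccurvature} yields
\[
  \left| \left(\hat{R}(X,J)X\right)^k {\Xi_k}^j (s)\right| \;\leq\; \frac{C (p^4(s))^2}{r(s)^{5/2}}
\]
for horizontal $j$, with an extra factor $p^4$ for vertical $j$; since $|ds| \lesssim dr/p^4(0)$ along the geodesic and $p^4(s) \sim p^4(0)$ by Proposition~\ref{prop:suppfmain}, the double integral of this quantity is bounded by $C/r(s)^{1/2}$, which is small compared to the leading terms from initial data and can be absorbed into the error in each claimed estimate, closing the bootstrap with constant $C$ (rather than $2C$) provided $v_0$ is large and $\overline{C}$ is small.

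The leading-order behaviour is then extracted from the $J^j(0) + s(\hat{\nabla}_X J)^j(0)$ terms via Proposition~\ref{prop:zeroic}. For $J_{(A)}$ the relevant components give $J^B(0) + s(\hat{\nabla}_X J_{(A)})^B(0) = (r(0) + s p^4(0))\delta_A^B$, and Lemma~\ref{lem:sstar} converts this into $r(s)\delta_A^B + \mathcal{O}(1/r(s))$, producing the bound $|{J_{(A)}}^B(s) - \delta_A^B r(s)| \leq C$. The same mechanism handles ${J_{(4)}}^4$. For ${J_{(A)}}^{4+B}$ the initial value is $p^4 \delta_A^B$ and the $E_{4+B}$ component of $\hat{\nabla}_X J_{(A)}(0)$ comes only from the vertical curvature piece $\tfrac{1}{2}{}^T\ver(R(p,e_A)p)$, which is $\mathcal{O}(p^4/r^{3/2})$, so $s$ times it is absorbed into the $Cp^4/r$ error. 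The remaining components are controlled directly by Proposition~\ref{prop:zeroic} and Lemma~\ref{lem:sstar}, using the estimates for $\Xi^{-1}$ in Proposition~\ref{prop:matrixA} to account for the off-diagonal mixing when converting between the $F_i$ and $E_i$ frames.

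The main obstacle is the bookkeeping: one must verify that in \eqref{eq:schematiccurvature} no term escapes the allotted decay once the bootstrap assumptions for $\mathcal{K}_{-1}$ growth are inserted, and simultaneously that the $\Xi$-weighted form of the error does not spoil the twice integrability. This is where the distinction between $\psi, \mathcal{T}$ (which enjoy the extra $1/r^{1/2}$ factor, giving summation starting at $p=3$ whenever $\beta$ is absent) and pure Ricci-coefficient/metric terms matters, and it is also where Lemma~\ref{lem:sstar}'s sharpened statement $|r(0)+sp^4(0) - r(s)| \leq C/r(s)$, rather than merely $\mathcal{O}(1)$, is essential for the sharp $C/r(s)$ bound on ${J_{(3)}}^3 - 1$.
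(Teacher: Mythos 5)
Your overall architecture (bootstrap along $s\mapsto\exp_s(x,p)$, integrate the componentwise Jacobi equation, pass between the frames $\{E_i\}$ and $\{F_i\}$ via $\Xi$, $\Xi^{-1}$, and extract the leading order $r(0)+sp^4(0)\approx r(s)$ from Proposition \ref{prop:zeroic} and Lemma \ref{lem:sstar}) matches the paper's. But there is a genuine gap in your choice of bootstrap quantity. You bootstrap on the \emph{components} of $J$ (the conclusion with $C$ replaced by $2C$) and then claim a pointwise bound $\vert(\hat{R}(X,J)X)^k{\Xi_k}^j\vert\leq C(p^4)^2/r^{5/2}$ followed by a double integration in $s$. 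However, the expression \eqref{eq:curvatureuseful} that you invoke contains the terms $X((J^v)^{\mu})R(\dot{\gamma},e_{\mu})\dot{\gamma}$ and $X((J^h)^{\mu})R(\dot{\gamma},e_{\mu})\dot{\gamma}$, i.e.\ first derivatives of the components of $J$ along the geodesic, as well as the total-derivative terms $\hat{\nabla}_X\hor_{(\gamma,\dot{\gamma})}(R(\dot{\gamma},J^v)\dot{\gamma})$ and $\hat{\nabla}_X{}^T\ver_{(\gamma,\dot{\gamma})}(R(\dot{\gamma},J^h)\dot{\gamma})$, which admit no useful pointwise bound and must be integrated by parts once in $s$ (generating boundary terms and terms with $\tfrac{d\Xi}{ds}$). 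A bootstrap on the components alone controls none of these. This is precisely why the paper's bootstrap assumptions \eqref{eq:Jboundsba1}--\eqref{eq:Jboundsba3} are placed on $\tfrac{d(J^k{\Xi_k}^j)}{ds}(s')-\tfrac{d(J^k{\Xi_k}^j)}{ds}(0)$: the component bounds are then a \emph{consequence} (via Lemma \ref{lem:generaljacobi}), and the bootstrap is closed by a \emph{single} integration of the Jacobi equation, with the smallness coming from the factor $\overline{C}$ rather than from $v_0$. If instead you meant to avoid \eqref{eq:curvatureuseful} and use the raw form \eqref{eq:curvaturenotuseful}, you would need pointwise control of $\nabla R$, which costs a derivative of $\psi$ and $\mathcal{T}$ and destroys the regularity count that the Sasaki-metric approach is designed to preserve at the higher orders (Propositions \ref{prop:JJbounds}--\ref{prop:JJJJbounds}).

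A secondary point: even granting your error bound, a double integral of $(p^4)^2/r^{5/2}$ yields an error of size $1/\sqrt{r(s)}$, which cannot be ``absorbed into the error in each claimed estimate'' --- in particular it does not yield the stated $C/r(s)$ bound for ${J_{(3)}}^3(s)-1$, nor the $Cp^4(s)/r(s)$ bounds for the components ${J_{(i)}}^{4+A}$ without exploiting the finer structure (the extra $p^4$ weights and the $1/r^2$ decay of ${\Xi_A}^{4+B}$ recorded in Proposition \ref{prop:matrixA}, and the stronger rate $(p^4)^2/r^2$ in the bootstrap \eqref{eq:Jboundsba2} for the $4+A$ components). You would need to track these weighted rates component by component, as the paper does, rather than a single uniform $r^{-5/2}$ rate.
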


\begin{proof}
	The result is shown using a bootstrap argument.  For each $J\in \{ J_{(1)},\ldots,J_{(7)}\}$, assume that, for some constant $C_1 > 1$ which will be chosen later, $s\in[-s_*,0]$ is such that
	\begin{equation} \label{eq:Jboundsba1}
		\left\vert \frac{d J^k{\Xi_k}^j}{ds}(s') - \frac{d J^k{\Xi_k}^j}{ds}(0) \right\vert \leq \frac{C_1 p^4(s')}{r(s')^{\frac{3}{2}}},
	\end{equation}
	for $j=1,\ldots,4$,
	\begin{equation} \label{eq:Jboundsba2}
		\left\vert \frac{d J^k{\Xi_k}^{4+A}}{ds}(s') - \frac{d J^k{\Xi_k}^{4+A}}{ds}(0) \right\vert \leq \frac{C_1 \left(p^4(s')\right)^2}{r(s')^{2}},
	\end{equation}
	for $A=1,2$, and
	\begin{equation} \label{eq:Jboundsba3}
		\left\vert \frac{d J^k{\Xi_k}^7}{ds}(s') - \frac{d J^k{\Xi_k}^7}{ds}(0) \right\vert \leq \frac{C_1 \left(p^4(s')\right)^2}{r(s')^{\frac{3}{2}}},
	\end{equation}
	for all $s'\in [s,0]$.
	
	Suppose first that $i\neq 1,2,4$.  By Proposition \ref{prop:zeroic} and the fact that,
	\[
		\left\vert \frac{sp^4(0)}{r(0)} \right\vert \leq C,
	\]
	it follows that,
	\[
		\left\vert {J_{(i)}}^j(0) + s \left( \hat{\nabla}_X J_{(i)} \right)^j(0) \right\vert \leq C,
	\]
	for $j=1,\ldots,4$, and
	\[
		\left\vert {J_{(i)}}^j(0) + s \left( \hat{\nabla}_X J_{(i)} \right)^j(0) \right\vert \leq C p^4(0),
	\]
	for $j=5,6,7$.  Using the fact that,
	\[
		\left( \hat{\nabla}_X J_{(i)} \right)^j(0) = \frac{d{J_{(i)}}^k {\Xi_k}^j}{ds}(0),
	\]
	\[
		\int_s^0 \frac{p^4(s')}{r(s')^{\frac{3}{2}}} ds' \leq C\int_{r(s)}^{r(0)} \frac{1}{r^{\frac{3}{2}}} dr \leq \frac{C}{r(s)^{\frac{1}{2}}},
	\]
	\[
		\int_s^0 \frac{\left(p^4(s')\right)^2}{r(s')^{\frac{3}{2}}} ds' \leq \frac{C p^4(s)}{r(s)^{\frac{1}{2}}},
	\]
	\[
		\int_s^0 \frac{\left(p^4(s')\right)^2}{r(s')^2} ds' \leq \frac{C p^4(s)}{r(s)},
	\]
	and
	\[
		cp^4(0) \leq p^4(s) \leq Cp^4(0),
	\]
	for all $s \in [-s_*,0]$,
	Lemma \ref{lem:generaljacobi} immediately gives
	\[
		\vert {J_{(i)}}^j(s) \vert \leq C (1 + C_1),
	\]
	for $j=1,\ldots, 4$, and
	\[
		\vert {J_{(i)}}^{4+A}(s) \vert \leq C (1 + C_1)\frac{p^4(s)}{r(s)},
	\]
	for $A=1,2$, and
	\[
		\vert {J_{(i)}}^7(s) \vert \leq C (1 + C_1) p^4(s).
	\]
	Note also that,
	\[
		{J_{(3)}}^3(0) + s \left( \hat{\nabla}_X J_{(3)} \right)^3(0) = 1,
	\]
	and hence Lemma \ref{lem:generaljacobi} moreover gives,
	\[
		\vert {J_{(3)}}^3(s) - 1 \vert \leq \frac{C (1 + C_1)}{r(s)}.
	\]
	If $i=1,2$ then, using the fact that
	\[
		\vert p^4(s) - p^4(0) \vert \leq \frac{Cp^4(0)}{r(s)^2},
	\]
	(see the proof of Lemma \ref{lem:sstar}), Lemma \ref{lem:sstar} and Proposition \ref{prop:zeroic} imply that
	\[
		\left\vert {J_{(A)}}^A(0) + s \left( \hat{\nabla}_X J_{(A)} \right)^A(0) - r(s) \right\vert 
		\leq
		\left\vert \left( r(0) + sp^4(0) \right) - r(s) \right\vert
		+ \left\vert s \left( p^4(0) - \left( \hat{\nabla}_X J_{(A)} \right)^A(0) \right) \right\vert
		\leq
		C,
	\]
	for $A = 1,2$, and that,
	\[
		\left\vert {J_{(A)}}^j(0) + s \left( \hat{\nabla}_X J_{(A)} \right)^j(0) \right\vert \leq C,
	\]
	for $j=1,\ldots,4, j\neq A$, and
	\[
		\left\vert {J_{(A)}}^{4+A}(0) + s \left( \hat{\nabla}_X J_{(A)} \right)^{4+A}(0) - p^4(0) \right\vert
		=
		\left\vert s {J_{(A)}}^{4+A}(0) \right\vert \leq \frac{Cp^4(0)}{r(0)},
	\]
	\[
		\left\vert {J_{(A)}}^j(0) + s \left( \hat{\nabla}_X J_{(A)} \right)^j(0) \right\vert \leq \frac{Cp^4(0)}{r(0)},
	\]
	for $j=5,6,7, j\neq 4+A$.  Lemma \ref{lem:generaljacobi} then gives,
	\[
		\left\vert {J_{(A)}}^B (s) - {\delta_A}^B r(s) \right\vert \leq C(1+C_1),
	\]
	for $B=1,2$,
	\[
		\left\vert {J_{(A)}}^i (s) \right\vert \leq C(1+C_1),
	\]
	for $i=3,4$,
	\[
		\left\vert {J_{(A)}}^{4+B} (s) - {\delta_A}^B p^4(s) \right\vert \leq \frac{C(1+C_1)p^4(s)}{r(s)},
	\]
	for $B=1,2$, and
	\[
		\left\vert {J_{(A)}}^7 (s) \right\vert \leq C(1+C_1)p^4(s).
	\]
	Similarly, using the fact that,
	\[
		\left\vert {J_{(4)}}^4(0) + s \left( \hat{\nabla}_X J_{(4)} \right)^4(0) - r(s) \right\vert 
		=
		\left\vert \left( r(0) + sp^4(0) \right) - r(s) \right\vert
		+ \left\vert s \left( p^4(0) - \left( \hat{\nabla}_X J_{(4)} \right)^4(0) \right) \right\vert
		\leq
		C,
	\]
	etc.\@, the bounds for $J_{(4)}$,
	\[
		\left\vert {J_{(4)}}^4 (s) - r(s) \right\vert \leq C(1+C_1),
	\]
	and,
	\[
		\left\vert {J_{(4)}}^i (s) \right\vert \leq C(1+C_1),
	\]
	for $i=1,2,3$,
	\[
		\left\vert {J_{(4)}}^{4+A} (s) \right\vert \leq \frac{C(1+C_1)p^4(s)}{r(s)},
	\]
	for $A=1,2$, and
	\[
		\left\vert {J_{(4)}}^7 (s) \right\vert \leq C(1+C_1)p^4(s),
	\]
	can be obtained.
	
	It remains to recover the bootstrap assumptions \eqref{eq:Jboundsba1}--\eqref{eq:Jboundsba3} with better constants.  It will be shown that, for each $J = J_{(1)},\ldots,J_{(6)}$,
	\begin{equation} \label{eq:Jconc1}
		\int_s^0 \vert (\hat{R}(X,J)X)^i {\Xi_i}^j (s') \vert ds' \leq \frac{\overline{C} C(C_1) p^4(s)}{r(s)^{\frac{3}{2}}},
	\end{equation}
	for $j=1,\ldots,4$,
	\begin{equation} \label{eq:Jconc2}
		\int_s^0 \vert (\hat{R}(X,J)X)^i {\Xi_i}^{4+A} (s') \vert ds' \leq \frac{\overline{C} C(C_1) (p^4(s))^2}{r(s)^2},
	\end{equation}
	for $A=1,2$, and
	\begin{equation} \label{eq:Jconc3}
		\int_s^0 \vert (\hat{R}(X,J)X)^i {\Xi_i}^7 (s') \vert ds' \leq \frac{\overline{C} C(C_1) (p^4(s))^2}{r(s)^{\frac{3}{2}}},
	\end{equation}
	where $C(C_1)$ is a constant depending on $C_1$.  By integrating the Jacobi equation \eqref{eq:jacobicomponents} and taking $\overline{C}$ small depending on $C(C_1)$, the bootstrap assumptions \eqref{eq:Jboundsba1}--\eqref{eq:Jboundsba3} can then be recovered with better constants.  This implies that the set of $s\in[-s_*,0]$ such that \eqref{eq:Jboundsba1}--\eqref{eq:Jboundsba3} hold for all $s' \in [s,0]$ is non-empty, open and closed, and hence that \eqref{eq:Jboundsba1}--\eqref{eq:Jboundsba3} hold for all $s\in[-s_*,0]$.
	
	Consider first the first term in the expression \eqref{eq:curvatureuseful} for $\hat{R}(X,J)X$,
	\[
		\hor_{(\gamma,\dot{\gamma})} \left( R(\dot{\gamma},J^h) \dot{\gamma} \right).
	\]
	The components of this term with respect to $E_1,\ldots,E_4$ are exactly the components of $R(\dot{\gamma},J^h) \dot{\gamma}$ with respect to the frame $\frac{1}{r} e_1,\frac{1}{r}e_2,e_3,e_4$ for $\mathcal{M}$.  From the schematic expression \eqref{eq:schematiccurvature}, the pointwise bounds on the components\footnote{The pointwise bounds on the components of $\psi$ and $\mathcal{T}$ follow from the pointwise bounds on $\vert \psi \vert, \vert \mathcal{T}\vert$ and the fact that $\frac{1}{r^2} \vert \gslash_{AB} \vert , r^2 \vert \gslash^{AB} \vert \leq C$ in each of the spherical coordinate charts.} of $\psi,\mathcal{T}$ and the fact that $r^p \vert \dot{\upgamma}_p(s) \vert \leq Cp^4(s)$, one immediately sees that,
	\[
		\left\vert \left( \hor_{(\gamma,\dot{\gamma})} \left( R(\dot{\gamma},J^h) \dot{\gamma} \right) \right)^i \right\vert \leq \frac{\overline{C} C (p^4(s))^2}{r(s)^{\frac{5}{2}}},
	\]
	and hence, by Proposition \ref{prop:matrixA},
	\[
		\int_s^0 \left\vert \left( \hor_{(\gamma,\dot{\gamma})} \left( R(\dot{\gamma},J^h) \dot{\gamma} \right) \right)^i {\Xi_i}^j \right\vert ds' \leq 
		\frac{\overline{C} C p^4(s)}{r(s)^{\frac{3}{2}}},
	\]
	for $j = 1,2,3,4$.  Other than those in the bottom line, the remaining horizontal components in the expression \eqref{eq:curvatureuseful} can be treated similarly using also Proposition \ref{prop:geodesicframederivative} and the pointwise bounds on the components of $\Gamma, \Gammaslash, \nablaslash b$.  For the term
	\[
		\hat{\nabla}_X \hor_{(\gamma,\dot{\gamma})} \left( R(\dot{\gamma},J^v)\dot{\gamma}\right),
	\]
	in the bottom line of \eqref{eq:curvatureuseful}, write
	\begin{equation} \label{eq:JXhor}
		\hat{\nabla}_X \hor_{(\gamma,\dot{\gamma})} \left( R(\dot{\gamma},J^v)\dot{\gamma}\right) = X \left( ( R(\dot{\gamma},J^v)\dot{\gamma})^{\mu} \right) E_{\mu} + ( R(\dot{\gamma},J^v)\dot{\gamma})^{\mu} \hat{\nabla}_X E_{\mu},
	\end{equation}
	where $\mu$ runs from 1 to 4 in the summations.  The horizontal components of the second term of \eqref{eq:JXhor} can be estimated exactly as the others using Proposition \ref{prop:framederivatives}.  For the components of the first term, write,
	\[
		\int_s^0 X \left( ( R(\dot{\gamma},J^v)\dot{\gamma})^{\mu} {\Xi_{\mu}}^j \right) (s') ds' = ( R(\dot{\gamma},J^v)\dot{\gamma})^{\mu} {\Xi_{\mu}}^j (0) - ( R(\dot{\gamma},J^v)\dot{\gamma})^{\mu} {\Xi_{\mu}}^j (s),
	\]
	for $j=1,\ldots,7$.  Then using again the schematic expression \eqref{eq:schematiccurvature}, the pointwise bounds on $\psi$, $\mathcal{T}$ and Proposition \ref{prop:matrixA}, the terms in the Jacobi equation \eqref{eq:jacobicomponents} which the first term of \eqref{eq:JXhor} give rise to can be estimated,
	\begin{align*}
		\left\vert \int_s^0 X \left( ( R(\dot{\gamma},J^v)\dot{\gamma})^{\mu} \right) {\Xi_{\mu}}^j (s') ds' \right\vert
		\leq
		&
		\left\vert \int_s^0 X \left( ( R(\dot{\gamma},J^v)\dot{\gamma})^{\mu} {\Xi_{\mu}}^j \right) (s') ds' \right\vert
		\\
		&
		+ \left\vert \int_s^0  ( R(\dot{\gamma},J^v)\dot{\gamma})^{\mu} X \left( {\Xi_{\mu}}^j \right) (s') ds' \right\vert
		\\
		\leq
		&
		\frac{\overline{C} C (p^4(s))^2}{r(s)^{\frac{5}{2}}}.
	\end{align*}
	
	The vertical terms in \eqref{eq:curvatureuseful} are similarly estimated as follows.  Notice now that, ignoring the term
	\[
		\hat{\nabla}_X {}^T \ver_{(\gamma,\dot{\gamma})} \left( R(\dot{\gamma},J^h)\dot{\gamma} \right),
	\]
	in the bottom line of \eqref{eq:curvatureuseful} for now, each term contains at least three $\dot{\gamma}$ factors and moreover that, since Proposition \ref{prop:geodesicframederivative} guarantees that the terms involving $\nabla_{\dot{\gamma}} e_{\alpha}$ gain an extra power of decay.  Similarly, since $\nabla_{\dot{\gamma}} \dot{\gamma} = 0$, one can check,
	\begin{align*}
		X(\dot{\gamma}^A)
		=
		&
		\
		- \dot{\gamma}^A \dot{\gamma}^B \Gammaslash_{BC}^A - 2 \dot{\gamma}^3 \dot{\gamma}^B {\chibar_B}^A - \dot{\gamma}^4 \dot{\gamma}^B \left( 2{\chi_B}^A - e_B(b^A) - 2 \dot{\gamma}^3 \dot{\gamma}^4 ( \eta^A + \etabar^A) \right),
		\\
		X(\dot{\gamma}^3)
		=
		&
		\
		- \frac{1}{2} \dot{\gamma}^A \dot{\gamma}^B \chi_{AB} + \dot{\gamma}^3\dot{\gamma}^A (\etabar_A - \eta_A) + \dot{\gamma}^3 \dot{\gamma}^4 \omega,
		\\
		X(\dot{\gamma}^4)
		=
		&
		\
		- \frac{1}{2} \dot{\gamma}^A \dot{\gamma}^B \chibar_{AB} - 2 \dot{\gamma}^4 \dot{\gamma}^A \etabar_A - \dot{\gamma}^4 \dot{\gamma}^4 \omega,
	\end{align*}
	and hence the terms involving $X(\dot{\gamma}^{\alpha})$ also gain an extra power of $r$ decay.  Similarly for the vertical terms arising from the second term in \eqref{eq:JXhor}, by Proposition \ref{prop:framederivatives}.  Since
	\[
		{}^T \ver_{(\gamma,\dot{\gamma})} \left( R(\dot{\gamma},R(\dot{\gamma},J^v) \dot{\gamma}) \dot{\gamma} \right),
	\]
	is quadratic in $R$ this term also decays better.  Hence, using also Proposition \ref{prop:matrixA}, one sees all the vertical terms in \eqref{eq:curvatureuseful}, still ignoring the final term in the bottom line, can be controlled by\footnote{They will actually decay like $\frac{1}{r^{\frac{7}{2}}}$, but $\frac{1}{r^3}$ is sufficient.}
	\[
		\frac{\overline{C} C (p^4(s))^3}{r(s)^3}.
	\]
	For the final term, write
	\[
		\hat{\nabla}_X {}^T \ver_{(\gamma,\dot{\gamma})} \left( R(\dot{\gamma},J^h)\dot{\gamma} \right)
		=
		X \left( (R(\dot{\gamma},J^h)\dot{\gamma})^{\lambda} \right) E_{\tilde{\lambda}(\lambda)} + (R(\dot{\gamma},J^h)\dot{\gamma})^{\lambda}\hat{\nabla}_X E_{\tilde{\lambda}(\lambda)},
	\]
	where $\lambda$ runs over 1,2,4 and $\tilde{\lambda}(1) = 5, \tilde{\lambda}(2) = 6, \tilde{\lambda}(4) = 7$.  The components of the second term can be estimated as before (with the additional $r$ decay) by Proposition \ref{prop:framederivatives}.  The components of the first term can again be estimated after integrating,
	\begin{align*}
		&
		\left\vert \int_s^0 X \left( ( R(\dot{\gamma},J^h)\dot{\gamma})^{\lambda} {\Xi_{\tilde{\lambda}(\lambda)}}^j \right) (s') ds' \right\vert
		\\
		&
		\qquad \qquad
		=
		\left\vert ( R(\dot{\gamma},J^h)\dot{\gamma})^{\lambda} {\Xi_{\tilde{\lambda}(\lambda)}}^j (0) - ( R(\dot{\gamma},J^h)\dot{\gamma})^{\lambda}  {\Xi_{\tilde{\lambda}(\lambda)}}^j (s) \right\vert
		\\
		&
		\qquad \qquad
		\leq 
		\frac{\overline{C} C (p^4(s))^2}{r(s)^{\frac{5}{2}}},
	\end{align*}
	and hence
	\[
		\left\vert \int_s^0 X \left( ( R(\dot{\gamma},J^h)\dot{\gamma})^{\lambda} \right) {\Xi_{\tilde{\lambda}(\lambda)}}^j (s') ds' \right\vert
		\leq
		\frac{\overline{C} C (p^4(s))^2}{r(s)^{\frac{5}{2}}}.
	\]
	The bounds \eqref{eq:Jconc1}--\eqref{eq:Jconc3} are thus obtained.
\end{proof}

Suppose now $i_1,i_2 = 1,\ldots,6$.  Since $J_{(i_2)}$ is a Jacobi field along $(\gamma,\dot{\gamma})$, a curve with tangent vector field $X$, it is true that $[X,J_{(i_2)}] = 0$ and the Jacobi equation for the components of $J_{(i_1)}$ can be commuted with $J_{(i_2)}$ to give,
\[
	 \frac{d^2 J_{(i_2)} ({J_{(i_1)}}^k {\Xi_k}^j)}{ds^2} = J_{(i_2)} \left( (\hat{R}(X,J_{(i_1)}) X )^k {\Xi_k}^j \right).
\]
The goal now is to repeat the proof of Proposition \ref{prop:Jbounds} to get pointwise estimates for $J_{(i_2)}({J_{(i_1)}}^j)$ along $(\gamma,\dot{\gamma})$.  It is first necessary to show that the schematic form of $\hat{R}(X,J_{(i_1)})X$ is preserved after differentiating the components with respect to $J_{(i_2)}$.

As with the $\mathcal{K}$ notation introduced in Section \ref{subsec:furtherschematic}, for $J = J_{(1)},\ldots,J_{(6)}$, let the components be schematically denoted as follows,
\[
	\mathcal{J}_0 = J^3 , \qquad \mathcal{J}_{-1} = J^1, J^2, J^4, J^5, J^6, J^7.
\]
By Proposition \ref{prop:Jbounds}, it is always true that
\[
	r^p \vert \mathcal{J}_p \vert \leq C,
\]
for some constant $C$.\footnote{In fact, all of the $J^5,J^6,J^7$ components are uniformly bounded along $(\gamma,\dot{\gamma})$, though it is easier to treat them systematically if they are included in $\mathcal{J}_{-1}$.}

\begin{proposition} \label{prop:schematicpreserved}
	For $J = J_{(1)},\ldots,J_{(6)}$,
	\[
		J(\rp_p) = \sum_{p_1+p_2\geq p} \rp_{p_1} \mathcal{J}_{p_2},
	\]
	for any $\rp_p$ appearing in the schematic expressions of this section,
	\begin{align} \label{eq:Jgamma}
	\begin{split}
		J(\dot{\upgamma}_p)
		=
		&
		\sum_{p_1+\ldots +p_5 \geq p} \rp_{p_1} \mathcal{J}_{p_2} \dot{\upgamma}_{p_3} (\gslash + 1) (\rp_{p_4} + \Gamma_{p_4}) ( 1 + r\Gammaslash + r \nablaslash b + r \Gammaslash \cdot b ) \eslash^{-p_5}
		\\
		&
		+ J^7 \rp_p + \sum_{A=1,2} \frac{1}{r} \left( J^{4+A} - \frac{p^4}{r} J^A \right) + \frac{1}{2r} \frac{\gslash_{AB}p^B}{p^4} \left( J^{4+A} - \frac{p^4}{r} J^A \right),
	\end{split}
	\end{align}
	and
	\begin{align*}
		&
		J \left(  \sum_{p_1 +p_2 \geq p} (\gslash + 1) ( \Gamma_{p_1} + \psi_{p_1} + \mathcal{T}_{p_1}) ( 1 + r\Gammaslash + r \nablaslash b + r \Gammaslash \cdot b ) \eslash^{-p_2} \right)
		\\
		&
		\qquad \qquad
		=
		\sum_{p_1 +p_2 \geq p} (1 + \gslash + \mathfrak{D} \gslash ) ( \Gamma_{p_1} + \mathfrak{D} \Gamma_{p_1} + \psi_{p_1} + \mathfrak{D} \psi_{p_1} + \mathcal{T}_{p_1} + \mathfrak{D} \mathcal{T}_{p_1}) 
		\\
		&
		\qquad \qquad \qquad
		\times ( 1 + r\Gammaslash + r\mathfrak{D} \Gammaslash + r \nablaslash b + \mathfrak{D} r \nablaslash b ) \eslash^{-p_2}.
	\end{align*}
\end{proposition}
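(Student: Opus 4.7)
The plan is to verify the three identities in turn, using the horizontal/vertical decomposition of each $J \in \{J_{(1)},\dots,J_{(6)}\}$ together with the Ricci coefficient table \eqref{eq:Riccitab1}--\eqref{eq:Riccitab5}. Throughout, write $J = J^h + J^v$ where $J^h = \frac{J^A}{r} e_A + J^3 e_3 + J^4 e_4$ and $J^v = \frac{J^{4+A}}{r} \partial_{\overline{p}^A} + J^7 \partial_{\overline{p}^4}$, viewed as acting on functions on $P$; note that $J^v$ annihilates any function pulled back from $\mathcal{M}$.

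\emph{First identity.} Since $\rp_p$ depends only on $r$, $J(\rp_p) = J^h(r)\,\partial_r \rp_p$. Using $r = v - u + r_0$ and the form of the double null frame gives $J^h(r) = J^4 - \frac{1}{\Omega^2}J^3$, which in schematic notation reads $\mathcal{J}_{-1} + (\rp_0 + \Gamma_0)\,\mathcal{J}_0$. Since $\partial_r \rp_p = \rp_{p+1}$, multiplying out produces only terms of the form $\rp_{p_1}\mathcal{J}_{p_2}$ with $p_1 + p_2 \geq p$, which is the first claim.

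\emph{Second identity.} Here $\dot{\upgamma}_p$ schematically denotes $p^\mu$. Using the explicit formulas for horizontal and vertical lifts recalled in Section \ref{section:Sasaki}, one computes
$\hor_{(x,p)}(v)(p^\mu) = -v^\alpha p^\beta \Gamma^\mu_{\alpha\beta}$
and reads off $\partial_{\overline{p}^A}(p^\mu), \partial_{\overline{p}^4}(p^\mu)$ from \eqref{eq:pbar}. Summing gives
\[
J(p^\mu) = -\tfrac{1}{r}J^A p^\beta \Gamma^\mu_{A\beta} - J^3 p^\beta\Gamma^\mu_{3\beta} - J^4 p^\beta \Gamma^\mu_{4\beta} + \tfrac{1}{r}J^{4+A}\partial_{\overline{p}^A}(p^\mu) + J^7 \partial_{\overline{p}^4}(p^\mu).
\]
The Ricci coefficients $\Gamma^\mu_{\nu\lambda}$ are then read off the table \eqref{eq:Riccitab1}--\eqref{eq:Riccitab5} and split into their Minkowski values $\tr\chi_\circ = \tfrac{2}{r}$, $\tr\chibar_\circ = -\tfrac{2}{r}$ (together with $\nablaslash^\circ$-spherical Christoffels) and the normalised $\Gamma$-quantities. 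The main task is to identify the cancellation: the Minkowski piece of $-\tfrac{1}{r}J^B p^4\,{\chi_B}^A$ contributes $-\tfrac{p^4}{r^2}J^A$, which combines with the vertical contribution $\tfrac{1}{r}J^{4+A}$ to produce exactly $\tfrac{1}{r}\bigl(J^{4+A} - \tfrac{p^4}{r}J^A\bigr)$; analogously, the $p^3$-component produces the term $\tfrac{1}{2r}\tfrac{\gslash_{AB}p^B}{p^4}\bigl(J^{4+A} - \tfrac{p^4}{r}J^A\bigr)$ via $\partial_{\overline{p}^A}(p^3)$, and the $J^7\partial_{\overline{p}^4}(p^\mu)$ contribution together with the Minkowski piece of $\nabla_{e_4}$ yields the $J^7 \rp_p$ term. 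All remaining terms involve a genuine $\Gamma, \Gammaslash - \Gammaslash^\circ$, $\nablaslash b$ or $\rp$-factor with enough weight and fit into the schematic sum.

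\emph{Third identity.} The combination on the left is a polynomial in components of $\gslash$, $\Gamma$, $\psi$, $\mathcal{T}$, $\Gammaslash$, $\nablaslash b$ and $b$ (all pulled back from $\mathcal{M}$) together with components of $\gslash^\sharp$. Apply $J$ by the Leibniz rule. On each $\mathcal{M}$-pulled-back factor $\phi$, $J\phi = J^h\phi = \tfrac{J^A}{r}e_A\phi + J^3 e_3\phi + J^4 e_4\phi$, which is the action of the $\mathfrak{D}$-operators weighted by $\mathcal{J}_{p}$-factors; since $r^p|\mathcal{J}_p|\leq C$, this produces precisely one extra factor in $\{1, \mathfrak{D}\Gamma, \mathfrak{D}\psi, \mathfrak{D}\mathcal{T}, r\mathfrak{D}\Gammaslash, \mathfrak{D}(r\nablaslash b)\}$ relative to the original expression, and the schematic form is preserved. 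Applying $J$ to $\gslash$-factors gives $\mathfrak{D}\gslash$ (via the first variational formulas \eqref{eq:firstvar3}--\eqref{eq:firstvar4} and $\nablaslash\gslash = 0$), absorbed into the $(1 + \gslash + \mathfrak{D}\gslash)$ factor on the right.

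The main obstacle is the second identity: one must match, term by term, the Minkowski leading parts of $\Gamma^\mu_{\nu\lambda}p^\nu$ against the vertical-lift contributions $\partial_{\overline{p}^A}(p^\mu)$ and $\partial_{\overline{p}^4}(p^\mu)$ to isolate the specific non-schematic terms appearing on the right-hand side of \eqref{eq:Jgamma}, and to verify that all leftover terms genuinely decay at the advertised rate so that they fit the schematic sum.
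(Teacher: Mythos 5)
Your proposal is correct and follows essentially the same route as the paper's proof: direct computation of $J$ on each quantity via the horizontal/vertical decomposition and the Ricci coefficient table \eqref{eq:Riccitab1}--\eqref{eq:Riccitab5}, extraction of the Minkowski values of $\chi$, $\chibar$ to isolate the cancellations producing the $\frac{1}{r}\bigl(J^{4+A}-\frac{p^4}{r}J^A\bigr)$ and $J^7$ terms, and the Leibniz rule together with Lemma \ref{lemma:componentderivatives} for the third identity. The only cosmetic difference is that you compute $J(p^3)$ directly through $\partial_{\overline{p}^A}(p^3)$ from \eqref{eq:pbar}, whereas the paper applies the chain rule to the mass shell relation $\dot{\gamma}^3=\gslash_{AB}\dot{\gamma}^A\dot{\gamma}^B/(4\dot{\gamma}^4)$; these are equivalent.
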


\begin{proof}
	In the schematic expressions of this section, $\rp_p$ always denotes (a constant multiple of) $\frac{1}{r^p}$.  One easily checks,
	\[
		J\left( \frac{1}{r^p} \right) = J^4 e_4 \left( \frac{1}{r^p} \right) + J^3 e_3 \left( \frac{1}{r^p} \right) = \frac{1}{pr^{p-1}} (J^3 - J^4) = \sum_{p_1+p_2\geq p} \rp_{p_1} \mathcal{J}_{p_2}.
	\]
	
	For the second part, writing $\hor_{(\gamma,\dot{\gamma})} (e_{\mu}) = e_{\mu} - p^{\nu} \Gamma_{\mu \nu}^{\lambda} \partial_{p^{\lambda}}$, by direct computation,
	\begin{align*}
		J(\dot{\gamma}^4)
		=
		&
		- \frac{1}{2r} J^A \dot{\gamma}^B \left( \hat{\chibar}_{AB} + \frac{1}{2} \gslash_{AB} \left( \tr \chibar + \frac{2}{r} \right) - \frac{\gslash_{AB}}{r} \right)
		\\
		&
		- J^4 \dot{\gamma}^4 \omega - \frac{1}{r} \dot{\gamma}^4 J^A \etabar_A + J^7,
		\\
		J(\dot{\gamma}^A)
		=
		&
		- \frac{1}{r} J^B \dot{\gamma}^C \Gammaslash_{BC}^A - \frac{1}{r} J^B\dot{\gamma}^3 \left( {\hat{\chibar}_{A}}^B + \frac{1}{2} {\delta_A}^B \left( \tr \chibar + \frac{2}{r} \right) - {\delta_A}^B \frac{1}{r} \right)
		\\
		&
		- \frac{1}{r} J^B\dot{\gamma}^4 \left( {\hat{\chi}_{A}}^B + \frac{1}{2} {\delta_A}^B \left( \tr \chi - \frac{2}{r} \right) \right)
		\\
		&
		- J^3\dot{\gamma}^B \left( {\hat{\chibar}_{A}}^B + \frac{1}{2} {\delta_A}^B \left( \tr \chibar + \frac{2}{r} \right) - {\delta_A}^B \frac{1}{r} \right)
		\\
		&
		- J^4\dot{\gamma}^B \left( {\hat{\chi}_{A}}^B + \frac{1}{2} {\delta_A}^B \left( \tr \chi - \frac{2}{r} \right) + {\delta_A}^B \frac{1}{r} - (\nablaslash_B b)^A + \Gammaslash_{BC}^A b^C \right)
		\\
		&
		- 2J^3 \dot{\gamma}^4 \eta^A - 2 J^4 \dot{\gamma}^3 \etabar^A + \frac{1}{r} \left( J^{4+A} - \frac{\dot{\gamma}^4}{r} J^A \right).
	\end{align*}
	One easily sees these two expressions have the desired schematic form.  For $J(\dot{\gamma}^3)$, recall,
	\[
		\dot{\gamma}^3 = \frac{\gslash_{AB} \dot{\gamma}^A \dot{\gamma}^B}{4\dot{\gamma}^4},
	\]
	so that
	\begin{align*}
		J(\dot{\gamma}^3)
		=
		&
		\frac{J(\gslash_{AB}) \dot{\gamma}^A \dot{\gamma}^B}{4\dot{\gamma}^4} + \frac{\gslash_{AB}J(\dot{\gamma}^A) \dot{\gamma}^B}{2\dot{\gamma}^4} - \frac{\gslash_{AB} \dot{\gamma}^A \dot{\gamma}^B}{4(\dot{\gamma}^4)^2} J(\dot{\gamma}^4)
		\\
		=
		&
		\left( \frac{J^C}{r} e_C (\gslash_{AB}) + J^3 e_3(\gslash_{AB}) + J^4e_4(\gslash_{AB}) \right)
		+ \frac{\gslash_{AB} J(\dot{\gamma}^A) \dot{\gamma}^B}{2 \dot{\gamma}^4} - \frac{\dot{\gamma}^3}{\dot{\gamma}^4} J(\dot{\gamma}^4).
	\end{align*}
	The result follows by expanding $e_C (\gslash_{AB}), e_3(\gslash_{AB}),e_4(\gslash_{AB})$ and using the previous two expressions.
	
	The last point is immediate from Lemma \ref{lemma:componentderivatives}.
\end{proof}

Note that it is the terms in the last line of \eqref{eq:Jgamma} which make it necessary to keep track of the leading order terms in some of the Jacobi fields.

\begin{remark} \label{rmk:hoschematic}
	One easily sees that the last point from Proposition \ref{prop:schematicpreserved} is true at higher orders, i.\@e.\@,
	\begin{align*}
		&
		J \left(  \sum_{p_1 +p_2 \geq p} (1+ \mathfrak{D}^k \gslash) ( \mathfrak{D}^k \Gamma_{p_1} + \mathfrak{D}^k \psi_{p_1} + \mathfrak{D}^k \mathcal{T}_{p_1}) ( 1 + r\mathfrak{D}^k \Gammaslash + \mathfrak{D}^k r \nablaslash b ) \eslash^{-p_2} \right)
		\\
		&
		\qquad \qquad \qquad
		=
		\sum_{p_1 +p_2 \geq p} (1 + \mathfrak{D}^k \gslash + \mathfrak{D}^{k+1} \gslash ) ( \mathfrak{D}^k \Gamma_{p_1} + \mathfrak{D}^{k+1} \Gamma_{p_1} + \mathfrak{D}^k \psi_{p_1} + \mathfrak{D}^{k+1} \psi_{p_1} 
		\\
		&
		\qquad \qquad \qquad \qquad
		+ \mathfrak{D}^k \mathcal{T}_{p_1} + \mathfrak{D}^{k+1} \mathcal{T}_{p_1}) 
		( 1 + r \mathfrak{D}^k \Gammaslash + r\mathfrak{D}^{k+1} \Gammaslash + \mathfrak{D}^k r \nablaslash b + \mathfrak{D}^{k+1} r \nablaslash b ) \eslash^{-p_2},
	\end{align*}
	for $k \geq 1$.  This fact will be used later when estimating higher order derivatives of the Jacobi fields.
\end{remark}

Using the bounds on the components of $J_{(1)},\ldots,J_{(6)}$ from Proposition \ref{prop:Jbounds}, Proposition \ref{prop:schematicpreserved} in particular guarantees that
\[
	\left\vert J(\dot{\upgamma}_p) (s) \right\vert \leq \frac{Cp^4}{r(s)^p},
\]
for $J = J_{(1)},\ldots,J_{(6)}$.

In order to mimic the strategy used to obtain the zeroth order estimates of the components of the Jacobi fields, estimates for $J_{(i_2)}({\Xi_k}^j)$ along $(\gamma,\dot{\gamma})$ are first obtained, then the initial conditions for $J_{(i_2)}({J_{(i_1)}}^j)$ are computed in Proposition \ref{prop:JmatrixA} and Proposition \ref{prop:Jic} respectively.

\begin{proposition} \label{prop:JmatrixA}
	If $v_0$ is sufficiently large, for $J = J_{(1)},\ldots,J_{(6)}$ the matrix $\Xi$ satisfies,
	\[
		\left\vert J({\Xi_k}^j) (s) \right\vert \leq \frac{C}{r(s)}, \qquad \left\vert \frac{d J({\Xi_k}^j)}{ds} (s) \right\vert \leq \frac{Cp^4(s)}{r(s)^2},
	\]
	for $k,j = 1,\ldots,7$.  Moreover, if $k=1,\ldots,4$, $j = 5,6,7$ or vice versa,
	\[
		\left\vert J({\Xi_k}^j) (s) \right\vert \leq \frac{C p^4(s)}{r(s)}, \qquad \left\vert \frac{d J({\Xi_k}^j)}{ds} (s) \right\vert \leq \frac{C(p^4(s))^2}{r(s)^2}.
	\]
	Similarly for $\Xi^{-1}$,
	\[
		\left\vert J({{\Xi^{-1}}_k}^j) (s) \right\vert \leq \frac{C p^4(s)}{r(s)}, \qquad \left\vert \frac{d J({{\Xi^{-1}}_k}^j)}{ds} (s) \right\vert \leq \frac{C(p^4(s))^2}{r(s)^2}.
	\]
	for $k,j = 1,\ldots,7$.  Moreover, if $k=1,\ldots,4$, $j = 5,6,7$ or vice versa,
	\[
		\left\vert J({{\Xi^{-1}}_k}^j) (s) \right\vert \leq \frac{C p^4(s)}{r(s)}, \qquad \left\vert \frac{d J({{\Xi^{-1}}_k}^j)}{ds} (s) \right\vert \leq \frac{C(p^4(s))^2}{r(s)^2}.
	\]
\end{proposition}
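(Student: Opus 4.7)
The plan is to apply $J$ to the ODEs \eqref{eq:Xi}, \eqref{eq:Xiinverse} and exploit the crucial identity $[X, J] = 0$, valid because $J = J_{(i_2)}$ is a Jacobi field along $s \mapsto \exp_s(x,p)$. This yields
$$\frac{d\, J({\Xi_i}^j)}{ds} = J\!\big((\hat{\nabla}_X E_i)^k\big){\Xi_k}^j + (\hat{\nabla}_X E_i)^k J({\Xi_k}^j),$$
and an analogous equation for $J({{\Xi^{-1}}_i}^j)$. The initial condition is zero in both cases: along the auxiliary curve $c_1$ through $(x,p)$ used to extend $J_{(i_2)}$ off the original geodesic, each basepoint has $\Xi = I$ at parameter $s=0$, so $J(\Xi)\vert_{s=0} = J(\Xi^{-1})\vert_{s=0} = 0$. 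Thus the task reduces to estimating the inhomogeneity $J\!\big((\hat{\nabla}_X E_i)^k\big)$ and then running a bootstrap/Gronwall argument paralleling Proposition \ref{prop:matrixA}.

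The first step is a schematic computation. Apply $J$ via the Leibniz rule to the expression for $(\hat{\nabla}_X E_i)^k$ from Proposition \ref{prop:framederivatives}. By Proposition \ref{prop:schematicpreserved}, $J$-derivatives of the factors $\rp_{p}, \Gamma_p, \psi_p, \mathcal{T}_p, \Gammaslash, \nablaslash b, \gslash$ preserve the $r$-decay at the cost of one extra $\mathfrak{D}$-derivative, whose pointwise $L^\infty$ bounds on $S_{u,v}$ are already available through the bootstrap assumptions \eqref{eq:Ricciba}--\eqref{eq:Gammaslashbanull} and the Sobolev inequalities of Section \ref{section:Sobolev}. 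For the factors $\dot{\upgamma}_p$, formula \eqref{eq:Jgamma} produces, in addition to terms of the same schematic form, the two potentially anomalous contributions $J^7 \rp_p$ and $\tfrac{1}{r}(J^{4+A} - \tfrac{p^4}{r}J^A)$: the first is controlled by $Cp^4\rp_p$ since $\vert J^7 \vert \leq C p^4$, and the second by $Cp^4/r^2$ precisely because of the fine cancellations identified in Proposition \ref{prop:Jbounds} between the leading-order parts of $J^{4+A}$ and $J^A$ for each of $J_{(1)},\ldots,J_{(6)}$. Combining these with the extra factor of $p^4$ that appears whenever $J$ contracts against a vertical direction yields the uniform bound
$$\big\vert J\!\big((\hat{\nabla}_X E_i)^k\big)\big\vert \leq \frac{C\,(p^4)^2}{r(s)^2},$$
with the improved estimate $C(p^4)^2/r(s)^3$ in the mixed-index cases, inherited from the extra power in Proposition \ref{prop:framederivatives}.

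The second step is the bootstrap. Assume provisionally $\vert J(\Xi_i^j)\vert \leq C_1 p^4/r$ (with the stronger weight in the mixed case), and likewise for $\Xi^{-1}$. Then the right-hand side of the ODE is bounded by
$$\tfrac{C(p^4)^2}{r^2}\cdot \vert \Xi \vert + \tfrac{Cp^4}{r^2}\cdot \vert J(\Xi) \vert \leq \tfrac{C(p^4)^2}{r^2} + \tfrac{CC_1(p^4)^2}{r^3},$$
using Proposition \ref{prop:matrixA} to control $\Xi$. Integrating from $s$ to $0$ with zero boundary data and using $\int_s^0 (p^4)^2/r^2\,ds' \leq Cp^4(s)/r(s)$ together with $\int_s^0 (p^4)^2/r^3\,ds' \leq Cp^4(s)/r(s)^2$, the bootstrap constant is recovered with an improved factor provided $v_0$ is taken sufficiently large, closing the argument on $[-s_*, 0]$. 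The mixed-index case absorbs its extra power of $p^4/r$ identically. The derivative bounds for $\frac{d}{ds}J(\Xi_k^j)$ and $\frac{d}{ds}J({\Xi^{-1}}_k^j)$ then follow by reading them off from the ODE now that $J(\Xi), J(\Xi^{-1})$ are controlled.

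The main obstacle is not the bootstrap itself — which is a direct analogue of Proposition \ref{prop:matrixA} — but the bookkeeping that goes into verifying the inhomogeneity estimate. In particular, one must track that every product of $J^i$-components with derivatives of $\rp, \Gamma, \psi, \mathcal{T}, \Gammaslash, \nablaslash b$ or with $\dot{\upgamma}_p$ fits inside the schematic structure of Proposition \ref{prop:framederivatives} with at most one additional factor of $p^4$, which relies delicately on the sharper bounds of Proposition \ref{prop:Jbounds} rather than merely on the uniform-in-$s$ component bounds. This structural observation is precisely what makes the higher-order commutation of the Jacobi equation possible in Section \ref{subsec:higherjacobi}.
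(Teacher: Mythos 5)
Your overall strategy is exactly the paper's: commute the transport equations \eqref{eq:Xi}, \eqref{eq:Xiinverse} with $J$ using $[X,J]=0$, note that $J(\Xi)\vert_{s=0}=J(\Xi^{-1})\vert_{s=0}=0$, control the resulting inhomogeneity via Propositions \ref{prop:framederivatives} and \ref{prop:schematicpreserved}, and close a bootstrap as in Proposition \ref{prop:matrixA}. So the architecture is right.

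There is, however, a quantitative slip that breaks the bootstrap as you have set it up. Your claimed bound $\vert J((\hat{\nabla}_X E_i)^k)\vert \leq C(p^4)^2/r^2$ for \emph{general} indices is not available: each term of $(\hat{\nabla}_X E_i)^k$ in Proposition \ref{prop:framederivatives} carries a single $\dot{\gamma}$ factor (outside the curvature terms), and when $J$ falls on a factor such as $\Gamma_{p_3}$, $\Gammaslash$ or $\rp_{p_3}$ it produces a $\mathfrak{D}$-derivative with the same $r$-decay and \emph{no} extra power of $p^4$ — the extra $p^4$ only appears when $J$ acts through its vertical components or on $\dot{\upgamma}_p$. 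Hence the correct general bound is $Cp^4/r^2$, with $C(p^4)^2/r^2$ reserved for the mixed horizontal/vertical index pairs (and note that the mixed-index improvement in Proposition \ref{prop:framederivatives} is an extra power of $p^4$, not the extra power of $1/r$ you attribute to it; the $1/r^3$ gain occurs only for the $(\hat{\nabla}_X E_A)^{4+B}$ components). Consequently the source term $J((\hat{\nabla}_X E_i)^k)\,{\Xi_k}^j$ integrates only to $C/r(s)$ for general indices, which does not recover your ansatz $C_1 p^4/r$ when $p^4$ is small (only $0\leq p^4\leq C_{p^4}$ is assumed). The repair is immediate and is what the paper does: take the bootstrap assumption $\vert J({\Xi_k}^j)\vert\leq C_1/r$ for general $k,j$ and $\leq C_1 p^4/r$ only in the mixed cases — exactly the weights appearing in the statement of the proposition — after which your Gr\"onwall step closes without change.
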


\begin{proof}
	The proof follows that of Proposition \ref{prop:matrixA} by first, for $s\in [-s_*,0]$, making the bootstrap assumptions,
	\[
		\left\vert J({\Xi_k}^j) (s') \right\vert \leq \frac{C_1}{r(s')},
	\]
	for $k,j = 1,\ldots,7$,
	\[
		\left\vert J({\Xi_k}^j) (s') \right\vert \leq \frac{C_1 p^4(s')}{r(s')},
	\]
	for $k = 1,\ldots,4, j=5,6,7$ or vice versa, for all $s'\in [s,0]$.  Note that at time $s=0$,
	\[
		J({\Xi_k}^j) \vert_{s=0} = 0,
	\]
	for all $J,k,j$.  Using the schematic expressions for the components of $\hat{\nabla}_X E_k$ from Proposition \ref{prop:framederivatives} and the fact that this schematic structure is pressured by Proposition \ref{prop:schematicpreserved}, the commuted equation for $\Xi$,
	\[
		\frac{d J({\Xi_k}^j)}{ds} (s) = J \left( (\hat{\nabla}_X E_k)^l {\Xi_l}^j (s)  \right)
	\]
	can be estimated exactly as in Proposition \ref{prop:matrixA}.  Similarly for $\Xi^{-1}$.
\end{proof}

The next proposition gives pointwise estimates for the initial conditions for the commuted Jacobi equation.  As was the case for the uncommuted equation, the leading order terms of some of the components have to be subtracted first.

\begin{proposition} \label{prop:Jic}
	At time $s=0$,
	\[
		\left\vert J_{(4)} ({J_{(A)}}^B) \big\vert_{s=0} - {\delta_A}^B r  \right\vert \leq C, \qquad 
		\left\vert J_{(4)} ( (\hat{\nabla}_X J_{(A)})^B) \big\vert_{s=0} - {\delta_A}^B p^4  \right\vert \leq \frac{Cp^4}{r},
	\]
	for $A,B = 1,2$,
	\[
		\left\vert J_{(4)} ({J_{(4)}}^4) \big\vert_{s=0} - r  \right\vert \leq C, \qquad 
		\left\vert J_{(4)} ( (\hat{\nabla}_X J_{(4)})^4) \big\vert_{s=0} - p^4  \right\vert \leq \frac{Cp^4}{r},
	\]
	and
	\[
		\left\vert J_{(i_2)} ({J_{(i_1)}}^j) \big\vert_{s=0} \right\vert \leq C, \qquad 
		\left\vert J_{(i_2)} ( (\hat{\nabla}_X J_{(i_1)})^j) \big\vert_{s=0} \right\vert \leq \frac{Cp^4}{r},
	\]
	for $i_1,i_2 = 1,\ldots,6$, $j=1,\ldots,4$ otherwise,
	\[
		\left\vert J_{(4)} ({J_{(A)}}^{4+B}) \big\vert_{s=0} - {\delta_A}^B p^4 \right\vert \leq \frac{Cp^4}{r}, 
	\]
	for $A,B = 1,2$,
	\[
		\left\vert J_{(i_2)} ({J_{(i_1)}}^{4+A}) \big\vert_{s=0} \right\vert \leq \frac{Cp^4}{r}, 
	\]
	for $i_1,i_2 = 1,\ldots,6$ otherwise,
	\[
		\left\vert J_{(i_2)} ( (\hat{\nabla}_X {J_{(i_1)})}^{4+A}) \big\vert_{s=0}  \right\vert \leq \frac{Cp^4}{r^2},
	\]
	for all $i_1,i_2 = 1,\ldots,6$, and
	\[
		\left\vert J_{(i_2)} ({J_{(i_1)}}^7) \big\vert_{s=0} \right\vert \leq Cp^4, \qquad 
		\left\vert J_{(i_2)} ( (\hat{\nabla}_X {J_{(i_1)})}^7) \big\vert_{s=0}  \right\vert \leq \frac{Cp^4}{r},
	\]
	for all $i_1,i_2 = 1,\ldots,6$.
\end{proposition}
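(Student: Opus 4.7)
The plan is to reduce each assertion to an explicit local computation on $P$ at the point $(x,p)$. By the extension of the Jacobi fields described in Section \ref{subsec:jacobidef}, along the integral curve of $V_{(i_2)}$ through $(x,p)$ the function ${J_{(i_1)}}^j$ at $s=0$ coincides with the $E_j$-component of the vector field $V_{(i_1)}$; and by the derivation in the proof of Proposition \ref{prop:jacobi}, $(\hat{\nabla}_X J_{(i_1)})^j$ at $s=0$, viewed as a function of the base point on $P$, equals the $E_j$-component of $\hat{\nabla}_{V_{(i_1)}} X$, whose value at an arbitrary $(y,q)$ is given by the right-hand side of the corresponding formula in Proposition \ref{prop:zeroic} with $(x,p)$ replaced by $(y,q)$. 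Hence
\[
J_{(i_2)}({J_{(i_1)}}^j)\big\vert_{s=0} = V_{(i_2)}\bigl((V_{(i_1)})^j\bigr)\big\vert_{(x,p)},
\]
and analogously for $(\hat{\nabla}_X J_{(i_1)})^j$.

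In the frame $\{E_j\}$ the fields $V_{(i_1)}$ take particularly simple form: $V_{(A)} = rE_A + p^4 E_{4+A}$, $V_{(3)} = E_3$, $V_{(4)} = rE_4 + p^4 E_7$, and $V_{(4+A)} = (p^4/r) E_{4+A}$. Hence the first half of the proposition reduces to computing $V_{(i_2)}(r)$ and $V_{(i_2)}(p^4)$ at $(x,p)$. Direct calculation gives $V_{(4)}(r) = r$, $V_{(3)}(r) = -1/\Omega^2$, and $V_{(A)}(r) = V_{(4+A)}(r) = 0$; the first of these produces the asserted leading terms $\delta_A^B r$ in $J_{(4)}({J_{(A)}}^B)$ and $r$ in $J_{(4)}({J_{(4)}}^4)$, while the others contribute only to uniformly bounded errors. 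For $V_{(i_2)}(p^4)$ one uses $\hor(e_\mu)(p^4) = -p^\nu \Gamma^4_{\mu \nu}$ together with the Ricci coefficient table of Section \ref{subsec:nsbianchi}: the vertical factor $p^4 \partial_{\overline{p}^4}(p^4) = p^4$ inside $V_{(4)}$ produces the leading $\delta_A^B p^4$ in $J_{(4)}({J_{(A)}}^{4+B})$, and the horizontal part contributes $r\hor(e_4)(p^4) = -r(\omega p^4 + \etabar_A p^A)$, which is $O(p^4/r^2)$ by the pointwise bounds on $\Gamma$. Repeating this case analysis for each $(i_1,i_2)$ yields the asserted bounds on ${J_{(i_1)}}^j\big\vert_{s=0}$.

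The $(\hat{\nabla}_X J_{(i_1)})^j$ assertions combine the above with the formulas of Proposition \ref{prop:zeroic}. The trivial summands there, of the form $p^4 E_A$, $-p^3 E_3 + p^4 E_4$ and so on, are of the same type as the components of $V_{(i_1)}$ and produce the asserted leading $\delta_A^B p^4$ and $p^4$ contributions when differentiated by $V_{(4)}$. The remaining curvature summands, consisting of horizontal and vertical lifts of $R(p,\cdot)p$ expressions, expand via \eqref{eq:curvaturelong} into schematic combinations
\[
	\sum \rp_{p_1} \dot{\upgamma}_{p_2} (\gslash + 1)(\psi_{p_3} + \mathcal{T}_{p_3}) \eslash^{-p_4};
\]
on these, $V_{(i_2)}$ acts according to Proposition \ref{prop:schematicpreserved}, introducing one application of $\mathfrak{D}$ on each of $\Gamma$, $\psi$, $\mathcal{T}$, $\gslash$, $\Gammaslash$ and $r\nablaslash b$. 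The result is then controlled pointwise by the first-order bounds on $\mathfrak{D}\Gamma$, $\mathfrak{D}\psi$, $\mathfrak{D}\mathcal{T}$ from Section \ref{section:Sobolev}, together with $r^p|\dot{\upgamma}_p| \leq Cp^4$ from Section \ref{section:suppf}; this produces the asserted $Cp^4/r$ error for $(\hat{\nabla}_X J_{(i_1)})^j$, and the finer $Cp^4/r^2$ error for the $4+A$ components, where the $1/r$ prefactor in Proposition \ref{prop:zeroic} supplies the extra power of decay.

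The main obstacle is the bookkeeping required to verify the stated leading terms and error rates simultaneously across all seven $E_j$ components and all six Jacobi fields, in particular confirming that each factor of $r$ generated by $V_{(4)}(r) = r$ acting on a curvature contribution in $\hat{\nabla}_X J_{(i_1)}$ is compensated by an extra $1/r$ built into the $p$-index of the corresponding schematic expression. Once the case analysis is arranged, no individual step requires more than the pointwise bounds of Section \ref{section:Sobolev}, the preservation of schematic form from Proposition \ref{prop:schematicpreserved}, and the explicit computation of $V_{(i_2)}(r)$ and $V_{(i_2)}(p^4)$ described above.
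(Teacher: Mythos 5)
Your proposal is correct and follows essentially the same route as the paper: apply $V_{(i_2)}$ to the explicit component formulas for $J_{(i_1)}(0)$ and $\hat{\nabla}_X J_{(i_1)}(0)$ from Proposition \ref{prop:zeroic}, isolate the leading terms via $V_{(4)}(r)=r$ and the $p^4\partial_{\overline{p}^4}$ contribution to $V_{(4)}(p^4)$ (with $\vert V_{(i)}(r)\vert\leq C$, $\vert V_{(i)}(p^4)\vert\leq Cp^4/r$ otherwise), and control the curvature summands using Proposition \ref{prop:schematicpreserved} together with the pointwise bounds. Your write-up is merely a more detailed version of the paper's two-line argument.
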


\begin{proof}
	Consider the expressions for $J_{(i_1)} (0), \hat{\nabla}_X J_{(i_1)} (0)$ before Proposition \ref{prop:Jbounds}.  The proof follows by applying $V_{(i_2)}$ to the components, noting that,
	\[
		V_{(4)} (r) = r, \qquad V_{(4)} (p^4) = p^4,
	\]
	\[
		\vert V_{(i)}(r) \vert \leq C, \qquad \vert V_{(i)}(p^4) \vert \leq \frac{Cp^4}{r},
	\]
	for $i\neq 4$, and that, by Proposition \ref{prop:schematicpreserved}, one has the same pointwise bounds for $V_{(i_2)}$ applied to the terms involving curvature as one does for the terms involving curvature alone.
\end{proof}

\begin{proposition} \label{prop:JJbounds}
	For $s \in [-s_*,0]$, if $\overline{C}$ is sufficiently small then,
	\[
		\vert J_{(4)} \left( {J_{(A)}}^B \right) (s) - {\delta_A}^B r(s) \vert \leq C, \qquad \vert J_{(4)} \left( {J_{(4)}}^4 \right) (s) - r(s) \vert \leq C,
	\]
	for $A,B = 1,2$,
	\[
		\vert J_{(i_2)} \left( {J_{(i_1)}}^j \right) (s) \vert \leq C,
	\]
	for $i_1,i_2=1,\ldots,6$, $j=1,\ldots,4$ otherwise,
	\[
		\vert J_{(4)} \left( {J_{(A)}}^{4+B}\right) (s) - {\delta_A}^B p^4(s) \vert \leq \frac{Cp^4(s)}{r(s)}, \qquad \vert J_{(i_2)} \left( {J_{(i_1)}}^{4+A} \right) (s) \vert \leq \frac{Cp^4(s)}{r(s)},
	\]
	for all $A,B = 1,2$, $i_1,i_2 = 1,\ldots,6$ such that $(i_2,i_1) \neq (4,A)$, and
	\[
		\vert J_{(i_2)} \left( {J_{(i_1)}}^7 \right) (s) \vert \leq Cp^4(s),
	\]
	for all $i_1,i_2=1,\ldots,7$.  Here $C = C(\overline{C})$ is independent of the point $(x,p)$ and of $s$.
\end{proposition}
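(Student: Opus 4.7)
The plan is to mirror the bootstrap argument of Proposition \ref{prop:Jbounds}, now for the quantities $J_{(i_2)}({J_{(i_1)}}^j)$. Since $J_{(i_2)}$ is a Jacobi field along $s\mapsto \exp_s(x,p)$, which has tangent vector $X$, one has $[X,J_{(i_2)}]=0$, so commuting the Jacobi equation \eqref{eq:jacobicomponents} with $J_{(i_2)}$ yields
\[
\frac{d^2 \, J_{(i_2)}\bigl({J_{(i_1)}}^k {\Xi_k}^j\bigr)}{ds^2} = J_{(i_2)}\Bigl( (\hat{R}(X,J_{(i_1)})X)^k {\Xi_k}^j \Bigr).
\]
Applying Lemma \ref{lem:generaljacobi}, with the role of $J^j$ played by $J_{(i_2)}({J_{(i_1)}}^j)$, reduces the estimate to (i) controlling the initial data $J_{(i_2)}({J_{(i_1)}}^j)|_{s=0}$ and $J_{(i_2)}((\hat{\nabla}_X J_{(i_1)})^j)|_{s=0}$, which is precisely the content of Proposition \ref{prop:Jic}, and (ii) integrating the commuted curvature error twice in $s$.

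For the error, I would expand $J_{(i_2)}\bigl( (\hat{R}(X,J_{(i_1)})X)^k {\Xi_k}^j \bigr)$ using the Leibniz rule and the schematic expression \eqref{eq:curvatureuseful}. The $J_{(i_2)}$ derivative can hit either a factor of $\Xi$, a curvature component $\psi$ or $\mathcal{T}$, a Ricci coefficient, a metric component, a component of $\dot{\gamma}$, a power of $r$, or a component $J_{(i_1)}^h$, $J_{(i_1)}^v$ of $J_{(i_1)}$. When it hits $\Xi$ the bounds from Proposition \ref{prop:JmatrixA} apply; when it hits a component of $J_{(i_1)}$ the bootstrap-style bounds on $J_{(i_2)}({J_{(i_1)}}^j)$ themselves appear (so these enter linearly and will be absorbed via the bootstrap); when it hits $\dot{\gamma}^\mu$, $\rp_p$, or a Weyl/Ricci/energy-momentum/metric factor, Proposition \ref{prop:schematicpreserved} guarantees that the schematic form of the error is preserved, now with one extra derivative on $\Gamma, \psi, \mathcal{T}, \Gammaslash$ and $\nablaslash b$. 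The bootstrap pointwise bounds established in Section \ref{section:Sobolev} on these first-order derivatives produce the same $r$-weights as in Proposition \ref{prop:Jbounds}, only multiplied by a factor depending on $\overline{C}$. As before, the horizontal components arising from the $\tfrac12 \hat{\nabla}_X\hor(R(\dot\gamma,J^v)\dot\gamma)$ and $\tfrac12 \hat{\nabla}_X {}^T\ver(R(\dot\gamma,J^h)\dot\gamma)$ pieces of \eqref{eq:curvatureuseful} are handled by integration by parts in $s$, using $d\Xi/ds = O(p^4/r^2)$ and $d J_{(i_2)}(\Xi)/ds = O(p^4/r^2)$ from Proposition \ref{prop:JmatrixA}, so no derivative of $\psi$ or $\mathcal{T}$ needs to be controlled pointwise in $r$; all required weights are the same as in the proof of Proposition \ref{prop:Jbounds}.

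The bootstrap argument is then run exactly as before: assume, for some $s\in[-s_*,0]$, bounds of the form
\[
\left\vert \frac{d\, J_{(i_2)}({J_{(i_1)}}^k {\Xi_k}^j)}{ds}(s') - \frac{d\, J_{(i_2)}({J_{(i_1)}}^k {\Xi_k}^j)}{ds}(0) \right\vert \leq \frac{C_1 p^4(s')}{r(s')^{3/2}}
\]
(with the analogous $(p^4)^2/r^2$ and $(p^4)^2/r^{3/2}$ bounds for the $j=4{+}A$ and $j=7$ components), integrate the commuted Jacobi equation, use the pointwise estimates on $\Gamma, \psi, \mathcal{T}$ and their first derivatives together with $r^p|\dot\upgamma_p|\leq Cp^4$ and $r^p|\mathcal{J}_p|\leq C$ from Proposition \ref{prop:Jbounds}, and recover the bounds with $C_1/2$ provided $\overline{C}$ is small. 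Subtracting the leading-order pieces $\delta_A{}^B r$ and $r$ in the cases $(i_2,i_1,j)=(4,A,B)$ and $(4,4,4)$ respectively, exactly as in Proposition \ref{prop:Jbounds} and using Proposition \ref{prop:Jic}, yields the stated refined estimates; the contributions $s\cdot p^4(0) - (r(s)-r(0))$ are handled by Lemma \ref{lem:sstar}.

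The main obstacle is purely bookkeeping: verifying that after applying $J_{(i_2)}$ no new term of slower $r$-decay appears. The genuinely delicate terms are those where $J_{(i_2)}$ acts on $\dot\gamma^A$ and produces the expression $\tfrac{1}{r}(J^{4+A}-\tfrac{p^4}{r}J^A)$ from the last line of \eqref{eq:Jgamma}; here the zeroth-order cancellations $J_{(i_1)}^{4+A}={\delta_{i_1}}^A p^4 + O(p^4/r)$ established in Proposition \ref{prop:Jbounds}, together with the leading $r$ terms subtracted in $J_{(i_1)}^A$, are exactly what is needed to keep the full $r$-decay and close the estimate. Once this cancellation is exploited (and the analogous bound for derivatives of $\Xi$ in Proposition \ref{prop:JmatrixA} applied), the argument reduces to the same weighted integrals in $s$ already estimated in Proposition \ref{prop:Jbounds}, completing the proof.
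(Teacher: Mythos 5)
Your proposal is correct and follows essentially the same route as the paper's proof: commute the Jacobi equation using $[X,J_{(i_2)}]=0$, bootstrap the first-derivative differences of $J_{(i_2)}({J_{(i_1)}}^k{\Xi_k}^j)$ with the same weights, feed in Propositions \ref{prop:JmatrixA} and \ref{prop:Jic}, and recover the bootstrap via Proposition \ref{prop:schematicpreserved} together with the pointwise bounds on first derivatives of $\Gamma,\psi,\mathcal{T},\Gammaslash, r\nablaslash b$, integrating by parts the $\hat{\nabla}_X$ terms exactly as at zeroth order. The only caveat is that Lemma \ref{lem:generaljacobi} does not apply verbatim because $J_{(i_2)}({J_{(i_1)}}^k{\Xi_k}^j)$ differs from $J_{(i_2)}({J_{(i_1)}}^j)\,{\Xi_k}^j$ by the cross term ${J_{(i_1)}}^k J_{(i_2)}({\Xi_k}^j)$, which must be (and, by the bounds you already cite from Propositions \ref{prop:JmatrixA} and \ref{prop:Jbounds}, is) separately controlled.
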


\begin{proof}
	The proof follows that of Proposition \ref{prop:Jbounds}.  The commuted Jacobi equation takes the form
	\[
		\frac{d^2 J_{(i_2)} \left( {J_{(i_1)}}^k {\Xi_k}^j \right)}{ds^2} = J_{(i_2)} \left( (\hat{R}(X,J_{(i_1)})X)^k {\Xi_k}^j \right),
	\]
	since $[X,J_{(i_2)}] = 0$.  Assume that $s\in [-s_*,0]$ is such that
	\[
		\left\vert \frac{d^2 J_{(i_2)} \left( {J_{(i_1)}}^k {\Xi_k}^j \right)}{ds^2}(s') - \frac{d^2 J_{(i_2)} \left( {J_{(i_1)}}^k {\Xi_k}^j \right)}{ds^2}(0) \right\vert
		\leq
		\frac{C_1p^4(s')}{r(s')^{\frac{3}{2}}},
	\]
	for $j = 1,\ldots,4$,
	\[
		\left\vert \frac{d^2 J_{(i_2)} \left( {J_{(i_1)}}^k {\Xi_k}^{4+A} \right)}{ds^2}(s') - \frac{d^2 J_{(i_2)} \left( {J_{(i_1)}}^k {\Xi_k}^{4+A} \right)}{ds^2}(0) \right\vert
		\leq
		\frac{C_1(p^4(s'))^2}{r(s')^{2}},
	\]
	for $A=1,2$, and
	\[
		\left\vert \frac{d^2 J_{(i_2)} \left( {J_{(i_1)}}^k {\Xi_k}^7 \right)}{ds^2}(s') - \frac{d^2 J_{(i_2)} \left( {J_{(i_1)}}^k {\Xi_k}^7 \right)}{ds^2}(0) \right\vert
		\leq
		\frac{C_1(p^4(s'))^2}{r(s')^{\frac{3}{2}}},
	\]
	for all $s'\in[s,0]$, for all $i_1,i_2 = 1,\ldots,6$, where $C_1$ is a large constant which will be chosen later.  For $j = 1,\ldots,4$ this immediately gives
	\[
		\left\vert J_{(i_2)} \left( {J_{(i_1)}}^k {\Xi_k}^j \right)(s) - \left( J_{(i_2)} \left( {J_{(i_1)}}^k {\Xi_k}^j \right)(0) + s \frac{d^2 J_{(i_2)} \left( {J_{(i_1)}}^k {\Xi_k}^j \right)}{ds^2}(0) \right) \right\vert
		\leq
		\frac{CC_1}{\sqrt{r(s)}}.
	\]
	By Proposition \ref{prop:JmatrixA}, and Proposition \ref{prop:Jbounds},
	\[
		\left\vert {J_{(i_1)}}^k J_{(i_2)} \left( {\Xi_k}^j \right)(s)  \right\vert \leq C,
	\]
	and hence, by Proposition \ref{prop:Jic} and the fact that,
	\[
		J_{(i_2)} \left( {J_{(i_1)}}^k {\Xi_k}^j \right)(0) = J_{(i_2)} \left( {J_{(i_1)}}^j \right)(0),
	\]
	\[
		\frac{d J_{(i_2)} \left( {J_{(i_1)}}^k {\Xi_k}^{j} \right)}{ds} (0) =  J_{(i_2)} \left( (\hat{\nabla}_X {J_{(i_1)}})^j  \right) (0),
	\]
	this implies,
	\[
		\left\vert J_{(4)} \left( {J_{(A)}}^B \right)(s') - {\delta_A}^B r(s') \right\vert \leq C(1+C_1),
	\]
	for $A,B= 1,2$,
	\[
		\left\vert J_{(4)} \left( {J_{(4)}}^4 \right) (s') - r(s') \right\vert \leq C(1+C_1),
	\]
	and
	\[
		\left\vert J_{(i_2)} \left( {J_{(i_1)}}^j \right) (s') \right\vert \leq C(1+C_1),
	\]
	for $i_1,i_2 = 1,\ldots,6, j = 1,\ldots,4$ otherwise, for all $s'\in[s,0]$, where Proposition \ref{prop:matrixA} has also been used.  Similarly,
	\[
		\left\vert J_{(4)} \left( {J_{(A)}}^{4+B} \right)(s') - {\delta_A}^B p^4(0) \right\vert \leq \frac{C(1+C_1)p^4(s')}{r(s')},
	\]
	for $A,B= 1,2$,
	\[
		\left\vert J_{(i_2)} \left( {J_{(i_1)}}^{4+A} \right)(s') \right\vert \leq \frac{C(1+C_1)p^4(s')}{r(s')},
	\]
	for $A=1,2$, $i_1,i_2 = 1,\ldots,6$ otherwise, and
	\[
		\left\vert J_{(i_2)} \left( {J_{(i_1)}}^{7} \right)(s') \right\vert \leq C(1+C_1)p^4(s'),
	\]
	for all $i_1,i_2 = 1,\ldots,6$, for all $s'\in [s,0]$.
	
	The remainder of the proof proceeds exactly as that of Proposition \ref{prop:Jbounds}, recalling that $[X,J_{(i_2)}] = 0$.  By Proposition \ref{prop:schematicpreserved} and Proposition \ref{prop:JmatrixA} one has the same bounds for,
	\[
		J_{(i_2)} \left( (\hat{R}(X,J_{(i_1)})X)^k {\Xi_k}^j \right),
	\]
	the right hand side of the commuted Jacobi equation, as for the uncommitted equation since the bootstrap assumptions of Section \ref{section:ba} and the Sobolev inequalities give pointwise bounds for $\mathfrak{D} \psi, \mathfrak{D} \mathcal{T}, \mathfrak{D} \Gamma, \mathfrak{D} \Gammaslash, \mathfrak{D} r \nablaslash b$.
\end{proof}

\subsection{$L^2$ Estimates for Components of Jacobi Fields at Higher Orders} \label{subsec:higherjacobi}
To estimate $J_{(i_3)} J_{(i_2)} ( J_{(i_1)}^j)$ and $J_{(i_4)}J_{(i_3)} J_{(i_2)} ( J_{(i_1)}^j)$, the Jacobi equation needs to commuted three and four times respectively.  This will generate terms involving two and three derivatives of Ricci coefficients, Weyl curvature components and energy momentum tensor components.  The higher order derivatives of the components of the Jacobi fields must therefore be estimated in $L^2$.  They will additionally only be estimated after integrating in momentum space, i.\@e.\@ after integrating over $P_x$.

Given $(x,p) \in \supp(f) \cap \{ (x,p)\in P \mid v(x)>v_0 \}$ and $v_0 \leq v'\leq v(x)$, define $s_{v'}(x,p)$ to be the parameter time such that $\pi(\exp_{s_{v'}}(x,p)) \in \{ v = v'\}$, where $\pi : P\to \mathcal{M}$ is the natural projection map.  In this notation,
\[
	-s_*(x,p) = s_{v_0}(x,p),
\]
where $s_*(x,p)$ is defined in Section \ref{subsec:fjacobi}.

The goal of this section is to show that, for all $i_1,i_2,i_3,i_4 = 1,\ldots,6$, $j = 1,\ldots,7$, the quantities
\begin{equation} \label{eq:j34bounds}
	\mathcal{T} \left[ J_{(i_3)} J_{(i_2)} ( J_{(i_1)}^j) (s_{v'}) \right],
	\qquad
	\mathcal{T} \left[ J_{(i_4)}J_{(i_3)} J_{(i_2)} ( J_{(i_1)}^j) (s_{v'}) \right],
\end{equation}
for each $\mathcal{T} = \Tslash_{44}, \Tslash_4, \Tslash, \Tslash_{34}, \Tslash_3, \Tslash_{33}$, can be controlled, for all $v' \in [v_0, v(x)]$, by up to two and three derivatives of Ricci coefficients, curvature components and energy momentum tensor components respectively.  It will then be possible to estimate the quantities \eqref{eq:j34bounds} after taking appropriate weighted square integrals.

The case where two derivatives of the components of the Jacobi fields are taken will first be considered.  Mimicking again the proof of the zeroth order estimates, second order derivatives of the matrices $\Xi$ and $\Xi^{-1}$ are first estimated, followed by estimates for second order derivatives of the initial conditions for the Jacobi equation in Proposition \ref{prop:JJmatrixA} and Proposition \ref{prop:JJic} respectively.  The following Proposition should therefore be compared to Proposition \ref{prop:matrixA} and Proposition \ref{prop:JmatrixA}.

\begin{proposition} \label{prop:JJmatrixA}
	If $v_0$ is sufficiently large, $J_{(i_2)},J_{(i_3)} = J_{(1)},\ldots,J_{(6)}$, then, for all $\mathcal{T}_p$,
	\begin{align*}
		\left\vert \mathcal{T}_p \left[ J_{(i_3)} J_{(i_2)} ( {\Xi_k}^j) (s_{v'}) \right] \right\vert
		\leq
		C \left( \frac{1}{r(x)^p v'} + H_{\Xi,2}(v') \right),
	\end{align*}
	for $j,k = 1,\ldots,7$, and
	\begin{align*}
		\left\vert \mathcal{T}_p \left[ (p^4)^{-1} J_{(i_3)} J_{(i_2)} ( {\Xi_k}^j) (s_{v'}) \right] \right\vert
		\leq
		C \left( \frac{1}{r(x)^p v'} + H_{\Xi,2}(v') \right),
	\end{align*}
	for $j=1,\ldots,4$, $k = 5,6,7$ or vice versa, for all $v_0 \leq v' \leq v(x)$.  Similarly for $\Xi^{-1}$,
	\begin{align*}
		\left\vert \mathcal{T}_p \left[ J_{(i_3)} J_{(i_2)} ( {{\Xi^{-1}}_k}^j) (s_{v'}) \right] \right\vert
		\leq
		C \left( \frac{1}{r(x)^p v'} + H_{\Xi,2}(v') \right),
	\end{align*}
	for $j,k = 1,\ldots,7$, and
	\begin{align*}
		\left\vert \mathcal{T}_p \left[ (p^4)^{-1} J_{(i_3)} J_{(i_2)} ( {{\Xi^{-1}}_k}^j) (s_{v'}) \right] \right\vert
		\leq
		C \left( \frac{1}{r(x)^p v'} + H_{\Xi,2}(v') \right),
	\end{align*}
	for $j=1,\ldots,4$, $k = 5,6,7$ or vice versa.  Here $C$ is a constant which is independent of the point $(x,p)$ (but depends on $\overline{C}$) and
	\begin{align*}
		H_{\Xi,2}(v')
		=
		&
		\frac{1}{\sqrt{v'}} \sum_{\Gamma_q} \mathcal{T}_p \left[ \left( \int_{v'}^{v(x)} r(s_{v''})^{2q-2} \vert \mathfrak{D}^2 \Gamma_q (s_{v''}) \vert^2 d v'' \right)^{\frac{1}{2}} \right]
		\\
		&
		+ \frac{1}{\sqrt{v'}} \mathcal{T}_p \left[ \left( \int_{v'}^{v(x)} \vert \mathfrak{D}^2 \Gammaslash (s_{v''}) \vert^2 d v'' \right)^{\frac{1}{2}} \right]
		+ \frac{1}{v'} \mathcal{T}_p \left[ \left( \int_{v'}^{v(x)} \vert \mathfrak{D}^2 r\nablaslash b (s_{v''}) \vert^2 d v'' \right)^{\frac{1}{2}} \right]
		\\
		&
		+ \frac{1}{v'} \mathcal{T}_p \left[ \left( \int_{v'}^{v(x)} \sum_{\psi_q} r(s_{v''})^{2q-2} \vert \mathfrak{D}^2 \psi_q (s_{v''}) \vert^2 
		+ \sum_{\mathcal{T}_q} r(s_{v''})^{2q-2} \vert \mathfrak{D}^2 \mathcal{T}_q (s_{v''}) \vert^2 d v'' \right)^{\frac{1}{2}} \right],
	\end{align*}
\end{proposition}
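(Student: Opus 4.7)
The plan is to mirror the structure of Propositions \ref{prop:matrixA} and \ref{prop:JmatrixA}, with the new twist that at second order the error terms involve $\mathfrak{D}^2$ of geometric quantities whose only available bounds are $L^2$ on null hypersurfaces. Since $[X, J_{(i_k)}] = 0$, commuting equation \eqref{eq:Xi} twice yields
\[
\frac{d}{ds}\bigl( J_{(i_3)}J_{(i_2)}({\Xi_i}^j) \bigr) = J_{(i_3)}J_{(i_2)}\bigl( (\hat{\nabla}_X E_i)^k {\Xi_k}^j \bigr),
\]
with vanishing initial value at $s = 0$. First I would expand via Leibniz into four groups: the Gr\"onwall self-coupling $(\hat{\nabla}_X E_i)^k \cdot J_{(i_3)}J_{(i_2)}({\Xi_k}^j)$, two mixed cross terms with one derivative on each factor, and the top-order contribution $J_{(i_3)}J_{(i_2)}((\hat{\nabla}_X E_i)^k) \cdot {\Xi_k}^j$.

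For the mixed cross terms, Propositions \ref{prop:matrixA}, \ref{prop:JmatrixA}, \ref{prop:Jbounds} and \ref{prop:JJbounds} give pointwise bounds along the geodesic; their contribution after integrating in $s$ against $\mathcal{T}_p[\cdot]$ yields the $\frac{1}{r(x)^p v'}$ term, with the $1/r(x)^p$ coming from the size of $\supp(f\vert_{P_x})$ as in the proof of Proposition \ref{prop:tmain}, and the $1/v'$ from integrating a twice-integrable coefficient in $s$ with $r(s) \sim v(s)$. For the top-order piece, Remark \ref{rmk:hoschematic} applied twice, together with Proposition \ref{prop:schematicpreserved} for the $\dot{\upgamma}$ and $\rp$ factors, shows that $J_{(i_3)}J_{(i_2)}((\hat{\nabla}_X E_i)^k)$ retains the schematic form of Proposition \ref{prop:framederivatives}, now with at most two $\mathfrak{D}$ derivatives on each $\Gamma$, $\psi$, $\mathcal{T}$, $\Gammaslash$, $r\nablaslash b$.

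The key step is then the change of variables $s \mapsto v''$ along the geodesic, using $\dot v = p^4$ with $c p^4(0) \leq p^4(s) \leq C p^4(0)$ and $r(s) \sim v(s)$. Applying Cauchy--Schwarz in $v''$, pulling a factor of $\frac{1}{\sqrt{v'}}$ outside (or $\frac{1}{v'}$ when the schematic coefficient carries an extra $r$, as it does for the $\Gammaslash$, $\nablaslash b$, $\psi$ and $\mathcal{T}$ contributions in Proposition \ref{prop:framederivatives}), and then exchanging the $v''$ integration with the momentum integration built into $\mathcal{T}_p$ by Minkowski's inequality, produces exactly the $H_{\Xi,2}(v')$ expression as the bound. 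The Gr\"onwall self-coupling term is absorbed by a bootstrap as in Proposition \ref{prop:matrixA}, relying on $\vert (\hat{\nabla}_X E_i)^k \vert \leq C p^4/r^2$ and largeness of $v_0$. The improved $(p^4)^{-1}$-weighted bounds in the mixed horizontal/vertical case follow by the same structural mechanism as in Proposition \ref{prop:framederivatives}, namely the presence of an extra $\dot{\upgamma}_0 = p^4$ factor in the corresponding schematic pieces, a property preserved by $J$ derivatives. The $\Xi^{-1}$ estimates are identical, using \eqref{eq:Xiinverse}.

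The hard part will be the bookkeeping needed to certify that every $\mathfrak{D}^2$ of a geometric quantity appears with precisely the $r$-weight and $p^4$-weight dictated by $H_{\Xi,2}(v')$; in particular, one must verify that each $\mathfrak{D}^2 r\nablaslash b$ contribution always appears with the extra $r$ power promised by Proposition \ref{prop:framederivatives}, so that the weaker---because only top-order null-hypersurface-integrated---bootstrap assumption \eqref{eq:bba} is consistent with the $\frac{1}{v'}$ weighting in $H_{\Xi,2}$. A secondary subtlety is that Proposition \ref{prop:framederivatives} provides this extra $r$ weight for the $\psi$ and $\mathcal{T}$ coefficients as well, and this must be preserved under the two $J$ commutators; Remark \ref{rmk:hoschematic} guarantees this at the level of schematic rank-counting, but a careful pass through all the resulting terms is required to make sure no stray term survives outside the form of $H_{\Xi,2}$.
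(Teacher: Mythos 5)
Your proposal follows essentially the same route as the paper's proof: a bootstrap on $\mathcal{T}_p\bigl[J_{(i_3)}J_{(i_2)}({\Xi_k}^j)(s_{\tilde v})\bigr]$ driven by the twice-commuted equation \eqref{eq:Xi}, with the Leibniz expansion into the same four groups, pointwise control of the mixed cross terms via Propositions \ref{prop:matrixA}, \ref{prop:JmatrixA}, \ref{prop:Jbounds}, \ref{prop:JJbounds}, preservation of the schematic form of $(\hat{\nabla}_X E_i)^k$ under two $J$-derivatives, and Cauchy--Schwarz in $v''$ inside the momentum integral to produce $H_{\Xi,2}(v')$, including the correct identification of where the extra $r$-weight on the $\mathfrak{D}^2 r\nablaslash b$, $\psi$ and $\mathcal{T}$ contributions is needed. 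The only cosmetic difference is your invocation of Minkowski's inequality where the paper simply exchanges the $v''$-integration with $\mathcal{T}_p$ and applies Cauchy--Schwarz; the substance is identical.
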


\begin{remark}
	It should be noted that the $L^2$ norm on incoming null hypersurfaces of the quantities
	\[
		r(x)^p \sqrt{v'} \sum_{\Gamma_q} \mathcal{T}_p \left[ \left( \int_{v'}^{v(x)} r(s_{\tilde{v}})^{2q-2} \vert \mathfrak{D}^2 \Gamma_q (s_{\tilde{v}}) \vert^2 d \tilde{v} \right)^{\frac{1}{2}} \right],
		\qquad
		r(x)^p \sqrt{v'} \mathcal{T}_p \left[ \left( \int_{v'}^{v(x)} \vert \mathfrak{D}^2 \Gammaslash (s_{\tilde{v}}) \vert^2 d \tilde{v} \right)^{\frac{1}{2}} \right],
	\]
	will be shown to be uniformly bounded.  Direct comparison with the $\frac{1}{r(s)}$ behaviour of ${\Xi_k}^j(s) - {\delta_k}^j$ and $J_{(i_2)}({\Xi_k}^j)(s)$ from Proposition \ref{prop:matrixA} and Proposition \ref{prop:JmatrixA} respectively can therefore be made.  The terms involving curvature components, energy momentum tensor components and $b$ can similarly be controlled after taking their weighted $L^2$ norms on incoming null hypersurfaces.  See Section \ref{subsec:proofemmain2} below.
\end{remark}

\begin{proof}[Proof of Proposition \ref{prop:JJmatrixA}]
The proof proceeds by a bootstrap argument.  Suppose $v' \in [v_0,v(x)]$ is such that, for each $\mathcal{T}_p$, for $j,k = 1,\ldots,7$,
\begin{align} \label{eq:JJmatrixAba}
	\left\vert \mathcal{T}_p \left[ J_{(i_3)} J_{(i_2)} ( {\Xi_k}^j) (s_{\tilde{v}}) \right] \right\vert
	\leq
	C_1 \left( \frac{1}{r(x)^p \tilde{v}} + H_{\Xi,2}(v') \right),
\end{align} 
for all $v'\leq \tilde{v} \leq v(x)$, where $C_1$ is a large constant which will be chosen later.  Note that
\[
	J_{(i_3)} J_{(i_2)} ( {\Xi_k}^j) \big\vert_{s_{v'} = 0} = 0,
\]
so this is clearly true for $s_{v'} = 0$.

Now,
\begin{align} \label{eq:JJmatrixApfmain}
\begin{split}
	&
	\frac{d}{dv'} \left( \mathcal{T}_p \left[ J_{(i_3)} J_{(i_2)} ( {\Xi_k}^j) (s_{v'}) \right] \right)
	=
	\mathcal{T}_p \left[ \frac{1}{p^4} J_{(i_3)} J_{(i_2)} \left( (\hat{\nabla}_X E_l)^j {\Xi_k}^l \right) (s_{v'}) \right]
	\\
	&
	\qquad
	=
	\mathcal{T}_p \bigg[ \frac{1}{p^4} \bigg( {\Xi_k}^l J_{(i_3)} J_{(i_2)} \left( (\hat{\nabla}_X E_l)^j \right) 
	+
	(\hat{\nabla}_X E_l)^j J_{(i_3)} J_{(i_2)} \left( {\Xi_k}^l \right)
	\\
	&
	\qquad
	+
	J_{(i_2)} \left( {\Xi_k}^l \right) J_{(i_3)} \left( (\hat{\nabla}_X E_l)^j \right)
	+
	J_{(i_3)} \left( {\Xi_k}^l \right) J_{(i_2)} \left( (\hat{\nabla}_X E_l)^j \right)
	\bigg) (s_{v'}) \bigg],
\end{split}
\end{align}
since $\frac{d}{dv'} = \frac{ds}{dv'} \frac{d}{ds}$, $\frac{dv'}{ds} = X(v') = p^4$, and
\[
	\left[ \frac{d}{ds},J \right] = [X,J] = 0,
\]
for $J = J_{(i_2)},J_{(i_3)}$.

By Proposition \ref{prop:schematicpreserved}, Proposition \ref{prop:Jbounds} and Proposition \ref{prop:framederivatives},
\[
	\left\vert J_{(i_2)} \left( \left( \hat{\nabla}_X E_l \right)^j \right) (s_{\tilde{v}}) \right\vert \leq \frac{C p^4}{r(s_{\tilde{v}})^2},
\]
for all $j,l = 1,\ldots,7$.  Also, by Proposition \ref{prop:JmatrixA},
\[
	\left\vert J_{(i_3)} \left( {\Xi_k}^l \right) (s_{\tilde{v}}) \right\vert \leq \frac{C}{r(s_{\tilde{v}})},
\]
for all $k,l = 1,\ldots,7$.  Hence,
\[
	\left\vert \mathcal{T}_p \left[ \frac{1}{p^4} J_{(i_3)} \left( {\Xi_k}^l \right) J_{(i_2)} \left( (\hat{\nabla}_X E_l)^j \right) (s_{\tilde{v}}) \right] \right\vert
	\leq
	\left\vert \mathcal{T}_p \left[ \frac{Cp^4}{r(s_{\tilde{v}})^3} \right] \right\vert \leq \frac{C}{r(x)^p \tilde{v}^3},
\]
recalling from the proof of Proposition \ref{prop:tmain} that,
\[
	\left\vert \mathcal{T}_p \left[ p^4 \mathbbm{1}_{\supp(f)} \right] \right\vert \leq \frac{C}{r(x)^p}.
\]
Similarly,
\[
	\left\vert \mathcal{T}_p \left[ \frac{1}{p^4} J_{(i_2)} \left( {\Xi_k}^l \right) J_{(i_3)} \left( (\hat{\nabla}_X E_l)^j \right) (s_{\tilde{v}}) \right] \right\vert
	\leq
	\frac{C}{r(x)^p \tilde{v}^3}.
\]

Using the bootstrap assumptions \eqref{eq:JJmatrixAba} and the pointwise bounds,
\[
	\left\vert \left( \hat{\nabla}_X E_l \right)^j (s_{\tilde{v}}) \right\vert 
	\leq 
	\frac{Cp^4}{r(s_{\tilde{v}})^2}
	\leq
	\frac{Cp^4}{\tilde{v}^2},
\]
from Proposition \ref{prop:framederivatives}, clearly have,
\begin{align*}
	\left\vert \mathcal{T}_p \left[ \frac{1}{p^4} (\hat{\nabla}_X E_l)^j J_{(i_3)} J_{(i_2)} \left( {\Xi_k}^l \right) (s_{\tilde{v}}) \right] \right\vert
	\leq
	\frac{C C_1}{\tilde{v}^2} \left( \frac{1}{r(x)^p \tilde{v}} + H_{\Xi,2}(v') \right).
\end{align*}
Now, using the schematic form of $(\hat{\nabla}_X E_l)^j$ from Proposition \ref{prop:framederivatives}, recalling the better decay for the terms involving Weyl curvature and energy momentum tensor components, the bounds for ${\Xi_j}^k$ from Proposition \ref{prop:matrixA}, Proposition \ref{prop:schematicpreserved} and Propositions \ref{prop:Jbounds}, \ref{prop:JJbounds},
\begin{align*}
	&
	\left\vert \mathcal{T}_p \left[ \frac{1}{p^4} {\Xi_k}^l J_{(i_3)} J_{(i_2)} \left( (\hat{\nabla}_X E_l)^j \right) (s_{\tilde{v}}) \right] \right\vert
	\\
	&
	\qquad
	\leq
	C \mathcal{T}_p \bigg[ \frac{1}{r(s_{\tilde{v}})^2} + \sum_{\Gamma_q} r(s_{\tilde{v}})^{q-2} \vert \mathfrak{D}^2 \Gamma_q (s_{\tilde{v}}) \vert + r(s_{\tilde{v}})^{-2} \vert \mathfrak{D}^2 r\nablaslash b (s_{\tilde{v}}) \vert
	\\
	&
	\qquad \qquad
	+ r(s_{\tilde{v}})^{-1} \vert \mathfrak{D}^2 \Gammaslash (s_{\tilde{v}}) \vert + \sum_{\psi_q} r(s_{\tilde{v}})^{q-\frac{5}{2}} \vert \mathfrak{D}^2 \psi_q (s_{\tilde{v}}) \vert + \sum_{\mathcal{T}_q} r(s_{\tilde{v}})^{q - \frac{5}{2}} \vert \mathfrak{D}^2 \mathcal{T}_q (s_{\tilde{v}}) \vert
	\bigg].
\end{align*}
Note that,
\begin{align*}
	&
	\int_{v'}^{v(x)}  \frac{C C_1}{\tilde{v}^2} \frac{1}{\sqrt{\tilde{v}}} \sum_{\Gamma_q} \mathcal{T}_p \left[ \left( \int_{\tilde{v}}^{v(x)}  r(s_{v''})^{2q-2} \vert \mathfrak{D}^2 \Gamma_q (s_{v''}) \vert^2 d v'' \right)^{\frac{1}{2}} \right] d \tilde{v}
	\\
	&
	\qquad \qquad
	\leq
	C C_1 \sum_{\Gamma_q} \mathcal{T}_p \left[ \left( \int_{v'}^{v(x)}  r(s_{v''})^{2q-2} \vert \mathfrak{D}^2 \Gamma_q (s_{v''}) \vert^2 d v'' \right)^{\frac{1}{2}} \right]  \int_{v'}^{v(x)} \frac{1}{\tilde{v}^{\frac{5}{2}}} d \tilde{v}
	\\
	&
	\qquad \qquad
	\leq
	\frac{C C_1}{v'} \frac{1}{\sqrt{v'}} \sum_{\Gamma_q} \mathcal{T}_p \left[ \left( \int_{v'}^{v(x)}  r(s_{v''})^{2q-2} \vert \mathfrak{D}^2 \Gamma_q (s_{v''}) \vert^2 d v''\right)^{\frac{1}{2}} \right] ,
\end{align*}
and similarly,
\begin{align*}
	&
	\int_{v'}^{v(x)} \frac{C C_1}{\tilde{v}^2} \Bigg( 
	\frac{1}{\sqrt{\tilde{v}}} \mathcal{T}_p \left[ \left( \int_{\tilde{v}}^{v(x)} \vert \mathfrak{D}^2 \Gammaslash (s_{v''}) \vert^2 d v'' \right)^{\frac{1}{2}} \right]
	+ \frac{1}{\tilde{v}} \mathcal{T}_p \left[ \left( \int_{\tilde{v}}^{v(x)} \vert \mathfrak{D}^2 r\nablaslash b (s_{v''}) \vert^2 d v'' \right)^{\frac{1}{2}} \right]
	\\
	&
	\qquad
	+ \frac{1}{\tilde{v}} \mathcal{T}_p \left[ \left( \int_{\tilde{v}}^{v(x)} \sum_{\psi_q} r(s_{v''})^{2q-2} \vert \mathfrak{D}^2 \psi_q (s_{v''}) \vert^2 
	+ \sum_{\mathcal{T}_q} r(s_{v''})^{2q-2} \vert \mathfrak{D}^2 \mathcal{T}_q (s_{v''}) \vert^2 d v'' \right)^{\frac{1}{2}} \right]
	\Bigg) d\tilde{v}
	\\
	&
	\leq
	\frac{C C_1}{v'} \Bigg( 
	\frac{1}{\sqrt{v'}} \mathcal{T}_p \left[ \left( \int_{v'}^{v(x)} \vert \mathfrak{D}^2 \Gammaslash (s_{v''}) \vert^2 d v'' \right)^{\frac{1}{2}} \right]
	+ \frac{1}{v'} \mathcal{T}_p \left[ \left( \int_{v'}^{v(x)} \vert \mathfrak{D}^2 r\nablaslash b (s_{v''}) \vert^2 d v'' \right)^{\frac{1}{2}} \right]
	\\
	&
	\qquad
	+ \frac{1}{v'} \mathcal{T}_p \left[ \left( \int_{v'}^{v(x)} \sum_{\psi_q} r(s_{v''})^{2q-2} \vert \mathfrak{D}^2 \psi_q (s_{v''}) \vert^2 
	+ \sum_{\mathcal{T}_q} r(s_{v''})^{2q-2} \vert \mathfrak{D}^2 \mathcal{T}_q (s_{v''}) \vert^2 d v'' \right)^{\frac{1}{2}} \right]
	\Bigg).
\end{align*}
Also,
\begin{align*}
	&
	\int_{v'}^{v(x)} \mathcal{T}_p \left[ r(s_{\tilde{v}})^{q-2} \vert \mathfrak{D}^2 \Gamma_q (s_{\tilde{v}}) \vert \right] d \tilde{v}
	\\
	&
	\qquad
	\leq
	\mathcal{T}_p \left[ \left( \int_{v'}^{v(x)} \frac{1}{r(s_{\tilde{v}})^{2}} d \tilde{v} \right)^{\frac{1}{2}} \left( \int_{v'}^{v(x)} r(s_{\tilde{v}})^{2q-2} \vert \mathfrak{D}^2 \Gamma_q(s_{\tilde{v}}) \vert^2 d \tilde{v} \right)^{\frac{1}{2}} \right]
	\\
	&
	\qquad
	\leq
	\frac{C}{\sqrt{v'}} \mathcal{T}_p \left[ \left( \int_{v'}^{v(x)} r(s_{\tilde{v}})^{2q-2} \vert \mathfrak{D}^2 \Gamma_q(s_{\tilde{v}}) \vert^2 d \tilde{v} \right)^{\frac{1}{2}} \right],
\end{align*}
and similarly,
\begin{align*}
	\int_{v'}^{v(x)} \mathcal{T}_p \left[ r(s_{\tilde{v}})^{-1} \vert \mathfrak{D}^2 \Gammaslash (s_{\tilde{v}}) \vert \right] d \tilde{v}
	\leq
	&
	\frac{1}{ \sqrt{v'}} \mathcal{T}_p \left[ \left( \int_{v'}^{v(x)} \vert \mathfrak{D}^2 \Gammaslash (s_{\tilde{v}}) \vert^2 d \tilde{v} \right)^{\frac{1}{2}} \right],
	\\
	\int_{v'}^{v(x)} \mathcal{T}_p \left[ r(s_{\tilde{v}})^{-2} \vert \mathfrak{D}^2 r \nablaslash b (s_{\tilde{v}}) \vert \right] d \tilde{v}
	\leq
	&
	\frac{1}{ v'} \mathcal{T}_p \left[ \left( \int_{v'}^{v(x)} \vert \mathfrak{D}^2 r\nablaslash b (s_{\tilde{v}}) \vert^2 d \tilde{v} \right)^{\frac{1}{2}} \right],
\end{align*}
and
\begin{multline*}
	\int_{v'}^{v(x)} \mathcal{T}_p \left[ r(s_{\tilde{v}})^{q-\frac{5}{2}} \left( \vert \mathfrak{D}^2 \psi_q (s_{\tilde{v}})\vert + \vert \mathfrak{D}^2 \mathcal{T}_q (s_{\tilde{v}})\vert \right)\right] d \tilde{v}
	\\
	\leq
	\frac{1}{v'} \mathcal{T}_p \left[ \left( \int_{v'}^{v(x)} r(s_{\tilde{v}})^{2q-2} \left( \vert \mathfrak{D}^2 \psi_q(s_{\tilde{v}}) \vert^2 + \vert \mathfrak{D}^2 \mathcal{T}_q(s_{\tilde{v}}) \vert^2 \right) d \tilde{v} \right)^{\frac{1}{2}} \right].
\end{multline*}
Hence, integrating equation \eqref{eq:JJmatrixApfmain} from $v'$ to $v(x)$ and using the fact that,
\[
	J_{(i_3)} J_{(i_2)} \left( {\Xi_k}^j \right)\big\vert_{s=0} = 0,
\]
it follows that
\begin{align*}
		\left\vert \mathcal{T}_p \left[ J_{(i_3)} J_{(i_2)} ( {\Xi_k}^j) (s_{v'}) \right] \right\vert
		\leq
		C \left( 1 + \frac{C_1}{v'} \right) \left( \frac{1}{r(x)^p v'} + H_{\Xi,2}(v') \right).
\end{align*}
Now choose $C_1$ large so that $C_1 >4C$, where $C$ is the constant appearing in the above inequality, and $v_0$ large so that $\frac{C_1}{v'} \leq 1$.  Then,
\[
	C\left( 1 + \frac{C_1}{v'} \right) \leq \frac{C_1}{4} \cdot 2 = \frac{C_1}{2},
\]
and the bootstrap assumption \eqref{eq:JJmatrixAba} has been recovered with a better constant.  Hence the set of $v'\in [v_0,v(x)]$ such that the bootstrap assumption \eqref{eq:JJmatrixAba} holds for all $v'\leq \tilde{v} \leq v(x)$ is non-empty, open and closed and hence equal to $[v_0,v(x)]$.

The proof of the second part follows by making the bootstrap assumption,
\begin{align}
	\left\vert \mathcal{T}_p \left[ (p^4)^{-1} J_{(i_3)} J_{(i_2)} ( {\Xi_k}^j) (s_{\tilde{v}}) \right] \right\vert
	\leq
	C_1 \left( \frac{1}{r(x)^p \tilde{v}} + H_{\Xi,2}(v') \right),
\end{align}
for all $v'\leq \tilde{v} \leq v(x)$, for $j = 1,\ldots,4, k = 5,6,7$ or vice versa.  The proof proceeds as before, now using the fact that
\begin{align*}
	\left\vert \left( \hat{\nabla}_X E_l \right)^j (s_{\tilde{v}}) \right\vert 
	&
	\leq \frac{C(p^4)^2}{r(s_{\tilde{v}})^2},
	\\
	\left\vert J_{(i_2)} \left( \left( \hat{\nabla}_X E_l \right)^j \right) (s_{\tilde{v}}) \right\vert 
	&
	\leq \frac{C(p^4)^2}{r(s_{\tilde{v}})^2},
	\\
	\left\vert {\Xi_l}^j (s_{\tilde{v}}) \right\vert 
	&
	\leq \frac{Cp^4}{r(s_{\tilde{v}})},
	\\
	\left\vert J_{(i_2)} \left( {\Xi_l}^j \right) (s_{\tilde{v}}) \right\vert 
	&
	\leq \frac{Cp^4}{r(s_{\tilde{v}})},
\end{align*}
and
\begin{align*}
	&
	\left\vert \mathcal{T}_p \left[ (p^4)^{-2} J_{(i_3)} J_{(i_2)} \left( (\hat{\nabla}_X E_l)^j \right) (s_{\tilde{v}}) \right] \right\vert
	\\
	&
	\qquad
	\leq
	C \mathcal{T}_p \bigg[ \frac{1}{r(s_{\tilde{v}})^2} + \sum_{\Gamma_q} r(s_{\tilde{v}})^{q-2} \vert \mathfrak{D}^2 \Gamma_q (s_{\tilde{v}}) \vert + r(s_{\tilde{v}})^{-2} \vert \mathfrak{D}^2 r\nablaslash b (s_{\tilde{v}}) \vert
	\\
	&
	\qquad \qquad
	+ r(s_{\tilde{v}})^{-1} \vert \mathfrak{D}^2 \Gammaslash (s_{\tilde{v}}) \vert + \sum_{\psi_q} r(s_{\tilde{v}})^{q-\frac{5}{2}} \vert \mathfrak{D}^2 \psi_q (s_{\tilde{v}}) \vert + \sum_{\mathcal{T}_q} r(s_{\tilde{v}})^{q - \frac{5}{2}} \vert \mathfrak{D}^2 \mathcal{T}_q (s_{\tilde{v}}) \vert
	\bigg],
\end{align*}
for $l=1,\ldots,4$, $j = 5,6,7$ or vice versa.

The proof for ${{\Xi^{-1}}_k}^j$ is identical.
\end{proof}

The next proposition gives estimates for the initial conditions for the commuted Jacobi equation.  Again, the leading order terms of some of the components have to be subtracted first.

\begin{proposition} \label{prop:JJic}
	At time $s_{v'}=0$,
	\begin{align*}
		\left\vert J_{(4)} J_{(4)} \left( {J_{(A)}}^B \right)\Big\vert_{s_{v'} = 0} - {\delta_A}^B r \right\vert
		&
		\leq
		C \left( 1 + \sum_{\Gamma_q} r^q \vert \mathfrak{D}^2 \Gamma_q \vert + \sum_{\psi_q} r^q \vert \mathfrak{D}^2 \psi_q \vert + \sum_{\mathcal{T}_q} r^q \vert \mathfrak{D}^2 \mathcal{T}_q \vert \right),
		\\
		\left\vert J_{(4)} J_{(4)} \left( (\hat{\nabla}_X J_{(A)})^B \right)\Big\vert_{s_{v'} = 0} - {\delta_A}^B p^4 \right\vert
		&
		\leq
		\frac{Cp^4}{r} \left( 1 + \sum_{\Gamma_q} r^q \vert \mathfrak{D}^2 \Gamma_q \vert + \sum_{\psi_q} r^q \vert \mathfrak{D}^2 \psi_q \vert + \sum_{\mathcal{T}_q} r^q \vert \mathfrak{D}^2 \mathcal{T}_q \vert \right),
	\end{align*}
	for $A,B = 1,2$,
	\begin{align*}
		\left\vert J_{(4)} J_{(4)} \left( {J_{(4)}}^4 \right)\Big\vert_{s_{v'} = 0} - r \right\vert
		&
		\leq
		C \left( 1 + \sum_{\Gamma_q} r^q \vert \mathfrak{D}^2 \Gamma_q \vert + \sum_{\psi_q} r^q \vert \mathfrak{D}^2 \psi_q \vert + \sum_{\mathcal{T}_q} r^q \vert \mathfrak{D}^2 \mathcal{T}_q \vert \right),
		\\
		\left\vert J_{(4)} J_{(4)} \left( (\hat{\nabla}_X J_{(4)})^4 \right)\Big\vert_{s_{v'} = 0} - p^4 \right\vert
		&
		\leq
		\frac{Cp^4}{r} \left( 1 + \sum_{\Gamma_q} r^q \vert \mathfrak{D}^2 \Gamma_q \vert + \sum_{\psi_q} r^q \vert \mathfrak{D}^2 \psi_q \vert + \sum_{\mathcal{T}_q} r^q \vert \mathfrak{D}^2 \mathcal{T}_q \vert \right),
	\end{align*} 
	and,
	\begin{align*}
		\left\vert J_{(i_3)} J_{(i_2)} \left( {J_{(i_1)}}^j \right)\Big\vert_{s_{v'} = 0} \right\vert
		&
		\leq
		C \left( 1 + \sum_{\Gamma_q} r^q \vert \mathfrak{D}^2 \Gamma_q \vert + \sum_{\psi_q} r^q \vert \mathfrak{D}^2 \psi_q \vert + \sum_{\mathcal{T}_q} r^q \vert \mathfrak{D}^2 \mathcal{T}_q \vert \right),
		\\
		\left\vert J_{(i_3)} J_{(i_2)} \left( (\hat{\nabla}_X J_{(i_1)})^j \right)\Big\vert_{s_{v'} = 0} \right\vert
		&
		\leq
		\frac{Cp^4}{r} \left( 1 + \sum_{\Gamma_q} r^q \vert \mathfrak{D}^2 \Gamma_q \vert + \sum_{\psi_q} r^q \vert \mathfrak{D}^2 \psi_q \vert + \sum_{\mathcal{T}_q} r^q \vert \mathfrak{D}^2 \mathcal{T}_q \vert \right),
	\end{align*}
	for $i_1,i_2,i_3 = 1,\ldots,6$, $j = 1,\ldots,4$ otherwise,
	\[
		\left\vert J_{(4)} J_{(4)} \left( {J_{(A)}}^{4+B} \right)\Big\vert_{s_{v'} = 0} - {\delta_A}^B p^4 \right\vert
		\leq
		\frac{Cp^4}{r} \left( 1 + \sum_{\Gamma_q} r^q \vert \mathfrak{D}^2 \Gamma_q \vert + \sum_{\psi_q} r^q \vert \mathfrak{D}^2 \psi_q \vert + \sum_{\mathcal{T}_q} r^q \vert \mathfrak{D}^2 \mathcal{T}_q \vert \right),
	\]
	for $A,B = 1,2$,
	\[
		\left\vert J_{(i_3)} J_{(i_2)} \left( {J_{(i_1)}}^{A} \right)\Big\vert_{s_{v'} = 0} \right\vert
		\leq
		\frac{Cp^4}{r} \left( 1 + \sum_{\Gamma_q} r^q \vert \mathfrak{D}^2 \Gamma_q \vert + \sum_{\psi_q} r^q \vert \mathfrak{D}^2 \psi_q \vert + \sum_{\mathcal{T}_q} r^q \vert \mathfrak{D}^2 \mathcal{T}_q \vert \right),
	\]
	for $i_1,i_2,i_3 = 1,\ldots,6$ otherwise,
	\[
		\left\vert J_{(i_3)} J_{(i_2)} \left( (\hat{\nabla}_X J_{(i_1)})^{4+A} \right)\Big\vert_{s_{v'} = 0} \right\vert
		\leq
		\frac{Cp^4}{r^2} \left( 1 + \sum_{\Gamma_q} r^q \vert \mathfrak{D}^2 \Gamma_q \vert + \sum_{\psi_q} r^q \vert \mathfrak{D}^2 \psi_q \vert + \sum_{\mathcal{T}_q} r^q \vert \mathfrak{D}^2 \mathcal{T}_q \vert \right),
	\]
	for all $i_1,i_2,i_3 = 1,\ldots,6$, and
	\begin{align*}
		\left\vert J_{(i_3)} J_{(i_2)} \left( {J_{(i_1)}}^{7} \right)\Big\vert_{s_{v'} = 0} \right\vert
		&
		\leq
		Cp^4 \left( 1 + \sum_{\Gamma_q} r^q \vert \mathfrak{D}^2 \Gamma_q \vert + \sum_{\psi_q} r^q \vert \mathfrak{D}^2 \psi_q \vert + \sum_{\mathcal{T}_q} r^q \vert \mathfrak{D}^2 \mathcal{T}_q \vert \right),
		\\
		\left\vert J_{(i_3)} J_{(i_2)} \left( (\hat{\nabla}_X J_{(i_1)})^{7} \right)\Big\vert_{s_{v'} = 0} \right\vert
		&
		\leq
		\frac{Cp^4}{r} \left( 1 + \sum_{\Gamma_q} r^q \vert \mathfrak{D}^2 \Gamma_q \vert + \sum_{\psi_q} r^q \vert \mathfrak{D}^2 \psi_q \vert + \sum_{\mathcal{T}_q} r^q \vert \mathfrak{D}^2 \mathcal{T}_q \vert \right),
	\end{align*}
	for all $i_1,i_2,i_3 = 1,\ldots,6$.
\end{proposition}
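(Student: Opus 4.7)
The plan is to mimic the proof of Proposition \ref{prop:Jic}, applying the second-order operators $V_{(i_3)} V_{(i_2)}$ directly to the explicit expressions for $J_{(i_1)}(0)$ and $\hat\nabla_X J_{(i_1)}(0)$ given in Proposition \ref{prop:zeroic}. Since these initial data are evaluated at the single point $(x,p)\in P$, the estimates are genuinely pointwise and it is enough to differentiate each component separately and bound the resulting terms using (a) the pointwise bounds for $\Gamma,\Gammaslash,r\nablaslash b,\psi,\mathcal{T}$ and their first $\mathfrak{D}$-derivatives provided by the Sobolev inequalities of Section \ref{section:Sobolev}, together with (b) the higher-order schematic commutation identity of Proposition \ref{prop:schematicpreserved} (and its extension, Remark \ref{rmk:hoschematic}).

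First I would handle the \emph{elementary} pieces of the initial data, i.\,e.\ those involving only $r,p^3,p^4,\gslash_{AB}p^B$ and the frame $E_1,\ldots,E_7$. Direct computation from the definitions of the $V_{(i)}$ gives $V_{(4)}(r)=r$, $V_{(4)}(p^4)=p^4$, $V_{(4)}(p^3)=-p^3$, and for $i\neq 4$ the bounds $|V_{(i)}(r)|\leq C$, $|V_{(i)}(p^4)|\leq Cp^4/r$, together with analogous bounds on $r^2 p^3$ and $r^2 p^A$ coming from Proposition \ref{prop:suppfmain}. Iterating, $V_{(4)}V_{(4)}(rE_A)=rE_A+(\text{lower order})$ and $V_{(4)}V_{(4)}(p^4 E_{4+A})=p^4 E_{4+A}+(\text{lower order})$, which supplies the required leading-order terms $\delta_A^{\,B}r$ and $\delta_A^{\,B}p^4$ on the left-hand sides; all remaining terms produced by differentiating $\gslash_{AB}$, together with the frame derivatives $\hat\nabla_{V_{(i)}}E_k$ (controlled by Proposition \ref{prop:framederivatives}), are bounded by the pointwise size of at most two $\mathfrak{D}$-derivatives of $\gslash-\gslash^\circ$, of the Ricci coefficients, and of $\Gammaslash,r\nablaslash b$, all of which fit into the right-hand side (the lower-order derivatives being absorbed by pointwise Sobolev bounds into the factor $C$, the highest order contributing the stated $\mathfrak{D}^2\Gamma_q$ sum).

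Next I would treat the \emph{curvature} pieces, namely the horizontal and vertical lifts of $R(p,\cdot)p$ appearing in $\hat\nabla_X J_{(i_1)}(0)$. Writing their components with respect to $E_1,\ldots,E_7$ in the schematic form \eqref{eq:schematiccurvature} of Section \ref{subsec:lowerjacobi}, each such component is of the schematic type $\sum\rp_{p_1}\dot\upgamma_{p_2}\dot\upgamma_{p_3}(\gslash+1)(\psi_{p_4}+\mathcal{T}_{p_4})\eslash^{-p_5}$ considered in Proposition \ref{prop:schematicpreserved}. Applying $V_{(i_2)}$ once yields, by Proposition \ref{prop:schematicpreserved} and the identity $V_{(i)}(\dot\upgamma_p)$ computed there, an expression of the same schematic form with one $\mathfrak{D}$ applied to each of $\psi,\mathcal{T},\Gamma,\Gammaslash,r\nablaslash b$; applying $V_{(i_3)}$ a second time, by Remark \ref{rmk:hoschematic}, produces an expression of the same schematic form with up to two $\mathfrak{D}$'s distributed over the factors. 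Collecting all such terms, the ones with exactly two $\mathfrak{D}$'s on a single factor contribute the $r^q|\mathfrak{D}^2\Gamma_q|$, $r^q|\mathfrak{D}^2\psi_q|$, $r^q|\mathfrak{D}^2\mathcal{T}_q|$ terms, while all products involving at most one $\mathfrak{D}$ on each factor are bounded pointwise by the Sobolev-improved bootstrap assumptions and absorbed into the constant. The different weights on the various classes of components ($r,p^4,p^4/r,p^4/r^2,\ldots$) in the statement come simply from counting powers of $r$ and $p^4$ in each schematic term, exactly as in Proposition \ref{prop:Jic}; the expressions for $\hat\nabla_X J_{(i_1)}(0)$ carry one additional power of $1/r$ compared with $J_{(i_1)}(0)$, and the additional factor of $p^4$ on the $E_{4+A},E_7$ components originates from the explicit $p^4$ prefactors in the vertical parts of Proposition \ref{prop:zeroic}.

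The main obstacle to be careful about is that $\Gammaslash$ and $r\nablaslash b$ at second order are \emph{not} controlled pointwise by the bootstrap assumptions, so one must check that wherever they would appear with two $\mathfrak{D}$'s they can in fact be re-expressed, via the Codazzi equations \eqref{eq:Codazziout}--\eqref{eq:Codazziin} and the definitions $b = \Gamma_1$, $\Gammaslash-\Gammaslash^\circ$ in terms of Ricci coefficients plus curvature; thus $\mathfrak{D}^2(\Gammaslash-\Gammaslash^\circ)$ and $\mathfrak{D}^2 r\nablaslash b$ are controlled by $\mathfrak{D}^2\Gamma$, $\mathfrak{D}\psi$, $\mathfrak{D}\mathcal{T}$, which are precisely the quantities appearing (or absorbed into lower-order pointwise bounds) on the right-hand side. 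Once this bookkeeping is in place the proof reduces to reading off the leading behaviour in $r$ and $p^4$ of each term, exactly as in Proposition \ref{prop:Jic}.
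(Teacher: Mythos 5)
Your proposal is correct and follows essentially the same route as the paper's (very terse) proof: one applies $V_{(i_3)}V_{(i_2)}$ directly to the explicit initial data of Proposition \ref{prop:zeroic}, using $V_{(4)}(r)=r$, $V_{(4)}(p^4)=p^4$ and $\vert V_{(i)}(r)\vert\leq C$, $\vert V_{(i)}(p^4)\vert\leq Cp^4/r$ for $i\neq 4$ to produce the leading terms, and Proposition \ref{prop:schematicpreserved} (with Remark \ref{rmk:hoschematic}) to control the differentiated curvature pieces. One small correction to your final paragraph: the feared $\mathfrak{D}^2\Gammaslash$ and $\mathfrak{D}^2 r\nablaslash b$ terms do not actually arise here, since $\Gammaslash$ and $r\nablaslash b$ enter the initial data only through the \emph{first} differentiation (of the $p^{\mu}$ and of tensor components), so at most one $\mathfrak{D}$ ever lands on them — which is why the statement's right-hand side contains no such terms — and in any case the Codazzi equations would not be the right device for converting them, as $\Gammaslash-\Gammaslash^{\circ}$ is controlled via its transport equations \eqref{eq:Gammaslash3}--\eqref{eq:Gammaslash4} rather than via Codazzi.
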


\begin{proof}
	Again follows from considering expressions for $V_{(i_1)}, \hat{\nabla}_X V_{(i_1)}$, differentiating the components and using the fact that,
	\[
		V_{(4)} (r) = r, \qquad V_{(4)} (p^4) = p^4,
	\]
	\[
		\vert V_{(i)}(r) \vert \leq C, \qquad \vert V_{(i)}(p^4) \vert \leq \frac{Cp^4}{r},
	\]
	for $i\neq 4$, as in Proposition \ref{prop:Jic}.
\end{proof}

\begin{proposition} \label{prop:JJJbounds}
	If $v_0$ is sufficiently large, $v'\in[v_0,v(x)]$, then, for each $\mathcal{T}_p$,
	\[
		\left\vert \mathcal{T}_p \left[ J_{(4)} J_{(4)} \left( {J_{(A)}}^B \right) (s_{v'}) - r(s_{v'}) {\delta_A}^B \right] \right\vert
		\leq
		C \left( \frac{1}{r(x)^p} + H_{\mathcal{T}_p,2}(v') \right),
	\]
	for $A,B = 1,2$,
	\[
		\left\vert \mathcal{T}_p \left[ J_{(4)} J_{(4)} \left( {J_{(4)}}^4 \right) (s_{v'}) - r(s_{v'}) \right] \right\vert
		\leq
		C \left( \frac{1}{r(x)^p} + H_{\mathcal{T}_p,2}(v') \right),
	\]
	and
	\[
		\left\vert \mathcal{T}_p \left[ J_{(i_3)} J_{(i_2)} \left( {J_{(i_1)}}^j \right) (s_{v'}) \right] \right\vert
		\leq
		C \left( \frac{1}{r(x)^p} + H_{\mathcal{T}_p,2}(v') \right),
	\]
	for $i_1,i_2,i_3 = 1,\ldots,6$, $j = 1,\ldots,4$ otherwise,
	\[
		\left\vert \mathcal{T}_p \left[ (p^4)^{-1} J_{(4)} J_{(4)} \left( {J_{(A)}}^{4+B} \right) (s_{v'}) - {\delta_A}^B \right] \right\vert
		\leq
		\frac{C}{v'} \left( \frac{1}{r(x)^p} + H_{\mathcal{T}_p,2}(v') \right),
	\]
	for $A,B = 1,2$,
	\[
		\left\vert \mathcal{T}_p \left[ (p^4)^{-1} J_{(i_3)} J_{(i_2)} \left( {J_{(i_1)}}^{4+A} \right) (s_{v'}) \right] \right\vert
		\leq
		\frac{C}{v'} \left( \frac{1}{r(x)^p} + H_{\mathcal{T}_p,2}(v') \right),
	\]
	for $i_1,i_2,i_3 = 1,\ldots,6$, $A = 1,2$ otherwise, and,
	\[
		\left\vert \mathcal{T}_p \left[ (p^4)^{-1} J_{(i_3)} J_{(i_2)} \left( {J_{(i_1)}}^{7} \right) (s_{v'}) \right] \right\vert
		\leq
		C \left( \frac{1}{r(x)^p} + H_{\mathcal{T}_p,2}(v') \right),
	\]
	where,
	\begin{align*}
		H_{\mathcal{T}_p,2}(v')
		=
		&
		\mathcal{T}_p \left[ \left( \int_{v'}^{v(x)} r(s_{\tilde{v}})^6 \vert \mathfrak{D}^2 \beta \vert^2 (s_{\tilde{v}}) d\tilde{v} \right)^{\frac{1}{2}} \right]
		\\
		&
		+
		\mathcal{T}_p \left[ \left( \int_{v'}^{v(x)} \sum_{\psi_q \neq \beta} r(s_{\tilde{v}})^{2q-2} \vert \mathfrak{D}^2 \psi_q \vert^2 (s_{\tilde{v}}) d\tilde{v} \right)^{\frac{1}{2}} \right]
		\\
		&
		+
		\mathcal{T}_p \left[ \left( \int_{v'}^{v(x)} \sum_{\mathcal{T}_q} r(s_{\tilde{v}})^{2q-2} \vert \mathfrak{D}^2 \mathcal{T}_q \vert^2 (s_{\tilde{v}}) d\tilde{v} \right)^{\frac{1}{2}} \right]
		\\
		&
		+
		\mathcal{T}_p \left[ \left( \int_{v'}^{v(x)} \sum_{\Gamma_q} r(s_{\tilde{v}})^{2q-\frac{3}{2}} \vert \mathfrak{D}^2 \Gamma_q \vert^2 (s_{\tilde{v}}) d\tilde{v} \right)^{\frac{1}{2}} \right]
		\\
		&
		+
		\mathcal{T}_p \left[ \left( \int_{v'}^{v(x)} r(s_{\tilde{v}})^{\frac{1}{2}} \vert \mathfrak{D}^2 \Gammaslash \vert^2 (s_{\tilde{v}}) d\tilde{v} \right)^{\frac{1}{2}} \right]
		+ \mathcal{T}_p \left[ \left( \int_{v'}^{v(x)} \vert \mathfrak{D}^2 r \nablaslash b \vert^2 (s_{\tilde{v}}) d\tilde{v} \right)^{\frac{1}{2}} \right]
		\\
		&
		+
		\frac{1}{r(x)^p} \bigg(
		\sum_{\psi_q} r(x)^q \vert \mathfrak{D}^2 \psi_q \vert (x)
		+
		\sum_{\mathcal{T}_q} r(x)^q \vert \mathfrak{D}^2 \mathcal{T}_q \vert (x)
		\\
		&
		+
		\sum_{\Gamma_q} r(x)^q \vert \mathfrak{D}^2 \Gamma_q \vert (x)
		+
		\vert \mathfrak{D}^2 r \nablaslash b \vert (x)
		+
		r(x) \vert \mathfrak{D}^2 \Gammaslash \vert (x)
		\bigg).
	\end{align*}
\end{proposition}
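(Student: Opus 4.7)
The plan is to mirror the bootstrap argument used in Proposition \ref{prop:JJbounds}, now integrated against the momentum-space measure defining $\mathcal{T}_p$, with the second derivatives of geometric quantities appearing in the error placed in weighted $L^2$ in $\tilde v$ rather than pointwise. Since $[X,J_{(i_2)}] = [X,J_{(i_3)}] = 0$, commuting the Jacobi equation twice yields
\[
  \frac{d^2}{ds^2}\left(J_{(i_3)} J_{(i_2)} ({J_{(i_1)}}^k {\Xi_k}^j)\right) = J_{(i_3)} J_{(i_2)}\left( (\hat R(X,J_{(i_1)})X)^k {\Xi_k}^j\right).
\]
I would integrate this identity twice in $s$ from $s=0$ down to $s_{v'}$ and change variable via $ds = d\tilde v/p^4$, so the iterated $s$-integral becomes an iterated integral in $\tilde v \in [v',v(x)]$ that can be absorbed inside $\mathcal{T}_p[\,\cdot\,]$. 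The unknown then splits as an initial value, $s_{v'}$ times an initial first derivative, and the double $\tilde v$-integral of the commuted curvature error. The initial-value contributions are controlled by Proposition \ref{prop:JJic}, after first subtracting the leading-order pieces $r(s_{v'})\delta^B_A$, $r(s_{v'})$ and $p^4\delta^B_A$; the cancellation with the initial first derivative that permits this subtraction is the same one exploited via Lemma \ref{lem:sstar} in the proofs of Propositions \ref{prop:Jbounds} and \ref{prop:JJbounds}. The contribution from applying the $J_{(i_3)}J_{(i_2)}$ derivatives to ${\Xi_k}^j$ rather than to ${J_{(i_1)}}^k$ is exactly what is controlled by Proposition \ref{prop:JJmatrixA}, and in particular supplies the $\frac{1}{v'}$ factor appearing in the vertical-component bounds.

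For the curvature error, Proposition \ref{prop:schematicpreserved} and Remark \ref{rmk:hoschematic} applied to the schematic expression \eqref{eq:schematiccurvature}, together with Proposition \ref{prop:framederivatives} and the $r^p|\mathcal{J}_p|\le C$ bounds from Propositions \ref{prop:Jbounds} and \ref{prop:JJbounds}, show that $J_{(i_3)}J_{(i_2)}\big((\hat R(X,J_{(i_1)})X)^k{\Xi_k}^j\big)$ is a finite sum of terms in which at most one of $\Gamma_q, \Gammaslash, r\nablaslash b, \psi_q, \mathcal{T}_q$ carries a $\mathfrak{D}^2$, with the remaining factors bounded pointwise by the Sobolev-improved bootstrap assumptions of Section \ref{section:ba}. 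Terms of $\mathfrak{D}$-order at most one are absorbed pointwise into the $\tfrac{1}{r(x)^p}$ initial-data contribution. For each term carrying a $\mathfrak{D}^2$-factor, after performing one $s$-integration, Cauchy--Schwarz in $\tilde v$ against the weight that balances the $r$-powers produces precisely the weighted $L^2$ norm contributing to $H_{\mathcal{T}_p,2}(v')$. The $\beta$ component of $\psi$, whose anomalous constant $\gamma[\beta]=2$ in Proposition \ref{prop:Bianchi} forces the weaker $r^6$ flux weight, is already isolated as the first term in $H_{\mathcal{T}_p,2}$; similarly the weaker norms \eqref{eq:Gammaslashba} and \eqref{eq:bba} for $\Gammaslash$ and $b$ correspond to the $r^{\frac{1}{2}}$ and unweighted entries in $H_{\mathcal{T}_p,2}$.

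The main obstacle will be the weight bookkeeping. In particular, the extra half-power of $r$-decay carried by the $\psi,\mathcal{T}$ terms in $(\hat\nabla_X E_i)^j$, and the further extra power of decay in $(\hat\nabla_X E_A)^{4+B}$ (both recorded in Proposition \ref{prop:framederivatives}), are exactly what is required to absorb the weaker second-derivative control available for $\Gammaslash$, $r\nablaslash b$ and for the $4\!+\!B$ components of $\Xi$ into $H_{\mathcal{T}_p,2}(v')$; keeping track of which entries of $\Xi$ carry the extra $p^4$-factor relative to their generic bound will drive the split between the standard estimate and the $(p^4)^{-1}$-weighted estimate for the vertical components $j=4+A,7$. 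With these weights verified, closing a single bootstrap on $\mathcal{T}_p\big[J_{(i_3)}J_{(i_2)}({J_{(i_1)}}^k{\Xi_k}^j)\big]$ with bootstrap constant $C_1$ absorbed into $v_0^{-1}$, exactly as in Proposition \ref{prop:JJbounds}, then delivers all of the stated inequalities.
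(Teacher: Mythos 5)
Your proposal follows essentially the same route as the paper's proof: a bootstrap on the $F$-frame components $\mathcal{T}_p\big[J_{(i_3)}J_{(i_2)}({J_{(i_1)}}^k{\Xi_k}^j)\big]$ minus the initial value and $(v(x)-\tilde v)$ times the initial first derivative, with the change of variables $ds = d\tilde v/p^4$ taken inside $\mathcal{T}_p$, Proposition \ref{prop:JJic} and the analogue of Lemma \ref{lem:sstar} handling the leading-order subtractions, Proposition \ref{prop:JJmatrixA} handling the derivatives falling on $\Xi$, and Cauchy--Schwarz in $\tilde v$ converting the single $\mathfrak{D}^2$-factor in each error term into the weighted $L^2$ fluxes defining $H_{\mathcal{T}_p,2}$. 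The weight bookkeeping you flag (the $\beta$ anomaly, the weaker $\Gammaslash$ and $r\nablaslash b$ norms, and the $p^4$-weighting of the vertical components) is exactly where the paper spends its effort, so the argument is correct as proposed.
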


\begin{proof}
	Suppose $v'\in [v_0,v(x)]$ is such that the following bootstrap assumptions hold for all $\tilde{v} \in [v',v(x)]$,
	\begin{align*}
		&
		\Bigg\vert \mathcal{T}_p \left[ J_{(i_3)} J_{(i_2)} \left( {J_{(i_1)}}^k {\Xi_k}^j \right) (s_{\tilde{v}})\right]
		- 
		\mathcal{T}_p \left[ J_{(i_3)} J_{(i_2)} \left( {J_{(i_1)}}^j \right)\Big\vert_{s_{v'} = 0} \right] 
		\\
		&
		+ \left( v(x) - \tilde{v} \right) \mathcal{T}_p \left[ \frac{1}{p^4} \frac{ d J_{(i_3)} J_{(i_2)} \left( {J_{(i_1)}}^k {\Xi_k}^j \right)}{ds} \bigg\vert_{s_{v'} = 0} \right]
		\Bigg\vert
		\\
		&
		\qquad \qquad
		\leq
		\frac{C_1}{r(x)^p} \left( 1 + H_{\mathcal{T}_p,2}(\tilde{v}) \right),
	\end{align*}
	for all $i_1,i_2,i_3 = 1,\ldots,6$, $j = 1,\ldots,4$,
	\begin{align*}
		&
		\Bigg\vert \mathcal{T}_p \left[ (p^4)^{-1} J_{(i_3)} J_{(i_2)} \left( {J_{(i_1)}}^k {\Xi_k}^{4+A} \right) (s_{\tilde{v}})\right]
		- 
		\mathcal{T}_p \left[ (p^4)^{-1} J_{(i_3)} J_{(i_2)} \left( {J_{(i_1)}}^{4+A} \right)\Big\vert_{s_{v'} = 0} \right] 
		\\
		&
		+ \left( v(x) - \tilde{v} \right) \mathcal{T}_p \left[ \frac{1}{(p^4)^2} \frac{ d J_{(i_3)} J_{(i_2)} \left( {J_{(i_1)}}^k {\Xi_k}^{4+A} \right)}{ds} \bigg\vert_{s_{v'} = 0} \right]
		\Bigg\vert
		\\
		&
		\qquad \qquad
		\leq
		\frac{C_1 }{r(x)^p r(s_{\tilde{v}}) } \left( 1 + H_{\mathcal{T}_p,2}(\tilde{v}) \right),
	\end{align*}
	for $A = 1,2$, and
	\begin{align*}
		&
		\Bigg\vert \mathcal{T}_p \left[ (p^4)^{-1} J_{(i_3)} J_{(i_2)} \left( {J_{(i_1)}}^k {\Xi_k}^{7} \right) (s_{\tilde{v}})\right]
		- 
		\mathcal{T}_p \left[ (p^4)^{-1} J_{(i_3)} J_{(i_2)} \left( {J_{(i_1)}}^{7} \right)\Big\vert_{s_{v'} = 0} \right] 
		\\
		&
		+ \left( v(x) - \tilde{v} \right) \mathcal{T}_p \left[ \frac{1}{(p^4)^2} \frac{ d J_{(i_3)} J_{(i_2)} \left( {J_{(i_1)}}^k {\Xi_k}^{7} \right)}{ds} \bigg\vert_{s_{v'} = 0} \right]
		\Bigg\vert
		\\
		&
		\qquad \qquad
		\leq
		\frac{C_1}{r(x)^p} \left( 1 + H_{\mathcal{T}_p,2}(\tilde{v}) \right).
	\end{align*}
	Here $C_1$ is a large constant which will be chosen later.
	
	Note that, for each $p\in P_x$,
	\[
		\left\vert (v(x) -  \tilde{v}) + s_{\tilde{v}} p^4(0) \right\vert \leq \frac{C}{r(s_{\tilde{v}})} \leq \frac{C}{\tilde{v}}.
	\]
	The proof of this fact is identical to that of Lemma \ref{lem:sstar}, using the fact that $X(\tilde{v}) = p^4(s_{\tilde{v}})$.  Using this fact along with Proposition \ref{prop:JJic} and Lemma \ref{lem:sstar}, the bootstrap assumptions immediately give,
	\[
		\left\vert \mathcal{T}_p \left[ J_{(4)} J_{(4)} \left( {J_{(A)}}^B \right) (s_{v'}) - r(s_{v'}) {\delta_A}^B \right] \right\vert
		\leq
		\frac{C(1 + C_1)}{r(x)^p} \left( 1 + H_{\mathcal{T}_p,2}(v') \right),
	\]
	for $A,B = 1,2$,
	\[
		\left\vert \mathcal{T}_p \left[ J_{(4)} J_{(4)} \left( {J_{(4)}}^4 \right) (s_{v'}) - r(s_{v'}) \right] \right\vert
		\leq
		\frac{C(1 + C_1)}{r(x)^p} \left( 1 + H_{\mathcal{T}_p,2}(v') \right),
	\]
	and
	\[
		\left\vert \mathcal{T}_p \left[ J_{(i_3)} J_{(i_2)} \left( {J_{(i_1)}}^j \right) (s_{v'}) \right] \right\vert
		\leq
		\frac{C(1 + C_1)}{r(x)^p} \left( 1 + H_{\mathcal{T}_p,2}(v') \right),
	\]
	for $i_1,i_2,i_3 = 1,\ldots,6$, $j = 1,\ldots,4$ otherwise,
	\[
		\left\vert \mathcal{T}_p \left[ (p^4)^{-1} J_{(4)} J_{(4)} \left( {J_{(A)}}^{4+B} \right) (s_{v'}) - {\delta_A}^B \right] \right\vert
		\leq
		\frac{C(1 + C_1)}{r(x)^p v'} \left( 1 + H_{\mathcal{T}_p,2}(v') \right),
	\]
	for $A,B = 1,2$,
	\[
		\left\vert \mathcal{T}_p \left[ (p^4)^{-1} J_{(i_3)} J_{(i_2)} \left( {J_{(i_1)}}^{4+A} \right) (s_{v'}) \right] \right\vert
		\leq
		\frac{C(1 + C_1)}{r(x)^p v'} \left( 1 + H_{\mathcal{T}_p,2}(v') \right),
	\]
	for $i_1,i_2,i_3 = 1,\ldots,6$, $A = 1,2$ otherwise, and,
	\[
		\left\vert \mathcal{T}_p \left[ (p^4)^{-1} J_{(i_3)} J_{(i_2)} \left( {J_{(i_1)}}^{7} \right) (s_{v'}) \right] \right\vert
		\leq
		\frac{C(1 + C_1)}{r(x)^p} \left( 1 + H_{\mathcal{T}_p,2}(v') \right).
	\]
	
	It remains to recover the bootstrap assumptions with better constants.  This again uses the twice commuted Jacobi equation in components, which takes the form,
	\[
		\frac{d^2 J_{(i_3)} J_{(i_2)} \left( {J_{(i_1)}}^k {\Xi_k}^j \right)}{ds^2} = J_{(i_3)} J_{(i_2)} \left( (\hat{R}(X, {J_{(i_1)}})X )^k {\Xi_k}^j \right),
	\]
	By Propositions \ref{prop:Jbounds}, \ref{prop:JJbounds}, \ref{prop:JmatrixA}, \ref{prop:JJmatrixA}, \ref{prop:matrixA}, \ref{prop:schematicpreserved}, the expression \eqref{eq:curvatureuseful} for $\hat{R}$, the schematic expression \eqref{eq:schematiccurvature}\footnote{recall that the summation can begin at 3 except for terms involving $\beta$.}, the fact that\footnote{recall $\lambda$ runs over $1,2,4$ and $\tilde{\lambda}(1) = 5, \tilde{\lambda}(2) = 6, \tilde{\lambda}(4) = 7$.}
	\[
		\hat{\nabla}_X \hor_{(\gamma,\dot{\gamma})} \left( R(\dot{\gamma},J^v_{(i_1)}) \dot{\gamma} \right)
		=
		X \left( \left( R(\dot{\gamma},J^v_{(i_1)}) \dot{\gamma} \right)^{\mu} \right) E_{\mu}
		+
		\left( R(\dot{\gamma},J^v_{(i_1)}) \dot{\gamma} \right)^{\mu} \hat{\nabla}_X E_{\mu},
	\]
	\[
		\hat{\nabla}_X {}^T \ver_{(\gamma,\dot{\gamma})} \left( R(\dot{\gamma},J^h_{(i_1)}) \dot{\gamma} \right)
		=
		X \left( \left( R(\dot{\gamma},J^h_{(i_1)}) \dot{\gamma} \right)^{\lambda} \right) E_{\tilde{\lambda}(\lambda)}
		+
		\left( R(\dot{\gamma},J^h_{(i_1)}) \dot{\gamma} \right)^{\lambda} \hat{\nabla}_X E_{\tilde{\lambda}(\lambda)},
	\]
	Proposition \ref{prop:framederivatives}, and the fact that $[X, J_{(i_3)}] = [X, J_{(i_2)}] = 0$, the above bounds on,
	\[
		\mathcal{T}_p \left[ J_{(i_3)} J_{(i_2)} \left( {J_{(i_1)}}^j \right) \right],
	\]
	imply, for $j = 1,\ldots,4$,
	\begin{align} \label{eq:JJJboundsmain}
	\begin{split}
		&
		\mathcal{T}_p \left[ \frac{1}{(p^4)^2} \frac{d^2 J_{(i_3)} J_{(i_2)} \left( {J_{(i_1)}}^k {\Xi_k}^j \right)}{ds^2} (s_{\tilde{v}}) \right]
		\leq
		C \mathcal{T}_p \left[ \frac{1}{(p^4)^2} X \left( J_{(i_3)} J_{(i_2)} \left( R(\dot{\gamma},J^v_{(i_1)}) \dot{\gamma} \right)^k {\Xi_k}^j \right) \right]
		\\
		&
		\qquad \qquad
		+ \frac{C(1+C_1) }{r(x)^p \tilde{v}^{\frac{5}{2}}}  \left( 1 + H_{\mathcal{T}_p,2}(\tilde{v}) \right)
		+ C \bigg( \mathcal{T}_p \left[ r(s_{\tilde{v}}) \vert \mathfrak{D}^2 \beta \vert (s_{\tilde{v}}) \right]
		\\
		&
		\qquad \qquad
		+ \sum_{\psi_q\neq \beta} \mathcal{T}_p \left[ r(s_{\tilde{v}})^{q-3} \vert \mathfrak{D}^2 \psi_q \vert (s_{\tilde{v}}) \right]
		+ \sum_{\mathcal{T}_q} \mathcal{T}_p \left[ r(s_{\tilde{v}})^{q-3} \vert \mathfrak{D}^2 \mathcal{T}_q \vert (s_{\tilde{v}}) \right]
		\\
		&
		\qquad \qquad
		+ \sum_{\Gamma_q} \mathcal{T}_p \left[ r(s_{\tilde{v}})^{q-\frac{5}{2}} \vert \mathfrak{D}^2 \Gamma_q \vert (s_{\tilde{v}}) \right]
		+ \mathcal{T}_p \left[ r(s_{\tilde{v}})^{-\frac{5}{2}} \vert \mathfrak{D}^2 r \nablaslash b \vert (s_{\tilde{v}}) \right]
		\\
		&
		\qquad \qquad
		+ \mathcal{T}_p \left[ r(s_{\tilde{v}})^{-\frac{3}{2}} \vert \mathfrak{D}^2 \Gammaslash \vert (s_{\tilde{v}}) \right]
		\bigg),
	\end{split}
	\end{align}
	where the fact that,
	\[
		\left\vert \mathcal{T}_p \left[ p^4 \mathbbm{1}_{\supp(f)} \right]  \right\vert \leq \frac{C}{r(x)^p},
	\]
	has also been used.  Now,
	\[
		\frac{d}{dv'} \mathcal{T}_p \left[ J_{(i_3)} J_{(i_2)} \left( \frac{d{J_{(i_1)}}^k {\Xi_k}^j}{ds} \right) (s_{\tilde{v}}) \right] 
		=
		\mathcal{T}_p \left[ \frac{1}{p^4} J_{(i_3)} J_{(i_2)} \left( \frac{d^2{J_{(i_1)}}^k {\Xi_k}^j}{ds^2} \right) (s_{\tilde{v}}) \right], 
	\]
	so $\mathcal{T}_p \left[ J_{(i_3)} J_{(i_2)} \left( \frac{d{J_{(i_1)}}^k {\Xi_k}^j}{ds} \right) (s_{\tilde{v}}) \right]$ is estimated by integrating \eqref{eq:JJJboundsmain} from $\tilde{v}$ to $v(x)$.  Consider the first term on the right hand side of \eqref{eq:JJJboundsmain},
	\begin{align*}
		&
		\mathcal{T}_p \left[ \frac{1}{p^4} X \left( J_{(i_3)} J_{(i_2)} \left( R(\dot{\gamma},J^v_{(i_1)}) \dot{\gamma} \right)^k {\Xi_k}^j \right) (s_{v''}) \right]
		\\
		&
		\qquad \qquad
		=
		\mathcal{T}_p \left[ \frac{1}{p^4} \frac{d J_{(i_3)} J_{(i_2)} \left( R(\dot{\gamma},J^v_{(i_1)}) \dot{\gamma} \right)^k {\Xi_k}^j}{ds} (s_{v''}) \right]
		\\
		&
		\qquad \qquad
		= \frac{d}{dv'} \mathcal{T}_p \left[ J_{(i_3)} J_{(i_2)} \left( R(\dot{\gamma},J^v_{(i_1)}) \dot{\gamma} \right)^k {\Xi_k}^j (s_{v''}) \right],
	\end{align*}
	so,
	\begin{align*}
		&
		\int_{\tilde{v}}^{v(x)} \mathcal{T}_p \left[ \frac{1}{(p^4)^2} X \left( J_{(i_3)} J_{(i_2)} \left( R(\dot{\gamma},J^v_{(i_1)}) \dot{\gamma} \right)^k {\Xi_k}^j \right) (s_{v''}) \right] dv''
		\\
		&
		\qquad
		\leq
		\left\vert \mathcal{T}_p \left[ (p^4)^{-1} J_{(i_3)} J_{(i_2)} \left( R(\dot{\gamma},J^v_{(i_1)}) \dot{\gamma} \right)^k {\Xi_k}^j (0) \right] \right\vert
		\\
		&
		\qquad \quad
		+
		\left\vert \mathcal{T}_p \left[ (p^4)^{-1} J_{(i_3)} J_{(i_2)} \left( R(\dot{\gamma},J^v_{(i_1)}) \dot{\gamma} \right)^k {\Xi_k}^j (s_{\tilde{v}}) \right] \right\vert
		\\
		&
		\qquad
		\leq
		\frac{C(1+C_1)}{r(x)^p \tilde{v}^{\frac{5}{2}}}  \left( 1 + H_{\mathcal{T}_p,2}(\tilde{v}) \right)
		+
		C \bigg( 
		\mathcal{T}_p \left[ r(s_{\tilde{v}}) \vert \mathfrak{D}^2 \beta \vert (s_{\tilde{v}}) \right]
		\\
		&
		\qquad \quad
		+ \sum_{\psi_q\neq \beta} \mathcal{T}_p \left[ r(s_{\tilde{v}})^{q-3} \vert \mathfrak{D}^2 \psi_q \vert (s_{\tilde{v}}) \right]
		+ \sum_{\mathcal{T}_q} \mathcal{T}_p \left[ r(s_{\tilde{v}})^{q-3} \vert \mathfrak{D}^2 \mathcal{T}_q \vert (s_{\tilde{v}}) \right]
		\\
		&
		\qquad \quad
		+ \sum_{\Gamma_q} \mathcal{T}_p \left[ r(s_{\tilde{v}})^{q-\frac{5}{2}} \vert \mathfrak{D}^2 \Gamma_q \vert (s_{\tilde{v}}) \right]
		+ \mathcal{T}_p \left[ r(s_{\tilde{v}})^{-\frac{5}{2}} \vert \mathfrak{D}^2 r \nablaslash b \vert (s_{\tilde{v}}) \right]
		\\
		&
		\qquad \quad
		+ \mathcal{T}_p \left[ r(s_{\tilde{v}})^{-\frac{3}{2}} \vert \mathfrak{D}^2 \Gammaslash \vert (s_{\tilde{v}}) \right]
		\bigg),
	\end{align*}
	where the terms arising from $\mathcal{T}_p \left[ (p^4)^{-1} J_{(i_3)} J_{(i_2)} \left( R(\dot{\gamma},J^v_{(i_1)}) \dot{\gamma} \right)^k {\Xi_k}^j (0) \right]$ are contained in 
	\[
		\frac{C(1+C_1)}{r(x)^p \tilde{v}^{\frac{5}{2}}}  \left( 1 + H_{\mathcal{T}_p,2}(v') \right).
	\]
	For the second term on the right hand side of \eqref{eq:JJJboundsmain},
	\begin{align*}
		&
		\int_{\tilde{v}}^{v(x)} \frac{C(1+C_1) }{r(x)^p v''^{\frac{5}{2}}}  \left( 1 + H_{\mathcal{T}_p,2}(v'') \right) dv''
		\\
		&
		\qquad \qquad
		\leq
		\frac{C(1+C_1)}{r(x)^p}  \left( 1 + H_{\mathcal{T}_p,2}(\tilde{v}) \right) \int_{\tilde{v}}^{v(x)} \frac{1}{v''^{\frac{5}{2}}} dv''
		\\
		&
		\qquad \qquad
		\leq
		\frac{C(1+C_1)}{r(x)^p \tilde{v}^{\frac{3}{2}}}  \left( 1 + H_{\mathcal{T}_p,2}(\tilde{v}) \right).
	\end{align*}
	For the final terms,
	\begin{align*}
		&
		\int_{\tilde{v}}^{v(x)} \mathcal{T}_p \left[ r(s_{v''}) \vert \mathfrak{D}^2 \beta(s_{v''}) \vert d v'' \right]
		\\
		&
		\qquad \qquad
		\leq
		\mathcal{T}_p \left[ \left( \int_{\tilde{v}}^{v(x)} \frac{1}{r(s_{v''})^4} d v'' \right)^{\frac{1}{2}} \left( \int_{\tilde{v}}^{v(x)} r(s_{v''})^6 \vert \mathfrak{D}^2 \beta (s_{v''}) \vert^2 d v'' \right)^{\frac{1}{2}} \right]
		\\
		&
		\qquad \qquad
		\leq
		\frac{C}{\tilde{v}^{\frac{3}{2}}} \mathcal{T}_p \left[ \left( \int_{\tilde{v}}^{v(x)} r(s_{v''})^6 \vert \mathfrak{D}^2 \beta (s_{v''}) \vert^2 d v'' \right)^{\frac{1}{2}} \right],
	\end{align*}
	and similarly,
	\begin{align*}
		&
		\int_{\tilde{v}}^{v(x)} \sum_{\psi_q \neq \beta} \mathcal{T}_p \left[ r(s_{v''})^{q-3} \vert \mathfrak{D}^2 \psi_q(s_{v''}) \vert \right] d v''
		\\
		&
		\qquad \qquad
		\leq
		\frac{C}{\tilde{v}^{\frac{3}{2}}} \mathcal{T}_p \left[ \left( \int_{\tilde{v}}^{v(x)} \sum_{\psi_q \neq \beta}r(s_{v''})^{2q-2} \vert \mathfrak{D}^2 \psi_q (s_{v''}) \vert^2 d v'' \right)^{\frac{1}{2}} \right],
	\end{align*}
	\begin{align*}
		&
		\int_{\tilde{v}}^{v(x)} \sum_{\mathcal{T}_q} \mathcal{T}_p \left[ r(s_{v''})^{q-3} \vert \mathfrak{D}^2 \mathcal{T}_q(s_{v''}) \right] \vert d v''
		\\
		&
		\qquad \qquad
		\leq
		\frac{C}{\tilde{v}^{\frac{3}{2}}} \mathcal{T}_p \left[ \left( \int_{\tilde{v}}^{v(x)} \sum_{\mathcal{T}_q}r(s_{v''})^{2q-2} \vert \mathfrak{D}^2 \mathcal{T}_q (s_{v''}) \vert^2 d v'' \right)^{\frac{1}{2}} \right],
	\end{align*}
	\begin{align*}
		&
		\int_{\tilde{v}}^{v(x)} \sum_{\Gamma_q} \mathcal{T}_p \left[ r(s_{v''})^{q-\frac{5}{2}} \vert \mathfrak{D}^2 \Gamma_q(s_{v''}) \vert \right] d v''
		\\
		&
		\qquad \qquad
		\leq
		\frac{C}{\tilde{v}^{\frac{5}{4}}} \mathcal{T}_p \left[ \left( \int_{\tilde{v}}^{v(x)} \sum_{\Gamma_q}r(s_{v''})^{2q-\frac{3}{2}} \vert \mathfrak{D}^2 \Gamma_q (s_{v''}) \vert^2 d v'' \right)^{\frac{1}{2}} \right],
	\end{align*}
	\begin{align*}
		&
		\int_{\tilde{v}}^{v(x)} \mathcal{T}_p \left[ r(s_{v''})^{-\frac{5}{2}} \vert \mathfrak{D}^2 r \nablaslash b (s_{v''}) \vert \right] d v''
		\\
		&
		\qquad \qquad
		\leq
		\frac{C}{\tilde{v}^{2}} \mathcal{T}_p \left[ \left( \int_{\tilde{v}}^{v(x)} \vert \mathfrak{D}^2 r \nablaslash b (s_{v''}) \vert^2 d v'' \right)^{\frac{1}{2}} \right],
	\end{align*}
	and
	\begin{align*}
		&
		\int_{\tilde{v}}^{v(x)} \mathcal{T}_p \left[ r(s_{v''})^{-\frac{3}{2}} \vert \mathfrak{D}^2 \Gammaslash (s_{v''}) \vert \right] d v''
		\\
		&
		\qquad \qquad
		\leq
		\frac{C}{\tilde{v}^{\frac{5}{4}}} \mathcal{T}_p \left[ \left( \int_{\tilde{v}}^{v(x)} r(s_{v''})^{\frac{1}{2}} \vert \mathfrak{D}^2 \Gammaslash (s_{v''}) \vert^2 d v'' \right)^{\frac{1}{2}} \right].
	\end{align*}
	Hence,
	\begin{align} \label{eq:JJJboundsmain2}
		\begin{split}
			&
			\left\vert \mathcal{T}_p \left[ (p^4)^{-1} J_{(i_3)} J_{(i_2)} \left( \frac{d{J_{(i_1)}}^k {\Xi_k}^j}{ds} \right) (s_{\tilde{v}}) \right] 
			- 
			\mathcal{T}_p \left[ (p^4)^{-1} J_{(i_3)} J_{(i_2)} \left( \frac{d{J_{(i_1)}}^k {\Xi_k}^j}{ds} \right) \bigg\vert_{s_{v'} = 0} \right] \right\vert
			\\
			&
			\qquad
			\leq
			\frac{C(1+C_1)}{r(x)^p \tilde{v}^{\frac{5}{2}}}  \left( 1 + H_{\mathcal{T}_p,2}(\tilde{v}) \right)
			+
			C \bigg( 
			\mathcal{T}_p \left[ r(s_{\tilde{v}}) \vert \mathfrak{D}^2 \beta \vert (s_{\tilde{v}}) \right]
			+ \sum_{\psi_q\neq \beta} \mathcal{T}_p \left[ r(s_{\tilde{v}})^{q-3} \vert \mathfrak{D}^2 \psi_q \vert (s_{\tilde{v}}) \right]
			\\
			&
			\qquad \quad
			+ \sum_{\mathcal{T}_q} \mathcal{T}_p \left[ r(s_{\tilde{v}})^{q-3} \vert \mathfrak{D}^2 \mathcal{T}_q \vert (s_{\tilde{v}}) \right]
			+ \sum_{\Gamma_q} r(s_{\tilde{v}})^{q-\frac{5}{2}} \mathcal{T}_p \left[ \vert \mathfrak{D}^2 \Gamma_q \vert (s_{\tilde{v}}) \right]
			+ \mathcal{T}_p \left[ r(s_{\tilde{v}})^{-\frac{5}{2}} \vert \mathfrak{D}^2 r \nablaslash b \vert (s_{\tilde{v}}) \right]
			\\
			&
			\qquad \quad
			+ \mathcal{T}_p \left[ r(s_{\tilde{v}})^{-\frac{3}{2}} \vert \mathfrak{D}^2 \Gammaslash \vert (s_{\tilde{v}}) \right]
		\bigg)
		+ \frac{C}{\tilde{v}^{\frac{5}{4}}}
		\Bigg(
		\mathcal{T}_p \left[ \left( \int_{\tilde{v}}^{v(x)} r(s_{v''})^6 \vert \mathfrak{D}^2 \beta (s_{v''}) \vert^2 d v'' \right)^{\frac{1}{2}} \right]
		\\
		&
		\qquad \quad
		+ \mathcal{T}_p \left[ \left( \int_{\tilde{v}}^{v(x)} \sum_{\psi_q \neq \beta}r(s_{v''})^{2q-2} \vert \mathfrak{D}^2 \psi_q (s_{v''}) \vert^2 d v'' \right)^{\frac{1}{2}} \right]
		\\
		&
		\qquad \quad
		+
		\mathcal{T}_p \left[ \left( \int_{\tilde{v}}^{v(x)} \sum_{\mathcal{T}_q}r(s_{v''})^{2q-2} \vert \mathfrak{D}^2 \mathcal{T}_q (s_{v''}) \vert^2 d v'' \right)^{\frac{1}{2}} \right]
		+
		\mathcal{T}_p \left[ \left( \int_{\tilde{v}}^{v(x)} \vert \mathfrak{D}^2 r \nablaslash b (s_{v''}) \vert^2 d v'' \right)^{\frac{1}{2}} \right]
		\\
		&
		\qquad \quad
		+
		\mathcal{T}_p \left[ \left( \int_{\tilde{v}}^{v(x)} \sum_{\Gamma_q}r(s_{v''})^{2q-\frac{3}{2}} \vert \mathfrak{D}^2 \Gamma_q (s_{v''}) \vert^2 d v'' \right)^{\frac{1}{2}} \right]
		+
		\mathcal{T}_p \left[ \left( \int_{\tilde{v}}^{v(x)} r(s_{v''})^{\frac{1}{2}} \vert \mathfrak{D}^2 \Gammaslash (s_{v''}) \vert^2 d v'' \right)^{\frac{1}{2}} \right]
		\Bigg).
		\end{split}
	\end{align}
	Now,
	\[
		\frac{d}{dv'} \mathcal{T}_p \left[ J_{(i_3)} J_{(i_2)} \left( {J_{(i_1)}}^k {\Xi_k}^j \right) (s_{\tilde{v}}) \right] 
		=
		\mathcal{T}_p \left[ \frac{1}{p^4} J_{(i_3)} J_{(i_2)} \frac{d{J_{(i_1)}}^k {\Xi_k}^j}{ds} (s_{\tilde{v}}) \right], 
	\]
	and hence,
	\begin{align*}
		&
		\Bigg\vert \mathcal{T}_p \left[ J_{(i_3)} J_{(i_2)} \left( {J_{(i_1)}}^k {\Xi_k}^j \right) (s_{\tilde{v}})\right]
		- 
		\mathcal{T}_p \left[ J_{(i_3)} J_{(i_2)} \left( {J_{(i_1)}}^j \right)\Big\vert_{s_{v'} = 0} \right] 
		\\
		&
		+ \left( v(x) - \tilde{v} \right) \mathcal{T}_p \left[ \frac{1}{p^4} \frac{ d J_{(i_3)} J_{(i_2)} \left( {J_{(i_1)}}^k {\Xi_k}^j \right)}{ds} \bigg\vert_{s_{v'} = 0} \right]
		\Bigg\vert
		\\
		&
		\qquad \qquad
		\leq
		\int_{v'}^{v(x)} \Bigg\vert
		\mathcal{T}_p \left[ \frac{1}{p^4} J_{(i_3)} J_{(i_2)} \left( \frac{d{J_{(i_1)}}^k {\Xi_k}^j}{ds} \right) (s_{\tilde{v}}) \right]
		\\
		&
		\qquad \qquad \quad
		- \mathcal{T}_p \left[ \frac{1}{p^4} J_{(i_3)} J_{(i_2)} \left( \frac{d{J_{(i_1)}}^k {\Xi_k}^j}{ds} \right) \bigg\vert_{s_{v'} = 0} \right]
		\Bigg\vert d \tilde{v}.
	\end{align*}
	Integrating each term on the right hand side of \eqref{eq:JJJboundsmain2} then gives,
	\begin{align*}
		&
		\Bigg\vert \mathcal{T}_p \left[ J_{(i_3)} J_{(i_2)} \left( {J_{(i_1)}}^k {\Xi_k}^j \right) (s_{\tilde{v}})\right]
		- 
		\mathcal{T}_p \left[ J_{(i_3)} J_{(i_2)} \left( {J_{(i_1)}}^j \right)\Big\vert_{s_{v'} = 0} \right] 
		\\
		&
		+ \left( v(x) - \tilde{v} \right) \mathcal{T}_p \left[ \frac{1}{p^4} \frac{ d J_{(i_3)} J_{(i_2)} \left( {J_{(i_1)}}^k {\Xi_k}^j \right)}{ds} \bigg\vert_{s_{v'} = 0} \right]
		\Bigg\vert
		\\
		&
		\qquad \qquad
		\leq
		\frac{C(1+C_1)}{r(x)^p v'^{\frac{1}{4}}} \left( 1 + H_{\mathcal{T}_p,2}(v') \right).
	\end{align*}
	Taking $v_0$ large so that $\frac{C(1+C_1)}{v_0} \leq \frac{C_1}{2}$, this then recovers the first bootstrap assumption with a better constant.  The other bootstrap assumptions can be recovered similarly.  Hence the set of $v'\in[v_0,v(x)]$ where they hold is non-empty, open and closed and hence equal to $[v_0,v(x)]$.
\end{proof}

Finally, at the very top order, we have the following.
\begin{proposition} \label{prop:JJJJbounds}
	If $v_0$ is sufficiently large, $v'\in[v_0,v(x)]$, then, for each $\mathcal{T}_p$,
	\[
		\left\vert \mathcal{T}_p \left[ J_{(4)} J_{(4)} J_{(4)} \left( {J_{(A)}}^B \right) (s_{v'}) - r(s_{v'}) {\delta_A}^B \right] \right\vert
		\leq
		C \left( \frac{1}{r(x)^p} + H_{\mathcal{T}_p,3}(v') \right),
	\]
	for $A,B = 1,2$,
	\[
		\left\vert \mathcal{T}_p \left[ J_{(4)} J_{(4)} J_{(4)} \left( {J_{(4)}}^4 \right) (s_{v'}) - r(s_{v'}) \right] \right\vert
		\leq
		C \left( \frac{1}{r(x)^p} + H_{\mathcal{T}_p,3}(v') \right),
	\]
	and
	\[
		\left\vert \mathcal{T}_p \left[ J_{(i_4)} J_{(i_3)} J_{(i_2)} \left( {J_{(i_1)}}^j \right) (s_{v'}) \right] \right\vert
		\leq
		C \left( \frac{1}{r(x)^p} + H_{\mathcal{T}_p,3}(v') \right),
	\]
	for $i_1,i_2,i_3,i_4 = 1,\ldots,6$, $j = 1,\ldots,4$ otherwise,
	\[
		\left\vert \mathcal{T}_p \left[ (p^4)^{-1} J_{(4)} J_{(4)} J_{(4)} \left( {J_{(A)}}^{4+B} \right) (s_{v'}) - {\delta_A}^B \right] \right\vert
		\leq
		\frac{C}{v'} \left( \frac{1}{r(x)^p} + H_{\mathcal{T}_p,3}(v') \right),
	\]
	for $A,B = 1,2$,
	\[
		\left\vert \mathcal{T}_p \left[ (p^4)^{-1} J_{(i_4)}J_{(i_3)} J_{(i_2)} \left( {J_{(i_1)}}^{4+A} \right) (s_{v'}) \right] \right\vert
		\leq
		\frac{C}{v'} \left( \frac{1}{r(x)^p} + H_{\mathcal{T}_p,3}(v') \right),
	\]
	for $i_1,i_2,i_3,i_4 = 1,\ldots,6$, $A = 1,2$ otherwise, and,
	\[
		\left\vert \mathcal{T}_p \left[ (p^4)^{-1} J_{(i_4)} J_{(i_3)} J_{(i_2)} \left( {J_{(i_1)}}^{7} \right) (s_{v'}) \right] \right\vert
		\leq
		C \left( \frac{1}{r(x)^p} + H_{\mathcal{T}_p,3}(v') \right),
	\]
	where,
	\begin{align*}
		H_{\mathcal{T}_p,3}(v')
		=
		&
		\sum_{k = 2,3} \Bigg(
		\mathcal{T}_p \left[ \left( \int_{v'}^{v(x)} r(s_{\tilde{v}})^6 \vert \mathfrak{D}^k \beta \vert^2 (s_{\tilde{v}}) d\tilde{v} \right)^{\frac{1}{2}} \right]
		\\
		&
		+
		\mathcal{T}_p \left[ \left( \int_{v'}^{v(x)} \sum_{\psi_q \neq \beta} r(s_{\tilde{v}})^{2q-2} \vert \mathfrak{D}^k \psi_q \vert^2 (s_{\tilde{v}}) d\tilde{v} \right)^{\frac{1}{2}} \right]
		\\
		&
		+
		\mathcal{T}_p \left[ \left( \int_{v'}^{v(x)} \sum_{\mathcal{T}_q} r(s_{\tilde{v}})^{2q-2} \vert \mathfrak{D}^k \mathcal{T}_q \vert^2 (s_{\tilde{v}}) d\tilde{v} \right)^{\frac{1}{2}} \right]
		\\
		&
		+
		\mathcal{T}_p \left[ \left( \int_{v'}^{v(x)} \sum_{\Gamma_q} r(s_{\tilde{v}})^{2q-\frac{3}{2}} \vert \mathfrak{D}^k \Gamma_q \vert^2 (s_{\tilde{v}}) d\tilde{v} \right)^{\frac{1}{2}} \right]
		\\
		&
		+
		\mathcal{T}_p \left[ \left( \int_{v'}^{v(x)} r(s_{\tilde{v}})^{\frac{1}{2}} \vert \mathfrak{D}^k \Gammaslash \vert^2 (s_{\tilde{v}}) d\tilde{v} \right)^{\frac{1}{2}} \right]
		+ \mathcal{T}_p \left[ \left( \int_{v'}^{v(x)} \vert \mathfrak{D}^k r \nablaslash b \vert^2 (s_{\tilde{v}}) d\tilde{v} \right)^{\frac{1}{2}} \right]
		\\
		&
		+
		\frac{1}{r(x)^p} \bigg(
		\sum_{\psi_q} r(x)^q \vert \mathfrak{D}^k \psi_q \vert (x)
		+
		\sum_{\mathcal{T}_q} r(x)^q \vert \mathfrak{D}^k \mathcal{T}_q \vert (x)
		\\
		&
		+
		\sum_{\Gamma_q} r(x)^q \vert \mathfrak{D}^k \Gamma_q \vert (x)
		+
		\vert \mathfrak{D}^k r \nablaslash b \vert (x)
		+
		r(x) \vert \mathfrak{D}^k \Gammaslash \vert (x)
		\bigg)
		\Bigg).
	\end{align*}
\end{proposition}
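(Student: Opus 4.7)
The plan is to follow the structure of the proof of Proposition \ref{prop:JJJbounds} essentially verbatim, but with one additional commutation by a Jacobi field. The key observation is that, because each $J_{(i)}$ commutes with $X$, differentiating the twice-commuted Jacobi equation by $J_{(i_4)}$ yields
\[
	\frac{d^2 J_{(i_4)} J_{(i_3)} J_{(i_2)} \left( {J_{(i_1)}}^k {\Xi_k}^j \right)}{ds^2} = J_{(i_4)} J_{(i_3)} J_{(i_2)} \left( (\hat{R}(X, J_{(i_1)})X)^k {\Xi_k}^j \right),
\]
and by the iterated application of Proposition \ref{prop:schematicpreserved} (in the form of Remark \ref{rmk:hoschematic}), the schematic form of the right-hand side is preserved, with at most three $\mathfrak{D}$-derivatives falling on any of $\Gamma, \psi, \mathcal{T}, \Gammaslash, r\nablaslash b$. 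Since pointwise bounds for the first two derivatives of these quantities are available from the Sobolev inequalities, the only genuinely new terms that appear at this order are those involving three $\mathfrak{D}$-derivatives, which must be absorbed into $H_{\mathcal{T}_p,3}$.

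The first step is to establish analogues at one higher order of Propositions \ref{prop:JJmatrixA} and \ref{prop:JJic}. For $\Xi$ and $\Xi^{-1}$ the proof is a bootstrap argument identical to Proposition \ref{prop:JJmatrixA}, now using the identity
\[
	\frac{d J_{(i_4)}J_{(i_3)} J_{(i_2)}({\Xi_k}^j)}{ds} = J_{(i_4)}J_{(i_3)} J_{(i_2)}\left( (\hat{\nabla}_X E_l)^j {\Xi_k}^l \right),
\]
expanding by Leibniz into products of lower-order derivatives (all controlled by the earlier propositions) and one top-order factor involving three $\mathfrak{D}$-derivatives of Ricci coefficients, curvature components, $\Gammaslash$ or $r\nablaslash b$. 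The Cauchy--Schwarz estimates used in the proof of Proposition \ref{prop:JJmatrixA} apply verbatim, with the decay weights $\frac{1}{\sqrt{v'}}$ replaced by the weights appearing in $H_{\mathcal{T}_p,3}$ (matching those used in Proposition \ref{prop:JJJbounds}). The three-times commuted initial conditions are bounded pointwise in $p$ by the same argument as Proposition \ref{prop:JJic}, picking up third-order derivatives of the geometric quantities at the point $x$; these are the reason for the final pointwise block in $H_{\mathcal{T}_p,3}$.

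The second step is the bootstrap argument proper, which mirrors that of Proposition \ref{prop:JJJbounds}. One assumes the desired bounds (with a large constant $C_1$) hold for all $\tilde v \in [v', v(x)]$, and then recovers them with constant $\frac{C_1}{2}$ by integrating the three-times commuted Jacobi equation twice in $s$ from $s_{v'} = 0$. The curvature term $\hat R(X, J_{(i_1)})X$ is expanded via \eqref{eq:curvatureuseful}; all terms except the two containing $\hat\nabla_X \hor(R(\dot\gamma, J^v)\dot\gamma)$ and $\hat\nabla_X \tver(R(\dot\gamma, J^h)\dot\gamma)$ decay strictly faster than the borderline rate. For these two borderline terms one writes $X$ acting on the components as a total derivative in $s$ and integrates by parts, exactly as in Proposition \ref{prop:JJJbounds}, trading a derivative for boundary contributions that are controlled by the pointwise block of $H_{\mathcal{T}_p,3}$. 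The remaining bulk terms, after integrating in $v''$ from $\tilde v$ to $v(x)$, are controlled via Cauchy--Schwarz:
\[
	\int_{\tilde v}^{v(x)} \mathcal{T}_p\!\left[ r(s_{v''})^{q-3}|\mathfrak{D}^3 \psi_q|(s_{v''}) \right] dv''
	\leq \frac{C}{\tilde v^{3/2}} \mathcal{T}_p\!\left[ \Bigl( \int_{\tilde v}^{v(x)} r(s_{v''})^{2q-2}|\mathfrak{D}^3 \psi_q|^2 (s_{v''}) \, dv'' \Bigr)^{1/2} \right],
\]
and analogous estimates for $\mathcal{T}_q$, $\Gamma_q$, $\Gammaslash$ and $r\nablaslash b$, with weights chosen exactly to match the definition of $H_{\mathcal{T}_p,3}$. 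The weaker decay allowed for $\beta$ (and for the $\mathfrak{D}^3 r\nablaslash b$ and $\mathfrak{D}^3 \Gammaslash$ terms, which are only controlled by $L^2$ norms on incoming null hypersurfaces, cf.\@ \eqref{eq:bba}, \eqref{eq:Gammaslashbanull}) is accommodated by the specific powers of $r(s_{\tilde v})$ appearing in each summand of $H_{\mathcal{T}_p,3}$.

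The main obstacle is purely bookkeeping: one must verify that the Leibniz expansion of $J_{(i_4)}J_{(i_3)}J_{(i_2)}$ applied to $(\hat R(X, J_{(i_1)})X)^k {\Xi_k}^j$ produces only terms of three schematic types, namely (i) products where all three Jacobi derivatives land on a single factor containing $\mathfrak{D}^3$ of a geometric quantity (handled by the $L^2$-in-$v''$ terms of $H_{\mathcal{T}_p,3}$); (ii) terms with two Jacobi derivatives on one factor and one on another, where by the previous propositions all lower-order factors are pointwise controlled and the net decay matches that of the original undifferentiated integrand; and (iii) borderline curvature terms controlled by integration by parts as above. The weaker pointwise bounds on $\mathfrak{D}^2 r\nablaslash b$ (only $r^{1/2}|\mathfrak{D}^2 r\nablaslash b|\leq C$ rather than $r|\cdot|\leq C$) must be tracked carefully through the Leibniz expansion to ensure the total decay remains globally integrable; the better schematic decay for terms involving $r\nablaslash b$ noted after Proposition \ref{prop:framederivatives} is essential here. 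Once these bookkeeping checks are complete, the bootstrap closes by taking $v_0$ large enough that $\frac{C(1+C_1)}{v_0^{1/4}} \leq \frac{C_1}{2}$, exactly as in Proposition \ref{prop:JJJbounds}.
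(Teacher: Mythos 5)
Your proposal is correct and follows exactly the route the paper takes: the paper's own proof is simply the statement that the argument is identical to that of Proposition \ref{prop:JJJbounds}, using the appropriate higher-order versions of Propositions \ref{prop:JJmatrixA} and \ref{prop:JJic}, which is precisely the two-step scheme (higher-order estimates for $\Xi$, $\Xi^{-1}$ and the initial conditions, followed by the same bootstrap with the weights of $H_{\mathcal{T}_p,3}$) that you describe. The additional bookkeeping you carry out is a faithful elaboration of what the paper leaves implicit.
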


\begin{proof}
	The proof is identical to that of Proposition \ref{prop:JJJbounds}, using appropriate versions of Propositions \ref{prop:JJmatrixA} and \ref{prop:JJic}.
\end{proof}

\subsection{Proof of Proposition \ref{prop:emmain2}} \label{subsec:proofemmain2}
The proof of Proposition \ref{prop:emmain2} follows from Propositions \ref{prop:Jbounds}, \ref{prop:JJbounds}, \ref{prop:JJJbounds}, \ref{prop:JJJJbounds}.

\begin{proof}[Proof of Proposition \ref{prop:emmain2}]
	Recall the frame $\tilde{E}_1,\ldots,\tilde{E}_7$ from Section \ref{section:ba} defined by,
	\[
		\tilde{E}_i = E_i \text{ for } i = 1,2,3,4, \qquad \tilde{E}_i = p^4 E_i \text{ for } i=5,6,7.
	\]
	Recall also from Section \ref{subsec:fjacobi} that,
	\[
		V_{(i_1)} f \vert_{(x,p)} = J_{(i_1)} f \vert_{\exp_{-s_*}(x,p)} = {J_{(i_1)}}^j E_j f \vert_{\exp_{-s_*}(x,p)},
	\]
	for $i_1 = 1,\ldots,6$.  By assumption,
	\[
		\sum_{j=1}^7 \sup_{P\vert_{\{v=v_0\}}} \vert \tilde{E}_j f \vert < \bc,
	\]
	and so Proposition \ref{prop:Jbounds}, which gives,
	\[
		\vert {J_{(i_1)}}^j  (-s_*) \vert \leq C \text{ for } j = 1,2,3,4, \qquad \vert {J_{(i_1)}}^j  (-s_*) \vert \leq Cp^4 \text{ for } j = 5,6,7,
	\]
	implies that,
	\[
		\big\vert V_{(i_1)} f \vert_{(x,p)} \big\vert \leq C\bc.
	\]
	Hence,
	\[
		\vert \mathcal{T}_p [ V_{(i_1)} f ] \vert 
		\leq 
		C \bc \vert \mathcal{T}_p [ \mathbbm{1}_{\supp(f\vert_{P_x})} ] \vert
		\leq 
		\frac{C \bc}{r^p}.
	\]
	Similarly,
	\begin{align*}
		V_{(i_2)} V_{(i_1)} f \vert_{(x,p)} 
		&
		= 
		J_{(i_2)} J_{(i_1)} f \vert_{\exp_{-s_*}(x,p)} 
		\\
		&
		= 
		J_{(i_2)} \left( {J_{(i_1)}}^{j_1} \right) E_{j_1} f \vert_{\exp_{-s_*}(x,p)} + {J_{(i_2)}}^{j_2} {J_{(i_1)}}^{j_1} E_{j_2} E_{j_1} f \vert_{\exp_{-s_*}(x,p)}.
	\end{align*}
	Again, by assumption,
	\[
		\sum_{j_1 j_2 =1}^7 \sup_{P\vert_{\{v=v_0\}}} \vert \tilde{E}_{j_2} \tilde{E}_{j_1} f \vert < \bc,
	\]
	hence Proposition \ref{prop:Jbounds} and Proposition \ref{prop:JJbounds}, which in particular gives,
	\begin{align*}
		\left\vert J_{(i_2)} \left( {J_{(i_1)}}^{j} \right) (-s_*) \right\vert 
		&
		\leq C \text{ for } j=1,2,3,4,
		\\
		\left\vert J_{(i_2)} \left( {J_{(i_1)}}^{j} \right) (-s_*) \right\vert 
		&
		\leq C p^4 \text{ for } j=5,6,7,
	\end{align*}
	imply that,
	\[
		\left\vert V_{(i_2)} V_{(i_1)} f \vert_{(x,p)} \right\vert \leq C \bc,
	\]
	and hence,
	\[
		\vert \mathcal{T}_p [ V_{(i_2)} V_{(i_1)} f ] \vert 
		\leq 
		C \bc \vert \mathcal{T}_p [ \mathbbm{1}_{\supp(f\vert_{P_x})} ] \vert
		\leq 
		\frac{C \bc}{r^p}.
	\]
	
	For the third order derivatives recall that,
	\begin{align} \label{eq:VVVf}
	\begin{split}
		V_{(i_3)} V_{(i_2)} V_{(i_1)} f =
		&
		J_{(i_3)} J_{(i_2)} \left( {J_{(i_1)}}^{j_1} \right) E_{j_1} f\vert_{\exp_{-s_*}(x,p)}
		\\
		&
		+
		\bigg[ J_{(i_2)} \left( {J_{(i_1)}}^{j_1} \right) {J_{(i_3)}}^{j_2} + J_{(i_3)} \left( {J_{(i_1)}}^{j_1} \right) {J_{(i_2)}}^{j_2}
		\\
		&
		\qquad + J_{(i_3)} \left( {J_{(i_2)}}^{j_2} \right) {J_{(i_1)}}^{j_1} \bigg] E_{j_2} E_{j_1} f \vert_{\exp_{-s_*}(x,p)}
		\\
		&
		+
		{J_{(i_3)}}^{j_3} {J_{(i_2)}}^{j_2} {J_{(i_1)}}^{j_1} E_{j_3} E_{j_2} E_{j_1} f \vert_{\exp_{-s_*}(x,p)}.
	\end{split}
	\end{align}
	The second terms can be estimated pointwise as before, as can the final terms using the assumption,
	\[
		\sum_{j_1,j_2,j_3 = 1}^7 \sup_{P\vert_{\{v=v_0\}}} \vert \tilde{E}_{j_3} \tilde{E}_{j_2} \tilde{E}_{j_1} f \vert < \bc.
	\]
	Consider the estimate for $\mathcal{T}_p \left[ V_{(i_3)} V_{(i_2)} V_{(i_1)} f \right]$ on the incoming, $v =$ constant, hypersurface.  The above pointwise bounds clearly give,
	\begin{align*}
		&
		\int_{u_0}^u \int_{S_{u',v}} r^{2p-2} \bigg\vert \mathcal{T}_p \bigg[
		\bigg( J_{(i_2)} \left( {J_{(i_1)}}^{j_1} \right) {J_{(i_3)}}^{j_2} + J_{(i_3)} \left( {J_{(i_1)}}^{j_1} \right) {J_{(i_2)}}^{j_2}
		\\
		&
		\qquad \qquad
		+ J_{(i_3)} \left( {J_{(i_2)}}^{j_2} \right) {J_{(i_1)}}^{j_1} \bigg) E_{j_2} E_{j_1} f \vert_{\exp_{-s_*}(x,p)}
		\\
		&
		\qquad \qquad
		+
		{J_{(i_3)}}^{j_3} {J_{(i_2)}}^{j_2} {J_{(i_1)}}^{j_1} E_{j_3} E_{j_2} E_{j_1} f \vert_{\exp_{-s_*}(x,p)}
		\bigg] \bigg\vert^2 d \mu_{S_{u',v}} du'
		\leq C \bc^2,
	\end{align*}
	so it remains to estimate the first term in \eqref{eq:VVVf}.  Proposition \ref{prop:JJJbounds} gives,
	\[
		\left\vert \mathcal{T}_p \bigg[ J_{(i_3)} J_{(i_2)} \left( {J_{(i_1)}}^{j} \right) \vert_{\exp_{-s_*}(x,p)} \bigg] \right\vert
		\leq
		C \left( \frac{1}{r^p} + H_{\mathcal{T}_p,2}(v_0) \right),
	\]
	for $j = 1,2,3,4$, and,
	\[
		\left\vert \mathcal{T}_p \bigg[ \frac{1}{p^4} J_{(i_3)} J_{(i_2)} \left( {J_{(i_1)}}^{j} \right) \vert_{\exp_{-s_*}(x,p)} \bigg] \right\vert
		\leq
		C \left( \frac{1}{r^p} + H_{\mathcal{T}_p,2}(v_0) \right),
	\]
	for $j = 5,6,7$, where $H_{\mathcal{T}_p,2}$ is defined in Proposition \ref{prop:JJJbounds}.  Consider first the final terms in $H_{\mathcal{T}_p,2}$.  Clearly,
	\begin{align*}
		\int_{u_0}^u \int_{S_{u',v}} r^{2p-2} \frac{1}{r^{2p}} \bigg(
		\sum_{\psi_q \neq \alpha} r^{2p} \vert \mathfrak{D}^2 \psi_q \vert^2
		+
		\sum_{\mathcal{T}_q} r^{2p} \vert \mathfrak{D}^2 \mathcal{T}_q \vert^2
		+
		\sum_{\Gamma_q} r^{2p} \vert \mathfrak{D}^2 \Gamma_q \vert^2
		&
		\\
		+
		\vert \mathfrak{D}^2 r\nablaslash b \vert^2
		+
		r^2 \vert \mathfrak{D}^2 \Gammaslash \vert^2
		\bigg) d\mu_{S_{u',v}} dv'
		&
		\leq
		C,
	\end{align*}
	by the bootstrap assumptions of Section \ref{section:ba}.  For $\psi_q = \alpha$, $2q-2 = 6$ and,
	\begin{align*}
		\partial_v \int_{u_0}^u \int_{S_{u',v}} r^6 \vert \mathfrak{D}^2 \alpha \vert^2 d \mu_{S_{u',v}} d u'
		= \int_{u_0}^u \int_{S_{u',v}} r^5 \vert \mathfrak{D}^2 \alpha \vert^2 + 2 r^6 \mathfrak{D}^2 \alpha \cdot \nablaslash_4 \mathfrak{D}^2 \alpha + \tr \chi r^6 \vert \mathfrak{D}^2 \alpha \vert^2 d \mu_{S_{u',v}} d u',
	\end{align*}
	and hence,
	\begin{align*}
		&
		\int_{u_0}^u \int_{S_{u',v}} r^6 \vert \mathfrak{D}^2 \alpha \vert^2 d \mu_{S_{u',v}} d u'
		-
		\int_{u_0}^u \int_{S_{u',v_0}} r^6 \vert \mathfrak{D}^2 \alpha \vert^2 d \mu_{S_{u',v_0}} d u'
		\\
		&
		\qquad \qquad
		\leq
		C \int_{v_0}^v \int_{u_0}^u \int_{S_{u',v'}} r^5 \vert \mathfrak{D}^2 \alpha \vert^2 + r^5 \vert \mathfrak{D}^2 \alpha \vert \vert \mathfrak{D}^3 \alpha \vert  d \mu_{S_{u',v'}} d u'
		\\
		&
		\qquad \qquad
		\leq
		C \int_{v_0}^v \int_{u_0}^u \int_{S_{u',v'}} r^5 \vert \mathfrak{D}^2 \alpha \vert^2 + r^5 \vert \mathfrak{D}^3 \alpha \vert^2  d \mu_{S_{u',v'}} d u'
		\\
		&
		\qquad \qquad
		\leq
		C \int_{u_0}^u F^1_{v_0,v}(u') du'
		\\
		&
		\qquad \qquad
		\leq
		C,
	\end{align*}
	by the bootstrap assumptions for the weighted $L^2$ integral of $\mathfrak{D}^2 \alpha, \mathfrak{D}^3 \alpha$ on the outgoing null hypersurfaces.  This, together with the assumption on the initial data gives,
	\[
		\int_{u_0}^u \int_{S_{u',v}} r^6 \vert \mathfrak{D}^2 \alpha \vert^2 d \mu_{S_{u',v}} d u' \leq C.
	\]
	Consider now the first term involving $\beta$ in $H_{\mathcal{T}_p,2}(v_0)$.  By the Cauchy--Schwarz inequality,
	\begin{align*}
		&
		\left\vert \mathcal{T}_p \left[ \mathbbm{1}_{\supp(f)} \left( \int_{v_0}^v r(s_{\tilde{v}})^6 \vert \mathfrak{D}^2 \beta (s_{\tilde{v}}) \vert^2 d \tilde{v} \right)^{\frac{1}{2}} \right] \right\vert^2
		\\
		&
		\qquad
		\leq
		\left\vert \mathcal{T}_p \left[ \mathbbm{1}_{\supp(f)} \right] \right\vert 
		\left\vert \mathcal{T}_p \left[ \int_{v_0}^v r(s_{\tilde{v}})^6 \vert \mathfrak{D}^2 \beta (s_{\tilde{v}}) \vert^2 d \tilde{v} \right] \right\vert
		\\
		&
		\qquad
		\leq
		\frac{C}{r^p}
		\left\vert \mathcal{T}_p \left[ \int_{v_0}^v r(s_{\tilde{v}})^6 \vert \mathfrak{D}^2 \beta (s_{\tilde{v}}) \vert^2 d \tilde{v} \right] \right\vert.
	\end{align*}
	Hence, using the fact that $c \leq \frac{\sqrt{\det \gslash}}{r^2} \leq C$ and that $c \leq \frac{v}{r} \leq C$,
	\begin{align*}
		&
		\int_{u_0}^u r^{2p-2} \int_{S_{u',v}} \left\vert \mathcal{T}_p \left[ \mathbbm{1}_{\supp(f)} \left( \int_{v_0}^v r(s_{\tilde{v}})^6 \vert \mathfrak{D}^2 \beta (s_{\tilde{v}}) \vert^2 d \tilde{v} \right)^{\frac{1}{2}} \right] \right\vert^2 d \mu_{S_{u',v}} du'
		\\
		&
		\quad
		\leq
		C \int_{u_0}^u r^{p-2} \int_{S_{u',v}} r^{4-p} \int_0^C \int_{\vert p^1 \vert, \vert p^2 \vert \leq \frac{C}{r^2}} \int_{v_0}^v r(s_{\tilde{v}})^6 \vert \mathfrak{D}^2 \beta (s_{\tilde{v}}) \vert^2 d \tilde{v} dp^1 dp^2 dp^4 d \mu_{S_{u',v}} du'
		\\
		&
		\quad
		\leq
		C \sum_{U_1,U_2} \int_{u_0}^u r^{4} \int_0^C \int_{\vert p^1 \vert, \vert p^2 \vert \leq \frac{C}{r^2}} \int_{v_0}^v \int_{\theta_1,\theta_2} r(s_{\tilde{v}})^6 \vert \mathfrak{D}^2 \beta (s_{\tilde{v}}) \vert^2 d \theta^1 d \theta^2 d \tilde{v} dp^1 dp^2 dp^4 du'
		\\
		&
		\quad
		\leq
		C \sum_{U_1,U_2} \int_0^C \int_{\vert p^1 \vert, \vert p^2 \vert \leq \frac{C}{v^2}} v^4 \int_{v_0}^v \int_{u_0}^u \int_{\theta_1,\theta_2} r(s_{\tilde{v}})^4 \vert \mathfrak{D}^2 \beta (s_{\tilde{v}}) \vert^2 \sqrt{\det \gslash (s_{\tilde{v}})} d \theta^1 d \theta^2 du' d \tilde{v} dp^1 dp^2 dp^4,
	\end{align*}
	where $U_1,U_2$ are the two spherical charts.  We now perform the change of coordinates,
	\[
		(u',\tilde{v},\theta^1,\theta^2,p^1,p^2,p^4) \mapsto (\hat{u},\tilde{v},\hat{\theta}^1,\hat{\theta}^2,p^1,p^2,p^4),
	\]
	where 
	\[
		\hat{u} := u(\exp_{s_{\tilde{v}}}(x,p)), \qquad \hat{\theta}^A := \theta^A(\exp_{s_{\tilde{v}}}(x,p)) \text{ for } A=1,2,
	\]
	with $(x,p) = (u',v,\theta^1,\theta^2,p^1,p^2,p^4)$.  The determinant of the Jacobian of this transformation is equal to the determinant of,
	\begin{equation*}
  		\begin{pmatrix}
    			\frac{\partial \hat{\theta}^1}{\partial \theta^1} & \frac{\partial \hat{\theta}^1}{\partial \theta^2} & \frac{\partial \hat{\theta}^1}{\partial u}
    			\\
    			\frac{\partial \hat{\theta}^2}{\partial \theta^1} & \frac{\partial \hat{\theta}^2}{\partial \theta^2} & \frac{\partial \hat{\theta}^2}{\partial u}
    			\\
    			\frac{\partial \hat{u}}{\partial \theta^1} & \frac{\partial \hat{u}}{\partial \theta^2} & \frac{\partial \hat{u}}{\partial u}
  		\end{pmatrix}.
	\end{equation*}
	Note that,
	\begin{align*}
		\partial_{\theta^A} = 
		&
		\ \hor_{(x,p)}(e_A) + \frac{p^4}{r} \partial_{\overline{p}^A} + \left[ \frac{p^4}{2} \left( \tr \chi - \frac{2}{r} \right) - \frac{p^3}{r} + \frac{p^3}{2} \left( \tr \chibar + \frac{2}{r} \right) \right] \partial_{\overline{p}^A}
		\\
		&
		+ \left[ p^4 {\hat{\chi}_A}^B + p^3 {\hat{\chibar}_A}^B + p^C \Gammaslash_{AC}^B \right] \partial_{\overline{p}^B}
		\\
		&
		+ \left[ p^B \hat{\chibar}_{AB} + \frac{p^B}{4} \left( \tr \chibar + \frac{2}{r} \right) \gslash_{AB} - \frac{p^B}{2r} \gslash_{AB} + p^4 \etabar_A \right] \partial_{\overline{p}^4}
		\\
		=
		&
		\ V_{(A)} + r^2 \left[ \frac{1}{2} \left( \tr \chi - \frac{2}{r} \right) - \frac{p^3}{rp^4} + \frac{p^3}{2p^4} \left( \tr \chibar + \frac{2}{r} \right) \right] V_{(4+A)}
		\\
		&
		+ r^2 \left[ {\hat{\chi}_A}^B + \frac{p^3}{p^4} {\hat{\chibar}_A}^B + \frac{p^C}{p^4} \Gammaslash_{AC}^B \right] V_{(4+B)}
		\\
		&
		+ \left[ \frac{p^B}{p^4} \hat{\chibar}_{AB} + \frac{p^B}{4p^4} \left( \tr \chibar + \frac{2}{r} \right) \gslash_{AB} - \frac{p^B}{2rp^4} \gslash_{AB} + \etabar_A \right] p^4 \partial_{\overline{p}^4},
	\end{align*}
	and
	\begin{align*}
		\partial_u 
		= 
		&
		\ \Omega^2 \left( \hor_{(x,p)} (e_3) + (p^B {\hat{\chibar}_B}^A + \tr \chibar p^A + 2p^4 \eta^A) \partial_{\overline{p}^A} \right)
		\\
		=
		&
		\ \Omega^2 \left( V_{(3)} + r^2 \left( \frac{p^B}{p^4} {\hat{\chibar}_B}^A + \frac{p^A}{p^4} \tr \chibar + 2 \eta^A \right) V_{(4+A)} \right).
	\end{align*}
	Hence,
	\begin{align*}
		\frac{\partial \hat{\theta}^B}{\partial \theta^A}
		=
		&
		\ d\hat{\theta}^B\vert_{(x,p)} \partial_{\theta^A}
		\\
		=
		&
		\ d \theta^B \vert_{\exp_{s_{\tilde{v}}}(x,p)} \cdot d \exp_{s_{\tilde{v}}} \vert_{(x,p)} \partial_{\theta_A}
		\\
		=
		&
		\frac{{J_{(A)}}^B(s_{\tilde{v}})}{r(s_{\tilde{v}})} + r(x)^2 \left[ \frac{1}{2} \left( \tr \chi - \frac{2}{r} \right) - \frac{p^3}{rp^4} + \frac{p^3}{2p^4} \left( \tr \chibar + \frac{2}{r} \right) \right] \frac{{J_{(4+A)}}^B(s_{\tilde{v}})}{r(s_{\tilde{v}})}
		\\
		&
		+ r(x)^2 \left[ {\hat{\chi}_A}^D + \frac{p^3}{p^4} {\hat{\chibar}_A}^D + \frac{p^C}{p^4} \Gammaslash_{AC}^D \right] \frac{{J_{(4+D)}}^B(s_{\tilde{v}})}{r(s_{\tilde{v}})}
		\\
		&
		+ \left[ \frac{p^C}{p^4} \hat{\chibar}_{AC} + \frac{p^C}{4p^4} \left( \tr \chibar + \frac{2}{r} \right) \gslash_{AC} - \frac{p^C}{2rp^4} \gslash_{AC} + \etabar_A \right] \frac{1}{r(s_{\tilde{v}})} \left( d \exp_{s_{\tilde{v}}} p^4 \partial_{\overline{p}^4}\right)^B,
	\end{align*}
	Proposition \ref{prop:Jbounds} and the bootstrap assumptions for the Ricci coefficients therefore imply that\footnote{The proof of Proposition \ref{prop:Jbounds} can easily be adapted to show that 
	\[
		\left\vert \left( d \exp_{s_{\tilde{v}}} p^4 \partial_{\overline{p}^4}\right)^j - r(x) {\delta_4}^j \right\vert \leq C,
	\]
	for $v_0 \leq \tilde{v} \leq v$.}
	\[
		\left\vert \frac{\partial \hat{\theta}^B}{\partial \theta^A}(s_{\tilde{v}}) - {\delta_A}^B \right\vert \leq \frac{C}{r(s_{\tilde{v}})}.
	\]
	Similarly,
	\begin{align*}
		\left\vert \frac{\partial \hat{u}}{\partial u}(s_{\tilde{v}}) - 1 \right\vert \leq \frac{C}{r(s_{\tilde{v}})},
	\end{align*}
	and
	\[
		\left\vert \frac{\partial \hat{\theta}^1}{\partial u}(s_{\tilde{v}}) \right\vert, \left\vert \frac{\partial \hat{\theta}^2}{\partial u}(s_{\tilde{v}}) \right\vert \leq \frac{C}{r(s_{\tilde{v}})},
		\qquad
		\left\vert \frac{\partial \hat{u}}{\partial \theta^1}(s_{\tilde{v}}) \right\vert, \left\vert \frac{\partial \hat{u}}{\partial \theta^2}(s_{\tilde{v}}) \right\vert \leq C.
	\]
	Hence, if $v_0$ is taken suitably large,
	\[
		c \leq \det
		\begin{pmatrix}
    			\frac{\partial \hat{\theta}^1}{\partial \theta^1} & \frac{\partial \hat{\theta}^1}{\partial \theta^2} & \frac{\partial \hat{\theta}^1}{\partial u}
    			\\
    			\frac{\partial \hat{\theta}^2}{\partial \theta^1} & \frac{\partial \hat{\theta}^2}{\partial \theta^2} & \frac{\partial \hat{\theta}^2}{\partial u}
    			\\
    			\frac{\partial \hat{u}}{\partial \theta^1} & \frac{\partial \hat{u}}{\partial \theta^2} & \frac{\partial \hat{u}}{\partial u}
  		\end{pmatrix}
		\leq C,
	\]
	for some constants $C,c>0$ independent of $(x,p)$.  The determinant of the Jacobian of the transformation is therefore controlled from above and below independent of $r$, hence,
	\begin{align*}
		&
		\sum_{U_1,U_2} \int_{v_0}^v \int_{u_0}^u \int_{\theta^1,\theta^2} r(u(\exp_{s_{\tilde{v}}}(x,p)),\tilde{v})^4 \vert \mathfrak{D}^2 \beta (s_{\tilde{v}}) \vert^2 \sqrt{\det \gslash (s_{\tilde{v}})} d \theta^1 d \theta^2 du' d \tilde{v}
		\\
		&
		\qquad
		\leq
		C \int_{v_0}^v \int_{\hat{u}(u_0)}^{\hat{u}(u)} \int_{S_{\hat{u},\tilde{v}}} r(\hat{u},\tilde{v})^4 \vert \mathfrak{D}^2 \beta \vert^2 d \mu_{S_{\hat{u},\tilde{v}}} d\hat{u} d \tilde{v}
		\\
		&
		\qquad
		\leq
		C \int_{\hat{u}(u_0)}^{\hat{u}(u)}F^1_{v_0,v}(\hat{u}) d\hat{u}
		\\
		&
		\qquad
		\leq
		C,
	\end{align*}
	and
	\[
		\int_{u_0}^u r^{2p-2} \int_{S_{u',v}} \left\vert \mathcal{T}_p \left[ \left( \int_{v_0}^v r(s_{\tilde{v}})^6 \vert \mathfrak{D}^2 \beta (s_{\tilde{v}}) \vert^2 d \tilde{v} \right)^{\frac{1}{2}} \right] \right\vert^2 d \mu_{S_{u',v}} du'
		\leq
		C \int_0^C \int_{\vert p^1 \vert, \vert p^2 \vert \leq \frac{C}{v^2}} v^4 dp^1 dp^2 dp^4
		\leq C.
	\]
	Similarly, for the remaining terms in $H_{\mathcal{T}_p,2}(v_0)$,
	\begin{align*}
		\int_{u_0}^u r^{2p-2} \int_{S_{u',v}} \left\vert \mathcal{T}_p \left[ \left( \int_{v_0}^v \sum_{\psi_q \neq \beta} r(s_{\tilde{v}})^{2q-2} \vert \mathfrak{D}^2 \psi_q (s_{\tilde{v}}) \vert^2 d \tilde{v} \right)^{\frac{1}{2}} \right] \right\vert^2 d \mu_{S_{u',v}} du'
		&
		\leq C,
		\\
		\int_{u_0}^u r^{2p-2} \int_{S_{u',v}} \left\vert \mathcal{T}_p \left[ \left( \int_{v_0}^v \sum_{\mathcal{T}_q} r(s_{\tilde{v}})^{2q-2} \vert \mathfrak{D}^2 \mathcal{T}_q (s_{\tilde{v}}) \vert^2 d \tilde{v} \right)^{\frac{1}{2}} \right] \right\vert^2 d \mu_{S_{u',v}} du'
		&
		\leq C,
		\\
		\int_{u_0}^u r^{2p-2} \int_{S_{u',v}} \left\vert \mathcal{T}_p \left[ \left( \int_{v_0}^v \sum_{\Gamma_q} r(s_{\tilde{v}})^{2q-\frac{3}{2}} \vert \mathfrak{D}^2 \Gamma_q (s_{\tilde{v}}) \vert^2 d \tilde{v} \right)^{\frac{1}{2}} \right] \right\vert^2 d \mu_{S_{u',v}} du'
		&
		\leq C,
		\\
		\int_{u_0}^u r^{2p-2} \int_{S_{u',v}} \left\vert \mathcal{T}_p \left[ \left( \int_{v_0}^v r(s_{\tilde{v}})^{\frac{1}{2}} \vert \mathfrak{D}^2 \Gammaslash (s_{\tilde{v}}) \vert^2 d \tilde{v} \right)^{\frac{1}{2}} \right] \right\vert^2 d \mu_{S_{u',v}} du'
		&
		\leq C,
		\\
		\int_{u_0}^u r^{2p-2} \int_{S_{u',v}} \left\vert \mathcal{T}_p \left[ \left( \int_{v_0}^v \vert \mathfrak{D}^2 r \nablaslash b (s_{\tilde{v}}) \vert^2 d \tilde{v} \right)^{\frac{1}{2}} \right] \right\vert^2 d \mu_{S_{u',v}} du'
		&
		\leq C.
	\end{align*}
	Hence,
	\[
		\int_{u_0}^u r^{2p-2} \int_{S_{u',v}} \left\vert \mathcal{T}_p \left[ V_{(i_3)} V_{(i_2)} V_{(i_1)} f \right] \right\vert^2 d \mu_{S_{u',v}} du'
		\leq C \bc^2.
	\]
	
	Consider now the fourth order derivatives of $f$.  For $i_1,i_2,i_3,i_4 = 1,\ldots,6$,
	\[
		V_{(i_4)} V_{(i_3)} V_{(i_2)} V_{(i_1)} f \vert_{(x,p)},
	\]
	can be written as a sum of,
	\[
		J_{(i_4)} J_{(i_3)} J_{(i_2)} \left( {J_{(i_1)}}^j \right) E_j f \vert_{\exp_{-s_*}(x,p)} 
		+
		{J_{(i_4)}}^{j_4} {J_{(i_3)}}^{j_3} {J_{(i_2)}}^{j_2} {J_{(i_1)}}^{j_1} E_{j_4} E_{j_3} E_{j_2} E_{j_1} f \vert_{\exp_{-s_*}(x,p)},
	\]
	and terms which involve lower order derivatives and can be treated as before.  Clearly the second term can also be treated as before using Proposition \ref{prop:Jbounds} and the assumption,
	\[
		\sum_{j_1,j_2,j_3,j_4 = 1}^7 \sup_{P\vert_{\{v=v_0\}}} \left\vert \tilde{E}_{j_4} \tilde{E}_{j_3} \tilde{E}_{j_2} \tilde{E}_{j_1} f \right\vert
		\leq \bc,
	\]
	so consider just the first term.  By Proposition \ref{prop:JJJJbounds},
	\[
		\left\vert \mathcal{T}_p \left[ J_{(i_4)} J_{(i_3)} J_{(i_2)} \left( {J_{(i_1)}}^j \right) \vert_{\exp_{-s_*}(x,p)} \right] \right\vert
		\leq
		C \left( \frac{1}{r^p} + H_{\mathcal{T}_p,3} (v_0) \right),
	\]
	where $H_{\mathcal{T}_p,3}$ is defined in Proposition \ref{prop:JJJJbounds}.  Using the same argument as for $H_{\mathcal{T}_p,2}$ (except for the $r^{4-p}(\vert \mathfrak{D}^3 \alpha \vert + \vert \mathfrak{D}^2 \alpha \vert)$ terms\footnote{It is because of these terms $\mathfrak{D}^4 \mathcal{T}_p$ is only estimated in spacetime, rather than on null hypersurfaces.}),
	\begin{align*}
		&
		\int_{v_0}^v \int_{u_0}^u r^{2p-4} \int_{S_{u,v}} \vert H_{\mathcal{T}_p,3} (v_0) \vert^2 d \mu_{S_{u',v'}} du' dv'
		\\
		&
		\qquad \qquad
		\leq
		C \left( \int_{v_0}^v \frac{1}{v'^2} dv' + \int_{v_0}^v \int_{u_0}^u \int_{S_{u,v}} r^4 (\vert \mathfrak{D}^3 \alpha \vert + \vert \mathfrak{D}^2 \alpha \vert) d \mu_{S_{u',v'}} du' dv' \right)
		\\
		&
		\qquad \qquad
		\leq
		C \left( 1 + \int_{u_0}^u F^1_{v_0,v}(u') du' \right)
		\\
		&
		\qquad \qquad
		\leq
		C.
	\end{align*}
	Hence,
	\[
		\int_{v_0}^v \int_{u_0}^u r^{2p-4} \int_{S_{u,v}} \left\vert \mathcal{T}_p \left[
		V_{(i_4)} V_{(i_3)} V_{(i_2)} V_{(i_1)} f 
		\right] \right\vert^2 d \mu_{S_{u',v'}} du' dv'
		\leq
		C \bc^2.
	\]
	The proof then follows from the considerations of Section \ref{subsec:overview}.
\end{proof}

\section{Estimates for Weyl Curvature Components} \label{section:curvature}
The Weyl curvature components $\psi$ are estimated in $L^2$ on null hypersurfaces through weighted energy estimates for the Bianchi equations.  The main proposition of this section, Proposition \ref{prop:curvmain}, will show that, at any point $x\in \mathcal{A}$ (see Theorem \ref{thm:main4}), the bootstrap assumptions for curvature \eqref{eq:curvatureba} can be retrieved with better constants.

Each Bianchi pair is assigned a weight $q$,
\begin{equation} \label{eq:Bianchiweights}
	q(\alpha, \beta) = 5, \qquad q(\beta , (\rho,\sigma)) = 4, \qquad q((\rho,\sigma), \betabar) = 2, \qquad q(\betabar, \alphabar) = 0.
\end{equation}
The energy estimates will be derived by integrating the following identities over a spacetime region.
\begin{lemma}
	The following identities hold for any $k$,
	\begin{align} \label{eq:enid1}
	\begin{split}
	 & \mathrm{Div} \left( r^5 \vert \mathfrak{D}^k\alpha \vert^2 e_3\right) + 2 \mathrm{Div} \left( r^5 \vert \mathfrak{D}^k \beta \vert^2 e_4 \right) - 4 \frac{1}{\Omega^2} \divslash \left( \Omega^2 r^5  \mathfrak{D}^k \alpha \cdot \mathfrak{D}^k \beta \right)
	 \\
	 & \qquad \qquad \qquad = r^5 \big( \rp_1 \vert \mathfrak{D}^k \alpha \vert^2 + \rp_1 \vert \mathfrak{D}^k \beta \vert^2 - 4(\eta + \etabar)\cdot \mathfrak{D}^k \alpha \cdot \mathfrak{D}^k \beta 
	 \\
	 & \qquad \qquad \qquad \quad + 2 \mathfrak{D}^k \alpha \cdot E_3[\mathfrak{D}^k \alpha] + 4 \mathfrak{D}^k \beta \cdot E_4[\mathfrak{D}^k \beta] \big),
	 \end{split}
	\end{align}
	\begin{align} \label{eq:enid2}
	\begin{split}
 	& \mathrm{Div} \left( r^4 \vert \mathfrak{D}^k \beta \vert^2 e_3 \right) + \mathrm{Div} \left( r^4 \vert \mathfrak{D}^k \rho \vert^2 e_4 \right) + \mathrm{Div} \left( r^4 \vert \mathfrak{D}^k \sigma \vert^2 e_4 \right)
	\\
	& - 2 \frac{1}{\Omega^2} \divslash \left( \Omega^2 r^4 \mathfrak{D}^k \beta \cdot (\mathfrak{D}^k \rho \gslash - \mathfrak{D}^k \sigma \epsslash ) \right)
	\\
	& \qquad \qquad \qquad= r^4 \big( \rp_1 \vert \mathfrak{D}^k \beta \vert^2 - 2(\eta + \etabar)\cdot \mathfrak{D}^k \beta \cdot (\mathfrak{D}^k \rho \gslash - \mathfrak{D}^k \sigma \epsslash )
	\\
	& \qquad \qquad \qquad \quad + 2 \mathfrak{D}^k \beta \cdot E_3[\mathfrak{D}^k \beta] + 2 \mathfrak{D}^k \rho \cdot E_4[\mathfrak{D}^k \rho] + 2 \mathfrak{D}^k \sigma \cdot E_4[\mathfrak{D}^k \sigma] \big),
	\end{split}
	\end{align}
	\begin{align} \label{eq:enid3}
	\begin{split}
 	& \mathrm{Div} \left( r^2 \vert \mathfrak{D}^k \rho \vert^2 e_3 \right) + \mathrm{Div} \left( r^2 \vert \mathfrak{D}^k \sigma \vert^2 e_3 \right) + \mathrm{Div} \left( r^2 \vert \mathfrak{D}^k \betabar \vert^2 e_4 \right)
	\\
	& + 2 \frac{1}{\Omega^2} \divslash \left( \Omega^2 r^2 \mathfrak{D}^k \betabar \cdot (\mathfrak{D}^k \rho \gslash - \mathfrak{D}^k \sigma \epsslash ) \right)
	\\
	& \qquad \qquad \qquad = r^2 \big( \rp_1 \vert \mathfrak{D}^k \rho \vert^2 + \rp_1 \vert \mathfrak{D}^k \sigma \vert^2 + 2(\eta + \etabar)\cdot \mathfrak{D}^k \betabar \cdot (\mathfrak{D}^k \rho \gslash - \mathfrak{D}^k \sigma \epsslash )
	\\
	& \qquad \qquad \qquad \quad + 2 \mathfrak{D}^k \rho \cdot E_3[\mathfrak{D}^k \rho] + 2 \mathfrak{D}^k \sigma \cdot E_3[\mathfrak{D}^k \sigma] + 2 \mathfrak{D}^k \betabar \cdot E_4[\mathfrak{D}^k \betabar] \big),
	\end{split}
	\end{align}
	\begin{multline} \label{eq:enid4}
 	2 \mathrm{Div} \left( \vert \mathfrak{D}^k \betabar \vert^2 e_3 \right) + \mathrm{Div} \left( \vert \mathfrak{D}^k \alphabar \vert^2 e_4 \right) + 4 \frac{1}{\Omega^2} \divslash \left( \Omega^2 \mathfrak{D}^k \alphabar \cdot \mathfrak{D}^k \betabar \right) \\ = \rp_1 \vert \mathfrak{D}^k \betabar \vert^2 + 4(\eta + \etabar)\cdot \mathfrak{D}^k \betabar \cdot \mathfrak{D}^k \alphabar + 4 \mathfrak{D}^k \betabar \cdot E_3[\mathfrak{D}^k \betabar] + 2 \mathfrak{D}^k \alphabar \cdot E_4[\mathfrak{D}^k \alphabar],
	\end{multline}
	where $\mathrm{Div}$ denotes the spacetime divergence.
\end{lemma}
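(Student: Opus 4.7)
The plan is to derive all four identities by the same three-step procedure, and I will describe it in detail only for \eqref{eq:enid1}; \eqref{eq:enid2}--\eqref{eq:enid4} are entirely analogous with weights $q=4,2,0$ in place of $q=5$ and the $(\alpha,\beta)$-pair replaced by the appropriate Bianchi pair from Proposition \ref{prop:Bianchi}.

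First, for the Bianchi pair $(\alpha,\beta)$ with $\gamma[\beta]=2$, I would start from the commuted equations
\begin{equation*}
\nablaslash_3\mathfrak{D}^k\alpha = \nablaslash\hat{\otimes}\mathfrak{D}^k\beta + E_3[\mathfrak{D}^k\alpha],\qquad \nablaslash_4\mathfrak{D}^k\beta + 2\tr\chi\,\mathfrak{D}^k\beta = \divslash\mathfrak{D}^k\alpha + E_4[\mathfrak{D}^k\beta]
\end{equation*}
of Proposition \ref{prop:commutedBianchi} and contract them with $2r^5\mathfrak{D}^k\alpha$ and $4r^5\mathfrak{D}^k\beta$ respectively. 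Using $2\uppsi\cdot\nablaslash_{3/4}\uppsi = e_{3/4}(|\uppsi|^2)$ for any $S_{u,v}$-tensor $\uppsi$, adding the two yields
\begin{equation*}
r^5 e_3(|\mathfrak{D}^k\alpha|^2) + 2r^5 e_4(|\mathfrak{D}^k\beta|^2) + 8r^5\tr\chi|\mathfrak{D}^k\beta|^2 = 2r^5\mathfrak{D}^k\alpha\cdot(\nablaslash\hat{\otimes}\mathfrak{D}^k\beta) + 4r^5\mathfrak{D}^k\beta\cdot\divslash\mathfrak{D}^k\alpha + \text{errors},
\end{equation*}
where ``errors'' denotes the two $E$-contractions already on the right side of \eqref{eq:enid1}.

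Next, I would convert the $e_3, e_4$ derivatives on the left into spacetime divergences via
\begin{equation*}
\mathrm{Div}(F e_3) = e_3(F) + \tr\chibar\,F,\qquad \mathrm{Div}(F e_4) = e_4(F) + (\tr\chi + \omega)\,F,
\end{equation*}
which follow from $\nabla_3 e_3 = 0$, $\nabla_4 e_4 = \omega e_4$ and the Ricci coefficient table \eqref{eq:Riccitab1}--\eqref{eq:Riccitab5}. Applied with $F = r^5|\mathfrak{D}^k\alpha|^2$ and $F = r^5|\mathfrak{D}^k\beta|^2$ (using $e_3 r = -1/\Omega^2$, $e_4 r = 1$), these rewrite the left side as $\mathrm{Div}(r^5|\mathfrak{D}^k\alpha|^2 e_3) + 2\mathrm{Div}(r^5|\mathfrak{D}^k\beta|^2 e_4)$ plus scalar coefficients of $|\mathfrak{D}^k\alpha|^2, |\mathfrak{D}^k\beta|^2$ built from $e_{3/4}(r^5), r^5\tr\chibar, r^5\tr\chi, r^5\omega$ and the $8r^5\tr\chi$ retained from the Bianchi equation. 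Splitting $\tr\chi = 2/r + \Gamma_2$, $\tr\chibar = -2/r + \Gamma_2$, $\omega = \Gamma_3$, each surviving coefficient collapses into the schematic class $r^5\rp_1$ and can be transferred to the right. Note that $q=5$ is one less than the exact cancellation weight $q = 2(2\gamma[\beta]-1) = 6$ (cf.\ Remark \ref{rmk:peeling}), so a genuinely leading-order $r^5\rp_1|\mathfrak{D}^k\beta|^2$ remains, exactly as in \eqref{eq:enid1}; for the other three pairs the weights $q = 4, 2, 0$ agree with $2(2\gamma[\uppsi'_{p'}]-1)$ exactly, and only the subleading $\rp_1$ remainders survive.

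Finally, I would combine the angular terms as a spherical divergence. For \eqref{eq:enid1}, since $\mathfrak{D}^k\alpha$ is symmetric traceless, the definition of $\nablaslash\hat{\otimes}$ yields $\divslash(\mathfrak{D}^k\alpha\cdot\mathfrak{D}^k\beta) = \mathfrak{D}^k\beta\cdot\divslash\mathfrak{D}^k\alpha + \tfrac12\mathfrak{D}^k\alpha\cdot(\nablaslash\hat{\otimes}\mathfrak{D}^k\beta)$, so the angular contribution equals $4\divslash(r^5\mathfrak{D}^k\alpha\cdot\mathfrak{D}^k\beta)$. To recast this as the spacetime divergence appearing on the left of \eqref{eq:enid1} I would use
\begin{equation*}
\divslash Y = \tfrac{1}{\Omega^2}\divslash(\Omega^2 Y) - 2Y\cdot\nablaslash\log\Omega
\end{equation*}
together with the gauge identity $\eta + \etabar = 2\nablaslash\log\Omega$, which holds here because $\omegabar = 0$ (so only the angular part of $\nabla\log\Omega$ survives in the $\eta,\etabar$ decomposition): this produces both $-4\tfrac{1}{\Omega^2}\divslash(\Omega^2 r^5\mathfrak{D}^k\alpha\cdot\mathfrak{D}^k\beta)$ on the left and the correction $-4r^5(\eta+\etabar)\cdot\mathfrak{D}^k\alpha\cdot\mathfrak{D}^k\beta$ on the right, completing \eqref{eq:enid1}. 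The identities \eqref{eq:enid2}--\eqref{eq:enid3} are then obtained identically, replacing the previous pairing by the Hodge-type calculation $\divslash(\mathfrak{D}^k\rho\,\mathfrak{D}^k\beta - \mathfrak{D}^k\sigma\,{}^*\mathfrak{D}^k\beta) = \mathfrak{D}^k\rho\divslash\mathfrak{D}^k\beta + \mathfrak{D}^k\beta\cdot\nablaslash\mathfrak{D}^k\rho - \mathfrak{D}^k\sigma\curlslash\mathfrak{D}^k\beta + \mathfrak{D}^k\beta\cdot{}^*\nablaslash\mathfrak{D}^k\sigma$ (noting $\beta\cdot(\rho\gslash - \sigma\epsslash) = \rho\beta - \sigma\,{}^*\beta$ after raising), and \eqref{eq:enid4} follows from the direct analogue of the $(\alpha,\beta)$ computation applied to $(\betabar,\alphabar)$. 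The only genuine obstacle throughout is the algebraic bookkeeping of tracking the various $\tr\chi,\tr\chibar,\omega$ and $r$-weight contributions through the cancellation and verifying they assemble into the claimed $r^q\rp_1$ and $(\eta+\etabar)$-weighted schematic classes; the choice of weight $q$ in each identity is precisely what makes this succeed.
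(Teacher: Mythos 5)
Your proposal is correct and follows essentially the same route as the paper's proof: contract the commuted Bianchi equations with the appropriately weighted curvature components, use $\mathrm{Div}(e_3)=\tr\chibar$, $\mathrm{Div}(e_4)=\tr\chi+\omega$ and the $r$-weight cancellation (noting, as in Remark \ref{rmk:peeling}, that $q(\alpha,\beta)=5$ leaves a genuine $\rp_1$ remainder), assemble the angular terms into a spherical divergence via the adjointness of the $\Dslash$ operators, and convert to the $\Omega^2$-weighted divergence using $\nablaslash\log\Omega=\tfrac{1}{2}(\eta+\etabar)$. The only difference is direction of computation (building up the divergences rather than expanding them by the product rule), which is immaterial.
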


\begin{proof}
	The proof follows by applying the product rule to each term on the left hand side of each identity.  For the first terms each Bianchi equation contracted with its corresponding weighted curvature component is used, i.\@e.\@ equation \eqref{eq:commBianchi3} contracted with $r^{q(\uppsi_p,\uppsi_{p'}')} \uppsi_p$, and equation \eqref{eq:commBianchi4} contracted with $r^{q(\uppsi_p,\uppsi_{p'}')} \uppsi_{p'}'$.  Then use the fact that
	\[
		\mathrm{Div} (e_3)  = \tr \chibar, \qquad \mathrm{Div} (e_4) = \tr \chi + \omega,
	\]
	and
	\[
		e_3 (r^n) = - \frac{n}{r} r^n \frac{1}{\Omega^2}, \qquad e_4 (r^n) = \frac{n}{2} r^n \tr \chi_{\circ}.
	\]
	For the final term on the left hand side of each identity, use the fact that
	\[
		\frac{\nablaslash \Omega}{\Omega} = \nablaslash (\log \Omega) = \frac{1}{2} \left( \eta + \etabar \right).
	\]
	
	The proof of \eqref{eq:enid3} is presented to illustrate a cancellation which occurs in \eqref{eq:enid2}, \eqref{eq:enid3}, \eqref{eq:enid4}.  Suppose, to reduce notation, that $k=0$.  Clearly,
	\begin{align*}
		\mathrm{Div} \left( r^2 \vert \rho \vert^2 e_3 \right)
		&
		=
		- \frac{2}{\Omega} r \vert \rho \vert^2 + 2 \rho \cdot \nablaslash_3 \rho + \vert \rho \vert^2 \tr \chibar
		\\
		&
		=
		- \frac{2}{\Omega^2} r \vert \rho \vert^2
		- 2 r^2 \vert \rho \vert^2 \tr \chibar - 2 r^2 \rho \divslash \betabar
		+ 2 r^2 \rho \cdot E_3[\rho]
		\\
		\mathrm{Div} \left( r^2 \vert \sigma \vert^2 e_3 \right)
		&
		=
		- \frac{2}{\Omega^2} r \vert \sigma \vert^2
		- 2 r^2 \vert \sigma \vert^2 \tr \chibar - 2 r^2 \sigma \curlslash \betabar
		+ 2 r^2 \sigma \cdot E_3[\sigma]
		\\
		 \mathrm{Div} \left( r^2 \vert \betabar \vert^2 e_4 \right)
		 &
		 =
		 2r \vert \betabar \vert^2 + 2 r^2 \betabar \cdot \nablaslash_4 \betabar + r^2 \vert \betabar \vert^2 \left( \tr \chi + \omega \right)
		 \\
		 &
		 =
		 r^2 \left( - \left( \tr \chi - \tr \chi_{\circ} \right) \vert \betabar \vert^2 + \omega \vert \betabar \vert^2
		 + 
		 2 \betabar \cdot \left( - \nablaslash \rho + {}^* \nablaslash \sigma \right) + 2 \betabar \cdot E_4[\betabar]
		 \right).
	\end{align*}
	Note that in the expression for $ \mathrm{Div} \left( r^2 \vert \betabar \vert^2 e_4 \right)$ the term generated by $\nablaslash_4$ acting on $r^2$ exactly cancels the $\tr \chi \vert \betabar \vert^2$ term to leave $r^2 \left( \tr \chi - \tr \chi_{\circ} \right) \vert \betabar \vert^2$.  This cancellation occurs precisely because the weight $q=2$ was chosen for the Bianchi pair $((\rho,\sigma), \betabar)$.  This resulting term, and most of the others, have the same form as the error terms and so can be absorbed to give,
	\begin{align*}
		\mathrm{Div} \left( r^2 \vert \rho \vert^2 e_3 \right)
		&
		=
		- 2 r^2 \rho \divslash \betabar
		+ 2 r^2 \rho \cdot E_3[\rho]
		\\
		\mathrm{Div} \left( r^2 \vert \sigma \vert^2 e_3 \right)
		&
		=
		- 2 r^2 \sigma \curlslash \betabar
		+ 2 r^2 \sigma \cdot E_3[\sigma]
		\\
		 \mathrm{Div} \left( r^2 \vert \betabar \vert^2 e_4 \right)
		 &
		 =
		 r^2 \left(
		 2 \betabar \cdot \left( - \nablaslash \rho + {}^* \nablaslash \sigma \right) + 2 \betabar \cdot E_4[\betabar]
		 \right).
	\end{align*}
	Terms of the form $r^q \rp_1 \vert \betabar \vert^2$, which would appear if any weight other than $q=2$ had been chosen, would not have the correct form to be absorbed by the error term in the expression for $\mathrm{Div} \left( r^2 \vert \betabar \vert^2 e_4 \right)$.
	The proof follows by computing $\frac{1}{\Omega^2} \divslash \left( \Omega^2 r^2 \betabar \cdot (\rho \gslash - \sigma \epsslash ) \right)$.
\end{proof}

\begin{remark} \label{rmk:peeling}
	The weights \eqref{eq:Bianchiweights} were chosen carefully so that a cancellation would occur in the above identities, as illustrated in the proof.  This cancellation does not occur in the identity for the Bianchi pair $(\alpha,\beta)$.  It would if the weight $q(\alpha,\beta) = 6$ had been chosen.  This would however lead one to impose a faster rate of decay for $\alpha,\beta$ along $\{u=u_0\}$, consistent the decay required for a the spacetime to admit a conformal compactification.  The estimates will close without imposing this stronger decay.
\end{remark}

\begin{proposition} \label{prop:curvmain}
	If $x\in \mathcal{A}$ and $u = u(x)$, $v = v(x)$, then
	\[
		F^1_{v_0,v}(u) + F^2_{u_0,u}(v) \leq C \left( \bc + \frac{1}{v_0} \right),
	\]
	for some constant $C$.
\end{proposition}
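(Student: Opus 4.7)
The plan is to derive energy estimates for each Bianchi pair $(\uppsi_p, \uppsi'_{p'})$ at all derivative orders $k \leq 3$ simultaneously, in the spirit sketched in Section \ref{subsec:introcurvature}. First, I would commute the Bianchi equations using Proposition \ref{prop:commutedBianchi} and derive the analogues of the identities \eqref{eq:enid1}--\eqref{eq:enid4} for $\mathfrak{D}^k \uppsi_p$ and $\mathfrak{D}^k \uppsi'_{p'}$ (the identities as stated are written for arbitrary $k$ already, with $E_3, E_4$ replaced by $E_3[\mathfrak{D}^k\cdot], E_4[\mathfrak{D}^k\cdot]$). Integrating each identity over a bulk region $\mathcal{B}(u,v) = [u_0,u] \times [v_0,v] \times S^2$ with the spacetime volume form $\Omega^2 \, du' \, dv' \, d\mu_{S_{u',v'}}$ and applying the divergence theorem converts $\mathrm{Div}(\xi \, e_3)$ and $\mathrm{Div}(\xi \, e_4)$ into null-hypersurface boundary integrals that reproduce, after summing over all Bianchi pairs and all $k \leq 3$, the fluxes $F^1_{v_0,v}(u)$ and $F^2_{u_0,u}(v)$ (minus their initial-data counterparts on $\{u=u_0\}$ and $\{v=v_0\}$). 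The $\divslash$ terms in each identity vanish identically upon spherical integration, implementing the hyperbolic cancellation.

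The heart of the argument is the treatment of the right-hand sides. For each Bianchi pair, after summing, I would arrive at an inequality of the schematic form
\begin{align*}
F^1_{v_0,v}(u) + F^2_{u_0,u}(v) &\leq F^1_{v_0,v}(u_0) + F^2_{u_0,u}(v_0) \\
&\quad + \sum_{k\leq 3} \sum_{(\uppsi_p,\uppsi'_{p'})} \int_{\mathcal{B}} \bigl|r^{q}\, \mathfrak{D}^k \uppsi_p \cdot E_3[\mathfrak{D}^k\uppsi_p]\bigr| + \bigl|r^{q} \, \mathfrak{D}^k \uppsi'_{p'} \cdot E_4[\mathfrak{D}^k\uppsi'_{p'}]\bigr| + \mathcal{R},
\end{align*}
where $\mathcal{R}$ denotes the $r^q \rp_1 |\mathfrak{D}^k\uppsi|^2$ and $(\eta,\etabar)\cdot \mathfrak{D}^k\alpha \cdot \mathfrak{D}^k\beta$ type terms generated by the identities. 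For the $(\alpha,\beta)$ pair, the $r^5 \rp_1 |\mathfrak{D}^k\alpha|^2$ term (present because the smaller weight $q(\alpha,\beta)=5$ was chosen, see Remark \ref{rmk:peeling}) is bounded by $\int_{u_0}^u F^1_{v_0,v}(u')\,du'$. For the other pairs, $\mathcal{R}$ can be absorbed by Cauchy--Schwarz into $F^1$ and $F^2$ with a factor of $\overline{C}/v_0$. To control the error terms, I would use the schematic structure from Proposition \ref{prop:commutedBianchi}: $E_3[\mathfrak{D}^k\uppsi_p]$ decays at rate $r^{-p}$ and $E_4[\mathfrak{D}^k\uppsi'_{p'}]$ at rate $r^{-(p'+3/2)}$. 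Applying Cauchy--Schwarz, substituting the pointwise bounds for $\Gamma$ and $\mathfrak{D}\Gamma$ from the Sobolev estimates \eqref{eq:Riccisob}, and using the bootstrap bounds \eqref{eq:Ricciba}, the curvature--curvature nonlinearities $\Gamma_{p_1}\cdot\psi_{p_2}$ produce terms controllable by $\overline{C}\cdot F^1$ or $\overline{C}\cdot F^2$ integrated in the transverse direction. The $\mathcal{T}$ error terms are controlled using Proposition \ref{prop:tmain}, which gives $\int \int r^{2p-2}|\mathfrak{D}^k\mathcal{T}_p|^2 \leq C\data$ on incoming hypersurfaces and similarly with $r^{2p-4}$ on outgoing ones: these yield contributions of size $\data$, hence small.

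Putting everything together, I expect to arrive at the inequality
\begin{align*}
F^1_{v_0,v}(u) + F^2_{u_0,u}(v) \leq C\data + \frac{C}{v_0} + C \int_{u_0}^u F^1_{v_0,v}(u') \, du' + \delta (F^1_{v_0,v}(u) + F^2_{u_0,u}(v)),
\end{align*}
with $\delta$ arbitrarily small (by smallness of $\overline{C}$ and largeness of $v_0$), so the last term is absorbed to the left. Dropping the $F^2$ term, Gr\"onwall's inequality in $u$ gives a bound on $F^1_{v_0,v}(u)$ by $C(\data + v_0^{-1})$; substituting back then yields the same bound for $F^2_{u_0,u}(v)$.

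The main technical obstacle will be verifying, term by term, that every error in the commuted Bianchi equations respects the null-condition hierarchy: in particular one must check that the $\nabla \mathcal{T}$ terms in $E_3[\mathfrak{D}^k\uppsi_p]$ and the $\mathfrak{D}\Gamma$ terms appearing in $\Lambda_2'$ (see Proposition \ref{prop:commutedBianchi}) are always paired with sufficiently decaying weights so that, after Cauchy--Schwarz, they fit into either the spacetime $L^2$ estimates for $\mathcal{T}$ given by Proposition \ref{prop:tmain} or into bulk integrals of the bootstrap $\Gamma$-norms. Care is required at the top order $k=3$ where $\mathfrak{D}^3 \mathcal{T}$ is only available in $L^2$ on hypersurfaces (not pointwise), forcing one to reserve one copy of the curvature flux to pair against it by Cauchy--Schwarz.
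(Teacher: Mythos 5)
Your proposal follows essentially the same route as the paper's proof: integrating the weighted identities \eqref{eq:enid1}--\eqref{eq:enid4} over the characteristic rectangle, exploiting the vanishing of the spherical divergence, controlling the errors via the schematic structure of Proposition \ref{prop:commutedBianchi}, the Sobolev/bootstrap bounds and Proposition \ref{prop:tmain}, and closing with Gr\"onwall in $u$ followed by back-substitution for $F^2$. The only cosmetic difference is that you absorb a small multiple of $F^1+F^2$ into the left-hand side, whereas the paper arranges every error to be bounded either by $\int_{u_0}^u F^1\,du'$ or by $\bc + v_0^{-1}$ directly (in particular handling the $(\eta+\etabar)\cdot\mathfrak{D}^k\betabar\cdot\mathfrak{D}^k\alphabar$ term via $\int r^{-2}\vert\alphabar\vert^2 \leq C/v_0$, since $\alphabar$ is absent from $F^1$); both variants close.
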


\begin{proof}
	Integrating the identity \eqref{eq:enid1} over the spacetime characteristic rectangle $u_0 \leq u' \leq u, v_0 \leq v' \leq v$ for a fixed $0\leq k\leq 3$ gives\footnote{Note that the final term on the left hand side of \eqref{eq:enid1} is a spherical divergence and hence vanished when integrated over the spheres.}
	\begin{align*}
		& \int_{v_0}^v \int_{S_{u,v'}} r^5 \vert \mathfrak{D}^k \alpha \vert^2 d \mu_{S_{u,v'}} d v' + \int_{u_0}^u \int_{S_{u',v}} r^5 \vert \mathfrak{D}^k \beta \vert^2 d \mu_{S_{u',v}} d u' \\
		& \qquad \qquad = \int_{v_0}^v \int_{S_{u_0,v'}} r^5 \vert \mathfrak{D}^k \alpha \vert^2 d \mu_{S_{u_0,v'}} d v' + \int_{u_0}^u \int_{S_{u',v_0}} r^5 \vert \mathfrak{D}^k \beta \vert^2 d \mu_{S_{u',v_0}} d u' \\
		& \qquad \qquad \quad + \int_{u_0}^u \int_{v_0}^v \int_{S_{u',v'}} r^5 \big( \rp_1 \vert \mathfrak{D}^k \alpha \vert^2 + \rp_1 \vert \mathfrak{D}^k \beta \vert^2 - 4(\eta + \etabar)\cdot \mathfrak{D}^k \alpha \cdot \mathfrak{D}^k \beta  \\
		& \qquad \qquad \quad + 2 \mathfrak{D}^k \alpha \cdot E_3[\mathfrak{D}^k \alpha] + 4 \mathfrak{D}^k \beta \cdot E_4[\mathfrak{D}^k \beta] \big) d \mu_{S_{u',v'}} \Omega^2 d v' d u'.
	\end{align*}
	Clearly
	\[
		\int_{u_0}^u \int_{v_0}^v \int_{S_{u',v'}} r^5 \left( \rp_1 \vert \mathfrak{D}^k \alpha \vert^2 + \rp_1 \vert \mathfrak{D}^k \beta \vert^2 \right) d \mu_{S_{u',v'}} d v' d u' \leq \int_{u_0}^u F^1_{v_0,v}(u') d u',
	\]
	and
	\begin{align*}
		& \int_{u_0}^u \int_{v_0}^v \int_{S_{u',v'}} r^5 (\eta + \etabar)\cdot \mathfrak{D}^k \alpha \cdot \mathfrak{D}^k \beta \ d \mu_{S_{u',v'}} \Omega^2 d v' d u' 
		\\
		& \qquad \qquad \qquad \qquad \leq C \int_{u_0}^u \int_{v_0}^v \int_{S_{u',v'}} r^4 \vert \mathfrak{D}^k \alpha \vert \vert \mathfrak{D}^k \beta \vert \ d \mu_{S_{u',v'}} d v' d u'
		\\
		& \qquad \qquad \qquad \qquad \leq C \int_{u_0}^u \int_{v_0}^v \int_{S_{u',v'}} r^4 \vert \mathfrak{D}^k \alpha \vert^2 + r^4 \vert \mathfrak{D}^k \beta \vert^2 \ d \mu_{S_{u',v'}} d v' d u'
		\\
		& \qquad \qquad \qquad \qquad \leq C \int_{u_0}^u F^1_{v_0,v}(u') d u',
	\end{align*}
	using the bootstrap assumptions for $\eta, \etabar$ and the upper bound for $\Omega$.  In Lemma \ref{lem:curvatureerrors} below it will be shown that
	\begin{multline*}
		\int_{u_0}^u \int_{v_0}^v \int_{S_{u',v'}} r^{q(\uppsi_{p},\uppsi_{p'}')} \big( \mathfrak{D}^k \uppsi_{p} \cdot E_3[\mathfrak{D}^k \uppsi_{p}] + \mathfrak{D}^k \uppsi_{p'}' \cdot E_4[\mathfrak{D}^k \uppsi_{p'}'] \big) d \mu_{S_{u',v'}} \Omega^2 d v' d u' \\
		\leq C \left( \int_{u_0}^u F^1_{v_0,v}(u') d u' + \frac{1}{v_0} + \bc \right),
	\end{multline*}
	for each Bianchi pair $(\uppsi_{p},\uppsi_{p'}')$.  Hence
	\begin{multline*}
		\int_{v_0}^v \int_{S_{u,v'}} r^5 \vert \mathfrak{D}^k \alpha \vert^2 d \mu_{S_{u,v'}} d v' + \int_{u_0}^u \int_{S_{u',v}} r^5 \vert \mathfrak{D}^k \beta \vert^2 d \mu_{S_{u',v}} d u' \\
		\leq C \left( \int_{u_0}^u F^1_{v_0,v}(u') d u' + \frac{1}{v_0} + F^1_{v_0,v}(u_0) + F^2_{u_0,u}(v_0) + \bc \right).
	\end{multline*}
	
	Repeating this for each of the identities \eqref{eq:enid2},\eqref{eq:enid3},\eqref{eq:enid4} for $k = 0,1,\ldots ,s$ and summing then gives
	\begin{equation} \label{eq:curvpregron1}
		F^1_{v_0,v}(u) \leq C \left( \int_{u_0}^u F^1_{v_0,v}(u') d u' + \frac{1}{v_0} + F^1_{v_0,v}(u_0) + F^2_{u_0,u}(v_0) + \bc \right),
	\end{equation}
	and
	\begin{equation} \label{eq:curvpregron2}
		F^2_{u_0,u}(v) \leq C \left( \int_{u_0}^u F^1_{v_0,v}(u') d u' + \frac{1}{v_0} + F^1_{v_0,v}(u_0) + F^2_{u_0,u}(v_0) + \bc \right).
	\end{equation}
	Note that $\alphabar$ doesn't appear in $F^1_{v_0,v}(u_0)$ so that the term involving $(\eta + \etabar)\cdot \betabar \cdot \alphabar$ from the identity \eqref{eq:enid4} is estimated slightly differently:
	\begin{align*}
		& \int_{u_0}^u \int_{v_0}^v \int_{S_{u',v'}} (\eta + \etabar)\cdot \betabar \cdot \alphabar \ d \mu_{S_{u',v'}} \Omega^2 d v' d u' \\
		& \qquad \qquad \qquad \leq C \int_{u_0}^u \int_{v_0}^v \int_{S_{u',v'}} \frac{1}{r} \vert \betabar \vert \vert \alphabar \vert d \mu_{S_{u',v'}} \Omega^2 d v' d u' \\
		& \qquad \qquad \qquad \leq C \int_{u_0}^u \int_{v_0}^v \int_{S_{u',v'}} \vert \betabar \vert^2 + \frac{\vert \alphabar \vert^2}{r^2} d \mu_{S_{u',v'}} \Omega^2 d v' d u' \\
		& \qquad \qquad \qquad \leq C \left( \int_{u_0}^u F^1_{v_0,v}(u') d u' + \frac{1}{v_0} \right),
	\end{align*}
	where the last line follows from the inequality
	\begin{align*}
		\int_{u_0}^u \int_{v_0}^v \int_{S_{u',v'}} \frac{\vert \alphabar \vert^2}{r^2} d \mu_{S_{u',v'}} \Omega^2 d v' d u' & \leq \int_{v_0}^v \frac{1}{r^2} F^2_{u_0,u}(v') d v' \\
		& \leq C \int_{v_0}^v \frac{1}{v^2} d v' \\
		& \leq \frac{C}{v_0},
	\end{align*}
	using the bootstrap assumption for $F^2_{u_0,u}(v')$ and the fact that $r \sim v$ in the ``wave zone''.  Similarly for the terms involving $(\eta + \etabar)\cdot \mathfrak{D}^k \betabar \cdot \mathfrak{D}^k \alphabar$.
	
	Applying the Gr\"{o}nwall inequality to equation \eqref{eq:curvpregron1} and using the fact that $u \leq u_f$ gives
	\[
		F^1_{v_0,v}(u) \leq C \left( \frac{1}{v_0} + \bc + F^1_{v_0,v}(u_0) + F^2_{u_0,u}(v_0) \right).
	\]
	Inserting this in equation \eqref{eq:curvpregron2} gives
	\[
		F^2_{u_0,u}(v) \leq C \left( \frac{1}{v_0} + \bc + F^1_{v_0,v}(u_0) + F^2_{u_0,u}(v_0) \right).
	\]
	
\end{proof}

It remains to prove the following lemma which provides control over the error terms.
\begin{lemma} \label{lem:curvatureerrors}
	Under the assumptions of Proposition \ref{prop:curvmain}, for each Bianchi pair $(\uppsi_{p},\uppsi_{p'}')$,
	\begin{multline*}
		\int_{u_0}^u \int_{v_0}^v \int_{S_{u',v'}} r^{q(\uppsi_{p},\uppsi_{p'}')} \big( \mathfrak{D}^k \uppsi_{p} \cdot E_3[\mathfrak{D}^k \uppsi_{p}] + \mathfrak{D}^k \uppsi_{p'}' \cdot E_4[\mathfrak{D}^k \uppsi_{p'}'] \big) d \mu_{S_{u',v'}} \Omega^2 d v' d u' \\
		\leq C \left( \int_{u_0}^u F^1_{v_0,v}(u') d u' + \frac{1}{v_0} + \bc \right),
	\end{multline*}
\end{lemma}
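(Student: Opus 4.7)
The plan is to expand $E_3[\mathfrak{D}^k\uppsi_p]$ and $E_4[\mathfrak{D}^k\uppsi_{p'}']$ using the schematic formulas of Proposition \ref{prop:commutedBianchi}, then to bound each resulting summand by one of three mechanisms, in each case after a single Cauchy--Schwarz splitting against a multiple of $r^{q(\uppsi_p,\uppsi_{p'}')}|\mathfrak{D}^k\uppsi_p|^2$ or its $\uppsi_{p'}'$ analogue.

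First, the linear term $\rp_1\mathfrak{D}^k\uppsi_p$ together with the contractions $\Lambda_1\cdot\mathfrak{D}^k\uppsi_p$ and $\Lambda_1\cdot\mathfrak{D}^{k-1}\uppsi_p$ contribute, after Cauchy--Schwarz and the bootstrap pointwise bound $|\Lambda_1|\le C/r$ coming from \eqref{eq:Riccisob}--\eqref{eq:curvsob} and Proposition \ref{prop:tmain}, integrands of the form $r^{q-1}|\mathfrak{D}^k\uppsi_p|^2$ (or $r^{q-1}|\mathfrak{D}^{k-1}\uppsi_p|^2$). Integrated over the incoming null hypersurfaces $\{u=u'\}\cap\{v_0\le v'\le v\}$ these produce $\int_{u_0}^u F^1_{v_0,v}(u')\,du'$ directly. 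The cross terms $\Lambda_1\cdot\mathfrak{D}^k\uppsi_{p'}'$ appearing in $E_3[\mathfrak{D}^k\uppsi_p]$ give, after Cauchy--Schwarz, a piece $r^{q-1}|\mathfrak{D}^k\uppsi_{p'}'|^2$; since the weights \eqref{eq:Bianchiweights} are tuned so that $q(\uppsi_p,\uppsi_{p'}')-1$ matches the weight carried by $\uppsi_{p'}'$ in $F^2$, this is controlled by a multiple of $F^2_{u_0,u}(v)$ via Fubini, and hence by the right-hand side of the lemma through the Grönwall loop yielding \eqref{eq:curvpregron1}--\eqref{eq:curvpregron2}.

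Second, for the genuinely nonlinear $\Gamma\cdot\psi$, $\Gamma\cdot\mathfrak{D}\psi$, and $\mathfrak{D}\Gamma\cdot\psi$ summands produced by iterated $\mathfrak{D}$ in $E_3[\mathfrak{D}^k\uppsi_p]$ and its $E_4$ counterpart, the $p$-index matching in Proposition \ref{prop:commutedBianchi} guarantees total $r$-decay of order $r^{-p}$. At derivative orders below the top I place the lower-order $\Gamma$ factor in $L^\infty$ using the Sobolev estimates of Section \ref{section:Sobolev} and keep the curvature factor in $L^2$, absorbing the result into $F^1$ or $F^2$; at top order the $\mathfrak{D}^k\Gamma$ factor is bounded in $L^2$ on null hypersurfaces via the estimates of Section \ref{section:Riccitop}, while the curvature factor remains in $L^\infty$ via \eqref{eq:curvsob}. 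Every term containing $\mathcal{T}$ or $\mathfrak{D}\mathcal{T}$ is controlled by the weighted $L^2$ bounds supplied by Proposition \ref{prop:tmain}, all of which come with the $C\bc$ smallness factor; after one further Cauchy--Schwarz against $r^q|\mathfrak{D}^k\uppsi_p|^2$ such contributions are bounded by $C\bc+C\int_{u_0}^uF^1(u')\,du'$.

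The $C/v_0$ contribution arises from the Bianchi pair $(\betabar,\alphabar)$: since $\alphabar$ enters $F^1$ only with weight $r^0$, pairing $\alphabar$ with $\betabar$ via any $\Lambda_1$-type coefficient produces an integrand of size $r^{-2}|\alphabar|^2$, and $\int_{v_0}^\infty r^{-2}\,dv'\le C/v_0$ after exchanging the order of integration and invoking the bootstrap bound on $F^2$, exactly as in the treatment of $(\eta+\etabar)\cdot\betabar\cdot\alphabar$ in the proof of Proposition \ref{prop:curvmain}. The main obstacle is the bookkeeping at the top derivative order: the $\mathfrak{D}\Gamma$ factor in $\Lambda_2'$ includes $\mathfrak{D}^{k}r\nablaslash b$ and $\mathfrak{D}^k(\Gammaslash-\Gammaslash^\circ)$, for which only the weaker outgoing-null-hypersurface estimates \eqref{eq:bba}, \eqref{eq:Gammaslashbanull} are available; one must therefore integrate those particular summands in $v'$ before $u'$ and verify in each case that the remaining weights still match either $F^1$, $F^2$, or an $r^{-a}$ with $a>1$ that yields the desired $1/v_0$ smallness.
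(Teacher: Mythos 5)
Your overall strategy — expanding the errors via Proposition \ref{prop:commutedBianchi}, Cauchy--Schwarz, Sobolev on the lower-order factor, and absorbing everything into $\int_{u_0}^u F^1\,du'$, $1/v_0$ and $\bc$ — is the paper's strategy, and your handling of the $\mathcal{T}$-terms, the generic nonlinear terms, and the $r^{-2}|\alphabar|^2$ source of the $1/v_0$ is sound. But there are two genuine gaps. First, your treatment of the cross terms $\Lambda_1\cdot\mathfrak{D}^k\uppsi_{p'}'$ in $E_3[\mathfrak{D}^k\uppsi_p]$ fails as stated: a symmetric split yielding $r^{q-1}|\mathfrak{D}^k\uppsi_{p'}'|^2$ does \emph{not} match the $F^2$-weights (for $(\beta,(\rho,\sigma))$ one gets $r^3|\mathfrak{D}^k\rho|^2$ against $r^4$ in $F^2$, so Fubini produces $\int_{v_0}^v \tfrac{1}{v'}F^2_{u_0,u}(v')\,dv'$, which log-diverges even under the bootstrap bound $F^2\le\overline{C}$), and in any case invoking ``the Gr\"onwall loop yielding \eqref{eq:curvpregron1}--\eqref{eq:curvpregron2}'' is circular, since that loop is run only after this lemma is established, and the lemma's right-hand side contains no $F^2$ term. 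The correct move is an asymmetric Cauchy--Schwarz split so that \emph{both} factors carry their $F^1$-weights: $r^q\,\rp_1\,|\mathfrak{D}^k\uppsi_p||\mathfrak{D}^k\uppsi_{p'}'|\le C\bigl(r^{2p-4}|\mathfrak{D}^k\uppsi_p|^2+r^{2p'-4}|\mathfrak{D}^k\uppsi_{p'}'|^2\bigr)$, with $r^{2p'-4}$ exactly the weight of $\uppsi_{p'}'$ in $F^1$ (and $\alphabar$, carrying $r^{-2}$, handled via $F^2$ and $\int_{v_0}^v v'^{-2}dv'\le C/v_0$).

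Second, your blanket claim that ``the $p$-index matching guarantees total $r$-decay of order $r^{-p}$'' does not close the argument for the pair $(\alpha,\beta)$, where $q(\alpha,\beta)=5=2p-3$ rather than $2p-4$ (cf.\ Remark \ref{rmk:peeling}). For a term $\Gamma_{p_1}\cdot\psi_{p_2}$ with only $p_1+p_2\ge p$, the split leaves $r^{2q+4-2p}|\mathfrak{D}^k\alpha|^2=r^{6}|\mathfrak{D}^k\alpha|^2$, which exceeds the $r^5$ weight available in $F^1$. One must invoke the improved thresholds: the sums in $E_3[\alpha]$ and $E_3[\beta]$ begin at $p+\tfrac12$, and the first sum in $E_4[\uppsi_{p'}']$ begins at $p'+2$ except for the single term $\eta^{\#}\cdot\alpha$ in $E_4[\beta]$, where the slack $2q(\alpha,\beta)-2p'=q'(\beta)-1$ absorbs the deficit. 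Finally, your concern about $\mathfrak{D}^3 r\nablaslash b$ and $(r\nablaslash)^3(\Gammaslash-\Gammaslash^{\circ})$ entering through $\Lambda_2'$ is moot: $\Lambda_2'$ contains only \emph{first}-order derivatives of $\Gamma$ and multiplies $\mathfrak{D}^{k-1}\uppsi_{p'}'$, so $r|\mathfrak{D} b|\le C$ from \eqref{eq:Riccisob} suffices; the top-order estimates of Section \ref{section:Riccitop} are not needed (and relying on them here would risk another circularity, as they themselves use Proposition \ref{prop:curvmain}).
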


\begin{proof}
	For the sake of brevity, unless specified otherwise $\int$ will denote the integral 
	\[
		\int_{u_0}^u \int_{v_0}^v \int_{S_{u',v'}} d \mu_{S_{u',v'}} d v' d u'.
	\]
	
	Consider first the errors in the $\nablaslash_3$ Bianchi equations.  Recall from Proposition \ref{prop:Bianchi} and Proposition \ref{prop:commutedBianchi} that
	\[
		E_3[\mathfrak{D}^k \uppsi_p] =  \mathfrak{D}(E_3[\mathfrak{D}^{k-1}\uppsi_{p}]) + \Lambda_1 ( \mathfrak{D}^k \uppsi_p + \mathfrak{D}^k\uppsi_{p'}') + \Lambda_1( \mathfrak{D}^{k-1}\uppsi_p + \mathfrak{D}^{k-1}\uppsi_{p'}') ,
	\]
	for $1 \leq k \leq 3$, and
	\begin{equation} \label{eq:Bianchierror3}
		E_3[\uppsi_p] = \rp_1 \uppsi_p + \sum_{p_1+p_2 \geq p} \Gamma_{p_1} \cdot \psi_{p_2} + \sum_{p_1+p_2\geq p} \rp_{p_1} \mathfrak{D} \mathcal{T}_{p_2} + \sum_{p_1+p_2\geq p} \Gamma_{p_1} \cdot \mathcal{T}_{p_2}.
	\end{equation}
	
	The first term in $E_3[\uppsi_p]$ will contribute terms of the form $\rp_1 \mathfrak{D}^{k'} \uppsi_p$ to the error where $0\leq k'\leq k$ (recall that $\mathfrak{D} \rp_1 = \rp_1$) and these can be dealt with easily
	\begin{align*}
		& \int_{u_0}^u \int_{v_0}^v \int_{S_{u',v'}} r^{q(\uppsi_{p},\uppsi_{p'}')} \rp_1 \mathfrak{D}^k \uppsi_{p} \cdot \mathfrak{D}^{k'} \uppsi_p d \mu_{S_{u',v'}} d v' d u' \\
		& \qquad \qquad \qquad \leq C \int_{u_0}^u \int_{v_0}^v \int_{S_{u',v'}} r^{q(\uppsi_{p},\uppsi_{p'}')} \left( \vert \mathfrak{D}^k \uppsi_{p} \vert^2 + \vert \mathfrak{D}^{k'} \uppsi_p \vert^2 \right) d \mu_{S_{u',v'}} d v' d u' \\
		& \qquad \qquad \qquad \leq C \int_{u_0}^u F^1_{v_0,v}(u') d u'
	\end{align*}
	
	The second term in $E_3[\uppsi_p]$ will contribute terms of the form $\mathfrak{D}^{k_1} \Gamma_{p_1} \cdot \mathfrak{D}^{k_2} \psi_{p_2}$ where $p_1 + p_2 \geq p$ and $0 \leq k_1,k_2 \leq k$.  Note also that, since $k_1 + k_2 = k$, at most one of $k_1$ or $k_2$ can be greater than 1.  Assume first that $k_1 \leq 1$.
	
	Suppose $\uppsi_p \neq \alpha, \beta$, then $q(\uppsi_{p},\uppsi_{p'}') = 2p-4$ and
	\begin{align*}
		\int r^q \mathfrak{D}^{k_1} \Gamma_{p_1} \cdot \mathfrak{D}^{k_2} \psi_{p_2} \cdot \mathfrak{D}^k \uppsi_p & \leq \sup_{u',v'} \left( r^{p_1} \Vert \mathfrak{D}^{k_1} \Gamma_{p_1} \Vert_{L^{\infty}} \right) \int r^{q + p_2 - p} \vert \mathfrak{D}^{k_2} \psi_{p_2} \vert \vert \mathfrak{D}^k \uppsi_p \vert \\
		& \leq C \int \left( r^{2p_2 - 4} \vert \mathfrak{D}^{k_2} \psi_{p_2} \vert^2 + r^{2q + 4 - 2p} \vert \mathfrak{D}^k \uppsi_p \vert^2 \right) \\
		& \leq C \left( \int_{u_0}^u F^1_{v_0,v}(u') d u' + \frac{1}{v_0} \right),
	\end{align*}
	where the first line follows from the Sobolev inequality \eqref{eq:Riccisob} (and the fact that $k_1 \leq 1$) and uses $p_1 + p_2 \geq p$.  The third line uses the fact that $q = 2p-4$ (and recall that $\int \frac{\vert \alphabar \vert}{r^2} \leq \frac{C}{v_0}$).
	
	If $\uppsi_p = \alpha$ or $\beta$ then $q(\uppsi_{p},\uppsi_{p'}') = 2p-3$ and the second term in the second line above would be $\int r^{q +1} \vert \mathfrak{D}^k \uppsi_p \vert^2$ which can't be controlled by the last line.  The sum in the error \eqref{eq:Bianchierror3}, however, begins at $p + \frac{1}{2}$ for $\alpha$ and $\beta$, and so in the first line in the above would have $p_1 + p_2 \geq p + \frac{1}{2}$.  Using this fact these terms can be controlled.
	
	If $k_1 > 1$ then it must be the case that $k_2 \leq 1$.  The above steps can then be repeated but using the Sobolev inequality \eqref{eq:curvsob} for $\mathfrak{D}^{k_2} \psi_{p_2}$.  For $\uppsi_p \neq \alpha, \beta$ then get
	\begin{align*}
		\int r^q \mathfrak{D}^{k_1} \Gamma_{p_1} \cdot \mathfrak{D}^{k_2} \psi_{p_2} \cdot \mathfrak{D}^k \uppsi_p
		& \leq C \int r^{q + p_1 - p} \vert \mathfrak{D}^{k_1} \Gamma_{p_1} \vert \vert \mathfrak{D}^k \uppsi_p \vert \\
		& \leq C \int \left( r^{2p_1 - 4} \vert \mathfrak{D}^{k_1} \Gamma_{p_1} \vert^2 + r^{2q + 4 - 2p} \vert \mathfrak{D}^k \uppsi_p \vert^2 \right) \\
		& \leq C \left( \int_{u_0}^u F^1_{v_0,v}(u') d u' + \frac{1}{v_0} \right),
	\end{align*}
	where the last line now uses the fact that
	\begin{equation} \label{eq:gammabound}
		\int r^{2p_1 - 4} \vert \mathfrak{D}^{k_1} \Gamma_{p_1} \vert^2 \leq C(u_f - u_0) \int_{v_0}^v \frac{1}{r^2} d v' \leq \frac{C}{v_0},
	\end{equation}
	by the bootstrap assumption \eqref{eq:Ricciba} and the fact that $v\sim r$ in the ``wave zone''.  Similarly for $\uppsi_p = \alpha, \beta$.
	
	The third term in $E_3[\uppsi_p]$ will contribute terms of the form $\rp_{p_1} \mathfrak{D}^{k'} \mathcal{T}_{p_2}$ to $E_3[ \mathfrak{D}^k \uppsi_p]$ where $1 \leq k' \leq k+1$ and $p_1 + p_2 \geq p$.  Recall that, if $\psi_p = \alpha$ or $\beta$ then actually $p_1 + p_2 \geq p + \frac{1}{2}$.  If $\psi_p \neq \alpha, \beta$ then $q = 2p-4$ and,
	\begin{align*}
		\int r^q \rp_{p_1} \mathfrak{D}^{k'} \mathcal{T}_{p_2} \cdot \mathfrak{D}^k \uppsi_p 
		&
		\leq C \int r^{p+p_2 - 4} \mathfrak{D}^{k'} \mathcal{T}_{p_2} \cdot \mathfrak{D}^k \uppsi_p
		\\
		&
		\leq C \int r^{2p_2 - 4} \vert \mathfrak{D}^{k'} \mathcal{T}_{p_2} \vert^2 + r^{2p-4} \vert \mathfrak{D}^k \uppsi_p \vert^2
		\\
		&
		\leq C \left( \int_{u_0}^u F^1_{v_0,v}(u') du' + \frac{1}{v_0} + \bc \right),
	\end{align*}
	by Proposition \ref{prop:tmain}.  Similarly, if $\uppsi_p = \alpha$ or $\beta$, then $q = 2p-3$ and $p_1 \geq p - p_2 + \frac{1}{2}$, so $r^q \leq r^{p+p_2 - \frac{7}{2}}$ and,
	\begin{align*}
		\int r^q \rp_{p_1} \mathfrak{D}^{k'} \mathcal{T}_{p_2} \cdot \mathfrak{D}^k \uppsi_p
		&
		\leq C \int r^{2p_2 - 4} \vert \mathfrak{D}^{k'} \mathcal{T}_{p_2} \vert^2 + r^{2p-3} \vert \mathfrak{D}^k \uppsi_p \vert^2
		\\
		&
		\leq C \left( \int_{u_0}^u F^1_{v_0,v}(u') du' + \frac{1}{v_0} + \bc \right).
	\end{align*}
	The final term in $E_3[\uppsi_p]$ contributes terms of the form $ \mathfrak{D}^{k_1} \Gamma_{p_1} \cdot \mathfrak{D}^{k_2} \mathcal{T}_{p_2}$ with $0\leq k_1,k_2 \leq k$, $k_1 + k_2 = k$ and $p_1 + p_2 \geq p$, or $p_1 + p_2 \geq p + \frac{1}{2}$ if $\uppsi_p = \alpha$ or $\beta$.  These terms can be dealt with as before using the fact that either $k_1 \leq 1$ or $k_2 \leq 1$, and the pointwise bounds for $\mathcal{T}_p, \mathfrak{D} \mathcal{T}_p, \mathfrak{D}^2 \mathcal{T}_p$ from Proposition \ref{prop:tmain}.
	
	The final terms in $E_3[\mathfrak{D}^k \uppsi_p]$, i.\@e.\@ the terms of the form $\mathfrak{D}^{k_1} \Lambda_1 \cdot \mathfrak{D}^{k_2} \uppsi_p$ etc.\@ can be dealt with similarly (since all of the terms in $\Lambda_1$ are zeroth order and the numerology for the Sobolev inequalities still work out).
	
	The errors in the $\nablaslash_4$ Bianchi equations can be dealt with in a similar manner.  First recall that
	\begin{align*}
		E_4[\mathfrak{D}^k \uppsi_{p'}'] = & \ \mathfrak{D}(E_4[\mathfrak{D}^{k-1}\uppsi_{p'}']) + \rp_0 E_4[\uppsi_{p'}'] + \Lambda_1\mathfrak{D}^k \uppsi_p + \Lambda_2\mathfrak{D}^k\uppsi_{p'}'  \\
	 & + \Lambda_1 \mathfrak{D}^{k-1} \uppsi_p + \Lambda_2' \mathfrak{D}^{k-1}\uppsi_{p'}',
	\end{align*}
	and
	\begin{equation} \label{eq:Bianchierror4}
		E_4[\uppsi_{p'}'] = \sum_{p_1+p_2 \geq p' + \frac{3}{2}} \Gamma_{p_1} \cdot \psi_{p_2} 
		+ \sum_{p_1 + p_2 \geq p+2} \rp_{p_1} \mathfrak{D} \mathcal{T}_{p_2} 
		+ \sum_{p_1 + p_2 \geq p+2} \Gamma_{p_1} \cdot \mathcal{T}_{p_2}.
	\end{equation}
	Recall also that the first summation in the error \eqref{eq:Bianchierror4} always begins at $p'+2$, except for the term $\eta^{\#} \cdot \alpha$ appearing in $E_4[\beta]$.
	
	Assume first then that $\uppsi_{p'}' \neq \beta$.  Terms in the first sum \eqref{eq:Bianchierror4} will then contribute terms of the form $\mathfrak{D}^{k_1} \Gamma_{p_1} \cdot \mathfrak{D}^{k_2} \psi_{p_2}$ to the error $E_4[\mathfrak{D}^k \uppsi_{p'}']$, where $p_1+p_2\geq p'+2$, $0\leq k_1,k_2\leq k$ and at most one of $k_1,k_2$ is bigger than 1.  Again, suppose first that $k_1\leq 1$.  If $\uppsi_{p'}' \neq \beta$ then $2 q(\uppsi_{p},\uppsi_{p'}') - 2p' = q'(\uppsi_{p'}')$\footnote{Here $q'(\uppsi_{p'}')$ denotes the power of $r$ multiplying $\vert \uppsi_{p'}'\vert^2$ in $F^1_{v_0,v}(u')$.  So, for example, $q'(\betabar) = 0$, whilst $q((\rho,\sigma),\betabar) = 2$.  Set $q'(\alphabar) = -2$.} and so
	\begin{align*}
		\int r^q \mathfrak{D}^{k_1} \Gamma_{p_1} \cdot \mathfrak{D}^{k_2} \psi_{p_2} \cdot \mathfrak{D}^k \uppsi_{p'}' & \leq \sup_{u',v'} \left( r^{p_1} \Vert \mathfrak{D}^{k_1} \Gamma_{p_1} \Vert_{L^{\infty}} \right) \int r^{q + p_2 - p' - 2} \vert \mathfrak{D}^{k_2} \psi_{p_2} \vert \vert \mathfrak{D}^k \uppsi_{p'}' \vert \\
		& \leq C \int \left( r^{2p_2 - 4} \vert \mathfrak{D}^{k_2} \psi_{p_2} \vert^2 + r^{2q - 2p'} \vert \mathfrak{D}^k \uppsi_{p'}' \vert^2 \right) \\
		& \leq C \left( \int_{u_0}^u F^1_{v_0,v}(u') d u' + \frac{1}{v_0} \right).
	\end{align*}
	If $\uppsi_{p'}' = \beta$ then will have terms of the form $\mathfrak{D}^{k_1} \Gamma_{p_1} \cdot \mathfrak{D}^{k_2} \psi_{p_2}$ with $p_1+p_2\geq p'+\frac{3}{2}$, however $2 q(\alpha,\beta) - 2p' = 3 = q'(\beta) - 1$.  Hence
	\begin{align*}
		\int r^q \mathfrak{D}^{k_1} \Gamma_{p_1} \cdot \mathfrak{D}^{k_2} \psi_{p_2} \cdot \mathfrak{D}^k \uppsi_{p'}' & \leq \sup_{u',v'} \left( r^{p_1} \Vert \mathfrak{D}^{k_1} \Gamma_{p_1} \Vert_{L^{\infty}} \right) \int r^{q + p_2 - p' - \frac{3}{2}} \vert \mathfrak{D}^{k_2} \psi_{p_2} \vert \vert \mathfrak{D}^k \uppsi_{p'}' \vert \\
		& \leq C \int \left( r^{2p_2 - 4} \vert \mathfrak{D}^{k_2} \psi_{p_2} \vert^2 + r^{2q - 2p' + 1} \vert \mathfrak{D}^k \uppsi_{p'}' \vert^2 \right) \\
		& \leq C \left( \int_{u_0}^u F^1_{v_0,v}(u') d u' + \frac{1}{v_0} \right).
	\end{align*}
	
	The second summation in $E_4[\uppsi_{p'}']$, \eqref{eq:Bianchierror4}, will contribute terms of the form $\rp_{p_1} \mathfrak{D}^{k'} \mathcal{T}_{p_2}$ to $E_4[\mathfrak{D}^k \uppsi_{p'}']$, where $1\leq k' \leq k+1$ and $p_1 + p_2 \geq p+2$.  Since $2q(\uppsi_p, \uppsi_{p'}') - 2p' \leq q'(\uppsi_{p'}')$,
	\begin{align*}
		\int r^q \rp_{p_1} \mathfrak{D}^{k'} \mathcal{T}_{p_2} \cdot \mathfrak{D}^k \uppsi_{p'}'
		&
		\leq C \int r^{q-p + p_2 - 2} \vert \mathfrak{D}^{k'} \mathcal{T}_{p_2} \vert  \vert \mathfrak{D}^k \uppsi_{p'}' \vert
		\\
		&
		\leq C \int r^{2p_2 - 4} \vert \mathfrak{D}^{k'} \mathcal{T}_{p_2} \vert^2 + r^{q'} \vert \mathfrak{D}^k \uppsi_{p'}' \vert^2
		\\
		&
		\leq C \left( \int_{u_0}^u F^1_{v_0,v}(u') du' + \frac{1}{v_0} + \bc \right).
	\end{align*}
	The final summation in \eqref{eq:Bianchierror4} contributes terms of the form $\mathfrak{D}^{k_1} \Gamma_{p_1} \cdot \mathfrak{D}^{k_2} \mathcal{T}_{p_2}$ to $E_4[\mathfrak{D}^k \uppsi_{p'}']$, where $0 \leq k_1,k_2 \leq k$, $k_1 + k_2 = k$ and $p_1+p_2 \geq p' + 2$.  These terms can be treated similarly using the fact that either $k_1\leq 1$ or $ k_2 \leq 1$, and the pointwise bounds for $\mathcal{T}_{p_2}, \mathfrak{D} \mathcal{T}_{p_2}$ from Proposition \ref{prop:tmain}.
	
	The remaining terms in $E_4[\mathfrak{D}^k \uppsi_{p'}']$ can again be dealt with similarly.  It is important to note that $\Lambda_1$ and $\Lambda_2$ both contain zero-th order derivatives only of $\Gamma$ and $\psi$.  Whilst $\Lambda_2'$ does contain first order derivatives of the form $\mathfrak{D} \Gamma$, they only appears in $E_4[\mathfrak{D}^{k'} \uppsi_{p'}']$ multiplying $\mathfrak{D}^{k'-1} \psi_p$.  Hence, when these terms (for $k'\leq k$) appear in $E_4[\mathfrak{D}^k \uppsi_{p'}']$, it will always be possible to control one of the terms in the product pointwise via the Sobolev inequality.
\end{proof}

\section{Transport Estimates for Ricci Coefficients} \label{section:Ricci}
In this section the Ricci coefficients are estimated in $L^2$ on each of the spheres $S_{u,v}$ through transport estimates for the null structure equations.  This is done by using the identities, which hold for any scalar function $h$,
\begin{equation} \label{eq:transportid4}
	\partial_v \left( \int_{S_{u,v}} h d \mu_{S_{u,v}} \right) = \int_{S_{u,v}} \nablaslash_4 h + h \tr \chi d \mu_{S_{u,v}},
\end{equation}
and
\begin{equation} \label{eq:transportid3}
	\partial_u \left( \int_{S_{u,v}} h d \mu_{S_{u,v}} \right) = \int_{S_{u,v}} \left( \nablaslash_3 h + h \tr \chibar \right) \Omega^2 d \mu_{S_{u,v}},
\end{equation}
with $h = r^{2p-2} \vert \mathfrak{D}^k \Gamma_p \vert^2$.

The quantities $\Riccit$ and $\Riccif$ are treated separately.  Recall the set $\mathcal{A}$ from Theorem \ref{thm:main4}.

\subsection{Null Structure Equations in the Outgoing Direction}
Consider first the $\Riccif$ quantities, which satisfy null structure equations in the outgoing direction.
\begin{proposition} \label{prop:Ricci4main}
	If $x \in \mathcal{A}$ and $u = u(x)$, $v = v(x)$ then, for each $\Riccif$ and each $k = 0,1,2,3$,
	\[
		r^{2p-2} \int_{S_{u,v}} \vert \mathfrak{D}^k \Riccif \vert^2 d \mu_{S_{u,v}} 
		\leq 
		C \left( \bc + \frac{1}{v_0} \right),
	\]
	for some constant $C$.
\end{proposition}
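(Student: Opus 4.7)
The plan is to use the transport identity \eqref{eq:transportid4} with the choice $h = r^{2p-2}|\mathfrak{D}^k \Riccif|^2$, integrate from $v_0$ to $v$, and control the resulting error term. The key algebraic point, foreshadowed in Remark \ref{rem:renorm} and Section \ref{subsec:introRicci}, is that the weight $r^{2p-2}$ is precisely the one that cancels the potentially non-integrable $\tr\chi$ contribution. Using Proposition \ref{prop:commutednullstructure} one has
\[
    \nablaslash_4(\mathfrak{D}^k \Riccif) + \tfrac{p}{2}\tr\chi\,\mathfrak{D}^k \Riccif = E_4[\mathfrak{D}^k \Riccif],
\]
and, since $\nablaslash_4 r = 1$ and $\tr\chi_\circ = 2/r$, a direct computation using the Leibniz rule shows
\[
    (\nablaslash_4 + \tr\chi)\bigl(r^{2p-2}|\mathfrak{D}^k \Riccif|^2\bigr)
    = 2 r^{2p-2}\,\mathfrak{D}^k \Riccif \cdot E_4[\mathfrak{D}^k \Riccif] + (1-p)\,r^{2p-2}(\tr\chi - \tr\chi_\circ)|\mathfrak{D}^k \Riccif|^2,
\]
with the $r^{2p-3}$ terms from $\nablaslash_4(r^{2p-2})$ and from $\tr\chi_\circ$ canceling exactly. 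Feeding this into \eqref{eq:transportid4}, absorbing the last term (which behaves like $r^{-2}$ times the quantity we are estimating, by the bootstrap pointwise bound $|\tr\chi - \tr\chi_\circ| \leq C/r^2$), and integrating from $v_0$ to $v$ yields
\[
    r^{2p-2}\!\int_{S_{u,v}}|\mathfrak{D}^k\Riccif|^2\,d\mu_{S_{u,v}}
    \leq
    r(u,v_0)^{2p-2}\!\int_{S_{u,v_0}}|\mathfrak{D}^k\Riccif|^2\,d\mu_{S_{u,v_0}}
    + 2\int_{v_0}^v\!\int_{S_{u,v'}} r^{2p-2}|\mathfrak{D}^k\Riccif|\,|E_4[\mathfrak{D}^k\Riccif]|\,d\mu\, dv'.
\]
The initial data term is bounded by $C\bc$ by assumption.

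The bulk of the work is estimating the error integral. Apply Cauchy--Schwarz in the weighted form
\[
    2r^{2p-2}|\mathfrak{D}^k\Riccif|\,|E_4[\mathfrak{D}^k\Riccif]|
    \leq
    \frac{1}{r^2}\,r^{2p-2}|\mathfrak{D}^k\Riccif|^2 + r^{2p}|E_4[\mathfrak{D}^k\Riccif]|^2,
\]
so that after integration in $v'$ the first term produces a factor $\int_{v_0}^v dv'/r(u,v')^2 = O(1/v_0)$ which, by Grönwall, can be absorbed into an overall constant close to $1$. It remains to verify
\[
    \int_{v_0}^v\!\int_{S_{u,v'}} r^{2p}|E_4[\mathfrak{D}^k\Riccif]|^2\,d\mu\,dv'
    \leq C\Bigl(\bc + \tfrac{1}{v_0}\Bigr).
\]
By Proposition \ref{prop:commutednullstructure}, $E_4[\mathfrak{D}^k\Riccif]$ schematically consists of $\psi_{p+2}$, $\mathcal{T}_{p+2}$, $\Gamma\!\cdot\!\Gamma$-type nonlinearities with total weight at least $p+2$, and (after commutation) lower-order copies of the same structure dressed with $\Lambda_2, \Lambda_2'$. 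For each such term the factor $r^{2p}\cdot (r^{-(p+2)})^2 = r^{-4}$ together with the spherical volume $\sim r^2$ makes the integrand globally $v'$-integrable.

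The terms are then distributed as follows, using the now-standard principle: whenever one factor in a nonlinear product is below top order, it is estimated pointwise via the Sobolev inequalities \eqref{eq:Riccisob}, \eqref{eq:curvsob} and Proposition \ref{prop:tmain}, while the remaining factor is retained in $L^2(S_{u,v'})$ and reintegrated against the bootstrap assumption \eqref{eq:Ricciba}--\eqref{eq:emba}; at top order the curvature contributions $r^{2p}|\mathfrak{D}^k\psi_{p+2}|^2$ match exactly the weights appearing in $F^1_{v_0,v}(u)$, and are therefore controlled by Proposition \ref{prop:curvmain}, while the energy-momentum contributions are handled by the incoming-null-hypersurface estimate \eqref{eq:em3in} of Proposition \ref{prop:tmain}. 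The resulting smallness is $C(\bc + 1/v_0)$, with the $\bc$ coming from the initial data bounds on curvature and from Proposition \ref{prop:tmain}, and the $1/v_0$ coming from the $r^{-2}$ pointwise bounds on lower-order $\Gamma$'s, as in the schematic computation \eqref{eq:gammabound}.

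I expect the main obstacle to be purely bookkeeping: the reductive structure of the nonlinearities in $E_4[\mathfrak{D}^k\Riccif]$ must be respected so that at no stage does one need a pointwise bound on a top-order derivative, and the transfer of the curvature bulk norms back to the null-hypersurface form compatible with $F^1_{v_0,v}(u)$ must be done carefully (in particular ensuring, term by term, that the weights in $r^{2p}|E_4|^2$ never exceed those of the corresponding curvature or $\mathcal{T}$ fluxes). No genuinely new ideas beyond those sketched in Section \ref{subsec:introRicci} are needed once this accounting is carried out.
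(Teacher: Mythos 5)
Your proposal is correct and follows essentially the same route as the paper's proof: the transport identity \eqref{eq:transportid4} with $h = r^{2p-2}\vert \mathfrak{D}^k \Riccif\vert^2$, the exact cancellation of the $\tr\chi_{\circ}$ contribution forced by the weight $2p-2$, the Cauchy--Schwarz split into $r^{2p-4}\vert\mathfrak{D}^k\Riccif\vert^2 + r^{2p}\vert E_4\vert^2$, and the term-by-term comparison of weights with the curvature flux and the energy--momentum estimates. Two small corrections to the bookkeeping: the $\mathcal{T}$-contribution is integrated here over the outgoing hypersurface $\{u=\mathrm{const}\}$ in $v'$, so the relevant estimate is \eqref{eq:em3out} with $q=p+2$ (note $r^{2p} = r^{2q-4}$), not the incoming-hypersurface estimate \eqref{eq:em3in}; and the weights $r^{2p}\vert\mathfrak{D}^k\psi_{p+2}\vert^2$ do not ``match exactly'' those in $F^1_{v_0,v}(u)$ but are strictly weaker (e.g.\ $r^4$ versus $r^5$ for $\alpha$), which is precisely what produces the $\frac{1}{v_0}$ smallness factor.
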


\begin{proof}
Recall from Proposition \ref{prop:commutednullstructure} that the null structure equations in the $4$ direction take the form
\[
	\nablaslash_4 (\mathfrak{D}^k \overset{(4)}{\Gamma_p}) + \frac{p}{2} \ tr \chi \ \mathfrak{D}^k \overset{(4)}{\Gamma_p} = E_4[\mathfrak{D}^k \overset{(4)}{\Gamma_p}].
\]
Using the renormalisation of Remark \ref{rem:renorm} and the fact that $e_4(r^{-2}) = - r^{-2} \tr \chi_{\circ}$, the identity \eqref{eq:transportid4} with $h = r^{2p-2} \vert \mathfrak{D}^k \Gamma_p \vert^2$ implies that
\begin{align*}
	&
	\partial_v \left( r^{2p-2} \int_{S_{u,v}} \vert \mathfrak{D}^k \Riccif \vert^2 d \mu_{S_{u,v}} \right)
	\\
	& \quad
	= \int_{S_{u,v}} 2 r^{p-2} \Riccif \cdot \nablaslash_4 (r^p \mathfrak{D}^k \Riccif) + r^{2p} \vert \mathfrak{D}^k \Riccif \vert^2 e_4(r^{-2}) + r^{2p-2} \vert \mathfrak{D}^k \Riccif \vert^2 \tr \chi \ d \mu_{S_{u,v}}
	\\
	& \quad
	= \int_{S_{u,v}} 2 r^{p-2} \mathfrak{D}^k \Riccif \cdot \left( r^p E_4[\mathfrak{D}^k \overset{(4)}{\Gamma_p}] \right) + r^{2p-2} \vert \mathfrak{D}^k \Riccif \vert^2 \left( \tr \chi - \tr \chi_{\circ} \right) d \mu_{S_{u,v}}
	\\
	& \quad
	= 2 r^{2p-2} \int_{S_{u,v}} \mathfrak{D}^k \Riccif \cdot E_4[\mathfrak{D}^k \Riccif] \ d \mu_{S_{u,v}},
\end{align*}
where in the last line the term $\vert \mathfrak{D}^k \Riccif \vert^2 \left( \tr \chi - \tr \chi_{\circ} \right) = \mathfrak{D}^k \Riccif \cdot \mathfrak{D}^k \Gamma_p \cdot \Gamma_2$ has been absorbed into the error $\mathfrak{D}^k \Riccif \cdot E_4[\mathfrak{D}^k \overset{(4)}{\Gamma_p}]$.

Note that a precise cancellation occurs here.  If one were to apply \eqref{eq:transportid4} with $h = r^q \vert \mathfrak{D}^k \Gamma_p \vert^2$ for any $q \neq 2p-2$, there would be an additional term of the form $\rp_1 \vert \mathfrak{D}^k \Riccif \vert^2$ in the integral in the last line above.  It would not be possible to deal with this term as the terms in $\mathfrak{D}^k \Riccif \cdot E_4[\mathfrak{D}^k \overset{(4)}{\Gamma_p}]$ are dealt with below.

Integrating gives
\begin{align*}
	r^{2p-2} \int_{S_{u,v}} \vert \mathfrak{D}^k \Riccif \vert^2 d \mu_{S_{u,v}} \leq \
	&
	r(u,v_0)^{2p-2} \int_{S_{u,v_0}} \vert \mathfrak{D}^k \Riccif \vert^2 d \mu_{S_{u,v_0}}
	\\
	&
	+ 2\int_{v_0}^v r^{2p-2} \int_{S_{u,v'}} \mathfrak{D}^k \Riccif \cdot E_4[\mathfrak{D}^k \Riccif] \ d \mu_{S_{u,v'}} d v',
\end{align*}
so that it remains to bound the error terms.

Recall that
\[
	E_4[\mathfrak{D}^k \Riccif] =  \mathfrak{D} (E_4[\mathfrak{D}^{k-1} \Riccif]) + \Lambda_2 \cdot \mathfrak{D}^{k} \Riccif + \Lambda_2 ' \cdot\mathfrak{D}^{k-1} \Riccif ,
\]
for $k = 1,2,3$ and
\begin{equation} \label{eq:Riccierror4}
	E_4[\Riccif] = \psi_{p+2} + \sum_{p_1+p_2 \geq p + 2} \rp_{p_1} \cdot \Gamma_{p_2} + \sum_{p_1+p_2 \geq p + 2} \Gamma_{p_1} \cdot \Gamma_{p_2} + \mathcal{T}_{p+2}.
\end{equation}
The first term in \eqref{eq:Riccierror4} will contribute a term of the form $\mathfrak{D}^{k} \uppsi_{p+2}$ to the error $E_4[\mathfrak{D}^k \Riccif]$.  This term can be easily dealt with as follows.  Here $\int$ will be used to denote the integral
\[
	\int_{v_0}^v \int_{S_{u,v'}} d \mu_{S_{u,v'}} dv',
\]
(instead of the full spacetime integral in Section \ref{section:curvature}).  By the Cauchy--Schwarz inequality,
\[
	\int r^{2p-2} \mathfrak{D}^k \Riccif \cdot \mathfrak{D}^{k} \psi_{p+2} \leq \int r^{2p-4} \vert \mathfrak{D}^k \Riccif \vert^2 + \int r^{2p} \vert \mathfrak{D}^{k} \psi_{p+2} \vert^2.
\] 
The first term is clearly bounded by $\frac{C}{v_0}$ as in \eqref{eq:gammabound}.  Using the fact that the only curvature components appearing in the $\nablaslash_4 \Riccif$ equations are $\alpha, \beta$ (so that $\psi_{p+2} \in \{\alpha ,\beta \}$), one can explicitly check that the second term can be controlled by $\frac{1}{v_0} F^1_{v_0,v} (u)$ and hence, by the bootstrap assumption \eqref{eq:curvatureba},
\[
	\int r^{2p-2} \mathfrak{D}^k \Riccif \cdot \mathfrak{D}^{k} \psi_{p+2} \leq \frac{C}{v_0}.
\]

Consider now the terms in $E_4[\mathfrak{D}^k \Riccif]$ arising from the first sum in \eqref{eq:Riccierror4}.  These will all be of the form $\rp_{p_1} \mathfrak{D}^{k'} \Gamma_{p_2}$ where $0\leq k'\leq k$ and $p_1 + p_2 \leq p + 2$ and so
\begin{align*}
	\int r^{2p-2} \rp_{p_1} \mathfrak{D}^{k'} \Gamma_{p_2} \cdot \mathfrak{D}^k \Riccif \leq
	&
	\ C\int r^{p+p_2-4} \vert \mathfrak{D}^{k'} \Gamma_{p_2} \vert \vert \mathfrak{D}^k \Riccif \vert
	\\
	\leq
	&
	\ C \left( \int r^{2p_2 - 4} \vert \mathfrak{D}^{k'} \Gamma_{p_2} \vert^2 + \int r^{2p-4} \vert \mathfrak{D}^k \Riccif \vert^2 \right)
	\\
	\leq
	&
	\ \frac{C}{v_0}.
\end{align*}

The terms arising from the second sum will have the form $\mathfrak{D}^{k_1} \Gamma_{p_1} \cdot \mathfrak{D}^{k_2} \Gamma_{p_2}$ where $p_1+p_2 \geq p+2$, $k_1+k_2 = k$ and, since $k \leq 3$, interchanging $k_1$ and $k_2$ if necessary, $k_1\leq 1$.  These terms can be dealt with exactly as the previous terms by using the Sobolev inequality \eqref{eq:Riccisob} on $\mathfrak{D}^{k_1} \Gamma_{p_1}$.

Similarly, for the $\mathfrak{D}^k \mathcal{T}_{p+2}$ term in $E_4 [ \mathfrak{D}^k \Riccif]$,
\[
	\int r^{2p-2} \mathfrak{D}^k \Riccif \cdot \mathfrak{D}^k \mathcal{T}_{p+2}
	\leq
	\int r^{2p-4} \vert \mathfrak{D}^k \Riccif \vert^2 + r^{2p} \vert \mathfrak{D}^k \mathcal{T}_{p+2} \vert^2.
\]
Setting $q = p+2$, the second term is of the form,
\[
	\int_{v_0}^v \int_{S_{u,v'}} r^{2	q-4} \vert \mathfrak{D}^k \mathcal{T}_{q} \vert^2 d\mu_{S_{u,v'}} d v',
\]
and hence, since $k\leq 3$, Proposition \ref{prop:tmain} implies that,
\[
	\int r^{2p-2} \mathfrak{D}^k \Riccif \cdot \mathfrak{D}^k \mathcal{T}_{p+2}
	\leq
	C \left( \frac{1}{v_0} + \bc \right).
\]

The remaining terms in $E_4[\mathfrak{D}^k \Riccif]$ can be dealt with in exactly the same way using the fact that $\Lambda_2$ contains only zeroth-order derivatives, and $\Lambda_2'$ contains only first order derivatives of Ricci coefficients (see the end of the proof of Lemma \ref{lem:curvatureerrors}).

\end{proof}

\subsection{Null Structure Equations in the Incoming Direction}

The $\Riccit$ quantities are estimated in roughly the same way as the $\Riccif$ quantities.  Since the $u$ coordinate is bounded above by $u_f$ however, the term
\[
	C\int_{u_0}^u r^{2p-2} \int_{S_{u',v}} \vert \Riccit \vert^2 d \mu_{S_{u',v}} d u'
\]
can appear on the right hand side of the estimates and be dealt with by the Gr\"{o}nwall inequality.  The estimates will also rely on the results of Proposition \ref{prop:curvmain} and Proposition \ref{prop:Ricci4main}.  It is also worth noting that we do not rely on any cancellation occurring when applying the identity \eqref{eq:transportid3}, as was the case for the $\Riccif$ quantities.

\begin{proposition} \label{prop:Ricci3main}
	If $x \in \mathcal{A}$ and $u = u(x)$, $v = v(x)$ then, for each $\Riccit$ and each $k = 0,1,\ldots ,3$,
	\begin{align*}
	r^{2p-2} \int_{S_{u,v}} \vert \mathfrak{D}^k \Riccit \vert^2 d \mu_{S_{u,v}} \leq
	&
	\ C \Bigg( \bc + \frac{1}{v_0} \Bigg).
\end{align*}
	for some constant $C$.
\end{proposition}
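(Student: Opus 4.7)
The proof would adapt the strategy of Proposition \ref{prop:Ricci4main} to the incoming direction, using now the identity \eqref{eq:transportid3} in place of \eqref{eq:transportid4}. Setting $h = r^{2p-2}|\mathfrak{D}^k \Riccit|^2$ and substituting the commuted equation $\nablaslash_3(\mathfrak{D}^k \Riccit) = E_3[\mathfrak{D}^k \Riccit]$ from Proposition \ref{prop:commutednullstructure}, one obtains
\[
\partial_u \!\left( r^{2p-2} \int_{S_{u,v}} |\mathfrak{D}^k \Riccit|^2 d\mu_{S_{u,v}} \right) = \int_{S_{u,v}}\!\! \Omega^2 \left( 2 r^{2p-2} \mathfrak{D}^k \Riccit \cdot E_3[\mathfrak{D}^k \Riccit] + \mathcal{R} \right) d\mu_{S_{u,v}},
\]
where $\mathcal{R}$ collects the contributions $e_3(r^{2p-2})|\mathfrak{D}^k \Riccit|^2$ and $r^{2p-2}|\mathfrak{D}^k \Riccit|^2 \tr\chibar$. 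Crucially, unlike in the $\Riccif$ case we do \emph{not} need any cancellation: $\mathcal{R}$ is bounded pointwise by $C r^{-1} \cdot r^{2p-2} |\mathfrak{D}^k \Riccit|^2$, and since $u$ ranges over the finite interval $[u_0,u_f]$ the resulting contribution can simply be carried through the forthcoming Gr\"onwall loop.

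The bulk of the work is to estimate the nonlinear terms in $E_3[\mathfrak{D}^k \Riccit]$. Its schematic structure (from Proposition \ref{prop:Bianchi} and Proposition \ref{prop:commutednullstructure}) is
\[
E_3[\mathfrak{D}^k \Riccit] \sim \mathfrak{D}^k \psi_p + \mathfrak{D}^k \mathcal{T}_p + \sum_{p_1+p_2 \geq p} \mathfrak{D}^{k_1} \Gamma_{p_1} \cdot \mathfrak{D}^{k_2} \Gamma_{p_2} + \text{(lower order)},
\]
where $k_1+k_2 = k \leq 3$ so $\min(k_1,k_2) \leq 1$. I would treat each family exactly as in the proof of Proposition \ref{prop:Ricci4main}: the curvature terms are handled by Cauchy--Schwarz, splitting into $r^{-2}|\mathfrak{D}^k \Riccit|^2 + r^{2p}|\mathfrak{D}^k \psi_p|^2$; the first piece gives the finite $u$-integral to be closed by Gr\"onwall, and the second is bounded upon integration in $u$ by $C/v_0$ using the curvature flux bound of Proposition \ref{prop:curvmain} together with the definition of $F^2_{u_0,u}(v)$. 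The energy-momentum terms contribute $C\data$ via \eqref{eq:em3in} of Proposition \ref{prop:tmain}. The products $\mathfrak{D}^{k_1}\Gamma_{p_1} \cdot \mathfrak{D}^{k_2}\Gamma_{p_2}$ are handled by placing the factor of lower differentiability pointwise via the Sobolev inequality \eqref{eq:Riccisob}, and the remaining factor in $L^2$; if that remaining factor is a $\Riccif$ it is controlled by Proposition \ref{prop:Ricci4main}, while if it is a $\Riccit$ of the same order as the quantity being estimated the argument requires the additional structural input discussed below.

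The main obstacle, and the only genuinely new point relative to Proposition \ref{prop:Ricci4main}, is the presence in $E_3[\mathfrak{D}^k \Riccit]$ of other $\Riccit$ quantities which have not yet been estimated. Here one must exploit the \emph{reductive structure} of the null structure equations \eqref{eq:nsfirst}--\eqref{eq:nslast}, which states that each equation can be rewritten in the form $\nablaslash_3 \Riccit = E_{p+1} + E_p^0$, where $E_{p+1}$ decays one power of $r$ faster than the natural rate and $E_p^0$ contains only curvature, energy-momentum, $\Riccif$ and \emph{previously estimated} $\Riccit$ terms. I would verify this by inspection, choosing an ordering such as $\omega,\; \hat{\chibar},\; \tr\chibar - \tr\chibar_\circ,\; \etabar,\; b,\; \gslash - \gslash^\circ$, checking in turn that every $\Riccit$ appearing in the corresponding right-hand side is either contracted against a factor carrying an extra $r^{-1}$ (so that Cauchy--Schwarz produces an integrable $r^{-2}$), or was already controlled at an earlier step of the induction. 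A parallel check must be made on the commutator error terms in $\Lambda_1$ (see Proposition \ref{prop:commutednullstructure}) once the equation is commuted with $\mathfrak{D}^k$.

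With the reductive ordering in place, combining the above estimates gives
\[
r^{2p-2} \int_{S_{u,v}} |\mathfrak{D}^k \Riccit|^2 d\mu_{S_{u,v}} \leq r(u_0,v)^{2p-2} \int_{S_{u_0,v}} |\mathfrak{D}^k \Riccit|^2 d\mu_{S_{u_0,v}} + C\!\left(\data + \tfrac{1}{v_0}\right) + C \!\int_{u_0}^u\! r^{2p-2} \!\int_{S_{u',v}} |\mathfrak{D}^k \Riccit|^2 d\mu_{S_{u',v}} du'.
\]
The initial-data term is bounded by $C\data$ by hypothesis of Theorem \ref{thm:main3}; since $u - u_0 \leq u_f - u_0$ is bounded, the Gr\"onwall inequality applied in $u$ then closes the estimate and yields the claim.
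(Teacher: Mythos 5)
Your overall architecture matches the paper's proof: the transport identity \eqref{eq:transportid3} with $h = r^{2p-2}|\mathfrak{D}^k\Riccit|^2$, no cancellation needed in the $\tr\chibar$ term because the $u$-interval is finite, Cauchy--Schwarz plus the Sobolev inequality \eqref{eq:Riccisob} for the nonlinear terms, the reductive ordering for the borderline terms, and Gr\"onwall in $u$. However, the ordering you commit to fails at its very first step. With $\omega$ placed first, the borderline term $2(\eta,\etabar)$ in $E_3[\omega]$ (here $\eta = \Gamma_1$, $\etabar = \Gamma_2$, so $p_1+p_2 = 3 = p$) contains the $\Riccit$ quantity $\etabar$, which by your own criteria is neither accompanied by an extra power of $r^{-1}$ nor previously estimated. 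Concretely, when two or three derivatives fall on $\etabar$ you must place $\mathfrak{D}^{k_1}\eta$ in $L^\infty$ and are left needing $\int_{u_0}^u\int_{S_{u',v}} r^{2}|\mathfrak{D}^{k_2}\etabar|^2\,d\mu\,du'$ with a small constant --- the very quantity being proved for $\etabar$; the bootstrap assumption alone only yields $\overline{C}(u_f-u_0)$, which does not improve the constant. The paper's ordering starts with $\hat{\chibar}$, whose error contains no borderline terms at all, then $\etabar$, whose borderline terms $\rp_1\eta$ and $\hat{\chibar}\cdot\eta$ involve only the already-estimated $\hat{\chibar}$ and the $\Riccif$ quantity $\eta$, and only afterwards $\tr\chibar-\tr\chibar_\circ$, $\omega$, $\gslash-\gslash^{\circ}$ and $b$.

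A second, smaller slip: your Cauchy--Schwarz split of the curvature term into $r^{2p-4}|\mathfrak{D}^k\Riccit|^2 + r^{2p}|\mathfrak{D}^k\psi_p|^2$ is the one adapted to the \emph{outgoing} equations (where the curvature error is $\psi_{p+2}$). On the incoming hypersurfaces the flux $F^2_{u_0,u}(v)$ carries exactly the weight $r^{2p-2}$ for each $\psi_p\neq\alpha$, so $\int_{u_0}^u\int_{S}r^{2p}|\mathfrak{D}^k\psi_p|^2$ exceeds it by $r^2$ and is not controlled, and there is no extra $1/v_0$ to be gained. The correct split is the symmetric one, $r^{2p-2}|\mathfrak{D}^k\Riccit|^2 + r^{2p-2}|\mathfrak{D}^k\psi_p|^2$, whose curvature piece is precisely $F^2_{u_0,u}(v)\leq C(\data + 1/v_0)$ by Proposition \ref{prop:curvmain}; this works because $\alpha$ never appears in the $\nablaslash_3\Riccit$ equations. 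Both issues are local and fixable, but as written the verification step you describe would not close.
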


\begin{proof}
Recall the upper bound on $\Omega$.

For fixed $0 \leq k\leq 3$, setting $h = r^{2p-2} \vert \mathfrak{D}^k \Riccit \vert^2$ in the identity \eqref{eq:transportid3} and using the commuted equations,
\[
	\nablaslash_3 (\mathfrak{D}^k \Riccit) = E_3[\mathfrak{D}^k \Riccit],
\]
one obtains
\begin{align*}
	\partial_u \left( r^{2p-2} \int_{S_{u,v}} \vert \mathfrak{D}^k \Riccit \vert^2 d_{\mu_{S_{u,v}}} \right)
	&
	\leq C \int_{S_{u,v}} r^{2p-2} \mathfrak{D}^k \Riccit \cdot \nablaslash_3 (\mathfrak{D}^k \Riccit )
	\\
	&
	\qquad \qquad + \left\vert e_3(r^{2p-2}) + r^{2p-2} \tr \chibar \right\vert \vert \mathfrak{D}^k \Riccit \vert^2 d\mu_{S_{u,v}}
	\\
	&
	\leq C \int_{S_{u,v}} r^{2p-2} \mathfrak{D}^k \Riccit \cdot E_3[\mathfrak{D}^k \Riccit]
	+ \rp_1 r^{2p-2}\vert \mathfrak{D}^k \Riccit \vert^2 d\mu_{S_{u,v}}.
\end{align*}
The last line is obtained by recalling that $e_3(r^{2p-2}) = \frac{-1}{\Omega^2} \frac{(2p-2)}{r} r^{2p-2}$, using the lower bound for $\Omega$, rewriting $\tr \chibar = (\tr \chibar - \tr \chibar_{\circ}) + \tr \chibar_{\circ}$ and absorbing the term $\vert \mathfrak{D}^k \Riccit \vert^2 (\tr \chibar - \tr \chibar_{\circ})$ into the error $\mathfrak{D}^k \Riccit \cdot E_3[\mathfrak{D}^k \Riccit]$.

Integrating from $u_0$ gives
\begin{align*}
	r^{2p-2} \int_{S_{u,v}} \vert \mathfrak{D}^k \Riccit \vert^2 d_{\mu_{S_{u,v}}}
	\leq C \int_{u_0}^u \int_{S_{u',v}} r^{2p-2} \mathfrak{D}^k \Riccit \cdot E_3[\mathfrak{D}^k \Riccit]
	+ \rp_1 r^{2p-2}\vert \mathfrak{D}^k \Riccit \vert^2 d\mu_{S_{u',v}} d u'.
\end{align*}
The final term will be dealt with by the Gr\"{o}nwall inequality, so it remains to bound the integrals of the error terms.  Here $\int$ will denote the integral
\[
	\int_{u_0}^u \int_{S_{u',v}} d\mu_{S_{u',v}} du'.
\]
Recall that
\[
	E_3[\mathfrak{D}^k \Riccit] = \mathfrak{D} (E_3[\mathfrak{D}^{k-1} \Riccit]) + \Lambda_1 ( \mathfrak{D}^{k} \Riccit + \mathfrak{D}^{k-1} \Riccit)
\]
for $k = 1,\ldots ,3$, and
\begin{equation} \label{eq:Riccierror3}
	E_3[\Riccit] = \psi_p + \sum_{p_1+p_2 \geq p} \rp_{p_1} \cdot \Gamma_{p_2} + \sum_{p_1+p_2 \geq p} \Gamma_{p_1} \cdot \Gamma_{p_2} + \mathcal{T}_p.
\end{equation}

The curvature term in \eqref{eq:Riccierror3} will contribute a term of the form $\mathfrak{D}^k \psi_p$ to $E_3[\mathfrak{D}^k \Riccit]$, and
\begin{align*}
	\int r^{2p-2} \mathfrak{D}^k \Riccit \cdot \mathfrak{D}^k \psi_p \leq \int r^{2p-2} \vert \mathfrak{D}^k \Riccit \vert^2 + \int r^{2p-2} \vert \mathfrak{D}^k \psi_p \vert^2.
\end{align*}
The Gr\"{o}nwall inequality will be used on the first term.  For the second term note that, for $\psi_p \neq \alpha$, the $r$ weight of $\mathfrak{D}^k \psi_p$ which appears in $F^2_{u_0,u}(v)$ is $r^{2p-2}$.  Hence, since $\alpha$ doesn't appear in any $\nablaslash_3 \Riccit$ equations, the second term can be controlled by $F^2_{u_0,u}(v)$ and, by Proposition \ref{prop:curvmain},
\[
	\int r^{2p-2} \vert \mathfrak{D}^k \psi_p \vert^2 \leq C \left( F^1_{v_0,v}(u_0) + F^2_{u_0,u}(v_0) + \frac{1}{v_0} \right).
\]

Similarly the energy momentum tensor term in \eqref{eq:Riccierror3} will contribute a term of the form $\mathfrak{D}^k \mathcal{T}_p$ to $E_3 [ \mathfrak{D}^k \Riccit ]$ and,

\begin{align*}
	\int r^{2p-2} \mathfrak{D}^k \Riccit \cdot \mathfrak{D}^k \mathcal{T}_p
	&
	\leq \int r^{2p-2} \vert \mathfrak{D}^k \Riccit \vert^2 + r^{2p-2} \vert \mathfrak{D}^k \mathcal{T}_p \vert^2
	\\
	&
	\leq \int r^{2p-2} \vert \mathfrak{D}^k \Riccit \vert^2 + C \left( \frac{1}{v_0} + \bc \right),
\end{align*}
by Proposition \ref{prop:tmain}.

Consider now the terms in \eqref{eq:Riccierror3} of the form\footnote{The ``borderline terms'' in \eqref{eq:Riccierror3}, i.\@e.\@ the terms $\rp_{p_1} \cdot \Gamma_{p_2}$ and $\Gamma_{p_1} \cdot \Gamma_{p_2}$ for which $p_1 + p_2 = p$ are slightly more problematic and will be dealt with separately.}
\[
	\sum_{p_1+p_2 \geq p + 1} \rp_{p_1} \cdot \Gamma_{p_2} + \sum_{p_1+p_2 \geq p + 1} \Gamma_{p_1} \cdot \Gamma_{p_2}.
\]
The first sum contributes terms of the form $\rp_{p_1} \mathfrak{D}^{k'} \Gamma_{p_2}$ to the error $E_3[\mathfrak{D}^k \Riccit]$ where $0\leq k'\leq k$ and $p_1 + p_2 \geq p + 1$, so that
\begin{align*}
	\int r^{2p-2} \rp_{p_1} \mathfrak{D}^{k} \Riccit \cdot \mathfrak{D}^{k'} \Gamma_{p_2}
	&
	\leq C \int r^{p-1}r^{p_2 - 2} \vert \mathfrak{D}^{k} \Riccit \vert \vert \mathfrak{D}^{k'} \Gamma_{p_2}\vert
	\\
	&
	\leq C \left( \int r^{2p-2} \vert \mathfrak{D}^{k} \Riccit \vert^2 + \int r^{2p_2-4} \vert \mathfrak{D}^{k'} \Gamma_{p_2}\vert^2 \right)
	\\
	&
	\leq C \left( \int r^{2p-2} \vert \mathfrak{D}^{k} \Riccit \vert^2 + \frac{1}{v_0} \right),
\end{align*}
where the last inequality follows from the fact that
\[
	\int r^{2p_2-4} \vert \mathfrak{D}^{k'} \Gamma_{p_2}\vert^2 \leq C \int_{u_0}^u \frac{1}{r(u',v)^2} d u' \leq C(u_f - u_0) \frac{1}{v_0},
\]
since $r \sim v$ in the ``wave zone'' and $v_0$ is large.

The terms arising from the second summation are dealt with similarly using the Sobolev inequality, as are the terms $\Lambda_1 ( \mathfrak{D}^{k} \Riccit + \mathfrak{D}^{k-1} \Riccit)$ and the similar terms arising from lower order errors.

Note that $E_3[\hat{\chibar}]$ contains no ``borderline terms''\footnote{Terms of the form $\rp_{p_1} \cdot \Gamma_{p_2}$ or $\Gamma_{p_1} \cdot \Gamma_{p_2}$ for which $p_1 + p_2 = p$} and so in the above it has been shown, for $k = 0,1,\ldots ,3$,
\begin{multline*}
	r^{2p-2} \int_{S_{u,v}}  \vert \mathfrak{D}^k \hat{\chibar} \vert^2 d \mu_{S_{u,v}} \leq
	C \Big( \int_{u_0}^u \int_{S_{u',v}} r^{2p-2}\vert \mathfrak{D}^k \hat{\chibar} \vert^2 d \mu_{S_{u',v}} d u'  + F^1_{v_0,v}(u_0)
	\\
	 + F^2_{u_0,u}(v_0) + r(u_0,v)^{2p-2} \int_{S_{u_0,v}} \vert \mathfrak{D}^k \hat{\chibar} \vert^2 d \mu_{S_{u_0,v}} + \frac{1}{v_0} \Big),
\end{multline*}
and hence, by the Gr\"{o}nwall inequality
\begin{align*}
	r^{2p-2} \int_{S_{u,v}}  \vert \mathfrak{D}^k \hat{\chibar} \vert^2 d \mu_{S_{u,v}} \leq
	C \Big( r(u_0,v)^{2p-2} \int_{S_{u_0,v}} \vert \mathfrak{D}^k \hat{\chibar} \vert^2 d \mu_{S_{u_0,v}} + F^1_{v_0,v}(u_0)
	 + F^2_{u_0,u}(v_0) + \frac{1}{v_0} \Big).
\end{align*}
This proves the proposition for $\hat{\chibar}$.

The error $E_3[\etabar]$ contains two borderline terms $\rp_1\eta$ and $\hat{\chibar} \cdot \eta$.  The idea is that these terms can be dealt with since the proposition has already been proved for $\hat{\chibar}$ and $\eta$ was controlled in Proposition \ref{prop:Ricci4main}.  Consider first the term $\hat{\chibar} \cdot \eta$.  This will contribute terms of the form $\mathfrak{D}^{k_1} \hat{\chibar} \cdot \mathfrak{D}^{k_2} \eta$ to $E_3[\mathfrak{D}^k \etabar]$, where $k_1 + k_2 = k$.  Assume $k_1 \leq 1$, then
\begin{align*}
	\int r^2 \mathfrak{D}^{k_1} \hat{\chibar} \cdot \mathfrak{D}^{k_2} \eta \cdot \mathfrak{D}^{k} \etabar 
	\leq C\int r \vert \mathfrak{D}^{k_2} \eta \vert \vert \mathfrak{D}^{k} \etabar \vert
	\leq C\left( \int \vert \mathfrak{D}^{k_2} \eta \vert^2 + \int r^2 \vert \mathfrak{D}^{k} \etabar \vert^2 \right),
\end{align*}
and similarly if $k_1 > 1$ then it must be the case that $k_2 \leq 1$ and so 
\[
	\int r^2 \mathfrak{D}^{k_1} \hat{\chibar} \cdot \mathfrak{D}^{k_2} \eta \cdot \mathfrak{D}^{k} \etabar 
	\leq C\left( \int \vert \mathfrak{D}^{k_2} \hat{\chibar} \vert^2 + \int r^2 \vert \mathfrak{D}^{k} \etabar \vert^2 \right).
\]
Repeating this for the terms arising from $\rp_1\eta$, using the bounds already obtained for $\int \vert \mathfrak{D}^{k_2} \hat{\chibar} \vert^2$, Proposition \ref{prop:Ricci4main} and the Gr\"{o}nwall inequality this gives,
\begin{align*}
	r^{2} \int_{S_{u,v}}  \vert \mathfrak{D}^k \etabar \vert^2 d \mu_{S_{u,v}} \leq
	&
	\ C \Bigg( \sum_{k'=0}^k \sum_{\Riccit} r(u_0,v)^{2p-2} \int_{S_{u_0,v}} \vert \mathfrak{D}^{k'} \Riccit \vert^2 d \mu_{S_{u_0,v}}
	\\
	&
	\qquad \qquad + \sum_{k'=0}^k \sum_{\Riccif} r(u,v_0)^{2p-2} \int_{S_{u,v_0}} \vert \mathfrak{D}^{k'} \Riccif \vert^2 d \mu_{S_{u,v_0}}
	\\
	&
	\qquad \qquad + F^1_{v_0,v}(u_0) + F^2_{u_0,u}(v_0) + \frac{1}{v_0} \Bigg).
\end{align*}

The only borderline term in $E_3[\tr \chi - \tr \chi_{\circ}]$ is $(\eta,\etabar)$, the only borderline term in $E_3[\omega]$ is $(\eta,\etabar)$, the only borderline terms in $E_3[\gslash - \gslash^{\circ}]$ are $\left( \frac{1}{\Omega^2} - 1\right)\rp_1$ and $\hat{\chibar}$, and the only borderline term in $E_3[b]$ is $\eta$.  Since either the proposition has already been proved for each of these terms, or they were controlled in Proposition \ref{prop:Ricci4main}, they can be dealt with exactly as before.

\end{proof}

\subsection{Estimates for $\Gammaslash$}

In order to estimate $\Gammaslash - \Gammaslash^{\circ}$, it is first necessary to derive equations which they satisfy.

\begin{proposition}
	The spherical Christoffel symbols satisfy the following propagation equations,
	\begin{align} \label{eq:Gammaslash4}
		\nablaslash_4 \left( \Gammaslash - \Gammaslash^{\circ} \right)^C_{AB} =
		& \
		\nablaslash_A {\chi_B}^C + \nablaslash_B {\chi_A}^C - \nablaslash^C \chi_{BA} 
		\\
		& \
		- \left( {\chi_A}^D - \nablaslash_A b^D + b^E \Gammaslash_{AE}^D \right) \left( \Gammaslash - \Gammaslash^{\circ} \right)_{DB}^C
		\nonumber \\
		& \
		- \left( {\chi_B}^D - \nablaslash_B b^D + b^E \Gammaslash_{BE}^D \right) \left( \Gammaslash - \Gammaslash^{\circ} \right)_{AD}^C
		\nonumber \\
		& \  
		+ \left( {\chi_D}^C - \nablaslash_D b^C + b^E \Gammaslash_{DE}^C \right) \left( \Gammaslash - \Gammaslash^{\circ} \right)_{AB}^D,
		\nonumber \\
		\nablaslash_3 \left( \Gammaslash  - \Gammaslash^{\circ} \right)^C_{AB} =
		& \
		\nablaslash_A {\chibar_B}^C + \nablaslash_B {\chibar_A}^C - \nablaslash^C \chibar_{BA} 
		- {\chibar_A}^D \left( \Gammaslash  - \Gammaslash^{\circ} \right)_{DB}^C 
		\label{eq:Gammaslash3}
		\\
		& \
		- {\chibar_B}^D \left( \Gammaslash  - \Gammaslash^{\circ} \right)_{AD}^C + {\chibar_D}^C \left( \Gammaslash - \Gammaslash^{\circ} \right)_{AB}^D.
		\nonumber
	\end{align}
\end{proposition}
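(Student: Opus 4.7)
The proof rests on two observations: (i) the difference $T^C_{AB} := \Gammaslash^C_{AB} - \Gammaslash^{\circ C}_{AB}$ of two torsion-free connections is a genuine $(1,2)$ $S_{u,v}$-tensor, so the projected covariant derivatives $\nablaslash_3 T$ and $\nablaslash_4 T$ are computed by the formulas given at the end of Section \ref{section:ba}; and (ii) the first variational formulas \eqref{eq:firstvar3}, \eqref{eq:firstvar4} permit the application of the Palatini identity
\[
\dot{\Gammaslash}{}^C_{AB} = \tfrac{1}{2}\gslash^{CD}\bigl(\nablaslash_A \dot{\gslash}_{BD} + \nablaslash_B\dot{\gslash}_{AD} - \nablaslash_D\dot{\gslash}_{AB}\bigr),
\]
valid for an arbitrary one-parameter variation of $\gslash$.

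For the $\nablaslash_3$ equation I would first note that in each spherical coordinate chart $\Gammaslash^{\circ C}_{AB}$ depends only on $(\theta^1,\theta^2)$, since the conformal factor $r^2$ does not alter the Christoffel symbols of $r^2\gamma$; hence $e_3(\Gammaslash^{\circ C}_{AB}) = 0$. Next, because $[e_3,e_A]$ has no component tangential to $S_{u,v}$ (it is proportional to $\partial_u$), the Lie derivative identity $\mathcal{L}_{e_3}\gslash_{AB} = 2\chibar_{AB}$ reduces to $e_3(\gslash_{AB}) = 2\chibar_{AB}$. Applying $e_3$ to the coordinate expression $\Gammaslash^C_{AB} = \tfrac{1}{2}\gslash^{CD}(e_A\gslash_{BD}+e_B\gslash_{AD}-e_D\gslash_{AB})$ and invoking Palatini with $\dot{\gslash}_{AB} = 2\chibar_{AB}$ yields
\[
e_3(\Gammaslash^C_{AB}) = \nablaslash_A\chibar^C_B + \nablaslash_B\chibar^C_A - \nablaslash^C\chibar_{BA},
\]
the commutator contributions $[e_3,e_A]\gslash_{BD}\propto \nablaslash_A(\log\Omega)\chibar_{BD}$ cancelling in the symmetric combination. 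Substituting this into the definition of $\nablaslash_3 T$ and using $e_3(\Gammaslash^{\circ C}_{AB}) = 0$ then produces \eqref{eq:Gammaslash3}.

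For the $\nablaslash_4$ equation the strategy is identical in spirit, but one must now track the tangential piece of $e_4 = \partial_v + b^F\partial_F$. This is the principal obstacle: the tangential piece contributes (i) an extra $-\partial_A(b^F)\gslash_{FB} - \partial_B(b^F)\gslash_{AF}$ to the relation between $e_4(\gslash_{AB})$ and $\mathcal{L}_{e_4}\gslash_{AB} = 2\chi_{AB}$; (ii) a nonzero action $e_4(\Gammaslash^{\circ C}_{AB}) = b^F e_F(\Gammaslash^{\circ C}_{AB})$ on the background Christoffel symbols; and (iii) the $-\nablaslash_A b^D + b^E\Gammaslash^D_{AE}$ factors inside the definition of $\nablaslash_4$ acting on a $(1,2)$ tensor at the end of Section \ref{section:ba}. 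The key bookkeeping task is to verify that, upon applying Palatini both to $\gslash$ with variation $\mathcal{L}_{e_4}\gslash = 2\chi$ and to $\gslash^{\circ}$ with variation $\mathcal{L}_{e_4}\gslash^{\circ}_{AB} = \tr\chi_{\circ}\gslash^{\circ}_{AB} + \nablaslash^{\circ}_A b_B + \nablaslash^{\circ}_B b_A$, and then subtracting, every uncovariantised $\partial b$ term and every $b\cdot\Gammaslash^{\circ}$ term cancels against a matching contribution coming from the $\nablaslash_4$-correction on $T$ and from the background Palatini computation. What survives is exactly the Palatini expression $\nablaslash_A\chi^C_B + \nablaslash_B\chi^C_A - \nablaslash^C\chi_{BA}$ together with the three linear-in-$T$ correction terms appearing on the right hand side of \eqref{eq:Gammaslash4}.
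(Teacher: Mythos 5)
Your proposal follows essentially the same route as the paper's proof: both reduce the claim to the coordinate formulas for $\nablaslash_3$, $\nablaslash_4$ acting on the $(1,2)$ $S_{u,v}$ tensor $\Gammaslash-\Gammaslash^{\circ}$, the vanishing of $e_3\left({\Gammaslash^{\circ}}^C_{AB}\right)$, and the first-variation identities $e_3\left(\Gammaslash^C_{AB}\right)=\nablaslash_A{\chibar_B}^C+\nablaslash_B{\chibar_A}^C-\nablaslash^C\chibar_{BA}$ and its $e_4$ analogue, which the paper simply cites from Lemma 4.1 of \cite{Ch} while you rederive them via the Palatini identity applied to \eqref{eq:firstvar3}, \eqref{eq:firstvar4}. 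The one step to double-check is your parenthetical claim that the $[e_3,e_A]$ commutator contributions ``cancel in the symmetric combination'': since $[e_3,e_A]=-\partial_A(\Omega^{-2})\partial_u$ and $\partial_u\gslash_{BD}=2\Omega^2\chibar_{BD}$, these contributions assemble into $(\eta+\etabar)\cdot\chibar$-type terms that do not cancel among themselves; they are, however, lower-order quadratic terms absorbed harmlessly into the errors in the subsequent estimates, and since the paper's cited identity likewise omits them your conclusion agrees with the proof in the paper.
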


\begin{proof}
	Recall
	\begin{align*}
		\nablaslash_3 \left( \Gammaslash  - \Gammaslash^{\circ} \right)^C_{AB} =
		& \
		e_3 \left( \Gammaslash^C_{AB}  - {\Gammaslash^{\circ}}^C_{AB} \right) 
		- {\chibar_A}^D \left( \Gammaslash  - \Gammaslash^{\circ} \right)_{DB}^C 
		\\
		& \
		- {\chibar_B}^D \left( \Gammaslash  - \Gammaslash^{\circ} \right)_{AD}^C + {\chibar_D}^C \left( \Gammaslash - \Gammaslash^{\circ} \right)_{AB}^D.
	\end{align*}
	The equation in the $e_3$ direction follows from the fact that,
	\[
		e_3 \left({\Gammaslash^{\circ}}^C_{AB} \right) = 0,
	\]
	and,
	\[
		e_3 \left( \Gammaslash^C_{AB} \right) =
		\nablaslash_A {\chibar_B}^C + \nablaslash_B {\chibar_A}^C - \nablaslash^C \chibar_{BA}.
	\]
	See Lemma 4.1 of \cite{Ch}.
	
	The equation in the $e_4$ direction can similarly be derived using the fact that,
	\[
		e_4 \left( \Gammaslash^C_{AB} \right) =
		\nablaslash_A {\chi_B}^C + \nablaslash_B {\chi_A}^C - \nablaslash^C \chi_{BA}.
	\]
\end{proof}

\begin{proposition} \label{prop:Gammaslashmain}
	If $x \in \mathcal{A}$ and $u = u(x)$, $v = v(x)$ then, for $k = 0,1,2,3$, if $\mathfrak{D}^k$ contains $\mathfrak{D} = r\nablaslash$ at most 2 times, then
	\[
		\int_{S_{u,v}} \left\vert \mathfrak{D}^k \left( \Gammaslash - \Gammaslash^{\circ} \right) \right\vert^2 d \mu_{S_{u,v}} \leq C \left( \data + \frac{1}{v_0} \right).
	\]
\end{proposition}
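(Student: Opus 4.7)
The plan is to treat $\Gammaslash - \Gammaslash^{\circ}$ as a $\Gamma_1$-type quantity and combine the strategies of Propositions \ref{prop:Ricci4main} and \ref{prop:Ricci3main}. Decomposing $\chi = \hat{\chi} + \frac{1}{2}\tr\chi\,\gslash$ and isolating the contribution of the Minkowski value $\tr\chi_{\circ} = 2/r$ in the quadratic terms of \eqref{eq:Gammaslash4}, the $e_4$ equation can be rearranged as $\nablaslash_4(\Gammaslash - \Gammaslash^{\circ}) + \frac{1}{2}\tr\chi\,(\Gammaslash - \Gammaslash^{\circ}) = \nablaslash\chi + \Gamma_2 \cdot (\Gammaslash - \Gammaslash^{\circ}) + \ldots$, i.e.\@ exactly in the form of \eqref{eq:firstRicci} with $p = 1$ and principal error $\nablaslash \chi$; the analogous rearrangement of \eqref{eq:Gammaslash3} produces a $\nablaslash_3$ equation with principal error $\nablaslash\chibar$ and a linear $\frac{1}{r}(\Gammaslash - \Gammaslash^{\circ})$ correction that can be absorbed by Gr\"onwall. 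Commuting these equations with $\mathfrak{D}^k$ via Lemma \ref{lemma:commutation} produces errors with the same schematic structure as in Proposition \ref{prop:commutednullstructure}, with principal terms $\mathfrak{D}^k \nablaslash \chi$ and $\mathfrak{D}^k \nablaslash \chibar$.

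I would then apply the renormalised transport identity \eqref{eq:transportid4} with $h = |\mathfrak{D}^k(\Gammaslash - \Gammaslash^{\circ})|^2$ (the $p=1$ renormalisation produces the same $\tr\chi$-cancellation as in Proposition \ref{prop:Ricci4main}) and the identity \eqref{eq:transportid3}, obtaining
\begin{equation*}
\int_{S_{u,v}} |\mathfrak{D}^k (\Gammaslash - \Gammaslash^{\circ})|^2 d\mu_S \leq \int_{S_{u,v_0}} |\mathfrak{D}^k (\Gammaslash - \Gammaslash^{\circ})|^2 d\mu_S + C \int_{v_0}^v\int_{S_{u,v'}} \bigl|\mathfrak{D}^k (\Gammaslash - \Gammaslash^{\circ})\bigr| \cdot \bigl|E_4[\mathfrak{D}^k (\Gammaslash - \Gammaslash^{\circ})]\bigr| d\mu_S dv'
\end{equation*}
and an analogous $u$-integrated version, with the extra $\rp_1|\mathfrak{D}^k(\Gammaslash - \Gammaslash^{\circ})|^2$ term absorbed by Gr\"onwall (since $u$ ranges over the bounded interval $[u_0,u_f]$). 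The sub-leading errors $\rp_{p_1}\Gamma_{p_2}$, $\Gamma_{p_1}\cdot\Gamma_{p_2}$, $\mathcal{T}$, $\psi$ and their derivatives are bounded exactly as in the proofs of Propositions \ref{prop:Ricci4main} and \ref{prop:Ricci3main}: Sobolev inequalities control one factor pointwise, Cauchy--Schwarz and the bootstrap assumptions, together with Propositions \ref{prop:tmain}, \ref{prop:curvmain}, \ref{prop:Ricci4main} and \ref{prop:Ricci3main}, handle the other. For $k \leq 2$ the principal contribution $\mathfrak{D}^k \nablaslash \chi$ (resp.\@ $\mathfrak{D}^k \nablaslash \chibar$) is a derivative of order at most three of a Ricci coefficient, hence controlled in $L^2$ on the sphere by Propositions \ref{prop:Ricci3main} and \ref{prop:Ricci4main}, and a Cauchy--Schwarz step followed by Gr\"onwall produces the claimed $C(\data + 1/v_0)$ bound.

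The main obstacle is the top-order case $k = 3$, where formally $\mathfrak{D}^3 \nablaslash \chi$ is a fourth-order derivative of a Ricci coefficient and is not estimated in $L^2$ on spheres by Propositions \ref{prop:Ricci3main} or \ref{prop:Ricci4main}. Here I would exploit the hypothesis that $\mathfrak{D}^3$ contains $r\nablaslash$ at most twice: at least one of the three operators is $\nablaslash_3$ or $r\nablaslash_4$. Commuting this non-angular operator inward past the remaining factors by Lemma \ref{lemma:commutation} generates only commutator terms of the form $\Gamma \cdot \mathfrak{D}^{\leq 3}(\Gammaslash - \Gammaslash^{\circ})$, $\psi \cdot \mathfrak{D}^{\leq 2}(\Gammaslash - \Gammaslash^{\circ})$, etc., all controlled as above. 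The innermost $\nablaslash_3 \chibar$ or $r\nablaslash_4 \chi$ (or $\chibar$) is then replaced using one of the null structure equations \eqref{eq:nsetabar}--\eqref{eq:nslast}, \eqref{eq:chibarhat4}, \eqref{eq:chihat3}, converting the problematic fourth-order derivative into three-angular-derivative combinations of Weyl curvature, Ricci coefficients, $(\Gammaslash - \Gammaslash^{\circ})$ and $\mathcal{T}$. These are controlled in $L^2$ on the null hypersurfaces $\{u=\mathrm{const}\}$ and $\{v=\mathrm{const}\}$ by Proposition \ref{prop:curvmain}, Propositions \ref{prop:Ricci3main}, \ref{prop:Ricci4main} (applied pointwise in one null direction and integrated in the other), Proposition \ref{prop:tmain}, and the bootstrap \eqref{eq:Gammaslashbanull} for $(r\nablaslash)^3(\Gammaslash - \Gammaslash^{\circ})$; the substitutions \eqref{eq:chibarhat4}--\eqref{eq:chihat3} moreover generate terms with a third angular derivative of $\etabar$ or $\eta$ which at the highest order are to be controlled by the auxiliary $\Theta$ estimates proved in Section \ref{section:Riccitop}. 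A final Gr\"onwall argument in $u$ then closes the estimate with the required $C(\data + 1/v_0)$ bound.
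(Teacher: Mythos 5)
Your treatment of $k\le 2$ is sound and is essentially the paper's argument: the right-hand sides of \eqref{eq:Gammaslash3} and \eqref{eq:Gammaslash4} are schematically $\sum \rp_{p_1}\cdot\mathfrak{D}\Gamma_{p_2}\cdot(1+\gslash)+\sum\rp_{p_1}\cdot(\rp_{p_2}+\Gamma_{p_2})\cdot r(\Gammaslash-\Gammaslash^{\circ})$, so after commuting at most twice the errors contain at most three derivatives of $\hat{\chi},\hat{\chibar},\tr\chi-\frac{2}{r},\tr\chibar+\frac{2}{r}$, all controlled in $L^2(S_{u,v})$ by Propositions \ref{prop:Ricci4main} and \ref{prop:Ricci3main}, and the transport scheme of Proposition \ref{prop:Ricci3main} closes (there are not even borderline terms).

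The gap is at $k=3$. Because you insist on running a transport estimate at top order, your error necessarily contains $\mathfrak{D}^{3}\nablaslash\chibar$ (resp.\@ $\mathfrak{D}^{3}\nablaslash\chi$), a genuine fourth-order derivative of a Ricci coefficient; to absorb it you commute the non-angular operator onto $\chibar$ or $\chi$ and substitute \eqref{eq:chibarhat4}, \eqref{eq:chihat3}, which produces terms such as $(r\nablaslash)^{3}r\nablaslash\etabar$ and $(r\nablaslash)^{3}r\nablaslash\eta$. These are precisely the top-order quantities that are only estimated in Section \ref{section:Riccitop} (Propositions \ref{prop:Thetamain} and \ref{prop:Riccitop}), via the auxiliary $\Theta$ variables and elliptic estimates inside a \emph{separate} bootstrap $\mathcal{A}''$ that is set up after Proposition \ref{prop:Gammaslashmain} has recovered the assumptions \eqref{eq:Gammaslashba}; invoking them here either imports that entire machinery into this proposition or requires a reordering of the bootstrap hierarchy that you have not justified. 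The observation you are missing is that no transport estimate is needed at $k=3$: the hypothesis that $\mathfrak{D}^{3}$ contains $r\nablaslash$ at most twice guarantees one factor is $\nablaslash_3$ or $r\nablaslash_4$; commute it to the innermost position (the commutators of Lemma \ref{lemma:commutation} are lower order) and substitute the propagation equation \eqref{eq:Gammaslash3} or \eqref{eq:Gammaslash4} for $\Gammaslash-\Gammaslash^{\circ}$ itself \emph{once}. Since those right-hand sides carry only one angular derivative of $\chibar$ (resp.\@ $\chi$), applying the remaining $\mathfrak{D}^{2}$ yields at most $\mathfrak{D}^{3}$ of $\hat{\chibar}$, $\tr\chibar+\frac{2}{r}$ (resp.\@ $\hat{\chi}$, $\tr\chi-\frac{2}{r}$) together with $\mathfrak{D}^{\le 2}\left(r(\Gammaslash-\Gammaslash^{\circ})\right)$ --- everything already bounded in $L^2(S_{u,v})$ by Propositions \ref{prop:Ricci4main}, \ref{prop:Ricci3main} and the $k\le 2$ case, with no recourse to curvature fluxes, the null structure equations for $\chi,\chibar$, or the $\Theta$ estimates.
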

\begin{proof}
	Equation \eqref{eq:Gammaslash3} takes the schematic form,
	\[
		\nablaslash_3 \left( \Gammaslash - \Gammaslash^{\circ} \right) 
		= 
		\sum_{p_1 + p_2 \geq 2} \rp_{p_1} \cdot \mathfrak{D} \Gamma_{p_2} \cdot \left( 1 + \gslash \right)
		+ 
		\sum_{p_1 + p_2 \geq 2} \rp_{p_1} \cdot ( \rp_{p_2} + \Gamma_{p_2}) \cdot r \left( \Gammaslash - \Gammaslash^{\circ} \right).
	\]
	The estimates for $\mathfrak{D}^k \left( \Gammaslash - \Gammaslash^{\circ} \right)$ with $k \leq 2$ then follow exactly as in Proposition \ref{prop:Ricci3main} (in fact these are even easier since there are no borderline terms).  The estimates for $\mathfrak{D}^2 \nablaslash_3 \left( \Gammaslash - \Gammaslash^{\circ} \right)$ follow from applying $\mathfrak{D}^2$ to equation \eqref{eq:Gammaslash3}, and the estimates for $\mathfrak{D}^2 r \nablaslash_4 \left( \Gammaslash - \Gammaslash^{\circ} \right)$ follow from multiplying equation \eqref{eq:Gammaslash4} by $r$ and applying $\mathfrak{D}^2$.
\end{proof}

This recovers the bootstrap assumptions \eqref{eq:Gammaslashba} and the \eqref{eq:Gammaslashbanull} for when $\mathfrak{D}^3\neq (r\nablaslash)^3$.  This remaining case will be recovered in the next section.

\section{Ricci Coefficients at the Top Order} \label{section:Riccitop}
The goal of this section is to estimate $\mathfrak{D}^3 r\nablaslash b$ and $(r\nablaslash)^3 \left( \Gammaslash - \Gammaslash^{\circ} \right)$.  This will recover all of the bootstrap assumptions of Section \ref{section:ba}.  In order to do this, $\mathfrak{D}^3 r\nablaslash \Gamma_p$ must be estimated for most of the other Ricci coefficients $\Gamma_p$.  Recall the set $\mathcal{A}$ from Theorem \ref{thm:main4}.

\subsection{Propagation Equations for Auxiliary $\Theta$ Variables}

Propagation equations are first derived for certain auxiliary quantities.

\begin{proposition}
	The angular derivatives of the null expansions satisfy the following propagation equations.
	\begin{align*}
		\nablaslash_3 r \nablaslash \left( \tr \chibar + \frac{2}{r} \right)
		=
		&
		- \frac{3}{2} \left( \tr \chibar + \frac{2}{r} \right) r\nablaslash \left( \tr \chibar + \frac{2}{r} \right) + \frac{2}{r} r\nablaslash \left( \tr \chibar + \frac{2}{r} \right)
		\\
		&
		- 2 \hat{\chibar} \cdot (r\nablaslash)\hat{\chibar} - \hat{\chibar} \cdot (r \nablaslash) \left( \tr \chibar + \frac{2}{r} \right) - \frac{2}{r} ( \eta + \etabar)
		\\
		&
		- \frac{r}{2} (\eta + \etabar) \left( \tr \chibar + \frac{2}{r} \right)^2 + 2 (\eta + \etabar) \left( \tr \chibar + \frac{2}{r} \right)
		\\
		&
		- r (\eta + \etabar) \vert \hat{\chibar} \vert^2 - r \nablaslash \Tslash_{33} - r (\eta + \etabar) \Tslash_{33},
	\end{align*}
	and
	\begin{align*}
		&
		\nablaslash_4 r \nablaslash \left( \tr \chi - \frac{2}{r} \right) + \tr \chi r \nablaslash \left( \tr \chi - \frac{2}{r} \right)
		=
		\\
		& \qquad \qquad
		- \frac{1}{2} \left( \tr \chi - \frac{2}{r} \right) r \nablaslash \left( \tr \chi - \frac{2}{r} \right) - \hat{\chi} \cdot r \nablaslash \left( \tr \chi - \frac{2}{r} \right) + \omega r \nablaslash \left( \tr \chi - \frac{2}{r} \right)
		\\
		& \qquad \qquad
		+ \left( \tr \chi - \frac{2}{r} \right) r \nablaslash \omega + \frac{2}{r} r \nablaslash \omega - 2 \hat{\chi} \cdot (r\nablaslash) \hat{\chi} -r \nablaslash \Tslash_{44}.
	\end{align*}
\end{proposition}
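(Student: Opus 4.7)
The proof proceeds by commuting the renormalized null structure equations for $\tr \chibar + 2/r$ and $\tr \chi - 2/r$ from Section \ref{subsec:nsbianchi} with the angular derivative $\nablaslash$, multiplying through by $r$, and collecting terms. For the first equation, set $\xi := \tr \chibar + 2/r = \tr \chibar - \tr \chibar_{\circ}$, so that
\[
	\nablaslash_3 \xi = -\tfrac{1}{2}\xi^2 + \tfrac{2}{r}\xi - \tfrac{2}{r^2}\left(1 - \tfrac{1}{\Omega^2}\right) - |\hat{\chibar}|^2 - \Tslash_{33}.
\]
Applying $\nablaslash$ to both sides and using $\nablaslash(r^{-k}) = 0$ (since $r = v-u+r_0$ depends only on $u,v$) together with the identity $\nablaslash \log \Omega = (\eta + \etabar)/2$ to compute $\nablaslash(1 - 1/\Omega^2) = \Omega^{-2}(\eta + \etabar)$, one obtains an expression for $\nablaslash \nablaslash_3 \xi$ whose terms match (up to a common factor of $r$) the quadratic pieces $-\xi r\nablaslash\xi$, $\tfrac{2}{r} r\nablaslash\xi$, $-2\hat{\chibar} \cdot r\nablaslash \hat{\chibar}$, and $-r \nablaslash \Tslash_{33}$ appearing on the right-hand side of the target.

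Next, the scalar specialization of Lemma \ref{lemma:commutation} gives $[\nablaslash_3, \nablaslash_B]\xi = (\eta_B + \etabar_B)\nablaslash_3 \xi - {\chibar_B}^C \nablaslash_C \xi$, which converts $\nablaslash \nablaslash_3 \xi$ to $\nablaslash_3 \nablaslash \xi$ plus corrections. Multiplying by $r$ and using the product rule with $e_3(r) = -1/\Omega^2$ produces $\nablaslash_3(r \nablaslash \xi)$ on the left, plus an extra $-\Omega^{-2}\nablaslash \xi$ term. Into the $r(\eta + \etabar)\nablaslash_3 \xi$ commutator correction, substitute the structure equation to generate exactly the five terms $-\tfrac{r}{2}(\eta+\etabar)\xi^2$, $2(\eta+\etabar)\xi$, $-\tfrac{2}{r}(\eta+\etabar)(1 - 1/\Omega^2)$, $-r(\eta+\etabar)|\hat{\chibar}|^2$, and $-r(\eta+\etabar)\Tslash_{33}$. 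Decomposing the remaining $-r\chibar \cdot \nablaslash \xi$ commutator term as $-\hat{\chibar} \cdot r\nablaslash\xi - \tfrac{1}{2}\tr \chibar \cdot r\nablaslash\xi$ and writing $\tr \chibar = \xi - 2/r$ produces the announced $-\hat{\chibar} \cdot r\nablaslash \xi$ term together with an extra $-\tfrac{1}{2}\xi r\nablaslash\xi$ that combines with the $-\xi r\nablaslash\xi$ already present to give the correct coefficient $-\tfrac{3}{2}\xi r\nablaslash\xi$; the $+\nablaslash\xi$ generated in this rewriting combines with $-\Omega^{-2}\nablaslash\xi$, and the resulting $(1-\Omega^{-2})\nablaslash\xi$ piece combines with $-2/(r\Omega^2)(\eta+\etabar)$ from the $\nablaslash(1 - 1/\Omega^2)$ computation to recover the clean coefficient $-\tfrac{2}{r}(\eta + \etabar)$ after the $\Omega$-dependent pieces cancel against $-\tfrac{2}{r}(\eta + \etabar)(1 - 1/\Omega^2)$.

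The $\nablaslash_4$ equation is derived identically, starting from the renormalized equation for $\tr \chi - \tr \chi_{\circ}$ and using the simpler scalar commutator $[\nablaslash_4, \nablaslash_B]\xi = -{\chi_B}^C \nablaslash_C \xi$ together with $e_4(r) = 1$. The $\tr \chi \, r\nablaslash \xi$ term on the left-hand side of the target emerges when one moves the contribution of $\tfrac{1}{2}\tr \chi$ (from the decomposition $\chi = \hat{\chi} + \tfrac{1}{2}\tr \chi \, \gslash$ in the $\chi \cdot \nablaslash \xi$ commutator piece) together with the $\tr \chi$-multiplication of the left-hand side of the original structure equation for $\xi$ over to the left. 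The main obstacle throughout is purely the careful bookkeeping of coefficients and the tracking of the $\Omega$-dependent pieces through several cancellations; no ideas beyond Lemma \ref{lemma:commutation}, the identity for $\nablaslash \log \Omega$, and the renormalization $\tr \chibar = \tr \chibar_{\circ} + (\tr \chibar - \tr \chibar_{\circ})$ are needed.
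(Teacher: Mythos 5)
Your overall strategy---commuting the renormalised null structure equations with $\nablaslash$, using the scalar case of Lemma \ref{lemma:commutation}, and computing $\nablaslash\left(1-\frac{1}{\Omega^2}\right)$ via $2\nablaslash\log\Omega=\eta+\etabar$---is exactly the paper's proof, and your derivation of the $\nablaslash_4$ equation is correct: there the $+\nablaslash\left(\tr\chi-\frac{2}{r}\right)$ coming from $e_4(r)=1$ cancels exactly against the $-\nablaslash\left(\tr\chi-\frac{2}{r}\right)$ produced by the $-\frac{1}{2}\tr\chi_{\circ}\, r\nablaslash\left(\tr\chi-\frac{2}{r}\right)$ piece of the commutator term.

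The final cancellation you claim in the $\nablaslash_3$ computation, however, cannot happen as described. Writing $\xi=\tr\chibar+\frac{2}{r}$, you assert that the residual $(1-\Omega^{-2})\nablaslash\xi$ (obtained from $e_3(r)=-\Omega^{-2}$ together with the $+\nablaslash\xi$ coming from the $-\frac{1}{2}\tr\chibar_{\circ}$ piece of the commutator) ``combines with'' the terms proportional to $\eta+\etabar$. A one-form proportional to $\nablaslash\xi$ cannot cancel against one proportional to $\eta+\etabar$. The correct bookkeeping is that the two $(\eta+\etabar)$ contributions already close on their own: $-\frac{2}{r}\Omega^{-2}(\eta+\etabar)$, from $r\nablaslash$ hitting $-\frac{2}{r^2}\left(1-\frac{1}{\Omega^2}\right)$, plus $-\frac{2}{r}\left(1-\frac{1}{\Omega^2}\right)(\eta+\etabar)$, from substituting the structure equation into the commutator correction $r(\eta+\etabar)\nablaslash_3\xi$, sum to $-\frac{2}{r}(\eta+\etabar)$ with no help from the $\nablaslash\xi$ terms. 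The piece $(1-\Omega^{-2})\nablaslash\xi$ is simply left over: unlike the outgoing case, $e_3(r)=-\Omega^{-2}\neq-1$, so the exact cancellation fails by the factor $1-\Omega^{-2}$. This leftover is of the harmless schematic type $\rp_1\cdot\Gamma_0\cdot r\nablaslash\Gamma_2$ and is absorbed by the error analysis of Section \ref{section:Riccitop}, but your proof should carry it explicitly (or note that it is being absorbed) rather than claim it disappears by a cancellation against tensorially incompatible terms.
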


\begin{proof}
	The proof follows by using Lemma \ref{lemma:commutation} to commute the propagation equations for $\tr \chibar + \frac{2}{r}$ and $\tr \chi - \frac{2}{r}$.  When computing $r\nablaslash \left( 1 - \frac{1}{\Omega^2} \right)$, which arises in the expression for $ r \nablaslash \nablaslash_3 \left( \tr \chibar + \frac{2}{r} \right)$, the fact that $2\nablaslash \log \Omega = (\eta + \etabar)$, and hence
	\[
		\nablaslash \left( 1 - \frac{1}{\Omega^2} \right) = 2 \frac{\nablaslash \Omega}{\Omega^3} = (\eta + \etabar) - \left( 1 - \frac{1}{\Omega^2} \right) (\eta + \etabar),
	\]
	is used.  This means that $\left( 1 - \frac{1}{\Omega^2} \right)$ doesn't appear in the propagation equations as a principal term.
\end{proof}

Define the mass aspect functions,
\[
	\mu = \frac{1}{2}\hat{\chi} \cdot \hat{\chibar} - \rho - \divslash \eta, \qquad \mubar = \frac{1}{2}\hat{\chi} \cdot \hat{\chibar} - \rho - \divslash \etabar,
\]
and the $S_{u,v}$ 1-form,
\[
	\kappa = \nablaslash \omega + {}^* \nablaslash \omegadag + \beta.
\]
Here $\omegadag$ is defined to be the solution to
\[
	\nablaslash_3 \omegadag = - \rho,
\]
with zero initial data on $\{ u = u_0\}$.

\begin{proposition}
	The mass aspect functions and $\kappa$ satisfy the following propagation equations,
	\begin{align*}
		\nablaslash_4 \mu
		=
		&
		- \tr \chi \mu + \frac{1}{2} \mubar + (\eta - \etabar) \cdot \nablaslash \tr \chi + 2 \hat{\chibar} \cdot \nablaslash \eta
		\\
		&
		- \frac{1}{2} \tr \chi \hat{\chi} \cdot \hat{\chibar} + \tr \chi \rho - \tr \chi \eta \cdot \etabar - \frac{1}{4} \tr \chi \vert \hat{\chi} \vert^2 + 2 \hat{\chi} \cdot (\eta \otimes \etabar) - 2 \beta \cdot \eta + \frac{1}{2} \tr \chi \vert \etabar \vert^2
		\\
		&
		+ \frac{1}{2} \Tslash_4 \cdot \eta - \frac{1}{2} \Tslash_4 \cdot \etabar + \frac{1}{2} \hat{\chi} \cdot \Tslash + \frac{1}{4} \omega \Tslash_{34} + \frac{1}{2} \divslash \Tslash_4 + \frac{1}{4} \nablaslash_3 \Tslash_{44} - \frac{1}{4} \nablaslash_4 \Tslash_{34},
		\\
		\nablaslash_3 \mubar
		=
		&
		- \tr \chibar \mubar + \frac{1}{2} \tr \chibar \mu + (\etabar - \eta) \cdot \nablaslash \tr \chibar + 2 \hat{\chibar} \cdot \nablaslash \etabar
		+ 2 \etabar \cdot \betabar + \tr \chibar \rho
		\\
		&
		- \tr \chibar \hat{\chi} \cdot \hat{\chibar}  + \tr \chibar ( \vert \etabar \vert^2 - \frac{1}{2} \vert \eta \vert^2 - \eta \cdot \etabar) + 2 \hat{\chibar} \cdot (\eta \otimes \etabar) - \frac{1}{4} \tr \chi \hat{\chi} \cdot \hat{\chibar} 
		\\
		&
		+ \frac{1}{2} \hat{\chibar} \cdot \Tslash + \frac{1}{2} \omega \Tslash_{33} + \frac{1}{2} \etabar \cdot \Tslash_3 + \frac{1}{2} \eta \cdot \Tslash_3 + \frac{1}{4} \nablaslash_4 \Tslash_{33} - \frac{1}{4} \nablaslash_3 \Tslash_{34} + \frac{1}{2} \divslash \Tslash_3,
		\\
		\nablaslash_3 \kappa
		=
		&
		\ 2 \nablaslash (\eta \cdot \etabar) - \nablaslash( \vert \etabar \vert^2)  - \frac{1}{2} \tr \chibar \kappa  - \hat{\chibar} \cdot \nablaslash \omega - {}^* \hat{\chibar} \cdot \nablaslash \omegadag + \frac{1}{2} \tr \chibar \beta
		\\
		&
		+ ( 2 \eta \cdot \etabar - \vert \etabar \vert^2)(\eta + \etabar) + 2 \hat{\chi} \cdot \betabar - \rho \etabar + 2\rho \eta - {}^* \etabar \sigma + 2 \sigma {}^* \eta - \frac{1}{2} \hat{\chibar}\cdot \Tslash_4
		\\
		&
		 - \frac{1}{4} \tr \chibar \Tslash_4 - \frac{1}{2} \hat{\chi} \cdot \Tslash_3 - \frac{1}{4} \tr \chi \Tslash_3 - \frac{1}{2} \Tslash_{34} \eta - \Tslash_{34} \etabar - \frac{1}{2} \omega \Tslash_3 + \etabar \cdot \Tslash - \frac{1}{2} \nablaslash_4 \Tslash_3.
	\end{align*}
\end{proposition}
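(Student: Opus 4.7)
Each of the three identities is obtained by directly applying the appropriate null derivative to the defining expression, invoking the Leibniz rule, and then substituting the null structure and Bianchi equations to rewrite every resulting first-order derivative. The underlying principle is that $\mu$, $\mubar$ and $\kappa$ are constructed so that the highest-order terms cancel, leaving only products of Ricci coefficients, curvature components and energy momentum components.

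For $\nablaslash_4 \mu$, the plan is to compute $\nablaslash_4(\frac{1}{2}\hat{\chi}\cdot\hat{\chibar})$ using the equation for $\nablaslash_4 \hat{\chi}$ from Section \ref{subsec:nsbianchi} together with equation \eqref{eq:chibarhat4} for $\nablaslash_4 \hat{\chibar}$, substitute the Bianchi equation for $\nablaslash_4 \rho$, and compute $\nablaslash_4 \divslash \eta$ by first commuting $\nablaslash_4$ with $\nablaslash$ via Lemma \ref{lemma:commutation} and then using the $\nablaslash_4 \eta$ equation. Two cancellations drive the argument: first, the $\divslash\beta$ arising from the $\rho$ Bianchi equation cancels with the top-order contribution obtained after $\nablaslash_4\eta = -\beta + (\textrm{l.o.t.})$ is substituted inside $-\divslash(\nablaslash_4 \eta) + (\textrm{commutator})$; second, the $-\frac{1}{2}\hat{\chibar}\cdot\alpha$ from the $\rho$ Bianchi equation cancels the $-\frac{1}{2}\alpha\cdot\hat{\chibar}$ produced by $\nablaslash_4\hat{\chi} = -\alpha + (\textrm{l.o.t.})$. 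The remaining terms are collected using the trace-free property of $\hat{\chi}$ (which lets one replace $\hat{\chi}\cdot(\nablaslash\hat{\otimes}\etabar)$ by $2\hat{\chi}\cdot\nablaslash\etabar$). The derivation of $\nablaslash_3\mubar$ is formally identical, using equation \eqref{eq:chihat3} for $\nablaslash_3\hat{\chi}$, the $\nablaslash_3\hat{\chibar}$ equation, the $\nablaslash_3\rho$ Bianchi equation and commutation of $\nablaslash_3$ with $\divslash$; the analogous cancellation is now between $-\divslash\betabar$ from the $\rho$ equation and the $+\divslash\betabar$ generated through $\nablaslash_3\etabar$ via equation \eqref{eq:nsetabar}.

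For $\nablaslash_3\kappa$, I would apply $\nablaslash_3$ to each of $\nablaslash\omega$, ${}^*\nablaslash\omegadag$ and $\beta$ separately. For the first, commute $\nablaslash_3$ past $\nablaslash$ using Lemma \ref{lemma:commutation} and substitute \eqref{eq:nsfirst} for $\nablaslash_3\omega$. For the second, one must additionally commute $\nablaslash_3$ past the Hodge dual ${}^*$; the resulting correction is controlled by $\nablaslash_3\epsslash$, which is obtained from the first variation formula \eqref{eq:firstvar3}. After inserting the defining relation $\nablaslash_3\omegadag = -\rho$ and the $\nablaslash_3\beta$ Bianchi equation, the top-order $\nablaslash\rho$ and ${}^*\nablaslash\sigma$ contributions to $\nablaslash_3\beta$ are cancelled by the corresponding contributions from the first two terms, and the rest collapses into the right-hand side displayed in the proposition.

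The main obstacle is purely organizational: the right-hand sides contain a large number of quadratic Ricci-coefficient, curvature and energy momentum terms, and tracking which source produces which term, while rewriting some of them using the Codazzi equations \eqref{eq:Codazziout}, \eqref{eq:Codazziin} or the Gauss equation \eqref{eq:Gauss} when that yields a cleaner form, is the bulk of the work. No new analytic input is needed beyond the careful application of the already-assembled structure of Section \ref{section:setup}; the whole point of introducing $\mu$, $\mubar$, $\kappa$ is precisely to arrange that the surviving right-hand sides are at the same differential order as $\Gamma$ and $\psi$, as will be exploited in the subsequent elliptic estimates.
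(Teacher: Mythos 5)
Your proposal follows essentially the same route as the paper: differentiate the defining expressions, commute $\nablaslash_3$, $\nablaslash_4$ with $\divslash$ and $\nablaslash$ via Lemma \ref{lemma:commutation}, substitute the null structure equations, \eqref{eq:chibarhat4}, \eqref{eq:chihat3}, the $\rho$ and $\beta$ Bianchi equations and the Codazzi equations, and exploit exactly the cancellations of the top-order $\alpha$, $\divslash\beta$, $\divslash\betabar$, $\nablaslash\rho$ and ${}^*\nablaslash\sigma$ terms that you identify. The only cosmetic remark is that no correction from $\nablaslash_3\epsslash$ is needed when passing $\nablaslash_3$ through ${}^*$, since the projected covariant derivative annihilates $\gslash$ and hence $\epsslash$ (the first variation formula concerns the Lie derivative, not $\nablaslash_3$).
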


\begin{proof}
  	From the definition of $\mu$,
	\[
		\nablaslash_4 \mu 
		= 
		\frac{1}{2} \left( \nablaslash_4 \hat{\chi} \right) \cdot \hat{\chibar} + \frac{1}{2} \hat{\chi} \cdot \left( \nablaslash_4 \hat{\chibar} \right)
		-
		\divslash \nablaslash_4 \eta
		-
		[\nablaslash_4, \divslash ] \eta
		-
		\nablaslash_4 \rho. 
	\]
	The equation is obtained by substituting on the right hand side the null structure equations for $\nablaslash_4 \hat{\chi}$ and $\nablaslash_4 \eta$, equation \eqref{eq:chibarhat4}, the Bianchi equation for $\nablaslash_4 \rho$ and using Lemma \ref{lemma:commutation} to compute the commutator term.  The Codazzi equation \eqref{eq:Codazziout} is also used to replace the $\divslash \hat{\chi}$ term arising from $\divslash \nablaslash_4 \eta$.
	
	The equation for $\nablaslash_3 \mubar$ is obtained similarly using the null structure equations for $\nablaslash_3 \hat{\chibar}$ and $\nablaslash_3 \etabar$, equation \eqref{eq:chihat3}, the Bianchi equation for $\nablaslash_3 \rho$ and the Codazzi equation \eqref{eq:Codazziin}.
	
	Finally,
	\[
		\nablaslash_3 \kappa 
		= 
		\nablaslash \nablaslash_3 \omega 
		+ 
		[\nablaslash_3,\nablaslash] \omega 
		+ 
		{}^* \nablaslash \nablaslash_3 \omegadag
		+
		{}^* \left( [\nablaslash_3,\nablaslash] \omegadag \right)
		+
		\nablaslash_3 \beta,
	\]
	and the equation for $\kappa$ can be computed similarly.
\end{proof}

\begin{proposition}
	If $x \in \mathcal{A}$ and $u = u(x)$, $v = v(x)$ then, for $k=0,\ldots,4$, $\omegadag$ satisfies,
	\[
		r^4 \int_{S_{u,v}} \vert \mathfrak{D}^k \omegadag \vert^2 d \mu_{S_{u,v}} \leq C\left( \bc + \frac{1}{v_0} \right).
	\]
\end{proposition}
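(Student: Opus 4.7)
The plan is to follow closely the transport estimate strategy from Proposition \ref{prop:Ricci3main}, treating $\omegadag$ as a $\Riccit$ quantity with $p$-index $p = 3$, so that the weight $r^4 = r^{2p-2}$ is the natural one. The structural features that allow one additional order of regularity compared with Proposition \ref{prop:Ricci3main} are, first, that $\omegadag$ has vanishing initial data on $\{u = u_0\}$, and, second, that $\omegadag$ lies by definition one derivative below the curvature: integrating the transport equation $\nablaslash_3 \omegadag = -\rho$ in $u$ already shows that $k$ derivatives of $\omegadag$ are controlled by $k$ derivatives of $\rho$ rather than $k+1$.

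Concretely, for $0 \leq k \leq 4$, I would commute $\nablaslash_3 \omegadag = -\rho$ with $\mathfrak{D}^k$ using Lemma \ref{lemma:commutation}, producing the schematic equation
\[
\nablaslash_3 \mathfrak{D}^k \omegadag = -\mathfrak{D}^k \rho + E_3[\mathfrak{D}^k \omegadag],
\]
in which the error $E_3[\mathfrak{D}^k \omegadag]$ takes the form dictated by Proposition \ref{prop:commutednullstructure} applied to a $\Riccit$ of weight $3$. Applying the identity \eqref{eq:transportid3} with $h = r^4 |\mathfrak{D}^k \omegadag|^2$, the combined coefficient $e_3(r^4) + r^4 \tr \chibar$ is schematically $\rp_1 r^4 + r^4 \Gamma_2$, hence handled by Gr\"{o}nwall and error absorption respectively. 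Integrating in $u$ from $u_0$ to $u$ and using the vanishing initial data to eliminate boundary contributions yields
\[
r^4 \int_{S_{u,v}} |\mathfrak{D}^k \omegadag|^2 d\mu_{S_{u,v}} \leq C \int_{u_0}^u \int_{S_{u',v}} r^4 \left( |\mathfrak{D}^k \omegadag|^2 + |\mathfrak{D}^k \rho|^2 + |E_3[\mathfrak{D}^k \omegadag]|^2 \right) d\mu_{S_{u',v}} du'.
\]
The error $|E_3[\mathfrak{D}^k \omegadag]|^2$ is estimated exactly as in Proposition \ref{prop:Ricci3main}, using the bootstrap assumptions, the Sobolev inequalities of Section \ref{section:Sobolev} and the bounds already obtained in Propositions \ref{prop:tmain}, \ref{prop:curvmain}, \ref{prop:Ricci4main} and \ref{prop:Ricci3main}; each contribution is either bounded by $C(\data + 1/v_0)$ or absorbed into the Gr\"{o}nwall step. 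For $k \leq 3$ the curvature term $\int \int r^4 |\mathfrak{D}^k \rho|^2$ is directly bounded by $F^2_{u_0,u}(v)$ (since $\rho$ has weight $r^{2p-2} = r^4$), hence by $C(\data + 1/v_0)$ through Proposition \ref{prop:curvmain}; the Gr\"{o}nwall inequality applied on the bounded interval $[u_0, u_f]$ then closes the estimate.

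The main obstacle is the top-order case $k = 4$, because $\int \int r^4 |\mathfrak{D}^4 \rho|^2$ is not directly contained in $F^2_{u_0,u}(v)$. The plan is to resolve this by case analysis on $\mathfrak{D}^4$: if $\nablaslash_3$ appears as one of the four operators it is commuted to the innermost position (the resulting commutators involve at most third order derivatives of $\Gamma, \psi, \mathcal{T}$ and $\omegadag$, all already controlled), and $\nablaslash_3 \omegadag$ is then replaced by $-\rho$, leaving a $\mathfrak{D}^3 \rho$ contribution bounded by $F^2_{u_0,u}(v)$; if $\mathfrak{D}^4$ contains no $\nablaslash_3$ but at least one $r\nablaslash_4$, the $r\nablaslash_4$ is commuted to act directly on $\rho$ and the Bianchi equation of Proposition \ref{prop:Bianchi} replaces $r\nablaslash_4 \rho$ by angular derivatives of $\beta$ plus admissible errors, once more reducing to quantities bounded by $F^2_{u_0,u}(v)$; finally, the purely angular case $\mathfrak{D}^4 = (r\nablaslash)^4$ is handled by substituting the Gauss equation \eqref{eq:Gauss} to express $\rho$ in terms of lower order Ricci coefficients, the Gauss curvature $K$ and $\Tslash_{34}$, at which point the top-order Ricci coefficient bounds established in the remainder of Section \ref{section:Riccitop} provide the required control.
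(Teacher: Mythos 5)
For $k \leq 3$ your argument is exactly the paper's: the proof given there is a single sentence observing that $\nablaslash_3 \omegadag = -\rho = \psi_3$ with vanishing data on $\{u=u_0\}$, so that $\omegadag$ can be treated as a $\overset{(3)}{\Gamma}_3$ and estimated precisely as in Proposition \ref{prop:Ricci3main}. Your choice of weight $r^4 = r^{2p-2}$, the transport identity, the absorption of $e_3(r^4) + r^4\tr\chibar$, the use of the vanishing initial data, the Gr\"onwall step, and the control of $\int_{u_0}^u\int_{S_{u',v}} r^4\vert\mathfrak{D}^k\rho\vert^2$ by $F^2_{u_0,u}(v)$ all reproduce that argument faithfully.

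The difficulty you isolate at $k=4$ is genuine (the paper's one-line proof does not address it), but your case analysis does not close it, chiefly because the target is a bound on each individual sphere $S_{u,v}$ and none of your three branches delivers one. In the branch where a $\nablaslash_3$ occurs, replacing $\nablaslash_3\omegadag$ by $-\rho$ leaves $r^4\int_{S_{u,v}}\vert\mathfrak{D}^3\rho\vert^2\,d\mu_{S_{u,v}}$ on a fixed sphere, whereas $F^2_{u_0,u}(v)$ controls only the $u$-integral of that quantity (sphere bounds on $\rho$ are available only up to two derivatives, via Proposition \ref{prop:Gauss}). In the branch with an $r\nablaslash_4$, the Bianchi equation trades $r\nablaslash_4\rho$ for $\Dslash\beta$ at the cost of an angular derivative, so $\mathfrak{D}^3(r\nablaslash_4\rho)$ contains four derivatives of $\beta$, exceeding the three controlled in $F^2$. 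In the purely angular branch, the Gauss equation converts $(r\nablaslash)^4\rho$ into $(r\nablaslash)^4 K$ — controlled nowhere, and any attempt to go beyond $k=2$ via the Gauss equation reintroduces high derivatives of $\rho$ — together with fourth angular derivatives of Ricci coefficients whose control in Section \ref{section:Riccitop} is established only under the additional bootstrap assumptions \eqref{eq:thetabcfirst}--\eqref{eq:thetabclast}; since those assumptions themselves include fourth-order bounds on $\omegadag$, invoking Proposition \ref{prop:Riccitop} here is circular relative to where this proposition sits in the logical order. Note that in the paper's actual architecture the fourth-order derivatives of $\omegadag$ used downstream are of the specific form $\mathfrak{D}^2(r\nablaslash)^2\omegadag$, and these are recovered not by transport but by the elliptic system $\divslash\nablaslash\omegadag = \curlslash\kappa - \curlslash\beta$, $\curlslash\nablaslash\omegadag = 0$ in Proposition \ref{prop:Riccitop}.
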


\begin{proof}
	Since $\omegadag$ satisfies an equation of the form
	\[
		\nablaslash_3 \omegadag = \psi_3,
	\]
	with zero initial data, this can be proved in exactly the same way as the estimates for $\mathfrak{D}^k \Riccit$ in Proposition \ref{prop:Ricci3main}.
\end{proof}

Let $\Theta$ schematically denote the following quantities,
\[
	\Theta = r\nablaslash \left( \tr \chi - \frac{2}{r} \right), r\nablaslash \left( \tr \chibar + \frac{2}{r} \right), \mu, \mubar, \kappa ,
\]
and further decompose as
\begin{equation*}
\begin{array}{ll}
	\Thetat_2 = r\nablaslash \left( \tr \chibar + \frac{2}{r} \right),
	&
	\Thetaf_2 = r\nablaslash \left( \tr \chi - \frac{2}{r} \right), \mu ,
	\\
	\Thetat_3 = \mubar,
	&
	\Thetat_4 = \kappa.
\end{array}
\end{equation*}

As with the $\Gamma_p, \psi_p, \mathcal{T}_p$, the subscript $p$ indicates that $\Theta_p$ should decay like $\frac{1}{r^p}$.  Similarly, the $(3)$ indicates that $\Thetat$ satisfies an equation in the 3 direction, and the $(4)$ indicates that $\Thetaf$ satisfies an equation in the 4 direction.

The propagation equations for the $\Thetaf$ variables take the following schematic form,
\[
	\nablaslash_4 \Thetaf_p + \frac{p}{2} \tr \chi \Thetaf_p = E_4 \left[ \Thetaf_p \right]
\]
and the for the $\Thetat$ variables take the form,
\[
	\nablaslash_3 \Thetat_p = E_3 \left[ \Thetat_p \right],
\]
where,
\[
	E_4 \left[ r\nablaslash \left( \tr \chi - \frac{2}{r} \right) \right] = \frac{2}{r} r \nablaslash \omega + \left( \tr \chi - \frac{2}{r} \right) r \nablaslash \omega + \sum_{p_1 + p_2 \geq 4} \Gamma_{p_1} \cdot r \nablaslash \Gamma_{p_2} + \sum_{p_1 \geq 6} \mathfrak{D} \mathcal{T}_{p_1},
\]
\begin{align*}
	E_4[\mu] =
	& 
	\frac{1}{r} \mubar + \frac{1}{2} \left( \tr \chi - \frac{2}{r} \right) \mubar + \sum_{p_1+p_2+p_3 \geq 4} \rp_{p_1} \cdot \Gamma_{p_2} \cdot r \nablaslash \Gamma_{p_3}
	\\
	&
	+ \sum_{p_1+p_2+p_3 \geq 4} (\rp_{p_1} + \Gamma_{p_1}) \cdot \Gamma_{p_2} \cdot \Gamma_{p_3} + \sum_{p_1+p_2\geq 4} (\rp_{p_1} + \Gamma_{p_1}) \cdot \psi_{p_2}
	\\
	&
	+ \sum_{p_1+p_2 \geq 6} \Gamma_{p_1} \cdot \mathcal{T}_{p_2} + \sum_{p_1+p_2\geq 5} \rp_{p_1} \cdot \mathfrak{D} \mathcal{T}_{p_2},
\end{align*}
and
\begin{align*}
	E_3 \left[ r\nablaslash \left( \tr \chibar + \frac{2}{r} \right) \right] =
	&
	\frac{2}{r} r\nablaslash \left( \tr \chibar + \frac{2}{r} \right) + \sum_{p_1+p_2\geq 2} \Gamma_{p_1} \cdot r \nablaslash \Gamma_{p_2} + \sum_{p_1+p_2\geq 2} \rp_{p_1} \cdot \Gamma_{p_2}
	\\
	&
	+ \sum_{p_1+p_2+p_3+p_4 \geq 2} \rp_{p_1} \cdot \Gamma_{p_2} \cdot \Gamma_{p_3} \cdot \Gamma_{p_4} 
	\\
	&
	+ \sum_{p_1+p_2+p_3 \geq 2} \rp_{p_1} \cdot \Gamma_{p_2} \cdot \left( \Gamma_{p_3} + \mathcal{T}_{p_3} \right) + \sum_{p_1 \geq 2} \mathfrak{D} \mathcal{T}_{p_1},
\end{align*}
\begin{align*}
	E_3 [\mubar]
	=
	&
	\frac{2}{r} \mubar - \frac{1}{r} \mu - \left( \tr \chibar + \frac{2}{r} \right) \mubar + \frac{1}{2} \left( \tr \chi - \frac{2}{r} \right) \mu
	\\
	&
	+ \sum_{p_1+p_2+p_3\geq 4} (\rp_{p_1} + \Gamma_{p_1} ) \cdot \Gamma_{p_2} \cdot \Gamma_{p_3} + \sum_{p_1+p_2\geq 4} (\rp_{p_1} + \Gamma_{p_1}) \cdot \psi_{p_2}
	\\
	&
	+ \sum_{p_1+p_2\geq 4} \Gamma_{p_1} \cdot \mathcal{T}_{p_2} + \sum_{p_1+p_2\geq 3} \rp_{p_1} \cdot \mathfrak{D} \mathcal{T}_{p_2},
\end{align*}
\begin{align*}
	E_3[\kappa]
	=
	&
	\frac{1}{r} \kappa - \frac{1}{2} \left( \tr \chibar + \frac{2}{r} \right) \kappa - \hat{\chibar} \cdot \nablaslash \omega - {}^* \hat{\chibar} \cdot \nablaslash \omegadag + \sum_{p_1+p_2+p_3 \geq 4} \rp_{p_1} \cdot \Gamma_{p_2} \cdot r \nablaslash \Gamma_{p_3}
	\\
	&
	+ \sum_{p_1+p_2+p_3\geq 4} \Gamma_{p_1} \cdot \Gamma_{p_2} \cdot \Gamma_{p_3} + \sum_{p_1+p_2\geq 4} (\rp_{p_1} + \Gamma_{p_1}) \cdot (\psi_{p_2} + \mathcal{T}_{p_2} )
	\\
	&
	+ \sum_{p_1+p_2\geq 4} \rp_{p_1} \cdot \mathfrak{D} \mathcal{T}_{p_2}.
\end{align*}
All of the $\Gamma$ appearing in the $r \nablaslash \Gamma$ terms in the errors, unless explicitly stated, are $\hat{\chi}, \hat{\chibar}, \eta, \etabar, \tr \chi - \frac{2}{r}, \tr \chibar + \frac{2}{r}$ and hence the bootstrap assumptions of Section \ref{section:ba} give an estimate for $\mathfrak{D}^2 r \nablaslash \Gamma$ in $L^2$ on the spheres.  It is the \emph{linear} principal terms which will require the most care below.  When such terms appear, they have been written first in the errors above.  Linear here means linear in $\Gamma, \psi, \mathcal{T}, \Theta$, so one example of a linear term is $\frac{2}{r} r \nablaslash \omega$ appearing in $E_4 \left[ r\nablaslash \left( \tr \chi - \frac{2}{r} \right) \right]$.  Principal means of the form $r \nablaslash \Gamma$ or $\Theta$, since the $\Theta$ variables live at one degree of differentiability greater than $\Gamma$.  The principal energy momentum tensor terms, $\mathfrak{D} \mathcal{T}$, will not be problematic as they have all already been estimated at the top order.  Note that there are no principal curvature terms, i.\@e.\@ terms of the form $\mathfrak{D} \psi$, appearing in the errors.  Finally, notice that the propagation equations have the same structure as the propagation equations for the Ricci coefficients highlighted in Section \ref{section:equations}, i.\@e.\@ the error terms $E_3 \left[ \Thetat_p \right]$ should decay like $\frac{1}{r^p}$ and the error terms $E_4 \left[ \Thetaf_p \right]$ should decay like $\frac{1}{r^{p+2}}$.  The next proposition, akin to Proposition \ref{prop:commutednullstructure}, says this structure is preserved under commutation by $\mathcal{D}$.  Unlike Proposition \ref{prop:commutednullstructure}, we here keep track of the principal terms.

\begin{proposition}
	The commuted propagation equations for the $\Theta$ variables, for $k=1,2, \ldots$, take the form,
	\[
			\nablaslash_3 \mathfrak{D}^k \Thetat_p
			= E_3 \left[ \mathfrak{D}^k \Thetat_p \right],
	\]
	and
	\[
			\nablaslash_4 \mathfrak{D}^k \Thetaf_p + \frac{p}{2} \tr \chi \mathfrak{D}^k \Thetaf_p
			= E_4 \left[ \mathfrak{D}^k \Thetaf_p \right],
	\]
	where
	\begin{align*}
		E_3 \left[ \mathfrak{D}^k \Thetat_p \right] 
		= 
		&
		\mathfrak{D} E_3 \left[ \mathfrak{D}^{k-1} \Thetat_p \right] + \left( \rp_1 + \sum_{p_1+p_2 \geq 1} \rp_{p_1} \cdot \Gamma_{p_2}  \right) \cdot \mathfrak{D}^k \Thetat_p
		\\
		&
		+ \Lambda_1\cdot \mathfrak{D}^{k-1} \Thetat_p,
	\end{align*}
	and
	\[
		E_4 \left[ \mathfrak{D}^k \Thetaf_p \right] = \mathfrak{D} E_4 \left[ \mathfrak{D}^{k-1} \Thetaf_p \right] + \left( \sum_{p_1+p_2 \geq 2} \rp_{p_1} \cdot \Gamma_{p_2} \right) \cdot \mathfrak{D}^k \Thetaf_p + \Lambda_2' \cdot \mathfrak{D}^{k-1} \Thetaf_p.
	\]
	Moreover,
	\begin{align*}
		E_3 \left[ \mathfrak{D}^{k-1} r \nablaslash \Thetat_p \right] 
		= 
		&
		\mathfrak{D} E_3 \left[ \mathfrak{D}^{k-2} r\nablaslash \Thetat_p \right] + \left( \rp_1 + \sum_{p_1+p_2 \geq 1} \rp_{p_1} \cdot \Gamma_{p_2}  \right) \cdot \mathfrak{D}^{k-1} r \nablaslash \Thetat_p 
		\\
		&
		+ \Lambda_1\cdot \mathfrak{D}^{k-2} r \nablaslash \Thetat_p,
	\end{align*}
	and
	\begin{align*}
		E_4 \left[ \mathfrak{D}^{k-1} r \nablaslash \Thetaf_p \right]
		=
		&
		\mathfrak{D} E_4 \left[ \mathfrak{D}^{k-2} r \nablaslash \Thetaf_p \right] + \left( \sum_{p_1+p_2 \geq 2} \rp_{p_1} \cdot \Gamma_{p_2} \right) \cdot \mathfrak{D}^{k-1} r \nablaslash \Thetaf_p 
		\\
		&
		+ \Lambda_2' \cdot \mathfrak{D}^{k-2} r \nablaslash \Thetaf_p.
	\end{align*}
	Recall that $\Lambda_1, \Lambda_2'$ from Proposition \ref{prop:commutednullstructure}, where the $'$ stresses that $\Lambda_2'$ contains terms of the form $\mathfrak{D} \Gamma$, involving one derivative of $\Gamma$.
\end{proposition}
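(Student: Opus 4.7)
The plan is to proceed by induction on $k \geq 1$, in exact parallel to the proofs of Proposition \ref{prop:commutedBianchi} and Proposition \ref{prop:commutednullstructure}. The base case $k=1$ is representative; the inductive step follows by applying $\mathfrak{D} \in \{\nablaslash_3, r\nablaslash_4, r\nablaslash\}$ to the equation at level $k-1$ and reorganising, noting that $\mathfrak{D}$ applied to $\Lambda_1$, $\Lambda_2'$, or $\sum \rp_{p_1}\Gamma_{p_2}$ again produces terms of the same schematic class (possibly with one extra derivative on a $\Gamma$, which is already permitted by $\Lambda_2'$). The $r\nablaslash\Theta$ versions are obtained identically by replacing $\Theta$ with $r\nablaslash\Theta$ and starting the induction from the $k=1$ identity for $r\nablaslash\Theta$.

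For the $\Thetat_p$ equations, commutation with $\nablaslash_3$ is immediate. Commuting with $r\nablaslash_4$ or $r\nablaslash$ produces, via $e_3(r) = -1/\Omega^2 = -(1+\Gamma_0)$, a contribution $-\tfrac{1}{r}(1+\Gamma_0)\mathfrak{D}\Thetat_p$, which fits exactly the factor $\rp_1 + \sum_{p_1+p_2\geq 1}\rp_{p_1}\Gamma_{p_2}$ multiplying $\mathfrak{D}^k\Thetat_p$ in the stated error. The commutators $r[\nablaslash_3,\nablaslash_4]\Thetat_p$ and $r[\nablaslash_3,\nablaslash]\Thetat_p$ are expanded using Lemma \ref{lemma:commutation}; they give pieces of the form $\Gamma_1 \cdot \mathfrak{D}\Thetat_p$ together with $\Lambda_1 \cdot \Thetat_p$, both of which fit the stated schematic.

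For the $\Thetaf_p$ equations, the key point is that the exact coefficient $\frac{p}{2}\tr\chi$ on the left must be preserved. When $\mathfrak{D} = \nablaslash_3$, one writes
\begin{equation*}
	\nablaslash_4 \nablaslash_3 \Thetaf_p + \tfrac{p}{2}\tr\chi \, \nablaslash_3 \Thetaf_p = \nablaslash_3 E_4[\Thetaf_p] - \tfrac{p}{2}(\nablaslash_3 \tr\chi)\Thetaf_p - [\nablaslash_3, \nablaslash_4]\Thetaf_p,
\end{equation*}
and $\nablaslash_3 \tr\chi = \rp_2 + \nablaslash_3(\tr\chi - \tr\chi_\circ)$, so $(\nablaslash_3\tr\chi)\Thetaf_p$ is absorbed into $\Lambda_2' \cdot \Thetaf_p$. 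The commutator term contributes $\Lambda_2' \cdot \Thetaf_p$ and $\sum_{p_1+p_2\geq 2}\rp_{p_1}\Gamma_{p_2} \cdot \mathfrak{D}\Thetaf_p$ pieces. The case $\mathfrak{D} = r\nablaslash$ is the delicate one: the term $\nablaslash \Thetaf_p$ generated by $e_4(r) = 1$ must cancel against the leading $-\chi \cdot \nablaslash$ piece of $r[\nablaslash_4, \nablaslash]\Thetaf_p$ (since $\tr\chi \sim 2/r$), precisely as in the third display of the proof of Proposition \ref{prop:commutedBianchi}. The surviving pieces of $r[\nablaslash_4,\nablaslash]\Thetaf_p$ are of the forms $\sum_{p_1\geq 2}\Gamma_{p_1}\cdot\mathfrak{D}\Thetaf_p$ and $\Lambda_2'\cdot\Thetaf_p$, and $-\frac{p}{2}(r\nablaslash\tr\chi)\Thetaf_p$ gives a further contribution to $\Lambda_2' \cdot \Thetaf_p$ (using $r\nablaslash\tr\chi_\circ = 0$).

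The main obstacle, then, is verifying that this $\nablaslash\Thetaf_p$ cancellation survives in the inductive step, since any perturbation of the coefficient $\frac{p}{2}\tr\chi$ in front of $\mathfrak{D}^k\Thetaf_p$ would destroy the structure of $E_4[\mathfrak{D}^k\Thetaf_p]$. This is handled exactly as in \cite{DHR}: one writes $\frac{p}{2}\tr\chi \cdot r\nablaslash \mathfrak{D}^{k-1}\Thetaf_p = r\nablaslash\bigl(\tfrac{p}{2}\tr\chi \mathfrak{D}^{k-1}\Thetaf_p\bigr) - \tfrac{p}{2}(r\nablaslash\tr\chi)\mathfrak{D}^{k-1}\Thetaf_p$ and pulls the first $r\nablaslash$ outside, letting the $\nablaslash$ from $e_4(r)=1$ cancel against the commutator as above, while the second contributes to $\Lambda_2' \cdot \mathfrak{D}^{k-1}\Thetaf_p$. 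All remaining terms at higher $k$ are handled by the inductive hypothesis together with the already-established principle that $\mathfrak{D}$ preserves the classes $\Lambda_1, \Lambda_2'$ and the mixed $\rp \cdot \Gamma$ products.
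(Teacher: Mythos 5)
Your argument is correct and follows essentially the same route as the paper: the paper's proof of this proposition simply states that it is identical to that of Proposition \ref{prop:commutednullstructure} while keeping track of the principal terms, and your write-up spells out precisely that induction, including the renormalisation of the $\frac{p}{2}\tr\chi$ term and the cancellation of the $e_4(r)\nablaslash\Thetaf_p$ contribution against the $-\chi\cdot\nablaslash$ part of $r[\nablaslash_4,\nablaslash]\Thetaf_p$, exactly as in the proof of Proposition \ref{prop:commutedBianchi}. No changes needed.
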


\begin{proof}
	The proof is identical to that of Proposition \ref{prop:commutednullstructure}, except we keep track of the principal terms.
\end{proof}

Note that the ``moreover'' part of the Proposition says that commuting the propagation equations with $r \nablaslash$ only produces principal error terms involving an $r \nablaslash$ derivative, unlike commuting with $\nablaslash_3$ and $r\nablaslash_4$ which can produce principal error terms involving $\nablaslash_3, r \nablaslash_4$ and $r \nablaslash$ derivatives.  This is important when estimating $\kappa$ since we only estimate $\mathfrak{D}^2 r \nablaslash \kappa$ rather than $\mathfrak{D}^3 \kappa$.

\subsection{Additional Bootstrap Assumptions}
The results of this section will be shown using an additional bootstrap argument.  Let $\mathcal{A}'' \subset \mathcal{A}$ denote the set of $x \in \mathcal{A}$ such that the following additional bootstrap assumptions hold for all $y \in \mathcal{A}$ with $u(y) \leq u(x)$, $v(y) \leq v(x)$,
\begin{align}
	\int_{u_0}^u \int_{S_{u',v}} r^2 \left\vert \mathfrak{D}^k r\nablaslash \left( \tr \chi - \frac{2}{r} \right) \right\vert^2 d \mu_{S_{u',v}} d u'
	&
	\leq \overline{C}, \label{eq:thetabcfirst}
	\\
	\int_{v_0}^v \int_{S_{u,v'}} \left\vert \mathfrak{D}^k r\nablaslash \left( \tr \chibar + \frac{2}{r} \right) \right\vert^2 d \mu_{S_{u,v'}} d v'
	&
	\leq \overline{C},
	\\
	\int_{u_0}^u \int_{S_{u',v}} r^2 \left\vert \mathfrak{D}^k \mu \right\vert^2 d \mu_{S_{u',v}} d u'
	&
	\leq \overline{C},
	\\
	\int_{v_0}^v \int_{S_{u,v'}} r^2 \left\vert \mathfrak{D}^k \mubar \right\vert^2 d \mu_{S_{u,v'}} d v'
	&
	\leq \overline{C},
	\\
	\int_{u_0}^u \int_{S_{u',v}} r^2 \left\vert \mathfrak{D}^k r\nablaslash \hat{\chi} \right\vert^2 d \mu_{S_{u',v}} d u'
	&
	\leq \overline{C},
	\\
	\int_{v_0}^v \int_{S_{u,v'}} r^{-2} \left\vert \mathfrak{D}^k r\nablaslash \hat{\chibar} \right\vert^2 d \mu_{S_{u,v'}} d v'
	&
	\leq \overline{C},
	\\
	\int_{u_0}^u \int_{S_{u',v}} \left\vert \mathfrak{D}^k r\nablaslash \eta \right\vert^2 d \mu_{S_{u',v}} d u'
	&
	\leq \overline{C},
	\\
	\int_{v_0}^v \int_{S_{u,v'}} \left\vert \mathfrak{D}^k r\nablaslash \etabar \right\vert^2 d \mu_{S_{u,v'}} d v'
	&
	\leq \overline{C},
	\\
\end{align}
for $k = 0,1,2,3$, and,
\begin{align}
	\int_{v_0}^v \int_{S_{u,v'}} r^4 \left\vert \mathfrak{D}^k r \nablaslash \kappa \right\vert^2 d \mu_{S_{u,v'}} d v'
	&
	\leq \overline{C},
	\\
	\int_{v_0}^v \int_{S_{u,v'}} r^2 \left\vert \mathfrak{D}^k (r\nablaslash)^2 \omega \right\vert^2 d \mu_{S_{u,v'}} d v'
	&
	\leq \overline{C},
	\\
	\int_{v_0}^v \int_{S_{u,v'}} r^2 \left\vert \mathfrak{D}^k \nablaslash_3 r\nablaslash \omega \right\vert^2 d \mu_{S_{u,v'}} d v'
	&
	\leq \overline{C},
	\\
	\int_{v_0}^v \int_{S_{u,v'}} r^2 \left\vert \mathfrak{D}^k (r\nablaslash)^2 \omegadag \right\vert^2 d \mu_{S_{u,v'}} d v'
	&
	\leq \overline{C}, \label{eq:thetabclast}
\end{align}
for $k=0,1,2$, where $u = u(y)$, $v = v(y)$.

\subsection{Estimates for Auxiliary $\Theta$ Variables}
The bootstrap assumptions \eqref{eq:thetabcfirst}--\eqref{eq:thetabclast} can now be used to obtain estimates for the $\Theta$ variables.

\begin{proposition} \label{prop:Thetamain}
	For any $x \in \mathcal{A}''$, if $u = u(x)$, $v = v(y)$ then, for $k = 0,1,2,3$, for all $\Thetaf$ and $\Thetat \neq \kappa$,
	\begin{equation} \label{eq:Thetatbound}
		\int_{v_0}^v \int_{S_{u,v'}} r^{2p-4} \vert \mathfrak{D}^k \Thetat_p \vert^2 d\mu_{S_{u,v'}} dv' \leq C \left( \data + \frac{1}{v_0} \right),
	\end{equation}
	\begin{equation} \label{eq:Thetafbound}
		\int_{u_0}^u \int_{S_{u',v}} r^{2p-2} \vert \mathfrak{D}^k \Thetaf_p \vert^2 d\mu_{S_{u',v}} du' \leq C \left( \data + \frac{1}{\sqrt{v_0}} \right),
	\end{equation}
	and, for $k=0,1,2$,
	\begin{equation} \label{eq:kappabound}
		\int_{v_0}^v \int_{S_{u,v'}} r^{4} \vert \mathfrak{D}^k r \nablaslash \kappa \vert^2 d\mu_{S_{u,v'}} dv' \leq C \left( \data + \frac{1}{v_0} \right).
	\end{equation}
\end{proposition}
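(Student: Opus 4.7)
The approach is transport estimates on the commuted propagation equations coupled with integration over the appropriate null hypersurface. For a $\Thetat_p$ variable satisfying $\nablaslash_3 \mathfrak{D}^k \Thetat_p = E_3[\mathfrak{D}^k \Thetat_p]$, I apply the identity \eqref{eq:transportid3} with $h = r^{2p-4}\vert\mathfrak{D}^k\Thetat_p\vert^2$; since $\Omega^2 \geq 1/2$ and $\tr\chibar + 2/r \in \Gamma_2$, the weight factor $e_3(r^{2p-4}) + \tr\chibar\, r^{2p-4}$ contributes only an absorbable term of schematic form $\rp_1 r^{2p-4}\vert\mathfrak{D}^k\Thetat_p\vert^2$. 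Integrating from $u_0$ to $u$ yields a sphere estimate at $(u,v')$, and then integrating over $v' \in [v_0,v]$ produces \eqref{eq:Thetatbound}; the initial data contribution reduces to the bound on $\{u=u_0\}$ provided by Theorem \ref{thm:main3}, and the Gr\"onwall step in $u$ closes via the finiteness of $u_f - u_0$.

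For a $\Thetaf_p$ variable satisfying $\nablaslash_4 \mathfrak{D}^k \Thetaf_p + \tfrac{p}{2}\tr\chi\,\mathfrak{D}^k \Thetaf_p = E_4[\mathfrak{D}^k\Thetaf_p]$, I apply identity \eqref{eq:transportid4} with $h = r^{2p-2}\vert\mathfrak{D}^k\Thetaf_p\vert^2$, using the same cancellation $(p-1)(\tr\chi - \tr\chi_\circ) \in \Gamma_2$ exploited in the proof of Proposition \ref{prop:Ricci4main} to eliminate the leading-order $\tr\chi$ factors. Integration from $v_0$ to $v$ followed by integration over $u' \in [u_0,u]$ gives \eqref{eq:Thetafbound}. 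The sub-principal errors---products $\rp_{p_1}\Gamma_{p_2}$, $\Gamma_{p_1}\Gamma_{p_2}$, curvature terms $\psi_{q}$, and lower-order energy-momentum terms $\mathfrak{D}\mathcal{T}_{q}$---are controlled via Cauchy--Schwarz using the bootstrap assumptions \eqref{eq:Ricciba}--\eqref{eq:Gammaslashbanull}, the estimates already established in Propositions \ref{prop:tmain}, \ref{prop:curvmain}, \ref{prop:Ricci4main}, and \ref{prop:Ricci3main}, and the Sobolev inequalities of Section \ref{section:Sobolev} at lower orders.

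The main obstacle is handling the linear principal terms which couple the $\Theta$ variables and introduce top-order derivatives of Ricci coefficients. Three couplings require particular care. First, the $\tfrac{1}{r}\mubar$ term in $E_4[\mu]$ and the $\tfrac{1}{r}\mu$ and $\tfrac{1}{2}(\tr\chi-2/r)\mu$ terms in $E_3[\mubar]$ couple the two mass-aspect functions; after weighted Cauchy--Schwarz, the $\mubar$ contribution to the $\mu$ estimate is bounded, upon $u'$-integration, by $(u_f-u_0)$ times the outgoing hypersurface bound \eqref{eq:Thetatbound} for $\mubar$, while the $\mu$ contribution to the $\mubar$ estimate is absorbed via Gr\"onwall in $u$. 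Second, the $\tfrac{2}{r}r\nablaslash\omega$ term in $E_4[r\nablaslash(\tr\chi - 2/r)]$ is the most delicate: at the top order it requires control of $\mathfrak{D}^3 r\nablaslash\omega$ on spacetime, which is supplied by $u'$-integration of the outgoing-hypersurface bootstrap assumption \eqref{eq:thetabclast} on $\mathfrak{D}^k(r\nablaslash)^2\omega$; the imperfect weight in the resulting Cauchy--Schwarz is the source of the weaker $1/\sqrt{v_0}$ factor appearing in \eqref{eq:Thetafbound}. Third, the $\hat{\chibar}\cdot\nablaslash\omega$ and ${}^*\hat{\chibar}\cdot\nablaslash\omegadag$ terms in $E_3[\kappa]$ are handled analogously, with the $(r\nablaslash)^2\omegadag$ portion of \eqref{eq:thetabclast} supplying the needed control.

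Finally, for $\kappa$ at the top order, the structural fact that commuting with $r\nablaslash$ preserves the principal form of the equation---so that $E_3[\mathfrak{D}^2 r\nablaslash\kappa]$ contains only $r\nablaslash$ (rather than $\nablaslash_3$ or $r\nablaslash_4$) principal-order derivatives of other quantities---is essential. This avoids the worst-behaved top-order derivatives of $\omega$ and $\omegadag$ and allows \eqref{eq:kappabound} to close using only the bootstrap assumptions \eqref{eq:thetabcfirst}--\eqref{eq:thetabclast} together with the previously established Ricci and curvature estimates. Combining the three bounds recovers all of \eqref{eq:Thetatbound}--\eqref{eq:kappabound}.
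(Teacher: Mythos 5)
Your overall architecture coincides with the paper's: weighted transport estimates for the commuted propagation equations of the $\Theta$ variables, the renormalisation cancellation of the leading $\tr \chi$ (resp.\ $\tr \chibar$) factors, Cauchy--Schwarz against the bootstrap assumptions and the previously established Ricci, curvature and energy--momentum bounds, the weighted absorption of the $\mu$--$\mubar$ coupling (which, together with the $r\nablaslash\omega$ term, is what produces the $1/\sqrt{v_0}$ in \eqref{eq:Thetafbound}), Gr\"onwall in $u$ for the $\Thetat$ variables and in $v$ with the integrable weight $r^{-3/2}$ for the $\Thetaf$ variables, and the observation that commuting the $\kappa$ equation only with $r\nablaslash$ keeps the principal $\omega$, $\omegadag$ terms in the forms covered by \eqref{eq:thetabcfirst}--\eqref{eq:thetabclast}. (A small mislabel: in the $\mubar$ estimate it is the $\mubar$ self-term that is absorbed by Gr\"onwall in $u$, while the $\frac{1}{r}\mathfrak{D}^3\mu$ contribution must be bounded independently, using the incoming-hypersurface bound on $\mathfrak{D}^3\mu$ together with the $r^{-2}$ weight deficit to gain a factor $1/v_0$; this is available, so it is only an imprecision of exposition.)

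There is, however, one step that fails as written: the treatment of the $\tfrac{2}{r}r\nablaslash\omega$ term in $E_4\left[r\nablaslash\left(\tr\chi - \tfrac{2}{r}\right)\right]$ at top order. You assert that the required control of $\mathfrak{D}^3 r\nablaslash\omega$ is supplied by the bootstrap assumption on $\mathfrak{D}^k(r\nablaslash)^2\omega$. This covers only the cases in which the operator applied directly to $r\nablaslash\omega$ is $r\nablaslash$ or $\nablaslash_3$, since \eqref{eq:thetabcfirst}--\eqref{eq:thetabclast} control exactly $\mathfrak{D}^2(r\nablaslash)^2\omega$ and $\mathfrak{D}^2\nablaslash_3 r\nablaslash\omega$ and nothing else at the fourth order. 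When $\mathfrak{D}^3 = \mathfrak{D}^2 r\nablaslash_4$ the error contains $\tfrac{1}{r}\mathfrak{D}^2 r\nablaslash_4 r\nablaslash\omega$, a fourth derivative of $\omega$ involving an outgoing derivative, which is controlled neither by the $\Theta$ bootstrap assumptions nor by Propositions \ref{prop:Ricci4main}, \ref{prop:Ricci3main} (which stop at three derivatives of $\omega$); commuting the $r\nablaslash_4$ past the $r\nablaslash$ does not help, since $r\nablaslash_4\omega$ is itself already a first derivative of a Ricci coefficient and one still faces four derivatives of $\omega$. The paper resolves this by abandoning the transport estimate entirely for this derivative combination: it substitutes the propagation equation to write $r\nablaslash_4\, r\nablaslash\left(\tr\chi - \tfrac{2}{r}\right)$ explicitly as a sum of non-principal terms (each containing at most one derivative of a Ricci coefficient and decaying like $r^{-3}$), and then applies $\mathfrak{D}^2$ and the already-established third-order estimates to conclude. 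Without this case distinction your argument for \eqref{eq:Thetafbound} does not close for $\Thetaf = r\nablaslash\left(\tr\chi - \tfrac{2}{r}\right)$.
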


\begin{proof}
	For $k \leq 2$, bounds for
	\[
		\int_{S_{u,v}} r^{2p-2} \left\vert \Thetat_p \right\vert^2 d \mu_{S_{u,v}}, \qquad \text{and} \qquad \int_{S_{u,v}} r^{2p-2} \left\vert \Thetaf_p \right\vert^2 d \mu_{S_{u,v}},
	\]
	can be obtained exactly as in Propositions \ref{prop:Ricci3main} and \ref{prop:Ricci4main}, then integrated to give \eqref{eq:Thetatbound} and \eqref{eq:Thetafbound}.  For $k\leq 1$, \eqref{eq:kappabound} can be obtained similarly.
	
	The new difficulties are at the top order, so assume now $k=3$ and consider $\Thetat \neq \kappa$.  Note that the bootstrap assumptions \eqref{eq:thetabcfirst}--\eqref{eq:thetabclast} together with the Sobolev inequalities of Section \ref{section:Sobolev} give the pointwise bounds\footnote{For $\Theta_p = r \nablaslash \left( \tr \chi - \frac{2}{r} \right), \mu$, actually have the pointwise bounds $r^{p}\vert \mathfrak{D} \Theta_p \vert \leq C$.  There is a loss of $r^{\frac{1}{2}}$ for the other variables since the Sobolev inequality on the outgoing null hypersurfaces have to be used.}
	\[
		r^p\vert \Theta_p \vert, r^{p-\frac{1}{2}}\vert \mathfrak{D} \Theta_p \vert \leq C,
	\]
	for $\Theta_p \neq \kappa$, and
	\[
		r^4\vert \kappa \vert, r^{\frac{7}{2}} \vert r \nablaslash \kappa \vert \leq C.
	\]
	Equation \eqref{eq:transportid3} with $h = r^{2p-4} \Thetat_p$ gives,
	\begin{align*}
		&
		\int_{S_{u,v'}} r^{2p-4} \left\vert \mathfrak{D}^3 \Thetat_p \right\vert^2 d \mu_{S_{u,v'}}
		=
		\int_{S_{u_0,v'}} r(u_0,v')^{2p-4} \left\vert \mathfrak{D}^3 \Thetat_p \right\vert^2 d \mu_{S_{u_0,v'}}
		\\
		& \qquad \qquad
		+ \int_{u_0}^u \int_{S_{u',v'}} r^{2p-4} \mathfrak{D}^3 \Thetat_p \cdot E_3 \left[ \mathfrak{D}^3 \Thetat_p \right] d \mu_{S_{u',v'}} du',
	\end{align*}
	where $\tr \chibar \mathfrak{D}^3 \Thetat_p = ( \Gamma_2 + \rp_1) \cdot \mathfrak{D}^3 \Thetat_p$ has been absorbed into the error $E_3 \left[ \mathfrak{D}^3 \Thetat_p \right]$.  Integrating in $v'$, this gives,
	\begin{align*}
		&
		\int_{v_0}^v \int_{S_{u,v'}} r^{2p-4} \left\vert \mathfrak{D}^3 \Thetat_p \right\vert^2 d \mu_{S_{u,v'}} dv'
		=
		\int_{v_0}^v \int_{S_{u_0,v'}} r(u_0,v')^{2p-4} \left\vert \mathfrak{D}^3 \Thetat_p \right\vert^2 d \mu_{S_{u_0,v'}} dv'
		\\
		& \qquad \qquad
		+ \int_{v_0}^v \int_{u_0}^u \int_{S_{u',v'}} r^{2p-4} \mathfrak{D}^3 \Thetat_p \cdot E_3 \left[ \mathfrak{D}^3 \Thetat_p \right] d \mu_{S_{u',v'}} du' dv'.
	\end{align*}
	Recall,
	\[
		\int_{v_0}^v \int_{S_{u_0,v'}} r(u_0,v')^{2p-4} \left\vert \mathfrak{D}^3 \Thetat_p \right\vert^2 d \mu_{S_{u_0,v'}} dv' \leq \data.
	\]
	
	It remains to estimate the error terms.  Consider first the quadratic terms
	\[
		\sum_{p_1+p_2 \geq p} \Gamma_{p_1} \cdot r \nablaslash \Gamma_{p_2},
	\]
	in $E_3 \left[ \Thetat_p \right]$.  They will give rise to terms in $E_3 \left[ \mathfrak{D}^3 \Thetat_p \right]$ of the form\footnote{There will also be cubic and higher order terms but, since they have the same $r$ decay and there will always be at most one factor which is not lower order and hence cannot be estimated in $L^{\infty}$, they are treated in exactly the same manner as the quadratic terms.}
	\[
		\mathfrak{D}^{k_1} \Gamma_{p_1} \cdot \mathfrak{D}^{k_2} r \nablaslash \Gamma_{p_2},
	\]
	where $p_1 + p_2 \geq p$ and $k_1+k_2 = 3$.  It must be the case that either $k_1\leq 1$ or $k_2 \leq 1$.  Assume first that $k_1 \leq 1$.  Then, by Propositions \ref{prop:Ricci4main} and \ref{prop:Ricci3main} and the Sobolev inequalities of Section \ref{section:Sobolev},
	\begin{align*}
		r^{2p-4} \left\vert \mathfrak{D}^{k_1} \Gamma_{p_1} \cdot \mathfrak{D}^{k_2} r \nablaslash \Gamma_{p_2} \right\vert^2
		\leq 
		&
		C r^{2p-4} r^{-2p_1} \left( \data + \frac{1}{v_0} \right) \left\vert \mathfrak{D}^{k_2} r \nablaslash \Gamma_{p_2} \right\vert^2
		\\
		\leq
		&
		C \left( \data + \frac{1}{v_0} \right) r^{2p_2 - 4} \left\vert \mathfrak{D}^{k_2} r \nablaslash \Gamma_{p_2} \right\vert^2,
	\end{align*}
	since $p_1+p_2 \geq p$.  Recall that such terms only occur for $\Gamma_{p_2} = \tr\chi - \frac{2}{r}, \tr \chibar + \frac{2}{r}, \hat{\chi}, \hat{\chibar}, \eta, \etabar$.  If $\Gamma_{p_2} = \tr \chibar + \frac{2}{r}, \hat{\chibar}, \etabar$ then, using the bootstrap assumptions \eqref{eq:thetabcfirst}--\eqref{eq:thetabclast},
	\begin{align*}
		&
		\int_{v_0}^v \int_{u_0}^u \int_{S_{u,v'}} r^{2p-4} \mathfrak{D}^3 \Thetat_p \cdot \mathfrak{D}^{k_1} \Gamma_{p_1} \cdot \mathfrak{D}^{k_2} r \nablaslash \Gamma_{p_2} d \mu_{S_{u',v'}} du' dv'
		\\
		& \qquad
		\leq
		\int_{v_0}^v \int_{u_0}^u \int_{S_{u,v'}} r^{2p-4} \left( \left\vert \mathfrak{D}^3 \Thetat_p \right\vert^2 + \left\vert \mathfrak{D}^{k_1} \Gamma_{p_1} \cdot \mathfrak{D}^{k_2} r \nablaslash \Gamma_{p_2} \right\vert^2 \right) d \mu_{S_{u',v'}} du' dv'
		\\
		& \qquad
		\leq
		\int_{v_0}^v \int_{u_0}^u \int_{S_{u,v'}} r^{2p-4} \left\vert \mathfrak{D}^3 \Thetat_p \right\vert^2 d \mu_{S_{u',v'}} du' dv'
		\\
		& \qquad \qquad
		+
		C\left( \data + \frac{1}{v_0} \right) \int_{v_0}^v \int_{u_0}^u \int_{S_{u,v'}} r^{2p_2-4} \left\vert \mathfrak{D}^{k_2} r \nablaslash \Gamma_{p_2} \right\vert^2 d \mu_{S_{u',v'}} du' dv'
		\\
		& \qquad
		\leq
		\int_{v_0}^v \int_{u_0}^u \int_{S_{u,v'}} r^{2p-4} \left\vert \mathfrak{D}^3 \Thetat_p \right\vert^2 d \mu_{S_{u',v'}} du' dv' +
		C\left( \data + \frac{1}{v_0} \right) \int_{u_0}^u  du'
		\\
		& \qquad
		\leq
		\int_{v_0}^v \int_{u_0}^u \int_{S_{u,v'}} r^{2p-4} \left\vert \mathfrak{D}^3 \Thetat_p \right\vert^2 d \mu_{S_{u',v'}} du' dv' +
		C\left( \data + \frac{1}{v_0} \right).
	\end{align*}
	Similarly, if $\Gamma_{p_2} = \tr\chi - \frac{2}{r}, \hat{\chi}, \eta$, then
	\begin{align*}
		&
		\int_{v_0}^v \int_{u_0}^u \int_{S_{u,v'}} r^{2p-4} \mathfrak{D}^3 \Thetat_p \cdot \mathfrak{D}^{k_1} \Gamma_{p_1} \cdot \mathfrak{D}^{k_2} r \nablaslash \Gamma_{p_2} d \mu_{S_{u',v'}} du' dv'
		\\
		& \qquad
		\leq
		\int_{v_0}^v \int_{u_0}^u \int_{S_{u,v'}} r^{2p-4} \left\vert \mathfrak{D}^3 \Thetat_p \right\vert^2 d \mu_{S_{u',v'}} du' dv'
		\\
		& \qquad \qquad
		+
		C\left( \data + \frac{1}{v_0} \right) \int_{v_0}^v \int_{u_0}^u \int_{S_{u,v'}} r^{2p_2-4} \left\vert \mathfrak{D}^{k_2} r \nablaslash \Gamma_{p_2} \right\vert^2 d \mu_{S_{u',v'}} du' dv'
		\\
		& \qquad
		\leq
		\int_{v_0}^v \int_{u_0}^u \int_{S_{u,v'}} r^{2p-4} \left\vert \mathfrak{D}^3 \Thetat_p \right\vert^2 d \mu_{S_{u',v'}} du' dv' +
		C\left( \data + \frac{1}{v_0} \right) \int_{v_0}^v \frac{1}{(v')^2} dv'
		\\
		& \qquad
		\leq
		\int_{v_0}^v \int_{u_0}^u \int_{S_{u,v'}} r^{2p-4} \left\vert \mathfrak{D}^3 \Thetat_p \right\vert^2 d \mu_{S_{u',v'}} du' dv' +
		C\left( \data + \frac{1}{v_0} \right).
	\end{align*}
	Suppose now that $k_1 \geq 2$.  Then $k_2 \leq 1$ and so, since $\Gamma_{p_2} = \tr\chi - \frac{2}{r}, \tr \chibar + \frac{2}{r}, \hat{\chi}, \hat{\chibar}, \eta, \etabar$, the bootstrap assumptions \eqref{eq:thetabcfirst}--\eqref{eq:thetabclast} and the Sobolev inequality imply that\footnote{Similarly to the $\Theta$ variables, for $\Gamma_{p_2} = \hat{\chibar}, \tr \chibar + \frac{2}{r}, \etabar$ actually get $r^{p_2} \left\vert \mathfrak{D}^{k_2} r \nablaslash \Gamma_{p_2} \right\vert \leq C$ but for $\hat{\chi}, \tr \chi - \frac{2}{r}, \eta$, have to use the Sobolev inequality on the outgoing null hypersurfaces and hence lose the power of $r^{\frac{1}{2}}$.}
	\[
		r^{p_2 - \frac{1}{2}} \left\vert \mathfrak{D}^{k_2} r \nablaslash \Gamma_{p_2} \right\vert \leq C.
	\]
	Hence,
	\begin{align*}
		&
		\int_{v_0}^v \int_{u_0}^u \int_{S_{u,v'}} r^{2p-4} \mathfrak{D}^3 \Thetat_p \cdot \mathfrak{D}^{k_1} \Gamma_{p_1} \cdot \mathfrak{D}^{k_2} r \nablaslash \Gamma_{p_2} d \mu_{S_{u',v'}} du' dv'
		\\
		& \qquad
		\leq
		\int_{v_0}^v \int_{u_0}^u \int_{S_{u,v'}} r^{2p-4} \left( \left\vert \mathfrak{D}^3 \Thetat_p \right\vert^2 + \left\vert \mathfrak{D}^{k_1} \Gamma_{p_1} \cdot \mathfrak{D}^{k_2} r \nablaslash \Gamma_{p_2} \right\vert^2 \right) d \mu_{S_{u',v'}} du' dv'
		\\
		& \qquad
		\leq
		\int_{v_0}^v \int_{u_0}^u \int_{S_{u,v'}} r^{2p-4} \left\vert \mathfrak{D}^3 \Thetat_p \right\vert^2 d \mu_{S_{u',v'}} du' dv'
		\\
		& \qquad \qquad
		+
		C \int_{v_0}^v \int_{u_0}^u \int_{S_{u,v'}} r^{2p_1-\frac{7}{2}} \left\vert \mathfrak{D}^{k_1} \Gamma_{p_1} \right\vert^2 d \mu_{S_{u',v'}} du' dv'
		\\
		& \qquad
		\leq
		\int_{v_0}^v \int_{u_0}^u \int_{S_{u,v'}} r^{2p-4} \left\vert \mathfrak{D}^3 \Thetat_p \right\vert^2 d \mu_{S_{u',v'}} du' dv' +
		C\left( \data + \frac{1}{v_0} \right) \int_{v_0}^v \int_{u_0}^u \frac{1}{r^{\frac{3}{2}}} du' dv'
		\\
		& \qquad
		\leq
		\int_{v_0}^v \int_{u_0}^u \int_{S_{u,v'}} r^{2p-4} \left\vert \mathfrak{D}^3 \Thetat_p \right\vert^2 d \mu_{S_{u',v'}} du' dv' +
		C\left( \data + \frac{1}{v_0} \right),
	\end{align*}
	by Propositions \ref{prop:Ricci4main}, \ref{prop:Ricci3main}, since $k_1\leq 3$.
	
	The quadratic terms arising from $\left( \tr \chibar + \frac{2}{r} \right) \mubar = \Gamma_2 \Theta_3$ and $\left( \tr \chi - \frac{2}{r} \right) \mu = \Gamma_2 \Theta_2$ in $E_3 [ \mubar ]$ can be estimated similarly.
	
	The terms 
	\[
		\sum_{p_1+p_2 \geq p} \Gamma_{p_1} \cdot \psi_{p_2},
	\]
	in $E_3 \left[ \Thetat_p \right]$ give rise to quadratic terms of the form
	\[
		\mathfrak{D}^{k_1} \Gamma_{p_1} \cdot \mathfrak{D}^{k_2} \psi_{p_2},
	\]
	with $p_1+p_2 \geq p$ and $k_1 + k_2 \leq 3$.  These terms can be treated similarly since
	\[
		\int_{v_0}^v \int_{u_0}^u \int_{S_{u,v'}} r^{2p_2-4} \left\vert \mathfrak{D}^{k_2} \psi_{p_2} \right\vert^2 d \mu_{S_{u',v'}} du' dv'
		\leq
		C \left( \data + \frac{1}{v_0} \right),
	\]
	by Proposition \ref{prop:curvmain}.
	
	The quadratic terms arising from
	\[
		\sum_{p_1+p_2 \geq p} \Gamma_{p_1} \cdot \mathcal{T}_{p_2},
	\]
	are also similar using Proposition \ref{prop:tmain}.
	
	The terms
	\[
		\sum_{p_1+p_2 \geq p} \rp_{p_1} \cdot \mathfrak{D} \mathcal{T}_{p_2},
	\]
	in $E_3 \left[ \Thetat_p \right]$ give rise to terms of the form
	\[
		\rp_{p_1} \cdot \mathfrak{D}^{k'} \mathcal{T}_{p_2},
	\]
	in $E_3 \left[ \mathfrak{D}^3 \Thetat_p \right]$ with $p_1+p_2 \geq p$, $k'\leq 4$.  For these,
	\begin{align*}
		&
		\int_{v_0}^v \int_{u_0}^u \int_{S_{u,v'}} r^{2p-4} \mathfrak{D}^3 \Thetat_p \cdot \rp_{p_1} \cdot \mathfrak{D}^{k'} \mathcal{T}_{p_2} d \mu_{S_{u',v'}} du' dv'
		\\
		& \qquad
		\leq
		\int_{v_0}^v \int_{u_0}^u \int_{S_{u,v'}} r^{2p-4} \left( \left\vert \mathfrak{D}^3 \Thetat_p \right\vert^2 + \left\vert \rp_{p_1} \cdot \mathfrak{D}^{k'} \mathcal{T}_{p_2} \right\vert^2 \right) d \mu_{S_{u',v'}} du' dv'
		\\
		& \qquad
		\leq
		\int_{v_0}^v \int_{u_0}^u \int_{S_{u,v'}} r^{2p-4} \left\vert \mathfrak{D}^3 \Thetat_p \right\vert^2 d \mu_{S_{u',v'}} du' dv'
		\\
		& \qquad \qquad
		+
		C \int_{v_0}^v \int_{u_0}^u \int_{S_{u,v'}} r^{2p_2-4} \left\vert \mathfrak{D}^{k'} \mathcal{T}_{p_2} \right\vert^2 d \mu_{S_{u',v'}} du' dv'
		\\
		& \qquad
		\leq
		\int_{v_0}^v \int_{u_0}^u \int_{S_{u,v'}} r^{2p-4} \left\vert \mathfrak{D}^3 \Thetat_p \right\vert^2 d \mu_{S_{u',v'}} du' dv' +
		C\left( \data + \frac{1}{v_0} \right),
	\end{align*}
	by Proposition \ref{prop:tmain}.
	
	Consider now the linear term $\frac{1}{r} \mathfrak{D}^3\mu$ appearing in $E_3 \left[ \mathfrak{D}^3 \mubar \right]$.  Recall that $\mubar = \Theta_3$ so $r^{2p-4} = r^2$ and,
	\begin{align*}
		&
		\int_{v_0}^v \int_{u_0}^u \int_{S_{u,v'}} r^{2} \mathfrak{D}^3 \mubar \cdot \frac{1}{r} \mathfrak{D}^{3} \mu d \mu_{S_{u',v'}} du' dv'
		\\
		& \qquad
		\leq
		\int_{v_0}^v \int_{u_0}^u \int_{S_{u,v'}} r^{2} \vert \mathfrak{D}^3 \mubar \vert^2 + \vert \mathfrak{D}^{3} \mu \vert^2 d \mu_{S_{u',v'}} du' dv'.
	\end{align*}
	Now,
	\begin{align*}
		\int_{v_0}^v \int_{u_0}^u \int_{S_{u,v'}} \vert \mathfrak{D}^{3} \mu \vert^2 d \mu_{S_{u',v'}} du' dv'
		&
		\leq
		\int_{v_0}^v \frac{1}{(v')^2} \int_{u_0}^u \int_{S_{u,v'}} r^2 \vert \mathfrak{D}^{3} \mu \vert^2 d \mu_{S_{u',v'}} du' dv'
		\\
		&
		\leq
		C \int_{v_0}^v \frac{1}{(v')^2} dv'
		\\
		&
		\leq
		\frac{C}{v_0}.
	\end{align*}
	The other linear principal terms in $E_3 \left[ \mathfrak{D}^3 \Thetat_p \right]$ arising from the other linear term in $E_3 \left[ \mubar \right]$, the linear term in $E_3 \left[ r \nablaslash \left( \tr \chibar + \frac{2}{r} \right) \right]$ and the linear terms arising from the commutation can be treated similarly.  These are actually even easier as they appear with an additional factor of $\frac{1}{r}$.
	Hence,
	\begin{align*}
		&
		\int_{v_0}^v \int_{S_{u,v'}} r^{2p-4} \left\vert \mathfrak{D}^3 \Thetat_p \right\vert^2 d \mu_{S_{u,v'}} dv'
		\leq
		\\
		& \qquad \qquad
		C \left( \int_{v_0}^v \int_{u_0}^u \int_{S_{u',v'}} r^{2p-4} \left\vert \mathfrak{D}^3 \Thetat_p \right\vert^2  d \mu_{S_{u',v'}} du' dv' + \data + \frac{1}{v_0} \right).
	\end{align*}
	This is true for all $u_0 \leq u' \leq u$ hence, by the Gr\"onwall inequality,
	\[
		\int_{v_0}^v \int_{S_{u,v'}} r^{2p-4} \left\vert \mathfrak{D}^3 \Thetat_p \right\vert^2 d \mu_{S_{u,v'}} dv'
		\leq
		C \left( \data + \frac{1}{v_0} \right).
	\]
	
	The estimate,
	\[
		\int_{v_0}^v \int_{S_{u,v'}} r^{4} \left\vert \mathfrak{D}^2 (r\nablaslash)^2 \kappa \right\vert^2 d \mu_{S_{u,v'}} dv'
		\leq
		C \left( \data + \frac{1}{v_0} \right),
	\]
	is obtained in exactly the same way.  Note that the $r\nablaslash \omega, r\nablaslash \omegadag$ terms which appear in $E_3[\kappa]$ will only give rise to terms involving $\mathfrak{D}^2(r\nablaslash)^2 \omega$ and $\mathfrak{D}^2(r\nablaslash)^2 \omegadag$ in $E_3[ \mathfrak{D}^2 r\nablaslash \kappa]$ which can be controlled by the bootstrap assumptions \eqref{eq:thetabcfirst}--\eqref{eq:thetabclast}.  At the principal order, $\omega$ and $\omegadag$ do not appear otherwise.
	
	Consider now the $\Thetaf$ variables.  Recall the renormalisation of Remark \ref{rem:renorm},
	\[
		\nablaslash_4 \left( r^p \mathfrak{D}^3 \Thetaf_p \right) = r^p E_4 \left[ \mathfrak{D}^3 \Thetaf_p \right],
	\]
	and hence that
	\[
		\nablaslash_4 \left( r^{2p-2} \left\vert \mathfrak{D}^3 \Thetaf_p \right\vert^2 \right) = 2r^{2p-2} \mathfrak{D}^3 \Thetaf_p \cdot E_4 \left[ \mathfrak{D}^3 \Thetaf_p \right] - \frac{2}{r} r^{2p-2} \left\vert \mathfrak{D}^3 \Thetaf_p \right\vert^2.
	\]
	By equation \eqref{eq:transportid4} with $h = r^{2p-2} \left\vert \mathfrak{D}^3 \Thetaf_p \right\vert^2$ then get
	\begin{multline*}
		\int_{S_{u',v}} r^{2p-2} \left\vert \mathfrak{D}^3 \Thetaf_p \right\vert^2 d \mu_{S_{u',v}} = \int_{S_{u',v_0}} r^{2p-2} \left\vert \mathfrak{D}^3 \Thetaf_p \right\vert^2 d \mu_{S_{u',v_0}}
		\\
		+ \int_{v_0}^v \int_{S_{u',v'}} 2 r^{2p-2} \mathfrak{D}^3 \Thetaf_p \cdot E_4 \left[ \mathfrak{D}^3 \Thetaf_p \right] d \mu_{S_{u',v'}} dv',
	\end{multline*}
	where $\left( \tr \chi - \frac{2}{r} \right) \cdot \mathfrak{D}^3 \Thetaf_p = \Gamma_2 \cdot \Theta_p$ has been absorbed into the error.  Integrating in $u'$ gives,
	\begin{multline*}
		\int_{u_0}^u \int_{S_{u',v}} r^{2p-2} \left\vert \mathfrak{D}^3 \Thetaf_p \right\vert^2 d \mu_{S_{u',v}} du'
		\\
		\leq C\left( \data + \int_{u_0}^u \int_{v_0}^v \int_{S_{u',v'}} 2 r^{2p-2} \mathfrak{D}^3 \Thetaf_p \cdot E_4 \left[ \mathfrak{D}^3 \Thetaf_p \right] d \mu_{S_{u',v'}} dv' du' \right).
	\end{multline*}
	Since $E_4 \left[ \mathfrak{D}^3 \Thetaf_p \right]$ is schematically like $\frac{1}{r^{p+2}}$, most of the error terms are estimated in exactly the same way as before (when the weight was $r^{2p-4}$) by
	\[
		C\left( \int_{u_0}^u \int_{v_0}^v \int_{S_{u',v'}} r^{2p-4} \left\vert \mathfrak{D}^3 \Thetaf_p \right\vert^2 d \mu_{S_{u',v'}} dv' du' + \data + \frac{1}{v_0} \right).
	\]
	Some care, however, needs to be taken with the pricipal linear terms and especially with the $\frac{1}{r} r \nablaslash \omega$ term in $E_4\left[ r \nablaslash \left( \tr \chi - \frac{2}{r} \right) \right]$.
	
	Consider first the $\frac{1}{r} \mubar$ term in $E_4[\mu]$.  Since $\mu = \Theta_2$ this will give the following error term,
	\begin{align*}
		&
		\int_{u_0}^u \int_{v_0}^v \int_{S_{u',v}} r^{2} \mathfrak{D}^3 \mu \cdot \frac{1}{r} \mathfrak{D}^3 \mubar  d \mu_{S_{u',v}} d v' du'
		\\
		& \qquad \qquad
		\leq
		\int_{u_0}^u \int_{v_0}^v \int_{S_{u',v}} r^{\frac{1}{2}} \vert \mathfrak{D}^3 \mu \vert^2 + r^{\frac{3}{2}} \vert \mathfrak{D}^3 \mubar \vert^2 d \mu_{S_{u',v}} d v' du'
		\\
		& \qquad \qquad
		\leq
		\int_{u_0}^u \int_{v_0}^v \int_{S_{u',v}} r^{\frac{1}{2}} \vert \mathfrak{D}^3 \mu \vert^2 d \mu_{S_{u',v}} d v' du'
		\\
		& \qquad \qquad \qquad
		+ \frac{1}{\sqrt{v_0}} \int_{u_0}^u \int_{v_0}^v \int_{S_{u',v}} r^{2} \vert \mathfrak{D}^3 \mubar \vert^2 d \mu_{S_{u',v}} d v' du'
		\\
		& \qquad \qquad
		\leq
		\int_{u_0}^u \int_{v_0}^v \int_{S_{u',v}} r^{\frac{1}{2}} \vert \mathfrak{D}^3 \mu \vert^2 d \mu_{S_{u',v}} d v' du' + \frac{C}{\sqrt{v_0}}.
	\end{align*}
	Hence
	\begin{align*}
		&
		\int_{u_0}^u \int_{S_{u',v}} r^2 \vert \mathfrak{D}^3 \mu \vert^2 d \mu_{S_{u',v}} d u'
		\\
		& \qquad \qquad
		\leq
		C \left( \int_{u_0}^u \int_{v_0}^v \int_{S_{u',v}} r^{\frac{1}{2}} \vert \mathfrak{D}^3 \mu \vert^2 d \mu_{S_{u',v}} d v' du' + \data + \frac{1}{\sqrt{v_0}} \right)
		\\
		& \qquad \qquad
		\leq
		C \left( \int_{v_0}^v \frac{1}{(v')^{\frac{3}{2}}} \int_{u_0}^u \int_{S_{u',v}} r^2 \vert \mathfrak{D}^3 \mu \vert^2 d \mu_{S_{u',v}} d v' du' + \data + \frac{1}{\sqrt{v_0}} \right),
	\end{align*}
	and so the Gr\"onwall inequality implies,
	\begin{align*}
		\int_{u_0}^u \int_{S_{u',v}} r^2 \vert \mathfrak{D}^3 \mu \vert^2 d \mu_{S_{u',v}} d u'
		&
		\leq C \exp \left( \int_{v_0}^v \frac{1}{(v')^{\frac{3}{2}}} dv' \right) \left( \data + \frac{1}{\sqrt{v_0}} \right)
		\\
		&
		\leq C \left( \data + \frac{1}{\sqrt{v_0}} \right).
	\end{align*}
	
	Consider now the $\frac{2}{r} r\nablaslash \omega$ term in $E_4 \left[ r \nablaslash \left( \tr \chi - \frac{2}{r} \right) \right]$.  If $\mathfrak{D}^3 = \mathfrak{D}^2 r \nablaslash$ or $\mathfrak{D}^2 \nablaslash_3$ then similarly get
	\begin{align*}
		&
		\int_{u_0}^u \int_{v_0}^v \int_{S_{u',v}} r^{2} \mathfrak{D}^3 r\nablaslash \left( \tr \chi - \frac{2}{r} \right) \cdot \frac{2}{r} \mathfrak{D}^3 r \nablaslash \omega  d \mu_{S_{u',v}} d v' du'
		\\
		& \qquad \qquad
		\leq
		\int_{u_0}^u \int_{v_0}^v \int_{S_{u',v}} r^{\frac{1}{2}} \left\vert \mathfrak{D}^3 r\nablaslash \left( \tr \chi - \frac{2}{r} \right) \right\vert^2 + r^{\frac{3}{2}} \vert \mathfrak{D}^3 r \nablaslash \omega \vert^2 d \mu_{S_{u',v}} d v' du'
		\\
		& \qquad \qquad
		\leq
		\int_{u_0}^u \int_{v_0}^v \int_{S_{u',v}} r^{\frac{1}{2}} \left\vert \mathfrak{D}^3 r\nablaslash \left( \tr \chi - \frac{2}{r} \right) \right\vert^2 d \mu_{S_{u',v}} d v' du' + \frac{C}{\sqrt{v_0}},
	\end{align*}
	using the bootstrap assumptions \eqref{eq:thetabcfirst}--\eqref{eq:thetabclast} for $\mathfrak{D}^2 (r \nablaslash)^2 \omega$ and $\mathfrak{D}^2 \nablaslash_3 r \nablaslash \omega$.  Using the Gr\"onwall inequality then again gives,
	\[
		\int_{u_0}^u \int_{S_{u',v}} r^2 \left\vert \mathfrak{D}^3 r\nablaslash \left( \tr \chi - \frac{2}{r} \right) \right\vert^2 d \mu_{S_{u',v}} d u'
		\leq C \left( \data + \frac{1}{\sqrt{v_0}} \right).
	\]
	
	Suppose now that $\mathfrak{D}^3 = \mathfrak{D}^2 r \nablaslash_4$.  This derivative of $r\nablaslash \left( \tr \chi - \frac{2}{r} \right)$ is estimated from the propagation equation directly,
	\begin{align*}
		\mathfrak{D}^2 r \nablaslash_4 r\nablaslash \left( \tr \chi - \frac{2}{r} \right) 
		= \mathfrak{D}^2 \left( - \left( \tr \chi - \frac{2}{r} \right) r\nablaslash \left( \tr \chi - \frac{2}{r} \right) - \frac{2}{r} \left( \tr \chi - \frac{2}{r} \right) + E_4\left[ r \nablaslash \left( \tr \chi - \frac{2}{r} \right) \right] \right).
	\end{align*}
	Note that there are no principal terms on the right hand side (all involve at most 3 derivatives) and, since they all decay like $\frac{1}{r^3}$, by Propositions \ref{prop:Ricci4main}, \ref{prop:Ricci3main} and \ref{prop:tmain},
	\[
		\int_{u_0}^u \int_{S_{u',v}} r^2 \left\vert \mathfrak{D}^2 r \nablaslash_4 r\nablaslash \left( \tr \chi - \frac{2}{r} \right) \right\vert^2 d \mu_{S_{u',v}} d u'
		\leq C \left( \data + \frac{1}{\sqrt{v_0}} \right).
	\]
\end{proof}

\subsection{Top Order Estimates for Ricci Coefficients}
In order to estimate the remaining Ricci coefficients at the top order, the following estimates for the Gauss curvature of the spheres is required.

\begin{proposition} \label{prop:Gauss}
	For any $x \in \mathcal{A}$, if $u = u(x)$, $v = v(x)$ then, for $k = 0,1,2$, the Gauss curvature of the sphere $S_{u,v}$ satisfies,
	\[
		r^4 \int_{S_{u,v}} \left\vert \mathfrak{D}^k \left( K - \frac{1}{r^2} \right) \right\vert^2 d \mu_{S_{u,v}} \leq C.
	\]
\end{proposition}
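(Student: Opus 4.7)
The plan is to establish the estimate by invoking the Gauss equation \eqref{eq:Gauss}, renormalised using the Minkowski values of $\tr\chi$ and $\tr\chibar$. Writing $\tr\chi\tr\chibar = (\tr\chi-\tfrac{2}{r})(\tr\chibar+\tfrac{2}{r}) - \tfrac{2}{r}(\tr\chi-\tfrac{2}{r}) + \tfrac{2}{r}(\tr\chibar+\tfrac{2}{r}) - \tfrac{4}{r^2}$ gives
\[
K - \frac{1}{r^2} = \frac{1}{2}\hat{\chi}\cdot\hat{\chibar} - \frac{1}{4}\left(\tr\chi - \tfrac{2}{r}\right)\left(\tr\chibar + \tfrac{2}{r}\right) + \frac{1}{2r}\left(\tr\chi - \tfrac{2}{r}\right) - \frac{1}{2r}\left(\tr\chibar + \tfrac{2}{r}\right) - \rho + \frac{1}{2}\Tslash_{34},
\]
which in the schematic notation of Section \ref{subsec:schemnot} is of the form $\Gamma_1 \cdot \Gamma_2 + \Gamma_2 \cdot \Gamma_2 + \rp_1 \Gamma_2 + \psi_3 + \mathcal{T}_4$, every term being pointwise $O(r^{-3})$.

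For $k=0$, the bound follows immediately after Cauchy--Schwarz: each term contributes at most $Cr^{-6}$ in sphere $L^\infty$ using the pointwise Sobolev estimates \eqref{eq:Riccisob}, \eqref{eq:curvsob} and Proposition \ref{prop:tmain}, and integration against $d\mu_{S_{u,v}}$ gains a factor of $r^2$. For $k=1$, apply $\mathfrak{D}$ and expand via Leibniz. In each resulting product $\mathfrak{D}^{k_1}\Gamma_{p_1} \cdot \mathfrak{D}^{k_2}\Gamma_{p_2}$ (or its $\psi$/$\mathcal{T}$ analogue) with $k_1+k_2 = 1$, at least one factor has $k_i \leq 0$, allowing pointwise control from Sobolev inequalities while the other factor is controlled in $L^2$ on the sphere by the bootstrap assumption \eqref{eq:Ricciba}. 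The $\mathfrak{D}\rho$ term is treated pointwise by \eqref{eq:curvsob}, and $\mathfrak{D}\Tslash_{34}$ by Proposition \ref{prop:tmain}. The $r^4$ weight precisely matches the product of decay rates.

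The case $k=2$ is the main obstacle. The Leibniz expansion of $\mathfrak{D}^2(K - \frac{1}{r^2})$ reduces, after handling the Ricci coefficient products using \eqref{eq:Ricciba} and Sobolev (one factor pointwise, the other in sphere $L^2$) and bounding $\mathfrak{D}^2 \Tslash_{34}$ pointwise via Proposition \ref{prop:tmain}, to controlling $r^4 \int_{S_{u,v}} \vert \mathfrak{D}^2 \rho \vert^2 d\mu_{S_{u,v}}$. This is \emph{not} directly available: the curvature bootstrap \eqref{eq:curvatureba} only provides $L^2$ control of $\mathfrak{D}^2 \rho$ on entire null hypersurfaces. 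To recover the required sphere bound, the plan is to perform a transport estimate in the $\nablaslash_3$ direction. Starting from the $\nablaslash_3 \rho$ Bianchi equation commuted twice with $\mathfrak{D}$ (Proposition \ref{prop:commutedBianchi}), apply the identity \eqref{eq:transportid3} with $h = r^4 \vert \mathfrak{D}^2 \rho \vert^2$ to obtain
\[
r^4 \int_{S_{u,v}} \vert \mathfrak{D}^2 \rho \vert^2 d\mu \leq C r(u_0,v)^4 \int_{S_{u_0,v}} \vert \mathfrak{D}^2 \rho \vert^2 d\mu + C \int_{u_0}^u \int_{S_{u',v}} r^4 \left( \vert \mathfrak{D}^2 \rho \vert^2 + \vert \mathfrak{D}^3 \betabar \vert^2 + \vert E_3[\mathfrak{D}^2 \rho] \vert^2 \right) d\mu\, du'.
\]
The initial term is small by hypothesis, the $\mathfrak{D}^3 \betabar$ integral is controlled by $F^2_{u_0,u}(v)$ (since $r^4 \vert \mathfrak{D}^2 \Dslash\betabar \vert^2 \leq C r^2 \vert \mathfrak{D}^3 \betabar \vert^2$, matching the $r^2$ weight of $\betabar$ in the flux), and the remaining error $E_3[\mathfrak{D}^2 \rho]$ is estimated term-by-term using Propositions \ref{prop:tmain}, \ref{prop:curvmain}, \ref{prop:Ricci4main}, \ref{prop:Ricci3main} together with the Sobolev inequalities, exactly as in the proof of Proposition \ref{prop:Ricci3main}. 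Applying Gr\"onwall on the resulting inequality and using that $u \leq u_f$ is bounded then yields the $k=2$ bound.
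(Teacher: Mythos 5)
Your proposal is correct, and the decomposition is exactly the paper's: the renormalised Gauss equation \eqref{eq:newGauss}, with the Ricci-coefficient products handled by Sobolev plus the bootstrap $L^2$ bounds and $\Tslash_{34}$ by Proposition \ref{prop:tmain}. The only place you diverge from the paper is the step you correctly identify as the main obstacle, namely $r^4\int_{S_{u,v}}\vert\mathfrak{D}^2\rho\vert^2$. Both arguments rest on the same mechanism (transport in $u$ from $\{u=u_0\}$, paying with the incoming flux $F^2_{u_0,u}(v)$), but the paper's version is leaner: since $\nablaslash_3$ is itself one of the commutation operators $\mathfrak{D}$, the derivative $\partial_u\int_{S_{u',v}} r^4\vert\mathfrak{D}^2\rho\vert^2\,d\mu$ is bounded directly by $\int_{S_{u',v}} r^4(\vert\mathfrak{D}^2\rho\vert^2+\vert\mathfrak{D}^3\rho\vert^2)\,d\mu$, and integrating in $u'$ one lands immediately on terms of $F^2_{u_0,u}(v)$ — no Bianchi equation, no $\Dslash\mathfrak{D}^2\betabar$ term, and no error analysis of $E_3[\mathfrak{D}^2\rho]$ are needed. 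Your route through the twice-commuted $\nablaslash_3\rho$ Bianchi equation does close (your weight-matching $r^4\vert\Dslash\mathfrak{D}^2\betabar\vert^2\le Cr^2\vert\mathfrak{D}^3\betabar\vert^2$ against the $\betabar$ entry of $F^2$ is right, and the $E_3[\mathfrak{D}^2\rho]$ terms are handled exactly as in Proposition \ref{prop:Ricci3main}), but it buys nothing over the direct argument; the Bianchi structure is only genuinely needed when one wants to avoid losing a derivative, which is not an issue here since $F^2$ already controls $\mathfrak{D}^3\rho$ at the required weight.
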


\begin{proof}
	Recall the Gauss equation \eqref{eq:Gauss}, which can be rewritten,
	\begin{multline} \label{eq:newGauss}
		K - \frac{1}{r^2} = \frac{1}{2} \hat{\chi} \cdot \hat{\chibar} - \frac{1}{4} \left( \tr \chi - \frac{2}{r} \right) \left( \tr \chibar + \frac{2}{r} \right) 
		+ \frac{1}{2r} \left( \tr \chi - \frac{2}{r} \right) - \frac{1}{2r} \left( \tr \chibar + \frac{2}{r} \right) - \rho + \frac{1}{2} \Tslash_{34}.
	\end{multline}
	If $k\leq 1$ then, since $\Tslash_{34} = \mathcal{T}_4$, Proposition \ref{prop:tmain} implies that
	\[
		\vert \mathfrak{D}^k \Tslash_{34} \vert \leq \frac{C}{r^4},
	\]
	hence
	\[
		r^4 \int_{S_{u,v}} \vert \mathfrak{D}^k \Tslash_{34} \vert^2 d \mu_{S_{u,v}} \leq \frac{C}{r^4} \int_{S_{u,v}} d \mu_{S_{u,v}} \leq C.
	\]
	For $k=2$, first recall that Proposition \ref{prop:tmain} implies that
	\[
		\int_{v = v'} r^6 \vert \mathfrak{D}^2 \Tslash_{34} \vert^2 d \mu_{v = v'} + \int_{v = v'} r^6 \vert \mathfrak{D}^3 \Tslash_{34} \vert^2 d \mu_{v = v'} \leq C,
	\]
	for all $v_0 \leq v' \leq \infty$.  Hence
	\begin{align*}
		&
		\left\vert \int_{S_{u,v'}} r^6 \vert \mathfrak{D}^2 \Tslash_{34} \vert^2 d \mu_{S_{u,v'}} - \int_{S_{u_0,v'}} r^6 \vert \mathfrak{D}^2 \Tslash_{34} \vert^2 d \mu_{S_{u_0,v'}} \right\vert
		\\
		& \qquad \qquad
		= \left\vert  \partial_u \int_{u_0}^u \int_{S_{u',v'}} r^6 \vert \mathfrak{D}^2 \Tslash_{34} \vert^2 d \mu_{S_{u',v'}} \right\vert
		\\
		& \qquad \qquad
		\leq C \left( \int_{v = v'} r^6 \vert \mathfrak{D}^2 \Tslash_{34} \vert^2 + r^6 \vert \mathfrak{D}^3 \Tslash_{34} \vert^2 d \mu_{v = v'} \right)
		\\
		& \qquad \qquad
		\leq C.
	\end{align*}
	Similarly, since
	\[
		\int_{v=v'} r^4 \vert \mathfrak{D}^k \rho \vert^2 d \mu_{S_{u,v}} \leq C,
	\]
	for $k = 0,1,2,3$, one obtains,
	\[
		\int_{S_{u,v'}} r^4 \vert \mathfrak{D}^k \rho \vert^2 d \mu_{S_{u,v'}} \leq C,
	\]
	for $k = 0,1,2$.
	
	Similarly for the Ricci coefficient terms on the right hand side of \eqref{eq:newGauss}, one can easily check that each has the correct decay to be controlled after being multiplied by $r^4$ and integrated on the spheres.  Moreover, once 3 derivatives have been taken, in the nonlinear terms there will be at most one factor involving $3$ derivatives and so the other terms can be estimated in $L^{\infty}$ as in Sections \ref{section:curvature} and \ref{section:Ricci}.  Hence
	\[
		r^4 \int_{S_{u,v}} \left\vert \mathfrak{D}^k \left( K - \frac{1}{r^2} \right) \right\vert^2 d \mu_{S_{u,v}} \leq C.
	\]
\end{proof}

\begin{proposition} \label{prop:divcurltr}
	Let $\xi$ be a totally symmetric $(0,j+1)$ $S_{u,v}$ tensor such that
	\[
		\divslash \xi = a, \qquad \curlslash \xi = b, \qquad \tr \xi = c,
	\]
	and assume that the bounds on the Gauss curvature of Proposition \ref{prop:Gauss} hold.  Then, for $1\leq k \leq 4$,
	\[
		\Vert (r \nablaslash)^k \xi \Vert_{L^2(S_{u,v})} 
		\leq C \Bigg[ \sum_{i=1}^{k-1} \Big( \Vert r (r\nablaslash)^i a \Vert_{L^2(S_{u,v})} + \Vert r (r\nablaslash)^i b \Vert_{L^2(S_{u,v})} 
		+ \Vert (r\nablaslash)^i c \Vert_{L^2(S_{u,v})} \Big)
		+ \Vert \xi \Vert_{L^2(S_{u,v})} \Bigg],
	\]
	where, if $j=0$, then $\tr \xi$ is defined to be $0$.
\end{proposition}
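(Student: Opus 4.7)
The plan is to prove this by induction on $k$, with the base case $k=1$ obtained from a Bochner-type integration by parts identity on each sphere $(S_{u,v},\gslash)$, and the higher orders obtained by commuting $r\nablaslash$ through $\divslash, \curlslash, \tr$ and applying the base case iteratively.

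For the base case, we use the standard identity for a totally symmetric $(0,j+1)$ tensor $\xi$ on a surface. After integration by parts and repeated use of the Ricci identity $[\nablaslash_A, \nablaslash_B] = K \cdot (\ldots)$, one obtains
\begin{align*}
\int_{S_{u,v}} |\nablaslash \xi|^2 \, d\mu_S = \int_{S_{u,v}} \bigl[ c_1 |\divslash \xi|^2 + c_2 |\curlslash \xi|^2 + c_3 |\nablaslash \tr \xi|^2 + K \cdot Q(\xi) \bigr] d\mu_S,
\end{align*}
where $c_1,c_2,c_3$ are combinatorial constants depending only on $j$, and $Q(\xi)$ is a quadratic expression in the components of $\xi$. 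Multiplying by $r^2$, writing $r^2 K = 1 + r^2(K - 1/r^2)$ and using Proposition \ref{prop:Gauss} together with the Sobolev inequality of Section \ref{section:Sobolev} to control the correction $r^2(K - 1/r^2)$ pointwise, the curvature term $\int_S K \cdot Q(\xi) \, d\mu_S$ can be absorbed into $\|\xi\|_{L^2(S)}^2$, yielding the $k=1$ inequality.

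For the inductive step, we apply the base case to the tensor $(r\nablaslash)^{k-1}\xi$ (after symmetrisation; the antisymmetric parts are reduced to lower-order curvature terms via the Ricci identity). The quantities $\divslash (r\nablaslash)^{k-1}\xi$, $\curlslash (r\nablaslash)^{k-1}\xi$, and $\tr (r\nablaslash)^{k-1}\xi$ are computed by commuting $(r\nablaslash)^{k-1}$ past $\divslash,\curlslash,\tr$. Each such commutator, by the Ricci identity $[\nablaslash_A,\nablaslash_B]\eta = K(\gslash_{A\cdot}\eta_B - \gslash_{B\cdot}\eta_A)$, produces the principal term $(r\nablaslash)^{k-1} a$, $(r\nablaslash)^{k-1}b$, or $(r\nablaslash)^{k-1}c$, plus a sum of lower-order terms of the schematic form $(r\nablaslash)^i (r^2 K) \cdot (r\nablaslash)^{k-1-i}\xi$ with $0 \leq i \leq k-2$. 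Applying the inductive hypothesis at lower orders and Proposition \ref{prop:Gauss} (for the $i \geq 1$ commutator factors) allows each such term to be absorbed into the right-hand side.

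The main obstacle, and the reason the statement is limited to $k \leq 4$, is control of the commutator contributions at the top order: when $k=4$ the worst commutator involves $(r\nablaslash)^2(r^2 K - 1)$ paired with $(r\nablaslash)\xi$. The former is controlled in $L^2(S)$ by Proposition \ref{prop:Gauss}, while the latter must be controlled pointwise --- which is possible precisely because the $k=3$ case of the proposition, combined with the Sobolev inequality on $S_{u,v}$, yields an $L^\infty$ bound on $(r\nablaslash)\xi$ in terms of the right-hand side quantities. The bookkeeping at the top order thus requires a careful layering of the induction so that the $L^\infty$ bounds needed at step $k$ come from the $L^2$ bounds established at step $k-1$.
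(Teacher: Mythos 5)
Your proposal is correct and follows essentially the same route as the paper: the integration-by-parts identity for the $\divslash/\curlslash/\tr$ system (the paper quotes it explicitly from Christodoulou, with the $(j+1)K|\xi|^2$ and $jK|c|^2$ curvature terms), multiplication by the appropriate power of $r$ using $|K|\leq C/r^2$ for $k=1$, and then iteration on the symmetrised derivative $(\nablaslash\xi)^s$, with the $k=3,4$ commutator terms involving $\nablaslash K$ and $\nablaslash^2 K$ handled by placing the curvature factor in $L^2$ via Proposition \ref{prop:Gauss} and the $\xi$-factor in $L^\infty$ via the Sobolev inequality from the lower-order cases. Your diagnosis of why the range stops at $k=4$ (only two derivatives of $K-1/r^2$ are controlled) also matches the paper.
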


\begin{proof}
	The following identity is satisfied by $\xi,a,b,c$,
	\begin{equation} \label{eq:elliptic}
		\int_{S_{u,v}} \vert \nablaslash \xi \vert^2 + (j+1) K\vert \xi \vert^2 d \mu_{S_{u,v}} = \int_{S_{u,v}} \vert a \vert^2 + \vert b \vert^2 + j K \vert c \vert^2 d \mu_{S_{u,v}},
	\end{equation}
	where $K$ is the Gauss curvature of $S_{u,v}$.  See Chapter 7 of \cite{Ch}.
	
	By Proposition \ref{prop:Gauss} and the Sobolev inequality,
	\[
		\vert K \vert \leq \frac{C}{r^2},
	\]
	uniformly.  This immediately gives the estimate for $k=1$ after multiplying the identity \eqref{eq:elliptic} by $r^2$.
	
	For $k=2$, note that the symmetrised angular derivative of $\xi$,
	\[
		(\nablaslash \xi)^s_{BA_1\ldots A_{j+1}} = \frac{1}{j+2} \left( \nablaslash_B \xi_{A_1 \ldots A_{j+1}} + \sum_{i=1}^{j+1} \nablaslash_{A_i} \xi_{A_1 \ldots A_{i-1}A_{i+1}\ldots A_{j+1}} \right),
	\]
	satisfies
	\begin{align*}
		\divslash (\nablaslash \xi)^s 
		&
		=
		(\nablaslash a)^s - \frac{1}{j+2} ({}^* \nablaslash b)^s + (j+1) K \xi - \frac{2K}{j+1} ( g \otimes^s c)
		\\
		\curlslash (\nablaslash \xi)^s
		&
		=
		\frac{j+1}{j+2} (\nablaslash b)^s + (j+1) K({}^* \xi)^s
		\\
		\tr (\nablaslash \xi)^s
		&
		=
		\frac{2}{j+2} a + \frac{j}{j+2}(\nablaslash c)^s,
	\end{align*}
	where,
	\[
		({}^*\nablaslash b)^s_{A_1\ldots A_{j+1}} = \frac{1}{j+1} \sum_{i=1}^{j+1} {\epsslash_{A_i}}^B \nablaslash_B b_{A_1 \ldots A_{i-1}A_{i+1}\ldots A_{j+1}},
	\]
	and
	\[
		(g \otimes^s c)_{A_1 \ldots A_{j+1}} = \sum_{i<l=2,\ldots,j+1} \gslash_{A_i A_l} c_{A_1 \ldots A_{i-1}A_{i+1}\ldots A_{l-1}A_{l+1} \ldots A_{j+1}} .
	\]
	Again see Chapter 7 of \cite{Ch}.
	
	The identity \eqref{eq:elliptic} then gives,
	\begin{align*}
		\Vert \nablaslash^2 \xi \Vert_{L^2(S_{u,v})}^2
		\leq
		&
		C \bigg[ \Vert K( \vert \nablaslash \xi \vert^2 + \vert a\vert^2 + \vert \nablaslash c \vert^2 )\Vert_{L^1(S_{u,v})} + \Vert \nablaslash a \Vert_{L^2(S_{u,v})}^2
		\\
		&
		 + \Vert \nablaslash b \Vert_{L^2(S_{u,v})}^2  + \Vert K \xi\Vert_{L^2(S_{u,v})}^2 + \Vert K c \Vert_{L^2(S_{u,v})}^2 \bigg]
		 \\
		 \leq
		 &
		 C \bigg[ \frac{1}{r^2} \left( \Vert \nablaslash \xi \Vert_{L^2(S_{u,v})}^2 + \Vert a\Vert_{L^2(S_{u,v})}^2 + \Vert \nablaslash c \Vert_{L^2(S_{u,v})}^2 \right) + \Vert \nablaslash a \Vert_{L^2(S_{u,v})}^2
		\\
		&
		 + \Vert \nablaslash b \Vert_{L^2(S_{u,v})}^2  + \frac{1}{r^4} \left( \Vert \xi\Vert_{L^2(S_{u,v})}^2 + \Vert c \Vert_{L^2(S_{u,v})}^2 \right) \bigg],
	\end{align*}
	again using $\vert K\vert \leq \frac{C}{r^2}$.  Multiplying by $r^4$ and using the $k=1$ estimate then gives the estimate for $k=2$.
	
	For $k=3$ can similarly compute $(\nablaslash (\nablaslash \xi)^s )^s$ to get
	\begin{align*}
		\Vert \nablaslash^3 \xi \Vert_{L^2(S_{u,v})}^2
		\leq
		&
		C \bigg( \sum_{i=0}^2 \bigg[ \frac{1}{r^{4-2i}} \left( \Vert \nablaslash^i a \Vert_{L^2(S_{u,v})}^2 + \Vert \nablaslash^i b \Vert_{L^2(S_{u,v})}^2 \right) 
		\\
		&
		+ \frac{1}{r^{6-2i}} \left( \Vert \nablaslash^i c \Vert_{L^2(S_{u,v})}^2 + \Vert \nablaslash^i \xi \Vert_{L^2(S_{u,v})}^2 \right) \bigg]
		\\
		&
		+ \Vert (\nablaslash K ) \xi \Vert_{L^2(S_{u,v})}^2 + \Vert (\nablaslash K) c \Vert_{L^2(S_{u,v})}^2  \bigg).
	\end{align*}
	By the Sobolev inequality,
	\begin{align*}
		\Vert (\nablaslash K) \xi \Vert_{L^2(S_{u,v})}^2
		&
		\leq
		\Vert \xi \Vert_{L^{\infty}(S_{u,v})}^2 \Vert \nablaslash K \Vert_{L^2(S_{u,v})}^2
		\\
		&
		\leq
		\frac{C}{r^2} \Vert \nablaslash K \Vert_{L^2(S_{u,v})}^2 \sum_{i=0}^2 \Vert (r\nablaslash)^i \xi \Vert_{L^2(S_{u,v})}^2
		\\
		&
		\leq
		\frac{C}{r^4} \Vert r\nablaslash K \Vert_{L^2(S_{u,v})}^2 \sum_{i=0}^2 \Vert (r\nablaslash)^i \xi \Vert_{L^2(S_{u,v})}^2
		\\
		&
		\leq
		\frac{C}{r^8} \sum_{i=0}^2 \Vert (r\nablaslash)^i \xi \Vert_{L^2(S_{u,v})}^2.
	\end{align*}
	Similarly,
	\[
		\Vert (\nablaslash K) c \Vert_{L^2(S_{u,v})}^2 \leq \frac{C}{r^8} \sum_{i=0}^2 \Vert (r\nablaslash)^i c \Vert_{L^2(S_{u,v})}^2.
	\]
	Inserting this into the above inequality and multiplying by $r^6$ gives the result for $k=3$.
	
	Finally, for $k=4$, one similarly gets,
	\begin{align*}
		\Vert \nablaslash^4 \xi \Vert_{L^2(S_{u,v})}^2
		\leq
		&
		C \bigg( \sum_{i=1}^3 \bigg[ \frac{1}{r^{6-2i}} \Big( \Vert \nablaslash^i a \Vert_{L^2(S_{u,v})}^2 + \nablaslash^i b \Vert_{L^2(S_{u,v})}^2 \Big)
		\\
		&
		+ \frac{1}{r^{8-2i}} \Big( \Vert \nablaslash^i a \Vert_{L^2(S_{u,v})}^2 + \nablaslash^i b \Vert_{L^2(S_{u,v})}^2 \Big) \bigg]  
		\\
		&
		+ \Vert \nablaslash K \nablaslash \xi \Vert_{L^2(S_{u,v})}^2 + \Vert \xi \nablaslash^2 K \Vert_{L^2(S_{u,v})}^2 + \frac{1}{r^2} \Vert \xi \nablaslash K \Vert_{L^2(S_{u,v})}^2
		\\
		&
		+ \Vert \nablaslash K \nablaslash c \Vert_{L^2(S_{u,v})}^2 + \Vert c \nablaslash^2 K \Vert_{L^2(S_{u,v})}^2 + \frac{1}{r^2} \Vert c \nablaslash K \Vert_{L^2(S_{u,v})}^2 
		\\
		&
		+ \Vert a \nablaslash K \Vert_{L^2(S_{u,v})}^2\bigg).
	\end{align*}
	Using the Sobolev inequality as above get,
	\begin{align*}
		&
		\Vert \xi \nablaslash^2 K \Vert_{L^2(S_{u,v})}^2 + \frac{1}{r^2} \Vert \xi \nablaslash K \Vert_{L^2(S_{u,v})}^2
		\\
		& \qquad \qquad
		\leq
		\Vert \xi \Vert_{L^{\infty}(S_{u,v})}^2 \left( \Vert \nablaslash^2 K \Vert_{L^2(S_{u,v})}^2 + \frac{1}{r^2} \Vert \nablaslash K \Vert_{L^2(S_{u,v})}^2 \right)
		\\
		& \qquad \qquad
		\leq
		\frac{C}{r^2} \sum_{i=0}^2 \Vert (r \nablaslash)^i \xi \Vert_{L^2(S_{u,v})}^2 \left( \frac{1}{r^4} \Vert (r\nablaslash)^2 K \Vert_{L^2(S_{u,v})}^2 + \frac{1}{r^4} \Vert r\nablaslash K \Vert_{L^2(S_{u,v})}^2 \right)
		\\
		& \qquad \qquad
		\leq
		\frac{C}{r^{10}} \sum_{i=0}^2 \Vert (r \nablaslash)^i \xi \Vert_{L^2(S_{u,v})}^2,
	\end{align*}
	by Proposition \ref{prop:Gauss}.  Similarly,
	\[
		\Vert c \nablaslash^2 K \Vert_{L^2(S_{u,v})}^2 + \frac{1}{r^2} \Vert c \nablaslash K \Vert_{L^2(S_{u,v})}^2 
		\leq
		\frac{C}{r^{10}} \sum_{i=0}^2 \Vert (r \nablaslash)^i a \Vert_{L^2(S_{u,v})}^2,
	\]
	and,
	\begin{align*}
		\Vert a \nablaslash K \Vert_{L^2(S_{u,v})}^2
		&
		\leq
		\frac{C}{r^2} \sum_{i=0}^2 \Vert (r \nablaslash)^i a \Vert_{L^2(S_{u,v})}^2 \Vert \nablaslash K \Vert_{L^2(S_{u,v})}^2
		\\
		&
		\leq
		\frac{C}{r^4} \sum_{i=0}^2 \Vert (r \nablaslash)^i a \Vert_{L^2(S_{u,v})}^2 \Vert r\nablaslash K \Vert_{L^2(S_{u,v})}^2
		\\
		&
		\leq 
		\frac{C}{r^8} \sum_{i=0}^2 \Vert (r \nablaslash)^i a \Vert_{L^2(S_{u,v})}^2.
	\end{align*}
	For the remaining terms,
	\begin{align*}
		\Vert \nablaslash K \nablaslash \xi \Vert_{L^2(S_{u,v})}^2
		&
		\leq 
		\frac{1}{r^4} \Vert r \nablaslash \xi \Vert_{L^{\infty}(S_{u,v})}^2 \Vert r \nablaslash K \Vert_{L^2(S_{u,v})}^2
		\\
		&
		\leq 
		\frac{C}{r^{10}} \sum_{i=1}^3 \Vert (r\nablaslash)^i \xi \Vert_{L^2(S_{u,v})}^2,
	\end{align*}
	and
	\[
		\Vert \nablaslash K \nablaslash c \Vert_{L^2(S_{u,v})}^2
		\leq 
		\frac{C}{r^{10}} \sum_{i=1}^3 \Vert (r\nablaslash)^i c \Vert_{L^2(S_{u,v})}^2.
	\]
	Inserting into the above estimate and multiplying by $r^8$ gives the result for $k=4$.
\end{proof}

If $\xi$ is a symmetric trace free $(0,2)$ $S_{u,v}$ tensor then it suffices to know only its divergence.

\begin{proposition} \label{prop:divelliptic}
	 Let $\xi$ be a symmetric trace free $(0,2)$ $S_{u,v}$ tensor such that
	 \[
		  \divslash \xi = a,
	 \]
	 and assume that the bounds on the Gauss curvature of Proposition \ref{prop:Gauss} hold.  Then, for $1\leq k\leq 4$,
	 \[
		  \Vert (r \nablaslash)^k \xi \Vert_{L^2(S_{u,v})} 
		\leq C \Bigg[ \sum_{i=1}^{k-1} \Vert r (r\nablaslash)^i a \Vert_{L^2(S_{u,v})}
		+ \Vert \xi \Vert_{L^2(S_{u,v})} \Bigg].
	 \]
\end{proposition}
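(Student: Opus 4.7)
The plan is to deduce Proposition \ref{prop:divelliptic} directly from Proposition \ref{prop:divcurltr} (with $j=1$) by eliminating the $\curlslash \xi$ and $\tr \xi$ contributions on the right-hand side. Since $\xi$ is trace free by hypothesis, the quantity $c = \tr \xi$ vanishes identically, so it suffices to control $b = \curlslash \xi$ and its angular derivatives in terms of $a = \divslash \xi$.

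The key observation is the pointwise identity
\[
	\curlslash \xi = {}^{*} \divslash \xi,
\]
valid for any symmetric trace free $(0,2)$ $S_{u,v}$ tensor $\xi$, where ${}^{*}$ denotes the Hodge dual acting on the free index of the 1-form $\divslash \xi$. Indeed, working in a local orthonormal frame and using $\xi_{11} = -\xi_{22}$ and $\xi_{12} = \xi_{21}$, one checks that
\[
	(\curlslash \xi)_1 = \partial_1 \xi_{12} - \partial_2 \xi_{11} = \partial_1 \xi_{12} + \partial_2 \xi_{22} = (\divslash \xi)_2 = ({}^{*} \divslash \xi)_1,
\]
and similarly for the other component. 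This identity is special to two dimensional surfaces and relies crucially on both the symmetry and the trace free property of $\xi$.

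Since the Hodge dual is a pointwise isometry on the cotangent bundle, the identity gives $\vert \curlslash \xi \vert_{\gslash} = \vert \divslash \xi \vert_{\gslash} = \vert a \vert_{\gslash}$ pointwise. Moreover, since $\nablaslash \epsslash = 0$ (the Levi-Civita connection preserves the volume form) the Hodge dual commutes with $\nablaslash$, so that
\[
	\vert (r\nablaslash)^i b \vert_{\gslash} = \vert (r\nablaslash)^i a \vert_{\gslash}
\]
pointwise for every $i \geq 0$. Integrating over $S_{u,v}$ yields $\Vert r (r\nablaslash)^i b \Vert_{L^2(S_{u,v})} = \Vert r (r\nablaslash)^i a \Vert_{L^2(S_{u,v})}$.

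Substituting $c \equiv 0$ and the identity above into the bound provided by Proposition \ref{prop:divcurltr} immediately gives the claimed estimate. The only step requiring any care is the verification of the identity $\curlslash \xi = {}^{*} \divslash \xi$; this is a short purely algebraic computation in a local orthonormal frame, but it is where the hypotheses of symmetry, trace freeness, and two dimensionality of the spheres $S_{u,v}$ enter in an essential way.
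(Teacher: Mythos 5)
Your proposal is correct and follows essentially the same route as the paper: the paper's proof of Proposition \ref{prop:divelliptic} is exactly the observation that $\curlslash \xi = {}^{*}a$ for a symmetric trace free $(0,2)$ $S_{u,v}$ tensor, which together with $\tr \xi = 0$ reduces the statement to Proposition \ref{prop:divcurltr}. Your frame computation verifying the identity, and the remark that ${}^{*}$ is a pointwise isometry commuting with $\nablaslash$, simply supply the details the paper leaves implicit.
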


\begin{proof}
	 This follows from the previous proposition since,
	 \[
		  \curlslash \xi = {}^* a,
	 \]
	 which follows from the fact that $\xi$ is a symmetric trace free $(0,2)$ $S_{u,v}$ tensor.
\end{proof}

These elliptic estimates can be used to recover the remainder of the bootstrap assumptions \eqref{eq:thetabcfirst}--\eqref{eq:thetabclast} for Ricci coefficients at the top order.

\begin{proposition} \label{prop:Riccitop}
	For any $x \in \mathcal{A}''$, if $u = u(x)$, $v = v(x)$ then, for $k = 0,1,2,3$,
	\begin{align*}
		\int_{u_0}^u \int_{S_{u',v}} r^2 \left\vert \mathfrak{D}^k r\nablaslash \hat{\chi} \right\vert^2 d \mu_{S_{u',v}} d u'
		&
		\leq C \left( \data + \frac{1}{v_0} \right),
		\\
		\int_{v_0}^v \int_{S_{u,v'}} r^{-2} \left\vert \mathfrak{D}^k r\nablaslash \hat{\chibar} \right\vert^2 d \mu_{S_{u,v'}} d v'
		&
		\leq C \left( \data + \frac{1}{v_0} \right),
		\\
		\int_{u_0}^u \int_{S_{u',v}} \left\vert \mathfrak{D}^k r\nablaslash \eta \right\vert^2 d \mu_{S_{u',v}} d u'
		&
		\leq C \left( \data + \frac{1}{v_0} \right),
		\\
		\int_{v_0}^v \int_{S_{u,v'}} \left\vert \mathfrak{D}^k r\nablaslash \etabar \right\vert^2 d \mu_{S_{u,v'}} d v'
		&
		\leq C \left( \data + \frac{1}{v_0} \right),
		\\
	 \end{align*}
	 and for $k = 0,1,2$,
	 \begin{align*}
		\int_{v_0}^v \int_{S_{u,v'}} r^2 \left\vert \mathfrak{D}^k (r\nablaslash)^2 \omega \right\vert^2 d \mu_{S_{u,v'}} d v'
		&
		\leq C \left( \data + \frac{1}{v_0} \right),
		\\
		\int_{v_0}^v \int_{S_{u,v'}} r^2 \left\vert \mathfrak{D}^k \nablaslash_3 r\nablaslash \omega \right\vert^2 d \mu_{S_{u,v'}} d v'
		&
		\leq C \left( \data + \frac{1}{v_0} \right),
		\\
		\int_{v_0}^v \int_{S_{u,v'}} r^2 \left\vert \mathfrak{D}^k (r\nablaslash)^2 \omegadag \right\vert^2 d \mu_{S_{u,v'}} d v'
		&
		\leq C \left( \data + \frac{1}{v_0} \right).
	 \end{align*}
\end{proposition}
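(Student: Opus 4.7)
The plan is to recover these top-order estimates by combining the elliptic estimates on the spheres from Propositions \ref{prop:divelliptic} and \ref{prop:divcurltr} with the transport estimates for the null structure equations already used in Section \ref{section:Ricci}, exploiting the fact that for the Ricci coefficients $\hat{\chi}, \hat{\chibar}, \eta, \etabar, \omega, \omegadag$ there are elliptic equations on the spheres $(S_{u,v}, \gslash)$ whose right-hand sides involve only $\Theta$-variables, Weyl curvature components and energy momentum tensor components, all of which have already been controlled at the top order in Propositions \ref{prop:Thetamain}, \ref{prop:curvmain} and \ref{prop:tmain}. For the case $\mathfrak{D}^k = (r\nablaslash)^k$ these elliptic identities give the desired bounds directly; for general $\mathfrak{D}^k$ containing $\nablaslash_3$ or $r\nablaslash_4$ factors, the null structure equations of Section \ref{subsec:nsbianchi} are commuted with $\mathfrak{D}^{k-1} r\nablaslash$ and the transport estimates of Section \ref{section:Ricci} are reapplied, the structural point being that $r\nablaslash$ derivatives hitting Ricci coefficients on the right-hand sides of these equations are then at one order lower than the principal term and hence already bounded.

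For $r\nablaslash \hat{\chi}$ and $r\nablaslash \hat{\chibar}$ the relevant elliptic input is the Codazzi equation, \eqref{eq:Codazziout} and \eqref{eq:Codazziin}, which expresses $\divslash \hat{\chi}$ in terms of $r\nablaslash(\tr\chi - \tfrac{2}{r}) = \Thetaf_2$, $\etabar$, $\beta$ and $\Tslash_4$, and $\divslash \hat{\chibar}$ in terms of $\Thetat_2$, $\etabar$, $\betabar$ and $\Tslash_3$. Since $\hat{\chi}, \hat{\chibar}$ are symmetric trace-free $(0,2)$ tensors, Proposition \ref{prop:divelliptic} applied to $\mathfrak{D}^{k-1} \hat{\chi}$ (respectively $\mathfrak{D}^{k-1} \hat{\chibar}$) controls $(r\nablaslash)\mathfrak{D}^{k-1}\hat{\chi}$ pointwise in $u,v$; multiplying by the appropriate weight ($r^2$ on incoming hypersurfaces for $\hat{\chi}$, $r^{-2}$ on outgoing for $\hat{\chibar}$) and integrating gives the required bounds, with the right-hand side controlled by Propositions \ref{prop:Thetamain}, \ref{prop:curvmain}, \ref{prop:tmain} and the already-established Propositions \ref{prop:Ricci3main}, \ref{prop:Ricci4main}. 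For $r\nablaslash\eta$ and $r\nablaslash\etabar$ the analogous inputs are the definitions of the mass aspect functions, which give $\divslash \eta = \tfrac12 \hat{\chi}\cdot\hat{\chibar} - \rho - \mu$ and $\divslash\etabar = \tfrac12 \hat{\chi}\cdot\hat{\chibar} - \rho - \mubar$, together with the curl identities $\curlslash\eta = \sigma - \tfrac12 \hat{\chi}\wedge\hat{\chibar}$ and $\curlslash\etabar = \tfrac12 \hat{\chi}\wedge\hat{\chibar} - \sigma$; Proposition \ref{prop:divcurltr} with $j=0$ then yields control of $(r\nablaslash)\mathfrak{D}^{k-1}\eta$, $(r\nablaslash)\mathfrak{D}^{k-1}\etabar$ in terms of $\Thetaf_2 = \mu$, $\Thetat_3 = \mubar$, $\rho$, $\sigma$ and quadratic Ricci coefficient terms, and integration in $u$ or $v$ with the appropriate weights completes the estimate.

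For $(r\nablaslash)^2 \omega$ and $(r\nablaslash)^2 \omegadag$ the key is the relation $\nablaslash \omega + {}^{*}\nablaslash\omegadag + \beta = \kappa$ defining $\kappa$. Taking $\divslash$ of this identity and using that $\divslash({}^*\nablaslash \omegadag) = 0$ for a scalar yields the elliptic equation $\Delta\omega = \divslash\kappa - \divslash\beta$, while taking $\curlslash$ and using that $\curlslash(\nablaslash\omega) = 0$ and $\curlslash({}^*\nablaslash\omegadag) = \pm\Delta\omegadag$ gives $\Delta\omegadag = \pm(\curlslash\kappa - \curlslash\beta)$. Standard elliptic regularity on $(S_{u,v},\gslash)$, once combined with the uniform bound $|K| \leq C/r^2$ from Proposition \ref{prop:Gauss}, then expresses $(r\nablaslash)^{k+2}\omega$ and $(r\nablaslash)^{k+2}\omegadag$ in terms of $(r\nablaslash)^k$ applied to $\kappa - \beta$, controlled by Proposition \ref{prop:Thetamain} (for the $\mathfrak{D}^2 r\nablaslash\kappa$ bound) and Proposition \ref{prop:curvmain} (for $\beta$). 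For $\mathfrak{D}^k \nablaslash_3 r\nablaslash\omega$ the approach is instead to use the null structure equation \eqref{eq:nsfirst} for $\nablaslash_3\omega$ directly: commuting with $\mathfrak{D}^k r\nablaslash$ and using the pointwise bounds on lower-order Ricci coefficients together with Propositions \ref{prop:curvmain} and \ref{prop:tmain} for $\rho$ and $\Tslash_{34}$ gives the result immediately. Mixed derivatives of the form $\mathfrak{D}^k (r\nablaslash)^2 \omega, (r\nablaslash)^2 \omegadag$ with some $\nablaslash_3$ or $r\nablaslash_4$ factors are handled by commuting the relevant propagation equation ($\nablaslash_3\omega$, or $\nablaslash_3 \omegadag = -\rho$ for $\omegadag$) with $\mathfrak{D}^{k-1}(r\nablaslash)^2$ and again appealing to the transport estimates.

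The main obstacle, as usual in this scheme, is bookkeeping the weights: one must verify in each of the six cases that the linear principal terms from the elliptic estimate (i.e.\ the $\Theta$ variables and curvature components appearing in the Codazzi, mass aspect and $\kappa$ identities) decay with the rate required by the target weighted integral on the null hypersurface, and that the principal linear terms generated by commuting the propagation equations with $\mathfrak{D}^{k-1} r\nablaslash$ --- particularly those coming from commutators of the form $[\nablaslash_3, r\nablaslash]$ and $[r\nablaslash_4, r\nablaslash]$, which introduce $\mathfrak{D}^k \Gamma$ terms that live at the same differentiability as the left-hand side --- can either be absorbed via Gr\"onwall (as for $r\nablaslash\left(\tr\chi - \tfrac{2}{r}\right)$ in the proof of Proposition \ref{prop:Thetamain}) or reduce to quantities already controlled in Propositions \ref{prop:Ricci3main}, \ref{prop:Ricci4main}, \ref{prop:Thetamain}. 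For the two quantities $r\nablaslash\hat{\chi}$ and $r\nablaslash\eta$, which are estimated on incoming hypersurfaces with the weaker weight $r^{q}$ rather than $r^{q}\cdot r^2$ on outgoing ones, this matching of weights is precisely what was arranged in the choice of $\mathcal{T}_p$ bounds in Proposition \ref{prop:tmain} and of $\Thetaf_2 = \mu$ in Proposition \ref{prop:Thetamain}.
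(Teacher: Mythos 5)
Your proposal is correct and follows essentially the same route as the paper: Codazzi equations plus Proposition \ref{prop:divelliptic} for $\hat{\chi},\hat{\chibar}$, the $\mu,\mubar$ div--curl systems plus Proposition \ref{prop:divcurltr} for $\eta,\etabar$, the $\kappa$ identity for $(r\nablaslash)^2\omega,(r\nablaslash)^2\omegadag$, the commuted $\nablaslash_3\omega$ propagation equation for $\nablaslash_3 r\nablaslash\omega$, and commutation of the elliptic identities to handle $\mathfrak{D}^k$ containing $\nablaslash_3$ or $r\nablaslash_4$ factors. The only (cosmetic) deviation is that you pass to the Laplace equations $\Delta\omega=\divslash\kappa-\divslash\beta$, $\Delta\omegadag=\pm(\curlslash\kappa-\curlslash\beta)$ and invoke elliptic regularity, whereas the paper applies Proposition \ref{prop:divcurltr} directly to the Hodge systems for the one-forms $\nablaslash\omega$, $\nablaslash\omegadag$ (with vanishing curl), which amounts to the same estimate.
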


\begin{proof}
	 Consider first $\hat{\chi}$.  Recall the Codazzi equation \eqref{eq:Codazziout}, which can be schematically written,
	 \[
		  \divslash \hat{\chi} = \frac{1}{2r} r \nablaslash \left( \tr \chi - \frac{2}{r} \right) + \sum_{p_1+p_2\geq 3} (\rp_{p_1} + \Gamma_{p_1}) \cdot \Gamma_{p_2} + \sum_{p_1\geq \frac{7}{2}} (\psi_{p_1} + \mathcal{T}_{p_1}).
	 \]
	 Hence, if $\mathfrak{D}^k = (r\nablaslash)^k$, Proposition \ref{prop:divelliptic} immediately gives the result by Propositions \ref{prop:Ricci4main}, \ref{prop:Ricci3main}, \ref{prop:curvmain}, \ref{prop:tmain}, and \ref{prop:Thetamain}.
	 
	 If $\mathfrak{D}^k = (r\nablaslash)^{k-1} \nablaslash_3$ one has to commute the equation by $\nablaslash_3$.  By Lemma \ref{lemma:commutation} this will only generate terms (with good $r$ weights) which have already been estimated in Propositions \ref{prop:Ricci4main}, \ref{prop:Ricci3main}, \ref{prop:curvmain}, \ref{prop:tmain}.  Since $\nablaslash_3 \hat{\chi}$ is still a symmetric trace free $(0,2)$ $S_{u,v}$ tensor, can again apply Proposition \ref{prop:Gauss} to get,
	 \[
		  \int_{u_0}^u \int_{S_{u',v}} r^2 \left\vert  (r\nablaslash)^k \nablaslash_3 \hat{\chi} \right\vert^2 d \mu_{S_{u',v}} d u' 
		  \leq 
		  C \left( \data + \frac{1}{v_0} \right)
	 \]
	 One of the $r\nablaslash$ can be commuted with $\nablaslash_3$.  Again this will only generate error terms which ahve already been estimated.
	 
	 The same procedure works for general $\mathfrak{D}^k$ as commuting the Codazzi equation with $r\nablaslash$ will also only produce terms which have already been estimated.
	 
	 Consider now $\hat{\chibar}$.  This is estimated in exactly the same way using the other Codazzi equation \eqref{eq:Codazziin},
	 \[
		  \divslash \hat{\chibar} = \frac{1}{2r} r\nablaslash \left(  \tr \chibar + \frac{2}{r} \right) + \sum_{p_1+p_2\geq 3} (\rp_{p_1} + \Gamma_{p_1}) \cdot \Gamma_{p_2} + \sum_{p_1\geq 2} (\psi_{p_1} + \psi_{p_2}),
	 \]
	 Propositions \ref{prop:Ricci4main}, \ref{prop:Ricci3main}, \ref{prop:curvmain}, \ref{prop:tmain}, and the estimate for $\mathfrak{D}^3 r \nablaslash \left( \tr \chibar + \frac{2}{r} \right)$ on the outgoing null hypersurfaces from Proposition \ref{prop:Thetamain}.
	 
	 Consider now $\eta,\etabar$.  They satisfy the following $\divslash,\curlslash$ systems,
	 \[
		  \divslash \eta = \frac{1}{2} \hat{\chi} \cdot \hat{\chibar} - \rho - \mu, \qquad \curlslash \eta = \sigma - \frac{1}{2} \hat{\chi} \wedge \hat{\chibar},
	 \]
	 \[
		 \divslash \etabar = \frac{1}{2} \hat{\chi} \cdot \hat{\chibar} - \rho - \mubar, \qquad \curlslash \etabar = \frac{1}{2} \hat{\chi} \wedge \hat{\chibar} - \sigma, 
	 \]
	 and so can be estimated exactly as $\hat{\chi}, \hat{\chibar}$, now using Proposition \ref{prop:divcurltr}.  Recall that we set $\tr \xi = 0$ if $\xi$ is an $S_{u,v}$ one form.
	 
	 Finally, since
	 \[
		  \kappa = \nablaslash \omega + {}^* \nablaslash \omegadag + \beta,
	 \]
	 by definition, $\nablaslash \omega$ and $\nablaslash \omegadag$ satisfy the $\divslash, \curlslash$ systems,
	 \[
		  \divslash \nablaslash \omega = \divslash \kappa - \divslash \beta, \qquad \curlslash \nablaslash \omega = 0,
	 \]
	 \[
		  \divslash \nablaslash \omegadag = \curlslash \kappa - \curlslash \beta, \qquad \curlslash \nablaslash \omegadag = 0,
	 \]
	 and so $\mathfrak{D}^2 (r\nablaslash)^2 \omega$ and $\mathfrak{D}^2 (r\nablaslash)^2 \omegadag$ can be estimated similarly.
	 
	 Finally, for $\mathfrak{D}^2 \nablaslash_3 r\nablaslash \omega$, recall that $\omega$ satisfies the propagation equation,
	 \[
		  \nablaslash_3 \omega = 2 \eta \cdot \etabar - \vert \etabar \vert^2 - \rho - \frac{1}{2} \Tslash_{34}.
	 \]
	 Commuting with $r\nablaslash$ this gives, by Proposition \ref{lemma:commutation},
	 \begin{align*}
		  \nablaslash_3 r\nablaslash \omega
		  =
		  &
		  2 (r\nablaslash \eta) \cdot \etabar + 2 \eta \cdot(r\nablaslash \etabar) - 2\etabar \cdot (r\nablaslash \etabar) - r \nablaslash \rho - \frac{1}{2} r \nablaslash \Tslash_{34}
		  \\
		  &
		  + r \left( 2 \eta \cdot \etabar - \vert \etabar \vert^2 - \rho - \frac{1}{2} \Tslash_{34} \right)(\eta + \etabar) - \hat{\chibar} \cdot r \nablaslash \omega - \frac{1}{2} \left( \tr \chibar + \frac{2}{r} \right) r \nablaslash \omega.
	 \end{align*}
	 The estimate for $\mathfrak{D}^2 \nablaslash_3 r\nablaslash \omega$ follows by applying $\mathfrak{D}^2$ to the right hand side, multiplying by $r^2$, integrating over the constant $u$ hypersurfaces and applying Propositions \ref{prop:Ricci4main}, \ref{prop:Ricci3main}, \ref{prop:curvmain}, \ref{prop:tmain}.
\end{proof}

Now the bootstrap assumptions \eqref{eq:thetabcfirst}--\eqref{eq:thetabclast} have been recovered with better constants and hence, provided $\bc$, $v_0$ are taken suitably small then $\mathcal{A}'' \subset \mathcal{A}$ is open, closed, connected, non-empty, and hence $\mathcal{A}'' = \mathcal{A}$.  The remaining bootstrap assumptions of Section \ref{section:ba}, \eqref{eq:bba}, \eqref{eq:Gammaslashbanull} can now be recovered.

\begin{proposition} \label{prop:bmain}
	 If $x \in \mathcal{A}$, and if $u = u(x)$, $v = v(x)$, then,
	 \[
		  \int_{v_0}^v \int_{S_{u,v'}} r^{-2} \vert \mathfrak{D}^3 r \nablaslash b \vert^2 d\mu_{S_{u,v'}} dv' \leq C \left( \data + \frac{1}{v_0} \right),
	 \]
	 and
	 \[
		  \int_{u_0}^u \int_{S_{u',v}} \left\vert (r\nablaslash)^3 \left( \Gammaslash - \Gammaslash^{\circ} \right) \right\vert^2 d\mu_{S_{u',v}} du' \leq C \left( \data + \frac{1}{v_0} \right).
	 \]
\end{proposition}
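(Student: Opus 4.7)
The plan is to obtain both bounds by transport estimates: for the first, using the $\nablaslash_3$-propagation equation for $b$ and integrating in $u$, with base on the initial hypersurface $\{u = u_0\}$; for the second, using equation \eqref{eq:Gammaslash4} and integrating in $v$, with base on $\{v = v_0\}$, after a renormalisation that absorbs the otherwise-divergent $\tr\chi$-weight. Both arguments rely on the top-order $L^2$ bounds for Ricci coefficients and auxiliary $\Theta$ variables established in Propositions \ref{prop:Thetamain} and \ref{prop:Riccitop}.

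For the bound on $\mathfrak{D}^3 r\nablaslash b$, I would commute
\[
	\nablaslash_3 b = 2(\eta - \etabar) + \hat{\chibar} \cdot b + \frac{1}{2}\left(\tr \chibar - \tr \chibar_\circ\right) b - \frac{1}{r} b
\]
with $\mathfrak{D}^3 r\nablaslash$, using Lemma \ref{lemma:commutation} to handle commutators that arise when $\mathfrak{D}$ involves $r\nablaslash_4$ or $r\nablaslash$. Applying identity \eqref{eq:transportid3} with $h = r^{-2}|\mathfrak{D}^3 r\nablaslash b|^2$, and using that
\[
	\Omega^2\!\left(\nablaslash_3(r^{-2}) + r^{-2}\tr\chibar\right) = O\!\left(\overline{C}\,r^{-2}\right)
\]
by virtue of $|1 - 1/\Omega^2| \lesssim \overline{C}$ and $\tr\chibar + 2/r = \Gamma_2$, the problem reduces, after integrating in $u$ from $u_0$ and in $v$ from $v_0$, to controlling the spacetime integral of $r^{-2}|E|^2$, where $E = \nablaslash_3 \mathfrak{D}^3 r\nablaslash b$ is the commuted right-hand side. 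Its principal terms are $\mathfrak{D}^3 r\nablaslash \eta$ and $\mathfrak{D}^3 r\nablaslash \etabar$, controlled on incoming and outgoing null hypersurfaces respectively by Proposition \ref{prop:Riccitop}: the first, weighted by $r^{-2} \leq C/v'^2$ and integrated in $v'$, contributes $C(\data + 1/v_0)/v_0$, and the second, weighted by $r^{-2} \leq 1/v_0^2$ and integrated over the bounded $u$-range, contributes $C(\data + 1/v_0)/v_0^2$. Commutator and nonlinear terms are lower order and handled as in Proposition \ref{prop:Ricci3main}; the self-referential $\int r^{-2}|\mathfrak{D}^3 r\nablaslash b|^2$ is closed by Gr\"onwall in $u$ since $u - u_0 \leq u_f - u_0$, with initial data on $\{u = u_0\}$ supplying the base.

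For the bound on $(r\nablaslash)^3(\Gammaslash - \Gammaslash^\circ)$, the key observation is that the three quadratic terms on the right-hand side of \eqref{eq:Gammaslash4}, after decomposing $\chi = \hat{\chi} + \frac{1}{2}\tr\chi\,\gslash$, combine to produce a net linear term $-\frac{1}{2}\tr\chi\,(\Gammaslash - \Gammaslash^\circ)$. Moving this to the left, the equation takes the effective form of a $\Riccif$-equation with $p = 1$, and commuting with $(r\nablaslash)^3$ in the manner of Proposition \ref{prop:commutednullstructure} yields
\[
	\nablaslash_4 (r\nablaslash)^3(\Gammaslash - \Gammaslash^\circ) + \frac{1}{2}\tr\chi\,(r\nablaslash)^3(\Gammaslash - \Gammaslash^\circ) = E'.
\]
Applying identity \eqref{eq:transportid4} with $h = |(r\nablaslash)^3(\Gammaslash - \Gammaslash^\circ)|^2$ then produces the exact cancellation $\partial_v \int h\,d\mu = 2 \int (r\nablaslash)^3(\Gammaslash - \Gammaslash^\circ) \cdot E'\,d\mu$, avoiding an otherwise fatal $e^{C(v-v_0)}$ factor from Gr\"onwall. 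Integrating in $v$ from $v_0$ and then in $u$ from $u_0$, the initial-data term is bounded by $\data$. The principal part of $E'$ is $(r\nablaslash)^3\nablaslash\chi = r^{-1}(r\nablaslash)^3 r\nablaslash \hat{\chi} + \frac{1}{2r}\gslash\,(r\nablaslash)^3 \Thetaf_2$ (using $\nablaslash(2/r) = 0$), whose weighted spacetime $L^2$ norm is of order $C(\data + 1/\sqrt{v_0})/v_0^3$ by Propositions \ref{prop:Riccitop} and \ref{prop:Thetamain} together with the $v'$-integral of $r^{-2}$. The top-order nonlinear contributions $\nablaslash b \cdot (\Gammaslash - \Gammaslash^\circ)$ produce $\mathfrak{D}^3 r\nablaslash b \cdot (\Gammaslash - \Gammaslash^\circ)$, which is precisely controlled by the first part of this proposition; remaining commutator terms, including Gauss curvature contributions $K - 1/r^2 = O(1/r^4)$ by Proposition \ref{prop:Gauss}, are lower order and absorbed as in Proposition \ref{prop:Gammaslashmain}, and the residual $\int h\,d\mu$ closes by Cauchy--Schwarz and Gr\"onwall in $v$.

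The principal obstacle is the renormalisation step for $\Gammaslash - \Gammaslash^\circ$: without recognising the effective $\Riccif$-structure hidden in \eqref{eq:Gammaslash4} and extracting the $\tr\chi$-cancellation, the $\nablaslash_4$-transport estimate would produce an $e^{C(v - v_0)}$ blow-up that no smallness of $\data$ or $1/v_0$ could defeat. A secondary difficulty is the bookkeeping required to ensure that every four-angular-derivative Ricci coefficient term appearing after commutation is captured by the hypersurface-restricted $L^2$ bounds of Propositions \ref{prop:Thetamain} and \ref{prop:Riccitop}, with weights matched so that the $v_0$-smallness factors survive the spacetime integrations; in particular, the bound for $(r\nablaslash)^3(\Gammaslash - \Gammaslash^\circ)$ must correctly invoke the bound for $\mathfrak{D}^3 r\nablaslash b$ at the one location where top-order $b$ derivatives enter the commuted $\Gammaslash$ equation.
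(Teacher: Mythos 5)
Your proposal is correct and follows essentially the same route as the paper: a $\nablaslash_3$ transport estimate for $\mathfrak{D}^3 r\nablaslash b$ in the style of Proposition \ref{prop:Thetamain}, and a renormalised $\nablaslash_4$ transport estimate for $(r\nablaslash)^3(\Gammaslash - \Gammaslash^{\circ})$ exploiting the exact cancellation of the $\tr\chi$ term against the volume-form derivative in \eqref{eq:transportid4}, with the principal errors controlled by Propositions \ref{prop:Thetamain} and \ref{prop:Riccitop} and closed by Gr\"onwall. Your explicit extraction of the net $-\frac{1}{2}\tr\chi\,(\Gammaslash - \Gammaslash^{\circ})$ term from the three quadratic terms of \eqref{eq:Gammaslash4} is precisely the computation the paper leaves implicit when asserting the form of the commuted equation.
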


\begin{proof}
	 The proof of the first estimate is identical to that of Proposition \ref{prop:Thetamain} using the propagation equation,
	 \[
		  \nablaslash_3 b = 2(\eta^{\#} - \etabar^{\#}) + \hat{\chibar}^{\#} \cdot b + \frac{1}{2} \left( \tr \chibar + \frac{2}{r} \right) b - \frac{1}{r} b.
	 \]
	 
	 For $\Gammaslash - \Gammaslash^{\circ}$, the propagation equation in the outgoing direction \eqref{eq:Gammaslash4} is used.  The commuted equation will take the form,
	 \[
		  \nablaslash_4 (r\nablaslash)^3 \left( \Gammaslash - \Gammaslash^{\circ} \right) 
		  + 
		  \frac{1}{2} \tr \chi (r\nablaslash)^3 \left( \Gammaslash - \Gammaslash^{\circ} \right) 
		  = 
		  E_4 \left[ (r\nablaslash)^3 \left( \Gammaslash - \Gammaslash^{\circ} \right) \right],
	 \]
	 where, by Propositions \ref{prop:Ricci4main}, \ref{prop:Ricci3main}, \ref{prop:curvmain}, \ref{prop:tmain}, \ref{prop:Thetamain} and \ref{prop:Riccitop}, an argument identical to that in the proof of Proposition \ref{prop:Thetamain} can be used to show that,
	 \[
		  \int_{u_0}^u \int_{v_0}^v \int_{S_{u',v'}} (r\nablaslash)^3 \left( \Gammaslash - \Gammaslash^{\circ} \right) \cdot E_4 \left[ (r\nablaslash)^3 \left( \Gammaslash - \Gammaslash^{\circ} \right) \right] d\mu_{S_{u',v'}} dv' du' 
		  \leq 
		  C\left( \data + \frac{1}{v_0} \right).
	 \]
	 It follows that,
	 \[
		  \nablaslash_4 \left( \left\vert (r\nablaslash)^3 \left( \Gammaslash - \Gammaslash^{\circ} \right) \right\vert^2 \right) 
		  = 
		  (r\nablaslash)^3 \left( \Gammaslash - \Gammaslash^{\circ} \right) \cdot \left( - \tr \chi (r\nablaslash)^3 \left( \Gammaslash - \Gammaslash^{\circ} \right)
		  + 
		  2 E_4 \left[ (r\nablaslash)^3 \left( \Gammaslash - \Gammaslash^{\circ} \right) \right] \right),
	 \]
	 and hence, using the identity \eqref{eq:transportid4},
	 \begin{multline*}
		  \int_{v_0}^v \int_{S_{u,v}} \left\vert (r\nablaslash)^3 \left( \Gammaslash - \Gammaslash^{\circ} \right) \right\vert^2 d\mu_{S_{u,v}}
		  \leq 
		  \int_{S_{u,v_0}} \left\vert (r\nablaslash)^3 \left( \Gammaslash - \Gammaslash^{\circ} \right) \right\vert^2 d\mu_{S_{u,v_0}}
		  \\
		  + \int_{v_0}^v \int_{S_{u,v'}} (r\nablaslash)^3 \left( \Gammaslash - \Gammaslash^{\circ} \right) \cdot E_4 \left[ (r\nablaslash)^3 \left( \Gammaslash - \Gammaslash^{\circ} \right) \right] d\mu_{S_{u,v'}} dv'.
	 \end{multline*}
	 The result follows by integrating in $u$.
\end{proof}

\section{The Last Slice Argument and the End of the Proof} \label{section:lastslice}
The proof of Theorem \ref{thm:main3} follows from Theorem \ref{thm:main4} together with the following two local existence theorems, whose proofs are not discussed here, via a \emph{last slice} argument.  The structure of the last slice argument is outlined below.

\begin{theorem}[Local existence for the Cauchy problem for the massless Einstein--Vlasov system \cite{ChBr71}, \cite{ChBrGe}, \cite{Ri}] \label{thm:loccauchy}
	Given a smooth initial data set $(\Sigma,g_0,k,f_0)$ for the massless Einstein--Vlasov system (satisfying constraint equations) there exists a unique smooth maximal Cauchy development satisfying the massless Einstein--Vlasov system such that $\Sigma$ is a Cauchy hypersurface with induced first and second fundamental form $g_0, k$ respectively and $f \vert_{P\vert_{\Sigma}} = f_0$.
\end{theorem}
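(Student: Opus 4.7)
The plan is to follow the classical gauge-reduction strategy of Choquet-Bruhat, adapted to the presence of Vlasov matter. First I would pick a suitable gauge in which the Einstein equations become a manifestly hyperbolic system. The standard choice is harmonic (wave) coordinates: given the initial data $(\Sigma, g_0, K, f_0)$, extend $g_0$ and $K$ to an auxiliary Lorentzian metric $\tilde g$ on a collar neighbourhood $(-\varepsilon,\varepsilon) \times \Sigma$ realising the prescribed first and second fundamental forms on $\Sigma$, and seek $g$ such that the contracted Christoffel symbols $\Gamma^\mu := g^{\alpha\beta}\Gamma^\mu_{\alpha\beta}$ vanish. In this gauge the reduced Einstein equations take the form of a system of quasilinear wave equations
\[
  g^{\alpha\beta}\partial_\alpha\partial_\beta g_{\mu\nu} = F_{\mu\nu}(g,\partial g) + 2 T_{\mu\nu},
\]
coupled to the Vlasov transport equation $X(f)=0$ on the mass shell $P \subset T\mathcal{M}$, with $T_{\mu\nu}$ given by the momentum integral in \eqref{eq:EV}. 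The initial data for this reduced system are $g|_\Sigma$, $\partial g|_\Sigma$ determined by $g_0$, $K$ and the gauge condition, together with $f_0$.

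Next I would establish local existence for the reduced system by a Picard iteration in an appropriate energy space. Define a sequence $(g^{(n)},f^{(n)})$ by solving, at each step, the linear wave equations for $g^{(n+1)}$ with source built from $(g^{(n)}, f^{(n)})$, and solving the Vlasov equation $X^{(n+1)}(f^{(n+1)}) = 0$ along the geodesic flow of $g^{(n+1)}$ by the method of characteristics, i.e.\ $f^{(n+1)}(x,p) = f_0(\exp_{-s}^{(n+1)}(x,p))$ for $s$ such that the base point lies on $\Sigma$. The key estimates are standard energy estimates for the linear wave equation in $H^s$ for $s$ sufficiently large, together with transport estimates for $f^{(n+1)}$ in a weighted $C^k$ norm on $P$ using compactness of $\supp(f_0)$ and the smoothness of the geodesic spray (which depends only on one derivative of $g^{(n+1)}$). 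Care is needed in estimating $T_{\mu\nu}^{(n)}$: the support of $f^{(n)}$ in momentum remains compact on a short time interval by continuity of the flow, so $T_{\mu\nu}^{(n)}$ and its derivatives are controlled by the corresponding norms of $f^{(n)}$ and finitely many derivatives of $g^{(n)}$. A contraction argument on a small time interval then yields a fixed point, giving a local solution $(g,f)$ of the reduced system.

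The next step is the propagation of the gauge and constraints. One shows that the harmonic gauge vector $\Gamma^\mu$ satisfies a homogeneous wave equation once the reduced Einstein equations hold and the twice-contracted Bianchi identity $\nabla^\mu(Ric_{\mu\nu} - \tfrac12 R g_{\mu\nu} - T_{\mu\nu}) = 0$ is used; here $\nabla^\mu T_{\mu\nu} = 0$ follows from the Vlasov equation $X(f)=0$ together with the Liouville-type invariance of the mass-shell volume form (this is the standard computation showing that Vlasov matter has divergence-free stress-energy on any metric). Combined with the initial constraint equations \eqref{eq:constraints}, which imply that $\Gamma^\mu$ and $\partial_t \Gamma^\mu$ vanish on $\Sigma$, uniqueness for the wave equation yields $\Gamma^\mu \equiv 0$, so the reduced equations coincide with the full Einstein--Vlasov system on the domain of existence.

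Finally, uniqueness up to diffeomorphism and the existence of a maximal development follow from the Choquet-Bruhat--Geroch argument: any two smooth developments can be glued via harmonic-gauge uniqueness on their common Cauchy development, and Zorn's lemma produces a unique maximal one. The main obstacle, from a technical standpoint, is not the wave equation part but the careful bookkeeping in estimating $T_{\mu\nu}$ and its derivatives from $f$: one needs to commute the transport equation with vector fields on $P$ and control how the characteristics depend on $g$, and to verify that $\supp(f(t,\cdot))$ in momentum space remains bounded on the interval of existence — both of which are routine given compact support of $f_0$ and smoothness of $g$, but set the regularity threshold for the theorem. The assertion of the theorem as stated (smooth maximal development) then follows by taking $s$ arbitrarily large in the above iteration scheme.
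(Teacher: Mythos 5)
The paper does not prove this theorem: it is quoted as a black box with the proof attributed to \cite{ChBr71}, \cite{ChBrGe} and \cite{Ri}, and Section \ref{section:lastslice} explicitly says the proofs of the two local existence theorems "are not discussed here". Your sketch is precisely the standard argument of those references --- harmonic-gauge reduction, iteration coupling energy estimates for the reduced wave equations to characteristic/transport estimates for $f$, propagation of the gauge via $\nabla^{\mu}T_{\mu\nu}=0$ (which follows from $X(f)=0$ and the invariance of the mass-shell volume form), and Choquet-Bruhat--Geroch for the maximal development --- and is correct in outline. The only point worth flagging for the \emph{massless} case specifically is that the mass shell is the future null cone, singular at the zero section, so one must check that $\supp f$ stays uniformly away from $Z$ on the interval of existence (automatic here from compact support of $f_0$ in $P\smallsetminus Z$ and continuity of the geodesic flow) in order for the momentum integrals defining $T_{\mu\nu}$ and the smoothness of $f$ to behave as in the massive case.
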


\begin{theorem}[Local existence for the characteristic initial value problem for the massless Einstein--Vlasov system] \label{thm:locchar}
	Given smooth characteristic initial data for the massless Einstein--Vlasov system (satisfying constraint equations) on (what will become) null hypersurfaces $N_1, N_2$ intersecting transversely at a spacelike surface $S = N_1 \cap N_2$, there exists a non-empty maximal development of the data, bounded in the past by a neighbourhood of $S$ in $N_1 \cap N_2$.
\end{theorem}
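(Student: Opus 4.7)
The plan is to adapt Rendall's strategy for the vacuum characteristic initial value problem to the Einstein--Vlasov system, by reducing the characteristic problem to an ordinary Cauchy problem to which Theorem~\ref{thm:loccauchy} can be applied. The overall structure is: (i) use the prescribed ``free'' characteristic data to determine all geometric quantities on $N_1 \cup N_2$ near $S$ via a hierarchy of ODEs along null generators; (ii) extend these data across $S$ to a spacelike hypersurface $\tilde{\Sigma}$, together with a compatible extension of $f$, satisfying the Cauchy constraint equations~\eqref{eq:constraints}; (iii) solve the Cauchy problem via Theorem~\ref{thm:loccauchy}; and (iv) identify the future null hypersurfaces emanating from $S$ in the resulting development with $N_1, N_2$, recovering the prescribed characteristic data.

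For step (i), I would work in the double null gauge of Section~\ref{subsec:coords}, so that $N_1 = \{u = u_0\}$, $N_2 = \{v = v_0\}$, and $S = S_{u_0, v_0}$. The free data consist of the conformal class of the induced metric on each null hypersurface, together with $f$ on the mass shell over $N_1 \cup N_2$ and finitely many quantities on $S$. The characteristic constraints, which are the Raychaudhuri equation and its $\chibar$ analogue together with the propagation equations~\eqref{eq:nsetabar} and the $\nablaslash_3 \hat{\chibar}$, $\nablaslash_4 \hat{\chi}$ equations, reduce on $N_1 \cup N_2$ to a hierarchy of transport ODEs along the null generators whose source terms involve $f$ through the energy-momentum components $\mathcal{T}$. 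Because $f$ is prescribed on the mass shell over $N_1 \cup N_2$, these ODEs decouple into a triangular system that can be integrated locally near $S$, yielding smooth values of all Ricci coefficients, curvature components, and metric quantities on $N_1 \cup N_2$ in a neighbourhood of $S$.

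For steps (ii)--(iv), I would embed $S$ into an auxiliary smooth manifold containing a spacelike hypersurface $\tilde{\Sigma}$ passing through $S$, and construct smooth extensions $(\tilde{g}_0, \tilde{k})$ of the geometric data and $\tilde{f}_0$ of the matter data that agree with the above-constructed quantities on $N_1 \cup N_2$ to infinite order at $S$ and satisfy~\eqref{eq:constraints}. This is done by a Borel-type construction: one prescribes the formal Taylor expansion of $(\tilde g_0, \tilde k, \tilde f_0)$ at $S$ in a manner dictated by the characteristic data and the constraints, and realises this jet via Whitney extension. Theorem~\ref{thm:loccauchy} then produces a maximal Cauchy development $(\mathcal{M}, g, f)$. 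The two null hypersurfaces in $\mathcal{M}$ emanating from $S$ to the future are characteristic for the Einstein--Vlasov system, and by uniqueness of solutions to the characteristic ODE hierarchy together with the domain of dependence property of both the Einstein equations and the Vlasov transport equation, these coincide with (neighbourhoods of $S$ in) $N_1, N_2$ and carry the prescribed induced data. Maximality of the development follows by a standard Zorn's lemma argument together with uniqueness.

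The main obstacle is the simultaneous extension of the matter field $f$ and the geometric data across $S$ in step (ii), since the Vlasov constraint couples nontrivially to the geometric constraints through $\mathcal{T}_{00}$ and $\mathcal{T}_{0j}$. The key observation that makes this tractable is that the Vlasov equation~\eqref{eq:vlas} is a first-order linear transport equation on the mass shell $P$ whose characteristic variety \emph{is} $P$; consequently $f$ propagates along the geodesic flow and its extension off $P\vert_{N_1 \cup N_2}$ to $P\vert_{\tilde{\Sigma}}$ may be carried out freely at the level of formal Taylor series at $S$, provided one ensures the resulting $T_{\mu\nu}$ is compatible with the formal Cauchy expansions of $g_0$ and $k$ dictated by~\eqref{eq:constraints}. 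Once this algebraic compatibility at $S$ is achieved, Whitney extension and Theorem~\ref{thm:loccauchy} complete the proof.
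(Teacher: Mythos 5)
The paper does not prove this statement: Theorem \ref{thm:locchar} is explicitly stated with the remark that its proof is ``not discussed here'', and the reader is referred to Rendall \cite{Re} for the vacuum analogue and to \cite{ChPa}, \cite{ChBrCh} for the Einstein--Vlasov case. So your proposal can only be judged on its own merits, and as written it has two genuine gaps.

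First, and most seriously, your steps (ii)--(iv) do not force the Cauchy development to induce the \emph{prescribed} data on $N_1$ and $N_2$. You match the jets of $(\tilde g_0,\tilde k,\tilde f_0)$ to the characteristic data only \emph{at} the codimension-two surface $S$, and then invoke domain of dependence. But the solution on the future null cones of $S$ in the development depends on the Cauchy data on an open neighbourhood of $S$ in $\tilde\Sigma$, where your data is an essentially arbitrary Whitney extension; agreement of jets at $S$ alone gives no control along $N_1,N_2$ away from $S$. Rendall's actual mechanism is different: one computes the \emph{full transversal jet of the solution on all of $N_1\cup N_2$} from the constraint ODE hierarchy and the reduced equations, Whitney-extends that jet off the null hypersurfaces, and then arranges (by modifying the equation in $J^-(N_1\cup N_2)$ by a source vanishing to infinite order on $N_1\cup N_2$) that the extension is itself a solution to the past, so that uniqueness for the Cauchy problem posed on a spacelike hypersurface \emph{in the past of} $N_1\cup N_2$ forces the solution to agree with the prescribed jet on the null hypersurfaces. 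Without this, your identification step fails.

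Second, you require the extended data $(\tilde g_0,\tilde k,\tilde f_0)$ to satisfy the constraint equations \eqref{eq:constraints} so that Theorem \ref{thm:loccauchy} applies, and you assert this can be arranged by a ``Borel-type construction''. Whitney extension produces a smooth tensor with a prescribed jet on $S$; it does not produce a solution of the nonlinear constraint system on all of $\tilde\Sigma$, and solving the constraints with a prescribed infinite-order jet at a surface is itself a nontrivial problem. The standard route avoids this entirely by working with the \emph{reduced} (harmonic-gauge) equations, for which local existence needs no constraints, and then propagating the gauge conditions from the null hypersurfaces. Finally, a Vlasov-specific point you pass over: the data $f_0$ on $P\vert_{N_1\cup N_2}$ is not free --- for $(x,p)$ with $p$ along the null generator of $N_i$, the geodesic remains in $N_i$, so the Vlasov equation constrains $f_0$ along the lifted generators; this must be built into the hierarchy of step (i) and into the compatibility conditions at $S$.
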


The analogue of Theorem \ref{thm:locchar} for the vacuum Einstein equations is a result of Rendall \cite{Re}.  For the Einstein--Vlasov system see also \cite{ChPa}, \cite{ChBrCh}.

Suppose that $\bc$, $\frac{1}{v_0}$ and the bootstrap constant $\overline{C}$ satisfy the smallness assumption of Theorem \ref{thm:main4}.  Define the function $t:=v+u$, and the hypersurfaces $\Sigma_{t'} := \{t = t' \} \cap \{ u_0 \leq u \leq u_f, v_0 \leq v <\infty \}$.  Whenever the bootstrap assumptions on $b$ and $1 - \frac{1}{\Omega^2}$ hold (see Section \ref{section:ba}), clearly $dt$ is timelike and hence the surfaces $\Sigma_t$ are spacelike.  For a given time $t$, define the region,
\[
	\mathcal{M}_t := \{u_0 \leq u \leq u_f\} \cap \{v_0 \leq v < \infty \} \cap \bigcup_{t_0 \leq t' < t} \Sigma_{t'},
\]
where $t_0 = v_0 + u_0$.  Let $t^*$ denote the supremum over all times $t$ such that a smooth solution to the massless Einstein--Vlasov system \eqref{eq:EV}--\eqref{eq:vlas} exists in the region $\mathcal{M}_{t^*}$ attaining the given data on $\left( \{u=u_0\} \cup \{ v = v_0\} \right) \cap \bigcup_{t_0 \leq t < t^*} \Sigma_t$ and, for any $u',v'$ with $u' + v' \leq t^*$, the bootstrap assumptions \eqref{eq:Ricciba}--\eqref{eq:Gammaslashbanull} hold for all $u,v$ with $u_0 \leq u \leq u', v_0 \leq v \leq v'$.  Such a time clearly exists by Theorem \ref{thm:locchar}, provided $\bc$ is sufficiently small.\footnote{The transformation from harmonic coordinates to double null coordinates is carried out in detail in a related setting in Chapter 16.3 of \cite{Ch}.}

The aim is to show that $t^* = \infty$, so suppose for contradiction that $t^* < \infty$.  From the bounds \eqref{eq:Ricciba}--\eqref{eq:Gammaslashbanull}, which hold in for $u,v$ in the regions $u_0 \leq u \leq u', v_0 \leq v \leq v'$ uniformly for all $u',v'$ such that $u' + v' < t^*$, higher regularity bounds can be obtained from the equations via commutation, the equations being essentially linear at this stage (this is carried out in detail in a related setting in Chapter 16.2 of \cite{Ch}).  Hence the solution extends smoothly to $\Sigma_{t^*}$, providing Cauchy data for the Einstein equations on $\Sigma_{t^*}$.  Using this Cauchy data together with the characteristic data on $\{ u =u_0\}$ (and possibly the characteristic data on $\{ v = v_0\}$ if $t^* < u_f + v_0$), Theorem \ref{thm:loccauchy} and Theorem \ref{thm:locchar} imply that a smooth solution to the mixed Cauchy, characteristic initial value problem exists in the region $\mathcal{M}_{t^* + \varepsilon}$ for some small $\varepsilon > 0$.  This is depicted in Figure \ref{fig:lastslice}.

\begin{figure} 
  \centering
	\includegraphics[scale=0.25]{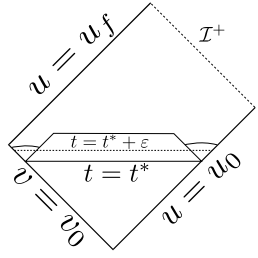}
	\caption{The solution to the mixed Cauchy, characteristic initial value problem when $t^* < u_f + v_0$.} \label{fig:lastslice}
\end{figure}

Since the bootstrap assumptions \eqref{eq:Ricciba}--\eqref{eq:Gammaslashbanull} hold in $u_0 \leq u \leq u', v_0 \leq v \leq v'$ for all $u',v'$ such that $u' + v' \leq t^*$, Theorem \ref{thm:main4} implies they in fact hold with the better constant $\frac{\overline{C}}{2}$.  Then, taking $\varepsilon$ smaller if necessary, by compactness of $\Sigma_{t^*}$ and continuity they will hold for all $u',v'$ with $u' + v' \leq t^* + \varepsilon$ (with constant $\overline{C}$).  This contradicts the maximality of $t^*$ and hence the solution exists in the entire region.

\end{document}